\newtheorem{thm}{Theorem}[chapter]
\newtheorem{cor}[thm]{Corollary}
\newtheorem{lem}[thm]{Lemma}
\newtheorem{prop}[thm]{Proposition}
\theoremstyle{definition}
\newtheorem{defn}[thm]{Definition}
\theoremstyle{remark}
\newtheorem{rem}[thm]{Remark}
\begin{document}
\frontmatter
\begin{titlepage}
\begin{center}
	\includegraphics[width=.4\textwidth]{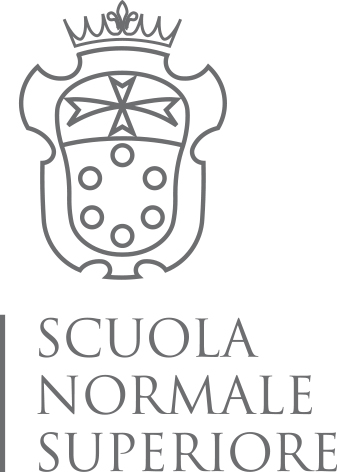}\\
    \bigskip\bigskip
   	\large{\textsc{Faculty of Mathematical and Natural Sciences}}\\
		\rule{5cm}{1pt}\\
	{\small{PhD in \textsc{Physics}}}\\
\bigskip\bigskip\bigskip
	\Huge{\textsc{Gaussian optimizers and other topics in quantum information}}\\	
\bigskip\bigskip\bigskip
	\normalsize{PhD Thesis}\\
	\LARGE{Giacomo De Palma}\\
\bigskip\bigskip\bigskip\bigskip
\end{center}
\begin{large}
\makebox{\parbox[b]{.9\textwidth}{
	\flushleft{\textbf{Supervisor:}}
		\flushleft{
		Prof. Vittorio Giovannetti}
}}
\bigskip\bigskip\bigskip\bigskip
\end{large}
	\begin{center}
	\rule{5cm}{1pt}\\
	Academic Year 2015/2016
	\end{center}
\end{titlepage}
\pagestyle{empty}

\chapter*{Acknowledgments}
First of all, I sincerely thank my supervisor Prof. Vittorio Giovannetti for having introduced me to the fascinating research field of quantum information, and for his constant support and guidance during these three years.
I thank Prof. Luigi Ambrosio for his invaluable support and advices, and my MSc supervisor Prof. Augusto Sagnotti for teaching me how to make scientific research.
I thank Dr. Andrea Mari, that has closely followed most of the work of this thesis, and Dr. Dario Trevisan, for the infinite discussions on the minimum entropy problem.
I thank Dr. Marcus Cramer, Dr. Alessio Serafini, Prof. Seth Lloyd and Prof. Alexander Holevo for giving me the opportunity to work together, and Prof. Giuseppe Toscani and Prof. Giuseppe Savar\'e for their kind ospitality in Pavia.
Last but not least, I thank all my colleagues in the condensed matter and quantum information theory group at Scuola Normale, for the stimulating and lively environment and for all the moments shared together.

\chapter*{Abstract}
Gaussian input states have long been conjectured to minimize the output von Neumann entropy of quantum Gaussian channels for fixed input entropy. We prove the quantum Entropy Power Inequality, that provides an extremely tight lower bound to this minimum output entropy, but is not saturated by Gaussian states, hence it is not sufficient to prove their optimality. Passive states are diagonal in the energy eigenbasis and their eigenvalues decrease as the energy increases. We prove that for any one-mode Gaussian channel, the output generated by a passive state majorizes the output generated by any state with the same spectrum, hence it has a lower entropy. Then, the minimizers of the output entropy of a Gaussian channel for fixed input entropy are passive states. We exploit this result to prove that Gaussian states minimize the output entropy of the one-mode attenuator for fixed input entropy. This result opens the way to the multimode generalization, that permits to determine both the classical capacity region of the Gaussian quantum degraded broadcast channel and the triple trade-off region of the quantum attenuator.

Still in the context of Gaussian quantum information, we determine the classical information capacity of a quantum Gaussian channel with memory effects. Moreover, we prove that any one-mode linear trace-preserving not necessarily positive map preserving the set of Gaussian states is a quantum Gaussian channel composed with the phase-space dilatation. These maps are tests for certifying that a given quantum state does not belong to the convex hull of Gaussian states. Our result proves that phase-space dilatations are the only test of this kind.

In the context of quantum statistical mechanics, we prove that requiring thermalization of a quantum system in contact with a heat bath for any initial uncorrelated state with a well-defined temperature implies the Eigenstate Thermalization Hypothesis for the system-bath Hamiltonian. Then, the ETH constitutes the unique criterion to decide whether a given system-bath dynamics always leads to thermalization.

In the context of relativistic quantum information, we prove that any measurement able to distinguish a coherent superposition of two wavepackets of a massive or charged particle from the corresponding incoherent statistical mixture must require a minimum time. This bound provides an indirect evidence for the existence of quantum gravitational radiation and for the necessity of quantizing gravity.

\cleardoublepage

\pagestyle{fancy}
\addtolength{\headheight}{14.5pt}
\addtolength{\headsep}{12pt}

\renewcommand{\chaptermark}[1]{\markboth{\thechapter.\ #1}{}}
\renewcommand{\sectionmark}[1]{\markright{#1\ \thesection}}

\lhead[\fancyplain{}{\textbf{\footnotesize{\leftmark}}}]{}
\chead{}
\rhead[]{\fancyplain{}{\textbf{\footnotesize{\rightmark}}}}

\cfoot[\footnotesize{PhD Thesis}]{\footnotesize Gaussian optimizers in quantum information}
\lfoot[\thepage]{\footnotesize{\slshape Giacomo De Palma}}
\rfoot[]{\thepage}

\tableofcontents

\mainmatter

\chapter{Introduction}
Quantum information theory \cite{bennett1998quantum,holevo2013quantum,wilde2013quantum,nielsen2010quantum} has had an increasingly large development over the last twenty years.
The interest of the scientific community in this field is twofold.
On one side, quantum communication theory permits to determine the ultimate bounds that quantum mechanics imposes on communication rates \cite{gordon1962quantum,caves1994quantum}.
On the other side, quantum cryptography \cite{gisin2002quantum} permits to design and build devices allowing a perfectly secure communication by distributing to two parties the same secret key, that can be guaranteed not to have been read by any possible eavesdropper.

Most communication devices, such as metal wires, optical fibers and antennas for free space communication, encode the information into pulses of electromagnetic radiation, whose quantum description requires the framework of Gaussian quantum systems.
For this reason, Gaussian quantum information \cite{braunstein2005quantum,caves1994quantum,weedbrook2012gaussian} plays a fundamental role.

In the classical scenario, the general principle ``Gaussian channels have Gaussian optimizers'' has been proven to hold in a wide range of situations \cite{lieb1990gaussian}, and has never been disproved.
This Thesis focuses on the transposition of this principle to the domain of Gaussian quantum information \cite{holevo2015gaussian}, i.e. on the conjecture of the optimality of Gaussian states for the transmission of both classical and quantum information through quantum Gaussian channels; Section \ref{gauss} introduces our results on this topic.
Section \ref{eth} introduces our results on the Eigenstate Thermalization Hypothesis, an application of quantum information ideas to quantum statistical mechanics.
Section \ref{rel} introduces our results on relativistic quantum information.

\section{Gaussian optimizers in quantum information}\label{gauss}
Most communication schemes encode the information into pulses of electromagnetic radiation, that is transmitted through metal wires, optical fibers or free space, and is unavoidably affected by attenuation and environmental noise.
Gauge-covariant Gaussian channels \cite{caves1994quantum,cover2006elements} provide a faithful model for these effects, and a fundamental issue is determining the maximum rate at which information can be transmitted along such channels.
Since the electromagnetic field is ultimately a quantum-mechanical entity, quantum effects must be taken into account \cite{gordon1962quantum}.
They become relevant for low-intensity signals, such as in the case of space probes, that can be reached by only few photons for each bit of information.
These quantum effects are faithfully modeled by gauge-covariant quantum Gaussian channels \cite{chan2006free,braunstein2005quantum,holevo2013quantum,weedbrook2012gaussian}.

The optimality of coherent Gaussian states for the transmission of classical information through gauge-covariant quantum Gaussian channels has been recently proved, hence determining their classical capacity \cite{giovannetti2014ultimate}.
This has been possible thanks to the proof of the so-called minimum output entropy conjecture \cite{giovannetti2004minimum,giovannetti2015solution}, stating that the von Neumann entropy at the output of any gauge-covariant Gaussian channel is minimized when the input is the vacuum state.
Actually this conjecture follows from the more general Gaussian majorization conjecture \cite{giovannetti2015majorization,mari2014quantum}, stating that the output generated by the vacuum majorizes (i.e. it is less noisy than) the output generated by any other state.

However, a sender might want to communicate classical information to two receivers at the same time.
In this scenario, the communication channel is called broadcast channel, and the set of all the couples of simultaneously achievable rates of communication with the two receivers constitutes its classical capacity region \cite{savov2015classical,yard2011quantum}.
The proof of the optimality of coherent Gaussian states for the transmission of classical information through the degraded quantum Gaussian broadcast channel and the consequent determination of its capacity region \cite{guha2007classicalproc,guha2007classical} rely on a constrained minimum output entropy conjecture, stating that Gaussian thermal input states minimize the output entropy of the quantum attenuator for fixed input entropy.

Moreover, a sender might want to transmit both public and private classical information to a receiver, with the possible assistance of a secret key.
This scenario is relevant in the presence of satellite-to-satellite links, used for both public and private communication and quantum key distribution \cite{scarani2009security}.
In this setting the electromagnetic signal travels through free space.
Then, the only effect of the environment is signal attenuation, that is modeled by the Gaussian quantum attenuator.
Dedicating a fraction of the channel uses to public communication, another fraction to private communication and the remaining fraction to key distribution is the easiest strategy, but not the optimal one.
Indeed, significantly higher communication rates can be obtained performing the three tasks at the same time with the so-called trade-off coding \cite{wilde2012public}.
A similar scenario occurs for the simultaneous transmission of classical and quantum information, with the possible assistance of shared entanglement.
The set of all the triples of simultaneously achievable rates for performing the various tasks constitutes the triple trade-off region of the quantum attenuator \cite{wilde2012quantum,wilde2012information}.
Its determination relies on the same unproven constrained minimum output entropy conjecture stated above.

Since a quantum-limited attenuator can be modeled as a beamsplitter that mixes the signal with the vacuum state, this conjecture has been generalized to the so-called entropy photon-number inequality \cite{guha2008capacity,guha2008entropy}, stating that the entropy at the output of a beamsplitter for fixed entropy of each input is minimized by Gaussian inputs with proportional covariance matrices.
So far, none of these two conjectures has been proved.

The classical analog of a quantum state of the electromagnetic radiation is a probability distribution of a random real vector.
The action of a beamsplitter on the two input quantum states is replaced in this setting by a linear combination of the two input random vectors.
The Entropy Power Inequality (see \cite{dembo1991information,gardner2002brunn,shannon2001mathematical,stam1959some,verdu2006simple,rioul2011information} and Chapter 17 of \cite{cover2006elements}) bounds the Shannon differential entropy of a linear combination of real random vectors in terms of their own entropies, and states that it is minimized by Gaussian inputs.
Then, another inequality has been conjectured, the quantum Entropy Power Inequality \cite{konig2013limits}, that keeps the same formal expression of its classical counterpart to give an almost optimal lower bound to the output von Neumann entropy of a beamsplitter in terms of the input entropies.
This inequality has first been proved for the $50:50$ beamsplitter \cite{konig2014entropy}.
In this Thesis we extend it to any beamsplitter and quantum amplifier, and generalize it to the multimode scenario \cite{de2014generalization,de2015multimode}.

Contrarily to its classical counterpart, the quantum Entropy Power Inequality is \emph{not} saturated by quantum Gaussian states, and thus it is not sufficient to prove their conjectured optimality.
As first step toward the proof of the constrained minimum output entropy conjecture and the entropy photon-number inequality, we prove a generalization of the Gaussian majorization conjecture of \cite{mari2014quantum} linking it to the notion of passivity.
A passive state of a quantum system \cite{pusz1978passive,lenard1978thermodynamical,gorecki1980passive,vinjanampathy2015quantum,goold2015role,binder2015quantum} minimizes the average energy among all the states with the same spectrum, and is then diagonal in the Hamiltonian eigenbasis with eigenvalues that decrease as the energy increases.
We prove that the output of any one-mode gauge-covariant quantum Gaussian channel generated by a passive state majorizes (i.e. it is less noisy than) the output generated by any other state with the same spectrum \cite{de2015passive}.
The optimal inputs for the constrained minimum output entropy problem are then to be found among the states diagonal in the energy eigenbasis.
We exploit this result in Chapter \ref{chepni}.
Here we prove that Gaussian thermal input states minimize the output entropy of the one-mode quantum attenuator for fixed input entropy \cite{de2016gaussian}, i.e. the constrained minimum output entropy conjecture for this channel.
In Chapter \ref{chlossy} we extend the majorization result of Chapter \ref{majorization} to a large class of lossy quantum channels, arising from a weak interaction of a small quantum system with a large bath in its ground state \cite{de2016passive}.

In any realistic communication, the pulses of electromagnetic radiation always leave some noise in the channel after their passage.
Since this noise depends on the message sent, the various uses of the channel are no more independent, and memory effects are present.
In this Thesis we consider a particular model of quantum Gaussian channel that implements these memory effects.
We study their influence on the classical information capacity, that we determine analytically \cite{de2014classical}.

Gaussian states of bosonic quantum systems are easy to realize in the laboratory, and so are their convex combinations, belonging to the convex hull of Gaussian states $\mathfrak{C}$.
We explore the set of Gaussian-to-Gaussian superoperators, i.e. the linear trace-preserving not necessarily positive maps preserving the set of Gaussian states.
These maps preserve also $\mathfrak{C}$, and can then be used as a probe to check whether a given quantum state belongs to $\mathfrak{C}$ exactly as a positive but not completely positive map is a test for entanglement.
We prove that for one mode they are all built from the so-called phase-space dilatation, that is hence found to be the only relevant test of this kind \cite{de2015normal}.

\section{Quantum statistical mechanics}\label{eth}
Everyday experience, as well as overwhelming experimental evidence, demonstrates that a small quantum system in contact with a large heat bath at a given temperature evolves toward the state described by the canonical ensemble with the same temperature as the bath.
This state is independent of the details of the initial state of both the system and the bath. This very common behavior, known as thermalization, has proven surprisingly difficult to explain starting from fundamental dynamical laws.
In the quantum-mechanical framework, since 1991 the ``Eigenstate Thermalization Hypothesis'' (ETH) \cite{deutsch1991quantum,gogolin2015equilibration} is known to be a sufficient condition for thermalization. The ETH states that each eigenstate of the global system-bath Hamiltonian locally looks on the system as a canonical state with a temperature that is a smooth function of the energy of the eigenstate.

In this context, we prove that, if a quantum system in contact with a heat bath at a given temperature thermalizes for any initial state with a reasonably sharp energy distribution and without correlations between system and bath, the system-bath Hamiltonian must satisfy the ETH \cite{de2015necessity}.
This results proves that the ETH constitutes the unique criterion to decide whether a given system-bath dynamics always leads to a system equilibrium state described by the canonical ensemble: if the system-bath Hamiltonian satisfy the ETH, the system always thermalizes, while if the ETH is not satisfied, there certainly exists some initial product state not leading to thermalization of the system.

\section{Relativistic quantum information}\label{rel}
The existence of coherent superpositions is a fundamental postulate of quantum mechanics but, apparently, implies very counterintuitive consequences when extended to macroscopic systems, as in the famous Schr\"odinger cat paradox.
However, at least in principle, the standard theory of quantum mechanics is valid at any scale and does not put any limit on the size of the system.
A fundamental still open question is whether quantum superpositions can actually exist also at macroscopic scales, or there is some intrinsic spontaneous collapse mechanism prohibiting them \cite{penrose1996gravity, ghirardi1986unified, diosi1989models,karolyhazy1966gravitation, bassi2013models}.

In this Thesis we study the effect of the static electric or gravitational field generated by a charged or massive particle on the coherence of its own wavefunction \cite{mari2015experiments}.
We show that, without introducing any modification to standard quantum mechanics and quantum field theory, relativistic causality implies that any measurement able to distinguish a coherent superposition of two wavepackets from the corresponding incoherent statistical mixture must require a minimum time.
Indeed, any measurement violating this minimum-time bound is physically forbidden since it would permit a superluminal communication protocol.
In the electromagnetic case, this minimum time can be ascribed to the entanglement with the electromagnetic radiation that is unavoidably emitted in a too fast measurement.
In the gravitational case, this minimum time provides an indirect evidence for the existence of quantum gravitational radiation, and thus for the necessity of quantizing gravity.

\section{Outline of the Thesis}
In Chapter \ref{GQI} we introduce Gaussian quantum information and the problem of the determination of the classical communication capacity of quantum Gaussian channels, and we show the link with the minimum output entropy conjectures.
Chapter \ref{epi} contains the proof of the quantum Entropy Power Inequality.
In Chapter \ref{majorization} we prove the optimality of passive input states for one-mode quantum Gaussian channels, and in Chapter \ref{chepni} we exploit this result to prove the constrained minimum output entropy conjecture for the one-mode quantum attenuator.
In Chapter \ref{chlossy} we extend the majorization result of Chapter \ref{majorization} to a large class of lossy quantum channels.
In Chapter \ref{memory} we determine the classical capacity of a quantum Gaussian channel with memory effects, and in Chapter \ref{normal} we present the classification of Gaussian-to-Gaussian superoperators.

In Chapter \ref{chETH} we prove that the Eigenstate Thermalization Hypothesis is implied by a certain definition of thermalization, and in Chapter \ref{chsuperpos} we prove the minimum-time bound on the measurements able to distinguish coherent superpositions from statistical mixtures.

Finally, the conclusions are in Chapter \ref{concl}.

Appendices \ref{appG} and \ref{appsup} contain some technical details on the properties of Gaussian quantum systems and of quantum electrodynamics, respectively.

\section{References}
This Thesis is based on the following papers:
\begin{enumerate}
\item[\cite{de2014generalization}] G.~De~Palma, A.~Mari, and V.~Giovannetti, ``A generalization of the entropy power inequality to bosonic quantum systems,'' \emph{Nature Photonics}, vol.~8, no.~12, pp. 958--964, 2014.\\ {\small\url{http://www.nature.com/nphoton/journal/v8/n12/full/nphoton.2014.252.html}}
\item[\cite{de2015multimode}] G.~De~Palma, A.~Mari, S.~Lloyd, and V.~Giovannetti, ``Multimode quantum entropy power inequality,'' \emph{Physical Review A}, vol.~91, no.~3, p. 032320, 2015.\\ {\small\url{http://journals.aps.org/pra/abstract/10.1103/PhysRevA.91.032320}}
\item[\cite{de2015passive}] G.~De~Palma, D.~Trevisan, and V.~Giovannetti, ``Passive States Optimize the
  Output of Bosonic Gaussian Quantum Channels,'' \emph{IEEE Transactions on
  Information Theory}, vol.~62, no.~5, pp. 2895--2906, May 2016.\\ {\small\url{http://ieeexplore.ieee.org/document/7442587}}
\item[\cite{de2016gaussian}] G.~De~Palma, D.~Trevisan, and V.~Giovannetti, ``Gaussian states minimize the output entropy of the one-mode quantum
  attenuator,'' \emph{IEEE Transactions on Information Theory}, vol.~63, no.~1,
  pp. 728--737, 2017.\\ {\small\url{http://ieeexplore.ieee.org/document/7707386}}
\item[\cite{de2016passive}] G.~De~Palma, A.~Mari, S.~Lloyd, and V.~Giovannetti, ``Passive states as optimal
  inputs for single-jump lossy quantum channels,'' \emph{Physical Review A},
  vol.~93, no.~6, p. 062328, 2016.\\ {\small\url{http://journals.aps.org/pra/abstract/10.1103/PhysRevA.93.062328}}
\item[\cite{de2014classical}] G.~De~Palma, A.~Mari, and V.~Giovannetti, ``Classical capacity of Gaussian thermal memory channels,'' \emph{Physical Review A}, vol.~90, no.~4, p. 042312, 2014.\\ {\small\url{http://journals.aps.org/pra/abstract/10.1103/PhysRevA.90.042312}}
\item[\cite{de2015normal}] G.~De~Palma, A.~Mari, V.~Giovannetti, and A.~S. Holevo, ``Normal form decomposition for Gaussian-to-Gaussian superoperators,'' \emph{Journal of Mathematical Physics}, vol.~56, no.~5, p. 052202, 2015.\\ {\small\url{http://scitation.aip.org/content/aip/journal/jmp/56/5/10.1063/1.4921265}}
\item[\cite{de2015necessity}] G.~De~Palma, A.~Serafini, V.~Giovannetti, and M.~Cramer, ``Necessity of Eigenstate Thermalization,'' \emph{Physical Review Letters}, vol. 115, no.~22, p. 220401, 2015.\\ {\small\url{http://journals.aps.org/prl/abstract/10.1103/PhysRevLett.115.220401}}
\item[\cite{mari2015experiments}] A.~Mari, G.~De~Palma, and V.~Giovannetti, ``Experiments testing macroscopic quantum superpositions must be slow,'' \emph{Scientific Reports}, vol.~6, p. 22777, 2016.\\ {\small\url{http://www.nature.com/articles/srep22777}}
\end{enumerate}

\chapter{Gaussian optimizers in quantum information}
\label{GQI}
This Chapter introduces Gaussian quantum information and the problem of the determination of the capacity for transmitting classical information through a quantum Gaussian channel.
A more comprehensive presentation can be found in \cite{braunstein2005quantum,weedbrook2012gaussian,ferraro2005gaussian,holevo2013quantum,holevo2015gaussian} and references therein.

We start introducing quantum Gaussian systems, states and channels in Sections \ref{GQSy}, \ref{GQSt} and \ref{GQCh}, respectively.
Then, we define the von Neumann entropy (Section \ref{vNE}), and link it to the classical communication capacity of a quantum channel (Section \ref{clcap}).
In Section \ref{gccapacity} we present the determination of the classical capacity of gauge-covariant quantum Gaussian channels thanks to the proof of a minimum output entropy conjecture, and in Section \ref{secmaj} we show the link with majorization theory.

We then present in Section \ref{broadcasti} the problem of determining the classical capacity region of a degraded quantum broadcast channel, where the sender wants to communicate with multiple parties, and we show how this problem is linked to a constrained minimum output entropy conjecture, i.e. the determination of the minimum output entropy of a quantum channel for fixed input entropy.
Finally, we present in Section \ref{broadcastg} the degraded quantum Gaussian broadcast channel, and in Section \ref{broadcast} its conjectured capacity region and the bounds following from the Entropy Power Inequality that we will prove in Chapter \ref{epi}.
Appendix \ref{appG} contains some technical results we will refer to when needed.

\section{Gaussian quantum systems}\label{GQSy}
A Gaussian quantum system with $n$ modes is the quantum system associated to the Hilbert space of $n$ Harmonic oscillators, i.e. to the representation of the canonical commutation relations
\begin{equation}\label{quadr}
\left[\hat{Q}^i,\;\hat{P}^j\right]=i\;\delta^{ij}\;,\qquad i,\;j=1,\;\ldots,\;n\;,
\end{equation}
where for simplicity, as in the whole Thesis, we have set
\begin{equation}
\hbar=1\;.
\end{equation}
The canonical coordinates $\hat{Q}^i$ and $\hat{P}^i$ are called quadratures.
It is useful to put them collectively in the column vector
\begin{equation}
\hat{\mathbf{R}}=\left(
                   \begin{array}{c}
                     \hat{R}^1 \\
                     \vdots \\
                     \hat{R}^{2n} \\
                   \end{array}
                 \right):=\left(
                   \begin{array}{c}
                     \hat{Q}^1 \\
                     \hat{P}^1\\
                     \vdots \\
                     \hat{Q}^{n} \\
                     \hat{P}^n
                   \end{array}
                 \right)\;,
\end{equation}
with commutation relations
\begin{equation}
\left[\hat{R}^i,\;\hat{R}^j\right]=i\,\Delta^{ij}\;,\qquad i,\;j=1,\;\ldots,\;2n\;,
\end{equation}
where $\Delta$ is the symplectic form given by the antisymmetric matrix
\begin{equation}
\Delta=\bigoplus_{k=1}^n\left(
                          \begin{array}{cc}
                            0 & 1 \\
                            -1 & 0 \\
                          \end{array}
                        \right)\;.
\end{equation}
It is useful to define the ladder operators
\begin{equation}\label{Iladder}
\hat{a}^i=\frac{\hat{Q}^i+i\;\hat{P}^i}{\sqrt{2}}\;,\qquad i=1,\;\ldots,\;n\;,
\end{equation}
satisfying the commutation relations
\begin{equation}\label{CCRa}
\left[\hat{a}^i,\;\left(\hat{a}^j\right)^\dag\right]=\delta^{ij}\;,\qquad i,\;j=1,\;\ldots,\;n\;.
\end{equation}
We can put all the ladder operators together in the column vector
\begin{equation}
\hat{\mathbf{a}}=\left(
                   \begin{array}{c}
                     \hat{a}^1 \\
                     \vdots \\
                     \hat{a}^n \\
                   \end{array}
                 \right)\;.
\end{equation}
We can then define the vacuum as the state annihilated by all the destruction operators:
\begin{equation}
\hat{a}^i|0\rangle=0\;,\qquad i=1,\;\ldots,\;n\;.
\end{equation}

Gaussian quantum systems play a central role in quantum communication theory, since they are the correct framework to represent modes of electromagnetic radiation \cite{barnett2002methods}.
In this interpretation, the ladder operators \eqref{Iladder} and their Hermitian conjugates destroy and create a photon in the corresponding mode, respectively.
The energy is proportional to the number of photons, and the Hamiltonian is then
\begin{equation}\label{Hosc}
\hat{H}=\sum_{i=1}^n\left(\hat{a}^i\right)^\dag\hat{a}^i\;,
\end{equation}
where for simplicity we have set also the frequency equal to $1$.

\section{Quantum Gaussian states}\label{GQSt}
In analogy with classical Gaussian probability distributions, a quantum Gaussian state is a thermal state
\begin{equation}\label{Gstate}
\hat{\rho}_G=e^{-\beta\hat{H}}\left/\mathrm{Tr}\;e^{-\beta\hat{H}}\right.
\end{equation}
of an Hamiltonian that is a generic second-order polynomial in the quadratures, i.e.
\begin{equation}\label{HGauss}
\hat{H}=\left(\hat{\mathbf{R}}-\mathbf{r}\right)^T H\left(\hat{\mathbf{R}}-\mathbf{r}\right)\;,
\end{equation}
where $\mathbf{r}\in\mathbb{R}^{2n}$ is the vector of the expectation values of the quadratures, also called first moment, i.e.
\begin{equation}
r^i=\mathrm{Tr}\left[\hat{R}^i\;\hat{\rho}_G\right]\;,\qquad i=1,\;\ldots,\;2n\;,
\end{equation}
$H$ is a real strictly positive $2n\times2n$ matrix and $\beta>0$ is the inverse temperature (see also Section \ref{Gsta} of Appendix \ref{appG}).
Pure Gaussian states can be recovered in the zero-temperature limit $\beta\to\infty$.
They are the ground states of the quadratic Hamiltonians \eqref{HGauss}.
We stress that the Hamiltonian \eqref{HGauss} does \emph{not} need to be the photon-number Hamiltonian \eqref{Hosc} that governs the evolution of the system, hence a Gaussian state is not necessarily a thermal state in the thermodynamical sense.
The Hamiltonian \eqref{Hosc} can be recovered setting $H=\mathbb{I}_{2n}/2$ and $\mathbf{r}=0$.
In this case $\hat{\rho}_G$ is called a thermal Gaussian state.

As in the classical case, we can define the covariance matrix of $\hat{\rho}_G$ as
\begin{equation}
\sigma^{ij}:=\mathrm{Tr}\left[\left\{\hat{R}^i-r^i,\;\hat{R}^j-r^j\right\}\;\hat{\rho}_G\right]\;,\qquad i,\;j=1,\ldots,2n\;,
\end{equation}
where $\left\{\cdot,\cdot\right\}$ stands for the anticommutator.
As for classical Gaussian probability distributions, the quantum Gaussian state \eqref{Gstate} maximizes the von Neumann entropy among all the states with the same average energy with respect to the Hamiltonian $\hat{H}$ \cite{holevo2013quantum}.

The eigenvalues of $\sigma\Delta^{-1}$ are pure imaginary and come in couples of complex conjugates.
Their absolute values are called the \emph{symplectic eigenvalues} of $\sigma$ \cite{holevo2013quantum}.
The positivity of $\hat{\rho}_G$ implies that all the symplectic eigenvalues are larger or equal than $1$ \cite{holevo2013quantum} (see also Section \ref{secmom} of Appendix \ref{appG}).

If this condition is saturated, the state is pure \cite{holevo2013quantum}.
It is easy to check that the identity matrix has only $1$ as symplectic eigenvalue.
The Gaussian pure states with the identity as covariance matrix are called coherent states \cite{barnett2002methods}, that are the quantum analog of the classical Dirac deltas.
All the other Gaussian pure states are called squeezed.

\section{Quantum Gaussian channels}\label{GQCh}
Quantum channels are the mathematical representation for the most generic physical operation that can be performed in the laboratory on a quantum state.

An operator $\hat{X}$ acting on an Hilbert space $\mathcal{H}$ is called trace-class if its trace norm is finite:
\begin{equation}
\left\|\hat{X}\right\|_1:=\mathrm{Tr}\sqrt{\hat{X}^\dag\hat{X}}<\infty\;.
\end{equation}
We denote with $\mathfrak{T}(\mathcal{H})$ the set of trace-class operator acting on $\mathcal{H}$.
It is easy to check that any density matrix $\hat{\rho}$ has $\left\|\hat{\rho}\right\|_1=1$, and hence belongs to this class.
We denote as $\mathfrak{S}(\mathcal{H})$ the set of the density matrices on $\mathcal{H}$, i.e. the positive operators with trace one.

Given two Hilbert spaces $\mathcal{H}_A$ and $\mathcal{H}_B$ with associated sets of trace-class operators $\mathfrak{T}_A$ and $\mathfrak{T}_B$, a quantum operation from $A$ to $B$ is a continuous linear operator
\begin{equation}
\Phi:\mathfrak{T}_A\to\mathfrak{T}_B
\end{equation}
with the following properties:
\begin{itemize}
  \item it commutes with hermitian conjugation, i.e.
  \begin{equation}
  \Phi\left(\hat{X}^\dag\right)={\Phi\left(\hat{X}\right)}^\dag\;;
  \end{equation}
  \item it is completely positive, i.e.
  \begin{equation}
  \left(\mathbb{I}_{A'}\otimes\Phi\right)\left(\hat{X}\right)\geq0\qquad\forall\;\hat{X}\geq0\;,\quad\hat{X}\in\mathfrak{T}\left(\mathcal{H}_{A'}\otimes\mathcal{H}_A\right)\;.
  \end{equation}
\end{itemize}
If $\Phi$ is also trace-preserving, i.e.
\begin{equation}
\mathrm{Tr}\;\Phi\left(\hat{X}\right)=\mathrm{Tr}\;\hat{X}\;,
\end{equation}
it is called a quantum channel.
These three properties together guarantee that, for any Hilbert space $\mathcal{H}_{A'}$, the channel $\mathbb{I}_{A'}\otimes\Phi$ sends any quantum state on $\mathcal{H}_{A'}\otimes\mathcal{H}_A$ into a proper quantum state on $\mathcal{H}_{A'}\otimes\mathcal{H}_B$.

The displacement operators \cite{holevo2013quantum} are the unitary operators defined by
\begin{equation}
\hat{D}(\mathbf{x}):=e^{i\;\mathbf{x}^T\;\Delta^{-1}\;\hat{\mathbf{R}}}\;,\qquad\mathbf{x}\in\mathbb{R}^{2n}\;.
\end{equation}
It is easy to show that their action on the quadratures is a shift:
\begin{equation}
{\hat{D}(\mathbf{x})}^\dag\;\hat{\mathbf{R}}\;\hat{D}(\mathbf{x})=\hat{\mathbf{R}}+\mathbf{x}\;.
\end{equation}
They are then the quantum analog of the classical translations.

A $2n\times 2n$ real matrix $S$ is called symplectic if it preserves the symplectic form, i.e.
\begin{equation}
S\;\Delta\;S^T=\Delta\;.
\end{equation}
The symplectic $2n\times2n$ matrices form the real symplectic group $\mathrm{Sp}(2n,\mathbb{R})$ \cite{dutta1995real}.
We can associate to any $S\in \mathrm{Sp}(2n,\mathbb{R})$ a symplectic unitary $\hat{U}_S$ \cite{holevo2013quantum} that implements $S$ on the quadratures, i.e.
\begin{equation}
\hat{U}_S^\dag\;\hat{\mathbf{R}}\;\hat{U}_S=S\;\hat{\mathbf{R}}\;.
\end{equation}
The unitary operators $\hat{U}_S$ form a representation of $\mathrm{Sp}(2n,\mathbb{R})$, i.e.
\begin{equation}
\hat{U}_S\;\hat{U}_{S'}=\hat{U}_{SS'}\qquad\forall\;S,\;S'\in \mathrm{Sp}(2n,\mathbb{R})\;.
\end{equation}

It can be proven \cite{demoen1977completely,fannes1976quasi} that all the unitary operators that send any Gaussian state (i.e. any state of the form \eqref{Gstate}) of an $n$-mode Gaussian quantum system into a Gaussian state can be expressed as a displacement composed with a symplectic unitary.

We can now define a quantum Gaussian channel on an $n$-mode quantum Gaussian system as a quantum channel that sends any Gaussian state into a Gaussian state.
Let us add for the moment the additional hypothesis that for any joint $(m+n)$-mode Gaussian quantum system, the channel applied to the subsystem associated to the last $n$ modes sends any joint Gaussian state into another joint Gaussian state.
It has then been proven \cite{holevo2013quantum,giedke2002characterization} that the channel can be implemented as follows: add an auxiliary Gaussian state $\hat{\rho}_G$ on an auxiliary Gaussian quantum system $E$, perform a joint symplectic unitary $\hat{U}_S$, discard the auxiliary system and perform a displacement, i.e.
\begin{equation}\label{Gstinespring}
\Phi\left(\hat{X}\right)= \hat{D}(\mathbf{x})\;\mathrm{Tr}_E\left[\hat{U}_S\left(\hat{X}\otimes\hat{\rho}_G\right)\hat{U}_S^\dag\right]\;{\hat{D}(\mathbf{x})}^\dag\;, \qquad S\in\mathrm{Sp}\left(2(n+m),\mathbb{R}\right)\;,\quad\mathbf{x}\in\mathbb{R}^{2n}\;.
\end{equation}
We have proved (see \cite{de2015normal} and Chapter \ref{normal}) that requiring the channel to send into a Gaussian state any Gaussian state of a joint system is not actually necessary to get the decomposition \eqref{Gstinespring}: it is sufficient to require that the channel sends into a Gaussian state any Gaussian state of the $n$-mode system on which it is naturally defined.

\subsection{The quantum-limited attenuator and amplifier}\label{secattampl}
We present here two particular quantum Gaussian channels, that will be useful in the rest of the Thesis.
Let us consider the $n$-mode Gaussian quantum systems $A$ and $E$, with ladder operators
\begin{equation}
\hat{a}^i\;,\quad\hat{e}^i\;,\qquad i=1,\ldots,n\;.
\end{equation}

The quantum-limited attenuator on $A$ of parameter $0\leq\lambda\leq1$ admits the representation \eqref{Gstinespring}
\begin{equation}\label{attdef}
\mathcal{E}_\lambda\left(\hat{\rho}\right)= \mathrm{Tr}_E\left[\hat{U}_\lambda\left(\hat{\rho}\otimes|0\rangle_E\langle0|\right)\hat{U}_\lambda^\dag\right]\;,
\end{equation}
where the symplectic matrix $S$ is a rotation:
\begin{equation}
S=\left(
    \begin{array}{cc}
      \sqrt{\lambda}\;\mathbb{I}_{2n} & \sqrt{1-\lambda}\;\mathbb{I}_{2n} \\
      -\sqrt{1-\lambda}\;\mathbb{I}_{2n} & \sqrt{\lambda}\;\mathbb{I}_{2n} \\
    \end{array}
  \right)\;,
\end{equation}
such that the unitary operator $\hat{U}_\lambda$ acts on the quadratures as
\begin{eqnarray}
\hat{U}_\lambda^\dag\;\hat{a}^i\;\hat{U}_\lambda &=& \sqrt{\lambda}\;\hat{a}^i+\sqrt{1-\lambda}\;\hat{e}^i\;,\nonumber\\
\hat{U}_\lambda^\dag\;\hat{e}^i\;\hat{U}_\lambda &=& -\sqrt{1-\lambda}\;\hat{a}^i+\sqrt{\lambda}\;\hat{e}^i\;,\qquad i=1,\ldots,n\;.
\end{eqnarray}
It is possible to show \cite{ferraro2005gaussian} that $\hat{U}_\lambda$ is given by a mode mixing:
\begin{equation}\label{mixinga}
\hat{U}_\lambda=\exp\left[\arctan\sqrt{\frac{1-\lambda}{\lambda}}\;\left(\hat{\mathbf{a}}^\dag\hat{\mathbf{e}}-\hat{\mathbf{e}}^\dag\hat{\mathbf{a}}\right)\right]\;,
\end{equation}
and that the quantum-limited attenuators satisfy the multiplicative composition rule
\begin{equation}\label{compatt}
\mathcal{E}_\lambda\circ\mathcal{E}_{\lambda'}=\mathcal{E}_{\lambda\lambda'}\;,\qquad 0\leq\lambda\,,\;\lambda'\leq1\;.
\end{equation}
The quantum-limited attenuator provides a model for the attenuation of an electromagnetic signal travelling through metal wires, optical fibers or  free space, and $\lambda$ is the attenuation coefficient.
More in the spirit of our definition, the quantum-limited attenuator also models the action on a light beam of a beamsplitter with transmissivity $\lambda$.
In this case, the unitary $\hat{U}_\lambda$ implements the splitting of the beam in transmitted and reflected parts, and the partial trace over the environment $E$ represents the discarding of the reflected beam.

The quantum-limited amplifier on $A$ with parameter $\kappa\geq1$ admits the representation \eqref{Gstinespring}
\begin{equation}\label{ampdef}
\mathcal{A}_\kappa\left(\hat{\rho}\right)= \mathrm{Tr}_E\left[\hat{U}_\kappa\left(\hat{\rho}\otimes|0\rangle_E\langle0|\right)\hat{U}_\kappa^\dag\right]\;, \qquad \kappa\geq1\;,
\end{equation}
with
\begin{equation}
S=\left(
    \begin{array}{cc}
      \sqrt{\kappa}\;\mathbb{I}_{2n} & \sqrt{\kappa-1}\;T_{2n} \\
      \sqrt{\kappa-1}\;T_{2n} & \sqrt{\kappa}\;\mathbb{I}_{2n} \\
    \end{array}
  \right)\;,
\end{equation}
where $T_{2n}$ is the $n$-mode time-reversal
\begin{equation}
T_{2n}=\bigoplus_{k=1}^n\left(
                            \begin{array}{cc}
                              1 & 0 \\
                              0 & -1 \\
                            \end{array}
                          \right)
\end{equation}
that flips the sign of each $P^i$, leaving the $Q^i$ unchanged.
The unitary operator $\hat{U}_\kappa$ acts on the quadratures as
\begin{eqnarray}
\hat{U}_\kappa^\dag\;\hat{a}^i\;\hat{U}_\kappa &=& \sqrt{\kappa}\;\hat{a}^i+\sqrt{\kappa-1}\;\left(\hat{e}^i\right)^\dag\;,\nonumber\\
\hat{U}_\kappa^\dag\;\hat{e}^i\;\hat{U}_\kappa &=& \sqrt{\kappa-1}\;\left(\hat{a}^i\right)^\dag+\sqrt{\kappa}\;\hat{e}^i\;,\qquad i=1,\ldots,n\;.
\end{eqnarray}
It is possible to show \cite{ferraro2005gaussian} that $\hat{U}_\kappa$ is given by a squeezing operator:
\begin{equation}\label{squeezinga}
\hat{U}_\kappa=\exp\left[\mathrm{arctanh}\sqrt{\frac{\kappa-1}{\kappa}}\;\left(\hat{\mathbf{a}}^\dag\left(\hat{\mathbf{e}}^\dag\right)^T-\hat{\mathbf{e}}^T\hat{\mathbf{a}}\right)\right]\;,
\end{equation}
that does not conserve energy.
Indeed, its implementation in the laboratory requires active elements.

\section{The von Neumann entropy}\label{vNE}
The concept of entropy is ubiquitous in information theory.

The Shannon entropy of a discrete probability distribution $\{p_i\}_{i\in I}$ is defined as \cite{cover2006elements}
\begin{equation}
H[p]=-\sum_{i\in I}p_i\ln p_i\;,
\end{equation}
and quantifies the randomness of the distribution, i.e. how much information we acquire when the value of $i$ is revealed.
This last property is captured by the data compression theorem \cite{cover2006elements}.
Let us suppose to have a source that transmits a message made of $n$ letters $i_1$, ..., $i_n$, each one taken from an alphabet $I$.
The only a priori knowledge we have about the message is that at each of the $n$ steps, the letter $i$ will be sent with probability $p_i$ without any correlation between the steps.
The theorem then states that, in the large $n$ limit, while the number of possible messages is $|I|^n$, the transmitted message will be contained with probability one in a subset of only $\exp\left(n\,H[p]\right)$ messages.

The Shannon entropy has also a continuous analog for a probability distribution $p(\mathbf{x})$ over $\mathbb{R}^n$, the Shannon differential entropy \cite{cover2006elements}:
\begin{equation}
H[p]=-\int p(\mathbf{x})\ln p(\mathbf{x})\;d^nx\;.
\end{equation}

The generalization of the Shannon entropy to a quantum state $\hat{\rho}$ is the von Neumann entropy \cite{nielsen2010quantum}
\begin{equation}
S\left[\hat{\rho}\right]=-\mathrm{Tr}\left[\hat{\rho}\ln\hat{\rho}\right]\;,
\end{equation}
that coincides with the Shannon entropy of the discrete probability distribution associated to the eigenvalues of $\hat{\rho}$.

Its operational interpretation is provided by the Schumacher's coding theorem \cite{nielsen2010quantum}.
Let us suppose that our source now encodes each letter $i$ in a pure quantum state $|\psi_i\rangle$ taken from an Hilbert space $\mathcal{H}$, and sends the state $|\psi_{i_1}\rangle\otimes\ldots\otimes|\psi_{i_n}\rangle\in\mathcal{H}^{\otimes n}$.
Then, in the large $n$ limit, while the dimension of the global Hilbert space is $\dim\mathcal{H}^{\otimes n}=\left(\dim\mathcal{H}\right)^n$, the state sent will be contained with probability one in a subspace of dimension $\exp\left(n\,S\left[\hat{\rho}\right]\right)$, where $\hat{\rho}$ is the density matrix associated to the ensemble
\begin{equation}
\hat{\rho}=\sum_{i\in I}p_i|\psi_i\rangle\langle\psi_i|\;.
\end{equation}

\section{The classical communication capacity}\label{clcap}
A physically relevant quantity associated to a quantum channel $\Phi$ sending states on the quantum system $A$ into states on the quantum system $B$ is its capacity for transmitting classical information.

Let us suppose that Alice wants to transmit to Bob a message $i$ taken from an alphabet $I$ with the channel $\Phi$.
She then encodes her message into a quantum state $\hat{\rho}_i$ on the Hilbert space $\mathcal{H}_A$, and the state is transmitted to Bob through the quantum channel $\Phi$.
Bob receives the state $\Phi\left(\hat{\rho}_i\right)$, and performing a measurement on it he must guess the transmitted message $i$.
Let
\begin{equation}
\hat{M}_i\geq0\;,\quad i\in I\;,\qquad\sum_{i\in I}\hat{M}_i=\mathbb{I}_B
\end{equation}
be the elements of the POVM performed by Bob, i.e. if he receives the state $\hat{\rho}_B$, he associates to it the message $i$ with probability \begin{equation}
p\left(i\left|\hat{\rho}_B\right.\right)=\mathrm{Tr}\left[\hat{M}_i\;\hat{\rho}_B\right]\;.
\end{equation}
The set $\left\{\hat{\rho}_i,\;\hat{M}_i\right\}_{i\in I}$ of the states sent by Alice and of the POVM elements used by Bob is called a code for the quantum channel $\Phi$.

If Alice has sent the message $i$, Bob correctly guesses it with probability $\mathrm{Tr}\left[\hat{M}_i\;\Phi\left(\hat{\rho}_i\right)\right]$.
We define then the maximum error probability of the code $\mathcal{C}$ as
\begin{equation}
p_e(\mathcal{C})=\max_{i\in I}\left(1-\mathrm{Tr}\left[\hat{M}_i\;\Phi\left(\hat{\rho}_i\right)\right]\right)\;.
\end{equation}

We say that a communication rate $R$ is achievable by the channel $\Phi$ if for any $n\in\mathbb{N}$ there exists an alphabet $I_n$ with
\begin{equation}
|I_n|\geq e^{nR}
\end{equation}
and an associated code $\mathcal{C}^{(n)}$ for the composite channel $\Phi^{\otimes n}$ such that the maximum probability of error tends to zero for $n\to\infty$, i.e.
\begin{equation}
\lim_{n\to\infty}p_e\left(\mathcal{C}^{(n)}\right)=0\;.
\end{equation}

We define then the classical capacity of $\Phi$ as the supremum of all the achievable rates \cite{holevo2013quantum}:
\begin{equation}
C(\Phi)=\sup\left\{R\;|\;R\;\text{achievable}\right\}\;.
\end{equation}

In the standard definition of code $\mathcal{C}^{(n)}$ for the channel $\Phi^{\otimes n}$, Alice is allowed to use for the encoding entangled states on the Hilbert space $\mathcal{H}_A^{\otimes n}$, and Bob is allowed to perform a POVM with entangled elements $\hat{M}_i$ on the Hilbert space $\mathcal{H}_B^{\otimes n}$.
If we change the definition and allow Alice to use only separable states $\hat{\rho}_i$ in the encoding procedure (but we continue to allow Bob to perform any measurement), the capacity of the channel can be explicitely determined.
For any ensemble of states on Alice's Hilbert space $\mathcal{H}_A$
\begin{equation}\label{ensemble}
\mathcal{E}=\left\{p_i,\;\hat{\rho}_i\right\}_{i\in I}\;,
\end{equation}
we define
\begin{equation}\label{ChiHE}
\chi(\mathcal{E},\Phi)=S\left(\sum_{i\in I}p_i\;\Phi\left(\hat{\rho}_i\right)\right)-\sum_{i\in I}p_i\; S\left(\Phi(\hat{\rho}_i)\right)\;,
\end{equation}
where $S$ stands for the von Neumann entropy.
The capacity of the channel is then given by the so-called Holevo information \cite{holevo2013quantum}, given by the supremum of $\chi(\mathcal{E},\Phi)$ over all the possible Alice's ensembles $\mathcal{E}$:
\begin{equation}\label{ChiH}
\chi(\Phi)=\sup_\mathcal{E}\chi(\mathcal{E},\Phi)\;.
\end{equation}
The optimal rate is asymptotically achieved when Alice randomly chooses the states for the encoding according to the ensemble that maximizes \eqref{ChiH}.

If Alice is allowed to use entangled states, the capacity can be larger and involves a regularization over the number of channel uses:
\begin{equation}\label{Ctens}
C(\Phi)=\lim_{n\to\infty}\frac{1}{n}\;\chi\left(\Phi^{\otimes n}\right)\;.
\end{equation}
It is easy to show that for any $n\in\mathbb{N}$
\begin{equation}
\chi\left(\Phi^{\otimes n}\right)\geq n\;\chi(\Phi)\;,
\end{equation}
so that the limit in \eqref{Ctens} is actually a supremum.
If for any $n\in\mathbb{N}$
\begin{equation}
\chi\left(\Phi^{\otimes n}\right)=n\;\chi(\Phi)\;,
\end{equation}
we say that the Holevo information of the channel $\Phi$ is additive.
In this case, the regularization in \eqref{Ctens} is not necessary, and the classical capacity of $\Phi$ coincides with its Holevo information.

It is possible to prove (see \cite{holevo2015gaussian,holevo2013quantum} and references therein) that there exist quantum channels whose Holevo information is not additive.
However, the proof is not constructive, and no explicit example of such channel has been found.

\section{The capacity of Gaussian channels and the minimum output entropy conjecture}\label{gccapacity}
It is easy to show that the optimal ensemble for the Holevo information \eqref{ChiH} must be made of pure states.
Indeed, it is intuitive that sending the least possible noisy input is the best choice for Alice, given the message she wants to communicate.

According to the general principle ``Gaussian channels have Gaussian optimizers'' \cite{holevo2015gaussian}, the Holevo information of a quantum Gaussian channel has then been conjectured to be achieved by a Gaussian ensemble of pure Gaussian states, i.e. in the notation of \eqref{ensemble}
\begin{equation}\label{ensembleG}
\mathcal{E}=\left\{\frac{e^{-\mathbf{x}^T\sigma^{-1}\mathbf{x}}}{\sqrt{\det(\pi\sigma)}}\;d^{2n}x\;,\quad\hat{D}(\mathbf{x})\;\hat{\rho}_0\;{\hat{D}(\mathbf{x})}^\dag\right\}_{\mathbf{x}\in\mathbb{R}^{2n}}\;.
\end{equation}
Here $\hat{\rho}_0$ is a fixed pure Gaussian state, $\sigma$ is a real strictly positive $2n\times2n$ matrix and the probability distribution is continuous with the normalization
\begin{equation}
\int\frac{e^{-\mathbf{x}^T\sigma^{-1}\mathbf{x}}}{\sqrt{\det(\pi\sigma)}}\;d^{2n}x=1\;.
\end{equation}
The optimality of the ensemble \eqref{ensembleG} would mean that the best inputs Alice can use for transmitting information are pure Gaussian states.

The resulting Holevo information is
\begin{equation}\label{ChiHG}
\chi(\Phi)=S\left(\int \Phi\left(\hat{D}(\mathbf{x})\;\hat{\rho}_0\;{\hat{D}(\mathbf{x})}^\dag\right)\;\frac{e^{-\mathbf{x}^T\sigma^{-1}\mathbf{x}}}{\sqrt{\det(\pi\sigma)}}\;d^{2n}x\right) -S\left(\Phi\left(\hat{\rho}_0\right)\right)\;.
\end{equation}
Since for any Gaussian channel the states $\left\{\Phi\left(\hat{D}(\mathbf{x})\;\hat{\rho}_0\;{\hat{D}(\mathbf{x})}^\dag\right)\right\}_{\mathbf{x}\in\mathbb{R}^{2n}}$ are unitarily equivalent, the average over $\mathbf{x}$ in the second term of the right-hand side of \eqref{ChiHG} is not necessary.

It is easy to show that, for any nondegenerate Gaussian channel, sending $\sigma\to\infty$ in \eqref{ChiHG} results in an infinite capacity.
This occurs also in the classical case, and is due to the possibility for Alice of sending an arbitrary number of photons per channel use, allowing her to send an arbitrary amount of information.
However, in any realistic scenario the available input power is limited.
This constraint can be implemented \cite{holevo2013quantum,holevo2016constrained} requiring the input ensemble to have bounded mean energy:
\begin{equation}\label{constrE}
\sum_{i\in I}p_i\;\mathrm{Tr}\left[\hat{H}\;\hat{\rho}_i\right]\leq E\;,
\end{equation}
where $\hat{H}$ is the number Hamiltonian \eqref{Hosc}.

With this constraint, it is natural to consider the class of quantum Gaussian channels that commute with the time evolution generated by $\hat{H}$, i.e. for any trace-class operator $\hat{X}$ and any $t\in\mathbb{R}$,
\begin{equation}
\Phi\left(e^{-i\hat{H}t}\;\hat{X}\;e^{i\hat{H}t}\right)=e^{-i\hat{H}t}\;\Phi\left(\hat{X}\right)\;e^{i\hat{H}t}\;.
\end{equation}
These channels are called gauge-covariant \cite{holevo2013quantum}.
They are the most physically relevant Gaussian channels, since they preserve the class of thermal Gaussian states, and model the effects of signal attenuation and noise addition that affect electromagnetic communications via metal wires, optical fibers and free space \cite{weedbrook2012gaussian}.

Thermal Gaussian states have the maximum entropy among all the states with a given average energy \cite{wolf2006extremality}, and the average energy of the output of a gauge-covariant Gaussian channel is determined by the average energy of the input alone.
It follows that for any gauge-covariant Gaussian channel $\Phi$ the first term in the right-hand side of \eqref{ChiH} under the constraint \eqref{constrE} is maximized by a Gaussian ensemble of coherent states of the form \eqref{ensembleG}, with $\hat{\rho}_0$ the vacuum state and $\sigma$ proportional to the identity \cite{holevo2001evaluating}.
The last step to prove the optimality of the Gaussian ensemble is then to prove that coherent states maximize also the second term in the right-hand side of \eqref{ChiH}, i.e. they minimize the output entropy of the channel \cite{giovannetti2004minimum}.
This minimum output entropy conjecture has been a longstanding problem, only recently solved \cite{giovannetti2015majorization,giovannetti2015solution}.
This result implies that coherent states provide the optimal ensemble for transmitting classical information through any gauge-covariant quantum Gaussian channel, thus permitting the determination of its Holevo information \cite{giovannetti2014ultimate}.
Since the coherent states of a multimode Gaussian quantum system are product states, it follows that entangled input states are not useful, and the Holevo information is additive and then coincides with the classical capacity of the channel.

\section{Majorization}\label{secmaj}
Actually, the minimum output entropy conjecture follows from a stronger property of gauge-covariant quantum Gaussian channels related to majorization theory.

Majorization is the order relation between quantum states induced by random unitary operations: we say that the quantum state $\hat{\rho}$ majorizes the quantum state $\hat{\sigma}$ if there exists a probability measure $\mu$ on the set of unitary operators such that
\begin{equation}\label{majru}
\hat{\sigma}=\int\hat{U}\;\hat{\rho}\;\hat{U}^\dag\;d\mu\left(\hat{U}\right)\;.
\end{equation}
However, since this definition makes the test of the order relation difficult, majorization is usually defined as a property of the spectrum of the states.
The interested reader can find more details in the dedicated book \cite{marshall2010inequalities}, that however deals only with the finite-dimensional case.
\begin{defn}[Majorization]
Let $x$ and $y$ be decreasing summable sequences of positive numbers, i.e. $x_0\geq x_1\geq\ldots\geq0$ and $y_0\geq y_1\geq\ldots\geq0$.
We say that $x$ weakly sub-majorizes $y$, or $x\succ_w y$, iff for any $n\in\mathbb{N}$
\begin{equation}
\sum_{i=0}^n x_i\geq\sum_{i=0}^n y_i\;.
\end{equation}
If they have also the same sum, we say that $x$ majorizes $y$, or $x\succ y$.
\end{defn}
\begin{defn}
Let $\hat{X}$ and $\hat{Y}$ be positive trace-class operators with eigenvalues in decreasing order $\{x_n\}_{n\in\mathbb{N}}$ and $\{y_n\}_{n\in\mathbb{N}}$, respectively.
We say that $\hat{X}$ weakly sub-majorizes $\hat{Y}$, or $\hat{X}\succ_w\hat{Y}$, iff $x\succ_w y$.
We say that $\hat{X}$ majorizes $\hat{Y}$, or $\hat{X}\succ\hat{Y}$, if they have also the same trace.
\end{defn}
The link with the definition in terms of random unitary operation is provided by the following:
\begin{thm}
Given two positive operators $\hat{X}$ and $\hat{Y}$ with the same finite trace, the following conditions are equivalent:
\begin{enumerate}
  \item $\hat{X}\succ\hat{Y}$;
  \item For any continuous nonnegative convex function $f:[0,\infty)\to\mathbb{R}$ with $f(0)=0\,$,
  \begin{equation}\label{Trf}
  \mathrm{Tr}\;f\left(\hat{X}\right)\geq\mathrm{Tr}\;f\left(\hat{Y}\right)\;;
  \end{equation}
  \item For any continuous nonnegative concave function $g:[0,\infty)\to\mathbb{R}$ with $g(0)=0\,$,
  \begin{equation}\label{Trg}
  \mathrm{Tr}\;g\left(\hat{X}\right)\leq\mathrm{Tr}\;g\left(\hat{Y}\right)\;;
  \end{equation}
  \item $\hat{Y}$ can be obtained applying to $\hat{X}$ a convex combination of unitary operators, i.e. there exists a probability measure $\mu$ on unitary operators such that
  \begin{equation}\label{majU}
  \hat{Y}=\int\hat{U}\,\hat{X}\,\hat{U}^\dag\;d\mu\left(\hat{U}\right)\;.
  \end{equation}
\end{enumerate}
\begin{proof}
See Theorems 5, 6 and 7 of \cite{wehrl1974chaotic}.
We notice that Ref. \cite{wehrl1974chaotic} uses the opposite definition of the symbol ``$\succ$'' with respect to most literature (and to Ref. \cite{marshall2010inequalities}), i.e. there $\hat{X}\succ\hat{Y}$ means that $\hat{X}$ is majorized by $\hat{Y}$.
\end{proof}
\end{thm}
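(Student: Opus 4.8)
The plan is to reduce everything to the eigenvalue sequences $x=\{x_n\}$ and $y=\{y_n\}$, since by the functional calculus $\mathrm{Tr}\,f(\hat X)=\sum_n f(x_n)$ and $\mathrm{Tr}\,g(\hat X)=\sum_n g(x_n)$, and likewise for $\hat Y$. Because the operators are trace-class, both spectra are bounded, so I fix $c\geq\max(x_0,y_0)$ and regard all test functions as living on $[0,c]$. I would then prove the cycle $(1)\Rightarrow(4)\Rightarrow(2)\Rightarrow(1)$ together with the separate equivalence $(2)\Leftrightarrow(3)$. The organizing technical fact is that majorization is governed by the \emph{hinge functions} $h_s(t):=(t-s)_+$: I would first establish the lemma that, for sequences of equal sum, $x\succ y$ holds if and only if $\sum_n (x_n-s)_+\geq\sum_n (y_n-s)_+$ for every $s\geq0$. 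This is elementary in one direction (evaluate at $s=x_N$ to recover $\sum_{n=0}^N x_n\geq\sum_{n=0}^N y_n$) and in the other follows from rewriting the hinge sums in terms of the partial sums.

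Given this lemma, the equivalence $(2)\Leftrightarrow(3)$ comes from a complementarity trick using the common trace. For a nonnegative convex $f$ with $f(0)=0$, convexity places $f$ below its chord, so $g(t):=\tfrac{f(c)}{c}\,t-f(t)$ is nonnegative, concave, and vanishes at $0$; conversely every admissible $g$ arises this way. Since $f+g$ is linear and $\mathrm{Tr}\,\hat X=\mathrm{Tr}\,\hat Y$, we have $\mathrm{Tr}\,(f+g)(\hat X)=\mathrm{Tr}\,(f+g)(\hat Y)$, whence $\mathrm{Tr}\,f(\hat X)\geq\mathrm{Tr}\,f(\hat Y)$ is equivalent to $\mathrm{Tr}\,g(\hat X)\leq\mathrm{Tr}\,g(\hat Y)$. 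Running this correspondence over both families identifies $(2)$ with $(3)$. The implication $(2)\Rightarrow(1)$ is then immediate: the hinge functions $h_s$ are themselves admissible in $(2)$, so $(2)$ yields the hinge dominance and the lemma gives $x\succ y$. For $(1)\Rightarrow(2)$ I would invoke the integral representation of a convex function,
\[
f(t)=\alpha\,t+\int_0^c (t-s)_+\,d\nu(s),
\]
with $\nu\geq0$; summing over the spectrum and using that the linear term contributes equally (equal trace) reduces the claim to the hinge dominance, which $(1)$ supplies through the lemma.

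It remains to thread $(4)$ into the cycle. The easy half is $(4)\Rightarrow(2)$: for convex $f$ the functional $A\mapsto\mathrm{Tr}\,f(A)$ on self-adjoint operators is convex, so by Jensen together with the unitary invariance of the trace and $\int d\mu=1$,
\[
\mathrm{Tr}\,f(\hat Y)\leq\int\mathrm{Tr}\,f\!\left(\hat U\,\hat X\,\hat U^\dagger\right)d\mu\!\left(\hat U\right)=\mathrm{Tr}\,f(\hat X).
\]
For $(1)\Rightarrow(4)$ I would work first in finite dimension: writing $\hat Y=W\,\mathrm{diag}(y)\,W^\dagger$, it suffices to realize $\mathrm{diag}(y)$ as a unitary average of $\mathrm{diag}(x)$ and then conjugate by $W$. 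By the classical Hardy--Littlewood--P\'olya theorem $y=Dx$ for a doubly stochastic $D$, which factors into T-transforms $(1-\lambda)\mathbb{I}+\lambda\,Q$ with $Q$ a transposition; each T-transform is implemented on the diagonal by the channel $\hat\rho\mapsto(1-\lambda)\hat\rho+\lambda\,\hat V\hat\rho\hat V^\dagger$ with $\hat V$ the corresponding swap unitary, and composing such convex combinations of unitary conjugations yields $\mathrm{diag}(y)$ from $\mathrm{diag}(x)$.

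The main obstacle is upgrading $(1)\Rightarrow(4)$ to the infinite-dimensional, trace-class setting. The finite construction must be applied to truncations keeping the top eigenvalues, producing measures $\mu_N$, and one must pass to the limit; because the unitary group is not compact, I cannot simply extract a weak-$*$ limit of $\mu_N$. The plan is to control convergence through the trace norm, exploiting that the discarded tails $\sum_{n>N}x_n$ and $\sum_{n>N}y_n$ tend to zero, and to show that the averaged channels converge to a channel that is again a genuine unitary average realizing $\hat Y$. Making this limiting argument rigorous---ensuring tightness of the approximating data and that the limit is represented by an honest probability measure on unitaries---is the delicate step, and is where I expect the bulk of the work to lie.
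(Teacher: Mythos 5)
The hard content of this theorem is the implication $(1)\Rightarrow(4)$ for trace-class operators on an infinite-dimensional Hilbert space, and that is exactly where your proposal stops short. You prove it only in finite dimension (Hardy--Littlewood--P\'olya plus T-transforms, which is fine) and then describe a truncation-and-limit strategy whose decisive step you yourself flag as unresolved; flagging it does not close it, and the obstruction is not a matter of routine tightening. Concretely: the finite-stage measures $\mu_N$ live on a group that is not compact, while the unit ball of the bounded operators, which \emph{is} compact in the weak operator topology, contains the unitary group as a non-closed subset whose closure consists of all contractions --- shifts, for instance, are weak limits of permutation unitaries. So a weak-$*$ limit of the $\mu_N$ may be supported on non-unitary contractions, and trace-norm convergence of the averaged states only places $\hat{Y}$ in the trace-norm closure of the convex hull of the unitary orbit, which is a priori weaker than the exact representation \eqref{majU}. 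The natural repair --- composing infinitely many two-level mixing unitaries and reading the result as a mixture over branches of a binary tree --- meets the same difficulty in another guise: a strong limit of unitaries is in general only an isometry (an infinite product of transpositions can converge strongly to a shift), so one must design the scheme so that, almost surely with respect to the branch measure, every basis vector is moved only finitely often, and then verify both unitarity of the almost-sure limits and that the mixture reproduces $\hat{Y}$ exactly. This is precisely the content that the paper outsources to Theorems 5--7 of the cited reference of Wehrl; without it you have $(1)\Leftrightarrow(2)\Leftrightarrow(3)$ and $(4)\Rightarrow(2)$, but not the equivalence with $(4)$.

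A secondary, fixable, error: your chord complementation for $(2)\Leftrightarrow(3)$ is wrong as stated. For admissible convex $f$, the function $g(t)=\frac{f(c)}{c}\,t-f(t)$ is nonnegative and concave on $[0,c]$ but nonpositive for $t>c$, hence not an admissible test function on $[0,\infty)$ (and, e.g., $g(t)=t-t^2$ on $[0,1]$ admits no nonnegative concave extension at all); the converse claim that every admissible $g$ arises this way also fails, since any chord complement satisfies $g(c)=0$, which excludes $g(t)=\min(t,s)$ with $s<c$. The correct complementation uses support lines at the origin rather than chords: given admissible concave $g$ with $g'(0^+)<\infty$, the function $f(t)=g'(0^+)\,t-g(t)$ is admissible convex on all of $[0,\infty)$, and the case $g'(0^+)=\infty$ is handled by monotone approximation with $g_\beta=\min(g,\beta t)$; in the reverse direction, first replace $f$ beyond $c$ by its affine extension with slope $f'_-(c)$ (this changes no trace) and subtract the result from the line of slope $f'_-(c)$. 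With these corrections, your hinge lemma, the representation $f(t)=f'(0^+)\,t+\int_0^c(t-s)_+\,d\nu(s)$, and the Peierls--Jensen argument for $(4)\Rightarrow(2)$ are all sound in the trace-class setting; the genuine missing piece remains $(1)\Rightarrow(4)$.
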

\begin{rem}\label{majS}
If $\hat{X}$ and $\hat{Y}$ are quantum states (i.e. their trace is one), \eqref{Trg} with $g(x)=-x\ln x$ implies that the von Neumann entropy of $\hat{X}$ is lower than the von Neumann entropy of $\hat{Y}$, while \eqref{Trf} with $f(x)=x^p$, $p>1$ implies the same for all the R{\'e}nyi entropies \cite{holevo2013quantum}.
\end{rem}

The minimum output entropy conjecture follows exactly from this last property: indeed, in Ref.'s \cite{giovannetti2015majorization,mari2014quantum} it is proven that for any gauge-covariant Gaussian quantum channel the output generated by any coherent state majorizes the output generated by any other state (see Fig. \ref{majfig}).

\begin{figure}[ht]
  \includegraphics[width=\textwidth]{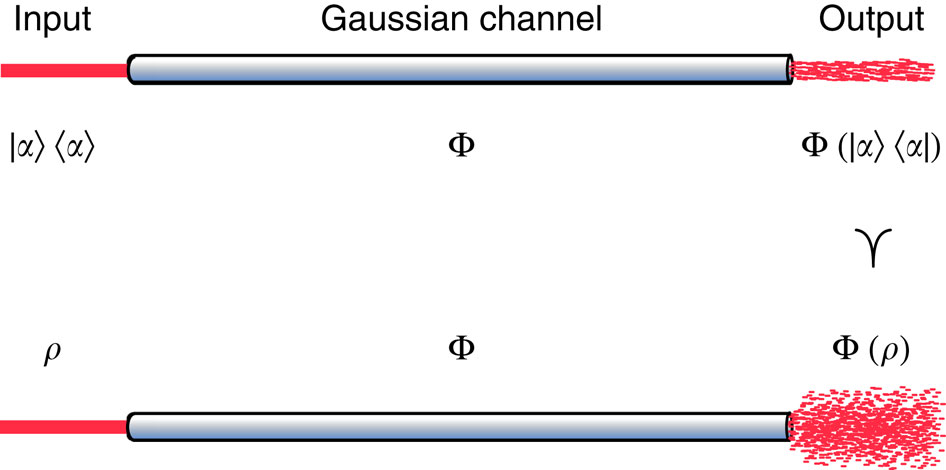}
  \caption{A coherent state $|\alpha\rangle\langle\alpha|$ and an arbitrary state $\hat{\rho}$ are both transmitted through the same gauge-covariant quantum Gaussian channel $\Phi$.
  The respective output states always satisfy the majorization relation $\Phi\left(|\alpha\rangle\langle\alpha|\right)\succ\Phi\left(\hat{\rho}\right)$.
  This means that coherent input states produce less noise at the output of the communication channel.}\label{majfig}
\end{figure}

\section{The capacity of the broadcast channel and the minimum output entropy conjecture}\label{broadcasti}
In Section \ref{secmaj} we have linked the classical capacity of gauge-covariant quantum Gaussian channels to their minimum output entropy.
In this Section we will link the capacity region of the degraded broadcast channel \cite{cover2006elements,guha2007classical,guha2007classicalproc,yard2011quantum,savov2015classical}, where Alice wants to communicate with two parties, to the minimum output entropy of a certain quantum channel for fixed input entropy.

The unconstrained minimum output entropy of gauge-covariant Gaussian quantum channels is achieved by the vacuum input state.
The constrained minimum output entropy for fixed input entropy is conjectured to be achieved by Gaussian thermal input states \cite{guha2007classical,guha2007classicalproc,guha2008entropy}, but a general proof does not exist yet.
In Chapter \ref{epi} we will prove the quantum Entropy Power Inequality, that bounds this constrained minimum output entropy.
In Chapter \ref{chepni} we will prove the conjecture for the one-mode quantum-limited attenuator.
In the remaining part of this Chapter, we define the degraded broadcast channel and its capacity region, and we show the role of the constrained minimum output entropy conjecture in its determination.

Let us suppose that Alice, who can prepare a state on a quantum system $A$, wants to communicate at the same time with Bob and Charlie, who can perform measurements on the quantum systems $B$ and $C$, respectively, with a quantum channel
\begin{equation}
\Phi_{A\to BC}:\mathfrak{T}\left(\mathcal{H}_A\right)\to\mathfrak{T}\left(\mathcal{H}_B\otimes\mathcal{H}_C\right)\;.
\end{equation}
Let us also suppose that Bob and Charlie cannot communicate nor perform joint measurements.
Let $\Phi_{A\to B}$ and $\Phi_{A\to C}$ be the effective quantum channels seen by Bob and Charlie, respectively, i.e. for any trace-class operator $\hat{X}$ on the Hilbert space $\mathcal{H}_A$
\begin{equation}
\Phi_{A\to B}\left(\hat{X}\right)=\mathrm{Tr}_C\Phi_{A\to BC}\left(\hat{X}\right)\;,\qquad\Phi_{A\to C}\left(\hat{X}\right)=\mathrm{Tr}_B\Phi_{A\to BC}\left(\hat{X}\right)\;.
\end{equation}
Let $I$ and $J$ be the sets of possible messages that Alice can send to Bob and Charlie, respectively.
A code $\mathcal{C}$ for the channel $\Phi_{A\to BC}$ is then given by a set of encoding states $\left\{\hat{\rho}_{ij}\right\}_{i\in I,\,j\in J}\subset\mathfrak{S}_A$, and two POVM on $B$ and $C$, respectively:
\begin{align}
&\hat{M}_i^B\geq0\;,\quad i\in I\;,\qquad\sum_{i\in I}\hat{M}_i^B=\hat{\mathbb{I}}_B\nonumber\\
&\hat{M}_j^C\geq0\;,\quad j\in J\;,\qquad\sum_{j\in J}\hat{M}_j^C=\hat{\mathbb{I}}_C\;,
\end{align}
such that, if Bob and Charlie receive the joint state $\hat{\rho}_{BC}$, the joint probability that they associate to it the messages $i$ and $j$, respectively, is
\begin{equation}
p\left(ij\left|\hat{\rho}_{BC}\right.\right)=\mathrm{Tr}_{BC}\left[\left(\hat{M}_i^B\otimes\hat{M}_j^C\right)\;\hat{\rho}_{BC}\right]\;.
\end{equation}
As in the single-party case, the maximum error probability of the code $\mathcal{C}$ is defined as
\begin{equation}
p_e(\mathcal{C})=\max_{i\in I,\,j\in J}\left(1-\mathrm{Tr}_{BC}\left[\left(\hat{M}_i^B\otimes\hat{M}_j^C\right)\Phi_{A\to BC}\left(\hat{\rho}_{ij}\right)\right]\right)\;.
\end{equation}
A couple of rates $(R_B,R_C)$ is said to be achievable if for any $n\in\mathbb{N}$ there exist two alphabets $I_n$ and $J_n$ with
\begin{equation}
|I_n|\geq e^{nR_B}\;,\qquad |J_n|\geq e^{nR_C}\;,
\end{equation}
and an associated code $\mathcal{C}^{(n)}$ for the channel $\Phi_{A\to BC}^{\otimes n}$ with asymptotically vanishing error probability:
\begin{equation}
\lim_{n\to\infty}p_e\left(\mathcal{C}^{(n)}\right)=0\;.
\end{equation}
The capacity region of the channel $\Phi_{A\to BC}$ is then defined as the closure of the set of all the achievable couples of rates.

It is possible to show \cite{guha2007classical} that for any point $(R_B,R_C)$ of the capacity region there exists a sequence of sets $I_n$ and $J_n$, $n\in\mathbb{N}$ with an associated ensemble of \emph{pure} states on the Hilbert space $\mathcal{H}_A^{\otimes n}$
\begin{equation}\label{EBC}
\mathcal{E}^{(n)}=\left\{p^{(n)}_i q^{(n)}_j,\;\hat{\rho}^{(n)}_{ij}\right\}_{i\in I_n,\,j\in J_n}\;,
\end{equation}
such that
\begin{eqnarray}\label{RBlim}
R_B &\leq& \liminf_{n\to\infty} \frac{1}{n}\sum_{j\in J_n}q^{(n)}_j\;\chi\left(\mathcal{E}^{(n)}_B(j),\;\Phi_{A\to B}^{\otimes n}\right)\\
R_C &\leq& \liminf_{n\to\infty} \frac{1}{n}\;\chi\left(\mathcal{E}_C^{(n)},\;\Phi_{A\to C}^{\otimes n}\right)\label{RClim}\;.
\end{eqnarray}
Here $i$ and $j$ represent the messages that Alice wants to send to Bob and Charlie, respectively, and $\mathcal{E}^{(n)}_B(j)$, $j\in J_n$, and $\mathcal{E}^{(n)}_C$ are the ensembles given by
\begin{eqnarray}
\mathcal{E}^{(n)}_B(j) &=& \left\{p^{(n)}_i,\;\hat{\rho}^{(n)}_{ij}\right\}_{i\in I_n}\;,\qquad j\in J_n\nonumber\\
\mathcal{E}^{(n)}_C &=& \left\{q^{(n)}_j,\;\sum_{i\in I_n}p^{(n)}_i\;\hat{\rho}^{(n)}_{ij}\right\}_{j\in J_n}\;.
\end{eqnarray}

In this setup, the energy constraint \eqref{constrE} becomes for the ensemble $\mathcal{E}^{(n)}$
\begin{equation}\label{cEbr}
\mathrm{Tr}\left[\hat{H}_n\;\hat{\omega}^{(n)}\right]\leq n\,E\;,
\end{equation}
where $\hat{\omega}^{(n)}$ is the average state
\begin{equation}
\hat{\omega}^{(n)}=\sum_{i\in I_n,\,j\in J_n}p^{(n)}_{i}q^{(n)}_j\hat{\rho}^{(n)}_{ij}\;,
\end{equation}
and $\hat{H}_n$ is the Hamiltonian on $\mathcal{H}_A^{\otimes n}$
\begin{equation}
\hat{H}_n=\sum_{i=1}^n\hat{\mathbb{I}}_A^{\otimes (i-1)}\otimes\hat{H}\otimes\hat{\mathbb{I}}_A^{\otimes (n-i)}\;.
\end{equation}

The broadcast quantum channel $\Phi_{A\to BC}$ is called \emph{degraded} \cite{cover2006elements,guha2007classical,guha2007classicalproc,yard2011quantum,savov2015classical} if Charlie's output is a degraded version of Bob's output, i.e. there exists a quantum channel $\Phi_{B\to C}$ such that
\begin{equation}\label{degrprop}
\Phi_{A\to C}=\Phi_{B\to C}\circ\Phi_{A\to B}\;.
\end{equation}
In this setup, a bound on the output entropy of the quantum channel $\Phi_{B\to C}$ in terms of its input entropy translates into a bound on the capacity region:
\begin{thm}\label{broadcastthm}
Let us suppose that for any $n\in\mathbb{N}$ and any state $\hat{\sigma}^{(n)}$ on the Hilbert space $\mathcal{H}_B^{\otimes n}$
\begin{equation}\label{lowerbS}
\frac{1}{n}\;S\left(\Phi_{B\to C}^{\otimes n}\left(\hat{\sigma}^{(n)}\right)\right)\geq f\left(\frac{1}{n}\;S\left(\hat{\sigma}^{(n)}\right)\right)\;,
\end{equation}
with $f$ a continuous increasing convex function.
Then any couple $(R_A,R_B)$ of achievable rates for the channel $\Phi_{A\to BC}$ with the energy constraint \eqref{cEbr} must satisfy
\begin{equation}\label{RBC}
f(R_B)+R_C\leq S(E)\;,
\end{equation}
where
\begin{equation}\label{limSE}
S(E)=\sup_{n\in\mathbb{N}}\left\{\left.\frac{1}{n}\;S\left(\Phi_{A\to C}\left(\hat{\omega}^{(n)}\right)\right)\right|\mathrm{Tr}\left[\hat{H}_n\;\hat{\omega}^{(n)}\right]\leq n\,E\right\}\;.
\end{equation}
\begin{proof}
For any ensemble $\mathcal{E}^{(n)}$ and for any $j\in J_n$ for the positivity of the entropy
\begin{equation}\label{chiBlim}
\chi\left(\mathcal{E}^{(n)}_B(j),\;\Phi_{A\to B}^{\otimes n}\right)\leq S\left(\hat{\sigma}_j^{(n)}\right)\;,
\end{equation}
where
\begin{equation}\label{sigmajn}
\hat{\sigma}_j^{(n)}=\sum_{i\in I_n}p^{(n)}_{i}\;\Phi_{A\to B}^{\otimes n}\left(\hat{\rho}^{(n)}_{ij}\right)\;,\qquad j\in J_n\;.
\end{equation}
For the degradability hypothesis \eqref{degrprop}
\begin{equation}\label{chiC}
\chi\left(\mathcal{E}_C^{(n)},\;\Phi_{A\to C}^{\otimes n}\right)=S\left(\Phi_{A\to C}^{\otimes n}\left(\hat{\omega}^{(n)}\right)\right)-\sum_{j\in J_n}q^{(n)}_j\;S\left(\Phi_{B\to C}^{\otimes n}\left(\hat{\sigma}^{(n)}_j\right)\right)\;.
\end{equation}
With \eqref{lowerbS}, \eqref{limSE}, the properties of $f$ and \eqref{chiBlim}, we have
\begin{eqnarray}
\frac{1}{n}\;\chi\left(\mathcal{E}_C^{(n)},\;\Phi_{A\to C}^{\otimes n}\right) &\leq& S\left(E\right)-f\left(\frac{1}{n}\sum_{j\in J_n}q_j^{(n)}\;S\left(\hat{\sigma}^{(n)}_j\right)\right)\leq\nonumber\\
&\leq& S\left(E\right)-f\left(\frac{1}{n}\sum_{j\in J_n}q_j^{(n)}\;\chi\left(\mathcal{E}^{(n)}_B(j),\;\Phi_{A\to B}^{\otimes n}\right)\right)\;.
\end{eqnarray}
Then,
\begin{eqnarray}
R_C &\leq& \liminf_{n\to\infty} \frac{1}{n}\;\chi\left(\mathcal{E}_C^{(n)},\;\Phi_{A\to C}^{\otimes n}\right)\leq\nonumber\\
&\leq& S\left(E\right)-\liminf_{n\to\infty}f\left(\frac{1}{n}\sum_{j\in J_n}q_j^{(n)}\;\chi\left(\mathcal{E}^{(n)}_B(j),\;\Phi_{A\to B}^{\otimes n}\right)\right)=\nonumber\\
&=& S\left(E\right)-f\left(\liminf_{n\to\infty}\frac{1}{n}\sum_{j\in J_n}q_j^{(n)}\;\chi\left(\mathcal{E}^{(n)}_B(j),\;\Phi_{A\to B}^{\otimes n}\right)\right)\leq\nonumber\\
&\leq& S\left(E\right)-f\left(R_B\right)\;,
\end{eqnarray}
where we have used that $f$ is continuous and increasing.
\end{proof}
\end{thm}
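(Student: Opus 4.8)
The plan is to combine the single-letter achievability bounds \eqref{RBlim}--\eqref{RClim}, the degradability hypothesis \eqref{degrprop}, and the assumed entropy lower bound \eqref{lowerbS} into a single scalar inequality, and then pass to the limit $n\to\infty$. Throughout, let $\hat{\sigma}_j^{(n)}$ denote Bob's average output state conditioned on Charlie's message $j$, as defined in \eqref{sigmajn}.

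First I would bound Bob's Holevo quantity from above. Writing $\chi(\mathcal{E}_B^{(n)}(j),\Phi_{A\to B}^{\otimes n})$ as the entropy $S(\hat{\sigma}_j^{(n)})$ of the average output minus the weighted average of the individual output entropies, nonnegativity of the von Neumann entropy forces the subtracted term to be nonnegative, giving at once the estimate \eqref{chiBlim}, namely $\chi(\mathcal{E}_B^{(n)}(j),\Phi_{A\to B}^{\otimes n}) \leq S(\hat{\sigma}_j^{(n)})$.

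Next I would rewrite Charlie's Holevo quantity using degradability. Since $\Phi_{A\to C}^{\otimes n} = \Phi_{B\to C}^{\otimes n}\circ\Phi_{A\to B}^{\otimes n}$ by \eqref{degrprop}, the conditional states fed to Charlie are precisely $\Phi_{B\to C}^{\otimes n}(\hat{\sigma}_j^{(n)})$, which yields the identity \eqref{chiC}. Applying the hypothesis \eqref{lowerbS} to each term $S(\Phi_{B\to C}^{\otimes n}(\hat{\sigma}_j^{(n)}))$, and then Jensen's inequality for the convex function $f$ to pull the average over $j$ inside $f$, while bounding the first term of \eqref{chiC} by $n\,S(E)$ through the definition \eqref{limSE} under the energy constraint \eqref{cEbr}, I obtain $\frac{1}{n}\chi(\mathcal{E}_C^{(n)},\Phi_{A\to C}^{\otimes n}) \leq S(E) - f\!\left(\frac{1}{n}\sum_j q_j^{(n)} S(\hat{\sigma}_j^{(n)})\right)$. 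Finally, using that $f$ is increasing together with \eqref{chiBlim}, the argument of $f$ can be replaced by the smaller averaged Bob-quantity.

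The last step, and the main subtlety, is passing to the limit. Taking $\liminf_{n\to\infty}$ of both sides and inserting the achievability bounds \eqref{RBlim}--\eqref{RClim}, I must commute $\liminf$ with $f$. This is where continuity and monotonicity of $f$ are essential: because $f$ is continuous and increasing, $\liminf_n f(a_n) = f(\liminf_n a_n)$ for the bounded sequences $a_n$ at hand, and monotonicity lets me replace the averaged Bob-quantity inside $f$ by the larger lower bound $R_B$. Rearranging the resulting inequality $R_C \leq S(E) - f(R_B)$ gives exactly \eqref{RBC}. The care needed here is verifying that the relevant sequences stay bounded so the continuity step is legitimate, and that the energy constraint holds uniformly in $n$ so that $S(E)$ genuinely dominates the first term.
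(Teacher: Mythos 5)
Your proposal is correct and follows essentially the same route as the paper's proof: bounding Bob's Holevo quantity by entropy positivity, rewriting Charlie's Holevo quantity via degradability, applying the entropy lower bound together with Jensen's inequality and the energy constraint, and then commuting the $\liminf$ with $f$ using continuity and monotonicity. The only cosmetic difference is that you make the Jensen step and the boundedness caveat explicit, which the paper subsumes under ``the properties of $f$.''
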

From this Theorem it is clear that the exact determination of the capacity region of a degraded broadcast channel requires the determination of the optimal $f$ in \eqref{lowerbS}, i.e. of the minimum output entropy of the channel $\Phi_{B\to C}^{\otimes n}$ for fixed input entropy.
If $\Phi_{B\to C}$ is a gauge-covariant Gaussian channel, this leads to the following conjecture:
\begin{prop}[Constrained minimum output entropy conjecture]\label{CMOE}
Gaussian thermal input states minimize the output entropy of any gauge-covariant Gaussian quantum channel for fixed input entropy.
\end{prop}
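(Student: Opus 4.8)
The plan is to reduce the full conjecture, stated in Proposition~\ref{CMOE} for arbitrary gauge-covariant Gaussian channels, to its simplest building blocks and then attack those. By the Stinespring-type decomposition \eqref{Gstinespring} and the composition rules for attenuators and amplifiers, any one-mode gauge-covariant Gaussian channel can be written as a quantum-limited attenuator $\mathcal{E}_\lambda$ followed by a quantum-limited amplifier $\mathcal{A}_\kappa$ (adding classical noise is obtained in a limiting regime). Since the amplifier acts by a noiseless symplectic dilation on an enlarged system and the composition rule \eqref{compatt} lets us concatenate attenuators, the heart of the matter is the single quantum-limited attenuator $\mathcal{E}_\lambda$. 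First I would therefore prove the statement for $\mathcal{E}_\lambda$ and only afterwards lift it to general channels by showing that adjoining an amplifier or extra attenuator stages cannot destroy the optimality of thermal inputs.

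For the attenuator itself, my strategy rests on the majorization result already announced in the excerpt: the output of any one-mode gauge-covariant Gaussian channel generated by a \emph{passive} state majorizes the output generated by any other state with the same spectrum. By Remark~\ref{majS}, majorization implies the inequality \eqref{Trg} for the concave function $g(x)=-x\ln x$, hence a lower output entropy. Consequently, among all inputs with a fixed spectrum — and therefore a fixed input entropy — the entropy of $\mathcal{E}_\lambda(\hat\rho)$ is minimized when $\hat\rho$ is passive, i.e.\ diagonal in the number basis with decreasing eigenvalues. This collapses the variational problem from the full state manifold to the much smaller set of passive states, parametrized by a decreasing sequence of eigenvalues $\{p_n\}$. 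The remaining task is purely a problem about these discrete distributions: minimize $S(\mathcal{E}_\lambda(\hat\rho))$ over all passive $\hat\rho$ with $S(\hat\rho)$ fixed, and show the minimizer is the geometric (thermal) sequence $p_n\propto e^{-\beta n}$.

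To finish the passive-state optimization I would set up a Lagrange-multiplier / perturbation argument: at a minimizer, the first-order variation of the output entropy under spectrum-preserving-entropy perturbations of $\{p_n\}$ must vanish, which gives an Euler--Lagrange condition relating the output eigenvalues to the input eigenvalues through the attenuator's action on Fock states. The key structural input is that $\mathcal{E}_\lambda$ maps the thermal state of mean photon number $N$ to the thermal state of mean photon number $\lambda N$, so thermal inputs are a self-consistent fixed family; one must then verify that thermal states not only solve the stationarity equation but are global minimizers, presumably by a convexity or rearrangement argument on the sequence space, or by differentiating along the attenuation parameter and using the composition semigroup \eqref{compatt} to run a monotonicity argument in $\lambda$.

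I expect the main obstacle to be precisely this last global-optimality step over passive states. The majorization result does the qualitative work of eliminating all non-passive inputs, but reducing to passive states does not by itself single out the thermal ones; there remains a genuine infinite-dimensional optimization over decreasing sequences, where stationarity is necessary but not obviously sufficient, and where the interplay between the entropy constraint and the nonlinear action of $\mathcal{E}_\lambda$ on Fock-diagonal states is delicate. Controlling this — likely via a well-chosen concavity/scaling argument in $\lambda$ combined with the semigroup property, rather than a brute-force second-variation computation — is where the real difficulty lies, and it is the reason the result is proved in the excerpt only for the one-mode attenuator rather than for every gauge-covariant Gaussian channel at once.
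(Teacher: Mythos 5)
Your treatment of the one-mode quantum-limited attenuator coincides, in all essentials, with what the paper actually does in Chapter \ref{chepni}: the majorization theorem of Chapter \ref{majorization} (Theorem \ref{maintheorem}) reduces the problem to passive states; the remaining optimization over decreasing sequences is then attacked exactly as you suggest, by a Lagrange/KKT stationarity analysis of the entropy-production rate, and the resulting differential inequality is integrated using the semigroup law \eqref{compatt} together with a comparison theorem for first-order ODEs (this is the isoperimetric inequality, Theorem \ref{epnithmiso}). Your diagnosis that stationarity alone is not sufficient is also correct; the paper handles it by truncating to finite support, extracting convergent subsequences of the KKT points, and showing that the optimal rate converges to the value attained by the geometric distribution.

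The genuine gap is your first paragraph. Proposition \ref{CMOE} is stated in the paper as a \emph{conjecture}, and immediately after its statement the paper records that it has been proven only for the one-mode quantum-limited attenuator; the ``lifting'' you invoke is precisely the missing content, and it does not follow from the attenuator case. Writing an arbitrary gauge-covariant channel as $\mathcal{A}_\kappa\circ\mathcal{E}_\lambda$ (Section \ref{secatta}) and chaining lower bounds would require the constrained minimum output entropy statement for the quantum-limited amplifier $\mathcal{A}_\kappa$ as well: ``adjoining an amplifier cannot destroy the optimality of thermal inputs'' is not an argument, since the amplifier problem is an independent optimization of the same type and difficulty, and the Hilbert--Schmidt duality between amplifier and attenuator exploited in Chapter \ref{majorization} (Theorem \ref{Phidag}) transfers only the Fock-optimality/majorization property, not the entropic bound. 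Worse, for multimode channels --- which the Proposition also covers --- even your first step collapses: Section \ref{Lseccatt} exhibits a passive state and a non-passive state with the same spectrum whose outputs under the two-mode attenuator satisfy no majorization relation, so the reduction to passive states is simply false beyond one mode, and the paper instead poses the multimode case as requiring a (still open) multimode generalization of the isoperimetric inequality \eqref{epnilogs}. Your closing sentence also misidentifies the obstruction: the passive-state optimization \emph{is} solved for the one-mode attenuator; what blocks the full Proposition is the amplifier case and the multimode case, neither of which your proposal addresses.
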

Up to now, this conjecture has been proven only for the one-mode quantum-limited attenuator (see Chapter \ref{chepni}).

One may ask whether the inequality \eqref{lowerbS} for $n=1$ is sufficient to derive the bound \eqref{RBC} in the setting where Alice cannot entangle the input state among successive uses of the channel, i.e. when the pure states $\hat{\rho}^{(n)}_{ij}$ are product states.
This would be the case if the bounds \eqref{RBlim}, \eqref{RClim} were additive, i.e. if they did not require the regularization over $n$.
In this case determining them for $n=1$ would be sufficient.
The answer is negative.
Indeed, we can rewrite \eqref{RClim} as
\begin{equation}
R_C \leq\liminf_{n\to\infty} \frac{1}{n}\left(S\left(\Phi_{A\to C}^{\otimes n}\left(\hat{\omega}^{(n)}\right)\right)-\sum_{j\in J_n}q^{(n)}_j\;S\left(\Phi_{B\to C}^{\otimes n}\left(\hat{\sigma}^{(n)}_j\right)\right)\right)\;.
\end{equation}
The subadditivity of the entropy for the terms $S\left(\Phi_{B\to C}^{\otimes n}\left(\hat{\sigma}^{(n)}_j\right)\right)$ goes in the wrong direction.
Additivity would hold if $\hat{\sigma}^{(n)}_j$ were product states, but from \eqref{sigmajn} in general this is not the case.

\section{The Gaussian degraded broadcast channel}\label{broadcastg}
This Section is dedicated to the degraded Gaussian quantum broadcast channel of Ref.'s \cite{guha2007classicalproc,guha2007classical}.

Let us consider the $n$-mode Gaussian quantum systems $A$, $B$, $C$ and $E$, with ladder operators
\begin{equation}
\hat{a}_i\;,\quad\hat{b}_i\;,\quad\hat{c}_i\;,\quad\hat{e}_i\;,\qquad i=1,\ldots,n\;,
\end{equation}
respectively.
Let Alice, Bob and Charlie control the systems $A$, $B$ and $C$, respectively, and let $E$ be the system associated to the environment.
Let also $\hat{U}_\eta$ be the isometry
\begin{equation}
\hat{U}_\eta:\mathcal{H}_A\otimes\mathcal{H}_E\to\mathcal{H}_B\otimes\mathcal{H}_C\;,\qquad \frac{1}{2}\leq\eta\leq1
\end{equation}
that implements the linear mixing of the modes
\begin{eqnarray}
\hat{U}_\eta^\dag\;\hat{b}_i\;\hat{U}_\eta &=& \sqrt{\eta}\;\hat{a}_i+\sqrt{1-\eta}\;\hat{e}_i\nonumber\\
\hat{U}_\eta^\dag\;\hat{c}_i\;\hat{U}_\eta &=& \sqrt{1-\eta}\;\hat{a}_i-\sqrt{\eta}\;\hat{e}_i\;,\qquad i=1,\ldots,n\;,\qquad\frac{1}{2}\leq\eta\leq1\;.
\end{eqnarray}
Upon identifying $B$ with $A$ and $C$ with $E$, and flipping the sign of the $\hat{c}^i$, $\hat{U}_\eta$ is the mode-mixing operator of Eq. \eqref{mixinga}.
Indeed, this channel can be modeled with a beamsplitter with transmission coefficient $\eta$, where Alice sends a signal into the port $A$, that is mixed with the environmental noise coming from $E$ and split into transmitted and reflected parts, that are finally received by Bob and Charlie, respectively (see Fig. \ref{broadcastfig}).
For simplicity, we consider only the case in which the state of the environment is set to be the vacuum, i.e. $\hat{\rho}_E=|0\rangle\langle0|$.
In this case, the beamsplitter has the only action of splitting the signal, and it does not introduce any noise.

\begin{figure}[ht]
\includegraphics[width=\textwidth]{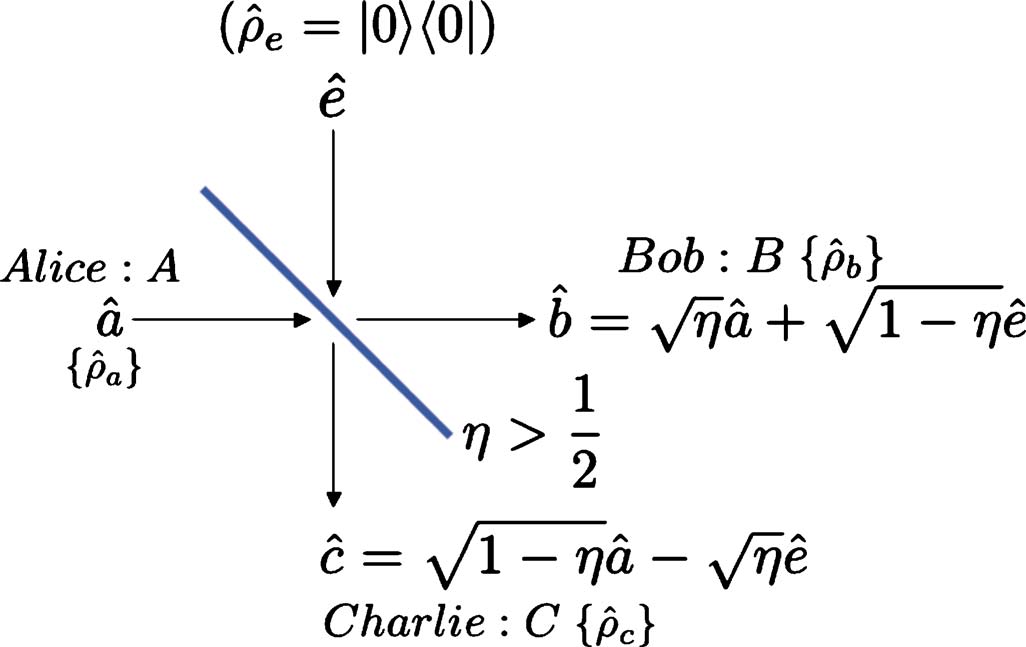}
\caption{Representation of the Gaussian broadcast channel.
Alice sends a signal into the port $A$ of the beamsplitter; Bob and Charlie receive the transmitted and the reflected signals at the ports $B$ and $C$, respectively. The port $E$ represents the action of the environment. In this case, the environment state is chosen to be the vacuum, i.e. the only action of the beamsplitter is splitting the signal into transmitted and reflected parts, without adding any noise.\label{broadcastfig}}
\end{figure}

In the notation of Section \ref{broadcasti}, the channel $\Phi_{A\to BC}$ is in this case the isometry given by the beamsplitter:
\begin{equation}
\Phi_{A\to BC}\left(\hat{\rho}\right)=\hat{U}_\eta\left(\hat{\rho}\otimes|0\rangle_E\langle0|\right)\hat{U}_\eta^\dag\;,
\end{equation}
and hence the reduced channels to $B$ and $C$ alone are given by the quantum-limited attenuators of \eqref{attdef}
\begin{equation}
\Phi_{A\to B}=\mathcal{E}_\eta\;,\qquad\Phi_{A\to C}=\mathcal{E}_{1-\eta}\;.
\end{equation}
Using the composition rule \eqref{compatt}, it is easy to see that this broadcast channel is degraded with
\begin{equation}
\Phi_{B\to C}=\mathcal{E}_\frac{1-\eta}{\eta}\;.
\end{equation}

\section{The capacity region of the Gaussian degraded broadcast channel}\label{broadcast}
We are now ready to apply Theorem \ref{broadcastthm} to the degraded broadcast channel described in Section \ref{broadcastg}.
The energy constraint will be of course imposed with respect to the photon-number Hamiltonian \eqref{Hosc}.

Gaussian thermal states maximize the entropy for fixed average energy, and for any $\hat{\rho}$ and $0\leq\lambda\leq1$
\begin{equation}
\mathrm{Tr}\left[\hat{H}\;\mathcal{E}_\lambda\left(\hat{\rho}\right)\right]=\lambda\;\mathrm{Tr}\left[\hat{H}\;\hat{\rho}\right]\;.
\end{equation}
It is then easy to see that the function $S(E)$ defined in \eqref{limSE} is
\begin{equation}
S(E)=g\left((1-\eta)E\right)\;,
\end{equation}
where $g(E)$ is the entropy of the one-mode Gaussian thermal state with average energy $E$ (see Eq. \eqref{defg} of Appendix \ref{appG}).

Determining the function $f$ in \eqref{lowerbS} requires now to determine the minimum output entropy of a quantum-limited attenuator $\mathcal{E}_\lambda^{\otimes n}$ for fixed input entropy.
Following the constrained minimum output entropy conjecture \ref{CMOE}, in Ref.'s \cite{guha2008capacity,guha2007classical,guha2007classicalproc} Gaussian thermal states are conjectured to minimize the output entropy, and then for any $0\leq\lambda\leq1$
\begin{equation}\label{EPnIatt}
g^{-1}\left(\frac{1}{n}\;S\left(\mathcal{E}_\lambda^{\otimes n}\left(\hat{\rho}\right)\right)\right)\geq \lambda\;g^{-1}\left(\frac{1}{n}\;S\left(\hat{\rho}\right)\right)\;.
\end{equation}
We prove this inequality in Chapter \ref{chepni} for $n=1$; its validity for $n\geq2$ is still an open problem.
Assuming \eqref{EPnIatt}, we can use
\begin{equation}
f(S)=g\left(\frac{1-\eta}{\eta}\;g^{-1}(S)\right)\;,
\end{equation}
that can easily shown to be continuous, increasing and convex.
The resulting bound on the capacity region would be
\begin{equation}\label{rfinale}
R_C+g\left(\frac{1-\eta}{\eta}\;g^{-1}(R_B)\right)\leq g\left((1-\eta)E\right)\;.
\end{equation}
This bound is optimal, in the sense that it can be shown \cite{guha2007classical,guha2007classicalproc} to be achieved by a Gaussian ensemble of coherent states.

The quantum Entropy Power Inequality that we prove in Chapter \ref{epi} provides instead the weaker bound
\begin{equation}\label{EPIatt}
e^{\frac{1}{n}\;S\left(\mathcal{E}_\lambda^{\otimes n}\left(\hat{\rho}\right)\right)}-1\geq \lambda\left(e^{\frac{1}{n}\;S\left(\hat{\rho}\right)}-1\right)\;,
\end{equation}
so that we can take
\begin{equation}
f(S)=\ln\left(\frac{1-\eta}{\eta}\left(e^S-1\right)+1\right)\;,
\end{equation}
that is still continuous, increasing and convex.
The resulting bound on the capacity region is
\begin{equation}\label{rfinale2}
R_C+\ln\left(\frac{1-\eta}{\eta}\left(e^{R_B}-1\right)+1\right)\leq g\left((1-\eta)E\right)\;.
\end{equation}
A comparison between Eq. \eqref{rfinale2} and the conjectured region \eqref{rfinale}  is shown in  Fig. \ref{fig1}: the discrepancy  being small.

\begin{figure}[ht]
\includegraphics[width=0.88\textwidth]{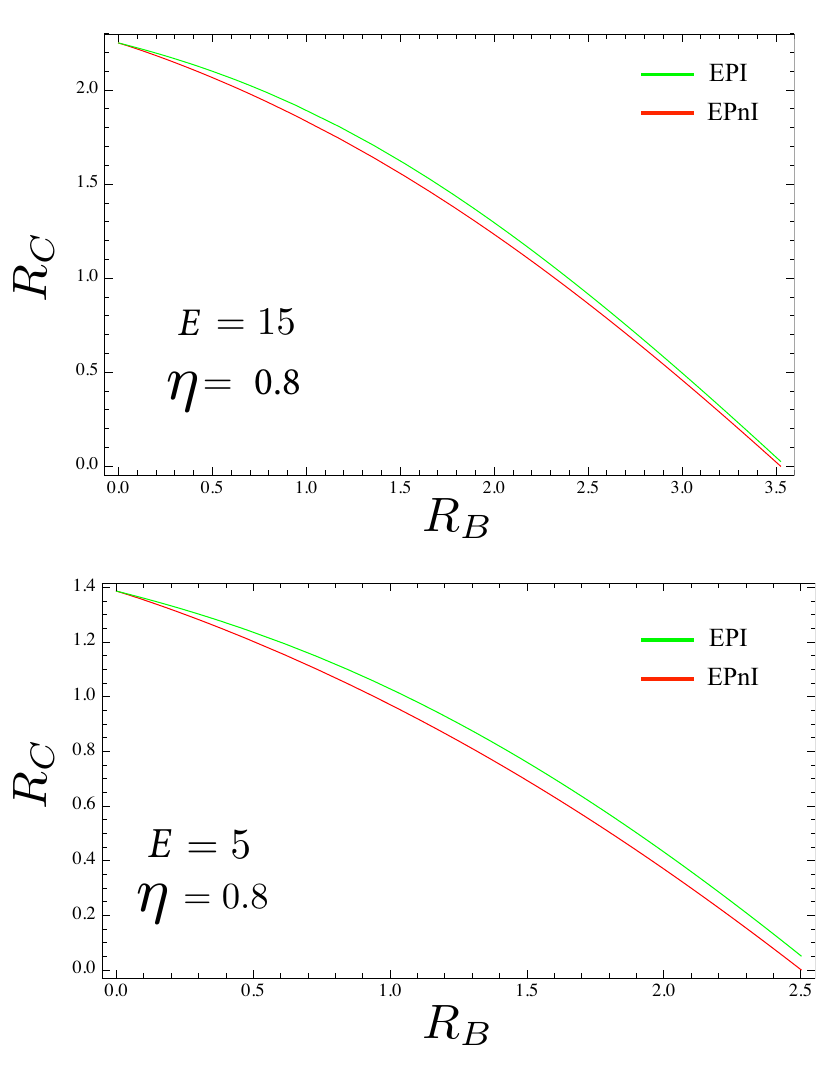}
\caption{Capacity region (expressed in nats per channel uses) for a broadcasting channel \cite{guha2007classical,guha2007classicalproc} in which the sender is communicating simultaneously with two receivers ($B$ and $C$) via a single bosonic mode which
splits at a beam splitter of transmissivity $\eta$  ($B$ receiving the transmitted signals, while $C$ receiving the reflected one), under input energy constraint which limits the
mean photon number of the input messages to be smaller than $E$. The region delimited by the red curve represents the achievable rates $R_B$ and $R_C$ which would apply if the (still unproven) EPnI conjecture \eqref{EPnIatt} held. The green curve instead is the bound one can derive via Eq. \eqref{EPIatt} from the EPI inequality we will prove in Chapter \ref{epi}.}
 \label{fig1}
\end{figure}

\chapter{The quantum Entropy Power Inequality}\label{epi}
In this Chapter we prove the quantum Entropy Power Inequality.
This inequality provides an almost optimal lower bound to the output von Neumann entropy of any linear combination of bosonic input modes in terms of their own entropies.
We have used it in Section \ref{broadcast} to obtain a upper bound to the capacity region of the degraded Gaussian broadcast channel, very close to the conjectured optimal one.

The Chapter is based on
\begin{enumerate}
\item[\cite{de2014generalization}] G.~De~Palma, A.~Mari, and V.~Giovannetti, ``A generalization of the entropy power inequality to bosonic quantum systems,'' \emph{Nature Photonics}, vol.~8, no.~12, pp. 958--964, 2014.\\ {\small\url{http://www.nature.com/nphoton/journal/v8/n12/full/nphoton.2014.252.html}}
\item[\cite{de2015multimode}] G.~De~Palma, A.~Mari, S.~Lloyd, and V.~Giovannetti, ``Multimode quantum entropy power inequality,'' \emph{Physical Review A}, vol.~91, no.~3, p. 032320, 2015.\\ {\small\url{http://journals.aps.org/pra/abstract/10.1103/PhysRevA.91.032320}}
\end{enumerate}

\section{Introduction}
In standard communication schemes, even if based on a digital encoding, the signals which are physically transmitted are intrinsically analogical in the sense that they can assume a continuous set of values. For example, the usual paradigm
is the transmission of information via amplitude and phase modulation of an electromagnetic field.
In general, a continuous signal with $k$ components
can be modeled by a random variable $\mathbf{X}$ with values in $\mathbb R^k$ associated with a probability measure
\begin{equation}
d\mu(\mathbf{x})=p(\mathbf{x})\;d^kx\;,\qquad\mathbf{x}\in\mathbb R^k\;.
\end{equation}
For example, a single mode of electromagnetic radiation is determined by a complex amplitude and therefore it can be classically described by a random variable $\mathbf{X}$ with $k=2$ real
components.
The Shannon differential entropy \cite{shannon2001mathematical,dembo1991information} of a general random variable $\mathbf{X}$ is defined as
\begin{equation}
H(\mathbf{X})=- \int_{\mathbb R^k} p(\mathbf{x}) \ln p(\mathbf{x})\; d^kx\;, \quad \mathbf{x} \in \mathbb R^k\;,
\end{equation}
and plays a fundamental role in information theory. Indeed depending on the context $H(\mathbf{X})$ quantifies the
noise affecting the signal or,
alternatively, the amount of information potentially encoded in the variable $\mathbf{X}$.

Now, let us assume to {\it mix} two random variables $\mathbf{X}_1$ and $\mathbf{X}_2$ and to get the new variable (see Fig.\ \ref{mixing})
\begin{equation}\label{classicalbs}
\mathbf{Y}=\sqrt{\lambda}\;\mathbf{X}_1 + \sqrt{1- \lambda}\;\mathbf{X}_2\;,\qquad 0\leq\lambda\leq1\;.
\end{equation}
\begin{figure}[ht]
\includegraphics[width=0.8\textwidth]{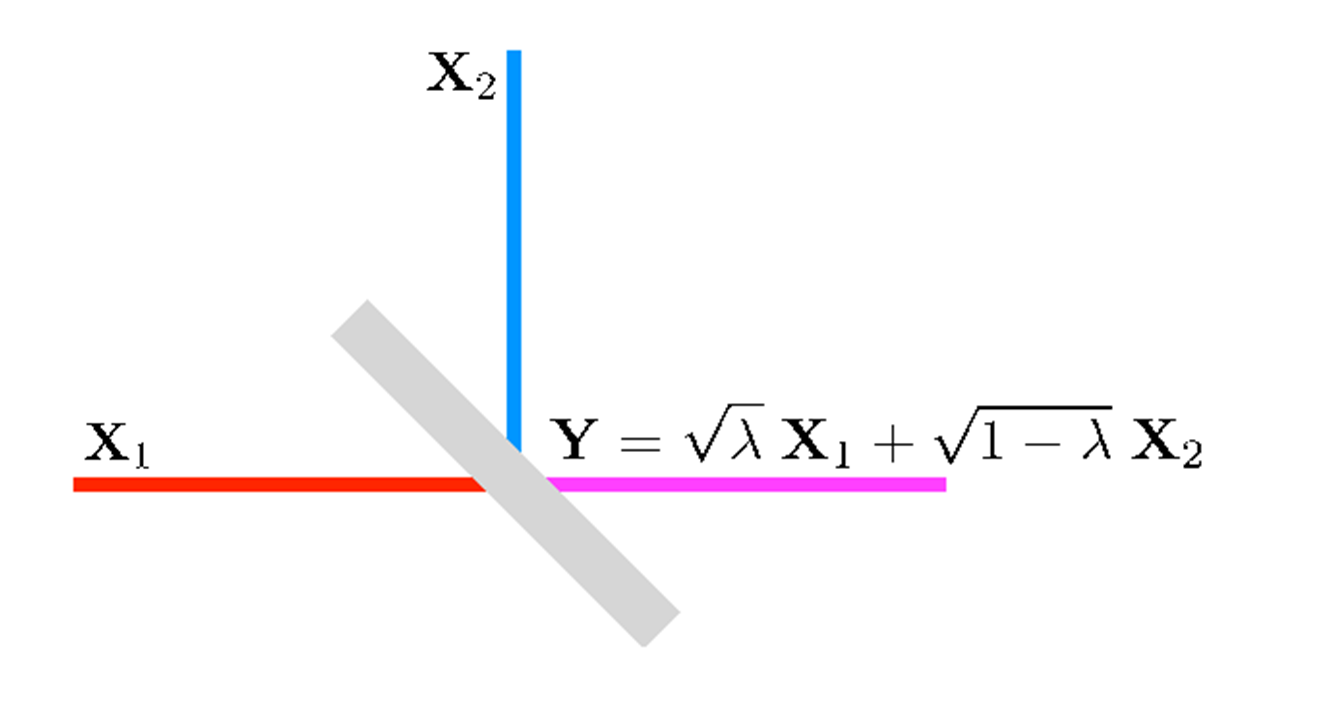}
\caption{Graphical representation of the coherent mixing of the two inputs $\mathbf{X}_1$ and $\mathbf{X}_2$. For the quantum mechanical analog the two input signals correspond to
 electromagnetic modes which  are coherently  mixed at a beamsplitter
of transmissivity $\lambda$. The entropy of the output signal is lower bounded by a function of the input entropies via the quantum Entropy Power Inequality defined in Eq. \eqref{EPI}. } \label{mixing}
\end{figure}
For example this is exactly the situation
in which two optical signals are physically mixed via a beamsplitter of transmissivity $\lambda$. What can be said about the entropy of the output variable $\mathbf{Y}$?
It can be shown that, if the inputs $\mathbf{X}_1$ and $\mathbf{X}_2$ are independent, the following {\it Entropy Power Inequality} (EPI) holds \cite{stam1959some,blachman1965convolution}
 \begin{equation} \label{cEPI}
e^{2 H_Y/k}\ge \lambda\; e^{2 H_1/k}+ (1- \lambda)\;e^{2 H_2/k}\;,
\end{equation}
stating that for fixed $H_1=H(\mathbf{X}_1)$, $H_2=H(\mathbf{X}_2)$, the output entropy $H_Y$ is minimized taking $\mathbf{X}_1$ and $\mathbf{X}_2$ Gaussian with proportional covariance matrices.
This is basically a lower bound on $H_Y$ and the name {\it entropy power} is motivated by the fact that if $p(\mathbf{x})$ is a product of $k$ equal isotropic Gaussians one has
\begin{equation}
\frac{1}{2 \pi e}e^{2 H(\mathbf{X})/k}=\sigma^2\;,
\end{equation}
where $\sigma^2$ is the variance of each Gaussian which is usually identified with the energy or {\it power} of the signal \cite{shannon2001mathematical}.
In the context of (classical) probability theory, several equivalent reformulations \cite{dembo1991information} and
generalizations \cite{verdu2006simple,rioul2011information,guo2006proof}  of Eq. \eqref{cEPI}  have been proposed,
whose proofs have recently renewed the interest in the field.
As a matter of fact,  these inequalities play a fundamental role in classical information theory, by providing computable bounds for the information capacities of various models of noisy channels \cite{shannon2001mathematical,bergmans1974simple,leung1978gaussian}.

The need for a quantum version of the EPI has arisen in the attempt of solving some fundamental problems in quantum communication theory. In particular the EPI has come into play when it has been realized that a suitable generalization to the quantum setting, called \emph{Entropy Photon number Inequality} (EPnI) (see \cite{guha2008entropy,guha2008capacity} and Section \ref{secEPnI}), would directly imply the solution of several optimization problems, including the determination of the classical capacity of Gaussian channels and of the capacity region of the bosonic broadcast channel \cite{guha2007classical,guha2007classicalproc} (see Sections \ref{gccapacity} and \ref{broadcast}).
Up to now the EPnI is still unproven and, while the classical capacity has been recently computed \cite{giovannetti2015solution,giovannetti2014ultimate} by proving the bosonic  minimum output entropy conjecture \cite{giovannetti2004minimum}, the exact capacity region of  the broadcast channel remains undetermined. In 2012 another quantum generalization of the EPI has been proposed, called \emph{quantum Entropy Power Inequality} (qEPI) \cite{konig2014entropy,konig2013limits}, together with its proof valid only for the $50:50$ beamsplitter corresponding to the case $\lambda=1/2$.
Our contribution is to show the validity of this inequality for any beamsplitter, and to extend it to the most general multimode scenario.

The qEPI proved in this Thesis directly gives tight bounds on several entropic quantities and hence constitutes a
potentially powerful tool which could  be used in quantum information theory in the same spirit
in which the classical EPI was instrumental in deriving important classical results like: a bound to the capacity of non-Gaussian channels \cite{shannon2001mathematical}, the convergence of the central limit theorem \cite{barron1986entropy},  the secrecy capacity of the Gaussian wiretap channel \cite{leung1978gaussian}, the capacity region of broadcast channels \cite{bergmans1974simple}, \emph{etc.}.  We consider some of the direct consequences of the qEPI and we hope to stimulate the research of other important implications in the field.

The multimode extension of the qEPI that we present applies to the context
where an arbitrary collection of independent input bosonic modes undergo to a scattering process which mixes them according to some linear coupling --- see Fig. \ref{modelepi} for a schematic representation of the model.
This new inequality permits to put bounds on the MOE inequality, still unproven for non gauge-covariant multimode channels, and then on the classical capacity of any quantum Gaussian channel.
Besides, our finding  can find potential applications in extending the single-mode results on the classical capacity region of the quantum bosonic broadcast channel to the Multiple-Input Multiple-Output setting (see e.g. Ref. \cite{cover2006elements}), providing upper bounds for the associated capacity regions.

\begin{figure}[ht]
\includegraphics[width=0.5\textwidth]{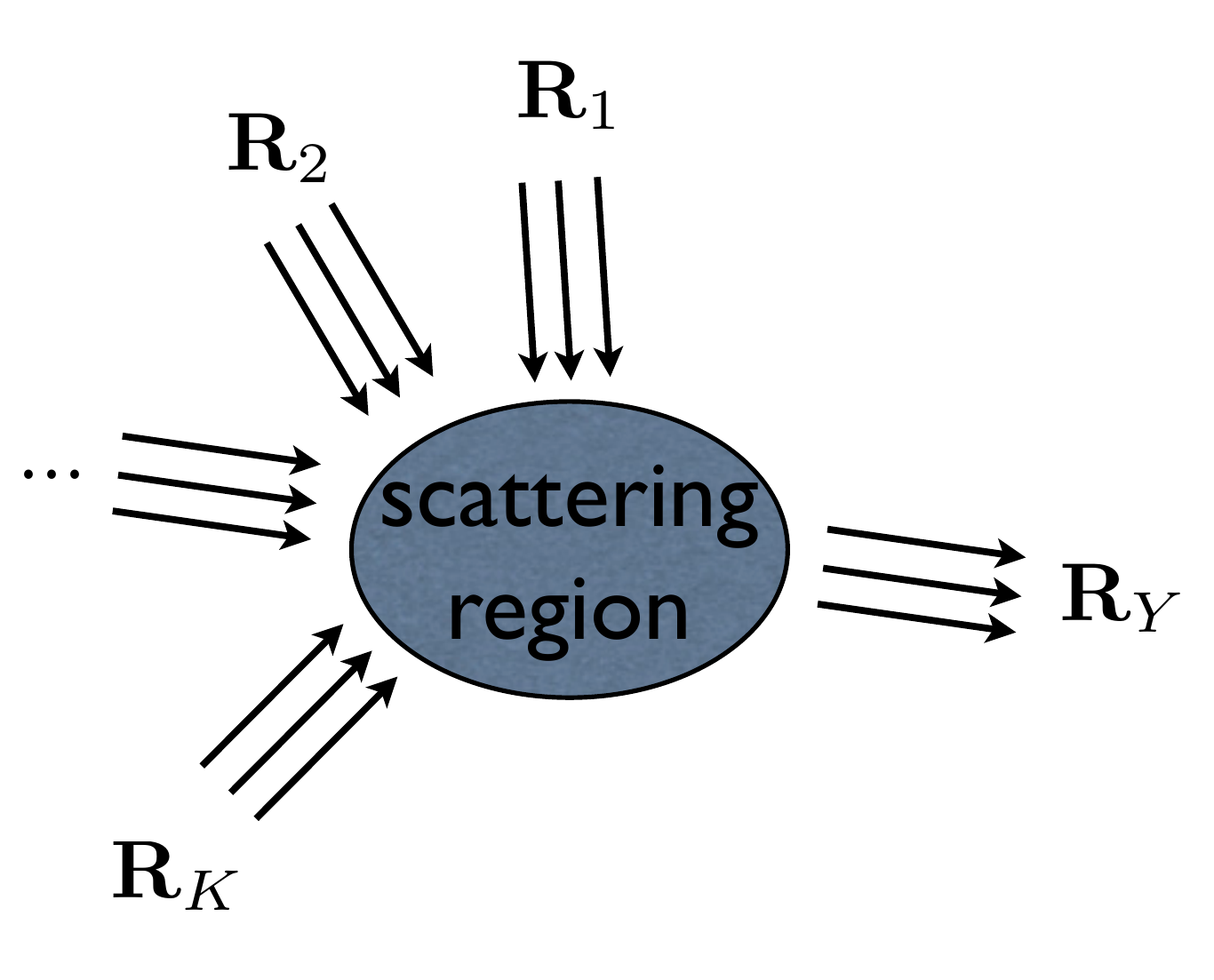}
\caption{Graphical representation of the scheme underlying the multimode qEPI \eqref{EPIrate}: it establishes a lower bound on the von Neumann entropy  emerging from the output port indicated by $\mathbf{R}_Y$ of a multimode scattering process that linearly couples
$K$ independent sets of bosonic input modes (each containing $n$ modes), initialized into factorized density matrices. }\label{modelepi}
\end{figure}

The Chapter is structured as follows.
In Section \ref{secproblem} we precisely define the linear combination of bosonic modes to which the quantum Entropy Power Inequality applies.
In Section \ref{secproofepi} we prove the quantum Entropy Power Inequality.
In Section \ref{secEPnI} we present the Entropy Photon-number Inequality, and in Section \ref{secGMOE} we link it to the generalized minimum output entropy conjecture necessary for determining the capacity of the degraded Gaussian broadcast channel.
Finally, we conclude in Section \ref{seccepi}.

\section{The problem}\label{secproblem}
We present directly the proof of the multimode version of the Entropy Power Inequality, since it includes the single-mode one as a particular case.

The multimode quantum generalization of the EPI we discuss in the present Thesis finds a classical analogous in the multi-variable version of the EPI \cite{shannon2001mathematical,stam1959some,blachman1965convolution,verdu2006simple,rioul2011information,guo2006proof}. The latter applies to a set
of $K$ independent random variables $\mathbf{X}_\alpha,\;\alpha=1,\ldots,K$, valued in $\mathbb{R}^m$ and collectively denoted by $\mathbf{X}$, with factorized probability densities
\begin{equation}
p_X(\mathbf{x})=p_1(\mathbf{x}_1)\ldots p_K(\mathbf{x}_K)\;,
\end{equation}
and with Shannon differential entropies \cite{shannon2001mathematical}
\begin{equation}
H_\alpha=-\left\langle\ln p_\alpha(\mathbf{x}_\alpha)\right\rangle\;,
\end{equation}
(the $\langle \cdots \rangle$ representing the average with respect to the associated probability distribution).
Defining hence the linear combination
\begin{equation}\label{Ycl}
\mathbf{Y}=M\,\mathbf{X}=\sum_{\alpha=1}^K M_{\alpha}\,\mathbf{X}_\alpha\;,
\end{equation}
where $M$ is an $m\times Km$ real matrix made by the $K$ blocks $M_\alpha$, each of dimension $m\times m$, the multi-variable EPI gives an (optimal) lower bound to the Shannon entropy $H_Y$ of $\mathbf{Y}$
\begin{equation}\label{cEPIm}
\exp[{2}H_Y/m]\geq\sum_{\alpha=1}^K|\det M_\alpha|^\frac{2}{m}\;\exp[{2}H_\alpha/m]\;,
\end{equation}
stating that it is minimized by Gaussian inputs.
In the original derivation \cite{shannon2001mathematical,stam1959some,blachman1965convolution,verdu2006simple,rioul2011information,guo2006proof} this inequality  is proved under the assumption that
all the $M_\alpha$ coincide with the identity matrix, i.e. for
\begin{equation}
\mathbf{Y}=\sum_{\alpha=1}^K\widetilde{\mathbf{X}}_\alpha\;.
\end{equation}
From this however Eq. \eqref{cEPIm} can be easily established choosing $\widetilde{\mathbf{X}}_\alpha=M_\alpha\mathbf{X}_\alpha$, and remembering that the entropy $\widetilde{H}_\alpha$ of $\widetilde{\mathbf{X}}_\alpha$ satisfies
\begin{equation}\label{rescS}
\widetilde{H}_\alpha=H_\alpha+\ln|\det M_\alpha|\;.
\end{equation}
It is also worth observing that for Gaussian variables the exponentials of
the entropies $H_\alpha$ and $H_\mathbf{Y}$
 are proportional to the determinant of the corresponding  covariance matrices, i.e.
\begin{equation}\label{Salpha}
H_\alpha=\frac{1}{2}\ln\det\left(\pi e\,\sigma_\alpha\right)
\end{equation}
and
\begin{equation}\label{SY}
H_\mathbf{Y}=\frac{1}{2}\ln\det\left(\pi e\,\sigma_\mathbf{Y}\right)\;,
\end{equation}
with
$$\sigma_\alpha=2\left\langle\Delta \mathbf{x}_\alpha \,\Delta \mathbf{x}_\alpha^T\right\rangle\;,\qquad\sigma_Y=2\left\langle\Delta \mathbf{y} \,\Delta \mathbf{y}^T\right\rangle$$
and
$$\Delta \mathbf{x}_\alpha = \mathbf{x}_\alpha - \left\langle \mathbf{x}_\alpha \right\rangle\;,\qquad\Delta \mathbf{y} = \mathbf{y} - \left\langle \mathbf{y} \right\rangle\;.$$
 Accordingly  in this special case  Eq. \eqref{cEPIm} can be seen as
an  instance  of the Minkowski's determinant inequality \cite{hardy1952inequalities}, stating that for any $K$ real $m\times m$ positive matrices
\begin{equation}\label{mink}
\left(\det\sum_{i=1}^K A_i\right)^\frac{1}{m}\geq\sum_{i=1}^K\left(\det A_i\right)^\frac{1}{m}\;,
\end{equation}
with equality iff all the $A_i$ are proportional.
Eq. \eqref{cEPIm} indeed follows from applying \eqref{mink}, \eqref{Salpha} and \eqref{SY} to the identity
\begin{equation}\label{cmphi}
\sigma_Y=\sum_{\alpha=1}^KM_\alpha\,\sigma_\alpha\,M_\alpha^T\;,
\end{equation}
and it saturates under the assumption that the matrices entering the sum are  all proportional to a given matrix $\sigma$, i.e.
\begin{eqnarray} \label{propcondition}
A_\alpha:=M_\alpha\,\sigma_\alpha\,M_\alpha^T = c_\alpha\,\sigma\;,
\end{eqnarray}
with $c_\alpha$ being arbitrary (real) coefficients.

In the quantum setting the random variables get replaced by $n=m/2$ bosonic modes (for each mode there are two quadratures, $Q$ and $P$), and instead of probability distributions over $\mathbb{R}^{2n}$, we have the quantum density matrices $\hat{\rho}_\alpha$ on the Hilbert space $L^2(\mathbb{R}^n)$ (see Sections \ref{GQSy} and \ref{GQSa} for the details).
For each $\alpha$, let $\hat{\mathbf{R}}_\alpha$ be the column vector (see \eqref{Rdef}) that collectively denotes all the quadratures of the $\alpha$-th subsystem.

Let us then consider totally factorized input states
\begin{equation}
\hat{\rho}_X=\bigotimes_{\alpha=1}^K\hat{\rho}_\alpha\;,
\end{equation}
where $\hat{\rho}_\alpha$ is the density matrix of the $\alpha$-th input, with associated characteristic function $\chi_\alpha(\mathbf{k}_\alpha)$ (see Section \ref{chia} in Appendix \ref{appG}).
The characteristic function of the global input state is then
\begin{equation}
\chi_X(\mathbf{k}_X)=\prod_{\alpha=1}^K\chi_\alpha(\mathbf{k}_\alpha)\;,
\end{equation}
with
\begin{equation}
\mathbf{k}_X=\left(\mathbf{k}_1,\;\ldots,\;\mathbf{k}_K\right)\;.
\end{equation}
The quantum analog of \eqref{Ycl} is defined imposing the same transformation law on the characteristic functions:
\begin{equation}\label{channelbs}
\chi_Y(\mathbf{k}_Y)=\chi_X\left(\mathbf{k}_XM\right)=\prod_{\alpha=1}^K\chi_\alpha\left(\mathbf{k}M_\alpha\right)\;,
\end{equation}
where as before, $M$ is a $2n\times 2Kn$ real matrix made by the $2n\times 2n$ square blocks $M_\alpha$.
The channel defined in \eqref{channelbs} can be recovered from the general expression of a Gaussian channel in Eq. \eqref{channelchi} of Appendix \ref{appG} putting $\alpha=0$ and $\mathbf{y}=\mathbf{0}$.
The complete-positivity condition \eqref{CP} imposes the constraint
\begin{equation}\label{condDelta}
M\;\Delta_X\;M^T=\sum_{\alpha=1}^K M_\alpha\Delta_\alpha M_\alpha^T=\Delta_Y\;,
\end{equation}
where $\Delta_Y$ is the symplectic form associated to the output $Y$, while
\begin{equation}
\Delta_X=\bigoplus_{\alpha=1}^K\Delta_\alpha
\end{equation}
is the form associated to the input $X$.

The channel \eqref{channelbs} can be implemented by an isometry (see \cite{holevo2013quantum} and Section \ref{QGCa})
\begin{equation}
\hat{U}:\mathcal{H}_X\longrightarrow\mathcal{H}_Y\otimes\mathcal{H}_Z
\end{equation}
between the input Hilbert space $\mathcal{H}_X$ and the tensor product of the output Hilbert space $\mathcal{H}_Y$ with an ancilla Hilbert space $\mathcal{H}_Z$:
\begin{equation}\label{channelU}
\hat{\rho}_Y=\Phi\left(\hat{\rho}_X\right)=\mathrm{Tr}_Z\left(\hat{U}\;\hat{\rho}_X\;\hat{U}^\dag\right)\;,
\end{equation}
where $\hat{U}$ satisfies
\begin{equation}\label{Yq}
\hat{U}^\dag\;\hat{\mathbf{R}}_Y\;\hat{U}=M\,\hat{\mathbf{R}}_X=\sum_{\alpha=1}^K M_{\alpha}\,\hat{\mathbf{R}}_\alpha\;.
\end{equation}
With this representation, the CP condition \eqref{condDelta} can be easily shown to arise from the preservation of the canonical commutation relations between the quadratures.
The isometry $\hat{U}$ in \eqref{channelU} does not necessarily conserve energy, i.e. it can contain active elements, so that even if the input $\hat{\rho}_X$ is the vacuum on all its $K$ modes, the output $\hat{\rho}_Y$ can be thermal with a nonzero temperature.

For $K=2$, the beamsplitter \cite{walls2012quantum} of parameter $0\leq\lambda\leq1$ is easily recovered with
\begin{equation}\label{bsdef}
M_1=\sqrt{\lambda}\;\mathbb{I}_{2n}\;,\qquad M_2=\sqrt{1-\lambda}\;\mathbb{I}_{2n}\;.
\end{equation}
In this case, upon identifying the output Hilbert space $\mathcal{H}_Y$ with the Hilbert space of the first input $\mathcal{H}_1$, the isometry $\hat{U}$ implements the same mode mixing of \eqref{mixinga}, i.e.
\begin{equation}\label{mixingbs}
\hat{U}=\exp\left[\arctan\sqrt{\frac{1-\lambda}{\lambda}}\;\left(\hat{\mathbf{a}}_1^\dag\hat{\mathbf{a}}_2-\hat{\mathbf{a}}_2^\dag\hat{\mathbf{a}}_1\right)\right]\;,
\end{equation}
where $\hat{\mathbf{a}}_\alpha$ is the vector of the ladder operators (see Eq. \eqref{Iladder}) associated to the $\alpha$-th subsystem.
Eq. \eqref{Yq} becomes then of the same form as \eqref{classicalbs}:
\begin{eqnarray}\label{beamsplitter}
\hat{U}^\dag\;\hat{\mathbf{Y}}\;\hat{U} &=& \sqrt{\lambda}\;\hat{\mathbf{X}}_1+\sqrt{1-\lambda}\;\hat{\mathbf{X}}_2\;,\nonumber\\
\hat{U}^\dag\;\hat{\mathbf{a}}_Y\;\hat{U} &=& \sqrt{\lambda}\;\hat{\mathbf{a}}_1+\sqrt{1-\lambda}\;\hat{\mathbf{a}}_2\;,
\end{eqnarray}
i.e. the output quadratures are a weighted sum of the corresponding input quadratures.

To get the quantum amplifier \cite{walls2012quantum} (see also Section \ref{secattampl}) of parameter $\kappa\geq1$, we must take instead
\begin{equation}
M_1=\sqrt{\kappa}\;\mathbb{I}_{2n}\;,\qquad M_2=\sqrt{\kappa-1}\;T_{2n}\;,
\end{equation}
where $T_{2n}$ is the $n$-mode time-reversal
\begin{equation}
T_{2n}=\bigoplus_{k=1}^n\left(
                            \begin{array}{cc}
                              1 & 0 \\
                              0 & -1 \\
                            \end{array}
                          \right)\;.
\end{equation}
In this case, with the same identification between $\mathcal{H}_Y$ and $\mathcal{H}_{1}$, the unitary $\hat{U}$ implements a squeezing \cite{barnett2002methods}, i.e.
\begin{equation}
\hat{U}=\exp\left[\mathrm{arctanh}\sqrt{\frac{\kappa-1}{\kappa}}\;\left(\hat{\mathbf{a}}_1^\dag\left(\hat{\mathbf{a}}_2^\dag\right)^T-\hat{\mathbf{a}}_1^T\hat{\mathbf{a}}_2\right)\right]\;,
\end{equation}
and acts on the ladder operators as
\begin{equation}\label{squeezedl}
\hat{U}^\dag\;\hat{\mathbf{a}}_Y\;\hat{U} = \sqrt{\kappa}\;\hat{\mathbf{a}}_1+\sqrt{\kappa-1}\;\hat{\mathbf{a}}_2^\dag\;.
\end{equation}
We notice in Eq. \eqref{squeezedl} the dagger on $\hat{\mathbf{a}}_2^\dag$, signaling that $\hat{U}$ does not conserve energy, and therefore it requires active elements to be implemented in the laboratory.

We can now state the multimode qEPI: the von Neumann entropies of the inputs $S_\alpha$ and the output $S_Y$ satisfy the analog of \eqref{cEPIm}
\begin{equation}
\exp[{S_Y}/{n}]\geq\sum_{\alpha=1}^K \lambda_\alpha\;\exp[{S_\alpha}/{n}]\;,\label{EPIm}
\end{equation}
where we have defined
\begin{equation}
\lambda_\alpha:=\left|\det M_\alpha\right|^\frac{1}{n}\;.
\end{equation}
For a beamsplitter of parameter $0\leq\lambda\leq1$ with inputs $\mathbf{X}_1$ and $\mathbf{X}_2$ and output $\mathbf{Y}$, \eqref{EPIm} reduces to
\begin{equation}\label{EPI}
\exp[S_Y/n]\geq\lambda\exp[S_1/n]+(1-\lambda)\exp[S_2/n]\;.
\end{equation}
For a quantum amplifier of parameter $\kappa\geq1$, we have instead
\begin{equation}
\exp[S_Y/n]\geq\kappa\exp[S_1/n]+(\kappa-1)\exp[S_2/n]\;.
\end{equation}

\section{The proof}\label{secproofepi}
The proof of Eq. \eqref{EPIm} proceeds along the same lines of its classical counterpart \cite{blachman1965convolution}.
We expect that the  qEPI should be saturated by quantum Gaussian states  with high entropy and whose covariance matrices $\sigma_\alpha$
fulfill the condition \eqref{propcondition} (the high entropy limit being necessary to ensure that the associated quantum Gaussian states behave as classical Gaussian probability distributions).
Let us hence suppose to apply a transformation on the input modes of the system which depends on a real parameter $t$ that plays the role of an effective temporal coordinate,  and which is constructed in  such a way that, starting from $t=0$ from the input state $\hat{\rho}_X$  it will drive the modes towards such optimal  Gaussian configurations in the asymptotic limit  $t\rightarrow\infty$ --- see Section \ref{evol}.
Accordingly for each $t\geq 0$ we will have an associated value for the entropies $S_\alpha$ and $S_Y$ which, if the qEPI is correct, should still fulfill the bound \eqref{EPIm}.
To verify this it is useful to put the qEPI \eqref{EPIm} in the rate form
\begin{equation}
\frac{\sum_{\alpha=1}^K \lambda_\alpha\;\exp[{S_\alpha}/{n}]}{\exp[{S_Y}/{n}]}\leq1\label{EPIrate}\;.
\end{equation}
We will then study  the  left-hand-side of Eq. \eqref{EPIrate}  showing that its
parametric  derivative is always positive  (see Section \ref{sec:para}) and that that for $t\rightarrow \infty$ it tends to 1  (see Section \ref{appscaling}).

\subsection{The Liouvillian}
The parametric evolution suitable for the proof will be given in terms of a quantum generalization of the classical Laplacian, that we define in this Section.

Let $\gamma\geq0$ be a positive semi-definite real matrix.
We define the Liouvillian
\begin{equation}\label{liouville}
\mathcal{L}_\gamma\left(\hat{X}\right):=\frac{1}{4}\gamma^{ij}\left.\frac{\partial^2}{\partial x^i\partial x^j}\hat{D}(\mathbf{x})\;\hat{X}\;{\hat{D}(\mathbf{x})}^\dag\right|_{\mathbf{x}=\mathbf{0}}\;,
\end{equation}
where the sum over the repeated indices is implicit.
$\mathcal{L}_\gamma$ is linear in $\gamma$, commutes with hermitian conjugation:
\begin{equation}
\mathcal{L}_\gamma\left(\hat{X}^\dag\right)=\left(\mathcal{L}_\gamma\left(\hat{X}\right)\right)^\dag\;,
\end{equation}
and is self-adjoint with respect to the Hilbert-Schmidt product:
\begin{equation}\label{Ldag}
\mathrm{Tr}\left(\mathcal{L}_\gamma\left(\hat{X}\right)\;\hat{Y}\right)=\mathrm{Tr}\left(\hat{X}\;\mathcal{L}_\gamma\left(\hat{Y}\right)\right)\;.
\end{equation}
Taking the characteristic function of both sides of \eqref{liouville}, and recalling Eq. \eqref{chiDD} of Appendix \ref{appG}, we get
\begin{equation}\label{chiL}
\chi_{\mathcal{L}_\gamma\left(\hat{X}\right)}(\mathbf{k})=-\frac{1}{4}\mathbf{k}\,\gamma\,\mathbf{k}^T\;\chi_{\hat{X}}(\mathbf{k})\;.
\end{equation}
If we formally define the exponential of $\mathcal{L}_\gamma$, Eq. \eqref{chiL} can be easily integrated into
\begin{equation}\label{chiaddn}
\chi_{e^{\mathcal{L}(\gamma)}\left(\hat{X}\right)}(\mathbf{k})=e^{-\frac{1}{4}\mathbf{k}\,\gamma\,\mathbf{k}^T}\;\chi_{\hat{X}}(\mathbf{k})\;,
\end{equation}
and $e^{\mathcal{L}(\gamma)}$ can be easily recognized as the additive-noise channel that can be recovered from \eqref{channelchi} with $M=\mathbb{I}$, $\alpha=\gamma$ and $\mathbf{y}=\mathbf{0}$.
This channel adds to the state noise with covariance matrix $\gamma$, acts on the moments as
\begin{eqnarray}
\sigma&\mapsto&\sigma+\gamma\label{addnoise}\\
\mathbf{r}&\mapsto&\mathbf{r}\;,
\end{eqnarray}
and hence on the Gaussian state $\hat{\rho}_G(\sigma,\,\mathbf{x})$ as
\begin{equation}\label{Lgauss}
e^{\mathcal{L}(\gamma)}\left(\hat{\rho}_G(\sigma,\,\mathbf{x})\right)=\hat{\rho}_G(\sigma+\gamma,\,\mathbf{x})\;.
\end{equation}

\subsection{Useful properties}
In the proof, we will need some properties of the channel defined in \eqref{channelbs} and of the Liouvillian \eqref{liouville}.

From \eqref{channelbs} and \eqref{chiDD} the action of $\Phi$ on translations follows:
\begin{equation}\label{PhiD}
\Phi\left(\hat{D}(\mathbf{x})\;\hat{X}\;{\hat{D}(\mathbf{x})}^\dag\right)=\hat{D}(M\mathbf{x})\;\Phi\left(\hat{X}\right)\;{\hat{D}(M\mathbf{x})}^\dag\;.
\end{equation}
We can now use \eqref{PhiD} and \eqref{liouville} to compute the action of $\Phi$ on $\mathcal{L}_\gamma$:
\begin{equation}
\Phi\left(\mathcal{L}(\gamma)\left(\hat{X}\right)\right) = \mathcal{L}\left(M\gamma M^T\right)\left(\Phi\left(\hat{X}\right)\right)\;,\label{LPhi}
\end{equation}
and hence
\begin{equation}\label{LePhi}
\Phi\left(e^{\mathcal{L}(\gamma)}\left(\hat{\rho}\right)\right)=e^{\mathcal{L}\left(M\gamma M^T\right)}\left(\Phi\left(\hat{\rho}\right)\right)\;.
\end{equation}

\subsection{The evolution}\label{evol}
The idea of the proof is to evolve the inputs (and consequently the output) toward Gaussian states with very high entropies and with covariance matrices satisfying \eqref{propcondition}.
For this purpose, we use the additive-noise channel that we have just defined in \eqref{chiaddn}.
Let us fix a positive matrix $\gamma$, and define for each $\alpha$
\begin{equation}\label{Malpha}
\gamma_\alpha:=\left\{
           \begin{array}{ll}
             \lambda_\alpha\;M_\alpha^{-1}\;\gamma\;M_\alpha^{-T}\qquad & \text{if}\;\lambda_\alpha>0 \\
             0 & \text{if}\;\lambda_\alpha=0 \\
           \end{array}
         \right.\;,
\end{equation}
such that
\begin{equation}
M_\alpha\;\gamma_\alpha\;M_\alpha^T=\lambda_\alpha\;\gamma\;.
\end{equation}
Let $t$ be the time of the evolution.
We apply to the $\alpha$-th input the additive-noise channel $e^{t_\alpha(t)\;\mathcal{L}(\gamma_\alpha)}$, with a time-dependent coefficient $t_\alpha(t)$ to be determined:
\begin{equation}\label{rhoit}
\hat{\rho}_\alpha(t):=e^{t_\alpha(t)\;\mathcal{L}(\gamma_\alpha)}(\hat{\rho}_\alpha)\;.
\end{equation}
We notice that, if some $\lambda_\alpha=0$, we are not evolving at all the corresponding state $\hat{\rho}_\alpha$.

From \eqref{LePhi} and \eqref{Malpha}, the evolution \eqref{rhoit} of the input mode induces the temporal
evolution  of the output modes
\begin{equation}\label{rhoyt}
\Phi\left(\left(\bigotimes_{\alpha=1}^Ke^{t_\alpha(t)\mathcal{L}(\gamma_\alpha)}\right)\hat{\rho}\right)=e^{t_Y(t)\mathcal{L}\left(\gamma\right)}\left(\Phi\left(\hat{\rho}\right)\right)\;,
\end{equation}
where
\begin{equation}\label{tY}
t_Y(t)=\sum_{\alpha=1}^K\lambda_\alpha t_\alpha(t)\;.
\end{equation}

From \eqref{chiaddn}, the characteristic functions evolve as
\begin{equation}
\chi_\alpha(\mathbf{k})(t)=e^{-\frac{1}{4}t_\alpha(t)\,\mathbf{k}\,\gamma_\alpha\,\mathbf{k}^T}\;\chi_\alpha(\mathbf{k})(0)\;,
\end{equation}
so that if $\lambda_\alpha>0$ and $t_\alpha(t)\to\infty$ for $t\to\infty$, the evolved state $\hat{\rho}_\alpha(t)$ is asymptotic to the Gaussian state $\hat{\rho}_G\left(t_\alpha(t)\,\gamma_\alpha\right)$, that satisfies \eqref{propcondition} with $c_\alpha(t)=\lambda_\alpha t_\alpha(t)$ and $\sigma=\gamma$ for any choice of $t_\alpha(t)$.

However, for initial Gaussian states that almost saturate the EPI, i.e.
\begin{equation}\label{rhotest}
\hat{\rho}_\alpha(0)=\hat{\rho}_G(\sigma_\alpha)
\end{equation}
with the $\sigma_\alpha$ having large symplectic eigenvalues and satisfying \eqref{propcondition}, the evolved $\sigma_\alpha(t)$ must still almost saturate the EPI and then satisfy \eqref{propcondition} also for finite $t$, i.e. the time-evolved version of the $A_\alpha$
\begin{equation}\label{propt}
A_\alpha(t):=M_\alpha\;\sigma_\alpha(t)\;M_\alpha^T=c_\alpha\;\sigma+\lambda_\alpha\;t_\alpha(t)\;\gamma
\end{equation}
must remain proportional (we have used \eqref{Lgauss} to get the time evolution).
For this purpose, we use the freedom in the choice of $t_\alpha(t)$, defining them as the solutions of
\begin{eqnarray}
    \frac{d}{dt}t_\alpha(t) &=& \mu_\alpha(t)\nonumber\\
    t_\alpha(0)&=&0\;,\label{tidot}
\end{eqnarray}
where we have defined
\begin{equation}\label{mualpha}
\mu_\alpha(t):=e^{S\left(\hat{\rho}_\alpha(t)\right)/n}=\exp\left(\frac{1}{n}S\left(e^{t_\alpha(t)\;\mathcal{L}\left(\gamma_\alpha\right)}\left(\hat{\rho}_\alpha\right)\right)\right)\;.
\end{equation}
This is a first-order differential equation for the functions $t_\alpha(t)$, and under reasonable assumptions on the regularity of the function
\begin{equation}
t_\alpha\mapsto S\left(e^{t_\alpha\;\mathcal{L}\left(\gamma_\alpha\right)}\left(\hat{\rho}_\alpha\right)\right)
\end{equation}
always admits a unique solution.
Let us check that the evolution defined by \eqref{tidot} has the required properties.
First, since quantum entropies are nonnegative we have
\begin{equation}
\frac{d}{dt}t_\alpha(t)\geq1\;,
\end{equation}
so that
\begin{equation}
\lim_{t\to\infty}t_\alpha(t)=\infty\;.
\end{equation}
The differential equation \eqref{tidot} allows us to define equivalently the $A_\alpha(t)$ as the solutions of
\begin{eqnarray}
\frac{d}{dt}A_\alpha(t)&=&\lambda_\alpha\;\mu_\alpha(t)\;\gamma\nonumber\\
A_\alpha(0)&=&c_\alpha\;\sigma\label{Aidot}\;,
\end{eqnarray}
where we have used \eqref{propt}.
Using \eqref{entas} to approximate the entropy of a Gaussian state with a large covariance matrix, the coefficients $\mu_\alpha(t)$ are given by
\begin{equation}
\mu_\alpha(t)\simeq\frac{e}{2}\left(\det\sigma_\alpha(t)\right)^\frac{1}{2n}\;.
\end{equation}
Let us put into \eqref{Aidot} the ansatz of proportional $A_\alpha(t)$:
\begin{equation}
A_\alpha(t):=c_\alpha\;\sigma(t)\;.\label{Ai}
\end{equation}
Then, the system of $K$ differential equations in \eqref{Aidot} reduces to only one equation for $\sigma(t)$:
\begin{eqnarray}
\frac{d}{dt}\sigma(t)&=&\frac{e}{2}\left(\det\sigma(t)\right)^\frac{1}{2n}\;\gamma\\
\sigma(0)&=&\sigma\;,
\end{eqnarray}
that always admits a solution.
Therefore as required, if the covariance matrices $\sigma_\alpha$ fulfill \eqref{propcondition} at $t=0$, they will fulfill it at any time.

\subsection{Relative entropy}
In order to prove the positivity of the time derivative of the right-hand side of \eqref{EPIrate} along the evolution described in Section \ref{evol}, we will link the time derivative of the entropy of a given quantum state to the relative entropy of this state with respect to a displaced version of it.

The relative entropy of a state $\hat{\rho}$ with respect to a state $\hat{\sigma}$ is defined as
\begin{equation}
S\left(\hat{\rho}\|\hat{\sigma}\right)=\mathrm{Tr}\left[\hat{\rho}\left(\ln\hat{\rho}-\ln\hat{\sigma}\right)\right]\;.
\end{equation}
The probability of confusing $n$ copies of $\hat{\sigma}$ with $n$ copies of $\hat{\rho}$ scales as $\exp\left(-n\,S\left(\hat{\rho}\|\hat{\sigma}\right)\right)$ in the large $n$ limit \cite{vedral2002role}, so the relative entropy provides a (not symmetric) measure of the distinguishability  of two states.

Since any physical operation on states cannot increase distinguishability, the relative entropy decreases under the application of any quantum channel $\Phi$:
\begin{equation}\label{dataprintro}
S\left(\Phi\left(\hat{\rho}\right)\|\Phi\left(\hat{\sigma}\right)\right)\leq S\left(\hat{\rho}\|\hat{\sigma}\right)\;.
\end{equation}
This is called the data-processing inequality.
Many proof of it are known, but none of them is simple.
They can be found in \cite{wilde2013quantum,nielsen2010quantum,holevo2013quantum}.

\subsection{Quantum Fisher information}\label{secQFI}
The proof of the positivity of the time-derivative of the rate in \eqref{EPIrate} requires the introduction of a quantity that has an importance by its own: the quantum Fisher information.

We define the quantum Fisher information matrix $J$ of a state $\hat{\rho}$ (see \cite{konig2014entropy,de2014generalization} for the single mode and \cite{de2015multimode} for the multimode case) as the Hessian with respect to $\mathbf{x}$ of the relative entropy \cite{holevo2013quantum}
\begin{equation}\label{relent}
S\left(\hat{\rho}\|\hat{\sigma}\right)=\mathrm{Tr}\left[\hat{\rho}\left(\ln\hat{\rho}-\ln\hat{\sigma}\right)\right]
\end{equation}
between the original state $\hat{\rho}$ and its version displaced by $\mathbf{x}$:
\begin{equation}\label{fisher}
J_{ij}(\hat{\rho}):=\left.\frac{\partial^2}{\partial x^i\partial x^j} S\left(\hat{\rho}\left\|\hat{D}(\mathbf{x})\;\hat{\rho}\;{\hat{D}(\mathbf{x})}^\dag\right.\right)\right|_{\mathbf{x}=\mathbf{0}}\;.
\end{equation}
The quantum Fisher information generalizes the classical Fisher information of \cite{stam1959some}, and measures how much the displaced state $\hat{D}(\mathbf{x})\hat{\rho}{\hat{D}(\mathbf{x})}^\dag$ is distinguishable from the original one.
For the comparison with the quantum Fisher information of the quantum Cram\'er-Rao bound \cite{helstrom1967minimum,paris2009quantum,cramer2016mathematical}, see Section \ref{appQFI} of Appendix \ref{appG}.

We can get a more explicit expression plugging into \eqref{fisher} the definition \eqref{relent} of the relative entropy:
\begin{equation}\label{Jlog}
J_{ij}(\hat{\rho})=-\mathrm{Tr}\left(\hat{\rho}\;\left.\frac{\partial^2}{\partial x^i\partial x^j}\hat{D}(\mathbf{x})\;\ln\hat{\rho}\;{\hat{D}(\mathbf{x})}^\dag\right|_{\mathbf{x}=\mathbf{0}}\right)\;.
\end{equation}

\subsection{De Bruijn identity}
The quantum Fisher information is intimately linked to the derivative of the entropy of a state under the evolution induced by the Liouvillian \eqref{liouville}.
Let us consider indeed an infinitesimal variation
\begin{equation}
d\hat{\rho}=\mathcal{L}_{d\gamma}(\hat{\rho})\;.
\end{equation}
Then, using \eqref{Ldag} and comparing with \eqref{Jlog} the variation of the entropy of $\hat{\rho}$ is
\begin{equation}\label{debr}
dS(\hat{\rho})=\frac{1}{4}d\gamma^{ij}\;J_{ij}(\hat{\rho})\;.
\end{equation}
From its classical analog, this equation takes the name of de Bruijn identity.

\subsection{Stam inequality}
The positivity the derivative of the rate \eqref{EPIrate} will follow from an inequality on the quantum Fisher information, called quantum Stam inequality from its classical analog \cite{stam1959some,kagan2008some}.

The core of its proof is the data-processing inequality for the relative entropy \cite{holevo2013quantum}, stating that it decreases under the action of any completely-positive trace-preserving map:
\begin{eqnarray}
S\left(\hat{\rho}\left\|\hat{D}(\mathbf{x})\;\hat{\rho}\;{\hat{D}(\mathbf{x})}^\dag\right.\right) &\geq& S\left(\Phi\left(\hat{\rho}\right)\left\|\Phi\left(\hat{D}(\mathbf{x})\;\hat{\rho}\;{\hat{D}(\mathbf{x})}^\dag\right)\right.\right)=\nonumber\\
&=& S\left(\Phi\left(\hat{\rho}\right)\left\|\hat{D}\left(M\mathbf{x}\right)\;\Phi\left(\hat{\rho}\right)\;{\hat{D}\left(M\mathbf{x}\right)}^\dag\right.\right)\;,\label{dataproc}
\end{eqnarray}
where we have used \eqref{PhiD}.
Since both members of \eqref{dataproc} are always nonnegative and vanish for $\mathbf{x}=\mathbf{0}$, this point is a minimum for both, and the inequality translates to the Hessians:
\begin{align}
\left.\frac{\partial^2}{\partial x^i\partial x^j}S\left(\hat{\rho}\left\|\hat{D}(\mathbf{x})\;\hat{\rho}\;{\hat{D}(\mathbf{x})}^\dag\right.\right)\right|_{\mathbf{x}=\mathbf{0}} &\geq&  \left.\frac{\partial^2}{\partial x^i\partial x^j}S\left(\Phi\left(\hat{\rho}\right)\left\|\hat{D}\left(M\mathbf{x}\right)\;\Phi\left(\hat{\rho}\right)\;{\hat{D}\left(M\mathbf{x}\right)}^\dag\right.\right)\right|_{\mathbf{x}=\mathbf{0}}=\nonumber\\
&=& \left.M^k_{\phantom{k}i}M^l_{\phantom{l}j}\;\frac{\partial^2}{\partial y^k\partial y^l}S\left(\Phi\left(\hat{\rho}\right)\left\|\hat{D}(\mathbf{y})\;\Phi\left(\hat{\rho}\right)\;{\hat{D}(\mathbf{y})}^\dag\right.\right)\right|_{\mathbf{y}=\mathbf{0}}\;,\label{hessin}
\end{align}
where the inequalities are meant for the whole matrices (and not for their entries), and we have made the change of variable
\begin{equation}
\mathbf{y}=M\mathbf{x}\;.
\end{equation}
Recalling the definition of Fisher information matrix \eqref{fisher}, inequality \eqref{hessin} becomes
\begin{equation}\label{stamlin}
J\left(\hat{\rho}\right)\geq M^T\;J\left(\Phi\left(\hat{\rho}\right)\right)\;M\;.
\end{equation}
Inequality \eqref{stamlin} is equivalent to
\begin{equation}\label{stam}
{J\left(\Phi\left(\hat{\rho}\right)\right)}^{-1}\geq M\;{J\left(\hat{\rho}\right)}^{-1}\;M^T\;.
\end{equation}
To see this, it is sufficient to choose bases in $X$ and $Y$ such that
\begin{eqnarray}
J\left(\hat{\rho}\right) &=& \mathbb{I}_X\\
J\left(\Phi\left(\hat{\rho}\right)\right) &=& \mathbb{I}_Y\;.
\end{eqnarray}
Then, \eqref{stamlin} and \eqref{stam} read
\begin{eqnarray}
\mathbb{I}_X&\geq& M^T\;M\\
\mathbb{I}_Y&\geq& M\;M^T\;,
\end{eqnarray}
that are equivalent since $M^T\,M$ and $M\,M^T$ have the same spectrum, except for the multiplicity of the eigenvalue zero.

Inequality \eqref{stam} is called the quantum Stam inequality.
In the particular case of the beamsplitter, it has already appeared in \cite{konig2014entropy,de2014generalization}, while in the multimode scenario it is an original result of this Thesis.

\subsection{Positivity of the time-derivative}\label{sec:para}
We have now all the instruments to prove that the time-derivative of the rate \eqref{EPIrate} is positive.
Recalling the definition \eqref{mualpha}, we can write the inequality to be proved as
\begin{equation}\label{murate}
\frac{d}{dt}\frac{\sum_{\alpha=1}^K\lambda_\alpha\;\mu_\alpha(t)}{\mu_Y(t)}\geq0\;.
\end{equation}
Let us now define the functions
\begin{eqnarray}
J_\alpha(t) &:=& J\left(\hat{\rho}_\alpha(t)\right)\\
J_Y(t) &:=& J\left(\Phi\left(\hat{\rho}(t)\right)\right)\;.
\end{eqnarray}
Combining the de Bruijn identity \eqref{debr} and the definition of the time evolution in \eqref{rhoit} and \eqref{tidot}, the time-derivative of the entropy of each input can be linked to its quantum Fisher information matrix:
\begin{equation}
\frac{d}{dt}S\left(\hat{\rho}_\alpha(t)\right) = \frac{\mu_\alpha(t)}{4}\;\gamma_\alpha^{ij}\;J^\alpha_{ij}(t)\;,
\end{equation}
and consequently
\begin{equation}
\frac{d}{dt}\mu_\alpha(t) = \frac{{\mu_\alpha(t)}^2}{4n}\;\gamma_\alpha^{ij}\;J^\alpha_{ij}(t)\;.
\end{equation}
With also \eqref{rhoyt} and \eqref{tY}, the analog for the output is
\begin{equation}
\frac{d}{dt}S\left(\Phi\left(\hat{\rho}(t)\right)\right)=\frac{\sum_{\alpha=1}^K\lambda_\alpha\;\mu_\alpha(t)}{4}\;\gamma^{ij}\;J^Y_{ij}(t)\;,
\end{equation}
and
\begin{equation}
\frac{d}{dt}\mu_Y(t)=\mu_Y(t)\;\frac{\sum_{\alpha=1}^K\lambda_\alpha\;\mu_\alpha(t)}{4n}\;\gamma^{ij}\;J^Y_{ij}(t)\;.
\end{equation}
Then \eqref{murate} becomes
\begin{equation}\label{ineqmu}
\left(\sum_{\alpha=1}^K\lambda_\alpha\;\mu_\alpha(t)\right)^2\gamma^{ij}\;J^Y_{ij}(t)\leq \sum_{\alpha=1}^K\lambda_\alpha\;{\mu_\alpha(t)}^2\;\gamma^{ij}_\alpha\;J^\alpha_{ij}(t)\;.
\end{equation}
To prove \eqref{ineqmu}, we use the quantum Stam inequality in the form \eqref{stamlin}, that for our $K$-partite input reads
\begin{equation}\label{stammatrix}
\left(
  \begin{array}{ccc}
    J_1(t) &  &  \\
     & \ddots &  \\
     &  & J_K(t) \\
  \end{array}
\right)\geq
\left(
  \begin{array}{c}
    M_1^T \\
    \vdots \\
    M_K^T \\
  \end{array}
\right)\;J_Y(t)\;
\left(
  \begin{array}{ccc}
    M_1 & \hdots & M_K \\
  \end{array}
\right)\;.
\end{equation}
Multiplying on the left by $\left(\mu_1(t)\lambda_1M_1^{-T}\;\ldots\;\mu_K(t)\lambda_KM_K^{-T}\right)$ and on the right by its transpose, we get
\begin{equation}
\sum_{\alpha=1}^K\lambda_\alpha^2\;{\mu_\alpha(t)}^2\;M_\alpha^{T}\;J_\alpha(t)\;M_\alpha^{-1}\geq\left(\sum_{\alpha=1}^K\lambda_\alpha\;\mu_\alpha(t)\right)^2J_Y(t)\;,
\end{equation}
and \eqref{ineqmu} follows upon taking the trace with $\gamma$ and recalling \eqref{Malpha}.

\subsection{Asymptotic scaling}\label{appscaling}
In this Section we show that the rate \eqref{EPIrate} tends to $1$ for $t\to\infty$, concluding then the proof of the EPI.

For this purpose, we first prove that for any strictly positive matrix $\gamma>0$ the entropy of $e^{t\;\mathcal{L}(\gamma)}(\hat{\rho})$ for $t\to\infty$ is asymptotically
\begin{equation}\label{scaling}
S\left(e^{t\;\mathcal{L}(\gamma)}(\hat{\rho})\right)=n\ln\frac{t}{2}+\frac{1}{2}\ln\det\gamma+n+\mathcal{O}\left(\frac{1}{t}\right)\;.
\end{equation}

\subsubsection{A lower bound for the entropy}
A lower bound for the entropy follows on expressing the state $\hat{\rho}$ in terms of its generalized  Husimi function (see Section \ref{sec:Husimi} of Appendix \ref{appG}).

We define
\begin{equation}
t_1=\frac{1}{\nu_{\min}}\;,
\end{equation}
where $\nu_{\min}$ is the minimum symplectic eigenvalue of $\gamma$. We have then
\begin{equation}
t_1\,\gamma\geq\pm i\,\Delta\;,
\end{equation}
and we can exploit the generalized Husimi representation \eqref{husimi} associated to the matrix  $t_1\,\gamma$:
\begin{equation}
\hat{\rho}=\int Q_{\hat{\rho}}(\mathbf{x})\;\hat{\rho}_G(-t_1\gamma,\,\mathbf{x})\;d^{2n}x\;.
\end{equation}
For the linearity of the evolution \eqref{liouville}, we can take the super-operator $e^{t\;\mathcal{L}(\gamma)}$ inside the integral, and remembering \eqref{Lgauss} we get
\begin{equation}
e^{t\;\mathcal{L}(\gamma)}(\hat{\rho})=\int Q_{\hat{\rho}}(\mathbf{x})\;\hat{\rho}_G((t-t_1)\gamma,\,\mathbf{x})\;d^{2n}x\;.
\end{equation}
For $t\geq2t_1$, we have
\begin{equation}
(t-t_1)\gamma\geq t_1\,\gamma\geq\pm i\Delta\;,
\end{equation}
i.e. $\hat{\rho}_G((t-t_1)\gamma)$ is a proper quantum state.
Since $Q_{\hat{\rho}}(\mathbf{x})$ is a probability distribution, the concavity of the von Neumann entropy implies
\begin{equation}\label{lowerboundS}
S\left(e^{t\;\mathcal{L}(\gamma)}(\hat{\rho})\right)\geq S\left(\hat{\rho}_G((t-t_1)\gamma)\right)= n\ln\frac{t}{2}+\frac{1}{2}\ln\det\gamma+n+\mathcal{O}\left(\frac{1}{t}\right)\;,
\end{equation}
where we have used Eq. \eqref{entas} of Appendix \ref{appG}.

\subsubsection{An upper bound for the entropy}
Given a state $\hat{\rho}$, let $\hat{\rho}_G$ be the centered Gaussian state with the same covariance matrix. It is then possible to prove \cite{wolf2006extremality} that $S\left(\hat{\rho}_G\right)\geq S\left(\hat{\rho}\right)$.
Let $\sigma$ be the covariance matrix of $\hat{\rho}$, respectively. Then, \eqref{addnoise} implies
\begin{equation}
\left(e^{t\;\mathcal{L}(\gamma)}\hat{\rho}\right)_G=\hat{\rho}_G(\sigma+t\gamma)\;,
\end{equation}
so that
\begin{equation}
S\left(e^{t\;\mathcal{L}(\gamma)}\hat{\rho}\right)\leq S\left(\hat{\rho}_G(\sigma+t\gamma)\right)\;.
\end{equation}
Let $t_2$ be the maximum eigenvalue of $\sigma\,\gamma^{-1}$ ($\gamma$ and $\sigma$ are strictly positive, so $\gamma^{-1}$ exists and $t_2$ is finite and strictly positive). Then,
\begin{equation}
\sigma\leq t_2\,\gamma
\end{equation}
(to see this, it is sufficient to choose a basis in which $\gamma=\mathbb{I}_{2n}$).
We remind that given two covariance matrices $\sigma'\leq\sigma''$, the Gaussian state $\hat{\rho}_{\sigma''}$ can be obtained applying an additive noise channel to $\hat{\rho}_{\sigma'}$. Since such channel is unital, it always increases the entropy, so  we have $S(\hat{\rho}_{\sigma'})\leq S(\hat{\rho}_{\sigma''})$. Applying this to $\sigma+t\gamma\leq(t_2+t)\gamma$, we get again
\begin{equation}
S\left(\hat{\rho}_G(\sigma+t\gamma)\right)\leq S\left(\hat{\rho}_G((t_2+t)\gamma)\right)=n\ln\frac{t}{2}+\frac{1}{2}\ln\det\gamma+n+\mathcal{O}\left(\frac{1}{t}\right)\;,\label{upper}
\end{equation}
where we have used \eqref{entas} again.

\subsubsection{Scaling of the rate}
From Section \ref{evol} we can see that for our evolutions if $M_\alpha$ is invertible $\det\gamma_\alpha=1$, so
\begin{equation}
\mu_\alpha(t)=\frac{e}{2}t_\alpha(t)+\mathcal{O}\left(1\right)\;,
\end{equation}
and similarly
\begin{equation}
\mu_Y(t)=\frac{e}{2}t_Y(t)+\mathcal{O}\left(1\right)\;.
\end{equation}
Replacing this into \eqref{EPIrate}, and remembering that if $M_\alpha$ is not invertible, then $\lambda_\alpha=0$ and the corresponding terms vanish,
from  \eqref{tY}  it easily follows that such quantity tends to $1$ in the $t\rightarrow \infty$ limit.

\section{The Entropy Photon-number Inequality}\label{secEPnI}
The quantum EPI \eqref{EPI} is not saturated by Gaussian states with proportional covariance matrices, and then it is not sufficient to determine the minimum entropy of $\mathbf{Y}$ for fixed entropies of $\mathbf{X}_1$ and $\mathbf{X}_2$.
However, as in the classical case, Gaussian states with proportional covariance matrices are conjectured to be the solution to this optimization problem.
This belief has led to conjecture the {\it Entropy Photon number Inequality} (EPnI) \cite{guha2008capacity,guha2008entropy}:
\begin{equation} \label{EPnI}
N(\hat{\rho}_Y)  \overset{?}{\ge}  \lambda\; N(\hat{\rho}_1) + (1- \lambda)\;N(\hat{\rho}_2)\;.
\end{equation}
Here
\begin{equation}
g(N)=(N+1)\ln(N+1)-N\ln N
\end{equation}
is the entropy of a single mode thermal Gaussian state with mean photon number $N$ (see Section \ref{entrGa} of Appendix \ref{appG}), and
\begin{equation}
N(\hat{\rho})= g^{-1}\left(S(\hat{\rho})/n\right)
\end{equation}
is the mean photon number per mode of an $n$-mode thermal Gaussian state with the same entropy of $\hat{\rho}$.
Indeed, the EPnI states exactly that fixing the input entropies $S_1$, $S_2$, the output entropy $S_Y$ is minimum when the inputs are Gaussian with proportional covariance matrices.
Since the qEPI \eqref{EPI} is \emph{not} saturated by Gaussian states with proportional covariance matrices (unless they have also the same entropy), it is weaker than (and it is actually implied by) the EPnI \eqref{EPnI}, so our proof of qEPI does not imply the EPnI, which still remains an open conjecture.

We show in Section \ref{EPnIG} of Appendix \ref{appG} that the EPnI \eqref{EPnI} holds for any couple of $n$-mode Gaussian states.
In Chapter \ref{chepni} we prove the EPnI \eqref{EPnI} in the one-mode case when the second input is chosen to be the vacuum.
Moreover, as we are going to show, the validity of the qEPI imposes a very tight bound (of the order of $0.132$) on the maximum allowed violation of the EPnI \eqref{EPnI}.

The map  $e^{S(\hat{\rho})/n} \mapsto N(\hat{\rho})$ from the entropy power to the entropy photon-number is the function $f(x)\equiv g^{-1}(\ln(x))$ defined on the interval $[1,\infty]$. Unfortunately it is convex and we cannot
obtain the EPnI \eqref{EPnI} from \eqref{EPI}. Fortunately however, $f(x)$ is {\it not too convex} and is well approximated by a linear function. It is easy to show indeed (see Section \ref{proofsec}) that
\begin{equation}
f(x) =-1/2 + x/e +\delta(x)\;,
\end{equation}
 where
\begin{equation}
0\leq\delta(x)\le \delta(1)=1/2-1/e \simeq 0.132\;.
\end{equation}
This directly implies that the entropy photon number inequality is valid up to such a small error,
\begin{equation}
N(\rho_Y)-\lambda\;N(\rho_1) - (1- \lambda)\;N(\rho_2) \ge 1/e -1/2 \;.\label{epnib}
\end{equation}

We also conjecture that the EPnI can be extended to the most general multimode scenario, i.e. that upon fixing the entropy of each input of the channel defined in \eqref{channelbs}, the output entropy is still minimized by Gaussian input states.
This multimode EPnI would be an improvement of our EPI \eqref{EPIm}, since it would imply it.
However, it cannot be written with elementary functions as an inequality on the output entropy, since the optimization over all the Gaussian input states cannot be performed analytically.

\subsection{Proof of the bound}\label{proofsec}
We want to evaluate how close is our qEPI \eqref{EPI} to the EPnI \eqref{EPnI} and prove \eqref{epnib}.
The qEPI \eqref{EPI} implies for the output entropy photon number
\begin{equation}
N_Y\geq g^{-1}\left(\ln\left(\lambda\;e^{g(N_1)}+(1-\lambda)\;e^{g(N_2)}\right)\right)\label{nc}\;.
\end{equation}
The EPnI \eqref{EPnI} is stronger than the EPI \eqref{EPI}, and in fact
\begin{equation}
g^{-1}\left(\ln\left(\lambda\; e^{g(N_1)}+(1-\lambda)\;e^{g(N_2)}\right)\right)\leq \lambda\;N_1+(1-\lambda)\;N_2\;,
\end{equation}
since the function $g^{-1}\left(\ln\left(x\right)\right)$ is increasing and convex.
Since $e^{g(N)}$ for $N\to\infty$ goes like
\begin{equation}
e^{g(N)}=e\left(N+\frac{1}{2}\right)+\mathcal{O}\left(\frac{1}{N}\right)\;,
\end{equation}
we have for $x\to\infty$
\begin{equation}
g^{-1}\left(\ln x\right)=\frac{x}{e}-\frac{1}{2}+\mathcal{O}\left(\frac{1}{x}\right)\;.
\end{equation}
If we define
\begin{equation}
\delta(x)\equiv g^{-1}\left(\ln x\right)-\frac{x}{e}+\frac{1}{2}\;,
\end{equation}
$\delta$ is convex, decreasing and
\begin{equation}
\lim_{x\to\infty}\delta(x)=0\;.
\end{equation}
We can also evaluate
\begin{equation}
\delta(1)=\frac{1}{2}-\frac{1}{e}\;,
\end{equation}
and for any $x_1,\;x_2\geq1$ we have
\begin{equation}
\delta(\lambda\;x_1+(1-\lambda)\;x_2)\geq\lambda\;\delta(x_1)+(1-\lambda)\;\delta(x_2)-\left(\frac{1}{2}-\frac{1}{e}\right)\;.
\end{equation}
Since
\begin{align}
&g^{-1}\left(\ln\left(\lambda\;x_1+(1-\lambda)\;x_2\right)\right)-\lambda\;g^{-1}\left(\ln x_1\right)-(1-\lambda)\;g^{-1}\left(\ln x_2\right)=\nonumber\\
&=\delta\left(\lambda\;x_1+(1-\lambda)\;x_2\right)-\lambda\;\delta(x_1)-(1-\lambda)\;\delta(x_2)\;,
\end{align}
in the case $x_1=e^{S_1}$, $x_2=e^{S_2}$ we get
\begin{align}
&g^{-1}\left(\ln\left(\lambda\;e^{S_1}+(1-\lambda)\;e^{S_2}\right)\right)-\lambda\;N_1-(1-\lambda)\;N_2=\nonumber\\
&=\delta\left(\lambda\;e^{S_1}+(1-\lambda)\;e^{S_2}\right)-\lambda\;\delta(e^{S_1})-(1-\lambda)\;\delta(e^{S_2})\;,
\end{align}
and we can conclude from \eqref{nc} that
\begin{eqnarray}
N_Y & \geq & \lambda\;N_1+(1-\lambda)\;N_2+\delta(\lambda\;e^{S_1}+(1-\lambda)\;e^{S_2})-\lambda\;\delta(e^{S_1})-(1-\lambda)\;\delta(e^{S_2})\geq\nonumber\\
&\geq& \lambda\;N_1+(1-\lambda)\;N_2-\left(\frac{1}{2}-\frac{1}{e}\right)\;,
\end{eqnarray}
so the \eqref{EPnI} violation can be at most
\begin{equation}
\frac{1}{2}-\frac{1}{e}\simeq 0.132\;.
\end{equation}

\section{The constrained minimum output entropy conjecture}\label{secGMOE}
Recently the so called {\it minimum output entropy conjecture} has been proved (see \cite{giovannetti2015solution,mari2014quantum,giovannetti2014ultimate} and Section \ref{gccapacity}).
It claims that the output entropy of a gauge-covariant Gaussian channel is minimum when the input is the vacuum.
A large class of physically relevant gauge-covariant Gaussian channels can be constructed with the beamsplitter defined in Eqs. \eqref{mixingbs} and \eqref{beamsplitter} taking as second input $\hat{\rho}_2$ a fixed Gaussian thermal state.
In this setup, the EPnI implies that the entropy of the output $S_Y$ is minimum when the first input $\hat{\rho}_1$ is the vacuum, i.e. the MOE conjecture.
A more general problem \cite{guha2008entropy,guha2008capacity} is to determine what is the minimum output entropy $S_Y$ with the constraint that the entropy of the first input $S_1$ is fixed to some value $\bar{S}>0$.
For simplicity, we concentrate on the one-mode case and we fix the second input to be the vacuum:
\begin{equation}
\hat{\rho}_2=|0\rangle\langle0|\;.
\end{equation}
It is easy to show that the EPnI \eqref{EPnI} implies that the minimum of $S_Y$ is achieved by the Gaussian thermal state with entropy $\bar{S}$, corresponding to an output entropy of
\begin{equation}\label{epni0}
S_Y=g\left(\lambda\;g^{-1}\left(\bar S\right)\right)\;.
\end{equation}
We will prove \eqref{epni0} in Chapter \ref{chepni}.
Here we use our qEPI to obtain a tight lower bound
on $S_Y$.  The bound follows directly from \eqref{EPI} for $S_2=0$ and can be expressed as
\begin{equation} \label{bound}
S_Y \ge \ln \left[\lambda\; e^{\bar S}+ (1-\lambda)\right]\;.
\end{equation}
The RHS of  \eqref{bound} is extremely close to the conjectured minimum $g\left(\lambda g^{-1}\left(\bar S\right)\right)$. Indeed the error between the two quantities
\begin{equation}
\Delta(\bar S,\lambda)=g\left(\lambda  g^{-1}\left(\bar S\right)\right)- \ln \left[\lambda e^{\bar S}+ (1-\lambda)\right]
\end{equation}
is bounded by $\sim 0.107$ and moreover it decays to zero in large part of the parameter space $(\bar S,\lambda)$ (see Fig. \ref{delta}).

\begin{figure}[ht]
\includegraphics[width=\textwidth]{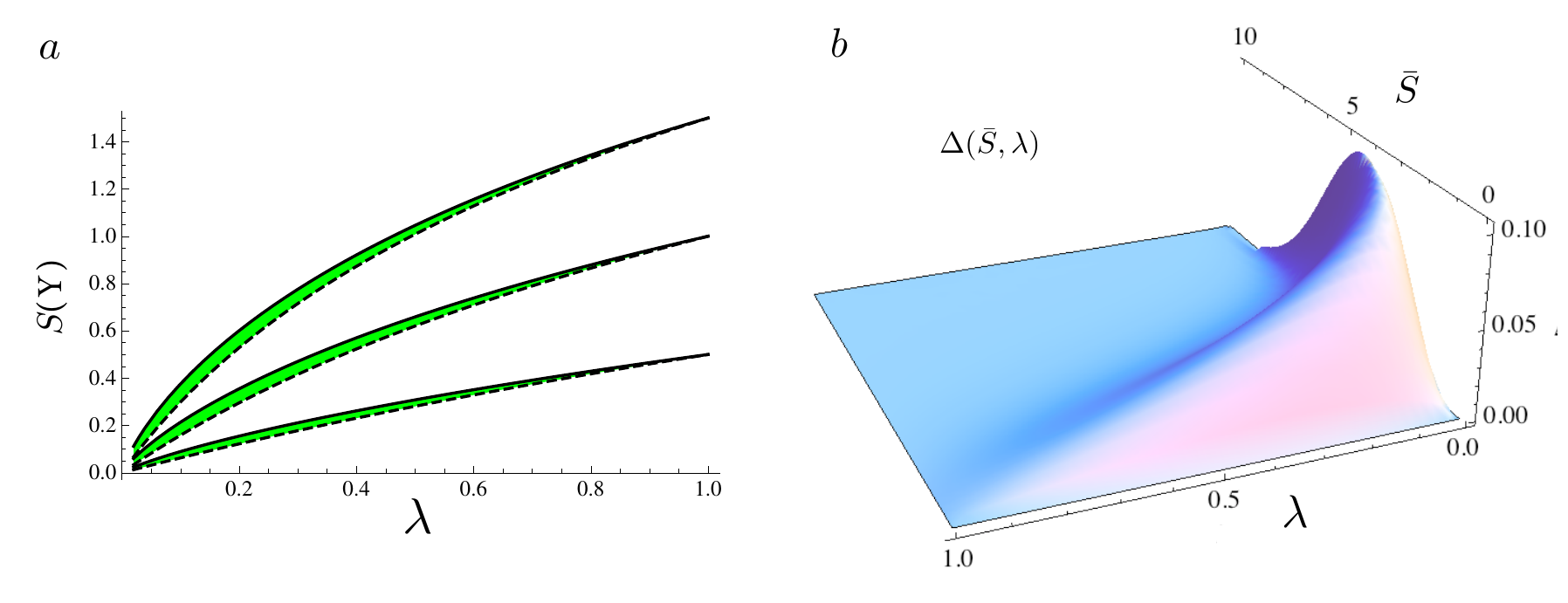}
\caption{{\bf a} Plot of the output entropies as functions of $\lambda$ and for different input entropies $\bar S=0.5,1,1.5$. In full lines are the entropy achievable with a Gaussian input state while the dotted lines represent the lower bound \eqref{bound}. The corresponding minimum output entropies are necessarily constrained within the green regions. We notice that larger values of input entropies $\bar S$ are not considered in this plot because the Gaussian ansatz and the bound becomes practically indistinguishable. {\bf b} Maximum allowed violation $\Delta(\bar S,\lambda)$ of the generalized minimum output entropy conjecture. The two axes are the input entropy  $\bar S$ and the beamsplitter transmissivity $\lambda$.} \label{delta}
\end{figure}

\section{Conclusion}\label{seccepi}

Understanding the complex physics of continuous variable quantum systems \cite{braunstein2005quantum} represents a fundamental challenge of modern science which is crucial
for developing an information technology  capable of taking full advantage of quantum effects \cite{caves1994quantum,weedbrook2012gaussian}.
This task appears now to be within our grasp due to a series of very recent works which have solved a collection  of  long standing conjectures. Specifically,
the minimum output entropy and output majorization conjectures (proposed in Ref. \cite{giovannetti2004minimum} and solved in Ref.'s \cite{giovannetti2015solution} and \cite{mari2014quantum} respectively),
the optimal Gaussian ensemble and the additivity conjecture (proposed in \cite{holevo2001evaluating} and solved in Ref. \cite{giovannetti2015solution}),
the optimality of Gaussian decomposition in the calculation of entanglement of formation \cite{giedke2003entanglement} and of Gaussian discord \cite{pirandola2014optimality,modi2012classical}
for two-mode gaussian states (both solved in Ref. \cite{giovannetti2014ultimate}), the proof of the strong converse of the classical capacity theorem \cite{bardhan2015strong}.

This result represents a fundamental further step in this direction by extending the proof of \cite{konig2014entropy} for the qEPI conjecture to the most general multimode scenario.

\chapter{Optimal inputs: passive states}\label{majorization}
The passive states of a quantum system minimize the average energy for fixed spectrum.
In this Chapter we prove that these states are the optimal inputs of one-mode gauge-covariant Gaussian quantum channels, in the sense that the output generated by a passive state majorizes the output generated by any other state with the same spectrum.
This result reduces the constrained quantum minimum output entropy conjecture (Proposition \ref{CMOE}) to a problem on discrete classical probability distributions.

The Chapter is based on
\begin{enumerate}
\item[\cite{de2015passive}] G.~De~Palma, D.~Trevisan, and V.~Giovannetti, ``Passive States Optimize the
  Output of Bosonic Gaussian Quantum Channels,'' \emph{IEEE Transactions on
  Information Theory}, vol.~62, no.~5, pp. 2895--2906, May 2016.\\ {\small\url{http://ieeexplore.ieee.org/document/7442587}}
\end{enumerate}
\section{Introduction}
The minimum von Neumann entropy at the output of a quantum communication channel can be crucial for the determination of its classical communication capacity (see \cite{holevo2013quantum} and Section \ref{gccapacity}).

Most communication schemes encode the information into pulses of electromagnetic radiation, that travels through metal wires, optical fibers or free space and is unavoidably affected by attenuation and noise.
The gauge-covariant quantum Gaussian channels \cite{holevo2013quantum} presented in Chapter \ref{GQI} provide a faithful model for these effects, and are characterized by the property of preserving the thermal states of electromagnetic radiation.

It has been recently proved (see \cite{mari2014quantum,giovannetti2015solution,giovannetti2015majorization,holevo2015gaussian} and Section \ref{gccapacity}) that the output entropy of any gauge-covariant Gaussian quantum channel is minimized when the input state is the vacuum.
This result has permitted the determination of the classical information capacity of this class of channels \cite{giovannetti2014ultimate}.

However, it is not sufficient to determine the triple trade-off region of the same class of channels \cite{wilde2012quantum,wilde2012information}, nor the capacity region of the Gaussian quantum broadcast channel.
Indeed, the solutions of these problems both rely on the still unproven constrained minimum output entropy conjecture \ref{CMOE}, stating that Gaussian thermal input states minimize the output von Neumann entropy of a quantum-limited attenuator among all the states with a given entropy (see \cite{guha2007classicalproc,guha2007classical} and Sections \ref{broadcasti}, \ref{broadcastg} and \ref{broadcast}).
This still unproven result would follow from a stronger conjecture, the Entropy Photon-number Inequality (EPnI) (see \cite{guha2008entropy} and Section \ref{secEPnI}), stating that Gaussian states with proportional covariance matrices minimize the output von Neumann entropy of a beamsplitter among all the couples of input states, each one with a given entropy.

Actually, Ref.'s \cite{mari2014quantum,giovannetti2015majorization,holevo2015gaussian} do not only prove that the vacuum minimizes the output entropy of any gauge-covariant quantum Gaussian channel.
They also prove that the output generated by the vacuum majorizes the output generated by any other state, i.e. applying a convex combination of unitary operators to the former, we can obtain any of the latter states (see Section \ref{secmaj}).
In this Chapter we go in the same direction, and prove a generalization of this result valid for any one-mode gauge-covariant quantum Gaussian channel.
Our result states that the output generated by any quantum state is majorized by the output generated by the state with the same spectrum diagonal in the Fock basis and with decreasing eigenvalues, i.e. by the state which is {\it passive} \cite{pusz1978passive,lenard1978thermodynamical,gorecki1980passive} with respect to the number operator (see \cite{vinjanampathy2015quantum,goold2015role,binder2015quantum} for the use of passive states in the context of quantum thermodynamics).
This can be understood as follows: among all the states with a given spectrum, the one diagonal in the Fock basis with decreasing eigenvalues produces the less noisy output.
All the states with a given spectrum have the same von Neumann entropy.
Then, our result implies that the input state minimizing the output entropy for fixed input entropy is certainly diagonal in the Fock basis.
This reduces the minimum output entropy quantum problem to a problem on discrete classical probability distributions.

We will solve this reduced problem in Chapter \ref{chepni}, where we prove that Gaussian thermal input states minimize the output entropy of the one-mode quantum attenuator for fixed input entropy.

Thanks to the classification of one-mode Gaussian channels in terms of unitary equivalence \cite{holevo2007one,holevo2013quantum}, we extend the result of this Chapter to the channels that are not gauge-covariant with the exception of the singular cases $A_2)$ and $B_1)$, for which we show that an optimal basis does not exist.

We also point out that the classical channel acting on discrete probability distributions associated to the restriction of the quantum-limited attenuator to states diagonal in the Fock basis coincides with the channel already known in the probability literature under the name of thinning.
First introduced by R{\'e}nyi \cite{renyi1956characterization} as a discrete analog of the rescaling of a continuous random variable, the thinning has been recently involved in discrete versions of the central limit theorem \cite{harremoes2007thinning,yu2009monotonic,harremoes2010thinning}
and of the Entropy Power Inequality \cite{yu2009concavity,johnson2010monotonicity}.
In particular, the Restricted Thinned Entropy Power Inequality \cite{johnson2010monotonicity} states that the Poisson probability distribution minimizes the output Shannon entropy of the thinning among all the ultra log-concave input probability distributions with a given Shannon entropy.

The Chapter is organized as follows.
In Section \ref{defs} we introduce the Gaussian quantum channels.
The Fock rearrangement is defined in Section \ref{rearrangement}, while Section \ref{secoptimal} defines the notion of Fock optimality and proves some of its properties.
The main theorem is proved in Section \ref{mainproof}, and the case of a generic not gauge-covariant Gaussian channel is treated in Section \ref{generic}.
Section \ref{secthinning} links our result to the thinning operation, and we conclude in Section \ref{secconclmaj}.

\section{Preliminaries}\label{defs}
In this Section we recall some properties of Gaussian quantum channels.
For more details, see Sections \ref{GQCh} and \ref{QGCa}, and the books \cite{holevo2013quantum,barnett2002methods,holevo2011probabilistic}.

We consider a one-mode Gaussian quantum system (see Section \ref{GQSy}), i.e. the Hilbert space $\mathcal{H}$ of one harmonic oscillator.
$\mathcal{H}$ has a countable orthonormal basis
\begin{equation}\label{fockdef}
\{|n\rangle\}_{n\in\mathbb{N}}\;,\qquad \langle m|n\rangle=\delta_{mn}
\end{equation}
called the Fock basis, on which the ladder operator $\hat{a}$ acts as
\begin{equation}\label{acta}
\hat{a}\;|n\rangle = \sqrt{n}\;|n-1\rangle\;,\qquad \hat{a}^\dag\;|n\rangle = \sqrt{n+1}\;|n+1\rangle\;.
\end{equation}
For one mode, the Hamiltonian \eqref{Hosc} reduces to
\begin{equation}\label{Nosc}
\hat{N}=\hat{a}^\dag\hat{a}\;,
\end{equation}
satisfying
\begin{equation}
\hat{N}\;|n\rangle=n\;|n\rangle\;.
\end{equation}

\begin{lem}
The quantum-limited attenuator of parameter $0\leq\lambda\leq1$ (see Section \ref{secattampl}) admits the explicit representation
\begin{equation}\label{kraus}
\Phi_\lambda\left(\hat{X}\right)=\sum_{l=0}^\infty\frac{(1-\lambda)^l}{l!}\;\lambda^\frac{\hat{N}}{2}\;\hat{a}^l\;\hat{X}\;\left(\hat{a}^\dag\right)^l\;\lambda^\frac{\hat{N}}{2}
\end{equation}
for any trace-class operator $\hat{X}$.
Then, if $\hat{X}$ is diagonal in the Fock basis, $\Phi_\lambda\left(\hat{X}\right)$ is diagonal in the same basis for any $0\leq\lambda\leq1$ also.
\begin{proof}
The channel $\Phi_\lambda$ admits the Kraus decomposition (see Eq. (4.5) of \cite{ivan2011operator})
\begin{equation}
\Phi_\lambda\left(\hat{X}\right)=\sum_{l=0}^\infty\hat{B}_l\;\hat{X}\;\hat{B}_l^\dag\;,
\end{equation}
where
\begin{equation}
\hat{B}_l=\sum_{m=0}^\infty\sqrt{\binom{m+l}{l}}\;(1-\lambda)^\frac{l}{2}\;\lambda^\frac{m}{2}\;|m\rangle\langle m+l|\;,\qquad l\in\mathbb{N}\;.
\end{equation}
Using \eqref{acta}, we have
\begin{equation}
\hat{a}^l=\sum_{m=0}^\infty\sqrt{l!\;\binom{m+l}{l}}\;|m\rangle\langle m+l|\;,
\end{equation}
and the claim easily follows.
\end{proof}
\end{lem}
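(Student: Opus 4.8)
The plan is to start from an explicit Kraus representation of the quantum-limited attenuator and then to recognize each Kraus operator as a power of the annihilation operator dressed by a function of the number operator. From the physical definition \eqref{attdef} of $\Phi_\lambda$ as a beamsplitter that mixes the input with an environmental vacuum and then traces out the environment, one can read off (or import from the literature, e.g.\ \cite{ivan2011operator}) the Kraus operators
\begin{equation*}
\hat{B}_l=\sum_{m=0}^\infty\sqrt{\binom{m+l}{l}}\;(1-\lambda)^\frac{l}{2}\;\lambda^\frac{m}{2}\;|m\rangle\langle m+l|\;,\qquad l\in\mathbb{N}\;,
\end{equation*}
so that $\Phi_\lambda(\hat{X})=\sum_{l=0}^\infty\hat{B}_l\,\hat{X}\,\hat{B}_l^\dag$.

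First I would recast $\hat{B}_l$ in closed operatorial form. Using the action \eqref{acta} of the ladder operators on the Fock basis, one gets $\hat{a}^l|m+l\rangle=\sqrt{(m+l)!/m!}\;|m\rangle$, hence $\hat{a}^l=\sum_m\sqrt{l!\,\binom{m+l}{l}}\;|m\rangle\langle m+l|$. Since $\lambda^{\hat{N}/2}|m\rangle=\lambda^{m/2}|m\rangle$, premultiplying by $\lambda^{\hat{N}/2}$ reproduces exactly the weight $\lambda^{m/2}$ appearing in $\hat{B}_l$. Comparing the two expansions yields the identification $\hat{B}_l=\frac{(1-\lambda)^{l/2}}{\sqrt{l!}}\;\lambda^{\hat{N}/2}\,\hat{a}^l$, and substituting this into the Kraus sum produces the representation \eqref{kraus}.

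For the second claim I would argue directly from \eqref{kraus}. If $\hat{X}=\sum_n x_n|n\rangle\langle n|$ is diagonal, then $\hat{a}^l|n\rangle\langle n|(\hat{a}^\dag)^l=\frac{n!}{(n-l)!}\;|n-l\rangle\langle n-l|$ (with the convention that the terms with $n<l$ vanish), which is again diagonal; and since $\lambda^{\hat{N}/2}$ is itself diagonal, conjugating by it preserves diagonality. Hence every summand of \eqref{kraus} is diagonal in the Fock basis, and so is their sum $\Phi_\lambda(\hat{X})$.

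The only genuinely delicate point is the analytic bookkeeping: justifying the closed-form Kraus operators and controlling the interchange and convergence of the infinite sums for a general trace-class $\hat{X}$. Here one uses that $\lambda^{\hat{N}/2}$ is a bounded operator for $0\le\lambda\le1$ and that the series \eqref{kraus} converges in trace norm, so the formal manipulations are legitimate. I expect this to be the main obstacle only in a technical sense; the algebraic identification of $\hat{B}_l$ and the diagonality statement are then routine and carry no conceptual difficulty.
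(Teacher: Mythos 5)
Your proposal is correct and follows essentially the same route as the paper: both import the Kraus decomposition of $\Phi_\lambda$ from Eq.\ (4.5) of \cite{ivan2011operator}, expand $\hat{a}^l$ in the Fock basis via \eqref{acta}, and identify $\hat{B}_l=\frac{(1-\lambda)^{l/2}}{\sqrt{l!}}\,\lambda^{\hat{N}/2}\hat{a}^l$ to obtain \eqref{kraus}. The only difference is that you spell out the diagonality argument and the convergence bookkeeping that the paper compresses into ``the claim easily follows,'' which is fine.
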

\begin{lem}\label{lemL}
The quantum-limited attenuator of parameter $\lambda=e^{-t}$ with $t\geq0$ can be written as the exponential of a Lindbladian $\mathcal{L}$, i.e. $\Phi_\lambda=e^{t\mathcal{L}}$, where
\begin{equation}\label{lindblad}
\mathcal{L}\left(\hat{X}\right)=\hat{a}\;\hat{X}\;\hat{a}^\dag-\frac{1}{2}\hat{a}^\dag\hat{a}\;\hat{X}-\frac{1}{2}\hat{X}\;\hat{a}^\dag\hat{a}
\end{equation}
for any trace-class operator $\hat{X}$.
\begin{proof}
Putting $\lambda=e^{-t}$ into \eqref{kraus} and differentiating with respect to $t$ we have for any trace-class operator $\hat{X}$
\begin{equation}
\frac{d}{dt}\Phi_{\lambda}\left(\hat{X}\right)=\mathcal{L}\left(\Phi_{\lambda}\left(\hat{X}\right)\right)\;,
\end{equation}
where $\mathcal{L}$ is the Lindbladian given by \eqref{lindblad}.
\end{proof}
\end{lem}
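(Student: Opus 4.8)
The plan is to verify that the one-parameter family $\{\Phi_{e^{-t}}\}_{t\ge0}$ solves the linear initial value problem $\frac{d}{dt}\Phi_{e^{-t}}=\mathcal{L}\circ\Phi_{e^{-t}}$ with $\Phi_{1}=\mathrm{id}$, and then to invoke uniqueness for this ODE on the space of trace-class operators to conclude $\Phi_{e^{-t}}=e^{t\mathcal{L}}$. The initial condition is immediate from the representation \eqref{kraus} of the preceding lemma: at $\lambda=1$ only the $l=0$ summand survives and $\Phi_{1}(\hat{X})=\hat{X}$. Everything else rests on differentiating \eqref{kraus} after the substitution $\lambda=e^{-t}$.

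The quickest route to the differential equation exploits the semigroup structure already in hand. Writing the composition rule \eqref{compatt} (recall $\Phi_\lambda=\mathcal{E}_\lambda$) with $\lambda=e^{-t}$ and $\lambda'=e^{-s}$ gives $\Phi_{e^{-t}}\circ\Phi_{e^{-s}}=\Phi_{e^{-(t+s)}}$, so $t\mapsto\Phi_{e^{-t}}$ is a one-parameter semigroup and $\frac{d}{dt}\Phi_{e^{-t}}=\Phi_{e^{-t}}\circ\mathcal{L}$ once the generator $\mathcal{L}:=\frac{d}{dt}\Phi_{e^{-t}}\big|_{t=0}$ is identified. I would read this generator directly off \eqref{kraus}: since the coefficient $(1-\lambda)^l=(1-e^{-t})^l=\mathcal{O}(t^l)$, every term with $l\ge2$ has vanishing $t$-derivative at the origin, so only $l=0$ and $l=1$ contribute. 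Using $\frac{d}{dt}\lambda^{\hat{N}/2}\big|_{t=0}=-\tfrac{\hat{N}}{2}$, the $l=0$ term yields $-\tfrac12(\hat{N}\hat{X}+\hat{X}\hat{N})$ and the $l=1$ term yields $\hat{a}\,\hat{X}\,\hat{a}^\dag$, whose sum is exactly the Lindbladian \eqref{lindblad}.

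Alternatively, and this is what a direct termwise differentiation of \eqref{kraus} produces, one differentiates at a generic $t$ using $\dot\lambda=-\lambda$ and $\frac{d}{dt}\lambda^{\hat{N}/2}=-\tfrac{\hat{N}}{2}\lambda^{\hat{N}/2}$. The derivatives of the two flanking factors $\lambda^{\hat{N}/2}$ immediately reassemble into $-\tfrac12\hat{N}\,\Phi_\lambda(\hat{X})-\tfrac12\Phi_\lambda(\hat{X})\,\hat{N}$, so the only real work — and the main obstacle — is the term coming from differentiating the coefficient. Writing $\Psi_l:=\lambda^{\hat{N}/2}\hat{a}^l\hat{X}(\hat{a}^\dag)^l\lambda^{\hat{N}/2}$, one checks via the commutation identities $\hat{a}\,\lambda^{\hat{N}/2}=\lambda^{1/2}\lambda^{\hat{N}/2}\hat{a}$ and $\lambda^{\hat{N}/2}\hat{a}^\dag=\lambda^{1/2}\hat{a}^\dag\lambda^{\hat{N}/2}$ that $\hat{a}\,\Psi_l\,\hat{a}^\dag=\lambda\,\Psi_{l+1}$; after the shift $l\mapsto l+1$ the coefficient term is then seen to equal $\hat{a}\,\Phi_\lambda(\hat{X})\,\hat{a}^\dag$, closing the identity $\frac{d}{dt}\Phi_{e^{-t}}(\hat{X})=\mathcal{L}(\Phi_{e^{-t}}(\hat{X}))$. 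The one point demanding technical care in either route is the exchange of the $t$-derivative with the infinite sum, which I would justify by trace-norm convergence of the series for trace-class $\hat{X}$; once the ODE and the initial condition hold, uniqueness gives $\Phi_\lambda=e^{t\mathcal{L}}$.
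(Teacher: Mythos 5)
Your proposal is correct, and your second route --- direct termwise differentiation of the Kraus series \eqref{kraus} after the substitution $\lambda=e^{-t}$, using $\frac{d}{dt}\lambda^{\hat{N}/2}=-\tfrac{\hat{N}}{2}\lambda^{\hat{N}/2}$ and the reindexing $l\mapsto l+1$ via $\hat{a}\,\Psi_l\,\hat{a}^\dag=\lambda\,\Psi_{l+1}$ --- is exactly the paper's proof, with the details spelled out where the paper only says ``differentiating''. Your alternative semigroup route via the composition rule \eqref{compatt} is also sound (and not circular, since \eqref{compatt} is established independently in the earlier chapter), but it reduces to essentially the same computation at $t=0$ and buys nothing beyond the direct calculation.
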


\begin{lem}
Let
\begin{equation}\label{Xdiag}
\hat{X}=\sum_{k=0}^\infty x_k\;|\psi_k\rangle\langle\psi_k|\;,\quad\langle\psi_k|\psi_l\rangle=\delta_{kl}\;,\quad x_0\geq x_1\geq\ldots
\end{equation}
be a self-adjoint Hilbert-Schmidt operator.
Then, the projectors
\begin{equation}\label{Pin}
\hat{\Pi}_n=\sum_{k=0}^n |\psi_k\rangle\langle\psi_k|
\end{equation}
satisfy
\begin{equation}
\mathrm{Tr}\left[\hat{\Pi}_n\;\hat{X}\right]=\sum_{k=0}^n x_k\;.
\end{equation}
\begin{proof}
Easily follows from an explicit computation.
\end{proof}
\end{lem}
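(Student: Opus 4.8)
The plan is to obtain the claimed identity by a direct computation, inserting the spectral decomposition \eqref{Xdiag} of $\hat{X}$ into the product $\hat{\Pi}_n\,\hat{X}$ and collapsing the double sum with the orthonormality relation $\langle\psi_k|\psi_l\rangle=\delta_{kl}$. Concretely, I would first write
\begin{equation}
\hat{\Pi}_n\,\hat{X}=\left(\sum_{k=0}^n|\psi_k\rangle\langle\psi_k|\right)\left(\sum_{l=0}^\infty x_l\;|\psi_l\rangle\langle\psi_l|\right)=\sum_{k=0}^n\sum_{l=0}^\infty x_l\;|\psi_k\rangle\langle\psi_k|\psi_l\rangle\langle\psi_l|\;,
\end{equation}
and then use $\langle\psi_k|\psi_l\rangle=\delta_{kl}$ so that only the diagonal terms $l=k$ survive:
\begin{equation}
\hat{\Pi}_n\,\hat{X}=\sum_{k=0}^n x_k\;|\psi_k\rangle\langle\psi_k|\;.
\end{equation}

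From here the statement follows by taking the trace term by term. Since each $|\psi_k\rangle\langle\psi_k|$ is a rank-one projector onto a normalized vector, one has $\mathrm{Tr}\left[|\psi_k\rangle\langle\psi_k|\right]=\langle\psi_k|\psi_k\rangle=1$, and by linearity of the trace over the finite sum,
\begin{equation}
\mathrm{Tr}\left[\hat{\Pi}_n\,\hat{X}\right]=\sum_{k=0}^n x_k\;.
\end{equation}

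There is essentially no obstacle here; the only point deserving a word of justification is that all expressions are well defined, which I would dispatch by noting that $\hat{\Pi}_n$ is a finite-rank (hence bounded) operator, so its product with the Hilbert--Schmidt operator $\hat{X}$ is again finite-rank and therefore trace-class, making $\mathrm{Tr}\left[\hat{\Pi}_n\,\hat{X}\right]$ meaningful. The interchange of the summation over $l$ with the rank-one projections is likewise legitimate because the series $\sum_l x_l\,|\psi_l\rangle\langle\psi_l|$ converges to $\hat{X}$ in Hilbert--Schmidt norm and multiplication by the fixed bounded operator $\hat{\Pi}_n$ is continuous. This confirms that the author's terse remark that the result ``easily follows from an explicit computation'' is accurate: the lemma is just the observation that, for an operator already written in its own eigenbasis, the projector $\hat{\Pi}_n$ onto the top $n+1$ eigenvectors extracts precisely the $n+1$ largest eigenvalues.
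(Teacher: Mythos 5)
Your proof is correct and is precisely the ``explicit computation'' that the paper's one-line proof alludes to: multiply $\hat{\Pi}_n$ against the spectral decomposition, collapse the double sum by orthonormality, and take the trace of the resulting finite-rank operator. The extra remarks on well-definedness (finite-rank times Hilbert--Schmidt is trace-class, and Hilbert--Schmidt convergence justifies the term-by-term manipulation) are sound and only strengthen the argument.
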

\begin{lem}[Ky Fan's Maximum Principle]\label{sumeig}
Let $\hat{X}$ be a positive Hilbert-Schmidt operator with eigenvalues $\{x_k\}_{k\in\mathbb{N}}$ in decreasing order, i.e. $x_0\geq x_1\geq\ldots\;$,
and let $\hat{P}$ be a projector of rank $n+1$.
Then
\begin{equation}\label{TrPiX}
\mathrm{Tr}\left[\hat{P}\;\hat{X}\right]\leq\sum_{k=0}^n x_k\;.
\end{equation}
\begin{proof}
(See also \cite{bhatia2013matrix,fan1951maximum}).
Let us diagonalize $\hat{X}$ as in \eqref{Xdiag}.
The proof proceeds by induction on $n$.
Let $\hat{P}$ have rank one.
Since
\begin{equation}
\hat{X}\leq x_0\;\hat{\mathbb{I}}\;,
\end{equation}
we have
\begin{equation}
\mathrm{Tr}\left[\hat{P}\;\hat{X}\right]\leq x_0\;.
\end{equation}
Suppose now that \eqref{TrPiX} holds for any rank-$n$ projector.
Let $\hat{P}$ be a projector of rank $n+1$.
Its support then certainly contains a vector $|\psi\rangle$ orthogonal to the support of $\hat{\Pi}_{n-1}$, that has rank $n$.
We can choose $|\psi\rangle$ normalized (i.e. $\langle\psi|\psi\rangle=1$), and define the rank-$n$ projector
\begin{equation}
\hat{Q}=\hat{P}-|\psi\rangle\langle\psi|\;.
\end{equation}
By the induction hypothesis on $\hat{Q}$,
\begin{equation}\label{ineqQpsi}
\mathrm{Tr}\left[\hat{P}\,\hat{X}\right]=\mathrm{Tr}\left[\hat{Q}\,\hat{X}\right]+\langle\psi|\hat{X}|\psi\rangle\leq\sum_{k=0}^{n-1}x_k+\langle\psi|\hat{X}|\psi\rangle\;.
\end{equation}
Since $|\psi\rangle$ is in the support of $\hat{\mathbb{I}}-\hat{\Pi}_{n-1}$, and
\begin{equation}
\left(\hat{\mathbb{I}}-\hat{\Pi}_{n-1}\right)\hat{X}\left(\hat{\mathbb{I}}-\hat{\Pi}_{n-1}\right)\leq x_n\;\hat{\mathbb{I}}\;,
\end{equation}
we have
\begin{equation}\label{ineqpsi}
\langle\psi|\hat{X}|\psi\rangle\leq x_n\;,
\end{equation}
and this concludes the proof.
\end{proof}
\end{lem}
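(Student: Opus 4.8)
The plan is to reduce the statement to an elementary finite‑dimensional rearrangement inequality by tracking how much ``weight'' the projector $\hat{P}$ places on each eigenvector of $\hat{X}$.

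First I would diagonalize $\hat{X}=\sum_k x_k\,|\psi_k\rangle\langle\psi_k|$ with $x_0\geq x_1\geq\cdots\geq0$ (possible, since a positive Hilbert--Schmidt operator is compact), and pick an orthonormal basis $|\phi_0\rangle,\ldots,|\phi_n\rangle$ of the range of $\hat{P}$, so that $\hat{P}=\sum_{j=0}^n|\phi_j\rangle\langle\phi_j|$. Expanding the trace gives
\[
\mathrm{Tr}\left[\hat{P}\,\hat{X}\right]=\sum_{j=0}^n\langle\phi_j|\hat{X}|\phi_j\rangle=\sum_k x_k\,c_k\;,\qquad c_k:=\langle\psi_k|\hat{P}|\psi_k\rangle=\sum_{j=0}^n\left|\langle\phi_j|\psi_k\rangle\right|^2\;.
\]
The interchange of the finite $j$-sum with the infinite $k$-sum is harmless, since the $j$-sum has only $n+1$ terms. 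I would then record the two constraints that make the problem rigid: because $0\leq\hat{P}\leq\hat{\mathbb{I}}$ we have $0\leq c_k\leq1$ for every $k$, and because $\mathrm{Tr}\,\hat{P}=n+1$ we have $\sum_k c_k=n+1$.

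The statement now becomes a linear program: maximize $\sum_k x_k\,c_k$ over weights $c_k\in[0,1]$ summing to $n+1$, with $(x_k)$ nonincreasing. Its optimum is to saturate the largest eigenvalues, i.e. $c_k=1$ for $k\leq n$ and $c_k=0$ otherwise, giving exactly $\sum_{k=0}^n x_k$. To prove it I would write
\[
\sum_k x_k\,c_k-\sum_{k=0}^n x_k=\sum_{k>n}x_k\,c_k-\sum_{k=0}^n x_k\,(1-c_k)\;,
\]
use $\sum_{k>n}c_k=\sum_{k=0}^n(1-c_k)$ (from the total‑weight constraint), and bound $\sum_{k>n}x_k\,c_k\leq x_n\sum_{k>n}c_k=x_n\sum_{k=0}^n(1-c_k)\leq\sum_{k=0}^n x_k\,(1-c_k)$, using $x_k\leq x_n$ for $k>n$, $x_k\geq x_n$ for $k\leq n$, and $1-c_k\geq0$. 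This makes the displayed difference nonpositive, which is the claim.

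I do not expect a serious obstacle: the content is a two‑line rearrangement estimate, and the only genuine care is the infinite‑dimensional bookkeeping --- checking that $\hat{X}$ is compact with a decreasing eigenvalue sequence, that $\mathrm{Tr}[\hat{P}\,\hat{X}]$ is a finite sum of expectation values each bounded by $x_0$, and that every series involved has nonnegative terms so no conditional‑convergence issue arises. An alternative, and perhaps the route the author takes, is induction on the rank $n+1$: split off a unit vector $|\psi\rangle$ in the range of $\hat{P}$ orthogonal to the span of the top $n$ eigenvectors of $\hat{X}$, apply the inductive bound to the rank‑$n$ projector $\hat{P}-|\psi\rangle\langle\psi|$, and control $\langle\psi|\hat{X}|\psi\rangle\leq x_n$ since $|\psi\rangle$ lies in the complement of that top eigenspace; there the only point needing a word of justification is the dimension count guaranteeing such a $|\psi\rangle$ exists.
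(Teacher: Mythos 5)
Your proposal is correct, but it takes a genuinely different route from the paper. The paper proves the lemma by induction on the rank of $\hat{P}$: it peels off a normalized vector $|\psi\rangle$ in the range of $\hat{P}$ orthogonal to the span of the top $n$ eigenvectors of $\hat{X}$ (which exists by a dimension count, since that span has dimension $n$ while $\hat{P}$ has rank $n+1$), applies the inductive bound to the rank-$n$ projector $\hat{P}-|\psi\rangle\langle\psi|$, and controls $\langle\psi|\hat{X}|\psi\rangle\leq x_n$ --- exactly the alternative you sketch in your final paragraph. Your main argument instead reduces everything to a rearrangement inequality for the weights $c_k=\langle\psi_k|\hat{P}|\psi_k\rangle$, constrained by $0\leq c_k\leq 1$ and a total-mass bound, and this works: all series involved have nonnegative terms, so the interchange of sums is justified by Tonelli. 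One small imprecision: the equality $\sum_k c_k=n+1$ holds only if the eigenvectors $\{|\psi_k\rangle\}$ form a \emph{complete} orthonormal basis (i.e., the listed eigenvalues include the kernel of $\hat{X}$); in general Bessel's inequality gives only $\sum_k c_k\leq n+1$. Fortunately your chain of estimates uses the mass constraint in the direction $\sum_{k>n}c_k\leq\sum_{k=0}^n(1-c_k)$, which follows from the inequality alone, so the proof is unaffected. What each approach buys: yours is self-contained and makes the extremal structure transparent (the optimum saturates the top $n+1$ weights), while the paper's induction is shorter on the page and stays entirely at the operator level, never introducing the coefficients $c_k$.
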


\begin{lem}\label{HS*}
Let $\hat{X}$ and $\hat{Y}$ be positive Hilbert-Schmidt operators (see \eqref{HSN} in Appendix \ref{appG}) with eigenvalues in decreasing order $\{x_n\}_{n\in\mathbb{N}}$ and $\{y_n\}_{n\in\mathbb{N}}$, respectively.
Then,
\begin{equation}
\sum_{n=0}^\infty (x_n-y_n)^2\leq\left\|\hat{X}-\hat{Y}\right\|_2^2\;.
\end{equation}
\begin{proof}
We have
\begin{equation}\label{TrXYr}
\left\|\hat{X}-\hat{Y}\right\|_2^2-\sum_{n=0}^\infty (x_n-y_n)^2=2\sum_{n=0}^\infty x_ny_n-2\mathrm{Tr}\left[\hat{X}\hat{Y}\right]\geq0\,.
\end{equation}
To prove the inequality in \eqref{TrXYr}, let us diagonalize $\hat{X}$ as in \eqref{Xdiag}.
We then also have
\begin{equation}
\hat{X}=\sum_{n=0}^\infty\left(x_n-x_{n+1}\right)\hat{\Pi}_n\;,
\end{equation}
where
\begin{equation}
\hat{\Pi}_n=\sum_{k=0}^n|\psi_k\rangle\langle\psi_k|\;.
\end{equation}
We then have
\begin{equation}
\mathrm{Tr}\left[\hat{X}\;\hat{Y}\right] = \sum_{n=0}^\infty\left(x_n-x_{n+1}\right)\mathrm{Tr}\left[\hat{\Pi}_n\;\hat{Y}\right]\leq \sum_{n=0}^\infty\left(x_n-x_{n+1}\right)\sum_{k=0}^n y_k= \sum_{n=0}^\infty x_n\;y_n\;,
\end{equation}
where we have used Ky Fan's Maximum Principle (Lemma \ref{sumeig}) and rearranged the sum (see also the Supplemental Material of \cite{koenig2009strong}).
\end{proof}
\end{lem}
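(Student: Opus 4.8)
The plan is to reduce the operator inequality to the scalar trace inequality $\mathrm{Tr}[\hat X\hat Y]\le\sum_{n}x_n y_n$, which is exactly the content of \eqref{TrXYr}. Since $\hat X$ and $\hat Y$ are positive and Hilbert--Schmidt, their eigenvalue sequences are square-summable, so $\hat X\hat Y$ is trace-class and every trace below is finite. Expanding the Hilbert--Schmidt norm and using that $\mathrm{Tr}[\hat X^2]=\sum_n x_n^2$ and $\mathrm{Tr}[\hat Y^2]=\sum_n y_n^2$ gives
\[
\left\|\hat X-\hat Y\right\|_2^2-\sum_{n=0}^\infty(x_n-y_n)^2=2\sum_{n=0}^\infty x_n y_n-2\,\mathrm{Tr}[\hat X\hat Y],
\]
so the whole statement follows once the trace inequality is established.

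To prove $\mathrm{Tr}[\hat X\hat Y]\le\sum_n x_n y_n$, I would diagonalize $\hat X$ as in \eqref{Xdiag} and rewrite it by the layer-cake (Abel) decomposition
\[
\hat X=\sum_{n=0}^\infty\left(x_n-x_{n+1}\right)\hat\Pi_n,
\]
where $\hat\Pi_n=\sum_{k=0}^n|\psi_k\rangle\langle\psi_k|$ is a projector of rank $n+1$. Because the eigenvalues are arranged in decreasing order, every coefficient $x_n-x_{n+1}$ is nonnegative, which is the crucial sign condition. Taking the trace against $\hat Y$ and applying Ky Fan's Maximum Principle (Lemma \ref{sumeig}) to each rank-$(n+1)$ projector yields $\mathrm{Tr}[\hat\Pi_n\hat Y]\le\sum_{k=0}^n y_k$, whence
\[
\mathrm{Tr}[\hat X\hat Y]=\sum_{n=0}^\infty\left(x_n-x_{n+1}\right)\mathrm{Tr}[\hat\Pi_n\hat Y]\le\sum_{n=0}^\infty\left(x_n-x_{n+1}\right)\sum_{k=0}^n y_k.
\]

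Finally I would resum the right-hand side by interchanging the order of summation (summation by parts), using the telescoping identity $\sum_{n\ge k}(x_n-x_{n+1})=x_k$, which holds because $x_n\to0$; this collapses the double sum to $\sum_k x_k y_k$ and closes the argument. The main point requiring care is not any single inequality but the passage to infinite dimensions: I must justify that $\hat X\hat Y$ is trace-class, that the layer-cake series converges in Hilbert--Schmidt norm, and that the rearrangement of the nonnegative double series is legitimate, all of which follow from the square-summability of the eigenvalues together with the decreasing ordering. Everything else is the finite-dimensional von Neumann trace inequality in disguise, with Ky Fan's principle doing the real work.
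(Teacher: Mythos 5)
Your proposal is correct and follows essentially the same route as the paper's own proof: reduce to the trace inequality $\mathrm{Tr}[\hat X\hat Y]\le\sum_n x_n y_n$, decompose $\hat X$ as $\sum_n (x_n-x_{n+1})\hat\Pi_n$, apply Ky Fan's Maximum Principle to each projector, and resum. The extra attention you pay to the infinite-dimensional justifications (trace-class property, convergence of the nonnegative double series) is a welcome refinement of details the paper leaves implicit, but the argument is the same.
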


\section{Fock rearrangement}\label{rearrangement}
In order to state our main theorem, we need the following:
\begin{defn}[Fock rearrangement]\label{defrearr}
Let $\hat{X}$ be a positive trace-class operator with eigenvalues $\{x_n\}_{n\in\mathbb{N}}$ in decreasing order.
We define its Fock rearrangement as
\begin{equation}
\hat{X}^\downarrow:=\sum_{n=0}^\infty x_n\;|n\rangle\langle n|\;.
\end{equation}
If $\hat{X}$ coincides with its own Fock rearrangement, i.e. $\hat{X}=\hat{X}^\downarrow$, we say that it is {\it passive} \cite{pusz1978passive,lenard1978thermodynamical,gorecki1980passive} with respect to the Hamiltonian $\hat{N}$.
For simplicity, in the following we will always assume $\hat{N}$ to be the reference Hamiltonian, and an operator with $\hat{X}=\hat{X}^\downarrow$ will be called simply passive.
\end{defn}
\begin{rem}
The Fock rearrangement of any projector $\hat{\Pi}_n$ of rank $n+1$ is the projector onto the first $n+1$ Fock states:
\begin{equation}\label{Pin*}
\hat{\Pi}_n^\downarrow=\sum_{i=0}^n|i\rangle\langle i|\;.
\end{equation}
\end{rem}

We recall that a quantum operation has the same definition of a quantum channel, but it is not required to be trace-preserving (see Section \ref{GQCh}).
We define the notion of passive-preserving quantum operation, that will be useful in the following.
\begin{defn}[Passive-preserving quantum operation]\label{*pres}
We say that a quantum operation $\Phi$ is passive-preserving if $\Phi\left(\hat{X}\right)$ is passive for any passive positive trace-class operator $\hat{X}$.
\end{defn}

We will also need these lemmata:
\begin{lem}\label{PXPlem}
For any self-adjoint trace-class operator $\hat{X}$,
\begin{equation}
\lim_{N\to\infty}\left\|\hat{\Pi}_N^\downarrow\;\hat{X}\;\hat{\Pi}_N^\downarrow-\hat{X}\right\|_2=0\;,
\end{equation}
where the $\hat{\Pi}_N^\downarrow$ are the projectors onto the first $N+1$ Fock states defined in \eqref{Pin*}.
\begin{proof}
We have
\begin{eqnarray}\label{PXP}
\left\|\hat{\Pi}_N^\downarrow\hat{X}\hat{\Pi}_N^\downarrow-\hat{X}\right\|_2^2 &=& \mathrm{Tr}\left[\hat{X}\left(\hat{\mathbb{I}}+\hat{\Pi}_N^\downarrow\right)\hat{X}\left(\hat{\mathbb{I}}-\hat{\Pi}_N^\downarrow\right)\right]\leq 2\;\mathrm{Tr}\left[\hat{X}^2\left(\hat{\mathbb{I}}-\hat{\Pi}_N^\downarrow\right)\right]=\nonumber\\
&=& 2\sum_{n=N+1}^\infty\langle n|\hat{X}^2|n\rangle\;,
\end{eqnarray}
where we have used that $\hat{\mathbb{I}}+\hat{\Pi}_N^\downarrow\leq 2\;\hat{\mathbb{I}}\,$.
Since $\hat{X}$ has finite trace-norm, also its Hilbert-Schmidt norm is finite, the sum in \eqref{PXP} converges, and its tail tends to zero for $N\to\infty$.
\end{proof}
\end{lem}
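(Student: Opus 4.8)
The plan is to reduce the statement to the vanishing of the tail of a convergent nonnegative series, exploiting the fact that a trace-class operator is automatically Hilbert--Schmidt. Write $P_N=\hat{\Pi}_N^\downarrow$ and $Q_N=\hat{\mathbb{I}}-P_N$ for the orthogonal projectors onto the span of the first $N+1$ Fock states and onto its complement. The operator $P_N\hat{X}P_N-\hat{X}$ is self-adjoint, since $\hat{X}$ is, so its squared Hilbert--Schmidt norm equals the trace of its square. Expanding $(P_N\hat{X}P_N-\hat{X})^2$ and using idempotency $P_N^2=P_N$ together with cyclicity of the trace makes the cross terms collapse, giving the clean identity $\left\|P_N\hat{X}P_N-\hat{X}\right\|_2^2=\mathrm{Tr}\left[\hat{X}^2\right]-\mathrm{Tr}\left[P_N\hat{X}P_N\hat{X}\right]$. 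Equivalently, decomposing $\hat{X}$ into its four blocks relative to $P_N$ and $Q_N$, the difference $P_N\hat{X}P_N-\hat{X}$ simply deletes the $(P_N,P_N)$ block, so its squared norm is the sum of the squared norms of the two off-diagonal blocks and of the $(Q_N,Q_N)$ block.

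Next I would bound those surviving blocks by a single quantity involving $Q_N$ alone. Note that $\left\|Q_N\hat{X}\right\|_2^2=\mathrm{Tr}\left[\hat{X}Q_N\hat{X}\right]=\mathrm{Tr}\left[\hat{X}^2Q_N\right]$ already accounts for the $(Q_N,P_N)$ and $(Q_N,Q_N)$ blocks, while self-adjointness of $\hat{X}$ makes the $(P_N,Q_N)$ block the adjoint of the $(Q_N,P_N)$ block, hence of equal norm. This yields $\left\|P_N\hat{X}P_N-\hat{X}\right\|_2^2\le 2\,\mathrm{Tr}\left[\hat{X}^2Q_N\right]$. The same bound follows directly from the identity above by rewriting the right-hand side, via cyclicity and $Q_N^2=Q_N$, as the positive form $\mathrm{Tr}\left[Q_N\hat{X}\left(\hat{\mathbb{I}}+P_N\right)\hat{X}Q_N\right]$ and then using $\hat{\mathbb{I}}+P_N\le 2\,\hat{\mathbb{I}}$.

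Finally I would close the argument by a tail estimate. Since $P_N$ projects onto $|0\rangle,\ldots,|N\rangle$, we have $\mathrm{Tr}\left[\hat{X}^2Q_N\right]=\sum_{n=N+1}^\infty\langle n|\hat{X}^2|n\rangle$, which is the tail of the series $\mathrm{Tr}\left[\hat{X}^2\right]=\left\|\hat{X}\right\|_2^2$. Because $\hat{X}$ is trace-class it is Hilbert--Schmidt, with $\left\|\hat{X}\right\|_2\le\left\|\hat{X}\right\|_1<\infty$, so this series converges and its tail vanishes as $N\to\infty$, which proves the claim.

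There is no deep obstacle here: the only point requiring any care is the algebraic collapse of the cross terms that produces a bound controlled by $Q_N$ alone, and one must remember to invoke trace-class $\Rightarrow$ Hilbert--Schmidt so that $\left\|\hat{X}\right\|_2^2$ is genuinely a convergent series whose tail can be sent to zero. Once that finiteness is secured, the conclusion is the elementary statement that the tail of a convergent nonnegative series tends to zero, with no further analytic subtlety.
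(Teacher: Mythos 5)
Your proposal is correct and takes essentially the same route as the paper: the key estimate is the bound $\left\|\hat{\Pi}_N^\downarrow\hat{X}\hat{\Pi}_N^\downarrow-\hat{X}\right\|_2^2\leq 2\,\mathrm{Tr}\left[\hat{X}^2\left(\hat{\mathbb{I}}-\hat{\Pi}_N^\downarrow\right)\right]$, and your second derivation of it --- sandwiching with $Q_N$ and using $\hat{\mathbb{I}}+\hat{\Pi}_N^\downarrow\leq 2\,\hat{\mathbb{I}}$ --- is literally the paper's inequality \eqref{PXP}, after which both arguments conclude identically from trace-class $\Rightarrow$ Hilbert--Schmidt and the vanishing tail of a convergent series. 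The block-decomposition version you give first is just a slightly more explicit rendering of the same computation, not a different method.
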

\begin{lem}\label{*symtr}
A positive trace-class operator $\hat{X}$ is passive iff for any finite-rank projector $\hat{P}$
\begin{equation}\label{PP*X}
\mathrm{Tr}\left[\hat{P}\;\hat{X}\right]\leq\mathrm{Tr}\left[\hat{P}^\downarrow\;\hat{X}\right]\;.
\end{equation}
\begin{proof}
First, let us suppose that $\hat{X}$ is passive with eigenvalues $\{x_n\}_{n\in\mathbb{N}}$ in decreasing order, and let $\hat{P}$ have rank $n+1$.
Then, by Lemma \ref{sumeig}
\begin{equation}
\mathrm{Tr}\left[\hat{P}\;\hat{X}\right]\leq\sum_{i=0}^n x_i=\mathrm{Tr}\left[\hat{P}^\downarrow\;\hat{X}\right]\;.
\end{equation}

Let us now suppose that \eqref{PP*X} holds for any finite-rank projector.
Let us diagonalize $\hat{X}$ as in \eqref{Xdiag}.
Putting into \eqref{PP*X} the projectors $\hat{\Pi}_n$ defined in \eqref{Pin},
\begin{equation}
\sum_{i=0}^n x_i=\mathrm{Tr}\left[\hat{\Pi}_n\;\hat{X}\right]\leq\mathrm{Tr}\left[\hat{\Pi}_n^\downarrow\;\hat{X}\right]\leq\sum_{i=0}^n x_i\;,
\end{equation}
where we have again used Lemma \ref{sumeig}.
It follows that for any $n\in\mathbb{N}$
\begin{equation}
\mathrm{Tr}\left[\hat{\Pi}_n^\downarrow\;\hat{X}\right]=\sum_{i=0}^n x_i\;,
\end{equation}
and
\begin{equation}
\langle n|\hat{X}|n\rangle=x_n\;.
\end{equation}
It is then easy to prove by induction on $n$ that
\begin{equation}
\hat{X}=\sum_{n=0}^\infty x_n\;|n\rangle\langle n|\;,
\end{equation}
i.e. $\hat{X}$ is passive.
\end{proof}
\end{lem}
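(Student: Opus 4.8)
The plan is to prove both implications, using the observation that the Fock rearrangement of a finite-rank projector depends only on its rank. A rank-$(n+1)$ projector $\hat{P}$ has eigenvalues equal to $1$ with multiplicity $n+1$ followed by zeros, so Definition \ref{defrearr} together with \eqref{Pin*} gives $\hat{P}^\downarrow=\hat{\Pi}_n^\downarrow=\sum_{i=0}^n|i\rangle\langle i|$, independently of which rank-$(n+1)$ projector was chosen. Consequently
\begin{equation}
\mathrm{Tr}\left[\hat{P}^\downarrow\,\hat{X}\right]=\sum_{i=0}^n\langle i|\hat{X}|i\rangle\;,
\end{equation}
so that the right-hand side of \eqref{PP*X} is simply the sum of the first $n+1$ diagonal entries of $\hat{X}$ in the Fock basis.

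For the forward implication I would assume $\hat{X}=\hat{X}^\downarrow$ with eigenvalues $\{x_n\}_{n\in\mathbb{N}}$ in decreasing order, so that $\langle i|\hat{X}|i\rangle=x_i$ and the display above equals $\sum_{i=0}^n x_i$. Since a positive trace-class operator is in particular Hilbert-Schmidt, Ky Fan's Maximum Principle (Lemma \ref{sumeig}) applies and gives $\mathrm{Tr}[\hat{P}\,\hat{X}]\leq\sum_{i=0}^n x_i$ for every rank-$(n+1)$ projector, which is exactly \eqref{PP*X}.

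For the converse I would assume \eqref{PP*X} for all finite-rank projectors, diagonalize $\hat{X}$ as in \eqref{Xdiag}, and test the inequality on the spectral projector $\hat{\Pi}_n$ of \eqref{Pin}. This yields $\sum_{k=0}^n x_k=\mathrm{Tr}[\hat{\Pi}_n\hat{X}]\leq\mathrm{Tr}[\hat{\Pi}_n^\downarrow\hat{X}]$, whereas Ky Fan's Principle applied to the rank-$(n+1)$ projector $\hat{\Pi}_n^\downarrow$ gives the opposite bound $\mathrm{Tr}[\hat{\Pi}_n^\downarrow\hat{X}]\leq\sum_{k=0}^n x_k$. Sandwiching the two forces $\sum_{i=0}^n\langle i|\hat{X}|i\rangle=\sum_{k=0}^n x_k$ for every $n$, and subtracting consecutive identities leaves $\langle n|\hat{X}|n\rangle=x_n$ for all $n$.

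The step I expect to be the most delicate is upgrading these diagonal equalities to genuine diagonality of $\hat{X}$ in the Fock basis, rather than a mere coincidence of its diagonal with its ordered spectrum. I would argue by induction on $n$. Since $\hat{X}\leq x_0\,\hat{\mathbb{I}}$ and $\langle 0|\hat{X}|0\rangle=x_0$, the operator $x_0\hat{\mathbb{I}}-\hat{X}\geq0$ has vanishing expectation on $|0\rangle$, hence annihilates it, so $\hat{X}|0\rangle=x_0|0\rangle$. Passing to the $\hat{X}$-invariant subspace orthogonal to $|0\rangle$, on which $\hat{X}\leq x_1\,\hat{\mathbb{I}}$ and which contains $|1\rangle$, the same saturation argument gives $\hat{X}|1\rangle=x_1|1\rangle$; iterating shows every $|n\rangle$ is an eigenvector with eigenvalue $x_n$, so $\hat{X}=\sum_n x_n|n\rangle\langle n|=\hat{X}^\downarrow$ is passive.
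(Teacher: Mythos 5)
Your proof is correct and follows essentially the same route as the paper's: Ky Fan's Maximum Principle (Lemma \ref{sumeig}) for the forward implication, and for the converse a sandwich argument on the spectral projectors $\hat{\Pi}_n$ yielding the diagonal equalities $\langle n|\hat{X}|n\rangle=x_n$. Your closing induction -- using that $x_0\hat{\mathbb{I}}-\hat{X}\geq0$ with vanishing expectation on $|0\rangle$ must annihilate $|0\rangle$, then passing to the invariant orthogonal complement -- correctly supplies the details of the step the paper dismisses as ``easy to prove by induction.''
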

\begin{lem}\label{sumstar}
Let $\left\{\hat{X}_n\right\}_{n\in\mathbb{N}}$ be a sequence of positive trace-class operators with $\hat{X}_n$ passive for any $n\in\mathbb{N}$.
Then also $\sum_{n=0}^\infty\hat{X}_n$ is passive, provided that its trace is finite.
\begin{proof}
Follows easily from the definition of Fock rearrangement.
\end{proof}
\end{lem}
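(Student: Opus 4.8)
The plan is to reduce the claim to the projector characterization of passivity established in Lemma \ref{*symtr}, which replaces the somewhat rigid notion ``diagonal in the Fock basis with decreasing eigenvalues'' by a linear inequality that behaves well under summation.

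First I would invoke the forward direction of Lemma \ref{*symtr}: since each $\hat{X}_n$ is passive, for every finite-rank projector $\hat{P}$ one has $\mathrm{Tr}[\hat{P}\,\hat{X}_n]\leq\mathrm{Tr}[\hat{P}^\downarrow\,\hat{X}_n]$. I would then sum this inequality over $n$, using that all the operators are positive so that the series of traces converge and the summation commutes with the trace, the total trace being finite by hypothesis. This yields $\mathrm{Tr}[\hat{P}\,\hat{X}]\leq\mathrm{Tr}[\hat{P}^\downarrow\,\hat{X}]$ for $\hat{X}=\sum_n\hat{X}_n$. Since this holds for every finite-rank $\hat{P}$, the converse direction of Lemma \ref{*symtr} gives at once that $\hat{X}$ is passive.

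An equivalent, more elementary route works directly from Definition \ref{defrearr}, and is probably what the ``follows easily'' remark intends. Each passive $\hat{X}_n$ is diagonal in the Fock basis, $\hat{X}_n=\sum_k x^{(n)}_k\,|k\rangle\langle k|$ with $x^{(n)}_0\geq x^{(n)}_1\geq\ldots\geq0$; hence $\hat{X}=\sum_k x_k\,|k\rangle\langle k|$ is again diagonal, with $x_k=\sum_n x^{(n)}_k$. Termwise summation of the inequalities $x^{(n)}_k\geq x^{(n)}_{k+1}$ preserves the ordering, so $x_k\geq x_{k+1}$, and the finiteness of $\mathrm{Tr}\,\hat{X}=\sum_k x_k$ guarantees that each $x_k$ is finite. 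Thus the diagonal entries of $\hat{X}$ are already in decreasing order, which is exactly the condition $\hat{X}=\hat{X}^\downarrow$.

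I do not expect a genuine obstacle here: the only point requiring care is the interchange of the infinite sum over $n$ with the trace (respectively the passage to the limit of the partial sums $\sum_{n=0}^N\hat{X}_n$), and this is justified by positivity together with the standing assumption $\mathrm{Tr}\sum_n\hat{X}_n<\infty$. In particular the finite-trace hypothesis is precisely what rules out pathologies in which the diagonal entries $x_k$ could diverge.
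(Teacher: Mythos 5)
Your proposal is correct, and the second, elementary argument — termwise summation of the diagonal Fock-basis entries $x_k^{(n)}$, with the finite total trace guaranteeing each $x_k=\sum_n x_k^{(n)}$ is finite — is precisely what the paper's one-line proof (``follows easily from the definition of Fock rearrangement'') intends. Your first route via Lemma \ref{*symtr} is also valid (the interchange of the sum over $n$ with the trace against a finite-rank projector is justified by positivity and the finiteness of the total trace), but it invokes heavier machinery than the lemma requires.
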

\begin{lem}\label{phistar}
Let $\Phi$ be a quantum operation.
Let us suppose that $\Phi\left(\hat{\Pi}\right)$ is passive for any passive finite-rank projector $\hat{\Pi}$.
Then, $\Phi$ is passive-preserving.
\begin{proof}
Choose a passive operator
\begin{equation}
\hat{X}=\sum_{n=0}^\infty x_n\,|n\rangle\langle n|\;,
\end{equation}
with $\{x_n\}_{n\in\mathbb{N}}$ positive and decreasing.
We then also have
\begin{equation}
\hat{X}=\sum_{n=0}^\infty z_n\;\hat{\Pi}_n^\downarrow\;,
\end{equation}
where the $\hat{\Pi}_n^\downarrow$ are defined in \eqref{Pin*}, and
\begin{equation}
z_n=x_n-x_{n+1}\geq0\;.
\end{equation}
Since by hypothesis $\Phi\left(\hat{\Pi}_n^\downarrow\right)$ is passive for any $n\in\mathbb{N}$, according to Lemma \ref{sumstar} also
\begin{equation}
\Phi\left(\hat{X}\right)=\sum_{n=0}^\infty z_n\;\Phi\left(\hat{\Pi}_n^\downarrow\right)
\end{equation}
is passive.
\end{proof}
\end{lem}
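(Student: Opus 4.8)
The plan is to reduce the passive-preservation property for an arbitrary passive state to the hypothesis on passive finite-rank projectors, exploiting the fact that every passive operator is a positive superposition of the truncation projectors $\hat{\Pi}_n^\downarrow$ introduced in \eqref{Pin*}. This is exactly the device that makes Lemma \ref{sumstar} applicable.

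First I would take a generic passive positive trace-class operator, which by Definition \ref{defrearr} has the form $\hat{X}=\sum_{n=0}^\infty x_n\,|n\rangle\langle n|$ with $x_0\geq x_1\geq\ldots\geq0$. Setting $z_n:=x_n-x_{n+1}\geq0$, a telescoping (Abel summation) argument rewrites $\hat{X}=\sum_{n=0}^\infty z_n\,\hat{\Pi}_n^\downarrow$: indeed the coefficient multiplying $|i\rangle\langle i|$ is $\sum_{n\geq i}(x_n-x_{n+1})=x_i$, where the telescoping closes because $x_n\to0$ (a consequence of $\hat{X}$ being trace-class). I would then pass $\Phi$ through this series. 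Since by hypothesis each $\hat{\Pi}_n^\downarrow$ is a passive finite-rank projector, $\Phi(\hat{\Pi}_n^\downarrow)$ is passive; it is moreover positive because $\Phi$, being a quantum operation, is completely positive and $\hat{\Pi}_n^\downarrow\geq0$, and the weights $z_n\geq0$. Thus $\Phi(\hat{X})=\sum_{n=0}^\infty z_n\,\Phi(\hat{\Pi}_n^\downarrow)$ is a series of positive passive operators, and Lemma \ref{sumstar} lets me conclude that it is itself passive, as required by Definition \ref{*pres}. The finiteness of its trace, needed to invoke Lemma \ref{sumstar}, is guaranteed because $\Phi$ maps trace-class operators to trace-class operators.

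The one point requiring care, which I regard as the main (though mild) technical obstacle, is the interchange of $\Phi$ with the infinite summation. I would justify it by the continuity of $\Phi$ in trace norm together with the convergence $\sum_{n=0}^N z_n\,\hat{\Pi}_n^\downarrow\to\hat{X}$ in that norm; the trace-norm error of the partial sum equals $(N+1)\,x_{N+1}+\sum_{i>N}x_i$, and both terms vanish as $N\to\infty$ precisely because $\hat{X}$ is trace-class (using that a decreasing summable sequence satisfies $n\,x_n\to0$). Once this convergence is in place, no deeper difficulty arises and the statement follows.
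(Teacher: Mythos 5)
Your proof is correct and follows exactly the paper's argument: the telescoping decomposition $\hat{X}=\sum_{n=0}^\infty z_n\,\hat{\Pi}_n^\downarrow$ with $z_n=x_n-x_{n+1}\geq0$, followed by an application of Lemma \ref{sumstar}. The only difference is that you spell out the trace-norm justification for exchanging $\Phi$ with the infinite sum (legitimate, since the paper's definition of quantum operation includes continuity on $\mathfrak{T}$), a step the paper leaves implicit; this is added rigor, not a different route.
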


\begin{lem}\label{majprojlem}
Let $\hat{X}$ and $\hat{Y}$ be positive trace-class operators.
\begin{enumerate}
\item Let us suppose that for any finite-rank projector $\hat{\Pi}$
\begin{equation}\label{majproj}
\mathrm{Tr}\left[\hat{\Pi}\,\hat{X}\right]\leq\mathrm{Tr}\left[\hat{\Pi}^\downarrow\,\hat{Y}\right]\;.
\end{equation}
Then $\hat{X}\prec_w\hat{Y}$ (see Section \ref{secmaj} for the definition of weak submajorization).
\item Let $\hat{Y}$ be passive, and let us suppose that $\hat{X}\prec_w\hat{Y}$.
Then \eqref{majproj} holds for any finite-rank projector $\hat{\Pi}$.
\end{enumerate}
\begin{proof}
Let $\{x_n\}_{n\in\mathbb{N}}$ and $\{y_n\}_{n\in\mathbb{N}}$ be the eigenvalues in decreasing order of $\hat{X}$ and $\hat{Y}$, respectively, and let us diagonalize $\hat{X}$ as in \eqref{Xdiag}.
\begin{enumerate}
\item Let us first suppose that \eqref{majproj} holds for any finite-rank projector $\hat{\Pi}$.
For any $n\in\mathbb{N}$ we have
\begin{equation}\label{eqtr}
\sum_{i=0}^n x_i=\mathrm{Tr}\left[\hat{\Pi}_n\,\hat{X}\right]\leq\mathrm{Tr}\left[\hat{\Pi}_n^\downarrow\,\hat{Y}\right]\leq\sum_{i=0}^n y_i\;,
\end{equation}
where the $\hat{\Pi}_n$ are defined in \eqref{Pin} and we have used Lemma \ref{sumeig}.
Then $x\prec_w y$, and $\hat{X}\prec_w\hat{Y}$.

\item Let us now suppose that $\hat{X}\prec_w\hat{Y}$ and $\hat{Y}=\hat{Y}^\downarrow$.
Then, for any $n\in\mathbb{N}$ and any projector $\hat{\Pi}$ of rank $n+1$,
\begin{equation}
\mathrm{Tr}\left[\hat{\Pi}\,\hat{X}\right]\leq \sum_{i=0}^n x_i\leq\sum_{i=0}^n y_i=\mathrm{Tr}\left[\hat{\Pi}^\downarrow\,\hat{Y}\right]\;,
\end{equation}
where we have used Lemma \ref{sumeig} again.
\end{enumerate}
\end{proof}
\end{lem}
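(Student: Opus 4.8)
The plan is to prove both implications by combining Ky Fan's Maximum Principle (Lemma \ref{sumeig}) with the explicit form of the Fock rearrangement of a finite-rank projector. Recall from the remark after Definition \ref{defrearr}, Eq. \eqref{Pin*}, that the rearrangement of any rank-$(n+1)$ projector is the projector onto the first $n+1$ Fock states, $\hat{\Pi}^\downarrow=\sum_{i=0}^n|i\rangle\langle i|$. Throughout I would diagonalize $\hat{X}$ as in \eqref{Xdiag}, writing its eigenvalues in decreasing order as $\{x_n\}$, and denote by $\{y_n\}$ the decreasing eigenvalues of $\hat{Y}$; since positive trace-class operators are in particular Hilbert--Schmidt, Lemma \ref{sumeig} applies to each.

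For part 1, the idea is to test the hypothesis \eqref{majproj} against the spectral projectors of $\hat{X}$ itself. Taking $\hat{\Pi}=\hat{\Pi}_n$ as defined in \eqref{Pin}, the lemma preceding Ky Fan's principle gives $\mathrm{Tr}[\hat{\Pi}_n\hat{X}]=\sum_{i=0}^n x_i$ exactly. Feeding this into \eqref{majproj} bounds the partial sum by $\mathrm{Tr}[\hat{\Pi}_n^\downarrow\hat{Y}]$, and since $\hat{\Pi}_n^\downarrow$ is itself a rank-$(n+1)$ projector, Ky Fan applied to $\hat{Y}$ bounds that in turn by $\sum_{i=0}^n y_i$. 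Chaining the two inequalities yields $\sum_{i=0}^n x_i\leq\sum_{i=0}^n y_i$ for every $n$, which is precisely the definition of $\hat{X}\prec_w\hat{Y}$.

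For part 2, I would run an analogous chain but exploit the passivity of $\hat{Y}$ to turn the final inequality into an equality. Given an arbitrary rank-$(n+1)$ projector $\hat{\Pi}$, Ky Fan applied to $\hat{X}$ gives $\mathrm{Tr}[\hat{\Pi}\hat{X}]\leq\sum_{i=0}^n x_i$, and weak submajorization then gives $\sum_{i=0}^n x_i\leq\sum_{i=0}^n y_i$. Finally, because $\hat{Y}=\hat{Y}^\downarrow$ is diagonal in the Fock basis with eigenvalues in decreasing order and $\hat{\Pi}^\downarrow=\sum_{i=0}^n|i\rangle\langle i|$, one computes $\mathrm{Tr}[\hat{\Pi}^\downarrow\hat{Y}]=\sum_{i=0}^n\langle i|\hat{Y}|i\rangle=\sum_{i=0}^n y_i$, closing the chain to recover \eqref{majproj}.

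There is no genuine analytic obstacle here: the statement is a direct corollary of Ky Fan's principle, and everything reduces to finite partial sums. The only point requiring care is the asymmetry between the two parts. In part 1 one needs $\hat{\Pi}^\downarrow$ to remain a bona fide rank-$(n+1)$ projector so that Ky Fan can be reapplied to the second operator, whereas in part 2 the passivity of $\hat{Y}$ is exactly the hypothesis that upgrades the Ky Fan bound on $\mathrm{Tr}[\hat{\Pi}^\downarrow\hat{Y}]$ to an identity reproducing its top partial sum. Checking that trace-class positivity suffices to invoke the Hilbert--Schmidt version of Ky Fan at each step is the only bookkeeping involved.
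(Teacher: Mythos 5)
Your proposal is correct and follows essentially the same route as the paper's own proof: in part 1 you chain the hypothesis \eqref{majproj} evaluated on the spectral projectors $\hat{\Pi}_n$ of $\hat{X}$ with Ky Fan's principle applied to $\hat{Y}$, and in part 2 you chain Ky Fan applied to $\hat{X}$, weak submajorization, and the exact evaluation $\mathrm{Tr}[\hat{\Pi}^\downarrow\hat{Y}]=\sum_{i=0}^n y_i$ granted by passivity. No gaps; this is the paper's argument.
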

\begin{lem}\label{corTr}
Let $\hat{Y}$ and $\hat{Z}$ be positive trace-class operators with $\hat{Y}\prec_w\hat{Z}=\hat{Z}^\downarrow$.
Then, for any positive trace-class operator $\hat{X}$,
\begin{equation}
\mathrm{Tr}\left[\hat{X}\;\hat{Y}\right]\leq\mathrm{Tr}\left[\hat{X}^\downarrow\;\hat{Z}\right]\;.
\end{equation}
\begin{proof}
Let us diagonalize $\hat{X}$ as in \eqref{Xdiag}.
Then, it can be rewritten as
\begin{equation}\label{XPi}
\hat{X}=\sum_{n=0}^\infty d_n\,\hat{\Pi}_n\;,
\end{equation}
where the projectors $\hat{\Pi}_n$ are as in \eqref{Pin} and
\begin{equation}
d_n=x_n-x_{n+1}\geq0\;.
\end{equation}
The Fock rearrangement of $\hat{X}$ is
\begin{equation}\label{X*Pi}
\hat{X}^\downarrow=\sum_{n=0}^\infty d_n\,\hat{\Pi}_n^\downarrow\;.
\end{equation}
We then have from Lemma \ref{majprojlem}
\begin{equation}
\mathrm{Tr}\left[\hat{X}\;\hat{Y}\right]= \sum_{n=0}^\infty d_n\;\mathrm{Tr}\left[\hat{\Pi}_n\;\hat{Y}\right]\leq\sum_{n=0}^\infty d_n\;\mathrm{Tr}\left[\hat{\Pi}_n^\downarrow\;\hat{Z}\right] =\mathrm{Tr}\left[\hat{X}^\downarrow\;\hat{Z}\right]\;.
\end{equation}
\end{proof}
\end{lem}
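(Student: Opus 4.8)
The plan is to reduce the statement to the projector-level inequality already recorded in the second part of Lemma \ref{majprojlem}, using the fact that an arbitrary positive trace-class operator can be written as a superposition of its spectral projectors with \emph{nonnegative} coefficients. First I would diagonalize $\hat{X}$ as in \eqref{Xdiag}, with eigenvalues $x_0\geq x_1\geq\ldots\geq0$ and the associated finite-rank projectors $\hat{\Pi}_n=\sum_{k=0}^n|\psi_k\rangle\langle\psi_k|$ of \eqref{Pin}. A summation by parts then rewrites
\begin{equation}
\hat{X}=\sum_{n=0}^\infty d_n\,\hat{\Pi}_n\;,\qquad d_n:=x_n-x_{n+1}\geq0\;,
\end{equation}
and the crucial structural point is precisely that every $d_n$ is nonnegative. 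Because the spectrum is ordered, Definition \ref{defrearr} makes the Fock rearrangement act term by term on this decomposition, carrying each $\hat{\Pi}_n$ to the projector $\hat{\Pi}_n^\downarrow$ onto the first $n+1$ Fock states of \eqref{Pin*}, so that $\hat{X}^\downarrow=\sum_{n=0}^\infty d_n\,\hat{\Pi}_n^\downarrow$.

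With this in hand, I would invoke the second part of Lemma \ref{majprojlem}, whose hypotheses are exactly ours: $\hat{Y}\prec_w\hat{Z}$ together with $\hat{Z}=\hat{Z}^\downarrow$ passive. It gives, for each of the projectors $\hat{\Pi}_n$,
\begin{equation}
\mathrm{Tr}\left[\hat{\Pi}_n\,\hat{Y}\right]\leq\mathrm{Tr}\left[\hat{\Pi}_n^\downarrow\,\hat{Z}\right]\;.
\end{equation}
Weighting by the nonnegative $d_n$ and summing, linearity of the trace yields the desired chain
\begin{equation}
\mathrm{Tr}\left[\hat{X}\,\hat{Y}\right]=\sum_{n=0}^\infty d_n\,\mathrm{Tr}\left[\hat{\Pi}_n\,\hat{Y}\right]\leq\sum_{n=0}^\infty d_n\,\mathrm{Tr}\left[\hat{\Pi}_n^\downarrow\,\hat{Z}\right]=\mathrm{Tr}\left[\hat{X}^\downarrow\,\hat{Z}\right]\;.
\end{equation}

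The only step demanding genuine care — and the one I would single out as the main obstacle — is the convergence of the projector series and the legitimacy of commuting the infinite sum with the trace. I would control this by observing that the partial sums collapse to $\sum_{n=0}^N d_n\,\hat{\Pi}_n=\sum_{k=0}^N (x_k-x_{N+1})\,|\psi_k\rangle\langle\psi_k|$, so that the positive remainder $\hat{X}-\sum_{n=0}^N d_n\hat{\Pi}_n$ has trace norm $(N+1)\,x_{N+1}+\sum_{k>N}x_k$; this tends to $0$ since $\mathrm{Tr}\,\hat{X}=\sum_k x_k<\infty$ with $x_k$ decreasing forces both $(N+1)\,x_{N+1}\to0$ and $\sum_{k>N}x_k\to0$. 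Trace-norm convergence, together with continuity of $\hat{W}\mapsto\mathrm{Tr}[\hat{W}\hat{Y}]$ on trace-class operators, then justifies the interchange on the left-hand side, and the identical estimate on the rearranged side handles the right-hand side. Beyond this bookkeeping I expect no conceptual difficulty, since all the analytic content has been front-loaded into Ky Fan's Maximum Principle (Lemma \ref{sumeig}) and its consequence Lemma \ref{majprojlem}.
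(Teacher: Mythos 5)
Your proof is correct and follows essentially the same route as the paper's: the telescoping decomposition $\hat{X}=\sum_n d_n\hat{\Pi}_n$ with $d_n\geq0$, the term-by-term application of the second part of Lemma \ref{majprojlem}, and summation. The convergence argument you add (trace-norm convergence of the partial sums, using $(N+1)x_{N+1}\to0$ for decreasing summable sequences) is a justification the paper leaves implicit, and it is sound.
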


\begin{lem}\label{majsum}
Let $\left\{\hat{X}_n\right\}_{n\in\mathbb{N}}$ and $\left\{\hat{Y}_n\right\}_{n\in\mathbb{N}}$ be two sequences of positive trace-class operators, with $\hat{Y}_n=\hat{Y}_n^\downarrow$ and $\hat{X}_n\prec_w\hat{Y}_n$ for any $n\in\mathbb{N}$.
Then
\begin{equation}
\sum_{n=0}^\infty\hat{X}_n\prec_w\sum_{n=0}^\infty\hat{Y}_n\;,
\end{equation}
provided that both sides have finite traces.
\begin{proof}
Let $\hat{P}$ be a finite-rank projector.
Since $\hat{X}_n\prec_w\hat{Y}_n$ and $Y_n=Y_n^\downarrow$, by the second part of Lemma \ref{majprojlem}
\begin{equation}
\mathrm{Tr}\left[\hat{P}\;\hat{X}_n\right]\leq\mathrm{Tr}\left[\hat{P}^\downarrow\;\hat{Y}_n\right]\qquad\forall\;n\in\mathbb{N}\;.
\end{equation}
Then,
\begin{equation}
\mathrm{Tr}\left[\hat{P}\;\sum_{n=0}^\infty\hat{X}_n\right]\leq\mathrm{Tr}\left[\hat{P}^\downarrow\;\sum_{n=0}^\infty\hat{Y}_n\right]\;,
\end{equation}
and the submajorization follows from the first part of Lemma \ref{majprojlem}.
\end{proof}
\end{lem}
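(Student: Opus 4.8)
The plan is to characterize weak submajorization entirely through traces against finite-rank projectors, using the two-sided equivalence established in Lemma \ref{majprojlem}, and then to exploit the passivity of the $\hat{Y}_n$ to turn the resulting estimate into one that is \emph{additive} in $n$. The reason this reformulation is the right move is that partial sums of eigenvalues do not add (the eigenvalues of $\sum_n\hat{X}_n$ are not the sums of the eigenvalues of the $\hat{X}_n$), so attacking the submajorization directly on the level of spectra is hopeless; by contrast, the map $\hat{Z}\mapsto\mathrm{Tr}[\hat{P}\hat{Z}]$ is linear and therefore interacts cleanly with the infinite sums.

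Concretely, I would first fix an arbitrary finite-rank projector $\hat{P}$ and, by the first part of Lemma \ref{majprojlem}, reduce the target $\sum_n\hat{X}_n\prec_w\sum_n\hat{Y}_n$ to proving the single scalar inequality $\mathrm{Tr}[\hat{P}\sum_n\hat{X}_n]\leq\mathrm{Tr}[\hat{P}^\downarrow\sum_n\hat{Y}_n]$ for every such $\hat{P}$. Next I would bound each summand separately: for each $n$, the operator $\hat{Y}_n$ is passive and $\hat{X}_n\prec_w\hat{Y}_n$, so the second part of Lemma \ref{majprojlem} gives $\mathrm{Tr}[\hat{P}\hat{X}_n]\leq\mathrm{Tr}[\hat{P}^\downarrow\hat{Y}_n]$. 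The decisive feature is that the projector $\hat{P}^\downarrow$ appearing on the right depends only on the rank of $\hat{P}$ and not on $n$; passivity of $\hat{Y}_n$ is exactly what allows the $n$-dependent optimal projector furnished by Ky Fan's principle (Lemma \ref{sumeig}) to be replaced by the single canonical projector $\hat{P}^\downarrow$ onto the lowest Fock levels, which is common to all $n$.

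I would then sum over $n$ and interchange the series with the trace, obtaining $\mathrm{Tr}[\hat{P}\sum_n\hat{X}_n]=\sum_n\mathrm{Tr}[\hat{P}\hat{X}_n]\leq\sum_n\mathrm{Tr}[\hat{P}^\downarrow\hat{Y}_n]=\mathrm{Tr}[\hat{P}^\downarrow\sum_n\hat{Y}_n]$, and finally invoke the first part of Lemma \ref{majprojlem} once more to pass from this projector estimate, valid for every finite-rank $\hat{P}$, back to the weak submajorization $\sum_n\hat{X}_n\prec_w\sum_n\hat{Y}_n$.

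The only genuinely delicate point, and the one I expect to require the most care, is the legitimacy of the sum--trace interchange together with the verification that Lemma \ref{majprojlem} applies to the limiting operators. All of this is controlled by the finite-trace hypothesis: since the $\hat{X}_n$ and $\hat{Y}_n$ are positive and the total traces are finite, the series converge in trace norm to genuine positive trace-class operators, and positivity of each term (with $\hat{P}$ of finite rank) justifies exchanging $\sum_n$ with $\mathrm{Tr}[\hat{P}\,\cdot\,]$ via monotone convergence. I would emphasize that the passivity assumption on the $\hat{Y}_n$ is indispensable rather than cosmetic: without it the per-$n$ bound could not be written with a fixed projector, and the additivity that drives the whole argument would be lost.
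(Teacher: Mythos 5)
Your proposal is correct and follows essentially the same route as the paper's own proof: reduce to projector traces via Lemma \ref{majprojlem}, apply its second part term by term using the passivity of the $\hat{Y}_n$, sum, and invoke its first part to conclude. The only difference is that you explicitly justify the sum--trace interchange (which the paper leaves implicit), a point that is indeed harmless given positivity and the finite-trace hypothesis.
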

\begin{lem}\label{HS**}
The Fock rearrangement is continuous in the Hilbert-Schmidt norm.
\begin{proof}
Let $\hat{X}$ and $\hat{Y}$ be trace-class operators, with eigenvalues in decreasing order $\{x_n\}_{n\in\mathbb{N}}$ and $\{y_n\}_{n\in\mathbb{N}}$, respectively.
We then have
\begin{equation}
\left\|\hat{X}^\downarrow-\hat{Y}^\downarrow\right\|_2^2=\sum_{n=0}^\infty(x_n-y_n)^2\leq\left\|\hat{X}-\hat{Y}\right\|_2^2\;,
\end{equation}
where we have used Lemma \ref{HS*}.
\end{proof}
\end{lem}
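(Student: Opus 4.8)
The plan is to prove the slightly stronger statement that the Fock rearrangement is $1$-Lipschitz with respect to the Hilbert-Schmidt norm, which of course implies continuity. The starting point is the simple structural fact that, by construction, the rearrangement $\hat{X}^\downarrow$ of any positive trace-class operator $\hat{X}$ is diagonal in the Fock basis, with the eigenvalues of $\hat{X}$ placed along the diagonal in decreasing order. This reduces the whole problem to a statement about the decreasing eigenvalue sequences of the two operators being compared.

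Concretely, I would take two positive trace-class operators $\hat{X}$ and $\hat{Y}$ with eigenvalues $\{x_n\}_{n\in\mathbb{N}}$ and $\{y_n\}_{n\in\mathbb{N}}$ arranged in decreasing order. Since both rearrangements are diagonal in $\{|n\rangle\}$, so is their difference, and hence
\[
\left\|\hat{X}^\downarrow-\hat{Y}^\downarrow\right\|_2^2=\sum_{n=0}^\infty(x_n-y_n)^2\;.
\]
The only remaining ingredient is an inequality comparing the right-hand side with $\left\|\hat{X}-\hat{Y}\right\|_2^2$. This is precisely Lemma \ref{HS*}, which guarantees $\sum_{n=0}^\infty(x_n-y_n)^2\leq\left\|\hat{X}-\hat{Y}\right\|_2^2$ for positive Hilbert-Schmidt operators with decreasingly ordered spectra. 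Chaining the identity and the inequality yields $\left\|\hat{X}^\downarrow-\hat{Y}^\downarrow\right\|_2\leq\left\|\hat{X}-\hat{Y}\right\|_2$, the desired Lipschitz bound.

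I do not expect any genuine obstacle at this stage, because the substantive work has already been carried out upstream: the inequality of Lemma \ref{HS*} rests on Ky Fan's Maximum Principle (Lemma \ref{sumeig}), used to bound the overlap $\mathrm{Tr}[\hat{X}\hat{Y}]$ from above by $\sum_n x_n y_n$, and it is exactly this control of $\mathrm{Tr}[\hat{X}\hat{Y}]$ that encodes the stability of the ordered spectrum under perturbations. Once Lemma \ref{HS*} is in hand, the continuity of the Fock rearrangement is immediate; the main conceptual point to highlight is simply that rearranging to a fixed basis cannot increase Hilbert-Schmidt distances.
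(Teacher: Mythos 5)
Your proof is correct and is essentially the same as the paper's: both reduce the claim to the identity $\left\|\hat{X}^\downarrow-\hat{Y}^\downarrow\right\|_2^2=\sum_{n}(x_n-y_n)^2$ (valid since both rearrangements are diagonal in the Fock basis with decreasingly ordered eigenvalues) and then invoke Lemma \ref{HS*} to bound this by $\left\|\hat{X}-\hat{Y}\right\|_2^2$. The Lipschitz formulation you give is exactly what the paper's one-line chain of equality and inequality establishes, so there is nothing to add.
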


\section{Fock-optimal quantum operations}\label{secoptimal}
We will prove that any  gauge-covariant Gaussian quantum channel satisfies this property:
\begin{defn}[Fock-optimal quantum operation]
We say that a quantum operation $\Phi$ is Fock-optimal if for any positive trace-class operator $\hat{X}$
\begin{equation}\label{conjectureeq}
\Phi\left(\hat{X}\right)\prec_w\Phi\left(\hat{X}^\downarrow\right)\;,
\end{equation}
i.e. Fock-rearranging the input always makes the output less noisy, or among all the quantum states with a given spectrum, the passive one generates the least noisy output (see Section \ref{secmaj} for the definitions of majorization and weak submajorization).
\end{defn}
\begin{rem}
If $\Phi$ is trace-preserving, weak sub-majorization in \eqref{conjectureeq} can be equivalently replaced by majorization.
\end{rem}
We can now state the main result of the Chapter:
\begin{thm}\label{maintheorem}
Any one-mode gauge-covariant Gaussian quantum channel is passive-preserving and Fock-optimal.
\begin{proof}
See Section \ref{mainproof}.
\end{proof}
\end{thm}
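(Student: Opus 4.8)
The plan is to reduce the statement to the two elementary quantum-limited channels and then exploit their semigroup structure. By the standard decomposition of one-mode gauge-covariant Gaussian channels \cite{holevo2013quantum}, any such channel (away from the singular cases) factorizes as a quantum-limited amplifier after a quantum-limited attenuator, $\Phi=\mathcal{A}_\kappa\circ\mathcal{E}_\lambda$. Hence it suffices (i) to prove that $\mathcal{E}_\lambda$ and $\mathcal{A}_\kappa$ are passive-preserving and Fock-optimal, and (ii) to show that these two properties are stable under composition. For (ii) I would first verify that not only each channel but also its Hilbert--Schmidt dual is passive-preserving; given this, composition-stability of passivity is immediate, while composition-stability of Fock-optimality follows from the monotonicity statement ``$\hat{A}\prec_w\hat{B}$ with $\hat{B}$ passive $\Rightarrow\Phi(\hat{A})\prec_w\Phi(\hat{B})$'', which in turn reduces, through the duality $\mathrm{Tr}[\hat{\Pi}\,\Phi(\hat{X})]=\mathrm{Tr}[\Phi^\ast(\hat{\Pi})\,\hat{X}]$, to an Abel-summation inequality for decreasing sequences together with passive-preservation of $\Phi^\ast$.

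For the passive-preserving property I would work through the Lindbladian of Lemma \ref{lemL}. Writing $\mathcal{E}_\lambda=e^{t\mathcal{L}}$ with $\mathcal{L}$ as in \eqref{lindblad}, one checks that $\mathcal{L}$ preserves Fock-diagonality: on $\hat{X}=\sum_k x_k\,|k\rangle\langle k|$ it acts as $\dot{x}_k=(k+1)x_{k+1}-k\,x_k$. The gaps $d_k:=x_k-x_{k+1}$ then obey the bidiagonal linear system $\dot{d}_k=-k\,d_k+(k+2)\,d_{k+1}$, whose off-diagonal coefficients are nonnegative; such a Metzler generator leaves the positive cone $\{d_k\ge0\;\forall k\}$ invariant, so a passive input remains passive for all $t$. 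The amplifier is handled identically, its generator giving $\dot{d}_k=k\,d_{k-1}-(k+2)d_k$, again Metzler. The same computation applied to the dual generators shows that $\mathcal{E}_\lambda^\ast$ and $\mathcal{A}_\kappa^\ast$ are passive-preserving too; alternatively one may invoke Lemma \ref{phistar} and check directly that $\mathcal{E}_\lambda(\hat{\Pi}_n^\downarrow)$ and $\mathcal{E}_\lambda^\ast(\hat{\Pi}_n^\downarrow)$ have decreasing Fock-diagonals (binomial thinning, resp. a stochastically monotone binomial-tail weight).

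For Fock-optimality I would reduce the general-input statement to finite-rank projector inputs. By Lemma \ref{majprojlem}(1) and passive-preservation of $\Phi(\hat{X}^\downarrow)$, it is enough to prove $\mathrm{Tr}[\hat{\Pi}\,\Phi(\hat{X})]\le\mathrm{Tr}[\hat{\Pi}^\downarrow\,\Phi(\hat{X}^\downarrow)]$ for every finite-rank projector $\hat{\Pi}$. Passing to the dual and using the rearrangement trace inequality from the proof of Lemma \ref{HS*},
\begin{equation}
\mathrm{Tr}[\hat{\Pi}\,\Phi(\hat{X})]=\mathrm{Tr}[\Phi^\ast(\hat{\Pi})\,\hat{X}]\le\mathrm{Tr}\big[(\Phi^\ast(\hat{\Pi}))^\downarrow\,\hat{X}^\downarrow\big]\;,
\end{equation}
so that, once $\Phi^\ast(\hat{\Pi})\prec_w\Phi^\ast(\hat{\Pi}^\downarrow)$ is known with $\Phi^\ast(\hat{\Pi}^\downarrow)$ passive, Lemma \ref{corTr} upgrades the bound to $\mathrm{Tr}[\Phi^\ast(\hat{\Pi}^\downarrow)\,\hat{X}^\downarrow]=\mathrm{Tr}[\hat{\Pi}^\downarrow\,\Phi(\hat{X}^\downarrow)]$. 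Thus everything collapses to \emph{Fock-optimality on projector inputs}, $\Phi(\hat{P})\prec_w\Phi(\hat{P}^\downarrow)$, for the relevant CP semigroup and its dual (both of the bidiagonal type above).

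This last point is the main obstacle. I would prove it by a comparison argument along the semigroup: set $\hat{P}(t)=e^{t\mathcal{L}}(\hat{P})$ and let $\hat{R}(t)=e^{t\mathcal{L}}(\hat{P}^\downarrow)$ be the passive reference, and compare the sorted partial sums $\Sigma_M(t)=\max_{\mathrm{rank}\,\hat{Q}=M+1}\mathrm{Tr}[\hat{Q}\,\hat{P}(t)]$ with $\Sigma_M^R(t)=\sum_{i=0}^M\langle i|\hat{R}(t)|i\rangle$. They coincide at $t=0$, and by Danskin's theorem $\frac{d}{dt}\Sigma_M=\mathrm{Tr}[\hat{Q}^\ast\,\mathcal{L}(\hat{P}(t))]$ for the projector $\hat{Q}^\ast$ onto the top eigenspace, while $\frac{d}{dt}\Sigma_M^R=\mathrm{Tr}[\hat{\Pi}_M^\downarrow\,\mathcal{L}(\hat{R}(t))]$. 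The task is to show that at any first crossing time $\Sigma_M=\Sigma_M^R$ one has $\frac{d}{dt}\Sigma_M\le\frac{d}{dt}\Sigma_M^R$, which prevents $\Sigma_M$ from ever exceeding $\Sigma_M^R$. The delicate ingredient is controlling $\hat{Q}^\ast$ (which need not be Fock-diagonal) against the downward-shift structure of $\mathcal{L}$; I expect this to be where the real work lies, and where Ky Fan's principle (Lemma \ref{sumeig}) and the explicit binomial action of the attenuator on Fock states are used to identify the extremal configuration. The amplifier is analogous with the upward-shift generator, and the singular channels are precisely those for which this extremal (passive) configuration fails to be optimal.
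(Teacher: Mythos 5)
Your overall architecture is essentially the paper's own: the reduction to the quantum-limited attenuator via the amplifier-after-attenuator decomposition, the use of Hilbert--Schmidt duality (the amplifier's dual being proportional to an attenuator) and composition-stability, the reduction to finite-rank/projector inputs through Lemmata \ref{majprojlem}--\ref{corTr}, and finally a comparison of Ky Fan partial sums along the semigroup $e^{t\mathcal{L}}$. Your direct Metzler-cone argument for passive-preservation (the gap variables $d_k=x_k-x_{k+1}$ obeying $\dot{d}_k=-k\,d_k+(k+2)\,d_{k+1}$) is correct and is in fact a cleaner route to that particular property than the paper's, which obtains passivity-preservation only as a byproduct of the full comparison lemma.

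However, there is a genuine gap exactly where you say you ``expect the real work to lie,'' and it is not a small remainder: it is the entire quantum content of the theorem. Your crossing-time argument requires showing that, with $\hat{\Pi}_n(t)$ the spectral projector onto the top $n+1$ eigenvectors of $\hat{P}(t)=e^{t\mathcal{L}}(\hat{P})$ and $\hat{P}(t)=\sum_k d_k(t)\,\hat{\Pi}_k(t)$,
\begin{equation}
\mathrm{Tr}\left[\hat{\Pi}_n(t)\;\mathcal{L}\left(\hat{\Pi}_k(t)\right)\right]\;\leq\;\mathrm{Tr}\left[\hat{\Pi}_n^\downarrow\;\mathcal{L}\left(\hat{\Pi}_k^\downarrow\right)\right]
\end{equation}
for every pair $n,k$, i.e.\ that among all nested families of projectors the Fock projectors extremize the flux generated by the Lindbladian. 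Without this, the differential inequality $\frac{d}{dt}s_n\leq(n+1)(s_{n+1}-s_n)$ is unavailable and the Gronwall/first-crossing machinery has nothing to run on. The paper proves this inequality by a two-case operator argument: for $n\geq k$ one uses $\hat{\Pi}_n(t)\hat{\Pi}_k(t)=\hat{\Pi}_k(t)$ and $\hat{\Pi}_n(t)\leq\hat{\mathbb{I}}$ to bound the left side by $0$, while the support inclusion $\mathrm{supp}\,(\hat{a}\,\hat{\Pi}_k^\downarrow\,\hat{a}^\dag)\subseteq\mathrm{supp}\,\hat{\Pi}_n^\downarrow$ makes the right side exactly $0$; for $k\geq n+1$ one uses $\hat{\Pi}_k(t)\leq\hat{\mathbb{I}}$ together with the commutation relation to bound the left side by $n+1$, while $\mathrm{supp}\,(\hat{a}^\dag\hat{\Pi}_n^\downarrow\hat{a})\subseteq\mathrm{supp}\,\hat{\Pi}_k^\downarrow$ makes the right side exactly $n+1$. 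No binomial/Kraus computation is needed, but some such argument must be supplied for your proposal to close. A secondary, fixable issue: your appeal to Danskin's theorem breaks down at spectral degeneracies of $\hat{P}(t)$, where the top eigenprojector is not unique; the paper handles this by showing (via analyticity of the matrix elements and the discriminant) that degeneracies occur only at isolated times, and you would need the same lemma to make the derivative-comparison argument rigorous.
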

\begin{cor}
Any linear combination with positive coefficients of gauge-covariant quantum Gaussian channels is Fock-optimal.
\begin{proof}
Follows from Theorem \ref{maintheorem} and Lemma \ref{convexhull}.
\end{proof}
\end{cor}

In the remainder of this Section, we prove some general properties of Fock-optimality that will be needed in the main proof.

\begin{lem}\label{majprojprop}
Let $\Phi$ be a passive-preserving quantum operation.
If for any finite-rank projector $\hat{P}$
\begin{equation}\label{hypproj}
\Phi\left(\hat{P}\right)\prec_w\Phi\left(\hat{P}^\downarrow\right)\;,
\end{equation}
then $\Phi$ is Fock-optimal.
\begin{proof}
Let $\hat{X}$ be a positive trace-class operator as in \eqref{XPi}, with Fock rearrangement as in \eqref{X*Pi}.
Since $\Phi$ is passive-preserving, for any $n\in\mathbb{N}$
\begin{equation}
\Phi\left(\hat{\Pi}_n\right)\prec_w\Phi\left(\hat{\Pi}_n^\downarrow\right)=\Phi\left(\hat{\Pi}_n^\downarrow\right)^\downarrow\;.
\end{equation}
Then we can apply Lemma \ref{majsum} to
\begin{equation}
\Phi\left(\hat{X}\right)=\sum_{n=0}^\infty d_n\;\Phi\left(\hat{\Pi}_n\right)\prec_w\sum_{n=0}^\infty d_n\;\Phi\left(\hat{\Pi}_n^\downarrow\right)=\Phi\left(\hat{X}^\downarrow\right)\;,
\end{equation}
and the claim follows.
\end{proof}
\end{lem}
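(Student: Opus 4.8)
The plan is to reduce the claim about an arbitrary positive trace-class input to the hypothesis \eqref{hypproj}, which only concerns finite-rank projectors, by exploiting the ``layer-cake'' decomposition of $\hat{X}$ together with the linearity of $\Phi$. Concretely, I would diagonalize $\hat{X}$ as in \eqref{Xdiag} and write it as in \eqref{XPi}, namely $\hat{X}=\sum_{n=0}^\infty d_n\,\hat{\Pi}_n$ with $d_n=x_n-x_{n+1}\geq0$ and $\hat{\Pi}_n$ the projector onto the top $n+1$ eigenvectors; its Fock rearrangement is then $\hat{X}^\downarrow=\sum_{n=0}^\infty d_n\,\hat{\Pi}_n^\downarrow$ as in \eqref{X*Pi}. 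Applying $\Phi$ and using linearity gives $\Phi(\hat{X})=\sum_{n=0}^\infty d_n\,\Phi(\hat{\Pi}_n)$ and $\Phi(\hat{X}^\downarrow)=\sum_{n=0}^\infty d_n\,\Phi(\hat{\Pi}_n^\downarrow)$, so it suffices to control each summand and then recombine.

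For each fixed $n$, the hypothesis \eqref{hypproj} applied to the finite-rank projector $\hat{\Pi}_n$ yields $\Phi(\hat{\Pi}_n)\prec_w\Phi(\hat{\Pi}_n^\downarrow)$. The crucial observation is that the dominating operator is passive: $\hat{\Pi}_n^\downarrow$ is the projector onto the first $n+1$ Fock states and is therefore passive, so passive-preservation forces $\Phi(\hat{\Pi}_n^\downarrow)=\Phi(\hat{\Pi}_n^\downarrow)^\downarrow$. Since scaling by $d_n\geq0$ affects neither weak submajorization nor passivity, I obtain for every $n$ a pair $d_n\,\Phi(\hat{\Pi}_n)\prec_w d_n\,\Phi(\hat{\Pi}_n^\downarrow)$ in which the right-hand term is passive.

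The final step is to sum these term-by-term relations, and this is exactly where Lemma \ref{majsum} is invoked: given sequences with $\hat{Y}_n=\hat{Y}_n^\downarrow$ and $\hat{X}_n\prec_w\hat{Y}_n$, one has $\sum_n\hat{X}_n\prec_w\sum_n\hat{Y}_n$, both sides being trace-class since $\Phi$ maps the trace-class input into $\mathfrak{T}_B$. Applying it with $\hat{X}_n=d_n\,\Phi(\hat{\Pi}_n)$ and $\hat{Y}_n=d_n\,\Phi(\hat{\Pi}_n^\downarrow)$ delivers $\Phi(\hat{X})\prec_w\Phi(\hat{X}^\downarrow)$, i.e.\ Fock-optimality.

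I expect the only delicate point to be this last recombination, and it is worth stressing why passivity of the dominating terms is indispensable there. Weak submajorization is \emph{not} additive in general: from $\hat{X}_n\prec_w\hat{Y}_n$ one cannot naively conclude $\sum_n\hat{X}_n\prec_w\sum_n\hat{Y}_n$, because the partial-sum inequalities defining $\prec_w$ are witnessed by different eigenprojectors for different $n$. The assumption $\hat{Y}_n=\hat{Y}_n^\downarrow$ circumvents this, since by Lemma \ref{majprojlem} the relation $\hat{X}_n\prec_w\hat{Y}_n$ becomes equivalent to $\mathrm{Tr}[\hat{P}\,\hat{X}_n]\leq\mathrm{Tr}[\hat{P}^\downarrow\,\hat{Y}_n]$ for \emph{every} finite-rank projector $\hat{P}$ at once, a form that does add over $n$. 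This common-projector characterization is precisely what Lemma \ref{majsum} packages, so the sole function of the passive-preservation hypothesis is to place the right-hand comparison operators in the passive class where it is available.
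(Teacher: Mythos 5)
Your proof is correct and follows essentially the same route as the paper: the layer-cake decomposition \eqref{XPi}--\eqref{X*Pi}, the hypothesis \eqref{hypproj} applied to each $\hat{\Pi}_n$ with passive-preservation ensuring $\Phi(\hat{\Pi}_n^\downarrow)$ is passive, and Lemma \ref{majsum} to recombine. Your closing remark on why passivity of the dominating terms is what makes the term-by-term summation legitimate is a correct and useful gloss on the role of Lemma \ref{majprojlem}, but it does not alter the argument.
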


\begin{lem}\label{conjPQprop}
A quantum operation $\Phi$ is passive-preserving and Fock-optimal iff
\begin{equation}\label{conjPQeq}
\mathrm{Tr}\left[\hat{Q}\;\Phi\left(\hat{P}\right)\right]\leq\mathrm{Tr}\left[\hat{Q}^\downarrow\;\Phi\left(\hat{P}^\downarrow\right)\right]
\end{equation}
for any two finite-rank projectors $\hat{Q}$ and $\hat{P}$.
\begin{proof}
Let us first suppose that $\Phi$ is passive-preserving and Fock-optimal, and let $\hat{P}$ and $\hat{Q}$ be finite-rank projectors.
Then
\begin{equation}
\Phi\left(\hat{P}\right)\prec_w\Phi\left(\hat{P}^\downarrow\right)=\Phi\left(\hat{P}^\downarrow\right)^\downarrow\;,
\end{equation}
and \eqref{conjPQeq} follows from Lemma \ref{majprojlem}.

Let us now suppose that \eqref{conjPQeq} holds for any finite-rank projectors $\hat{P}$ and $\hat{Q}$.
Choosing $\hat{P}$ passive, we get
\begin{equation}
\mathrm{Tr}\left[\hat{Q}\;\Phi\left(\hat{P}\right)\right]\leq\mathrm{Tr}\left[\hat{Q}^\downarrow\;\Phi\left(\hat{P}\right)\right]\;,
\end{equation}
and from Lemma \ref{*symtr} also $\Phi\left(\hat{P}\right)$ is passive, so from Lemma \ref{phistar} $\Phi$ is passive-preserving.
Choosing now a generic $\hat{P}$, by Lemma \ref{majprojlem}
\begin{equation}
\Phi\left(\hat{P}\right)\prec_w\Phi\left(\hat{P}^\downarrow\right)\;,
\end{equation}
and from Lemma \ref{majprojprop} $\Phi$ is also Fock-optimal.
\end{proof}
\end{lem}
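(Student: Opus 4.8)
The plan is to establish the two implications of the equivalence separately, in each case reducing the statement to the projector-level characterizations of passivity and weak submajorization proved above. For the forward direction, suppose $\Phi$ is passive-preserving and Fock-optimal and fix finite-rank projectors $\hat{P}$ and $\hat{Q}$. Fock-optimality gives $\Phi(\hat{P})\prec_w\Phi(\hat{P}^\downarrow)$, and since $\hat{P}^\downarrow$ is passive, passive-preservation forces $\Phi(\hat{P}^\downarrow)$ to be passive, i.e. $\Phi(\hat{P}^\downarrow)=\Phi(\hat{P}^\downarrow)^\downarrow$. These are precisely the hypotheses of the second part of Lemma \ref{majprojlem}, applied with $\hat{X}=\Phi(\hat{P})$, $\hat{Y}=\Phi(\hat{P}^\downarrow)$ and the projector $\hat{Q}$, which immediately yields $\mathrm{Tr}[\hat{Q}\,\Phi(\hat{P})]\leq\mathrm{Tr}[\hat{Q}^\downarrow\,\Phi(\hat{P}^\downarrow)]$, that is, \eqref{conjPQeq}.

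For the converse, assume \eqref{conjPQeq} for all finite-rank projectors. I would first extract passive-preservation by specializing to a passive $\hat{P}$, so that $\hat{P}=\hat{P}^\downarrow$ and \eqref{conjPQeq} collapses to $\mathrm{Tr}[\hat{Q}\,\Phi(\hat{P})]\leq\mathrm{Tr}[\hat{Q}^\downarrow\,\Phi(\hat{P})]$ for every projector $\hat{Q}$. By Lemma \ref{*symtr} this says exactly that $\Phi(\hat{P})$ is passive, so $\Phi$ sends passive projectors to passive operators, and Lemma \ref{phistar} promotes this to full passive-preservation. With passive-preservation in hand, fix now an arbitrary finite-rank projector $\hat{P}$; reading \eqref{conjPQeq} over all $\hat{Q}$ and invoking the first part of Lemma \ref{majprojlem} with $\hat{X}=\Phi(\hat{P})$ and $\hat{Y}=\Phi(\hat{P}^\downarrow)$ gives $\Phi(\hat{P})\prec_w\Phi(\hat{P}^\downarrow)$, which is hypothesis \eqref{hypproj} of Lemma \ref{majprojprop}. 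That lemma, together with the established passive-preservation, yields Fock-optimality.

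The proof is essentially a gluing together of the four preceding lemmas, so I do not expect any genuinely hard computational step. The one point that demands care is keeping the roles of the two projectors straight in each application of Lemma \ref{majprojlem}: its second part requires the majorizing operator to be passive, which is why passive-preservation of $\Phi$ on $\hat{P}^\downarrow$ must be invoked \emph{before} the submajorization relation can be converted into the trace inequality \eqref{conjPQeq}, and symmetrically why passive-preservation has to be secured first in the converse direction before Lemma \ref{majprojprop} can be applied.
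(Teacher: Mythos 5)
Your proposal is correct and follows essentially the same route as the paper's own proof: the forward direction combines Fock-optimality and passive-preservation with the second part of Lemma \ref{majprojlem}, and the converse first extracts passive-preservation via Lemmata \ref{*symtr} and \ref{phistar} before applying the first part of Lemma \ref{majprojlem} together with Lemma \ref{majprojprop}. Your version merely spells out more explicitly the role of $\Phi(\hat{P}^\downarrow)=\Phi(\hat{P}^\downarrow)^\downarrow$ in each application, which the paper leaves implicit.
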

We can now prove the two fundamental properties of Fock-optimality:
\begin{thm}\label{Phidag}
Let $\Phi$ be a quantum operation with the restriction of its Hilbert-Schmidt dual $\Phi^\dag$ (see \eqref{dagdef} in Appendix \ref{appG}) to trace-class operators continuous in the trace norm.
Then, $\Phi$ is passive-preserving and Fock-optimal iff $\Phi^\dag$ is passive-preserving and Fock-optimal.
\begin{proof}
Condition \eqref{conjPQeq} can be rewritten as
\begin{equation}\label{PQdag}
\mathrm{Tr}\left[\Phi^\dag\left(\hat{Q}\right)\hat{P}\right]\leq\mathrm{Tr}\left[\Phi^\dag\left(\hat{Q}^\downarrow\right)\hat{P}^\downarrow\right]\;,
\end{equation}
and is therefore symmetric for $\Phi$ and $\Phi^\dag$.
\end{proof}
\end{thm}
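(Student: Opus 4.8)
The plan is to reduce the statement to the two-projector criterion of Lemma \ref{conjPQprop} and then to observe that passing to the Hilbert-Schmidt dual merely interchanges the roles of the two projectors, leaving the criterion invariant.

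First I would invoke Lemma \ref{conjPQprop}, according to which a quantum operation $\Psi$ is passive-preserving and Fock-optimal if and only if
\begin{equation}
\mathrm{Tr}\left[\hat{Q}\;\Psi\left(\hat{P}\right)\right]\leq\mathrm{Tr}\left[\hat{Q}^\downarrow\;\Psi\left(\hat{P}^\downarrow\right)\right]
\end{equation}
holds for every pair of finite-rank projectors $\hat{P}$ and $\hat{Q}$. For $\Psi=\Phi$ this is exactly the condition \eqref{conjPQeq}.

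Next, since $\hat{P}$, $\hat{Q}$ and their Fock rearrangements are self-adjoint and $\Phi$ (hence $\Phi^\dag$) commutes with Hermitian conjugation, the defining relation of the Hilbert-Schmidt dual gives $\mathrm{Tr}[\hat{Q}\,\Phi(\hat{P})]=\mathrm{Tr}[\Phi^\dag(\hat{Q})\,\hat{P}]$ together with the analogous identity for the rearranged projectors. Substituting these into \eqref{conjPQeq} recasts it as
\begin{equation}
\mathrm{Tr}\left[\Phi^\dag\left(\hat{Q}\right)\hat{P}\right]\leq\mathrm{Tr}\left[\Phi^\dag\left(\hat{Q}^\downarrow\right)\hat{P}^\downarrow\right]\;,
\end{equation}
valid for all finite-rank projectors $\hat{P},\hat{Q}$. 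Relabelling $\hat{P}\leftrightarrow\hat{Q}$ and using that the trace of a product is symmetric, this is literally the two-projector criterion of Lemma \ref{conjPQprop} applied to $\Psi=\Phi^\dag$. Thus the family of scalar inequalities characterising ``passive-preserving and Fock-optimal'' is the same for $\Phi$ and for $\Phi^\dag$, and the two-sided equivalence of the theorem follows at once.

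The one step that requires genuine care — and the natural place for the hypothesis to enter — is the legitimacy of applying Lemma \ref{conjPQprop} to $\Phi^\dag$ itself, since that lemma is stated for quantum operations. Complete positivity of $\Phi^\dag$ is inherited from that of $\Phi$, while the requirement that $\Phi^\dag$ send trace-class operators to trace-class operators continuously in the trace norm is exactly the standing hypothesis of the theorem. This continuity is what allows the finite-rank-projector reduction underlying Lemmas \ref{phistar}, \ref{majprojlem} and \ref{conjPQprop} to be carried out for $\Phi^\dag$ as well; without it the dual would naturally act only on the full algebra of bounded operators, and the passivity machinery, which is built for trace-class operators, could not be invoked.
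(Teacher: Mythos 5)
Your proposal is correct and follows essentially the same route as the paper: both reduce to the two-projector criterion of Lemma \ref{conjPQprop}, rewrite it via the Hilbert-Schmidt duality relation, and observe the resulting symmetry under exchanging $\hat{P}$ and $\hat{Q}$. Your additional remark on why the trace-norm continuity hypothesis is needed to apply the criterion to $\Phi^\dag$ is a faithful elaboration of a point the paper leaves implicit.
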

\begin{thm}\label{qcirc}
Let $\Phi_1$ and $\Phi_2$ be passive-preserving and Fock-optimal quantum operations with the restriction of $\Phi_2^\dag$ to trace-class operators continuous in the trace norm.
Then, their composition $\Phi_2\circ\Phi_1$ is also passive-preserving and Fock-optimal.
\begin{proof}
Let $\hat{P}$ and $\hat{Q}$ be finite-rank projectors.
Since $\Phi_2$ is Fock-optimal and passive-preserving,
\begin{equation}
\Phi_2\left(\Phi_1\left(\hat{P}\right)\right)\prec_w\Phi_2\left(\Phi_1\left(\hat{P}\right)^\downarrow\right)=\Phi_2\left(\Phi_1\left(\hat{P}\right)^\downarrow\right)^\downarrow\;,
\end{equation}
and by Lemma \ref{majprojlem}
\begin{equation}\label{eqphi12}
\mathrm{Tr}\left[\hat{Q}\;\Phi_2\left(\Phi_1\left(\hat{P}\right)\right)\right] \leq \mathrm{Tr}\left[\hat{Q}^\downarrow\;\Phi_2\left(\Phi_1\left(\hat{P}\right)^\downarrow\right)\right]= \mathrm{Tr}\left[\Phi_2^\dag\left(\hat{Q}^\downarrow\right)\Phi_1\left(\hat{P}\right)^\downarrow\right]\;.
\end{equation}
Since $\Phi_1$ is Fock-optimal and passive-preserving,
\begin{equation}
\Phi_1\left(\hat{P}\right)^\downarrow\prec_w\Phi_1\left(\hat{P}^\downarrow\right)=\Phi_1\left(\hat{P}^\downarrow\right)^\downarrow\;.
\end{equation}
From Theorem \ref{Phidag} also $\Phi_2^\dag$ is passive-preserving, and $\Phi_2^\dag\left(\hat{Q}^\downarrow\right)$ is passive.
Lemma \ref{corTr} implies then
\begin{equation}\label{eqphi3}
\mathrm{Tr}\left[\Phi_2^\dag\left(\hat{Q}^\downarrow\right)\Phi_1\left(\hat{P}\right)^\downarrow\right] \leq \mathrm{Tr}\left[\Phi_2^\dag\left(\hat{Q}^\downarrow\right)\Phi_1\left(\hat{P}^\downarrow\right)\right]= \mathrm{Tr}\left[\hat{Q}^\downarrow\;\Phi_2\left(\Phi_1\left(\hat{P}^\downarrow\right)\right)\right]\;,
\end{equation}
and the claim follows from Lemma \ref{conjPQprop} combining \eqref{eqphi3} with \eqref{eqphi12}.
\end{proof}
\end{thm}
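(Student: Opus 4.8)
The plan is to reduce everything to the projector characterization furnished by Lemma \ref{conjPQprop}: it suffices to show that for any two finite-rank projectors $\hat{P}$ and $\hat{Q}$ one has $\mathrm{Tr}[\hat{Q}\,\Phi_2(\Phi_1(\hat{P}))]\leq\mathrm{Tr}[\hat{Q}^\downarrow\,\Phi_2(\Phi_1(\hat{P}^\downarrow))]$. I would then build this inequality as a two-step chain, exploiting first the good behaviour of $\Phi_2$ and then that of $\Phi_1$, linking the two halves through the Hilbert--Schmidt duality $\mathrm{Tr}[\hat{Q}^\downarrow\,\Phi_2(\cdot)]=\mathrm{Tr}[\Phi_2^\dag(\hat{Q}^\downarrow)\,\cdot]$.

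For the first half I would feed the operator $\Phi_1(\hat{P})$ into $\Phi_2$. Since $\Phi_2$ is Fock-optimal, $\Phi_2(\Phi_1(\hat{P}))\prec_w\Phi_2(\Phi_1(\hat{P})^\downarrow)$, and since $\Phi_2$ is passive-preserving while $\Phi_1(\hat{P})^\downarrow$ is passive, the majorant $\Phi_2(\Phi_1(\hat{P})^\downarrow)$ is itself passive. Hence the second part of Lemma \ref{majprojlem} applies with $\hat{\Pi}=\hat{Q}$ and gives $\mathrm{Tr}[\hat{Q}\,\Phi_2(\Phi_1(\hat{P}))]\leq\mathrm{Tr}[\hat{Q}^\downarrow\,\Phi_2(\Phi_1(\hat{P})^\downarrow)]=\mathrm{Tr}[\Phi_2^\dag(\hat{Q}^\downarrow)\,\Phi_1(\hat{P})^\downarrow]$.

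For the second half the goal is to replace $\Phi_1(\hat{P})^\downarrow$ by $\Phi_1(\hat{P}^\downarrow)$. Fock-optimality of $\Phi_1$ gives $\Phi_1(\hat{P})\prec_w\Phi_1(\hat{P}^\downarrow)$, and since weak submajorization depends only on the spectrum this passes to $\Phi_1(\hat{P})^\downarrow\prec_w\Phi_1(\hat{P}^\downarrow)$; moreover $\Phi_1(\hat{P}^\downarrow)$ is passive because $\Phi_1$ is passive-preserving. To turn this submajorization into a trace inequality I need the factor $\Phi_2^\dag(\hat{Q}^\downarrow)$ to be passive: here Theorem \ref{Phidag} enters, for, using the assumed trace-norm continuity of $\Phi_2^\dag$, it certifies that $\Phi_2^\dag$ is again passive-preserving, whence $\Phi_2^\dag(\hat{Q}^\downarrow)$ is passive and equals its own rearrangement. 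Lemma \ref{corTr} then yields $\mathrm{Tr}[\Phi_2^\dag(\hat{Q}^\downarrow)\,\Phi_1(\hat{P})^\downarrow]\leq\mathrm{Tr}[\Phi_2^\dag(\hat{Q}^\downarrow)\,\Phi_1(\hat{P}^\downarrow)]=\mathrm{Tr}[\hat{Q}^\downarrow\,\Phi_2(\Phi_1(\hat{P}^\downarrow))]$, and concatenating the two halves closes the argument through Lemma \ref{conjPQprop}.

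The delicate point is precisely this second half: Fock-optimality only compares $\Phi_1(\hat{P})$ with $\Phi_1(\hat{P}^\downarrow)$ in submajorization, whereas the factor sitting in the trace is the rearranged operator $\Phi_1(\hat{P})^\downarrow$, which in general differs from $\Phi_1(\hat{P})$. I expect the main obstacle to be arranging the argument so that the operator multiplying it is passive, since only then does Lemma \ref{corTr} convert a submajorization relation into the required trace inequality; this is exactly what forces the hypothesis that $\Phi_2^\dag$ be continuous on trace-class operators, so that Theorem \ref{Phidag} can be invoked. The duality symmetry of condition \eqref{conjPQeq} noted in the proof of Theorem \ref{Phidag} is what makes the passage across both $\Phi_2$ and $\Phi_2^\dag$ legitimate.
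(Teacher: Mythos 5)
Your proposal is correct and follows essentially the same route as the paper's own proof: the reduction via Lemma \ref{conjPQprop}, the first inequality from Fock-optimality and passive-preservation of $\Phi_2$ together with Lemma \ref{majprojlem} and Hilbert--Schmidt duality, and the second from Fock-optimality of $\Phi_1$, passivity of $\Phi_2^\dag(\hat{Q}^\downarrow)$ via Theorem \ref{Phidag}, and Lemma \ref{corTr}. Your explicit remark that weak submajorization depends only on the spectrum (to replace $\Phi_1(\hat{P})$ by $\Phi_1(\hat{P})^\downarrow$) is used implicitly in the paper as well, so there is no substantive difference.
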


\begin{lem}\label{Phifinite}
Let $\Phi$ be a quantum operation continuous in the Hilbert-Schmidt norm.
Let us suppose that for any $N\in\mathbb{N}$ its restriction to the span of the first $N+1$ Fock states is passive-preserving and Fock-optimal, i.e. for any positive operator $\hat{X}$ supported on the span of the first $N+1$ Fock states
\begin{equation}
\Phi\left(\hat{X}\right)\prec_w\Phi\left(\hat{X}^\downarrow\right)=\Phi\left(\hat{X}^\downarrow\right)^\downarrow\;.
\end{equation}
Then, $\Phi$ is passive-preserving and Fock-optimal.
\begin{proof}
Let $\hat{P}$ and $\hat{Q}$ be two generic finite-rank projectors.
Since the restriction of $\Phi$ to the support of $\hat{\Pi}_N^\downarrow$ is Fock-optimal and passive-preserving,
\begin{equation}
\Phi\left(\hat{\Pi}_N^\downarrow\;\hat{P}\;\hat{\Pi}_N^\downarrow\right) \prec_w \Phi\left(\left(\hat{\Pi}_N^\downarrow\;\hat{P}\;\hat{\Pi}_N^\downarrow\right)^\downarrow\right)= \left(\Phi\left(\left(\hat{\Pi}_N^\downarrow\;\hat{P}\;\hat{\Pi}_N^\downarrow\right)^\downarrow\right)\right)^\downarrow\;.
\end{equation}
Then, from Lemma \ref{majprojlem}
\begin{equation}\label{TrPQN}
\mathrm{Tr}\left[\hat{Q}\;\Phi\left(\hat{\Pi}_N^\downarrow\;\hat{P}\;\hat{\Pi}_N^\downarrow\right)\right] \leq \mathrm{Tr}\left[\hat{Q}^\downarrow\;\Phi\left(\left(\hat{\Pi}_N^\downarrow\;\hat{P}\;\hat{\Pi}_N^\downarrow\right)^\downarrow\right)\right]\;.
\end{equation}
From Lemma \ref{PXPlem},
\begin{equation}
\left\|\hat{\Pi}_N^\downarrow\;\hat{P}\;\hat{\Pi}_N^\downarrow-\hat{P}\right\|_2\to0\qquad\text{for}\;N\to\infty\;,
\end{equation}
and since $\Phi$, the Fock rearrangement (see Lemma \ref{HS**}) and the Hilbert-Schmidt product are continuous in the Hilbert-Schmidt norm, we can take the limit $N\to\infty$ in \eqref{TrPQN} and get
\begin{equation}
\mathrm{Tr}\left[\hat{Q}\;\Phi\left(\hat{P}\right)\right] \leq \mathrm{Tr}\left[\hat{Q}^\downarrow\;\Phi\left(\hat{P}^\downarrow\right)\right]\;.
\end{equation}
The claim now follows from Lemma \ref{conjPQprop}.
\end{proof}
\end{lem}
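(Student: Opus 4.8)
The plan is to reduce the claim to the projector criterion of Lemma~\ref{conjPQprop}, so that it suffices to establish
\begin{equation}
\mathrm{Tr}\left[\hat{Q}\;\Phi\left(\hat{P}\right)\right]\leq\mathrm{Tr}\left[\hat{Q}^\downarrow\;\Phi\left(\hat{P}^\downarrow\right)\right]
\end{equation}
for every pair of finite-rank projectors $\hat{P}$ and $\hat{Q}$. The obstruction to applying the hypothesis directly is that $\hat{P}$ need not be supported on any finite Fock subspace. The natural fix is to work with the truncated input $\hat{\Pi}_N^\downarrow\,\hat{P}\,\hat{\Pi}_N^\downarrow$, which is a positive operator supported on the span of the first $N+1$ Fock states, prove the inequality at each level $N$, and then pass to the limit $N\to\infty$.

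First I would invoke the finite-dimensional hypothesis on the positive operator $\hat{\Pi}_N^\downarrow\,\hat{P}\,\hat{\Pi}_N^\downarrow$. Since the restriction of $\Phi$ to this subspace is assumed passive-preserving and Fock-optimal, this yields
\begin{equation}
\Phi\left(\hat{\Pi}_N^\downarrow\,\hat{P}\,\hat{\Pi}_N^\downarrow\right)\prec_w\Phi\left(\left(\hat{\Pi}_N^\downarrow\,\hat{P}\,\hat{\Pi}_N^\downarrow\right)^\downarrow\right)\;,
\end{equation}
with the right-hand side passive. The passivity of the right-hand side is exactly what lets Lemma~\ref{majprojlem} turn this weak submajorization into the projector inequality
\begin{equation}
\mathrm{Tr}\left[\hat{Q}\;\Phi\left(\hat{\Pi}_N^\downarrow\,\hat{P}\,\hat{\Pi}_N^\downarrow\right)\right]\leq\mathrm{Tr}\left[\hat{Q}^\downarrow\;\Phi\left(\left(\hat{\Pi}_N^\downarrow\,\hat{P}\,\hat{\Pi}_N^\downarrow\right)^\downarrow\right)\right]\;.
\end{equation}

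The last step is the limit $N\to\infty$. By Lemma~\ref{PXPlem} we have $\hat{\Pi}_N^\downarrow\,\hat{P}\,\hat{\Pi}_N^\downarrow\to\hat{P}$ in the Hilbert-Schmidt norm, and combining this with the Hilbert-Schmidt continuity of the Fock rearrangement (Lemma~\ref{HS**}) gives $\left(\hat{\Pi}_N^\downarrow\,\hat{P}\,\hat{\Pi}_N^\downarrow\right)^\downarrow\to\hat{P}^\downarrow$ as well. Since $\Phi$ is continuous in the Hilbert-Schmidt norm and the pairing $\hat{A}\mapsto\mathrm{Tr}[\hat{Q}\,\hat{A}]$ with the fixed finite-rank operator $\hat{Q}$ is continuous in the same norm, both sides of the inequality converge to the desired quantities. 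This establishes the projector criterion, and Lemma~\ref{conjPQprop} then delivers both passive-preservation and Fock-optimality of $\Phi$.

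I expect the genuine difficulty to lie entirely in this limiting argument, not in any computation. Two points require care: the compression $\hat{\Pi}_N^\downarrow\,\hat{P}\,\hat{\Pi}_N^\downarrow$ is positive but generally not a projector, so it is essential that the hypothesis be stated for all positive operators on the truncated subspace rather than for projectors alone; and one must ensure that the rearrangement of the compression truly converges to $\hat{P}^\downarrow$, which is precisely the role played by Lemmas~\ref{PXPlem} and~\ref{HS**} used in tandem. Once the Hilbert-Schmidt continuity of $\Phi$ is in hand, interchanging the limit with the trace is routine.
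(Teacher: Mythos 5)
Your proposal is correct and follows essentially the same route as the paper's own proof: truncate $\hat{P}$ to $\hat{\Pi}_N^\downarrow\,\hat{P}\,\hat{\Pi}_N^\downarrow$, apply the finite-dimensional hypothesis together with Lemma \ref{majprojlem} to get the projector inequality at each $N$, and pass to the limit using Lemmas \ref{PXPlem} and \ref{HS**} plus the Hilbert-Schmidt continuity of $\Phi$, concluding via Lemma \ref{conjPQprop}. Your remarks about the compression not being a projector and about the convergence of its rearrangement are exactly the points the paper's argument relies on.
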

\begin{lem}\label{convexhull}
Let $\Phi_1$ and $\Phi_2$ be Fock-optimal and passive-preserving quantum operations.
Then, also $\Phi_1+\Phi_2$ is Fock-optimal and passive-preserving.
\begin{proof}
Easily follows from Lemma \ref{conjPQprop}.
\end{proof}
\end{lem}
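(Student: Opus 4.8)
The plan is to reduce everything to Lemma \ref{conjPQprop}, whose scalar characterization of the combined property ``passive-preserving and Fock-optimal'' is manifestly \emph{linear} in the quantum operation. Observe first that $\Phi_1+\Phi_2$ is again a quantum operation: it commutes with hermitian conjugation, since each $\Phi_j$ does, and it is completely positive, being a sum of completely positive maps. Note that no trace-preservation is needed here, so there is no concern about the total trace doubling under the sum.

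Next I would apply the forward implication of Lemma \ref{conjPQprop} to each of $\Phi_1$ and $\Phi_2$. Since both are passive-preserving and Fock-optimal, for any two finite-rank projectors $\hat{P}$ and $\hat{Q}$ we have
\[
\mathrm{Tr}\left[\hat{Q}\;\Phi_j\left(\hat{P}\right)\right]\leq\mathrm{Tr}\left[\hat{Q}^\downarrow\;\Phi_j\left(\hat{P}^\downarrow\right)\right]\;,\qquad j=1,2\;.
\]
Adding the two inequalities and using the linearity of the trace together with the identity $(\Phi_1+\Phi_2)(\hat{X})=\Phi_1(\hat{X})+\Phi_2(\hat{X})$, I obtain
\[
\mathrm{Tr}\left[\hat{Q}\;\left(\Phi_1+\Phi_2\right)\left(\hat{P}\right)\right]\leq\mathrm{Tr}\left[\hat{Q}^\downarrow\;\left(\Phi_1+\Phi_2\right)\left(\hat{P}^\downarrow\right)\right]\;.
\]

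Finally, since this inequality holds for every pair of finite-rank projectors, the converse implication of Lemma \ref{conjPQprop} applies to $\Phi_1+\Phi_2$ and immediately yields that $\Phi_1+\Phi_2$ is passive-preserving and Fock-optimal, as claimed. There is essentially no obstacle: the only point worth checking is that the characterizing inequality of Lemma \ref{conjPQprop} carries no hidden nonlinear dependence on $\Phi$, and it does not, since both the trace and the evaluation $\Phi\mapsto\Phi(\hat{P})$ are linear. This is precisely why the sum behaves so gently in contrast with composition (Theorem \ref{qcirc}), which had to invoke the dual map and a continuity hypothesis; here the stability of the property under (positive, in fact arbitrary) linear combinations is automatic, exactly as required in the corollary to Theorem \ref{maintheorem}.
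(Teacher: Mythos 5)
Your proof is correct and follows exactly the route the paper intends: the paper's own (one-line) proof is to invoke Lemma \ref{conjPQprop}, whose characterizing inequality is linear in $\Phi$, so summing the inequalities for $\Phi_1$ and $\Phi_2$ and applying the converse direction to $\Phi_1+\Phi_2$ gives the claim. Your additional check that $\Phi_1+\Phi_2$ is again a quantum operation is a sensible, if minor, completion of the same argument.
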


\section{Proof of the main Theorem}\label{mainproof}
First, we can reduce the problem to the quantum-limited attenuator:
\begin{lem}\label{att->all}
If the  quantum-limited attenuator is passive-preserving and Fock-optimal, the property extends to any  gauge-covariant quantum Gaussian channel.
\begin{proof}
From Section \ref{secatta} of Appendix \ref{appG}, any quantum  gauge-covariant Gaussian channel can be obtained composing a quantum-limited attenuator with a quantum-limited amplifier.
Moreover, the Hilbert-Schmidt dual of a quantum-limited amplifier is proportional to a quantum-limited attenuator, and from Lemma \ref{Phidag} also the amplifier is passive-preserving and Fock-optimal.
Finally, the claim follows from Theorem \ref{qcirc}.
\end{proof}
\end{lem}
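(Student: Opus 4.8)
The plan is to exploit the structural decomposition of gauge-covariant Gaussian channels together with the two stability properties of Fock-optimality already established in this Section, namely closure under Hilbert-Schmidt duality (Theorem \ref{Phidag}) and closure under composition (Theorem \ref{qcirc}). First I would invoke the classification recalled in Appendix \ref{appG}: every one-mode gauge-covariant quantum Gaussian channel factorizes as a quantum-limited attenuator and a quantum-limited amplifier. This reduces the statement to showing that each of the two elementary building blocks is passive-preserving and Fock-optimal, since Theorem \ref{qcirc} then propagates both properties to their composition, and hence to the arbitrary gauge-covariant channel.

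By hypothesis the quantum-limited attenuator already enjoys the desired property, so the only remaining task is the amplifier. Here the key observation is that the Hilbert-Schmidt dual of a quantum-limited amplifier is, up to a positive multiplicative constant, a quantum-limited attenuator. Since passivity-preservation and Fock-optimality are insensitive to rescaling by a positive scalar (weak submajorization is homogeneous and a positive rescaling does not alter the eigenbasis, so a positive multiple of an attenuator inherits both properties), the dual of the amplifier is passive-preserving and Fock-optimal. Theorem \ref{Phidag}, asserting that a map has these two properties iff its dual does, then transfers them back to the amplifier itself. With both blocks settled, Theorem \ref{qcirc} closes the argument.

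The main point requiring care, rather than a genuine obstacle, is verifying the continuity hypotheses attached to Theorems \ref{Phidag} and \ref{qcirc} --- that the restriction of the relevant dual map to trace-class operators is continuous in the trace norm --- for the explicit attenuator and amplifier, and checking that the positive prefactor relating the amplifier's dual to an attenuator does not disturb the majorization relations. Both are routine given the explicit Kraus representation \eqref{kraus} and the duality between amplification and attenuation, so the real content of the lemma is entirely contained in the abstract composition and duality theorems proved above.
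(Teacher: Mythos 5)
Your proposal is correct and follows essentially the same route as the paper: decompose the gauge-covariant channel into attenuator composed with amplifier, transfer the properties to the amplifier via its Hilbert-Schmidt dual (an attenuator up to a positive constant) using Theorem \ref{Phidag}, and conclude with the composition result of Theorem \ref{qcirc}. Your added remarks on the harmlessness of the positive prefactor and the trace-norm continuity hypotheses are correct refinements of points the paper leaves implicit.
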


By Lemma \ref{Phifinite}, we can restrict to quantum states $\hat{\rho}$ supported on the span of the first $N+1$ Fock states.
Let now
\begin{equation}
\hat{\rho}(t)=e^{t\mathcal{L}}\left(\hat{\rho}\right)\;,
\end{equation}
where $\mathcal{L}$ is the generator of the quantum-limited attenuator defined in \eqref{lindblad}.
From the explicit representation \eqref{kraus}, it is easy to see that $\hat{\rho}(t)$ remains supported on the span of the first $N+1$ Fock states for any $t\geq0$.
In finite dimension, the quantum states with nondegenerate spectrum are dense in the set of all quantum states.
Besides, the spectrum is a continuous function of the operator, and any linear map is continuous.
Then, without loss of generality we can suppose that $\hat{\rho}$ has nondegenerate spectrum.
Let
\begin{equation}
p(t)=\left(p_0(t),\ldots,p_N(t)\right)
\end{equation}
be the vectors of the eigenvalues of $\hat{\rho}(t)$ in decreasing order, and let
\begin{equation}
s_n(t)=\sum_{i=0}^n p_i(t)\;,\qquad n=0,\ldots,\,N\;,
\end{equation}
their partial sums, that we similarly collect into the vector $s(t)$.
Let instead
\begin{equation}\label{pndt}
p_n^\downarrow(t)=\langle n|e^{t\mathcal{L}}\left(\hat{\rho}^\downarrow\right)|n\rangle\;,\qquad n=0,\,\ldots,\,N
\end{equation}
be the eigenvalues of $e^{t\mathcal{L}}\left(\hat{\rho}^\downarrow\right)$ (recall that it is diagonal in the Fock basis for any $t\geq0$), and
\begin{equation}
s_n^\downarrow(t)=\sum_{i=0}^n p_i^\downarrow(t)\;,\qquad n=0,\,\ldots,\,N\;,
\end{equation}
their partial sums.
We notice that $p(0)=p^\downarrow(0)$ and then $s(0)=s^\downarrow(0)$.
Combining \eqref{pndt} with the expression for the Lindbladian \eqref{lindblad}, with the help of \eqref{acta} it is easy to see that the eigenvalues $p_n^\downarrow(t)$ satisfy
\begin{equation}
\frac{d}{dt}p_n^\downarrow(t)=\left(n+1\right)p_{n+1}^\downarrow(t)-n\,p_n^\downarrow(t)\;,
\end{equation}
implying
\begin{equation}
\frac{d}{dt}s_n^\downarrow(t)=(n+1)\left(s^\downarrow_{n+1}(t)-s^\downarrow_n(t)\right)
\end{equation}
for their partial sums.
The proof of Theorem \ref{maintheorem} is a consequence of:
\begin{lem}\label{deg}
The spectrum of $\hat{\rho}(t)$ can be degenerate at most in isolated points.
\end{lem}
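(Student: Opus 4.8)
The plan is to exploit the fact that, once we have restricted to the finite-dimensional space of operators supported on the span of the first $N+1$ Fock states, the map $t\mapsto\hat{\rho}(t)=e^{t\mathcal{L}}(\hat{\rho})$ is the action of a matrix exponential, so that every matrix element $\langle m|\hat{\rho}(t)|n\rangle$ is an \emph{entire} function of $t$. I will therefore convert the statement about degeneracies of the spectrum into a statement about the zeros of a single scalar analytic function, for which isolatedness is automatic.

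Concretely, I would first record that $\mathcal{L}$ maps the finite-dimensional space of operators supported on $\mathrm{span}\{|0\rangle,\ldots,|N\rangle\}$ into itself: the only ``lowering'' term $\hat{a}\,\hat{X}\,\hat{a}^\dag$ cannot raise the photon number, while the remaining terms are diagonal in that respect. Hence $e^{t\mathcal{L}}$ is a genuine matrix exponential on this finite-dimensional space and the entries of $\hat{\rho}(t)$ are entire in $t$. Consequently the coefficients of the characteristic polynomial $p_t(x)=\det\!\left(x\,\hat{\mathbb{I}}-\hat{\rho}(t)\right)$, being polynomials in these entries, are themselves entire functions of $t$.

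Next I would introduce the discriminant $D(t)$ of $p_t$ with respect to $x$. Since $D$ is a fixed polynomial in the coefficients of $p_t$, the function $t\mapsto D(t)$ is entire, and $D(t)=0$ if and only if $p_t$ has a repeated root, i.e. if and only if the spectrum of $\hat{\rho}(t)$ is degenerate. The crucial input is then the hypothesis, assumed without loss of generality, that $\hat{\rho}=\hat{\rho}(0)$ has nondegenerate spectrum: this gives $D(0)\neq0$, hence $D\not\equiv0$. An entire function that is not identically zero has only isolated zeros, so the set $\{t\geq0:\ \hat{\rho}(t)\ \text{has degenerate spectrum}\}=\{t\geq0:\ D(t)=0\}$ is discrete, which is exactly the claim.

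I expect the only delicate point to be conceptual rather than computational: one should resist tracking the ordered eigenvalues $p_0(t)\geq\cdots\geq p_N(t)$ directly, since these are merely continuous (and generally not differentiable at crossings), and instead package all possible coincidences into the single analytic quantity $D(t)$. An alternative, should one prefer to avoid discriminants, is to invoke Rellich's theorem to obtain globally real-analytic (unordered) eigenvalue branches $\lambda_0(t),\ldots,\lambda_N(t)$ and apply the same argument to each difference $\lambda_i(t)-\lambda_j(t)$, which is real-analytic and nonzero at $t=0$; but the discriminant route is self-contained and needs nothing beyond the analyticity of the matrix exponential together with the nondegeneracy at $t=0$.
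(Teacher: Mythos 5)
Your proof is correct and follows essentially the same route as the paper: the paper considers $\phi(t)=\prod_{i\neq j}\left(p_i(t)-p_j(t)\right)$, which is (up to sign) exactly your discriminant $D(t)$, shows via the fundamental theorem of symmetric polynomials that it is a polynomial in the matrix entries of $\hat{\rho}(t)$ and hence analytic, and concludes from $\phi(0)\neq0$ that its zeros are isolated. Your packaging through the discriminant of the characteristic polynomial is the same argument in equivalent language, so there is nothing to add.
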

\begin{lem}\label{lemma1}
$s(t)$ is continuous in $t$, and for any $t\geq0$ such that $\hat{\rho}(t)$ has nondegenerate spectrum it satisfies
\begin{equation}\label{sdot}
\frac{d}{dt}s_n(t)\leq(n+1)(s_{n+1}(t)-s_n(t))\;,\qquad n=0,\,\ldots,\,N-1\;.
\end{equation}
\end{lem}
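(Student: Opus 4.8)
The plan is to derive both claims from first-order spectral perturbation theory for $\hat{\rho}(t)=e^{t\mathcal{L}}(\hat{\rho})$ combined with Ky Fan's Maximum Principle (Lemma~\ref{sumeig}). For the continuity of $s(t)$ I would use Ky Fan to write $s_n(t)=\max_{\hat{P}}\mathrm{Tr}[\hat{P}\,\hat{\rho}(t)]$, the maximum running over rank-$(n+1)$ projectors. Since $t\mapsto\hat{\rho}(t)$ is norm-continuous (indeed analytic) and $|\mathrm{Tr}[\hat{P}(\hat{\rho}(t)-\hat{\rho}(t'))]|\le\|\hat{\rho}(t)-\hat{\rho}(t')\|_1$ uniformly in $\hat{P}$, the family is equicontinuous and its supremum $s_n$ is continuous.

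For the differential inequality, fix a $t$ at which $\hat{\rho}(t)$ is nondegenerate and diagonalise $\hat{\rho}(t)=\sum_k p_k\,|\psi_k\rangle\langle\psi_k|$ with $p_0>p_1>\cdots$. Nondegeneracy makes each eigenvalue smooth with $\dot{p}_i=\langle\psi_i|\mathcal{L}(\hat{\rho})|\psi_i\rangle$ (Hellmann--Feynman), whence, writing $\hat{\Pi}_n=\sum_{i=0}^n|\psi_i\rangle\langle\psi_i|$,
\[
\frac{d}{dt}s_n=\mathrm{Tr}\big[\hat{\Pi}_n\,\mathcal{L}(\hat{\rho})\big].
\]
Substituting the explicit Lindbladian \eqref{lindblad} and setting $A_{ik}=\langle\psi_i|\hat{a}|\psi_k\rangle$, the terms with both indices in $\{0,\dots,n\}$ cancel, leaving only the two cross sums across the cut,
\[
\frac{d}{dt}s_n=\sum_{i\le n<k}p_k\,|A_{ik}|^2-\sum_{k\le n<i}p_k\,|A_{ik}|^2.
\]

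The crux---and the main obstacle---is the final estimate. The naive bound (discard the subtracted sum and use $p_k\le p_{n+1}$ in the first) fails, because the remainder is then controlled by $\langle\psi_i|\hat{N}|\psi_i\rangle$, which is unbounded. The correct step is to bound \emph{both} sums by $p_{n+1}$ at once: this is legitimate precisely because the gain sum runs over $k>n$, where $p_k\le p_{n+1}$ raises $\frac{d}{dt}s_n$, while the subtracted loss sum runs over $k\le n$, where $p_k\ge p_{n+1}$ again raises it. With $\hat{Q}_n=\hat{\mathbb{I}}-\hat{\Pi}_n$ this yields
\[
\frac{d}{dt}s_n\le p_{n+1}\,\mathrm{Tr}\big[\hat{\Pi}_n\big(\hat{a}\,\hat{Q}_n\,\hat{a}^\dag-\hat{a}^\dag\,\hat{Q}_n\,\hat{a}\big)\big].
\]
I would then close with the commutator identity $\hat{a}\,\hat{Q}_n\,\hat{a}^\dag-\hat{a}^\dag\,\hat{Q}_n\,\hat{a}=\hat{\mathbb{I}}-\hat{a}\,\hat{\Pi}_n\,\hat{a}^\dag+\hat{a}^\dag\,\hat{\Pi}_n\,\hat{a}$, which follows from $[\hat{a},\hat{a}^\dag]=\hat{\mathbb{I}}$; since $\mathrm{Tr}[\hat{\Pi}_n\,\hat{a}\,\hat{\Pi}_n\,\hat{a}^\dag]=\mathrm{Tr}[\hat{\Pi}_n\,\hat{a}^\dag\,\hat{\Pi}_n\,\hat{a}]$ these last two traces cancel and the bracket collapses to $\mathrm{Tr}[\hat{\Pi}_n]=n+1$. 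Hence $\frac{d}{dt}s_n\le(n+1)\,p_{n+1}=(n+1)(s_{n+1}-s_n)$, as claimed. The delicate point to verify is the Hellmann--Feynman step, namely that at nondegenerate $t$ the top-$(n+1)$ spectral projector $\hat{\Pi}_n$ is well defined and differentiable, which is exactly what Lemma~\ref{deg} secures away from isolated points; the differential inequality then holds on the complement of a discrete set, and continuity of $s$ handles the rest.
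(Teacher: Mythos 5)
Your proposal is correct: the exact derivative formula, the gain/loss splitting, the freeze of both sums at $p_{n+1}$, and the commutator collapse all check out (indeed $\mathrm{Tr}[\hat{\Pi}_n\,\hat{a}\,\hat{\Pi}_n\,\hat{a}^\dag]=\sum_{i,k\le n}|A_{ik}|^2=\mathrm{Tr}[\hat{\Pi}_n\,\hat{a}^\dag\,\hat{\Pi}_n\,\hat{a}]$, so those two traces do cancel and the bracket equals $\mathrm{Tr}[\hat{\Pi}_n]=n+1$). However, your route differs from the paper's in the key estimate. The paper, after reaching the same identity $\frac{d}{dt}s_n=\mathrm{Tr}[\hat{\Pi}_n(t)\,\mathcal{L}(\hat{\rho}(t))]$, uses the layer-cake decomposition $\hat{\rho}(t)=\sum_k d_k(t)\,\hat{\Pi}_k(t)$ with $d_k=p_k-p_{k+1}\ge 0$, and then proves the per-pair comparison $\mathrm{Tr}[\hat{\Pi}_n(t)\,\mathcal{L}(\hat{\Pi}_k(t))]\le \mathrm{Tr}[\hat{\Pi}_n^\downarrow\,\mathcal{L}(\hat{\Pi}_k^\downarrow)]$, where the right-hand side is computed explicitly to be $0$ for $k\le n$ and $n+1$ for $k\ge n+1$; resumming against the nonnegative $d_k$ gives $(n+1)p_{n+1}$. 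You never introduce the Fock projectors $\hat{\Pi}_k^\downarrow$ at all: you exploit the eigenvalue ordering directly on the matrix elements $A_{ik}$ and get $(n+1)p_{n+1}$ in one unified step from $[\hat{a},\hat{a}^\dag]=\hat{\mathbb{I}}$. Your version is more self-contained and arguably cleaner for this specific lemma; the paper's formulation as a comparison of arbitrary projectors against their Fock rearrangements is chosen because it matches the Fock-optimality framework of the chapter and is the form that carries over almost verbatim to Chapter \ref{chlossy}, where the Lindblad operators are general, $\sum_\alpha[\hat{L}_\alpha,\hat{L}_\alpha^\dag]$ is no longer proportional to the identity, and the constant $n+1$ gets replaced by $\lambda_n=\mathrm{Tr}[\hat{\Pi}_n^\downarrow\,\mathcal{L}(\hat{\mathbb{I}})]$ (your collapse would there require an extra Ky Fan step together with the passivity of $\mathcal{L}(\hat{\mathbb{I}})$). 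The continuity arguments also differ harmlessly: you get a trace-norm Lipschitz bound from Ky Fan's variational characterization, while the paper invokes Weyl's perturbation theorem; both are sound.
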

\begin{lem}\label{lemma2}
If $s(t)$ is continuous in $t$ and satisfies \eqref{sdot}, then
\begin{equation}
s_n(t)\leq s_n^\downarrow(t)
\end{equation}
for any $t\geq0$ and $n=0,\,\ldots,\,N$.
\end{lem}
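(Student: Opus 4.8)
The plan is to compare $s(t)$ with $s^\downarrow(t)$ through their difference $\delta_n(t):=s_n^\downarrow(t)-s_n(t)$, and to prove $\delta_n(t)\ge 0$ for every $n=0,\dots,N$ and every $t\ge 0$. Two boundary facts drive the argument. At the initial time the two states share the same spectrum, so $p_n^\downarrow(0)=p_n(0)$ and hence $\delta_n(0)=0$ for all $n$. At the top index, trace preservation of the quantum-limited attenuator gives $s_N(t)=\mathrm{Tr}\,\hat\rho(t)=\mathrm{Tr}\,\hat\rho=\mathrm{Tr}\,\hat\rho^\downarrow=s_N^\downarrow(t)$, so that $\delta_N(t)\equiv 0$.

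Next I would derive a single differential inequality for $\delta_n$. Subtracting the exact identity $\frac{d}{dt}s_n^\downarrow(t)=(n+1)(s_{n+1}^\downarrow(t)-s_n^\downarrow(t))$ from the hypothesis \eqref{sdot}, at every $t$ where $\hat\rho(t)$ has nondegenerate spectrum one obtains
\[ \frac{d}{dt}\delta_n(t)\ \ge\ (n+1)\big(\delta_{n+1}(t)-\delta_n(t)\big),\qquad n=0,\dots,N-1, \]
equivalently $\frac{d}{dt}\delta_n+(n+1)\delta_n\ge (n+1)\delta_{n+1}$. By Lemma \ref{deg} the exceptional (degenerate) times are isolated, and by Lemma \ref{lemma1} each $s_n$---hence each $\delta_n$---is continuous on $[0,\infty)$ and differentiable off this isolated set.

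This inequality has a triangular structure that I would exploit by a backward induction on $n$, from $n=N$ down to $n=0$, using the integrating factor $e^{(n+1)t}$. The base case $n=N$ is the identity $\delta_N\equiv0$. For the inductive step, assume $\delta_{n+1}(t)\ge 0$ for all $t\ge0$. Then, off the isolated degenerate set,
\[ \frac{d}{dt}\Big(e^{(n+1)t}\,\delta_n(t)\Big)=e^{(n+1)t}\Big(\tfrac{d}{dt}\delta_n+(n+1)\delta_n\Big)\ \ge\ (n+1)\,e^{(n+1)t}\,\delta_{n+1}(t)\ \ge\ 0. \]
Thus $e^{(n+1)t}\delta_n(t)$ is a continuous function whose derivative is nonnegative except at isolated points, hence nondecreasing on $[0,\infty)$; since it vanishes at $t=0$ it stays nonnegative, giving $\delta_n(t)\ge0$. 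Peeling off $n=N-1,N-2,\dots,0$ completes the induction and yields $s_n(t)\le s_n^\downarrow(t)$ for all $n$ and all $t\ge0$.

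The main obstacle I anticipate is analytic rather than algebraic: the bound on $\frac{d}{dt}\delta_n$ is only available at the nondegenerate times, so I must verify that ``nonnegative derivative off an isolated set together with continuity'' legitimately implies monotonicity. This is exactly where Lemmas \ref{deg} and \ref{lemma1} are indispensable---the former guarantees the bad set is discrete, the latter guarantees continuity of $s(t)$ across it---so that the monotonicity established on each gap between consecutive degenerate times propagates through those points. Once this regularity issue is secured, the integrating factor, the boundary values, and the backward recursion are all routine.
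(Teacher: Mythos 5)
Your proposal is correct and follows essentially the same route as the paper: backward induction from $n=N$ (settled by trace preservation), the integrating factor $e^{(n+1)t}$, and the initial condition $\delta_n(0)=0$; the only cosmetic difference is that you derive the coupled inequality $\frac{d}{dt}\delta_n\ge(n+1)(\delta_{n+1}-\delta_n)$ before invoking the inductive hypothesis, whereas the paper inserts $s_{n+1}\le s_{n+1}^\downarrow$ into \eqref{sdot} first, which is the same manipulation in a different order. Your explicit remark on handling the isolated degenerate times via continuity is a welcome clarification of a point the paper leaves implicit, but it does not change the argument.
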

Lemma \ref{lemma2} implies that the quantum-limited attenuator is passive-preserving.
Indeed, let us choose $\hat{\rho}$ passive.
Since $e^{t\mathcal{L}}\left(\hat{\rho}\right)$ is diagonal in the Fock basis, $s_n^\downarrow(t)$ is the sum of the eigenvalues corresponding to the first $n+1$ Fock states $|0\rangle,\;\ldots,\;|n\rangle$.
Since $s_n(t)$ is the sum of the $n+1$ greatest eigenvalues, $s_n^\downarrow(t)\leq s_n(t)$.
However, Lemma \ref{lemma2} implies $s_n(t)=s_n^\downarrow(t)$ for $n=0,\,\ldots,\,N$.
Thus $p_n(t)=p_n^\downarrow(t)$, so the operator $e^{t\mathcal{L}}\left(\hat{\rho}\right)$ is passive for any $t$, and the channel $e^{t\mathcal{L}}$ is passive-preserving.

Then from the definition of majorization and Lemma \ref{lemma2} again,
\begin{equation}
e^{t\mathcal{L}}\left(\hat{\rho}\right)\prec_w e^{t\mathcal{L}}\left(\hat{\rho}^\downarrow\right)
\end{equation}
for any $\hat{\rho}$, and the quantum-limited attenuator is also Fock-optimal.

\subsection{Proof of Lemma \ref{deg}}
The matrix elements of the operator $e^{t\mathcal{L}}\left(\hat{\rho}\right)$ are analytic functions of $t$.
The spectrum of $\hat{\rho}(t)$ is degenerate iff the function
\begin{equation}
\phi(t)=\prod_{i\neq j}\left(p_i(t)-p_j(t)\right)
\end{equation}
vanishes.
This function is a symmetric polynomial in the eigenvalues of $\hat{\rho}(t)=e^{t\mathcal{L}}\left(\hat{\rho}\right)$.
Then, for the Fundamental Theorem of Symmetric Polynomials (see e.g Theorem 3 in Chapter 7 of \cite{cox2015ideals}), $\phi(t)$ can be written as a polynomial in the elementary symmetric polynomials in the eigenvalues of $\hat{\rho}(t)$.
However, these polynomials coincide with the coefficients of the characteristic polynomial of $\hat{\rho}(t)$, that are in turn polynomials in its matrix elements.
It follows that $\phi(t)$ can be written as a polynomial in the matrix elements of the operator $\hat{\rho}(t)$.
Since each of these matrix element is an analytic function of $t$, also $\phi(t)$ is analytic.
Since by hypothesis the spectrum of $\hat{\rho}(0)$ is nondegenerate, $\phi$ cannot be identically zero, and its zeroes are isolated points.

\subsection{Proof of Lemma \ref{lemma1}}
The matrix elements of the operator $e^{t\mathcal{L}}\left(\hat{\rho}\right)$ are analytic (and hence continuous and differentiable) functions of $t$.
Then for Weyl's Perturbation Theorem $p(t)$ is continuous in $t$, and also $s(t)$ is continuous (see e.g. Corollary III.2.6 and the discussion at the beginning of Chapter VI  of \cite{bhatia2013matrix}).
Let $\hat{\rho}(t_0)$ have nondegenerate spectrum.
Then, $\hat{\rho}(t)$ has nondegenerate spectrum for any $t$ in a suitable neighbourhood of $t_0$.
In this neighbourhood, we can diagonalize $\hat{\rho}(t)$ with
\begin{equation}
\hat{\rho}(t)=\sum_{n=0}^N p_n(t) |\psi_n(t)\rangle\langle\psi_n(t)|\;,
\end{equation}
where the eigenvalues in decreasing order $p_n(t)$ are differentiable functions of $t$ (see Theorem 6.3.12 of \cite{horn2012matrix}),
and
\begin{equation}
\frac{d}{dt}p_n(t)=\langle\psi_n(t)|\mathcal{L}\left(\hat{\rho}(t)\right)|\psi_n(t)\rangle\;.
\end{equation}
We then have
\begin{equation}
\frac{d}{dt}s_n(t)=\mathrm{Tr}\left[\hat{\Pi}_n(t)\;\mathcal{L}\left(\hat{\rho}(t)\right)\right]\;,
\end{equation}
where
\begin{equation}
\hat{\Pi}_n(t)=\sum_{i=0}^n|\psi_i(t)\rangle\langle\psi_i(t)|\;.
\end{equation}
We can write
\begin{equation}
\hat{\rho}(t)=\sum_{n=0}^N d_n(t)\;\hat{\Pi}_n(t)\;,
\end{equation}
where
\begin{equation}
d_n(t)=p_n(t)-p_{n+1}(t)\geq0\;,
\end{equation}
so that
\begin{equation}
\frac{d}{dt}s_n(t)=\sum_{k=0}^N d_k(t)\;\mathrm{Tr}\left[\hat{\Pi}_n(t)\;\mathcal{L}\left(\hat{\Pi}_k(t)\right)\right]\;.
\end{equation}
With the explicit expression \eqref{lindblad} for $\mathcal{L}$, it is easy to prove that
\begin{equation}
\sum_{k=0}^N d_k(t)\;\mathrm{Tr}\left[\hat{\Pi}_n^\downarrow\;\mathcal{L}\left(\hat{\Pi}_k^\downarrow\right)\right]=(n+1)(s_{n+1}(t)-s_n(t))\;,
\end{equation}
so it would be sufficient to show that
\begin{equation}\label{PL}
\mathrm{Tr}\left[\hat{\Pi}_n(t)\;\mathcal{L}\left(\hat{\Pi}_k(t)\right)\right]\overset{?}{\leq} \mathrm{Tr}\left[\hat{\Pi}_n^\downarrow\;\mathcal{L}\left(\hat{\Pi}_k^\downarrow\right)\right]\;.
\end{equation}
We write explicitly the left-hand side of \eqref{PL}:
\begin{equation}\label{PLext}
\mathrm{Tr}\left[\hat{\Pi}_n(t)\;\hat{a}\;\hat{\Pi}_k(t)\;\hat{a}^\dag-\hat{\Pi}_n(t)\;\hat{\Pi}_k(t)\;\hat{a}^\dag\hat{a}\right]\;,
\end{equation}
where we have used that $\hat{\Pi}_n(t)$ and $\hat{\Pi}_k(t)$ commute.
\begin{itemize}
  \item Let us suppose $n\geq k$.
  Then
  \begin{equation}
  \hat{\Pi}_n(t)\;\hat{\Pi}_k(t)=\hat{\Pi}_k(t)\;.
  \end{equation}
  Using that $\hat{\Pi}_n(t)\leq\hat{\mathbb{I}}$ in the first term of \eqref{PLext}, we get
  \begin{equation}
  \mathrm{Tr}\left[\hat{\Pi}_n(t)\;\hat{a}\;\hat{\Pi}_k(t)\;\hat{a}^\dag-\hat{\Pi}_n(t)\;\hat{\Pi}_k(t)\;\hat{a}^\dag\hat{a}\right]\leq0\;.
  \end{equation}
  On the other hand, since the support of $\hat{a}\,\hat{\Pi}_k^\downarrow\,\hat{a}^\dag$ is contained in the support of $\hat{\Pi}_{k-1}^\downarrow$, and hence in the one of $\hat{\Pi}_n^\downarrow$, we have also
  \begin{equation}
  \hat{\Pi}_n^\downarrow\;\hat{a}\;\hat{\Pi}_k^\downarrow\;\hat{a}^\dag=\hat{a}\;\hat{\Pi}_k^\downarrow\;\hat{a}^\dag\;,
  \end{equation}
  so that
  \begin{equation}
  \mathrm{Tr}\left[\hat{\Pi}_n^\downarrow\;\hat{a}\;\hat{\Pi}_k^\downarrow\;\hat{a}^\dag-\hat{\Pi}_n^\downarrow\;\hat{\Pi}_k^\downarrow\;\hat{a}^\dag\hat{a}\right]=0\;.
  \end{equation}
  \item Let us now suppose that $k\geq n+1$.
  Then
  \begin{equation}
  \hat{\Pi}_n(t)\;\hat{\Pi}_k(t)=\hat{\Pi}_n(t)\;.
  \end{equation}
  Using that $\hat{\Pi}_k(t)\leq\hat{\mathbb{I}}$ in the first term of \eqref{PLext}, together with the commutation relation \eqref{CCR}, we get
  \begin{equation}
  \mathrm{Tr}\left[\hat{\Pi}_n(t)\;\hat{a}\;\hat{\Pi}_k(t)\;\hat{a}^\dag-\hat{\Pi}_n(t)\;\hat{\Pi}_k(t)\;\hat{a}^\dag\hat{a}\right]\leq n+1\;.
  \end{equation}
  On the other hand, since the support of $\hat{a}^\dag\,\hat{\Pi}_n^\downarrow\,\hat{a}$ is contained in the support of $\hat{\Pi}_{n+1}^\downarrow$ and hence in the one of $\hat{\Pi}_k^\downarrow$, we have also
  \begin{equation}
  \hat{\Pi}_k^\downarrow\;\hat{a}^\dag\;\hat{\Pi}_n^\downarrow\;\hat{a}=\hat{a}^\dag\;\hat{\Pi}_n^\downarrow\;\hat{a}\;,
  \end{equation}
  so that
  \begin{equation}
  \mathrm{Tr}\left[\hat{\Pi}_n^\downarrow\;\hat{a}\;\hat{\Pi}_k^\downarrow\;\hat{a}^\dag-\hat{\Pi}_n^\downarrow\;\hat{\Pi}_k^\downarrow\;\hat{a}^\dag\hat{a}\right]=n+1\;.
  \end{equation}
\end{itemize}
\subsection{Proof of Lemma \ref{lemma2}}
Since the quantum-limited attenuator is trace-preserving, we have
\begin{equation}
s_N(t)=\mathrm{Tr}\left[\hat{\rho}(t)\right]=1=s_N^\downarrow(t)\;.
\end{equation}
We will use induction on $n$ in the reverse order: let us suppose to have proved
\begin{equation}
s_{n+1}(t)\leq s_{n+1}^\downarrow(t)\;.
\end{equation}
We then have from \eqref{sdot}
\begin{equation}
\frac{d}{dt}s_n(t)\leq(n+1)\left(s_{n+1}^\downarrow(t)-s_n(t)\right)\;,
\end{equation}
while
\begin{equation}
\frac{d}{dt}s_n^\downarrow(t)=(n+1)\left(s_{n+1}^\downarrow(t)-s_n^\downarrow(t)\right)\;.
\end{equation}
Defining
\begin{equation}
f_n(t)=s_n^\downarrow(t)-s_n(t)\;,
\end{equation}
we have $f_n(0)=0$, and
\begin{equation}
\frac{d}{dt}f_n(t)\geq-(n+1)f_n(t)\;.
\end{equation}
This can be rewritten as
\begin{equation}
e^{-(n+1)t}\;\frac{d}{dt}\left(e^{(n+1)t}\;f_n(t)\right)\geq0\;,
\end{equation}
and implies
\begin{equation}
f_n(t)\geq0\;.
\end{equation}

\section{Generic one-mode Gaussian channels}\label{generic}
In this Section we extend Theorem \ref{maintheorem} to any one-mode quantum Gaussian channel.

\begin{defn}
We say that two quantum channels $\Phi$ and $\Psi$ are equivalent if there are a unitary operator $\hat{U}$ and a unitary or anti-unitary $\hat{V}$ such that
\begin{equation}\label{Psi}
\Psi\left(\hat{X}\right)=\hat{V}\;\Phi\left(\hat{U}\;\hat{X}\;\hat{U}^\dag\right)\;\hat{V}^\dag
\end{equation}
for any trace-class operator $\hat{X}$.
\end{defn}
Clearly, a channel equivalent to a Fock-optimal channel is also Fock-optimal with a suitable redefinition of the Fock rearrangement:
\begin{lem}\label{U*}
Let $\Phi$ be a Fock-optimal quantum channel, and $\Psi$ be as in \eqref{Psi}.
Then, for any positive trace-class operator $\hat{X}$,
\begin{equation}
\Psi\left(\hat{X}\right)\prec_w\Psi\left(\hat{U}^\dag\left(\hat{U}\;\hat{X}\;\hat{U}^\dag\right)^\downarrow\hat{U}\right)\;.
\end{equation}
\end{lem}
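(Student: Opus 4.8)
The plan is to reduce everything to the Fock-optimality already assumed for $\Phi$, exploiting the single most important feature of weak submajorization: the relation $\hat{A}\prec_w\hat{B}$ depends only on the decreasing sequences of eigenvalues of $\hat{A}$ and $\hat{B}$, and is therefore invariant under replacing $\hat{A}$ and $\hat{B}$ by $\hat{V}\hat{A}\hat{V}^\dag$ and $\hat{V}\hat{B}\hat{V}^\dag$ for a common unitary \emph{or} anti-unitary $\hat{V}$. First I would record this invariance: for a positive operator $\hat{A}$ with $\hat{A}|\psi\rangle=\lambda|\psi\rangle$ (so that $\lambda\geq0$ is real), the identity $\hat{V}\hat{A}\hat{V}^\dag\big(\hat{V}|\psi\rangle\big)=\lambda\,\hat{V}|\psi\rangle$ holds whether $\hat{V}$ is unitary or anti-unitary, the reality of $\lambda$ absorbing the complex conjugation produced by an anti-unitary. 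Hence conjugation by $\hat{V}$ preserves the entire spectrum, and so it preserves the order relation $\prec_w$.

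Then I would set $\hat{Y}:=\hat{U}\,\hat{X}\,\hat{U}^\dag$, which is again positive and trace-class since $\hat{U}$ is unitary. Applying the Fock-optimality of $\Phi$ to $\hat{Y}$ gives $\Phi\big(\hat{Y}\big)\prec_w\Phi\big(\hat{Y}^\downarrow\big)$. Conjugating both sides by $\hat{V}$ and invoking the invariance from the first step yields $\hat{V}\,\Phi\big(\hat{Y}\big)\,\hat{V}^\dag\prec_w\hat{V}\,\Phi\big(\hat{Y}^\downarrow\big)\,\hat{V}^\dag$. By the definition \eqref{Psi}, the left-hand side is exactly $\Psi\big(\hat{X}\big)$. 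For the right-hand side I would use the algebraic identity $\hat{U}\big(\hat{U}^\dag\,\hat{Y}^\downarrow\,\hat{U}\big)\hat{U}^\dag=\hat{Y}^\downarrow$, which shows $\Psi\big(\hat{U}^\dag\,\hat{Y}^\downarrow\,\hat{U}\big)=\hat{V}\,\Phi\big(\hat{Y}^\downarrow\big)\,\hat{V}^\dag$; recalling that $\hat{Y}^\downarrow=(\hat{U}\,\hat{X}\,\hat{U}^\dag)^\downarrow$, this is precisely $\Psi\big(\hat{U}^\dag(\hat{U}\,\hat{X}\,\hat{U}^\dag)^\downarrow\,\hat{U}\big)$. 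Combining the two identifications gives the stated claim.

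I expect the only point requiring genuine care to be the anti-unitary case: one must verify that $\hat{V}\,(\cdot)\,\hat{V}^\dag$ maps positive trace-class operators to positive trace-class operators with identical eigenvalue sequences, and that the symbol $\hat{V}^\dag$ is read as the anti-unitary inverse, so that $\hat{V}\,\hat{V}^\dag=\hat{\mathbb{I}}$ and the conjugation identities above are legitimate. Everything else is the bookkeeping in which $\hat{U}^\dag(\cdot)^\downarrow\hat{U}$ plays the role of a rotated Fock rearrangement; this role is forced by the definition \eqref{Psi} and requires no input beyond the Fock-optimality of $\Phi$ together with the spectral invariance of $\prec_w$.
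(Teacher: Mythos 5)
Your proof is correct and is precisely the argument the paper leaves implicit: the paper states this lemma without proof (prefacing it with ``Clearly\ldots''), and the intended justification is exactly your reduction — apply Fock-optimality of $\Phi$ to $\hat{U}\hat{X}\hat{U}^\dag$ and use the fact that conjugation by a common unitary or anti-unitary preserves spectra and hence the relation $\prec_w$. Your care with the anti-unitary case (real eigenvalues absorbing the complex conjugation, $\hat{V}^\dag$ read as the anti-unitary inverse) is the only nontrivial point, and you handle it correctly.
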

The problem of analyzing any Gaussian quantum channel from the point of view of majorization is then reduced to the equivalence classes.

\subsection{Quadratures and squeezing}
In this Section, differently from the rest of the Chapter, $\hat{Q}$ and $\hat{P}$ will denote the quadratures \eqref{quadr}, and not generic projectors.
We can define a continuous basis of not normalizable vectors $\left\{|q\rangle\right\}_{q\in\mathbb{R}}$ with
\begin{eqnarray}
\hat{Q}|q\rangle &=& q|q\rangle\;,\\
\langle q|q'\rangle &=& \delta(q-q')\;,\\
\int_{\mathbb{R}}|q\rangle\langle q|\;dq &=& \hat{\mathbb{I}}\;,\\
e^{-iq\hat{P}}|q'\rangle &=& |q'+q\rangle\;,\qquad q,\,q'\in\mathbb{R}\;.
\end{eqnarray}
For any $\kappa>0$ we define the squeezing unitary operator \cite{barnett2002methods} $\hat{S}_\kappa$ with
\begin{equation}
\hat{S}_\kappa |q\rangle=\sqrt{\kappa}\;|\kappa q\rangle
\end{equation}
for any $q\in\mathbb{R}$.
It satisfies also
\begin{equation}
\hat{S}_\kappa^\dag\;\hat{P}\;\hat{S}_\kappa = \frac{1}{\kappa}\;\hat{P}\;.
\end{equation}

\subsection{Classification theorem}
Then, the following classification theorem holds \cite{holevo2007one,holevo2013quantum}:
\begin{thm}
Any one-mode trace-preserving quantum Gaussian channel is equivalent to one of the following:
\begin{enumerate}
\item a gauge-covariant Gaussian channel, i.e. a channel commuting with the time evolution generated by the photon-number Hamiltonian \eqref{Nosc} (cases $A_1)$, $B_2)$, $C)$ and $D)$ of \cite{holevo2007one});
\item a measure-reprepare channel $\Phi$ of the form
\begin{equation}\label{class2}
\Phi\left(\hat{X}\right)=\int_{\mathbb{R}}\langle q|\hat{X}|q\rangle\;e^{-iq\hat{P}}\;\hat{\rho}_0\;e^{iq\hat{P}}\;dq
\end{equation}
for any trace-class operator $\hat{X}$, where $\rho_0$ is a given Gaussian state (case $A_2)$ of \cite{holevo2007one});
\item a random unitary channel $\Phi_\sigma$ of the form
\begin{equation}\label{Phieta}
\Phi_\sigma\left(\hat{X}\right)=\int_{\mathbb{R}}e^{-iq\hat{P}}\;\hat{X}\;e^{iq\hat{P}}\;\frac{e^{-\frac{q^2}{2\sigma}}}{\sqrt{2\pi\sigma}}\;dq
\end{equation}
for any trace-class operator $\hat{X}$, with $\sigma>0$ (case $B_1)$ of \cite{holevo2007one}).
\end{enumerate}
\end{thm}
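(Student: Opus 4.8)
The plan is to derive Holevo's classification directly from the phase-space data of a general one-mode Gaussian channel and then reduce the pairs of data under the equivalence \eqref{Psi}. A one-mode Gaussian channel is specified on characteristic functions by Eq. \eqref{channelchi}, i.e. by a real $2\times2$ matrix $M$ acting on $\mathbf{k}$, a symmetric noise matrix $\alpha\geq0$, and a displacement $\mathbf{y}$, subject to the complete-positivity condition \eqref{CP}, which I write as the matrix inequality $\alpha\geq\pm\,i\,(\Delta-M\,\Delta\,M^T)$. First I would dispose of $\mathbf{y}$: conjugating by a displacement $\hat{D}(\mathbf{z})$, which is a Gaussian unitary and hence admissible as $\hat{U}$ and $\hat{V}$ in \eqref{Psi}, shifts $\mathbf{y}$; for $M\neq\mathbb{I}$ it can be set to zero, and for $M=\mathbb{I}$ it produces only an overall translation irrelevant to the subsequent structure. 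Thus the problem reduces to classifying the pair $(M,\alpha)$.

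Next I would use that the admissible $\hat{U},\hat{V}$ include all symplectic unitaries $\hat{U}_S$ together with the time-reversal anti-unitary. Precomposing with $\hat{U}_{S_1}$ and postcomposing with $\hat{U}_{S_2}$ sends $M\mapsto S_2\,M\,S_1$ with $S_1,S_2\in\mathrm{Sp}(2,\mathbb{R})=SL(2,\mathbb{R})$ (up to the transpose conventions of \eqref{channelchi}), while time reversal lets the case $\det M<0$ be handled alongside $\det M>0$. Since every $S_i$ has unit determinant, the only invariants of this bilateral action are the rank of $M$ and, when $M$ is invertible, the value and sign of $\det M$. This yields the canonical representatives $M=0$; $M$ of rank one, proportional to $\mathrm{diag}(1,0)$; $M=\sqrt{\kappa}\,\mathbb{I}$ with $\kappa=\det M>0$; and $M=\sqrt{|\kappa|}\,T$ with $\det M<0$, where $T=\mathrm{diag}(1,-1)$ is the one-mode time reversal. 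These are exactly Holevo's labels $A_1$, $A_2$, $B$, $C$, $D$.

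I would then canonicalize $\alpha$ for each representative using the residual stabilizer together with \eqref{CP}, and recognize the resulting physical family. For $M=\sqrt{\kappa}\,\mathbb{I}$ the stabilizer acts by the congruence $\alpha\mapsto S\,\alpha\,S^T$, $S\in SL(2,\mathbb{R})$, which (diagonalizing by a rotation and rescaling by a squeezing) brings a full-rank $\alpha$ to the isotropic form $\sqrt{\det\alpha}\,\mathbb{I}$, producing a gauge-covariant channel: case $B_2$ for $\kappa=1$ and case $C$ for $\kappa\neq1$. When $M=\mathbb{I}$ and $\alpha$ has rank one, the same congruence gives $\alpha=\mathrm{diag}(\sigma,0)$, and since $e^{-iq\hat{P}}$ shifts only the surviving quadrature this is precisely the random-displacement channel \eqref{Phieta}, i.e. case $B_1$. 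The rank-zero case $M=0$ is the constant channel onto a fixed Gaussian state, which can be taken thermal and is therefore gauge-covariant ($A_1$), and $D$ becomes gauge-covariant once the time-reversal $\hat{V}$ is applied. For the rank-one $M$ the condition \eqref{CP} forces, at minimal noise, the channel to factor through readout of the eigenbasis $\{|q\rangle\}$ of the surviving quadrature followed by repreparation of a fixed Gaussian state, reproducing the measure-and-reprepare form \eqref{class2}, i.e. case $A_2$.

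The routine part is the orbit count for $M$; the content lies in the noise analysis and, above all, in translating the abstract complete-positivity-extremal data of the rank-one ($A_2$) and degenerate ($B_1$) cases into the explicit integral kernels \eqref{class2} and \eqref{Phieta}. I expect the main obstacle to be the $A_2$ case: one must show that a rank-one $M$ with saturated \eqref{CP} compels the channel to first measure the quadrature $\hat{Q}$ and then reprepare a genuine Gaussian state $\hat{\rho}_0$ — equivalently, that these channels are entanglement-breaking — and to verify that the remaining symplectic freedom produces no further inequivalent sub-cases. The other identifications then follow by matching first and second moments against \eqref{channelchi}, and the relevance for the rest of the Chapter is secured by Lemma \ref{U*}, which transports Fock-optimality across each equivalence.
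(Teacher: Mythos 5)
The paper never proves this theorem: it is quoted as a known result, with the proof deferred entirely to the cited literature \cite{holevo2007one,holevo2013quantum}. Your proposal therefore cannot match a proof "in the paper", but it does correctly reconstruct the strategy of the cited classification: dispose of the displacement $\mathbf{y}$ by composing with displacement unitaries, reduce the pair $(M,\alpha)$ of Eq. \eqref{channelchi} under the bilateral action $M\mapsto S_2\,M\,S_1$ with $S_1,S_2\in\mathrm{Sp}(2,\mathbb{R})=SL(2,\mathbb{R})$ (whose only invariants are indeed the rank of $M$ and, in the full-rank case, $\det M$), use the anti-unitary time reversal to fold $\det M<0$ into the analysis, and then canonicalize $\alpha$ by the residual stabilizer congruence $\alpha\mapsto S\,\alpha\,S^T$. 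Your list of representatives ($M=0$, $M\simeq\mathrm{diag}(1,0)$, $M=\sqrt{\kappa}\,\mathbb{I}$, $M=\sqrt{|\kappa|}\,T$ with $T=\mathrm{diag}(1,-1)$) is exactly Holevo's, and the identifications with the three classes in the statement are the right ones.

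Three points deserve correction or completion. First, the step you single out as the main obstacle, case $A_2$, is in fact immediate and needs neither "minimal noise'' nor any entanglement-breaking argument: for $M=\mathrm{diag}(1,0)$ one has $M\Delta M^T=0$, so the complete-positivity condition \eqref{CP} reads $\alpha\geq\pm i\Delta$, i.e. $\alpha$ is the covariance matrix of a bona fide Gaussian state $\hat{\rho}_0$; a two-line characteristic-function computation then shows that the map \eqref{class2} with this $\hat{\rho}_0$ has $\chi_{\Phi(\hat{X})}(\mathbf{k})=\chi_{\hat{X}}(k_1,0)\,e^{-\frac{1}{4}\mathbf{k}\alpha\mathbf{k}^T}$, which is precisely \eqref{channelchi} with $M=\mathrm{diag}(1,0)$ — and this holds for \emph{every} admissible $\alpha$, not only extremal ones. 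Second, the displacement can be removed by post-composition with $\hat{D}(-\mathbf{y})$ alone, for every $M$ including $M=\mathbb{I}$, so your caveat about the identity case is unnecessary (the equivalence \eqref{Psi} allows independent $\hat{U}$ and $\hat{V}$). Third, your canonicalization of $\alpha$ tacitly assumes that rank-deficient noise can occur only when $M=\mathbb{I}$: this must be, and is, justified by \eqref{CP}, which for $M=\sqrt{\kappa}\,\mathbb{I}$ with $\kappa\neq1$ forces $\det\alpha\geq(1-\kappa)^2>0$, and for $M=\sqrt{|\kappa|}\,T$ forces $\det\alpha\geq(1+|\kappa|)^2$; the latter inequality is also exactly what guarantees that composing a case-$D$ channel with the transposition anti-unitary leaves a map that is still completely positive, so that case $D$ genuinely lands in the gauge-covariant class rather than merely in a formally Gaussian one.
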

From Lemma \ref{U*}, with a suitable redefinition of Fock rearrangement all the channels of the first class are Fock-optimal.
On the contrary, for both the second and the third classes the optimal basis would be an infinitely squeezed version of the Fock basis:

\subsection{Class 2}
We will show that the channel \eqref{class2} does not have optimal inputs.

Let $\hat{\omega}$ be a generic quantum state.
Since $\Phi$ applies a random displacement to the state $\hat{\rho}_0$,
\begin{equation}\label{PhiX0}
\Phi\left(\hat{\omega}\right)\prec\hat{\rho}_0\;.
\end{equation}
Moreover, $\Phi\left(\hat{\omega}\right)$ and $\hat{\rho}_0$ cannot have the same spectrum unless the probability distribution $\langle q|\hat{\omega}|q\rangle$ is a Dirac delta, but this is never the case for any quantum state $\hat{\omega}$, so the majorization in \eqref{PhiX0} is always strict.
Besides, in the limit of infinite squeezing the output tends to $\hat{\rho}_0$ in trace norm:
\begin{eqnarray}
\left\|\Phi\left(\hat{S}_\kappa\;\hat{\omega}\;\hat{S}_\kappa^\dag\right)-\hat{\rho}_0\right\|_1 &=& \left\|\int_{\mathbb{R}}\langle q|\hat{\omega}|q\rangle\left(e^{-i\kappa q\hat{P}}\;\hat{\rho}_0\;e^{i\kappa q\hat{P}}-\hat{\rho}_0\right)dq\right\|_1 \leq\nonumber\\
&\leq& \int_{\mathbb{R}}\langle q|\hat{\omega}|q\rangle\left\|e^{-i\kappa q\hat{P}}\;\hat{\rho}_0\;e^{i\kappa q\hat{P}}-\hat{\rho}_0\right\|_1dq\;,
\end{eqnarray}
and the last integral tends to zero for $\kappa\to0$ since the integrand is dominated by the integrable function $2\langle q|\hat{\omega}|q\rangle$, and tends to zero pointwise.
It follows that the majorization relation
\begin{equation}
\Phi\left(\hat{S}_\kappa\;\hat{\omega}\;\hat{S}_\kappa\right)\prec\Phi\left(\hat{\omega}\right)
\end{equation}
will surely not hold for some positive $\kappa$ in a neighbourhood of $0$, and $\hat{\omega}$ is not an optimal input for $\Phi$.

\subsection{Class 3}
For the channel \eqref{Phieta}, squeezing the input always makes the output strictly less noisy.
Indeed, it is easy to show that for any positive $\sigma$ and $\sigma'$
\begin{equation}
\Phi_\sigma\circ\Phi_{\sigma'}=\Phi_{\sigma+\sigma'}\;.
\end{equation}
Then, for any $\kappa>1$ and any positive trace-class $\hat{X}$
\begin{equation}
\hat{S}_\kappa\;\Phi_\sigma\left(\hat{X}\right)\;\hat{S}_\kappa^\dag = \Phi_{\kappa^2\sigma}\left(\hat{S}_\kappa\;\hat{X}\;\hat{S}_\kappa^\dag\right)= \Phi_{(\kappa^2-1)\sigma}\left(\Phi_{\sigma}\left(\hat{S}_\kappa\;\hat{X}\;\hat{S}_\kappa^\dag\right)\right)\;,
\end{equation}
hence, recalling that $\Phi$ applies a random displacement,
\begin{equation}
\Phi_\sigma\left(\hat{X}\right)\prec \Phi_{\sigma}\left(\hat{S}_\kappa\;\hat{X}\;\hat{S}_\kappa^\dag\right)\;.
\end{equation}

\section{The thinning}\label{secthinning}
The thinning \cite{renyi1956characterization} is the map acting on classical probability distributions on the set of natural numbers that is the discrete analogue of the continuous rescaling operation on positive real numbers.

In this Section we show that the thinning coincides with the restriction of the Gaussian quantum-limited attenuator to quantum states diagonal in the Fock basis, and we hence extend Theorem \ref{maintheorem} to the discrete classical setting.

\begin{defn}[$\ell^1$ norm]
The $\ell^1$ norm of a sequence $\{x_n\}_{n\in\mathbb{N}}$ is
\begin{equation}
\|x\|_1=\sum_{n=0}^\infty |x_n|\;.
\end{equation}
We say that $x$ is summable if $\|x\|_1<\infty$.
\end{defn}
\begin{defn}
A discrete classical channel is a linear positive map on summable sequences that is continuous in the $\ell^1$ norm and preserves the sum, i.e. for any summable sequence $x$
\begin{equation}
\sum_{n=0}^\infty\left[\Phi(x)\right]_n=\sum_{n=0}^\infty x_n\;.
\end{equation}
\end{defn}
The definitions of passive-preserving and Fock-optimal channels can be easily extended to the discrete classical case:
\begin{defn}
Given a summable sequence of positive numbers $\{x_n\}_{n\in\mathbb{N}}$, we denote with $x^\downarrow$ its decreasing rearrangement.
\end{defn}
\begin{defn}
We say that a discrete classical channel $\Phi$ is passive-preserving if for any decreasing summable sequence $x$ of positive numbers $\Phi(x)$ is still decreasing.
\end{defn}
\begin{defn}
We say that a discrete classical channel $\Phi$ is Fock-optimal if for any summable sequence $x$ of positive numbers
\begin{equation}\label{optimalcl}
\Phi(x)\prec\Phi\left(x^\downarrow\right)\;.
\end{equation}
\end{defn}
Let us now introduce the thinning.
\begin{defn}[Thinning]
Let $N$ be a random variable with values in $\mathbb{N}$.
The thinning with parameter $0\leq\lambda\leq1$ is defined as
\begin{equation}
T_\lambda(N)=\sum_{i=1}^N B_i\;,
\end{equation}
where the $\{B_n\}_{n\in\mathbb{N}^+}$ are independent Bernoulli variables with parameter $\lambda$, i.e. each $B_i$ is one with probability $\lambda$, and zero with probability $1-\lambda$.
\end{defn}
From a physical point of view, the thinning can be understood as follows:
consider a beam-splitter of transmissivity $\lambda$, where each incoming photon has probability $\lambda$ of being transmitted, and $1-\lambda$ of being reflected, and suppose that what happens to a photon is independent from what happens to the other ones.
Let $N$ be the random variable associated to the number of incoming photons, and $\{p_n\}_{n\in\mathbb{N}}$ its probability distribution, i.e. $p_n$ is the probability that $N=n$ (i.e. that $n$ photons are sent).
Then, $T_\lambda(p)$ is the probability distribution of the number of transmitted photons.
It is easy to show that
\begin{equation}\label{Tn}
\left[T_\lambda(p)\right]_n=\sum_{k=0}^\infty r_{n|k}\;p_k\;,
\end{equation}
where the transition probabilities $r_{n|k}$ are given by
\begin{equation}\label{rnk}
r_{n|k}=\binom{k}{n}\lambda^n(1-\lambda)^{k-n}\;,
\end{equation}
and vanish for $k<n$.

The map \eqref{Tn} can be uniquely extended by linearity to the set of summable sequences:
\begin{equation}\label{Tne}
\left[T_\lambda(x)\right]_n=\sum_{k=0}^\infty r_{n|k}\;x_k\;,\qquad \|x\|_1<\infty\;.
\end{equation}
\begin{prop}
The map $T_\lambda$ defined in \eqref{Tne} is continuous in the $\ell^1$ norm and sum-preserving.
\begin{proof}
For any summable sequence $x$ we have
\begin{equation}
\sum_{n=0}^\infty\left|T_\lambda(x)\right|_n\leq\sum_{n=0}^\infty\sum_{k=0}^\infty r_{n|k}\;|x_k|=\sum_{k=0}^\infty|x_k|\;,
\end{equation}
where we have used that for any $k\in\mathbb{N}$
\begin{equation}
\sum_{n=0}^\infty r_{n|k}=1\;.
\end{equation}
Then, $T_\lambda$ is continuous in the $\ell^1$ norm.

An analogous proof shows that $T_\lambda$ is sum-preserving.
\end{proof}
\end{prop}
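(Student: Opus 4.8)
The plan is to reduce everything to the single structural observation that the transition kernel $r_{n|k}$ of \eqref{rnk} is, for each fixed $k$, a genuine probability distribution in the index $n$. First I would record the identity
\begin{equation}\label{rowsum}
\sum_{n=0}^\infty r_{n|k}=\sum_{n=0}^k\binom{k}{n}\lambda^n(1-\lambda)^{k-n}=\bigl(\lambda+(1-\lambda)\bigr)^k=1\;,
\end{equation}
which is just the binomial theorem, using that $r_{n|k}=0$ for $n>k$. This normalization is the only nontrivial ingredient; the remaining work is elementary manipulation of nonnegative double series.

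For the continuity, since $T_\lambda$ is manifestly linear it suffices to show it is bounded. I would estimate
\begin{equation}
\left\|T_\lambda(x)\right\|_1=\sum_{n=0}^\infty\left|\sum_{k=0}^\infty r_{n|k}\,x_k\right|\leq\sum_{n=0}^\infty\sum_{k=0}^\infty r_{n|k}\,|x_k|=\sum_{k=0}^\infty|x_k|\sum_{n=0}^\infty r_{n|k}=\|x\|_1\;,
\end{equation}
where the first inequality is the triangle inequality, the interchange of the two summations is permitted because every summand is nonnegative (Tonelli), and the inner sum is replaced by $1$ using \eqref{rowsum}. Hence $T_\lambda$ is a contraction for the $\ell^1$ norm, in particular bounded, hence continuous.

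For the sum-preserving property I would run the identical computation with the absolute values removed,
\begin{equation}
\sum_{n=0}^\infty\left[T_\lambda(x)\right]_n=\sum_{n=0}^\infty\sum_{k=0}^\infty r_{n|k}\,x_k=\sum_{k=0}^\infty x_k\sum_{n=0}^\infty r_{n|k}=\sum_{k=0}^\infty x_k\;,
\end{equation}
again invoking \eqref{rowsum} at the last step. The one point that genuinely requires care, and which I regard as the main (if modest) obstacle, is the legitimacy of reordering this double sum when the entries $x_k$ are allowed to change sign: here nonnegativity is not available and one needs a Fubini argument. The required absolute convergence, $\sum_{n,k}r_{n|k}\,|x_k|=\|x\|_1<\infty$, is exactly what the continuity estimate already established, which is why I would prove boundedness first and then reuse its bound to justify the rearrangement.
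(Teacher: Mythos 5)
Your proof is correct and follows essentially the same route as the paper: bound $\|T_\lambda(x)\|_1$ by $\|x\|_1$ using the row-sum identity $\sum_n r_{n|k}=1$, then repeat the computation without absolute values for sum-preservation. The only difference is that you spell out the Tonelli/Fubini justification for interchanging the double sums (using the absolute convergence from the first estimate to handle signed sequences), a detail the paper leaves implicit under ``an analogous proof.''
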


\begin{thm}\label{thinatt}
Let $\Phi_\lambda$ and $T_\lambda$ be the quantum-limited attenuator and the thinning of parameter $0\leq\lambda\leq1$, respectively.
Then for any summable sequence $x$
\begin{equation}
\Phi_\lambda\left(\sum_{n=0}^\infty x_n\;|n\rangle\langle n|\right)=\sum_{n=0}^\infty \left[T_\lambda(x)\right]_n\;|n\rangle\langle n|\;.
\end{equation}
\begin{proof}
Easily follows from the representation \eqref{kraus}, \eqref{Tn} and \eqref{rnk}.
\end{proof}
\end{thm}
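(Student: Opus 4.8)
The plan is to verify the identity first on the diagonal Fock projectors $|k\rangle\langle k|$ and then extend to arbitrary summable sequences by linearity and continuity. Since $\Phi_\lambda$ is a quantum channel and hence continuous in trace norm, while $T_\lambda$ is continuous in the $\ell^1$ norm by the preceding Proposition, and since $\sum_k x_k|k\rangle\langle k|$ is the limit of its truncations, it suffices to establish
$$\Phi_\lambda\left(|k\rangle\langle k|\right)=\sum_{n=0}^\infty r_{n|k}\;|n\rangle\langle n|$$
for every $k\in\mathbb{N}$, with $r_{n|k}$ given by \eqref{rnk}.

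First I would substitute $\hat{X}=|k\rangle\langle k|$ into the Kraus representation \eqref{kraus}. Applying \eqref{acta} repeatedly gives $\hat{a}^l|k\rangle=\sqrt{k!/(k-l)!}\,|k-l\rangle$ for $l\leq k$ and zero otherwise, so that $\hat{a}^l\,|k\rangle\langle k|\,(\hat{a}^\dag)^l=\frac{k!}{(k-l)!}|k-l\rangle\langle k-l|$. Conjugating by $\lambda^{\hat{N}/2}$ and using $\lambda^{\hat{N}/2}|k-l\rangle=\lambda^{(k-l)/2}|k-l\rangle$ contributes the factor $\lambda^{k-l}$. Collecting all terms yields
$$\Phi_\lambda\left(|k\rangle\langle k|\right)=\sum_{l=0}^k\frac{(1-\lambda)^l}{l!}\;\frac{k!}{(k-l)!}\;\lambda^{k-l}\;|k-l\rangle\langle k-l|\;.$$

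The last step is the change of summation index $n=k-l$, which converts the combinatorial prefactor into $\binom{k}{n}=\frac{k!}{n!\,(k-n)!}$ and rewrites the sum as $\sum_{n=0}^k\binom{k}{n}\lambda^n(1-\lambda)^{k-n}|n\rangle\langle n|$, which is exactly $\sum_n r_{n|k}|n\rangle\langle n|$ by \eqref{rnk}. Assembling this over $k$ with weights $x_k$ and interchanging the two sums reproduces $\sum_n\left(\sum_k r_{n|k}x_k\right)|n\rangle\langle n|=\sum_n[T_\lambda(x)]_n|n\rangle\langle n|$ via \eqref{Tn}. There is no genuine obstacle in this argument; the only points deserving a word of care are the justification of the interchange of the (absolutely convergent) double sum and the passage from the basis identity to all summable $x$, both of which follow from the $\ell^1$-continuity and sum-preservation of $T_\lambda$ already established together with the boundedness of $\Phi_\lambda$.
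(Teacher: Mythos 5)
Your proof is correct and is precisely the computation the paper's one-line proof alludes to: substituting Fock projectors into the Kraus representation \eqref{kraus}, recognizing the binomial coefficients of \eqref{rnk} after re-indexing, and extending by linearity and $\ell^1$/trace-norm continuity. This is the same approach as the paper, with the details (which the paper omits as "easy") fully spelled out.
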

As easy consequence of Theorem \ref{thinatt} and Theorem \ref{maintheorem}, we have
\begin{thm}
The thinning is passive-preserving and Fock-optimal.
\end{thm}

\section{Conclusion}\label{secconclmaj}
We have proved that for any one-mode gauge-covariant bosonic Gaussian channel, the output generated by any state diagonal in the Fock basis and with decreasing eigenvalues majorizes the output generated by any other input state with the same spectrum.
Then, the input state with a given entropy minimizing the output entropy is certainly diagonal in the Fock basis and has decreasing eigenvalues.
The non-commutative quantum constrained minimum output entropy conjecture \ref{CMOE} is hence reduced to a problem in classical discrete probability, that we will solve in Chapter \ref{chepni}.

Exploiting unitary equivalence we also extend our results to one-mode trace-preserving bosonic Gaussian channel which are not gauge-covariant, with the notable exceptions of those special maps admitting normal forms $A_2)$ and $B_1)$ \cite{holevo2007one} for which we show that no general majorization ordering is possible.

\chapter{Gaussian states minimize the output entropy of the attenuator}\label{chepni}
In this Chapter we exploit the majorization result of Chapter \ref{majorization} to prove that Gaussian thermal input states minimize the output entropy of the one-mode Gaussian quantum-limited attenuator for fixed input entropy.

The Chapter is based on
\begin{enumerate}
\item[\cite{de2016gaussian}] G.~De~Palma, D.~Trevisan, and V.~Giovannetti, ``Gaussian states minimize the output entropy of the one-mode quantum
  attenuator,'' \emph{IEEE Transactions on Information Theory}, vol.~63, no.~1,
  pp. 728--737, 2017.\\ {\small\url{http://ieeexplore.ieee.org/document/7707386}}
\end{enumerate}

\section{Introduction}
Most communication schemes encode the information into pulses of electromagnetic radiation, that is transmitted through metal wires, optical fibers or free space, and is unavoidably affected by signal attenuation.
The maximum achievable communication rate of a channel depends on the minimum noise achievable at its output.
A continuous classical signal can be modeled by a real random variable $X$.
Signal attenuation corresponds to a rescaling $X\mapsto\sqrt{\lambda}\,X$, where $0\leq\lambda\leq1$ is the attenuation coefficient (the power of the signal is proportional to $X^2$ and gets rescaled by $\lambda$).
The noise of a real random variable is quantified by its Shannon differential entropy  $H$ \cite{cover2006elements}.
The Shannon entropy of the rescaled signal is a simple function of the entropy of the original signal \cite{cover2006elements}:
\begin{equation}\label{epniscaling}
H\left(\sqrt{\lambda}\;X\right)=H\left(X\right)+\ln\sqrt{\lambda}\;.
\end{equation}
This property is ubiquitous in classical information theory.
For example, it lies at the basis of the proof of the Entropy Power Inequality \cite{dembo1991information,gardner2002brunn,shannon2001mathematical,stam1959some,verdu2006simple,rioul2011information,cover2006elements} (see also Section \ref{secproblem} and Equation \eqref{rescS}).

In the quantum regime the role of the classical Shannon entropy is played by the von Neumann entropy \cite{wilde2013quantum,holevo2013quantum} and signal attenuation is modeled by the Gaussian quantum-limited attenuator (see \cite{chan2006free,braunstein2005quantum,holevo2013quantum,weedbrook2012gaussian,holevo2015gaussian} and Section \ref{secattampl}).

A striking consequence of the quantization of the energy is that the output entropy of the quantum-limited attenuator is not a function of the input entropy alone.
A fundamental problem in quantum communication is then determining the minimum output entropy of the attenuator for fixed input entropy.
According to the constrained minimum output entropy conjecture \ref{CMOE}, Gaussian thermal input states achieve this minimum output entropy \cite{guha2007classicalproc,guha2007classical,guha2008entropy,guha2008capacity,wilde2012information,wilde2012quantum}.
The first attempt of a proof has been the quantum Entropy Power Inequality (qEPI) (see \cite{konig2013limits,konig2014entropy,de2014generalization,de2015multimode} and Chapter \ref{epi}), that provides the lower bound
\begin{equation}\label{epniqEPI}
S\left(\Phi_\lambda\left(\hat{\rho}\right)\right)\geq n\;\ln\left(\lambda\left(e^{\left.S\left(\hat{\rho}\right)\right/n}-1\right)+1\right)
\end{equation}
to the output entropy of the $n$-mode quantum-limited attenuator $\Phi_\lambda$ in terms of the entropy of the input state $\hat{\rho}$.
However, the qEPI \eqref{epniqEPI} is \emph{not} saturated by thermal Gaussian states, and thus it is not sufficient to prove their conjectured optimality.

Here we prove that Gaussian thermal input states minimize the output entropy of the one-mode quantum-limited attenuator for fixed input entropy (Theorem \ref{epnithmmain}).
The proof starts from the recent majorization result on one-mode Gaussian quantum channels that we have proved in Chapter \ref{majorization} (see also \cite{de2015passive}), that reduces the problem to input states diagonal in the Fock basis.
The key point of the proof is a new isoperimetric ineqeuality (Theorem \ref{epnithmiso}), that provides a lower bound to the derivative of the output entropy of the attenuator with respect to the attenuation coefficient.

The restriction of the one-mode quantum-limited attenuator to input states diagonal in the Fock basis is the map acting on discrete classical probability distributions on $\mathbb{N}$ known in the probability literature under the name of thinning \cite{de2015passive}.
The thinning has been introduced by R\'enyi \cite{renyi1956characterization} as a discrete analogue of the rescaling of a continuous real random variable.
The thinning has been involved with this role in discrete versions of the central limit theorem \cite{harremoes2007thinning,yu2009monotonic,harremoes2010thinning}
and of the Entropy Power Inequality \cite{yu2009concavity,johnson2010monotonicity}.
All these results require the ad hoc hypothesis of the ultra log-concavity (ULC) of the input state.
In particular, the Restricted Thinned Entropy Power Inequality \cite{johnson2010monotonicity} states that the Poisson input probability distribution minimizes the output Shannon entropy of the thinning among all the ULC input probability distributions with a given Shannon entropy.
We prove (Theorem \ref{epnithmthin}) that the geometric distribution minimizes the output entropy of the thinning among all the input probability distributions with a given entropy, without the ad hoc ULC constraint.

Theorem \ref{epnithmmain} constitutes a strong evidence for the validity of the conjecture in the multimode scenario, whose proof could exploit a multimode generalization of the isoperimetric inequality \eqref{epnilogs}.
The multimode generalization of Theorem \ref{epnithmmain} would finally permit to conclude the proof of the optimality of coherent Gaussian states for two communication tasks.
The first is the triple trade-off coding for public communication, private communication and secret key distribution through the Gaussian quantum-limited attenuator \cite{wilde2012public,wilde2012information,wilde2012quantum}.
The second is the transmission of classical information to two receivers through the Gaussian degraded quantum broadcast channel \cite{guha2007classicalproc,guha2007classical}, that we have discussed in Section \ref{broadcast}.
Moreover, it would permit to determine the triple trade-off region for the simultaneous transmission of both classical and quantum information with assistance or generation of shared entanglement through the Gaussian quantum-limited attenuator \cite{wilde2012public,wilde2012information,wilde2012quantum}.

The Chapter is structured as follows.
In Section \ref{epnisetup} we state the main result (Theorem \ref{epnithmmain}).
Section \ref{epnisecproof} contains the proof of Theorem \ref{epnithmmain} and the statement of the isoperimetric inequality (Theorem \ref{epnithmiso}); Sections \ref{epniseclem} and \ref{secprooflem} contain the proof of Theorem \ref{epnithmiso}.
Section \ref{epnisecthinning} links these results to the thinning operation, and Sections \ref{epnifinitesproof} and \ref{epniauxlemmata} contain the proof of some auxiliary lemmata.
Finally, the conclusions are in Section \ref{epnisecconcl}.

\section{Main result}\label{epnisetup}
The Gaussian thermal state with respect to the photon-number Hamiltonian \eqref{Nosc} and with average energy $E\geq0$ is
\begin{equation}\label{epniomegaE}
\hat{\omega}_E=\sum_{n=0}^\infty \frac{1}{E+1}\left(\frac{E}{E+1}\right)^n\;|n\rangle\langle n|\;,\quad\mathrm{Tr}\left[\hat{H}\;\hat{\omega}_E\right]=E\;,
\end{equation}
where $|n\rangle_{n\in\mathbb{N}}$ are the states of the Fock basis \eqref{fockdef}.
$\hat{\omega}$ corresponds to a geometric probability distribution of the energy, and has von Neumann entropy
\begin{equation}\label{epnidefg}
S\left(\hat{\omega}_E\right)=\left(E+1\right)\ln\left(E+1\right)-E\ln E:=g(E)\;.
\end{equation}
The quantum-limited attenuator sends thermal states into themselves, i.e. $\Phi_\lambda\left(\hat{\omega}_E\right)=\hat{\omega}_{\lambda E}$, hence
\begin{equation}
S\left(\Phi_\lambda\left(\hat{\omega}_E\right)\right)=g(\lambda E)=g\left(\lambda\;g^{-1}\left(S\left(\hat{\omega}_E\right)\right)\right)\;.
\end{equation}

We can now state our main result.
\begin{thm}\label{epnithmmain}
Gaussian thermal input states \eqref{epniomegaE} minimize the output entropy of the quantum-limited attenuator among all the input states with a given entropy, i.e. for any input state $\hat{\rho}$ and any $0\leq\lambda\leq1$
\begin{equation}
S\left(\Phi_\lambda\left(\hat{\rho}\right)\right)\geq g\left(\lambda\;g^{-1}\left(S\left(\hat{\rho}\right)\right)\right)\;.
\end{equation}
\begin{proof}
See Section \ref{epnisecproof}.
\end{proof}
\end{thm}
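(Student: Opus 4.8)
For any one-mode input state $\hat\rho$ and any $0\leq\lambda\leq1$,
$$S(\Phi_\lambda(\hat\rho)) \geq g(\lambda\, g^{-1}(S(\hat\rho))),$$
where $g(E)=(E+1)\ln(E+1)-E\ln E$ is the entropy of the thermal state with mean photon number $E$.

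Let me think about how I'd prove this.

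First, the reduction. From Chapter \ref{majorization} (Theorem \ref{maintheorem}), the quantum-limited attenuator is Fock-optimal: the output generated by any state is majorized by the output generated by the passive rearrangement $\hat\rho^\downarrow$. Since majorization implies lower entropy (Remark \ref{majS}) and $\hat\rho$ and $\hat\rho^\downarrow$ have the same entropy, to minimize output entropy at fixed input entropy I may restrict to passive states — states diagonal in the Fock basis with decreasing eigenvalues. So the problem becomes one about discrete classical probability distributions $p=(p_n)$ under the thinning $T_\lambda$ (Theorem \ref{thinatt}), and I must show the geometric distribution minimizes output entropy at fixed input entropy.

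The structure I expect. Parametrize $\lambda=e^{-t}$, so that $\Phi_\lambda=e^{t\mathcal L}$ with the Lindbladian $\mathcal L$ of Lemma \ref{lemL}. Write $\hat\rho(t)=e^{t\mathcal L}(\hat\rho)$. The goal is to compare $\frac{d}{dt}S(\hat\rho(t))$ against the corresponding derivative for the thermal state with the same instantaneous entropy. The natural strategy is a coupling/comparison argument along the flow: define a function $E(t)=g^{-1}(S(\hat\rho(t)))$ (the "equivalent thermal energy"), and show that $S(\hat\rho(t))\geq g(\lambda(t)\,E_0)$ for all $t$ by checking it holds at $t=0$ (equality) and that the derivative inequality propagates correctly. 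Because the thermal state is the fixed point along the family ($\Phi_\lambda(\hat\omega_E)=\hat\omega_{\lambda E}$), it suffices to prove a differential inequality comparing the rate of entropy change of an arbitrary state to that of the thermal state carrying the same entropy.

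The heart of the matter is the isoperimetric inequality (Theorem \ref{epnithmiso}). I would compute $\frac{d}{dt}S(\hat\rho(t))=-\mathrm{Tr}[\mathcal L(\hat\rho)\ln\hat\rho]$ explicitly for a Fock-diagonal state, obtaining a discrete Fisher-information-type functional in the eigenvalues $p_n$ and their differences. The claim to establish is a lower bound of the form
$$\frac{d}{dt}S(\hat\rho(t)) \geq \Psi\big(S(\hat\rho(t))\big)$$
for an explicit function $\Psi$ that is saturated by the geometric/thermal family; concretely one expects $\Psi(S)=\frac{d}{dt}\big|_{}\,g(\lambda E)$ evaluated along the thermal flow. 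Integrating this differential inequality from $0$ to $t$, using equality of entropies at $t=0$, yields the theorem. I would prove the isoperimetric inequality by expressing the entropy-production functional through the transition structure of the thinning, and then reducing the optimization over all $p$ (at fixed entropy) to showing the geometric sequence is the extremizer — likely via a variational argument or by exploiting log-concavity/convexity properties of $g$ together with the explicit $n$-dependence $(n+1)p_{n+1}-n\,p_n$ appearing in the eigenvalue ODE derived in Section \ref{mainproof}.

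The main obstacle. The hard part will be proving the isoperimetric inequality \ref{epnithmiso} itself: unlike the qEPI of Chapter \ref{epi}, which followed cleanly from the quantum Stam inequality via data processing, here I need a \emph{sharp} bound saturated exactly by thermal states, not merely an almost-tight one. The entropy-production functional is genuinely discrete (the Fock ladder has no continuous scaling symmetry), so the classical continuous isoperimetric/Fisher-information machinery does not transfer directly, and the ultra-log-concavity crutch used in the thinning literature \cite{johnson2010monotonicity} must be avoided. The technical crux is controlling the functional $-\sum_n\big((n+1)p_{n+1}-n\,p_n\big)\ln p_n$ at fixed $-\sum_n p_n\ln p_n$ and showing its minimum over all admissible decreasing $(p_n)$ is attained by the geometric law; I expect this to require a careful convexity analysis (possibly a two-point or adjacent-pair rearrangement argument) and is where the real work of Sections \ref{epniseclem}--\ref{secprooflem} must go.
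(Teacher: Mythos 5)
Your proposal follows essentially the same route as the paper's proof: reduce to passive (Fock-diagonal, decreasing) input states via the majorization theorem of Chapter \ref{majorization}, establish the isoperimetric differential inequality $\left.\frac{d}{dt}S\left(\Phi_{e^{-t}}\left(\hat{\rho}\right)\right)\right|_{t=0}\geq f\left(S\left(\hat{\rho}\right)\right)$ with $f$ saturated exactly by the thermal family, propagate it along the flow using that the attenuator preserves passive states, and integrate via a comparison theorem for first-order ODEs starting from equal entropies at $t=0$. The paper proves the isoperimetric inequality precisely by the variational route you anticipate -- Karush--Kuhn--Tucker stationarity conditions for the entropy-production functional on finite-dimensional truncations, followed by a limiting argument showing the maximizers converge to the geometric distribution -- so your sketch matches the paper's proof in both structure and substance.
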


\section{Proof of Theorem \ref{epnithmmain}}\label{epnisecproof}
The starting point of the proof is the result of Chapter \ref{majorization} and Ref. \cite{de2015passive}, that links the constrained minimum output entropy conjecture to the notions of passive states.
The passive states of a quantum system \cite{pusz1978passive,lenard1978thermodynamical,gorecki1980passive,vinjanampathy2015quantum,goold2015role,binder2015quantum} minimize the average energy for a given spectrum.
They are diagonal in the energy eigenbasis, and their eigenvalues decrease as the energy increases.
The passive rearrangement $\hat{\rho}^\downarrow$ of a quantum state $\hat{\rho}$ is the only passive state with the same spectrum of $\hat{\rho}$.
The result is the following:
\begin{thm}\label{epnithmmaj}
The passive rearrangement of the input decreases the output entropy, i.e. for any quantum state $\hat{\rho}$ and any $0\leq\lambda\leq1$
\begin{equation}\label{epnipassiveS}
S\left(\Phi_\lambda\left(\hat{\rho}\right)\right)\geq S\left(\Phi_\lambda\left(\hat{\rho}^\downarrow\right)\right)\;.
\end{equation}
\begin{proof}
Follows from Theorem \ref{maintheorem} and Remark \ref{majS}.
\end{proof}
\end{thm}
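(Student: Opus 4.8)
The plan is to deduce the statement directly from the two cited ingredients, with the only real care going into tracking the direction of the majorization relations. First I would observe that the quantum-limited attenuator $\Phi_\lambda$ is precisely a one-mode gauge-covariant Gaussian quantum channel: it sends Gaussian states to Gaussian states, and it commutes with the phase rotations generated by $\hat{N}$, as is manifest from the Kraus form \eqref{kraus}, where each term carries matched powers of $\hat{a}$ and $\hat{a}^\dag$ dressed by the factors $\lambda^{\hat{N}/2}$. Hence Theorem \ref{maintheorem} applies and guarantees that $\Phi_\lambda$ is Fock-optimal, so that for the given input state $\hat{\rho}$ one has the weak submajorization $\Phi_\lambda(\hat{\rho})\prec_w\Phi_\lambda(\hat{\rho}^\downarrow)$, i.e. $\Phi_\lambda(\hat{\rho}^\downarrow)$ is the majorizing (less noisy) output.

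Second, I would upgrade weak submajorization to genuine majorization. Since $\Phi_\lambda$ is trace-preserving and $\hat{\rho}$, $\hat{\rho}^\downarrow$ share the same spectrum and hence the same trace, the two outputs $\Phi_\lambda(\hat{\rho})$ and $\Phi_\lambda(\hat{\rho}^\downarrow)$ have equal trace. By the Remark following the definition of Fock-optimality, equality of traces promotes the relation to an honest majorization, $\Phi_\lambda(\hat{\rho})\prec\Phi_\lambda(\hat{\rho}^\downarrow)$; equivalently $\Phi_\lambda(\hat{\rho}^\downarrow)\succ\Phi_\lambda(\hat{\rho})$, the passively rearranged input producing the more ordered output.

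Finally, I would translate this spectral ordering into an entropy inequality. By the majorization equivalence theorem of Section \ref{secmaj}, in its concave-function form $\hat{X}\succ\hat{Y}\Rightarrow\mathrm{Tr}\,g(\hat{X})\leq\mathrm{Tr}\,g(\hat{Y})$ for continuous concave $g$ with $g(0)=0$, applied with $g(x)=-x\ln x$ exactly as recorded in Remark \ref{majS}, the operator that majorizes has the smaller von Neumann entropy. Since $\Phi_\lambda(\hat{\rho}^\downarrow)\succ\Phi_\lambda(\hat{\rho})$, this yields $S(\Phi_\lambda(\hat{\rho}^\downarrow))\leq S(\Phi_\lambda(\hat{\rho}))$, which is precisely the inequality \eqref{epnipassiveS}.

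The only point that demands attention is the bookkeeping of directions: the symbol $\prec_w$ in the definition of Fock-optimality is oriented so that $\Phi_\lambda(\hat{\rho}^\downarrow)$ is the majorizing state, and the concave-function criterion then sends the majorizing state to the \emph{lower} entropy, so a sign slip at either stage would invert the claimed inequality. There is no analytic obstacle here, because all the substantive work, namely the passive-preserving and Fock-optimal property of one-mode gauge-covariant Gaussian channels, is already carried by Theorem \ref{maintheorem}; the present statement is a short corollary that merely repackages that spectral result as an entropy statement.
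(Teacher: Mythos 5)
Your proposal is correct and follows exactly the paper's route: the paper's proof of this theorem is literally the one-line invocation of Theorem \ref{maintheorem} (Fock-optimality of one-mode gauge-covariant Gaussian channels, giving $\Phi_\lambda(\hat{\rho})\prec_w\Phi_\lambda(\hat{\rho}^\downarrow)$, upgraded to majorization by trace preservation) together with Remark \ref{majS} (concave-function criterion with $g(x)=-x\ln x$). Your careful tracking of the majorization directions is accurate and simply makes explicit what the paper leaves implicit.
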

Then, it is sufficient to prove Theorem \ref{epnithmmain} for passive states, i.e. states of the form
\begin{equation}
\hat{\rho}=\sum_{n=0}^\infty p_n\;|n\rangle\langle n|\;,\qquad p_0\geq p_1\geq\ldots\geq0\;.
\end{equation}

\begin{lem}\label{epnifinites}
If Theorem \ref{epnithmmain} holds for any passive state with finite support, then it holds for any passive state.
\begin{proof}
See Section \ref{epnifinitesproof}.
\end{proof}
\end{lem}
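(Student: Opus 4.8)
The plan is to approximate a general passive state by its normalised truncations to finitely many Fock levels and to pass Theorem~\ref{epnithmmain} to the limit. Fix a passive state $\hat\rho=\sum_{n=0}^\infty p_n\,|n\rangle\langle n|$ with $p_0\ge p_1\ge\cdots\ge0$ and $\hat\rho\neq0$, put $Z_N=\sum_{n=0}^N p_n$, and define $\hat\rho^{(N)}=Z_N^{-1}\sum_{n=0}^N p_n\,|n\rangle\langle n|$. Each $\hat\rho^{(N)}$ is passive with finite support, $Z_N\nearrow1$, and $\hat\rho^{(N)}\to\hat\rho$ in trace norm; by hypothesis $S(\Phi_\lambda(\hat\rho^{(N)}))\ge g(\lambda\,g^{-1}(S(\hat\rho^{(N)})))$ for every $N$, where $g$ is the function \eqref{epnidefg}. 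It therefore suffices to prove that both sides converge to the corresponding quantities for $\hat\rho$, after which the inequality survives the limit since $g$ and $g^{-1}$ are continuous. I restrict throughout to $S(\hat\rho)<\infty$, which is the case of interest.

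For the input entropy I would use the identity $S(\hat\rho^{(N)})=Z_N^{-1}\sum_{n=0}^N(-p_n\ln p_n)+\ln Z_N$. The summands $-p_n\ln p_n$ are nonnegative, so their partial sums increase to $S(\hat\rho)$; together with $Z_N\to1$ this gives $S(\hat\rho^{(N)})\to S(\hat\rho)$, and hence $g(\lambda\,g^{-1}(S(\hat\rho^{(N)})))\to g(\lambda\,g^{-1}(S(\hat\rho)))$. This part is elementary monotone convergence.

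The delicate step is the convergence of the output entropy. Both $\Phi_\lambda(\hat\rho)$ and $\Phi_\lambda(\hat\rho^{(N)})$ are diagonal in the Fock basis (the attenuator preserves diagonality, cf.\ the Kraus form \eqref{kraus}), and by Theorem~\ref{thinatt} their eigenvalue sequences are the thinnings $q=T_\lambda(p)$ and $q^{(N)}=T_\lambda(p^{(N)})$, so that $S(\Phi_\lambda(\hat\rho))=\sum_n\phi(q_n)$ with $\phi(t)=-t\ln t$, and likewise for $q^{(N)}$. Since the transition probabilities $r_{n|k}$ in \eqref{rnk} are nonnegative and vanish for $k<n$, one has $q_n^{(N)}=Z_N^{-1}\sum_{k=n}^N r_{n|k}\,p_k\le Z_N^{-1}q_n$ together with $q_n^{(N)}\to q_n$ for each $n$. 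Fixing $N_0$ with $Z_{N_0}>0$ and writing $C=Z_{N_0}^{-1}\ge1$, we get $q_n^{(N)}\le C\,q_n$ for all $N\ge N_0$. Because $\phi$ is nonnegative on $[0,1]$, increasing on $[0,e^{-1}]$, and obeys $\phi(Ct)=C\phi(t)-Ct\ln C\le C\phi(t)$, the terms $\phi(q_n^{(N)})$ are dominated uniformly in $N\ge N_0$: by $e^{-1}$ on the finitely many indices with $Cq_n>e^{-1}$, and by $\phi(Cq_n)\le C\,\phi(q_n)$ on the rest, the latter being summable precisely because $S(\Phi_\lambda(\hat\rho))<\infty$. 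Dominated convergence then gives $S(\Phi_\lambda(\hat\rho^{(N)}))\to S(\Phi_\lambda(\hat\rho))$; and if instead $S(\Phi_\lambda(\hat\rho))=\infty$ the desired bound is trivially true.

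Combining the two convergences and letting $N\to\infty$ in $S(\Phi_\lambda(\hat\rho^{(N)}))\ge g(\lambda\,g^{-1}(S(\hat\rho^{(N)})))$ yields the claim for arbitrary passive $\hat\rho$. The main obstacle is exactly this output limit: the von Neumann entropy is only \emph{lower} semicontinuous under trace-norm convergence, which bounds $S(\Phi_\lambda(\hat\rho))$ in the wrong direction, and a passive state of finite entropy can have infinite mean photon number, so the familiar continuity of the entropy on energy-bounded sets is not available. The resolution is to replace that abstract continuity by the explicit monotone domination $q_n^{(N)}\le q_n/Z_N$ furnished by the thinning structure, which makes dominated convergence on the output eigenvalues go through with no energy constraint at all.
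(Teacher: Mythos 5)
Your construction is essentially the proof in Section \ref{epnifinitesproof}: the same normalized truncations $\hat\rho^{(N)}$, the same monotone-convergence argument for the input entropy, and the same key domination $q_n^{(N)}\le q_n/Z_N$ (which the paper obtains from the operator inequality $Z_N\,\Phi_\lambda(\hat\rho^{(N)})\le\Phi_\lambda(\hat\rho)$, a consequence of positivity of the channel, and which you obtain equivalently from the nonnegativity of the thinning coefficients $r_{n|k}$) feeding a dominated-convergence argument for the output entropy under the assumption $S(\Phi_\lambda(\hat\rho))<\infty$. All of those steps are correct as written.

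The one genuine gap is your blanket restriction to $S(\hat\rho)<\infty$. The lemma asserts the reduction for \emph{any} passive state, and passive states of infinite entropy exist; for them Theorem \ref{epnithmmain} is not vacuous, since for $\lambda>0$ it asserts $S(\Phi_\lambda(\hat\rho))\ge g\left(\lambda\,g^{-1}\left(S(\hat\rho)\right)\right)=\infty$, i.e.\ that the output entropy is itself infinite. The paper avoids this by splitting on the \emph{output} entropy at the very start: if $S(\Phi_\lambda(\hat\rho))=\infty$ there is nothing to prove, and otherwise both limits are taken with no finiteness assumption on the input, the paper noting explicitly that $S(\hat\rho^{(N)})\to S(\hat\rho)$ holds also when $S(\hat\rho)=\infty$. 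In the remaining case ($S(\hat\rho)=\infty$, $S(\Phi_\lambda(\hat\rho))<\infty$, $\lambda>0$) the limit of the finite-support inequality then produces a finite left-hand side dominating a right-hand side that diverges, a contradiction; hence that case cannot occur and the theorem holds for all passive states. Your own machinery closes the gap the same way — your output dominated-convergence argument never uses finiteness of $S(\hat\rho)$, and your input-entropy identity gives $S(\hat\rho^{(N)})\to\infty$ by monotone convergence — so the fix is one sentence, but as written the proposal proves a strictly weaker statement than the lemma.
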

From Lemma \ref{epnifinites}, we can suppose $\hat{\rho}$ to be a passive state with finite support.
\begin{lem}\label{epnilemcomp}
The quantum-limited attenuator $\Phi_\lambda$ satisfies the composition rule $\Phi_\lambda\circ\Phi_{\lambda'}=\Phi_{\lambda\,\lambda'}$.
\begin{proof}
Follows from Lemma \ref{lemL}.
\end{proof}
\end{lem}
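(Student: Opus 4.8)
The plan is to read off the composition rule from the semigroup structure already exhibited in Lemma \ref{lemL}. That lemma provides a \emph{single}, $\lambda$-independent Lindbladian $\mathcal{L}$ (the operator \eqref{lindblad}) with the property that $\Phi_\lambda = e^{t\mathcal{L}}$ whenever $\lambda = e^{-t}$, $t\geq0$. The decisive point is that $\mathcal{L}$ does not depend on the attenuation parameter, so the entire family $\{\Phi_{e^{-t}}\}_{t\geq0}$ is the one-parameter semigroup generated by this fixed operator. First I would reparametrize both factors, writing $\lambda = e^{-t}$ and $\lambda' = e^{-t'}$ with $t,t'\geq0$, so that $\Phi_\lambda = e^{t\mathcal{L}}$ and $\Phi_{\lambda'} = e^{t'\mathcal{L}}$.

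The composition then reduces to additivity of the exponent for one generator: since $e^{t\mathcal{L}}$ and $e^{t'\mathcal{L}}$ are exponentials of the same operator $\mathcal{L}$, they commute and multiply to $e^{(t+t')\mathcal{L}}$. Translating back, $e^{-(t+t')} = e^{-t}e^{-t'} = \lambda\lambda'$, whence $\Phi_\lambda\circ\Phi_{\lambda'} = e^{(t+t')\mathcal{L}} = \Phi_{\lambda\lambda'}$, which is the asserted identity (and matches the rule \eqref{compatt} stated earlier). This argument covers $\lambda,\lambda'\in(0,1]$; the boundary values $\lambda=0$ or $\lambda'=0$, corresponding to $t\to\infty$, follow by continuity of $\lambda\mapsto\Phi_\lambda$, or simply by observing that both sides then map every input to its trace times the vacuum projector $|0\rangle\langle0|$.

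The hard part will be justifying the semigroup identity $e^{t\mathcal{L}}\,e^{t'\mathcal{L}} = e^{(t+t')\mathcal{L}}$ at the required level of rigor, since $\mathcal{L}$ is unbounded (it contains $\hat{a}$ and $\hat{a}^\dag$) and the power-series manipulation of $e^{t\mathcal{L}}$ is not legitimate on all of $\mathfrak{T}(\mathcal{H})$. I would handle this exactly as in the proof of the main theorem: restrict to operators supported on the span of the first $N+1$ Fock states, a subspace left invariant both by the Kraus form \eqref{kraus} and by $\mathcal{L}$ itself; there $\mathcal{L}$ acts as a finite-dimensional operator and the exponent arithmetic is elementary, after which the identity extends to all trace-class operators by the density and trace-norm continuity already invoked in Lemmas \ref{PXPlem} and \ref{Phifinite}. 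A fully independent route that bypasses the generator is to check the rule on the moments: the quantum-limited attenuator acts on the covariance matrix by $\sigma\mapsto\lambda\,\sigma+(1-\lambda)\,\mathbb{I}$ and on the first moments by $\mathbf{r}\mapsto\sqrt{\lambda}\,\mathbf{r}$, and composing two such affine maps gives $\sigma\mapsto\lambda\lambda'\,\sigma+(1-\lambda\lambda')\,\mathbb{I}$ and $\mathbf{r}\mapsto\sqrt{\lambda\lambda'}\,\mathbf{r}$, precisely the action of $\Phi_{\lambda\lambda'}$; as a Gaussian channel is determined by its action on the moments, this reproves the composition rule.
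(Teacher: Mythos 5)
Your proposal is correct and follows essentially the same route as the paper: the paper's proof is precisely the observation that Lemma \ref{lemL} exhibits $\Phi_{e^{-t}}=e^{t\mathcal{L}}$ as a one-parameter semigroup generated by a fixed, $\lambda$-independent Lindbladian, so that $e^{t\mathcal{L}}\circ e^{t'\mathcal{L}}=e^{(t+t')\mathcal{L}}$ translates into $\Phi_\lambda\circ\Phi_{\lambda'}=\Phi_{\lambda\lambda'}$. Your additional remarks on the boundary case $\lambda=0$, the finite-dimensional reduction for rigor, and the alternative verification on moments are sound but go beyond what the paper records.
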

The function $g(x)$ defined in \eqref{epnidefg} is differentiable for $x>0$, and continuous and strictly increasing for $x\geq0$, and its image is the whole interval $[0,\,\infty)$.
Then, its inverse $g^{-1}(S)$ is defined for any $S\geq0$, it is continuous and strictly increasing for $S\geq0$, and differentiable for $S>0$.
We define for any $t\geq0$ the functions
\begin{equation}
\phi(t)=S\left(\Phi_{e^{-t}}\left(\hat{\rho}\right)\right)\;,\qquad\phi_0(t)=g\left(e^{-t}\;g^{-1}\left(S\left(\hat{\rho}\right)\right)\right)\;.
\end{equation}
It is easy to show that
\begin{equation}\label{epnistart}
\phi(0)=\phi_0(0)\;,
\end{equation}
and
\begin{equation}\label{epniphi0}
\frac{d}{dt}\phi_0(t)=f\left(\phi_0(t)\right)\;,
\end{equation}
where
\begin{equation}\label{epnideff}
f(S)=-g^{-1}(S)\;g'\left(g^{-1}(S)\right)
\end{equation}
is defined for any $S\geq0$, and differentiable for $S>0$.
\begin{lem}
$f$ is differentiable for any $S\geq0$.
\begin{proof}
We have
\begin{equation}
f'(S)=\frac{1}{\left(1+g^{-1}(S)\right)\ln\left(1+\frac{1}{g^{-1}(S)}\right)}-1\;,
\end{equation}
hence $\lim_{S\to0}f'(S)=-1$.
\end{proof}
\end{lem}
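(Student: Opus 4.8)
The plan is to reduce differentiability of $f$ on all of $[0,\infty)$ to the single nontrivial point $S=0$, since differentiability on $(0,\infty)$ is already granted. First I would make the derivatives of $g$ explicit. Differentiating $g(E)=(E+1)\ln(E+1)-E\ln E$ gives $g'(E)=\ln\left(1+\tfrac1E\right)$ and $g''(E)=-\tfrac1{E(E+1)}$ for $E>0$. Writing $E=g^{-1}(S)$ and using $\tfrac{dE}{dS}=1/g'(E)$, the chain rule applied to $f(S)=-E\,g'(E)$ yields
\begin{equation}
f'(S)=\frac{-g'(E)-E\,g''(E)}{g'(E)}=-1-\frac{E\,g''(E)}{g'(E)}=\frac{1}{\left(1+g^{-1}(S)\right)\ln\left(1+\frac{1}{g^{-1}(S)}\right)}-1\;,
\end{equation}
valid for $S>0$, since $E\,g''(E)=-1/(E+1)$; this both confirms the stated formula and re-establishes differentiability away from the origin.

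The only genuine issue is the behaviour at $S=0$, where $g'$ itself diverges. The tool I would invoke is the standard corollary of the mean value theorem: if $f$ is continuous on $[0,\delta)$, differentiable on $(0,\delta)$, and $\lim_{S\to0^+}f'(S)=L$ exists and is finite, then $f$ is differentiable at $0$ with $f'(0)=L$. So the proof amounts to checking these three hypotheses.

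For continuity at $0$ I would observe that $g^{-1}$ is continuous with $g^{-1}(0)=0$, and that $f(S)=E\ln E-E\ln(E+1)$ with $E=g^{-1}(S)$; both summands tend to $0$ as $E\to0^+$, so $f(S)\to0=f(0)$, the value $f(0)$ being understood as this limit (which resolves the apparent $0\cdot\infty$ indeterminacy in the definition $f=-g^{-1}\,g'\!\circ g^{-1}$ at the origin). For the one-sided limit of the derivative, as $S\to0^+$ we have $E=g^{-1}(S)\to0^+$, hence $\ln\left(1+\tfrac1E\right)\to+\infty$ while $1+E\to1$; the fraction in the displayed expression therefore tends to $0$ and $\lim_{S\to0^+}f'(S)=-1$. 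Applying the corollary gives $f'(0)=-1$, completing the argument.

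I do not expect a serious obstacle: the computation is elementary once $g'$ and $g''$ are in hand. The one point to handle with care is that $g'(E)\to\infty$ as $E\to0$, so $f$ is not merely a composition of everywhere-differentiable maps at the endpoint; the divergence is tamed because the singular factor $g'(E)$ enters $f$ multiplied by $E$ and is divided out in $f'$, leaving the finite limit $-1$. The invocation of the mean-value-theorem corollary is precisely what upgrades the existence of $\lim_{S\to0^+}f'(S)$ to genuine differentiability at $S=0$.
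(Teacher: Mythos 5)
Your proposal is correct and takes essentially the same route as the paper: compute $f'(S)$ for $S>0$, arriving at the identical closed-form expression $f'(S)=\frac{1}{\left(1+g^{-1}(S)\right)\ln\left(1+\frac{1}{g^{-1}(S)}\right)}-1$, and conclude differentiability at $S=0$ from $\lim_{S\to0^+}f'(S)=-1$. The paper's proof is simply a terser version of yours, stating the formula and the limit while leaving implicit the continuity of $f$ at the origin and the mean-value-theorem corollary that you spell out.
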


The key point of the proof is
\begin{thm}[Isoperimetric inequality]\label{epnithmiso}
For any quantum state $\hat{\rho}$ with finite support
\begin{equation}\label{epnilogs}
\left.\frac{d}{dt}S\left(\Phi_{e^{-t}}\left(\hat{\rho}\right)\right)\right|_{t=0}:=-F\left(\hat{\rho}\right)\geq f\left(S\left(\hat{\rho}\right)\right)\;.
\end{equation}
\begin{proof}
See Section \ref{epniseclem}.
\end{proof}
\end{thm}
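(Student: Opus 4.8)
The plan is to rewrite the left-hand side of \eqref{epnilogs} explicitly and then recognize the inequality as a discrete isoperimetric statement that is extremized by the geometric (thermal) law. Since $\Phi_{e^{-t}}=e^{t\mathcal{L}}$ is trace-preserving (Lemma \ref{lemL}), differentiating $S(\Phi_{e^{-t}}(\hat\rho))=-\mathrm{Tr}[\hat\rho(t)\ln\hat\rho(t)]$ and using $\mathrm{Tr}[\mathcal{L}(\hat\rho)]=0$ gives $F(\hat\rho)=\mathrm{Tr}[\mathcal{L}(\hat\rho)\ln\hat\rho]$. Because the attenuator is passive-preserving (Theorem \ref{maintheorem}) and Theorem \ref{epnithmmaj} has already reduced Theorem \ref{epnithmmain} to passive inputs, it suffices to prove \eqref{epnilogs} for a passive finite-support state $\hat\rho=\sum_n p_n|n\rangle\langle n|$ with $p_0\ge p_1\ge\cdots$. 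For such states $\mathcal{L}(\hat\rho)=\sum_n[(n+1)p_{n+1}-np_n]\,|n\rangle\langle n|$, so that
\[
F(\hat\rho)=\sum_{n\ge 1} n\,p_n\,\ln\frac{p_{n-1}}{p_n}\ge 0,
\]
and a direct check on the geometric law $p_n\propto\bigl(E/(E+1)\bigr)^n$ gives $F=E\ln\frac{E+1}{E}=-f(g(E))$, i.e. equality in \eqref{epnilogs}.

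The inequality $F(\hat\rho)\le -f(S(\hat\rho))$ then asserts that, among passive distributions with fixed von Neumann entropy, the geometric one maximizes $F$. The plan is to establish this by a heat-flow monotonicity argument mirroring the classical proof of the entropy isoperimetric inequality. I would set $\hat\rho(t)=\Phi_{e^{-t}}(\hat\rho)$, which stays passive with finite support, and define along the flow
\[
\Lambda(t)=F(\hat\rho(t))+f\!\left(S(\hat\rho(t))\right).
\]
As $t\to\infty$ the state relaxes to $|0\rangle\langle 0|$, so $S\to 0$, $F\to 0$ and $f(0)=0$, giving $\Lambda(\infty)=0$; on a genuinely thermal input one has $\Lambda\equiv 0$. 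Using $\tfrac{d}{dt}S(\hat\rho(t))=-F(\hat\rho(t))$, the desired conclusion $\Lambda(0)\le 0$ follows once $\Lambda$ is shown to be nondecreasing, i.e. once
\[
\frac{d}{dt}F(\hat\rho(t))\ \ge\ f'\!\left(S(\hat\rho(t))\right)\,F(\hat\rho(t)).
\]
Writing $F=\sum_n \dot p_n\ln p_n$ with $\dot p_n=(n+1)p_{n+1}-np_n$ yields $\dot F=\sum_n\ddot p_n\ln p_n+\sum_n \dot p_n^{\,2}/p_n$, whose second piece is a nonnegative discrete Fisher information; the task is to bound this combination below by $f'(S)\,F$.

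The hard part will be exactly this second-order scalar inequality. Since $f$ and $g$ are transcendental, $\dot F\ge f'(S)F$ cannot come from a termwise estimate; it must be extracted from the fine structure of the Fisher-information term $\sum_n \dot p_n^{\,2}/p_n$, combined with convexity and monotonicity properties of $f$ and $g$. A cautionary point motivating this difficulty: comparing $\hat\rho$ to a co-moving thermal state $\hat\omega_{E'e^{-t}}$ and invoking the data-processing inequality for the relative entropy only produces the weaker energy bound $F(\hat\rho)\le \langle N\rangle\ln\frac{\langle N\rangle+1}{\langle N\rangle}$ (optimizing the reference over $E'$, whose minimum sits at $E'=\langle N\rangle$); because $x\mapsto x\ln\frac{x+1}{x}$ is increasing and $g^{-1}(S)\le\langle N\rangle$, this is strictly looser than the entropy bound sought, so the genuine isoperimetric statement is irreducibly sharper.

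Finally, two technical points must be handled to make the flow argument rigorous. First, the reduction from arbitrary passive states to finite-support ones is supplied by Lemma \ref{epnifinites}, so the flow can be run in finite dimension. Second, I would need the ordered eigenvalues $p_n(t)$ and hence $\Lambda(t)$ to be differentiable where required; as in Lemmas \ref{deg} and \ref{lemma1}, the matrix elements of $\hat\rho(t)$ are analytic in $t$, spectral degeneracies occur only at isolated times, and off these the $p_n(t)$ are differentiable, so $\dot\Lambda\ge 0$ may be integrated from any $t$ up to $+\infty$ to conclude $\Lambda(0)\le 0$, which is precisely \eqref{epnilogs}.
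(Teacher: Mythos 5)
Your setup is sound and matches the paper up to a point: the reduction to passive finite-support states via Theorem \ref{epnithmmaj} and Lemma \ref{epnifinites}, the formula $F(\hat\rho)=\sum_{n\ge1} n\,p_n\ln(p_{n-1}/p_n)$, and the verification that geometric states give equality are all correct. But the heart of your argument is missing: the differential inequality
\begin{equation*}
\frac{d}{dt}F\left(\hat{\rho}(t)\right)\;\geq\; f'\!\left(S\left(\hat{\rho}(t)\right)\right)F\left(\hat{\rho}(t)\right)
\end{equation*}
is stated as ``the task'' and never established, and it is not a technicality that can be deferred. Integrating it from $t=0$ to $\infty$ gives precisely the theorem, so it is at least as strong as the statement you are trying to prove; the proposal therefore reduces the theorem to an unproven claim of equal or greater difficulty, with no indication of how to extract it from the decomposition $\dot F=\sum_n\ddot p_n\ln p_n+\sum_n\dot p_n^{\,2}/p_n$ (the first sum has no evident sign, and the Fisher-type second sum must be compared against the transcendental quantity $f'(S)F$). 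There is also no evidence offered that the monotonicity of $\Lambda(t)=F+f(S)$ along the attenuator flow is even true: the closest discrete analogues in the literature (the thinning entropy power inequalities you would be mimicking) hold only under the extra ultra-log-concavity hypothesis, which the flow here is not shown to preserve or circumvent.

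For comparison, the paper takes a genuinely different route that avoids any second-order analysis along the flow: it fixes the entropy $S$ and a dimension cutoff $N$, maximizes $F$ over the compact set $\mathcal{D}_N$ of decreasing distributions with entropy $S$, and characterizes the maximizer $p^{(N)}$ through the Karush-Kuhn-Tucker stationarity conditions. These yield a recursion for the ratios $z_n^{(N)}=p_{n+1}^{(N)}/p_n^{(N)}$, from which one shows the $z_n^{(N)}$ are monotone, the multipliers are bounded, and along a subsequence $N_k\to\infty$ the maximizers converge pointwise to a geometric distribution; dominated convergence then gives $\lim_k F_{N_k}=-f(S)$, and monotonicity of $F_N$ in $N$ closes the argument. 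In other words, the paper proves the variational statement ``geometric distributions maximize $F$ at fixed entropy'' directly, by identifying the optimizer, rather than by exhibiting a monotone quantity along the semigroup. If you want to salvage your approach, the missing differential inequality is where all the work lies, and you should expect it to require an analysis at least as delicate as the paper's KKT argument.
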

Since the quantum-limited attenuator sends the set of passive states with finite support into itself \cite{de2015passive}, we can replace $\hat{\rho}\to \Phi_{e^{-t}}\left(\hat{\rho}\right)$ in equation \eqref{epnilogs}, and with the help of Lemma \ref{epnilemcomp} we get
\begin{equation}\label{epniphi}
\frac{d}{dt}\phi(t)\geq f\left(\phi(t)\right)\;.
\end{equation}
The claim then follows from
\begin{thm}[Comparison theorem for first-order ordinary differential equations]
Let $\phi,\,\phi_0:[0,\infty)\to[0,\infty)$ be differentiable functions satisfying \eqref{epnistart}, \eqref{epniphi0} and \eqref{epniphi} with $f:[0,\infty)\to\mathbb{R}$ differentiable.
Then, $\phi(t)\geq\phi_0(t)$ for any $t\geq0$.
\begin{proof}
See e.g. Theorem 2.2.2 of \cite{ames1997inequalities}.
\end{proof}
\end{thm}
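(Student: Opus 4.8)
The statement is a standard comparison principle for scalar first-order ODEs, and the plan is to reduce it to a Gronwall inequality for the difference $w(t):=\phi(t)-\phi_0(t)$. By \eqref{epnistart} we have $w(0)=0$, and subtracting \eqref{epniphi0} from \eqref{epniphi} gives
\[
w'(t)=\phi'(t)-\phi_0'(t)\geq f(\phi(t))-f(\phi_0(t))\qquad\text{for all }t\geq0 .
\]
The whole point is therefore to control $f(\phi(t))-f(\phi_0(t))$ from below by a multiple of $w(t)$ and then integrate.

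First I would fix a horizon $T>0$. Since $\phi$ and $\phi_0$ are continuous they map $[0,T]$ into a compact subset $K\subset[0,\infty)$, and on $K$ the function $f$ admits a Lipschitz constant $L=L_T$ (in the present application this is immediate: the lemma preceding the theorem computes $f'$ explicitly and shows $\lim_{S\to0}f'(S)=-1$, so $f\in C^1([0,\infty))$ and is locally Lipschitz). Hence $|f(\phi(t))-f(\phi_0(t))|\leq L\,|w(t)|$, and in particular $w'(t)\geq -L\,|w(t)|$ on $[0,T]$. I would then pass to the negative part $v(t):=\max\{-w(t),0\}\geq0$, which is locally Lipschitz with $v(0)=0$; at points where $w<0$ one has $v'=-w'\leq L|w|=L\,v$, while $v'=0$ where $w\geq0$, so $v'(t)\leq L\,v(t)$ for almost every $t$. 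Gronwall's lemma then forces $v(t)\leq v(0)e^{Lt}=0$, i.e. $w(t)\geq0$ on $[0,T]$; as $T$ is arbitrary this yields $\phi(t)\geq\phi_0(t)$ for all $t\geq0$. Equivalently, writing $f(\phi)-f(\phi_0)=f'(\xi)\,w$ by the mean value theorem and setting $c(t)=f'(\xi(t))$, the inequality $w'\geq c\,w$ integrates to $\frac{d}{dt}\bigl(e^{-\int_0^t c}\,w\bigr)\geq0$, so $e^{-\int_0^t c}\,w(t)\geq w(0)=0$.

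The step I expect to be delicate is exactly the bound $f(\phi)-f(\phi_0)\geq -L|w|$: the hypothesis only supplies differentiability of $f$, and a merely differentiable function can have an unbounded derivative on a compact interval, so the Lipschitz estimate is not automatic. In the paper this is a non-issue because $f$ is in fact $C^1$, as just noted. For the statement in the stated generality I would instead avoid any Lipschitz estimate and argue by a strict perturbation plus a first-crossing argument: let $\phi_0^{\varepsilon}$ solve $(\phi_0^{\varepsilon})'=f(\phi_0^{\varepsilon})-\varepsilon$ with $\phi_0^{\varepsilon}(0)=\phi_0(0)$ (available by continuity of $f$). Then $h:=\phi-\phi_0^{\varepsilon}$ satisfies $h(0)=0$ and $h'\geq f(\phi)-f(\phi_0^{\varepsilon})+\varepsilon$, so $h'>0$ near $0$; if $h$ vanished again at a first time $t^\ast>0$ then $h'(t^\ast)\geq\varepsilon>0$, contradicting the necessary $h'(t^\ast)\leq0$. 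Hence $\phi>\phi_0^{\varepsilon}$ for $t>0$, and letting $\varepsilon\to0$ with continuous dependence on $\varepsilon$ gives $\phi\geq\phi_0$. Either route reduces the comparison to elementary one-dimensional monotonicity, which is why the author can simply cite Theorem 2.2.2 of \cite{ames1997inequalities}.
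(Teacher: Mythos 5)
Your proposal is necessarily a different route from the paper's, because the paper does not prove this theorem at all: it simply cites Theorem 2.2.2 of \cite{ames1997inequalities}. Judged as a self-contained substitute, your first argument (Gronwall/integrating factor) is correct and is all the chapter actually needs: as you observe, the $f$ of \eqref{epnideff} is $C^1$ on $[0,\infty)$ by the preceding lemma, hence locally Lipschitz, and then $w=\phi-\phi_0$ satisfies $w(0)=0$ and $w'\geq c(t)\,w$ with $c(t)=\int_0^1 f'\bigl((1-s)\phi_0(t)+s\phi(t)\bigr)\,ds$ continuous, so $e^{-\int_0^t c(s)\,ds}\,w(t)$ is non-decreasing and $w\geq0$. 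Two small points of hygiene: your claim that $v=\max\{-w,0\}$ is locally Lipschitz is not justified, since the hypotheses bound $\phi'$ only from below and a differentiable $\phi$ can have locally unbounded derivative, so $w$ need not be locally Lipschitz; and the mean-value-theorem choice of $\xi(t)$ need not depend measurably on $t$. Both issues vanish if you use the integral form of $c$ above, or run the integrating factor on a maximal open interval of $\{w<0\}$ and derive a contradiction at its left endpoint, so they are cosmetic.

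The genuine gap is in your second argument, the one offered for $f$ merely differentiable. The step ``letting $\varepsilon\to0$ with continuous dependence on $\varepsilon$'' is not a consequence of continuity of $f$: for non-Lipschitz right-hand sides the perturbed solutions $\phi_0^\varepsilon$ converge (along subsequences, by Arzel\`a--Ascoli) to \emph{some} solution of $y'=f(y)$, $y(0)=\phi_0(0)$, but not necessarily to the given one $\phi_0$. Concretely, for $f(S)=\sqrt{S}$ (continuous, not differentiable at $0$) take $\phi_0(t)=t^2/4$ and $\phi\equiv0$: then \eqref{epnistart}, \eqref{epniphi0}, \eqref{epniphi} all hold, your first-crossing stage goes through exactly as written (here $\phi_0^\varepsilon$ decreases toward the equilibrium $-\varepsilon^2$ of the extended equation; note also that one must extend $f$ below $0$ even to pose the perturbed problem), yet $\phi_0^\varepsilon\to0\neq\phi_0$ and the conclusion $\phi\geq\phi_0$ is simply false. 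So the theorem genuinely fails for continuous $f$, which means any correct proof must use differentiability, and your second route never does. The missing ingredient is uniqueness for the scalar autonomous problem $y'=f(y)$: differentiability of $f$ at each zero $y_*$ gives $|f(y)|\leq C\,|y-y_*|$ nearby, so by an Osgood-type estimate no solution can reach or leave an equilibrium in finite time, while away from equilibria uniqueness follows from separation of variables; uniqueness then upgrades subsequential convergence of $\phi_0^\varepsilon$ to $\phi_0^\varepsilon\to\phi_0$ and closes the argument. Without that supplement, your second route establishes nothing beyond the locally Lipschitz case already covered by the first.
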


\section{Proof of Theorem \ref{epnithmiso}}\label{epniseclem}
Let us fix $S\left(\hat{\rho}\right)=S$.

If $S=0$, for the positivity of the entropy we have for any quantum state $-F\left(\hat{\rho}\right)\geq0=f(0)$,
and the inequality \eqref{epnilogs} is proven.

We can then suppose $S>0$.
Taking the derivative of \eqref{epnipassiveS} with respect to $t$ for $t=0$ we get
\begin{equation}\label{epnipassiveF}
F\left(\hat{\rho}\right)\leq F\left(\hat{\rho}^\downarrow\right)\;,
\end{equation}
hence it is sufficient to prove Theorem \ref{epnithmiso} for passive states with finite support.

Let us fix $N\in\mathbb{N}$, and consider a quantum state $\hat{\rho}$ with entropy $S$ of the form
\begin{equation}\label{epnidefp}
\hat{\rho}=\sum_{n=0}^N p_n\;|n\rangle\langle n|\;.
\end{equation}
Let $\mathcal{D}_N$ be the set of decreasing probability distributions on $\left\{0,\ldots,\,N\right\}$ with Shannon entropy $S$.
We recall that the Shannon entropy of $p$ coincides with the von Neumann entropy of $\hat{\rho}$.
The state in \eqref{epnidefp} is passive if $p\in\mathcal{D}_N$.

\begin{lem}\label{epnilemcpt}
$\mathcal{D}_N$ is compact.
\begin{proof}
The set of decreasing probability distributions on $\left\{0,\ldots,\,N\right\}$ is a closed bounded subset of $\mathbb{R}^{N+1}$, hence it is compact.
The Shannon entropy $H$ is continuous on this set.
$\mathcal{D}_N$ is the counterimage of the point $S$, hence it is closed.
Since $\mathcal{D}_N$ is contained in a compact set, it is compact, too.
\end{proof}
\end{lem}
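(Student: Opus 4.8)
The plan is to identify $\mathcal{D}_N$ with a subset of the finite-dimensional space $\mathbb{R}^{N+1}$ and invoke the Heine--Borel theorem, so that compactness reduces to showing that $\mathcal{D}_N$ is closed and bounded. A point of $\mathcal{D}_N$ is a vector $p=(p_0,\ldots,p_N)$ satisfying the non-strict inequalities $p_0\geq p_1\geq\ldots\geq p_N\geq0$, the normalization $\sum_{n=0}^N p_n=1$, and the entropy constraint $H(p)=S$, where $H(p)=-\sum_{n=0}^N p_n\ln p_n$.

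First I would treat the larger set $\mathcal{P}_N$ of all \emph{decreasing} probability distributions on $\{0,\ldots,N\}$, i.e.\ the set obtained by dropping the entropy constraint. This set is bounded, since every coordinate lies in $[0,1]$, and it is closed, being the intersection of the finitely many closed half-spaces $\{p_n\geq p_{n+1}\}$ and $\{p_N\geq0\}$ with the closed affine hyperplane $\{\sum_n p_n=1\}$. Hence $\mathcal{P}_N$ is a closed bounded subset of $\mathbb{R}^{N+1}$ and is therefore compact.

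Next I would observe that the entropy functional $H$ is continuous on $\mathcal{P}_N$. The only subtlety is the behaviour at the boundary, where some coordinates vanish: this is handled by the elementary fact that the scalar function $x\mapsto -x\ln x$ extends continuously to $[0,1]$ upon setting its value at $x=0$ equal to $0$, since $\lim_{x\to0^+}x\ln x=0$. Being a finite sum of such continuous functions of the individual coordinates, $H$ is continuous on all of $\mathcal{P}_N$.

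Finally, $\mathcal{D}_N=\{p\in\mathcal{P}_N : H(p)=S\}$ is the preimage $H^{-1}(\{S\})$ of the closed singleton $\{S\}\subset\mathbb{R}$ under the continuous map $H$, hence a closed subset of $\mathcal{P}_N$; a closed subset of a compact set is compact, so $\mathcal{D}_N$ is compact. The argument contains no genuine obstacle: the only point demanding care is the continuity of $H$ at the boundary points of the simplex, which is why I would make the convention $0\ln0=0$ explicit before asserting continuity.
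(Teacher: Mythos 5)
Your proof is correct and follows essentially the same route as the paper: the set of decreasing probability distributions on $\{0,\ldots,N\}$ is closed and bounded in $\mathbb{R}^{N+1}$, hence compact, and $\mathcal{D}_N$ is the preimage of $\{S\}$ under the continuous entropy functional, hence a closed subset of a compact set. Your explicit remarks on the convention $0\ln 0=0$ and on writing the decreasing simplex as an intersection of closed half-spaces merely flesh out details the paper leaves implicit.
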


\begin{defn}[Connected support]
A probability distribution $p$ on $\left\{0,\ldots,\,N\right\}$ has connected support iff
$p_n>0$ for $n=0,\ldots,\,N'$, and $p_{N'+1}=\ldots=p_N=0$, where $0\leq N'\leq N$ can depend on $p$
($N'=N$ means $p_n>0$ for any $n$).
We call $\mathcal{P}_N$ the set of probability distributions on $\left\{0,\ldots,\,N\right\}$ with connected support and Shannon entropy $S$.
\end{defn}

We relax the passivity hypothesis, and consider all the states as in \eqref{epnidefp} with $p\in\mathcal{P}_N$.
We notice that any decreasing $p$ has connected support, i.e. $\mathcal{D}_N\subset\mathcal{P}_N$.

From Equations \eqref{rnk}, \eqref{Tne} and Theorem \ref{thinatt}, we have for any $t\geq0$
\begin{equation}
\Phi_{e^{-t}}\left(\hat{\rho}\right)=\sum_{n=0}^N p_n(t)\;|n\rangle\langle n|\;,
\end{equation}
where
\begin{equation}\label{epnipnt}
p_n(t)=\sum_{k=n}^N\binom{k}{n}e^{-nt}\left(1-e^{-t}\right)^{k-n}p_k
\end{equation}
satisfies $p_n'(0)=\left(n+1\right)p_{n+1}-n\,p_n$ for $n=0,\ldots,\,N$, and we have set for simplicity $p_{N+1}=0$.

Since $p_{N'+1}=\ldots=p_N=0$, from \eqref{epnipnt} we get $p_{N'+1}(t)=\ldots=p_N(t)=0$ for any $t\geq0$.
We then have
\begin{equation}
S\left(\Phi_{e^{-t}}\left(\hat{\rho}\right)\right)=-\sum_{n=0}^{N'}p_n(t)\ln p_n(t)\;,
\end{equation}
and
\begin{equation}\label{epnidefF}
F\left(\hat{\rho}\right)=\sum_{n=0}^{N'}p_n'(0)\left(\ln p_n+1\right)=\sum_{n=1}^{N'} n\,p_n\ln\frac{p_{n-1}}{p_n}\;.
\end{equation}

Let $F_N$ be the $\sup$ of $F(p)$ for $p\in\mathcal{P}_N$, where with a bit of abuse of notation we have defined $F(p)=F\left(\hat{\rho}\right)$ for any $\hat{\rho}$ as in \eqref{epnidefp}.
For \eqref{epnipassiveF}, $F_N$ is also the $\sup$ of $F(p)$ for $p\in\mathcal{D}_N$.
For Lemma \ref{epnilemcpt} $\mathcal{D}_N$ is compact.
Since $F$ is continuous on $\mathcal{D}_N$, the $\sup$ is achieved in a point $p^{(N)}\in\mathcal{D}_N$.
This point satisfies the Karush-Kuhn-Tucker (KKT) necessary conditions \cite{kuhn1951} for the maximization of $F$ with the entropy constraint.
The proof then comes from
\begin{lem}\label{epnilemmain}
There is a subsequence $\left\{N_k\right\}_{k\in\mathbb{N}}$ such that
\begin{equation}
\lim_{k\to\infty}F_{N_k}=-f(S)\;.
\end{equation}
\begin{proof}
See Section \ref{secprooflem}.
\end{proof}
\end{lem}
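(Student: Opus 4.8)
The plan is to prove the two matching bounds $\liminf_k F_{N_k}\ge -f(S)$ and $\limsup_k F_{N_k}\le -f(S)$, after first identifying the thermal (geometric) distribution as the natural extremizer. Writing $E=g^{-1}(S)$, the thermal state $\hat\omega_E$ of \eqref{epniomegaE} has constant ratios $p_{n-1}/p_n=(E+1)/E$, so from the closed form \eqref{epnidefF} one computes directly
\begin{equation}
F(\hat\omega_E)=\ln\!\Big(1+\tfrac1E\Big)\sum_{n=1}^\infty n\,p_n=E\,\ln\!\Big(1+\tfrac1E\Big)=E\,g'(E)=-f(S),
\end{equation}
using $g'(E)=\ln(1+1/E)$ and the definition \eqref{epnideff} of $f$. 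This fixes $-f(S)$ as the value of $F$ on the infinite-support thermal state and gives the target.

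First I would establish the lower bound, together with a monotonicity remark that streamlines everything. Padding any $p\in\mathcal P_N$ with a trailing zero embeds it into $\mathcal P_{N+1}$ without changing $F$ (the sum in \eqref{epnidefF} runs only over the connected support and $0\ln 0=0$), so $F_N$ is nondecreasing and $\lim_N F_N$ exists in $[0,\infty]$; consequently every subsequence shares this limit, and it suffices to control $\lim_N F_N$. Truncating $\hat\omega_E$ to $\{0,\dots,N\}$, renormalizing, and perturbing slightly to restore Shannon entropy exactly $S$ yields $q^{(N)}\in\mathcal P_N$ with $F(q^{(N)})\to F(\hat\omega_E)=-f(S)$ by dominated convergence, the geometric tails making the series absolutely convergent. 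Since $F_N\ge F(q^{(N)})$, this gives $\liminf_N F_N\ge -f(S)$.

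The hard direction is the upper bound. By Lemma \ref{epnilemcpt} and continuity of $F$, the supremum $F_N$ is attained at some $p^{(N)}\in\mathcal D_N$, which must satisfy the Karush--Kuhn--Tucker stationarity conditions for maximizing $F$ subject to $\sum_n p_n=1$ and $H(p)=S$. Differentiating \eqref{epnidefF} at an interior index $m$ produces the nonlinear three-term recursion
\begin{equation}
m\,\ln\frac{p_{m-1}^{(N)}}{p_m^{(N)}}+(m+1)\,\frac{p_{m+1}^{(N)}}{p_m^{(N)}}-m=\mu_N+\nu_N\big(\ln p_m^{(N)}+1\big),
\end{equation}
supplemented by complementary-slackness conditions on active ordering constraints $p_m^{(N)}=p_{m+1}^{(N)}$ and by modified boundary relations at $m=0$ and at the edge of the support. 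The strategy is to show the multipliers $(\mu_N,\nu_N)$ remain bounded — which follows from normalization and the fixed-entropy constraint forbidding the maximizer to concentrate or to spread out — so that along a subsequence $N_k$ both the multipliers and the maximizers $p^{(N_k)}$ converge, the latter pointwise to a probability distribution $p^\star$ of entropy $S$. The limiting recursion is exactly the one solved by a constant-ratio distribution, and among admissible (decreasing, summable, entropy-$S$) solutions this forces $p^\star=\hat\omega_E$; hence $F(p^{(N_k)})\to F(\hat\omega_E)=-f(S)$ and $\limsup_k F_{N_k}\le -f(S)$.

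The main obstacle I anticipate is passing to the limit inside $F$. Because $F$ is an a priori infinite series whose summands $n\,p_n\ln(p_{n-1}/p_n)$ are not manifestly uniformly summable, and because the support length $N'$ of the maximizer may grow with $N$, bare pointwise convergence $p^{(N_k)}\to p^\star$ does not suffice to control $F$. I would need uniform-in-$N$ tail estimates — for instance a uniform geometric decay bound on $p^{(N)}_n$ extracted from the recursion together with the boundedness of $(\mu_N,\nu_N)$ — to secure uniform integrability of the defining series and to rule out escape of mass as $N\to\infty$. Establishing these uniform tail bounds, and carefully disposing of the boundary and complementary-slackness cases of the KKT system, is where the real work lies.
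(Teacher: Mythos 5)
Your overall strategy coincides with the paper's: identify the geometric distribution as the extremizer, take the finite-$N$ maximizers $p^{(N)}$ guaranteed by compactness, write the KKT stationarity conditions, extract a convergent subsequence of multipliers and maximizers, identify the limit as the thermal distribution, and pass to the limit in $F$ by dominated convergence. (Your separate lower bound via truncated thermal trial states is sound but redundant: once the maximizers converge to $\hat\omega_{g^{-1}(S)}$ and the series converges dominatedly, the limit of $F_{N_k}$ is computed exactly, in both directions at once.) However, the three steps you flag as ``the real work'' are precisely where the paper's proof lives, and as stated your plan has genuine gaps there.

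First, your assertion that ``the limiting recursion is exactly the one solved by a constant-ratio distribution, and among admissible solutions this forces $p^\star=\hat\omega_E$'' would fail as written: the limiting recursion \eqref{epnizreclim} admits non-constant decreasing solutions for a given $\mu$, so the recursion alone does not force constancy of the ratios $z_n=p_{n+1}/p_n$. The paper needs a separate argument (Lemma \ref{epnizlim}): since $\{z_n\}$ is decreasing and bounded, $\sum_n(z_n-z_{n+1})<\infty$, hence $\liminf_n (n+2)(z_n-z_{n+1})=0$ by Lemma \ref{epnilemnx}; feeding this back into the recursion yields $1-\mu\le(z-1)/\ln z$ for $z=\lim_n z_n$, which, combined with the dual-feasibility bound $1-\mu\ge(z_0-1)/\ln z_0$ and the strict monotonicity of $(z-1)/\ln z$, squeezes $z=z_0$ and hence $z_n\equiv z$. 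Second, your justification for boundedness of the multipliers (``normalization and the fixed-entropy constraint forbidding concentration or spreading'') is not an argument. In the paper this rests on a structural lemma you do not have: Lemma \ref{epnilemincr} shows, by a two-sided induction on the recursion, that the ratios $z^{(N)}_n$ are \emph{monotone decreasing} in $n$ and that the maximizer has full support ($N'=N$); then Lemma \ref{epnilemzbar} uses the entropy constraint (via the truncation-decreases-entropy Lemma \ref{epnilemtrunc}) to get a uniform bound $z^{(N)}_{\bar n}\le\bar z<1$ at a fixed index $\bar n$ determined by $S$, and only then can $\mu_N$ be bounded from the recursion. Third, the uniform geometric tail bounds you correctly identify as necessary for dominated convergence come for free from that same monotonicity: $p^{(N_k)}_n\le p^{(N_k)}_0\bigl(z^{(N_k)}_0\bigr)^n\le\bigl(\tfrac{1+z}{2}\bigr)^n$ for large $k$. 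So the single missing idea in your plan is the monotonicity of the ratios of the maximizer, which simultaneously delivers full support, boundedness of the multipliers, and the tail control; without it, none of the three steps you defer can be completed.
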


Then, since $\mathcal{D}_N\subset\mathcal{D}_{N+1}$ for any $N$, $F_N$ is increasing in $N$, and for any $p\in\mathcal{P}_N$
\begin{equation}
F(p)\leq F_N\leq \sup_{N\in\mathbb{N}}F_N=\lim_{N\to\infty}F_N=\lim_{k\to\infty}F_{N_k}=-f(S).
\end{equation}

\section{Proof of Lemma \ref{epnilemmain}}\label{secprooflem}
The point $p^{(N)}$ is the maximum of $F$ for $p\in\mathcal{P}_N$.
The constraints read
\begin{equation}
p_0,\ldots,\,p_N\geq0\;,\quad \sum_{n=0}^{N} p_n=1\;,\quad-\sum_{n=0}^{N} p_n\ln p_n=S\;.
\end{equation}
$p^{(N)}$ must then satisfy the associated KKT necessary conditions \cite{kuhn1951}.
We build the functional
\begin{equation}
\tilde{F}(p)=F(p)-\lambda_N\sum_{n=0}^N p_n+\mu_N\sum_{n=0}^N p_n\ln p_n\;.
\end{equation}
Let $N'$ be such that
\begin{equation}\label{epnipdecr}
p^{(N)}_0\geq\ldots\geq p^{(N)}_{N'}>p^{(N)}_{N'+1}=\ldots=p^{(N)}_N=0\;.
\end{equation}
\begin{rem}\label{epniremN}
We must have $N'\geq1$.
\begin{proof}
If $N'=0$, we must have $p_0^{(N)}=1$ and $p^{(N)}_1=\ldots=p^{(N)}_N=0$, hence $S=0$, contradicting the hypothesis $S>0$.
\end{proof}
\end{rem}

The KKT stationarity condition for $n=0,\ldots,\,N'$ reads
\begin{equation}\label{epniprec}
\left.\frac{\partial}{\partial p_n}\tilde{F}\right|_{p=p^{(N)}} =n\ln\frac{p_{n-1}^{(N)}}{p_n^{(N)}}-n+(n+1)\frac{p_{n+1}^{(N)}}{p_n^{(N)}}-\lambda_N+\mu_N\ln p_n^{(N)}+\mu_N=0\;.
\end{equation}
If $N'<N$, $p^{(N)}$ satisfies the KKT dual feasibility condition associated to $p^{(N)}_{N'+1}$.
To avoid the singularity of the logarithm in $0$, we make the variable change
\begin{equation}
y=-p_{N'+1}\ln p_{N'+1}\;,\qquad p_{N'+1}=\psi(y)\;,
\end{equation}
where $\psi$ satisfies
\begin{equation}\label{epnidefpsi}
\psi\left(-x\ln x\right)=x\qquad\forall\;0\leq x\leq\frac{1}{e}\;.
\end{equation}
Since $\psi(0)=0$, the point $p_{N'+1}=0$ corresponds to $y=0$.
Differentiating \eqref{epnidefpsi} with respect to $x$, we get
\begin{equation}
\psi'\left(-x\ln x\right)=-\frac{1}{1+\ln x}\qquad\forall\;0<x<\frac{1}{e}\;,
\end{equation}
and taking the limit for $x\to0$ we get that $\psi'(y)$ is continuous in $y=0$ with $\psi'(0)=0$.

For hypothesis $p^{(N)}\in\mathcal{P}_{N'}\subset\mathcal{P}_{N'+1}\subset\mathcal{P}_N$.
Then, $p^{(N)}$ is a maximum point for $F(p)$ also if we restrict to $p\in\mathcal{P}_{N'+1}$.
We can then consider the restriction of the functional $\tilde{F}$ on $\mathcal{P}_{N'+1}$:
\begin{eqnarray}
\tilde{F}(p) &=& \sum_{n=1}^{N'}n\,p_n\ln\frac{p_{n-1}}{p_n}+\left(N'+1\right)\psi(y)\ln p_{N'}+\left(N'+1\right)y\nonumber\\
&&-{\lambda_N}\sum_{n=0}^{N'}p_n-\lambda_N\;\psi(y)+{\mu_N}\sum_{n=0}^{N'}p_n\ln p_n-\mu_N\,y\;.
\end{eqnarray}
The condition is then
\begin{equation}\label{epnimuN}
\left.\frac{\partial}{\partial y}\tilde{F}\right|_{p=p^{(N)}}=N'+1-\mu_N\leq0\;,
\end{equation}
where we have used that $\psi'(0)=0$.

We define for any $n=0,\ldots,\,N'$
\begin{equation}
z_n^{(N)}=\frac{p_{n+1}^{(N)}}{p_n^{(N)}}\;.
\end{equation}
Condition \eqref{epnipdecr} implies
\begin{equation}\label{epniz01}
0<z_n^{(N)}\leq1\qquad \forall\;n=0,\ldots,\,N'-1\;,\qquad z_{N'}^{(N)}=0\;.
\end{equation}
For Remark \ref{epniremN} $N'\geq1$, hence $z_0^{(N)}>0$.

Taking the difference of \eqref{epniprec} for two consecutive values of $n$ we get for any $n=0,\ldots,\,N'-1$
\begin{equation}\label{epniznrec}
\left(n+2\right)z_{n+1}^{(N)} = \left(n+2\right)z_n^{(N)}+1-z_n^{(N)}+\left(1-\mu_N\right)\ln z_n^{(N)}+n\ln\frac{z_n^{(N)}}{z_{n-1}^{(N)}}\;.
\end{equation}

\begin{lem}\label{epnilemincr}
We must have
\begin{equation}\label{epnimudecr}
1-\mu_N\geq\frac{z_0^{(N)}-1}{\ln z_0^{(N)}}\geq0\;.
\end{equation}
Moreover, $z^{(N)}_n$ is decreasing in $n$ and $N'=N$, i.e.
\begin{equation}
1\geq z^{(N)}_0\geq\ldots\geq z^{(N)}_{N-1}>z^{(N)}_{N}=0\;.
\end{equation}
\begin{proof}
Let us suppose $1-\mu_N<\left.\left(z_0^{(N)}-1\right)\right/\ln z_0^{(N)}$.
We will prove by induction on $n$ that the sequence $z^{(N)}_n$ is increasing in $n$.
The inductive hypothesis is $0<z_0^{(N)}\leq\ldots\leq z_n^{(N)}\leq1$, true for $n=0$.
Since the function $\left.\left(z-1\right)\right/\ln z$ is strictly increasing for $0\leq z\leq1$, we have
\begin{equation}
1-\mu_N<\frac{z_0^{(N)}-1}{\ln z_0^{(N)}}\leq\frac{z_n^{(N)}-1}{\ln z_n^{(N)}}\;,
\end{equation}
and hence $\left(1-\mu_N\right)\ln z_n^{(N)}\geq z_n^{(N)}-1$.
Since $z_{n-1}^{(N)}\leq z_n^{(N)}$, from \eqref{epniznrec} we have
\begin{equation}
\left(n+2\right)\left(z^{(N)}_{n+1}-z^{(N)}_n\right) = 1-z_n^{(N)}+\left(1-\mu_N\right)\ln z_n^{(N)}+n\ln\frac{z_n^{(N)}}{z_{n-1}^{(N)}}\geq0\;,
\end{equation}
and hence $z_{n+1}^{(N)}\geq z_n^{(N)}$.
However, this is in contradiction with the hypothesis $z^{(N)}_{N'}=0$.

We must then have $1-\mu_N\leq\left.\left(z_0^{(N)}-1\right)\right/\ln z_0^{(N)}$.
We will prove by induction on $n$ that the sequence $z^{(N)}_n$ is decreasing in $n$.
The inductive hypothesis is now $1\geq z_0^{(N)}\geq\ldots\geq z_n^{(N)}>0$, true for $n=0$.
If $n+1=N'$, since $z^{(N)}_{N'}=0$ there is nothing to prove.
We can then suppose $n+1<N'$.
We have
\begin{equation}
1-\mu_N\geq\frac{z_0^{(N)}-1}{\ln z_0^{(N)}}\geq\frac{z_n^{(N)}-1}{\ln z_n^{(N)}}\;,
\end{equation}
and hence $\left(1-\mu_N\right)\ln z_n^{(N)}\leq z_n^{(N)}-1$.
Since $z_{n-1}^{(N)}\geq z_n^{(N)}$, from \eqref{epniznrec} we have
\begin{equation}
\left(n+2\right)\left(z^{(N)}_{n+1}-z^{(N)}_n\right) = 1-z_n^{(N)}+\left(1-\mu_N\right)\ln z_n^{(N)}+n\ln\frac{z_n^{(N)}}{z_{n-1}^{(N)}}\leq0\;,
\end{equation}
and hence $z_{n+1}^{(N)}\leq z_n^{(N)}$.
Since $n+1< N'$, we also have $z_{n+1}^{(N)}>0$, and the claim is proven.

Finally, if $N'<N$, combining \eqref{epnimudecr} with \eqref{epnimuN} we get $N'\leq\mu-1\leq0$, in contradiction with $N'\geq1$.
We must then have $N'=N$.
\end{proof}
\end{lem}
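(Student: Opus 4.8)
The plan is to extract everything from the single recurrence \eqref{epniznrec}, together with the boundary data $z_{N'}^{(N)}=0$ from \eqref{epniz01} and the dual-feasibility bound \eqref{epnimuN}, using as the one analytic input the fact that the logarithmic mean $h(z):=(z-1)/\ln z$ is strictly increasing on $(0,1]$ (with $h(1)=1$ and $h(0^+)=0$, so $h\ge0$ there). The quantity $h(z_0^{(N)})$ is precisely the threshold appearing in \eqref{epnimudecr}, so the whole argument organizes itself around whether $1-\mu_N$ sits above or below it.

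First I would establish the lower bound $1-\mu_N\ge h(z_0^{(N)})$ by contradiction. Assuming instead $1-\mu_N<h(z_0^{(N)})$, I claim $z_n^{(N)}$ is nondecreasing, which I prove by induction with hypothesis $0<z_0^{(N)}\le\ldots\le z_n^{(N)}\le1$ (the upper bound $z_n^{(N)}\le1$ being automatic from \eqref{epniz01}). Monotonicity of $h$ gives $1-\mu_N<h(z_n^{(N)})$; since $\ln z_n^{(N)}\le0$, multiplying through reverses the inequality to $(1-\mu_N)\ln z_n^{(N)}\ge z_n^{(N)}-1$, i.e. $1-z_n^{(N)}+(1-\mu_N)\ln z_n^{(N)}\ge0$. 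Feeding this together with $\ln(z_n^{(N)}/z_{n-1}^{(N)})\ge0$ into \eqref{epniznrec} forces $(n+2)(z_{n+1}^{(N)}-z_n^{(N)})\ge0$, closing the induction (the $n=0$ term $n\ln(z_n^{(N)}/z_{n-1}^{(N)})$ vanishes identically, so no reference to $z_{-1}^{(N)}$ is needed). But a nondecreasing sequence starting at $z_0^{(N)}>0$ (guaranteed by Remark \ref{epniremN}) cannot reach $z_{N'}^{(N)}=0$, the contradiction. Hence $1-\mu_N\ge h(z_0^{(N)})\ge0$.

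With this bound in hand, the decreasing claim follows from the mirror induction, now with hypothesis $1\ge z_0^{(N)}\ge\ldots\ge z_n^{(N)}>0$. If $n+1=N'$ then $z_{N'}^{(N)}=0$ settles the step trivially; otherwise $n+1<N'$, and monotonicity of $h$ yields $1-\mu_N\ge h(z_0^{(N)})\ge h(z_n^{(N)})$, hence $(1-\mu_N)\ln z_n^{(N)}\le z_n^{(N)}-1$ after the sign flip, while $z_{n-1}^{(N)}\ge z_n^{(N)}$ makes the term $n\ln(z_n^{(N)}/z_{n-1}^{(N)})$ nonpositive; substituting into \eqref{epniznrec} gives $z_{n+1}^{(N)}\le z_n^{(N)}$, and $z_{n+1}^{(N)}>0$ because $n+1<N'$. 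To conclude $N'=N$ I would argue that if $N'<N$ then \eqref{epnimuN} forces $\mu_N\ge N'+1$, i.e. $1-\mu_N\le-N'\le-1$, which is incompatible with the just-proved $1-\mu_N\ge0$; hence $N'=N$ and the full chain $1\ge z_0^{(N)}\ge\ldots\ge z_{N-1}^{(N)}>z_N^{(N)}=0$ holds.

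The main obstacle here is careful bookkeeping rather than any deep idea: one must consistently track that $\ln z_n^{(N)}\le0$ flips every inequality upon multiplication, and one must set up the two opposite inductions so that they hinge on the single sharp threshold $h(z_0^{(N)})$. The only genuinely analytic ingredient is the strict monotonicity of the logarithmic mean $h(z)=(z-1)/\ln z$ on $(0,1]$, which is what guarantees that $h(z_n^{(N)})$ moves in lockstep with $z_n^{(N)}$ and thereby lets each induction feed consistently back into \eqref{epniznrec}.
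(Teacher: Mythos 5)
Your proposal is correct and follows essentially the same route as the paper's proof: the same contradiction argument via an increasing induction to establish $1-\mu_N\geq(z_0^{(N)}-1)/\ln z_0^{(N)}$, the same mirror induction for the decreasing claim, and the same combination of \eqref{epnimudecr} with \eqref{epnimuN} to force $N'=N$. Your explicit remark that the $n=0$ term $n\ln(z_n^{(N)}/z_{n-1}^{(N)})$ vanishes identically (so $z_{-1}^{(N)}$ is never referenced) is a small bookkeeping point the paper leaves implicit, but otherwise the two arguments coincide.
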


\begin{lem}\label{epnilemzbar}
$\limsup_{N\to\infty}z^{(N)}_{\bar{n}}<1$, where $\bar{n}=\min\left\{n\in\mathbb{N}:n+2>e^S\right\}$ does not depend on $N$.
\begin{proof}
We recall that $z_n^{(N)}\leq1$ for any $n$ and $N$, hence $\limsup_{N\to\infty}z^{(N)}_{\bar{n}}\leq1$.
Let us suppose that $\limsup_{N\to\infty}z^{(N)}_{\bar{n}}=1$.
Then, there is a subsequence $\left\{N_k\right\}_{k\in\mathbb{N}}$ such that $\lim_{k\to\infty}z^{(N_k)}_{\bar{n}}=1$.
Since $z^{(N)}_n$ is decreasing in $n$ for any $N$, we also have
\begin{equation}\label{epnilimbar}
\lim_{k\to\infty}z^{(N_k)}_n=1\qquad \forall\;n=0,\ldots,\,\bar{n}\;.
\end{equation}
Let us define for any $N$ the probability distribution $q^{(N)}\in\mathcal{D}_{\bar{n}+1}$ as
\begin{equation}
q^{(N)}_n=\frac{p^{(N)}_n}{\sum_{k=0}^{\bar{n}+1}p^{(N)}_k}\;,\qquad n=0,\ldots,\,\bar{n}+1\;.
\end{equation}
From \eqref{epnilimbar} we get for any $n=0,\ldots,\,\bar{n}+1$
\begin{equation}
\lim_{k\to\infty}\frac{q^{(N_k)}_n}{q^{(N_k)}_0}=\lim_{k\to\infty}z^{(N_k)}_0\ldots z^{(N_k)}_{n-1}=1\;.
\end{equation}
For any $k$
\begin{equation}\label{epnisumq}
\sum_{n=0}^{\bar{n}+1}q^{(N_k)}_n=1\;.
\end{equation}
Dividing both members of \eqref{epnisumq} by $q^{(N_k)}_0$ and taking the limit $k\to\infty$ we get
\begin{equation}
\lim_{k\to\infty}q^{(N_k)}_0=\frac{1}{\bar{n}+2}\;,
\end{equation}
hence
\begin{equation}
\lim_{k\to\infty}q^{(N_k)}_n=\frac{1}{\bar{n}+2}\;,\qquad n=0,\ldots,\,\bar{n}+1\;,
\end{equation}
and
\begin{equation}
\lim_{k\to\infty}H\left(q^{(N_k)}\right)=\ln\left(\bar{n}+2\right)>S\;.
\end{equation}
However, for Lemma \ref{epnilemtrunc} we have $H\left(q^{(N)}\right)\leq H\left(p^{(N)}\right)=S$.
\end{proof}
\end{lem}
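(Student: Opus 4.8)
The plan is to argue by contradiction, exploiting the monotonicity of the ratios $z^{(N)}_n$ in $n$ established in Lemma \ref{epnilemincr} together with an entropy comparison for conditional distributions. Since that lemma gives $z^{(N)}_n \le 1$ for every $n$ and $N$, one immediately has $\limsup_{N\to\infty} z^{(N)}_{\bar n} \le 1$, so the whole task reduces to excluding the boundary case $\limsup_{N\to\infty} z^{(N)}_{\bar n} = 1$.

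Assume that case holds and pass to a subsequence $\{N_k\}$ along which $z^{(N_k)}_{\bar n} \to 1$. Because $z^{(N)}_n$ is decreasing in $n$ (Lemma \ref{epnilemincr}) and bounded above by $1$, a squeeze propagates the convergence downward, giving $z^{(N_k)}_n \to 1$ for every $n = 0,\ldots,\bar n$. I would then localize to the first $\bar n + 2$ atoms, defining the conditional vector $q^{(N)}$ on $\{0,\ldots,\bar n+1\}$ by renormalizing $p^{(N)}_0,\ldots,p^{(N)}_{\bar n+1}$; this is well defined once $N \ge \bar n + 1$, hence for all large $N$ since $\bar n$ is fixed and, by Lemma \ref{epnilemincr}, $N' = N \to \infty$. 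Writing $q^{(N)}_n / q^{(N)}_0 = \prod_{j=0}^{n-1} z^{(N)}_j$, each such ratio tends to $1$ along $N_k$, and together with $\sum_{n=0}^{\bar n+1} q^{(N)}_n = 1$ this forces $q^{(N_k)}_n \to 1/(\bar n + 2)$ for every $n$, i.e. $q^{(N_k)}$ converges to the uniform distribution on $\bar n + 2$ points. Hence $H(q^{(N_k)}) \to \ln(\bar n + 2)$, and by the very definition $\bar n = \min\{n : n + 2 > e^S\}$ one has $\ln(\bar n + 2) > S$. The contradiction is completed by the fact that, since $p^{(N)}$ is decreasing, the entropy of its renormalized restriction to an initial segment never exceeds $H(p^{(N)}) = S$, so $H(q^{(N)}) \le S$ for all $N$, incompatible with a limit strictly above $S$.

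The step I expect to require the most care is this last inequality $H(q^{(N)}) \le H(p^{(N)})$, which I would isolate as a separate auxiliary lemma rather than fold into this argument. It is not merely the chain rule: decomposing $H(p^{(N)})$ over the events ``inside the segment'' and ``in the tail'' yields only $H(p^{(N)}) \ge s\,H(q^{(N)})$, where $s = \sum_{n=0}^{\bar n+1} p^{(N)}_n \le 1$ is the segment mass, which is weaker than what is needed. The genuine input is the decreasing structure of $p^{(N)}$, which is precisely what prevents the truncation from inflating the entropy, and proving that cleanly is the real work. The remaining ingredients --- extracting the subsequence, propagating $z^{(N_k)}_n \to 1$ through monotonicity, and the elementary computation that ratios tending to $1$ force the uniform limit --- are routine, and the choice of $\bar n$ is engineered exactly so that the uniform entropy $\ln(\bar n + 2)$ overshoots the constraint $S$.
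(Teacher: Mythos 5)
Your proposal is correct and follows essentially the same route as the paper's proof: contradiction, subsequence extraction, propagation of $z^{(N_k)}_n\to1$ downward by monotonicity, renormalization of the first $\bar n+2$ atoms to a distribution forced to the uniform limit, and the entropy overshoot $\ln(\bar n+2)>S$ against the truncation bound. The auxiliary inequality you isolate as ``the real work'' is exactly the paper's Lemma \ref{epnilemtrunc}, and it is dispatched in one line there: since $p^{(N)}$ is decreasing, the partial sums of $q^{(N)}$ dominate those of $p^{(N)}$, hence $q^{(N)}\succ p^{(N)}$ and Schur concavity of the entropy (Remark \ref{majS}) gives $H\left(q^{(N)}\right)\le H\left(p^{(N)}\right)$ --- your observation that the decreasing structure is the essential ingredient (and that a chain-rule decomposition is too weak) is precisely what makes this majorization argument available.
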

\begin{cor}\label{epnicorz}
There exists $0\leq\bar{z}<1$ (that does not depend on $N$) such that $z^{(N)}_{\bar{n}}\leq\bar{z}$ for any $N\geq\bar{n}$.
\end{cor}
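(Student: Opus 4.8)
The plan is to read off the corollary almost directly from Lemma \ref{epnilemzbar}, upgrading the asymptotic statement $\limsup_{N\to\infty} z^{(N)}_{\bar{n}} < 1$ into a bound uniform over all $N\geq\bar{n}$. The guiding idea is that control of the $\limsup$ already tames every sufficiently large $N$, leaving only finitely many indices to deal with by hand; for those I need the genuinely \emph{strict} inequality $z^{(N)}_{\bar{n}}<1$, and then a finite maximum yields the required constant below $1$.

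First I would set $L:=\limsup_{N\to\infty} z^{(N)}_{\bar{n}}$, which satisfies $L<1$ by Lemma \ref{epnilemzbar}, and put $z_\infty:=(1+L)/2\in[0,1)$. By the definition of $\limsup$ there exists $N_0\geq\bar{n}$ such that $z^{(N)}_{\bar{n}}\leq z_\infty$ for every $N\geq N_0$. This handles the entire tail $N\geq N_0$ with the single constant $z_\infty<1$.

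Next I would dispose of the finitely many indices with $\bar{n}\leq N<N_0$ by showing $z^{(N)}_{\bar{n}}<1$ for each of them. For $N=\bar{n}$ this is immediate, since $z^{(N)}_{N}=0$ by \eqref{epniz01} together with $N'=N$ from Lemma \ref{epnilemincr}. For $\bar{n}<N<N_0$, suppose for contradiction that $z^{(N)}_{\bar{n}}=1$. By Lemma \ref{epnilemincr} the sequence $z^{(N)}_n$ is decreasing with $z^{(N)}_0\leq1$, so necessarily $z^{(N)}_0=\ldots=z^{(N)}_{\bar{n}}=1$, that is $p^{(N)}_0=\ldots=p^{(N)}_{\bar{n}+1}$ (all well defined and positive because $N'=N>\bar{n}$ by \eqref{epnipdecr}). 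Truncating and renormalizing $p^{(N)}$ to $\{0,\ldots,\bar{n}+1\}$ then gives the uniform distribution on $\bar{n}+2$ atoms, whose Shannon entropy is $\ln(\bar{n}+2)>S$ by the very definition $\bar{n}=\min\{n\in\mathbb{N}:n+2>e^S\}$. This contradicts $H\leq H\left(p^{(N)}\right)=S$ guaranteed by Lemma \ref{epnilemtrunc}. Hence $z^{(N)}_{\bar{n}}<1$ throughout the finite range.

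Finally I would set $\bar{z}:=\max\{z_\infty,\ \max_{\bar{n}\leq N<N_0} z^{(N)}_{\bar{n}}\}$, a maximum of finitely many numbers each strictly below $1$, so that $0\leq\bar{z}<1$ and $z^{(N)}_{\bar{n}}\leq\bar{z}$ for all $N\geq\bar{n}$, which is exactly the assertion. The only real obstacle is excluding the equality $z^{(N)}_{\bar{n}}=1$ for the finitely many small $N$; everything else is bookkeeping around the $\limsup$. Reassuringly, this obstacle is settled by precisely the truncation-and-renormalization comparison already used in the proof of Lemma \ref{epnilemzbar} — here with an exact equality in place of a limit — so no new ingredient is required.
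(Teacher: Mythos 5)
Your proof is correct and follows the route the paper intends: Corollary \ref{epnicorz} is stated there without proof as an immediate consequence of Lemma \ref{epnilemzbar}, with the $\limsup$ bound controlling all sufficiently large $N$ and the finitely many remaining indices handled separately. Your verification that $z^{(N)}_{\bar{n}}<1$ for each finite $N\geq\bar{n}$ — via $z^{(\bar{n})}_{\bar{n}}=0$ and, for $\bar{n}<N$, the truncation-and-renormalization entropy comparison of Lemma \ref{epnilemtrunc} against $\ln(\bar{n}+2)>S$ — is precisely the detail the paper leaves implicit, and you handle it correctly.
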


\begin{lem}
The sequence $\left\{\mu_N\right\}_{N\in\mathbb{N}}$ is bounded.
\begin{proof}
An upper bound for $\mu_N$ is provided by \eqref{epnimudecr}.
Let us then prove a lower bound.

For any $N\geq\bar{n}+1$ we must have $z_{\bar{n}+1}^{(N)}\geq0$.
The recursive equation \eqref{epniznrec} for $n=\bar{n}$ gives
\begin{equation}\label{epniznbar}
0 \leq \left(\bar{n}+2\right)z_{\bar{n}+1}^{(N)}= \left(\bar{n}+1\right)z_{\bar{n}}^{(N)}+1+\left(1-\mu_N\right)\ln z_{\bar{n}}^{(N)}+\bar{n}\ln\frac{z_{\bar{n}}^{(N)}}{z_{\bar{n}-1}^{(N)}}\;.
\end{equation}
Since $z^{(N)}_n$ is decreasing in $n$, we have $z_{\bar{n}}^{(N)}\leq z_{\bar{n}-1}^{(N)}$.
Recalling from \eqref{epnimudecr} that $1-\mu_N\geq0$, and from Corollary \ref{epnicorz} that $z^{(N)}_{\bar{n}}\leq\bar{z}<1$, \eqref{epniznbar} implies
\begin{equation}
0 \leq \left(\bar{n}+1\right)z_{\bar{n}}^{(N)}+1+\left(1-\mu_N\right)\ln z_{\bar{n}}^{(N)}\leq\left(\bar{n}+1\right)\bar{z}+1+\left(1-\mu_N\right)\ln \bar{z}\;,
\end{equation}
hence $1-\mu_N\leq\left.-\left(\left(\bar{n}+1\right)\bar{z}+1\right)\right/\ln\bar{z}<\infty$.
\end{proof}
\end{lem}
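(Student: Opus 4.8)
The plan is to bound $\{\mu_N\}$ on both sides separately, exploiting the KKT relations satisfied by the maximizer $p^{(N)}$. The upper bound requires no new work: inequality \eqref{epnimudecr}, already established, gives $1-\mu_N\geq\left(z_0^{(N)}-1\right)/\ln z_0^{(N)}\geq0$, so in particular $\mu_N\leq1$ for every $N$. All the effort therefore goes into the lower bound, i.e. into controlling $1-\mu_N$ from above uniformly in $N$.

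First I would evaluate the recursion \eqref{epniznrec} at a single index $\bar{n}$ that does \emph{not} depend on $N$, namely the $\bar{n}=\min\left\{n\in\mathbb{N}:n+2>e^S\right\}$ of Lemma \ref{epnilemzbar}, since this is exactly the index for which Corollary \ref{epnicorz} supplies a uniform separation $z_{\bar{n}}^{(N)}\leq\bar{z}<1$. Using that $z_{\bar{n}+1}^{(N)}\geq0$ (it is a ratio of nonnegative probabilities), the left-hand side of \eqref{epniznrec} at $n=\bar{n}$ is nonnegative, and after the simplification $\left(\bar{n}+2\right)z_{\bar{n}}^{(N)}-z_{\bar{n}}^{(N)}=\left(\bar{n}+1\right)z_{\bar{n}}^{(N)}$ one obtains
\begin{equation}
0\leq\left(\bar{n}+1\right)z_{\bar{n}}^{(N)}+1+\left(1-\mu_N\right)\ln z_{\bar{n}}^{(N)}+\bar{n}\ln\frac{z_{\bar{n}}^{(N)}}{z_{\bar{n}-1}^{(N)}}\;.
\end{equation}
Next I would invoke the monotonicity of Lemma \ref{epnilemincr}: since $z_n^{(N)}$ decreases in $n$, we have $z_{\bar{n}}^{(N)}\leq z_{\bar{n}-1}^{(N)}$, so the final logarithmic term is $\leq0$ and may be dropped. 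Inserting then the uniform estimates $z_{\bar{n}}^{(N)}\leq\bar{z}$ and $\ln z_{\bar{n}}^{(N)}\leq\ln\bar{z}<0$, and using $1-\mu_N\geq0$ from the upper bound so that $\left(1-\mu_N\right)\ln z_{\bar{n}}^{(N)}\leq\left(1-\mu_N\right)\ln\bar{z}$, one arrives at
\begin{equation}
0\leq\left(\bar{n}+1\right)\bar{z}+1+\left(1-\mu_N\right)\ln\bar{z}\;.
\end{equation}
Dividing by the strictly negative quantity $\ln\bar{z}$ finally yields $1-\mu_N\leq-\left(\left(\bar{n}+1\right)\bar{z}+1\right)/\ln\bar{z}<\infty$, a finite constant independent of $N$; combined with $\mu_N\leq1$ this proves boundedness.

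The hard part is not the algebra above but ensuring that the coefficient of $1-\mu_N$ in the recursion stays bounded away from zero uniformly in $N$. If $z_{\bar{n}}^{(N)}$ were allowed to approach $1$ as $N\to\infty$, then $\ln z_{\bar{n}}^{(N)}\to0$ and the final division would fail to produce a finite bound, so the whole lower bound would collapse. This is precisely why the preliminary Corollary \ref{epnicorz}, which forces the strict separation $z_{\bar{n}}^{(N)}\leq\bar{z}<1$ at a fixed index for all large $N$, carries the real content of the argument. I would therefore regard establishing and then deploying that uniform separation as the crux, with everything else reducing to elementary manipulations of the recursion together with the sign information already recorded in Lemmas \ref{epnilemincr} and \ref{epnilemzbar}.
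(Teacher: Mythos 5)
Your proposal is correct and follows essentially the same route as the paper's own proof: the upper bound $\mu_N\leq1$ from \eqref{epnimudecr}, then the recursion \eqref{epniznrec} evaluated at the $N$-independent index $\bar{n}$ with $z^{(N)}_{\bar{n}+1}\geq0$, dropping the logarithmic ratio term by monotonicity and using the uniform separation $z^{(N)}_{\bar{n}}\leq\bar{z}<1$ from Corollary \ref{epnicorz} to conclude $1-\mu_N\leq-\left(\left(\bar{n}+1\right)\bar{z}+1\right)/\ln\bar{z}$. Your closing observation that the uniform separation of Corollary \ref{epnicorz} is the real content of the argument is also exactly the role it plays in the paper.
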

The sequence $\left\{\mu_N\right\}_{N\in\mathbb{N}}$ has then a converging subsequence $\left\{\mu_{N_k}\right\}_{k\in\mathbb{N}}$ with
\begin{equation}
\lim_{k\to\infty}\mu_{N_k}=\mu\;.
\end{equation}

Since the sequences $\left\{z^{(N)}_0\right\}_{N\in\mathbb{N}}$ and $\left\{p^{(N)}_0\right\}_{N\in\mathbb{N}}$ are constrained between $0$ and $1$, we can also assume $\lim_{k\to\infty}z^{\left(N_k\right)}_0=z_0$ and $\lim_{k\to\infty}p^{\left(N_k\right)}_0=p_0$.
Taking the limit of \eqref{epnimudecr} we get
\begin{equation}\label{epnimulim0}
1-\mu\geq\frac{z_0-1}{\ln z_0}\geq0\;.
\end{equation}

\begin{lem}\label{epnilemzn}
$\lim_{k\to\infty}z^{\left(N_k\right)}_n=z_n$ for any $n\in\mathbb{N}$,
where the $z_n$ are either all $0$ or all strictly positive, and in the latter case they satisfy for any $n$ in $\mathbb{N}$ the recursive relation \eqref{epniznrec} with $\mu_N$ replaced by $\mu$:
\begin{equation}\label{epnizreclim}
\left(n+2\right)z_{n+1}= \left(n+2\right)z_n+1-z_n+\left(1-\mu\right)\ln z_n+n\ln\frac{z_n}{z_{n-1}}\;.
\end{equation}
\begin{proof}
If $z_0=0$, since $z^{(N)}_n$ is decreasing in $n$ we have for any $n$ in $\mathbb{N}$
\begin{equation}
\limsup_{k\to\infty}z^{(N_k)}_n\leq\limsup_{k\to\infty}z^{(N_k)}_0=z_0=0\;,
\end{equation}
hence $\lim_{k\to\infty}z^{(N_k)}_n=0$.

Let us now suppose $z_0>0$, and proceed by induction on $n$.
For the inductive hypothesis, we can suppose
\begin{equation}
z_0=\lim_{k\to\infty}z^{\left(N_k\right)}_0\geq\ldots\geq\lim_{k\to\infty}z^{\left(N_k\right)}_n=z_n>0\;.
\end{equation}
Then, taking the limit in \eqref{epniznrec} we get
\begin{equation}
z_{n+1} = \lim_{k\to\infty}z^{\left(N_k\right)}_{n+1}=z_n+\frac{1-z_n+\left(1-\mu\right)\ln z_n+n\ln\frac{z_n}{z_{n-1}}}{n+2}\;.
\end{equation}
If $z_{n+1}>0$, the claim is proven.
Let us then suppose $z_{n+1}=0$.
From \eqref{epniznrec} we get then
\begin{equation}
0\leq\lim_{k\to\infty}z^{\left(N_k\right)}_{n+2}=\frac{1+\left(n+2-\mu\right)\ln0-\left(n+1\right)\ln z_n}{n+3}\;,
\end{equation}
that implies $\mu\geq n+2\geq2$.
However, \eqref{epnimulim0} implies $\mu\leq1$.
\end{proof}
\end{lem}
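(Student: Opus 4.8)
The plan is to establish the claim by induction on $n$, taking limits in the finite-$N$ recursion \eqref{epniznrec} along the subsequence $N_k$ that has already been chosen so that $\mu_{N_k}\to\mu$, $z_0^{(N_k)}\to z_0$ and $p_0^{(N_k)}\to p_0$. The two structural inputs I would rely on are that each ratio $z_n^{(N)}$ lies in $[0,1]$ and is nonincreasing in $n$ (Lemma \ref{epnilemincr}), and that these ratios satisfy \eqref{epniznrec}, whose only $N$-dependent coefficient is $\mu_N$.

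The dichotomy between the all-zero and all-positive alternatives is read off from $z_0$. If $z_0=0$, monotonicity in $n$ gives $0\le\limsup_{k\to\infty}z_n^{(N_k)}\le\lim_{k\to\infty}z_0^{(N_k)}=0$ for every $n$, so $z_n^{(N_k)}\to0$ throughout and we land in the first case. I would then concentrate on $z_0>0$ and show that the limit stays strictly positive for all $n$.

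For the inductive step I would assume that $z_0\ge\ldots\ge z_n>0$ have been obtained as limits of $z_0^{(N_k)},\ldots,z_n^{(N_k)}$, and let $k\to\infty$ in \eqref{epniznrec} at index $n$. Every term converges: the terms polynomial in the ratios by continuity, and the logarithms $\ln z_n^{(N_k)}$ and $\ln(z_n^{(N_k)}/z_{n-1}^{(N_k)})$ because their arguments tend to the strictly positive limits $z_n$ and $z_{n-1}$; combined with $\mu_{N_k}\to\mu$ this defines $z_{n+1}=\lim_k z_{n+1}^{(N_k)}$ and shows it obeys the limiting recursion \eqref{epnizreclim}. What remains is to exclude $z_{n+1}=0$.

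Ruling out $z_{n+1}=0$ is, I expect, the main obstacle. Here I would write \eqref{epniznrec} at index $n+1$ and again send $k\to\infty$. If $z_{n+1}^{(N_k)}\to0$ then both $\ln z_{n+1}^{(N_k)}$ and $\ln(z_{n+1}^{(N_k)}/z_n^{(N_k)})$ diverge to $-\infty$, and their combined coefficient in the recursion is $(1-\mu_{N_k})+(n+1)=n+2-\mu_{N_k}$. Since the left-hand side $(n+3)\,z_{n+2}^{(N_k)}$ is nonnegative and bounded while the remaining terms stay finite, this coefficient must be nonpositive in the limit, forcing $\mu\ge n+2\ge2$. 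That contradicts $\mu\le1$, which is exactly the content of \eqref{epnimulim0}. Hence $z_{n+1}>0$, the induction closes, and all limiting ratios are strictly positive and satisfy \eqref{epnizreclim}. The delicate point is precisely this bookkeeping of the divergent logarithms, so as to isolate the sharp inequality $\mu\ge2$ that collides with the bound \eqref{epnimulim0}.
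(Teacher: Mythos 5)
Your proposal is correct and follows essentially the same route as the paper's proof: the same dichotomy on $z_0$, the same induction taking limits in \eqref{epniznrec}, and the same exclusion of $z_{n+1}=0$ by applying the recursion at index $n+1$, observing that the divergent logarithms carry the combined coefficient $n+2-\mu_N$, and deriving $\mu\geq n+2$ in contradiction with the bound $\mu\leq 1$ from \eqref{epnimulim0}. Your phrasing of the last step via boundedness of $(n+3)\,z_{n+2}^{(N_k)}$ is if anything slightly more careful than the paper's, which writes the limit of $z_{n+2}^{(N_k)}$ as if it exists, but the argument is the same.
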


\begin{lem}\label{epnizlim}
There exists $0\leq z<1$ such that $z_n=\lim_{k\to\infty}z^{\left(N_k\right)}_n=z$ for any $n\in\mathbb{N}$.
\begin{proof}
If $z_0=0$, Lemma \ref{epnilemzn} implies the claim with $z=0$.
Let us then suppose $z_0>0$.

If $z_0=1$, with \eqref{epnizreclim} it is easy to prove that $z_n=1$ for any $n\in\mathbb{N}$.
However, from Lemma \ref{epnilemzn} and Corollary \ref{epnicorz} we must have $z_{\bar{n}}=\lim_{k\to\infty}z^{\left(N_k\right)}_{\bar{n}}\leq\bar{z}<1$.
Then, it must be $0<z_0<1$.

Since the sequence $\left\{z^{(N)}_n\right\}_{n\in\mathbb{N}}$ is decreasing for any $N$, also the sequence $\left\{z_n\right\}_{n\in\mathbb{N}}$ is decreasing.
Since it is also positive, it has a limit $\lim_{n\to\infty}z_n=\inf_{n\in\mathbb{N}}z_n=z$, that satisfies $0\leq z\leq z_0<1$.
Since $z_n\leq z_{n-1}\leq z_0<1$, \eqref{epnizreclim} implies
\begin{equation}
\left(n+2\right)\left(z_n-z_{n+1}\right)+1-z_n+\left(1-\mu\right)\ln z_n\geq0\;,
\end{equation}
hence
\begin{equation}\label{epnizdecrineq}
1-\mu\leq\frac{\left(n+2\right)\left(z_n-z_{n+1}\right)}{-\ln z_n}+\frac{z_n-1}{\ln z_n}\;.
\end{equation}
The sequence $\left\{z_n-z_{n+1}\right\}_{n\in\mathbb{N}}$ is positive and satisfies $\sum_{n=0}^\infty\left(z_n-z_{n+1}\right)=z_0-z<\infty$.
Then, for Lemma \ref{epnilemnx} $\liminf_{n\to\infty}\left(n+2\right)\left(z_n-z_{n+1}\right)=0$, and since $-\ln z_n\geq-\ln z_0>0$, also
\begin{equation}
\liminf_{n\to\infty}\frac{\left(n+2\right)\left(z_n-z_{n+1}\right)}{-\ln z_n}=0\;.
\end{equation}
Then, taking the $\liminf$ of \eqref{epnizdecrineq} we get $1-\mu\leq\left.\left(z-1\right)\right/\ln z$.
Combining with \eqref{epnimulim0} and recalling that $z\leq z_0$ we get
\begin{equation}\label{epnimulim}
\frac{z-1}{\ln z}\leq\frac{z_0-1}{\ln z_0}\leq1-\mu\leq\frac{z-1}{\ln z}\;,
\end{equation}
that implies $z=z_0$.
Since $z_n$ is decreasing and $z=\inf_{n\in\mathbb{N}}z_n$, we have $z_0=z\leq z_n\leq z_0$ for any $n$, hence $z_n=z$.
\end{proof}
\end{lem}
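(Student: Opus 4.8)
The plan is to argue by cases on the value of $z_0=\lim_{k\to\infty}z^{(N_k)}_0$, combining the monotonicity of $z^{(N)}_n$ in $n$ from Lemma \ref{epnilemincr} with the limiting recursion \eqref{epnizreclim} supplied by Lemma \ref{epnilemzn}. First, if $z_0=0$, then since each sequence $\{z^{(N)}_n\}_n$ is decreasing and nonnegative we have $0\le z_n\le z_0=0$, so $z_n=0$ for all $n$ and the claim holds with $z=0$. Next I would exclude $z_0=1$: substituting $z_0=1$ into \eqref{epnizreclim} gives $z_1=1$, and by induction each step contributes $\ln z_n=0$ and $\ln(z_n/z_{n-1})=0$, forcing $z_n=1$ for every $n$; this contradicts Corollary \ref{epnicorz}, which guarantees $z_{\bar n}\le\bar z<1$. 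Hence we are left with $0<z_0<1$.

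In this regime the sequence $\{z_n\}_n$ is decreasing (inherited from the monotonicity of each $z^{(N)}_n$) and bounded below by $0$, so the limit $z:=\lim_{n\to\infty}z_n=\inf_n z_n$ exists with $0\le z\le z_0<1$. The real content of the lemma is that this sequence is in fact \emph{constant}. To pin this down I would first extract a one-sided inequality from the recursion: since $z_n\le z_{n-1}$ the coupling term $n\ln(z_n/z_{n-1})$ is nonpositive, so discarding it in \eqref{epnizreclim} yields
\begin{equation}
(n+2)(z_n-z_{n+1})+1-z_n+(1-\mu)\ln z_n\ge0\;,
\end{equation}
and, dividing by $-\ln z_n>0$,
\begin{equation}
1-\mu\le\frac{(n+2)(z_n-z_{n+1})}{-\ln z_n}+\frac{z_n-1}{\ln z_n}\;.
\end{equation}

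The key step is then an asymptotic squeeze on $1-\mu$. The increments $z_n-z_{n+1}$ are nonnegative and telescope to $z_0-z<\infty$, hence summable; by the elementary fact that a nonnegative summable sequence $a_n$ satisfies $\liminf_n(n+2)\,a_n=0$ (Lemma \ref{epnilemnx}), together with $-\ln z_n\ge-\ln z_0>0$ being bounded away from zero, the first term above has $\liminf$ equal to $0$. Since $z_n\to z$ the second term converges to $(z-1)/\ln z$, so taking the $\liminf$ gives $1-\mu\le(z-1)/\ln z$. Combining this with the reverse bound $1-\mu\ge(z_0-1)/\ln z_0$ from \eqref{epnimulim0}, and using that $w\mapsto(w-1)/\ln w$ is strictly increasing on $(0,1)$ with $z\le z_0$, I would obtain
\begin{equation}
\frac{z-1}{\ln z}\le\frac{z_0-1}{\ln z_0}\le1-\mu\le\frac{z-1}{\ln z}\;,
\end{equation}
so every inequality is an equality; strict monotonicity then forces $z=z_0$. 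As $z=\inf_n z_n$ and $z_n$ decreases from $z_0$, we get $z_0\le z_n\le z_0$, i.e. $z_n=z$ for all $n$. I expect the main obstacle to be precisely this squeeze: one must drop the coupling term in the correct direction and then use summability of the increments to annihilate the $(n+2)$-weighted contribution, which is exactly the place where the constant-ratio (geometric) structure of the limiting optimizer is forced.
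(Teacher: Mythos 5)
Your proposal is correct and follows essentially the same route as the paper's proof: the same case split on $z_0$, the same discarding of the nonpositive coupling term $n\ln(z_n/z_{n-1})$ to get the inequality \eqref{epnizdecrineq}, the same use of summability of the increments with Lemma \ref{epnilemnx} to kill the $(n+2)$-weighted term, and the same squeeze against \eqref{epnimulim0} via strict monotonicity of $w\mapsto(w-1)/\ln w$ to force $z=z_0$.
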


\begin{lem}\label{epnilemp0lim}
$\lim_{k\to\infty}p^{\left(N_k\right)}_n=p_0\,z^n$ for any $n\in\mathbb{N}$.
\begin{proof}
The claim is true for $n=0$.
The inductive hypothesis is $\lim_{k\to\infty}p^{\left(N_k\right)}_{n'}=p_0\,z^{n'}$ for $n'=0,\ldots,\,n$.
We then have $\lim_{k\to\infty}p^{\left(N_k\right)}_{n+1}=\lim_{k\to\infty}p^{\left(N_k\right)}_n\,z^{\left(N_k\right)}_n=p_0\,z^{n+1}$,
where we have used the inductive hypothesis and Lemma \ref{epnizlim}.
\end{proof}
\end{lem}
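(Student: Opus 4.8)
The plan is to establish the limit by a straightforward induction on $n$, exploiting the recursive structure already encoded in the ratios $z_n^{(N)}=p_{n+1}^{(N)}/p_n^{(N)}$ together with the collapse result of Lemma \ref{epnizlim}. The key observation is that, by the very definition of $z_n^{(N)}$, each coefficient factorises as $p_{n+1}^{(N_k)}=z_n^{(N_k)}\,p_n^{(N_k)}$, so that after telescoping $p_n^{(N_k)}=p_0^{(N_k)}\prod_{j=0}^{n-1}z_j^{(N_k)}$. Since $n$ is held fixed while $k\to\infty$, this is a finite product of convergent factors, and the limit distributes over it.

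First I would treat the base case $n=0$: here $p_0^{(N_k)}\to p_0$ by the choice of subsequence made just before Lemma \ref{epnilemzn}, so the claim reduces to $p_0=p_0\,z^0$, which is trivially true. For the inductive step, I would assume $\lim_{k\to\infty}p_n^{(N_k)}=p_0\,z^n$ and write $p_{n+1}^{(N_k)}=z_n^{(N_k)}\,p_n^{(N_k)}$. Taking $k\to\infty$, the first factor tends to $z$ by Lemma \ref{epnizlim} (which guarantees $\lim_{k\to\infty}z_n^{(N_k)}=z$ for every $n$), the second tends to $p_0\,z^n$ by the inductive hypothesis, and since the limit of a product of two convergent sequences is the product of their limits, we obtain $\lim_{k\to\infty}p_{n+1}^{(N_k)}=z\cdot p_0\,z^n=p_0\,z^{n+1}$, closing the induction.

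I do not expect any genuine obstacle here: all the analytic difficulty has already been absorbed into the preceding lemmata, most importantly Lemma \ref{epnizlim}, which shows that the limiting ratios $z_n$ are independent of $n$ and equal to a single value $z\in[0,1)$. The only points requiring mild care are that one must remain on the fixed subsequence $\{N_k\}$ along which both $z_n^{(N_k)}$ and $p_0^{(N_k)}$ were shown to converge, and that $n$ is kept fixed so that no interchange of limit and infinite product is ever needed. Once this lemma is in place, the limiting distribution is identified as the geometric law $p_0\,z^n$, i.e. exactly the passive thermal profile of \eqref{epniomegaE}, which is what is needed to evaluate $\lim_{k\to\infty}F_{N_k}$ and thereby complete the proof of Lemma \ref{epnilemmain}.
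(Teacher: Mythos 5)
Your proof is correct and is essentially identical to the paper's: both proceed by induction on $n$, using $p_{n+1}^{(N_k)}=z_n^{(N_k)}p_n^{(N_k)}$ together with Lemma \ref{epnizlim} to pass to the limit along the fixed subsequence. No gap; the extra remarks about finiteness of the product and staying on the subsequence are sound but implicit in the paper's version.
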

\begin{lem}\label{epnilimp}
$p_0=1-z$, hence $\lim_{k\to\infty}p^{\left(N_k\right)}_n=\left(1-z\right)z^n$ for any $n\in\mathbb{N}$.
\begin{proof}
We have $\sum_{n=0}^N p^{(N)}_n=1$ for any $N\in\mathbb{N}$.
Moreover, since $z^{(N)}_n$ is decreasing in $n$, we also have
\begin{equation}
p^{(N)}_n=p^{(N)}_0\,z^{(N)}_0\ldots\,z^{(N)}_{n-1}\leq p^{(N)}_0\left(z^{(N)}_0\right)^n\;.
\end{equation}
Since $\lim_{k\to\infty}z_0^{\left(N_k\right)}=z<1$, for sufficiently large $k$ we have $z_0^{\left(N_k\right)}\leq\left(1+z\right)/2$, and since $p^{(N)}_0\leq1$,
\begin{equation}\label{epniboundp}
p^{\left(N_k\right)}_n\leq\left(\frac{1+z}{2}\right)^n\;.
\end{equation}
The sums $\sum_{n=0}^N p^{\left(N_k\right)}_n$ are then dominated for any $N$ in $\mathbb{N}$ by $\sum_{n=0}^\infty \left(\frac{1+z}{2}\right)^n<\infty$, and for the dominated convergence theorem we have
\begin{equation}
1 = \lim_{k\to\infty}\sum_{n=0}^{N_k}p^{\left(N_k\right)}_n=\sum_{n=0}^\infty\lim_{k\to\infty}p^{\left(N_k\right)}_n=p_0\sum_{n=0}^\infty z^n=\frac{p_0}{1-z}\;,
\end{equation}
where we have used Lemma \ref{epnilemp0lim}.
\end{proof}
\end{lem}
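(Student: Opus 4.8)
The plan is to combine the pointwise limit already established in Lemma \ref{epnilemp0lim}, namely $\lim_{k\to\infty} p^{(N_k)}_n = p_0\, z^n$, with the normalization constraint $\sum_{n=0}^{N} p^{(N)}_n = 1$ in order to pin down the still-undetermined constant $p_0$. Heuristically, if I were allowed to exchange the limit $k\to\infty$ with the infinite summation, I would obtain
\begin{equation}
1 = \lim_{k\to\infty}\sum_{n=0}^{N_k} p^{(N_k)}_n = \sum_{n=0}^\infty p_0\, z^n = \frac{p_0}{1-z}\;,
\end{equation}
whence $p_0 = 1-z$ and therefore $\lim_{k\to\infty} p^{(N_k)}_n = (1-z)\,z^n$, the geometric form claimed.

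First I would secure a uniform, summable bound on the tails of the distributions $p^{(N_k)}$, since this is precisely what legitimizes the interchange of limit and sum. Here I would exploit the monotonicity proved in Lemma \ref{epnilemincr}: because $z^{(N)}_n$ is decreasing in $n$, the telescoping identity $p^{(N)}_n = p^{(N)}_0\, z^{(N)}_0 \cdots z^{(N)}_{n-1}$ (which follows directly from the definition $z^{(N)}_n = p^{(N)}_{n+1}/p^{(N)}_n$) yields $p^{(N)}_n \le p^{(N)}_0 \big(z^{(N)}_0\big)^n \le \big(z^{(N)}_0\big)^n$. I would then invoke $\lim_{k\to\infty} z^{(N_k)}_0 = z < 1$ from Lemma \ref{epnizlim} to conclude that for all sufficiently large $k$ one has $z^{(N_k)}_0 \le (1+z)/2 < 1$, so that $p^{(N_k)}_n \le \big((1+z)/2\big)^n$ uniformly in $k$. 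This majorant is a convergent geometric series.

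With this dominating sequence in hand, the dominated convergence theorem applies on the counting measure of $\mathbb{N}$ (extending each finite sum by zeros beyond $N_k$), the interchange above becomes rigorous, and the proof closes. The hard part is exactly this uniform tail control: a naive passage to the limit is illegitimate because the index range $\{0,\ldots,N_k\}$ grows with $k$, and without a $k$-independent summable majorant mass could in principle leak off to infinity in the limit. The leverage that rules this out is the interplay between the decreasing character of $\{z^{(N)}_n\}_n$ and the strict bound $z<1$ inherited from Lemma \ref{epnizlim} together with Corollary \ref{epnicorz}.
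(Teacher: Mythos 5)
Your proposal is correct and follows essentially the same route as the paper's own proof: the telescoping bound $p^{(N)}_n \le p^{(N)}_0 \bigl(z^{(N)}_0\bigr)^n$ from the monotonicity of $z^{(N)}_n$, the geometric majorant with ratio $(1+z)/2$ valid for large $k$, and dominated convergence combined with Lemma \ref{epnilemp0lim} to identify $p_0 = 1-z$. The only cosmetic difference is that you cite the supporting lemmata (\ref{epnilemincr}, \ref{epnizlim}, Corollary \ref{epnicorz}) explicitly where the paper invokes the same facts implicitly.
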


\begin{lem}\label{epniz(S)}
$z=g^{-1}(S)\left/\left(g^{-1}(S)+1\right)\right.$.
\begin{proof}
The function $-x\ln x$ is increasing for $0\leq x\leq1/e$.
Let us choose $n_0$ such that $\left(\left.\left(1+z\right)\right/2\right)^{n_0}\leq1/e$.
Recalling \eqref{epniboundp}, the sums $-\sum_{n=n_0}^N p^{(N)}_n\ln p^{(N)}_n$ are dominated for any $N$ in $\mathbb{N}$ by $-\sum_{n=n_0}^\infty n\left(\frac{1+z}{2}\right)^n\ln\frac{1+z}{2}<\infty$.
For any $N$ we have $S=-\sum_{n=0}^N p^{(N)}_n\ln p^{(N)}_n$.
Then, for the dominated convergence theorem and Lemma \ref{epnilimp} we have
\begin{equation}\label{epniSz}
S = -\sum_{n=0}^\infty\lim_{k\to\infty}p^{(N)}_n\ln p^{(N)}_n= -\sum_{n=0}^\infty\left(1-z\right)z^n\left(\ln\left(1-z\right)+n\ln z\right)=g\left(\frac{z}{1-z}\right)\;,
\end{equation}
where we have used the definition of $g$ \eqref{epnidefg}.
Finally, the claim follows solving \eqref{epniSz} with respect to $z$.
\end{proof}
\end{lem}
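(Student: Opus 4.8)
The plan is to read the claim straight off the limiting profile of the optimizers established in Lemma \ref{epnilimp}, namely $\lim_{k\to\infty}p^{(N_k)}_n=(1-z)\,z^n$, which is a geometric distribution with ratio $z$. Each $p^{(N_k)}$ has Shannon entropy exactly $S$ by construction (it lies in $\mathcal{P}_{N_k}$), so I would like to pass the constraint $S=-\sum_n p^{(N_k)}_n\ln p^{(N_k)}_n$ to the limit $k\to\infty$, recognize the entropy of the limiting geometric law as a value of the function $g$ of \eqref{epnidefg}, and then solve the resulting scalar equation for $z$.

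First I would justify exchanging the limit with the entropy sum, which is the only delicate point. Since the sum is infinite, termwise passage to the limit needs a domination argument, and the tool is the uniform bound \eqref{epniboundp}, i.e. $p^{(N_k)}_n\leq\left((1+z)/2\right)^n$ for $k$ large. Fixing $n_0$ with $\left((1+z)/2\right)^{n_0}\leq 1/e$ and using that $x\mapsto-x\ln x$ is increasing on $[0,1/e]$, the tail terms are dominated by $-n\left((1+z)/2\right)^n\ln\frac{1+z}{2}$, which is summable because $(1+z)/2<1$; the finitely many small-$n$ terms are harmless. Dominated convergence then yields $S=-\sum_{n=0}^\infty(1-z)\,z^n\ln\!\big((1-z)\,z^n\big)$.

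Next I would compute this limiting entropy explicitly. Splitting the logarithm and using the elementary identities $\sum_{n\geq0}z^n=1/(1-z)$ and $\sum_{n\geq0}n\,z^n=z/(1-z)^2$ (legitimate since $z<1$ by Lemma \ref{epnizlim}), the sum collapses to $-\ln(1-z)-\frac{z\ln z}{1-z}$. Writing $E=z/(1-z)$, equivalently $z=E/(E+1)$ and $1-z=1/(E+1)$, turns this expression into $(E+1)\ln(E+1)-E\ln E$, which is exactly $g(E)=g\!\left(z/(1-z)\right)$. This is the key recognition step: it is the statement that a geometric occupation law is the thermal state $\hat\omega_E$ of \eqref{epniomegaE} with mean energy $E=z/(1-z)$.

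Finally, the identity $S=g\!\left(z/(1-z)\right)$ inverts to $z/(1-z)=g^{-1}(S)$, since $g$ is strictly increasing with full range $[0,\infty)$; solving the linear relation $z=g^{-1}(S)(1-z)$ for $z$ gives $z=g^{-1}(S)\left/\left(g^{-1}(S)+1\right)\right.$, as claimed. I expect the genuine obstacle to be confined to the tail control in the second paragraph; once the entropy limit is secured, the computation of the geometric entropy and the inversion of $g$ are routine.
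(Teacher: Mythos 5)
Your proposal is correct and follows essentially the same route as the paper's own proof: domination of the entropy sums via the bound \eqref{epniboundp} together with monotonicity of $-x\ln x$ on $[0,1/e]$, passage to the limit by dominated convergence using Lemma \ref{epnilimp}, identification of the limiting geometric entropy with $g\left(z/(1-z)\right)$, and inversion of $g$. The only differences are cosmetic — you spell out the geometric series identities and the monotonicity of $g$ that the paper leaves implicit.
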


It is convenient to rewrite $F_{N_k}=F\left(p^{(N_k)}\right)$ as
\begin{equation}\label{epniFN}
F_{N_k}=-\sum_{n=0}^{N_k-1}\left(n+1\right)p^{(N_k)}_n\,z^{(N_k)}_n\ln z^{(N_k)}_n\;.
\end{equation}
Since $z^{(N_k)}_n\leq1$, each term of the sum is positive.
Since $-x\ln x\leq1/e$ for $0\leq x\leq1$, and recalling \eqref{epniboundp}, the sum is dominated by $\sum_{n=0}^\infty \frac{n+1}{e}\left(\frac{1+z}{2}\right)^n<\infty$.
We then have for the dominated convergence theorem, recalling Lemmata \ref{epnilimp} and \ref{epnizlim},
\begin{eqnarray}
\lim_{k\to\infty}F_{N_k} &=& -\sum_{n=0}^\infty\left(n+1\right)\lim_{k\to\infty}p^{(N_k)}_n\,z^{(N_k)}_n\ln z^{(N_k)}_n=-\sum_{n=0}^\infty\left(n+1\right)\left(1-z\right)z^{n+1}\ln z=\nonumber\\
&=&\frac{z\ln z}{z-1}=g^{-1}(S)\ln\left(1+\frac{1}{g^{-1}(S)}\right)=-f(S)\;,
\end{eqnarray}
where we have used Lemma \ref{epniz(S)} and the definitions of $f$ \eqref{epnideff} and $g$ \eqref{epnidefg}.

\section{The thinning}\label{epnisecthinning}
The thinning \cite{renyi1956characterization} is the map acting on classical probability distributions on the set of natural numbers that is the discrete analogue of the continuous rescaling operation on positive real numbers.
We have introduced it in Section \ref{secthinning}.
Thanks to Theorem \ref{thinatt}, our main results Theorems \ref{epnithmmain} and \ref{epnithmiso} apply also to the thinning:
\begin{thm}\label{epnithmthin}
For any probability distribution $p$ on $\mathbb{N}$ and any $0\leq\lambda\leq 1$ we have
\begin{equation}
H\left(T_\lambda(p)\right)\geq g\left(\lambda\;g^{-1}\left(H(p)\right)\right)\;,
\end{equation}
i.e. geometric input probability distributions minimize the output Shannon entropy of the thinning for fixed input entropy.
\end{thm}
\begin{thm}
For any probability distribution $p$ on $\mathbb{N}$
\begin{equation}
\left.\frac{d}{dt}H\left(T_{e^{-t}}(p)\right)\right|_{t=0}\geq f\left(H(p)\right)\;.
\end{equation}
\end{thm}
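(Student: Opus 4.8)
The plan is to transfer the isoperimetric inequality of Theorem \ref{epnithmiso} from the quantum-limited attenuator to the thinning by means of Theorem \ref{thinatt}. Given a probability distribution $p$ on $\mathbb{N}$, I would form the diagonal state $\hat{\rho}=\sum_{n=0}^\infty p_n\,|n\rangle\langle n|$, whose von Neumann entropy equals the Shannon entropy $H(p)$. Theorem \ref{thinatt} asserts that $\Phi_{e^{-t}}(\hat{\rho})$ is the diagonal state with entries $T_{e^{-t}}(p)$, so $S(\Phi_{e^{-t}}(\hat{\rho}))=H(T_{e^{-t}}(p))$ for every $t\geq0$. Differentiating this identity at $t=0$ and using the definition of $F$ in \eqref{epnilogs}, the left-hand side of the claim is exactly $-F(\hat{\rho})$, so the statement to prove reduces to $-F(\hat{\rho})\geq f(H(p))=f(S(\hat{\rho}))$.

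I would then apply Theorem \ref{epnithmiso} to $\hat{\rho}$. Because $\hat{\rho}^\downarrow$ has the same spectrum as $\hat{\rho}$, hence the same entropy, and because \eqref{epnipassiveF} gives $F(\hat{\rho})\leq F(\hat{\rho}^\downarrow)$, it is enough to control the passive (decreasing) case, which is exactly the regime handled in the proof of Theorem \ref{epnithmiso}. For a $p$ with finite support this yields $-F(\hat{\rho})\geq f(S(\hat{\rho}))$ immediately, and the theorem follows.

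The one genuine gap is that Theorem \ref{epnithmiso} is proved only for states of finite support, whereas the present statement allows $p$ supported on all of $\mathbb{N}$. The cleanest way around this, which I expect to be the decisive step, is to bypass $F$ altogether and argue from the already-general integrated inequality Theorem \ref{epnithmthin}. Writing $\phi(t)=H(T_{e^{-t}}(p))$ and $\phi_0(t)=g\bigl(e^{-t}\,g^{-1}(H(p))\bigr)$, Theorem \ref{epnithmthin} applied with $\lambda=e^{-t}$ gives $\phi(t)\geq\phi_0(t)$ for all $t\geq0$, while $\phi(0)=H(p)=\phi_0(0)$ since $T_1$ is the identity. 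Thus $t\mapsto\phi(t)-\phi_0(t)$ is nonnegative on $[0,\infty)$ and vanishes at the endpoint $t=0$, so its right derivative there is nonnegative, i.e. $\phi'(0)\geq\phi_0'(0)$. Finally $\phi_0'(0)=f(\phi_0(0))=f(H(p))$ by \eqref{epniphi0}, which is the desired bound. The only point to verify carefully is the differentiability of $\phi$ at $t=0$, which is implicit in the statement and can be read off from the explicit, smooth dependence of the transition probabilities \eqref{rnk} on $t$; alternatively, a truncation argument in the spirit of Lemma \ref{epnifinites} extends the finite-support form of Theorem \ref{epnithmiso} to the general case if one prefers to stay within the $F$-framework.
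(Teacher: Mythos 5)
Your proposal is correct, and its first half coincides with the paper's own proof, which is essentially a one-liner: by Theorem \ref{thinatt} the thinning is the restriction of the quantum-limited attenuator to Fock-diagonal states, so Theorem \ref{epnithmiso} transfers verbatim --- the left-hand side of the claim is $-F(\hat{\rho})$ for the diagonal state $\hat{\rho}$ built from $p$, and $S(\hat{\rho})=H(p)$. Where you genuinely depart from the paper is the second half. The paper states the thinning theorem for an arbitrary probability distribution on $\mathbb{N}$, yet Theorem \ref{epnithmiso} is proved only for finite support, and the paper silently glosses over this mismatch. Your alternative derivation closes it cleanly: Theorem \ref{epnithmthin} does hold for all $p$ (it rests on Theorem \ref{epnithmmain}, which Lemma \ref{epnifinites} extends beyond finite support), and since $\phi(t)=H\left(T_{e^{-t}}(p)\right)\geq\phi_0(t)=g\left(e^{-t}\,g^{-1}\left(H(p)\right)\right)$ for all $t\geq0$ with $\phi(0)=\phi_0(0)$, every difference quotient of $\phi-\phi_0$ at $t=0^+$ is nonnegative, whence $\phi'(0)\geq\phi_0'(0)=f\left(H(p)\right)$ by \eqref{epniphi0}; in fact this bounds even the lower Dini derivative, so you need not assume differentiability beyond what the statement itself presupposes. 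Both routes implicitly require $H(p)<\infty$, since otherwise $f\left(H(p)\right)$ is undefined. In short: the paper's argument is immediate given the quantum result but, as written, only covers finite support; yours is marginally longer and is the one that actually establishes the theorem at its stated level of generality.
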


\section{Proof of Lemma \ref{epnifinites}}\label{epnifinitesproof}
Let $\hat{\rho}$ be a passive state.
If $S\left(\Phi_\lambda\left(\hat{\rho}\right)\right)=\infty$, there is nothing to prove.
We can then suppose $S\left(\Phi_\lambda\left(\hat{\rho}\right)\right)<\infty$.

We can associate to $\hat{\rho}$ the probability distribution $p$ on $\mathbb{N}$ such that
\begin{equation}
\hat{\rho}=\sum_{n=0}^\infty p_n\;|n\rangle\langle n|\;,
\end{equation}
satisfying $-\sum_{n=0}^\infty p_n\ln p_n=S\left(\hat{\rho}\right)$.
Let us define for any $N\in\mathbb{N}$ the quantum state
\begin{equation}
\hat{\rho}_N=\sum_{n=0}^N \frac{p_n}{s_N}\;|n\rangle\langle n|\;,
\end{equation}
where $s_N=\sum_{n=0}^N p_n$.
We have
\begin{equation}
\left\|\hat{\rho}_N-\hat{\rho}\right\|_1 = \frac{1-s_N}{s_N}\sum_{n=0}^Np_n+\sum_{n=N+1}^\infty p_n\;,
\end{equation}
where $\left\|\cdot\right\|_1$ denotes the trace norm \cite{wilde2013quantum,holevo2013quantum}.
Since $\lim_{N\to\infty}s_N=1$ and $\sum_{n=0}^\infty p_n=1$, we have $\lim_{N\to\infty}\left\|\hat{\rho}_N-\hat{\rho}\right\|_1 =0$.
Since $\Phi_\lambda$ is continuous in the trace norm, we also have
\begin{equation}\label{epnilimPhiN}
\lim_{N\to\infty}\left\|\Phi_\lambda\left(\hat{\rho}_N\right)-\Phi_\lambda\left(\hat{\rho}\right)\right\|_1=0\;.
\end{equation}
Moreover,
\begin{equation}\label{epnilimSN}
\lim_{N\to\infty}S\left(\hat{\rho}_N\right)=\lim_{N\to\infty}\left(\ln s_N-\sum_{n=0}^N \frac{p_n}{s_N}\ln p_n\right)=S\left(\hat{\rho}\right)\;.
\end{equation}
Notice that \eqref{epnilimSN} holds also if $S\left(\hat{\rho}\right)=\infty$.

Let us now define the probability distribution $q$ on $\mathbb{N}$ as
\begin{equation}
\Phi_\lambda\left(\hat{\rho}\right)=\sum_{n=0}^\infty q_n\;|n\rangle\langle n|\;,
\end{equation}
satisfying
\begin{equation}\label{epniSq}
S\left(\Phi_\lambda\left(\hat{\rho}\right)\right)=-\sum_{n=0}^\infty q_n\ln q_n\;.
\end{equation}
From Equation \eqref{kraus}, the channel $\Phi_\lambda$ sends the set of states supported on the span of the first $N+1$ Fock states into itself.
Then, for any $N\in\mathbb{N}$ there is a probability distribution $q^{(N)}$ on $\left\{0,\ldots,\,N\right\}$ such that
\begin{equation}
\Phi_\lambda\left(\hat{\rho}_N\right)=\sum_{n=0}^N q_n^{(N)}\;|n\rangle\langle n|\;.
\end{equation}
From \eqref{epnilimPhiN} we get for any $n\in\mathbb{N}$
\begin{equation}\label{epnilimqN}
\lim_{N\to\infty}q^{(N)}_n=q_n\;.
\end{equation}
Since $\Phi_\lambda$ is trace preserving, we have $\sum_{n=0}^\infty q_n=1$, hence $\lim_{n\to\infty} q_n=0$.
Then, there is $n_0\in\mathbb{N}$ (that does not depend on $N$) such that for any $n\geq n_0$ we have $q_n\leq p_0/e$.
Since $s_N\;\hat{\rho}_N\leq\hat{\rho}$ and the channel $\Phi_\lambda$ is positive, we have $s_N\;\Phi_\lambda\left(\hat{\rho}_N\right)\leq\Phi_\lambda\left(\hat{\rho}\right)$.
Then, for any $n\geq n_0$
\begin{equation}
q_n^{(N)}\leq\frac{q_n}{s_N}\leq\frac{q_n}{p_0}\leq\frac{1}{e}\;,
\end{equation}
where we have used that $s_N\geq p_0>0$.
Since the function $-x\ln x$ is increasing for $0\leq x\leq1/e$, the sums $-\sum_{n=n_0}^N q^{(N)}_n\ln q^{(N)}_n$ are dominated by
\begin{equation}
\sum_{n=n_0}^\infty \frac{q_n\ln p_0-q_n\ln q_n}{p_0}\leq\frac{\ln p_0+S\left(\Phi_\lambda\left(\hat{\rho}\right)\right)}{p_0}<\infty\;,
\end{equation}
where we have used \eqref{epniSq}.
Then, for the dominated convergence theorem we have
\begin{equation}
\lim_{N\to\infty}S\left(\Phi_\lambda\left(\hat{\rho}_N\right)\right) = -\lim_{N\to\infty}\sum_{n=0}^N q^{(N)}_n\ln q^{(N)}_n= -\sum_{n=0}^\infty\lim_{N\to\infty}q^{(N)}_n\ln q^{(N)}_n= S\left(\Phi_\lambda\left(\hat{\rho}\right)\right)\;,
\end{equation}
where we have also used \eqref{epnilimqN}.

If Theorem \ref{epnithmmain} holds for passive states with finite support, for any $N$ in $\mathbb{N}$ we have
\begin{equation}
S\left(\Phi_\lambda\left(\hat{\rho}_N\right)\right)\geq g\left(\lambda\;g^{-1}\left(S\left(\hat{\rho}_N\right)\right)\right)\;.
\end{equation}
Then, the claim follows taking the limit $N\to\infty$.

\section{Auxiliary Lemmata}\label{epniauxlemmata}
\begin{lem}\label{epnilemtrunc}
Let us choose a probability distribution $p\in\mathcal{D}_N$, fix $0\leq N'\leq N$, and define the probability distribution $q\in\mathcal{D}_{N'}$ as
\begin{equation}
q_n=\frac{p_n}{\sum_{k=0}^{N'}p_k}\;,\qquad n=0,\ldots,\,N'\;.
\end{equation}
Then, $H(q)\leq H(p)$.
\begin{proof}
We have for any $n=0,\ldots,\,N'$
\begin{equation}
\sum_{k=0}^n q_k=\frac{\sum_{k=0}^n p_k}{\sum_{l=0}^{N'}p_l}\geq\sum_{k=0}^n p_k\;,
\end{equation}
Then, $q\succ p$ and the claim follows from Remark \ref{majS}.
\end{proof}
\end{lem}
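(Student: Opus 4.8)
The plan is to recognize that the claim is just an instance of the Schur-concavity of the Shannon entropy, so that it suffices to establish a majorization relation between $q$ and $p$ and then invoke Remark \ref{majS}. First I would regard $q$ as a probability distribution on the full index set $\left\{0,\ldots,N\right\}$ by padding it with zeros, i.e. setting $q_n=0$ for $N'<n\leq N$. Since $p$ is decreasing and $q_n=p_n/s$ with $s=\sum_{k=0}^{N'}p_k$ a fixed positive constant, the truncated-and-renormalized sequence $q$ is still decreasing on $\left\{0,\ldots,N'\right\}$, and the appended zeros preserve monotonicity; thus both $p$ and $q$ are decreasing probability distributions on $\left\{0,\ldots,N\right\}$ with total mass one, which is exactly the setting in which the sequence majorization defined in Section \ref{secmaj} applies.

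Next I would compare the partial sums. The key observation is that $s=\sum_{k=0}^{N'}p_k\leq\sum_{k=0}^{N}p_k=1$, so $s^{-1}\geq1$. For $n\leq N'$ this gives
\[
\sum_{k=0}^n q_k=\frac{1}{s}\sum_{k=0}^n p_k\geq\sum_{k=0}^n p_k\;,
\]
while for $N'\leq n\leq N$ the left-hand side is already equal to $1\geq\sum_{k=0}^n p_k$. Hence every partial sum of $q$ dominates the corresponding partial sum of $p$, and since the two sequences have the same total sum, $q\succ p$.

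Finally, Remark \ref{majS} applied to the concave function $g(x)=-x\ln x$ turns the majorization $q\succ p$ into the entropy inequality $H(q)\leq H(p)$, which is the claim. I do not expect any genuine obstacle here: the only points requiring care are the bookkeeping of embedding $q$ into the larger index set by zero-padding and the elementary estimate $s\leq1$, both of which are immediate. The substance of the argument is carried entirely by the monotonicity of $p$ (which guarantees that $q$ is decreasing, so that the partial-sum comparison is a bona fide majorization) together with the Schur-concavity of the entropy encoded in Remark \ref{majS}.
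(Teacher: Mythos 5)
Your proof is correct and follows essentially the same route as the paper's: compare partial sums using $\sum_{k=0}^{N'}p_k\leq1$ to get $q\succ p$, then invoke Remark \ref{majS}. The only difference is that you make explicit the zero-padding of $q$ and the check that $q$ remains decreasing, bookkeeping the paper leaves implicit.
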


\begin{lem}\label{epnilemnx}
Let $\left\{x_n\right\}_{n\in\mathbb{N}}$ be a positive sequence with finite sum.
Then
\begin{equation}
\liminf_{n\to\infty}n\,x_n=0\;.
\end{equation}
\begin{proof}
Let us suppose
\begin{equation}
\liminf_{n\to\infty}n\,x_n=c>0\;.
\end{equation}
Then, there exists $n_0\in\mathbb{N}$ such that $n\,x_n\geq c/2$ for any $n\geq n_0$.
Then,
\begin{equation}
\sum_{n=0}^\infty x_n\geq\sum_{n=n_0}^\infty\frac{c}{2n}=\infty\;,
\end{equation}
contradicting the hypothesis.
\end{proof}
\end{lem}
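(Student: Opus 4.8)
The plan is to argue by contradiction, turning a strictly positive value of $\liminf_{n\to\infty} n\,x_n$ into a lower bound on the tail of the series $\sum_n x_n$ that forces divergence. Since the sequence is positive and summable by hypothesis, exhibiting any divergent lower bound for the tail immediately yields the contradiction.

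First I would suppose the conclusion fails. Because $x_n\geq0$ for all $n$, the quantity $\liminf_{n\to\infty} n\,x_n$ is automatically nonnegative, so the only way the claim can be false is that $\liminf_{n\to\infty} n\,x_n = c$ for some $c>0$. I would then unwind the definition $\liminf_{n\to\infty} n\,x_n = \sup_{N}\inf_{n\geq N} n\,x_n = c$: applying this with the threshold $c/2<c$, since the supremum over $N$ equals $c$ there must exist an index $n_0$ with $\inf_{n\geq n_0} n\,x_n > c/2$. Consequently $n\,x_n > c/2$, equivalently $x_n > c/(2n)$, for every $n\geq n_0$.

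Summing this bound over the tail gives $\sum_{n\geq n_0} x_n \geq \frac{c}{2}\sum_{n\geq n_0}\frac{1}{n}$, and the harmonic series $\sum_n 1/n$ diverges, so the right-hand side is infinite. This contradicts the assumed finiteness of $\sum_n x_n$, which completes the proof.

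There is no substantial obstacle here; the only point that requires a little care is the correct reading of the $\liminf$, namely that a strictly positive value guarantees $n\,x_n$ stays above $c/2$ for \emph{all} sufficiently large $n$ rather than merely along a subsequence. Everything else reduces to the standard divergence of the harmonic series.
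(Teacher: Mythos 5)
Your proposal is correct and follows essentially the same argument as the paper: assume $\liminf_{n\to\infty} n\,x_n = c > 0$, deduce that $x_n \geq c/(2n)$ for all $n \geq n_0$, and contradict summability via the divergence of the harmonic series. The only difference is that you spell out the unwinding of the $\liminf$ definition, which the paper leaves implicit.
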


\section{Conclusion}\label{epnisecconcl}
We have proved that Gaussian thermal input states minimize the output von Neumann entropy of the Gaussian quantum-limited attenuator for fixed input entropy (Theorem \ref{epnithmmain}).
The proof is based on a new isoperimetric inequality (Theorem \ref{epnithmiso}).
Theorem \ref{epnithmmain} implies that geometric input probability distributions minimize the output Shannon entropy of the thinning for fixed input entropy (Theorem \ref{epnithmthin}).
The multimode generalization of the isoperimetric inequality \eqref{epnilogs} would prove Theorem \ref{epnithmmain} in the multimode scenario.
This multimode extension permits to determine both the triple trade-off region of the Gaussian quantum-limited attenuator \cite{wilde2012public,wilde2012information,wilde2012quantum} and the classical capacity region of the Gaussian quantum degraded broadcast channel \cite{guha2007classicalproc,guha2007classical}.

\chapter{Lossy channels}\label{chlossy}
In this Chapter we extend the majorization results of Chapter \ref{majorization} to a wide class of quantum lossy channels, emerging from a weak interaction of a small quantum system with a large bath in its ground state.

The Chapter is based on
\begin{enumerate}
\item[\cite{de2016passive}] G.~De~Palma, A.~Mari, S.~Lloyd, and V.~Giovannetti, ``Passive states as optimal
  inputs for single-jump lossy quantum channels,'' \emph{Physical Review A},
  vol.~93, no.~6, p. 062328, 2016.\\ {\small\url{http://journals.aps.org/pra/abstract/10.1103/PhysRevA.93.062328}}
\end{enumerate}
\section{Introduction}
The passive states \cite{pusz1978passive,lenard1978thermodynamical} of a quantum system are the states diagonal in the eigenbasis of the Hamiltonian, with eigenvalues decreasing as the energy increases.
They minimize the average energy among all the states with a given spectrum, and hence no work can be extracted from them on average with unitary operations \cite{janzing2006computational}. For this reason they play a key role in the recently emerging field of quantum thermodynamics (see \cite{vinjanampathy2015quantum,goold2015role} for a review).

Majorization (see \cite{marshall2010inequalities} and Section \ref{secmaj}) is the order relation between quantum states induced by random unitary operations, i.e. a state $\hat{\sigma}$ is majorized by a state $\hat{\rho}$ iff $\hat{\sigma}$ can be obtained applying random unitaries to $\hat{\rho}$.
Majorization theory is ubiquitous in quantum information.
Its very definition suggests applications in quantum thermodynamics \cite{goold2015role,gour2015resource,horodecki2013fundamental}, where the goal is determining the set of final states that can be obtained from a given initial state with a given set of operations.
In the context of quantum entanglement, it also determines whether it is possible to convert a given bipartite pure state into another given pure state by means of local operations and classical communication \cite{nielsen1999conditions,nielsen2001majorization}.
Majorization has proven to be crucial in the longstanding problem of the determination of the classical communication capacity of quantum gauge-covariant bosonic Gaussian channels \cite{giovannetti2014ultimate}, and the consequent proof of the optimality of Gaussian states for the information encoding.
Indeed, a turning point has been the proof of a majorization property: the output of any of these channels generated by any input state is majorized by the output generated by the vacuum \cite{mari2014quantum,giovannetti2015majorization} (see also \cite{holevo2015gaussian} for a review).
In Chapter \ref{majorization} this fundamental result has been extended and linked to the notion of passive states (see also \cite{de2015passive}). We proved that these states optimize the output of any one-mode quantum Gaussian channel, in the sense that the output generated by a passive state majorizes the output generated by any other state with the same spectrum. Moreover, the same channels preserve the majorization relation when applied to passive states \cite{jabbour2015majorization}.

Here we extend the result of Chapter \ref{majorization} to a large class of lossy quantum channels.
Lossy quantum channels arise from a weak interaction of the quantum system of interest with a large Markovian bath in its zero-temperature (i.e. ground) state.
We prove that passive states are the optimal inputs of these channels.
Indeed, we prove that the output $\Phi\left(\hat{\rho}\right)$ generated by any input state $\hat{\rho}$ majorizes the output $\Phi\left(\hat{\rho}^\downarrow\right)$ generated by the passive input state $\hat{\rho}^\downarrow$ with the same spectrum of $\hat{\rho}$.
Then, $\Phi\left(\hat{\rho}\right)$ can be obtained applying a random unitary operation to $\Phi\left(\hat{\rho}^\downarrow\right)$, and it is more noisy than $\Phi\left(\hat{\rho}^\downarrow\right)$.
Moreover, $\Phi\left(\hat{\rho}^\downarrow\right)$ is still passive, i.e. the channel maps passive states into passive states.

In the context of quantum thermodynamics, this result puts strong constraints on the possible spectrum of the output of lossy channels.
It can then be useful to determine which output states can be obtained from an input state with a given spectrum in a resource theory with the lossy channel among the allowed operations.
The Gaussian analogue of this result has been crucial for proving that Gaussian input states minimize the output entropy of the one-mode Gaussian quantum attenuator for fixed input entropy (see Chapter \ref{chepni} and \cite{de2016gaussian}).
The result of this Chapter can find applications in the proof of similar entropic inequalities on the output states of lossy channels in the same spirit of the quantum Entropy Power Inequalities of \cite{konig2014entropy,de2014generalization,de2015multimode,audenaert2015entropy}, and then determine their classical capacity.

Our result applies to all the interactions of a quantum system with a heat bath such that the reduced system dynamics can be modeled by a master equation \cite{schaller2014open,breuer2007theory} and the following hypotheses are satisfied:
\begin{enumerate}
\item The Hamiltonian of the system is nondegenerate.
\item The system-bath interaction Hamiltonian couples only consecutive eigenstates of the Hamiltonian of the system alone.
\item If the system starts in its maximally mixed state, its reduced state remains passive.
\item The bath starts in its ground (i.e. zero temperature) state.
\end{enumerate}
The first assumption is satisfied by a large class of quantum systems, and it is usually taken for granted in both quantum thermodynamics and quantum statistical mechanics \cite{gogolin2015equilibration}.
The second assumption is also satisfied by a large class of quantum systems.
The third assumption means that the interaction cannot generate population inversion if the system is initialized in the infinite-temperature state, as it is for most physical systems.
The fourth assumption is for example satisfied by the interaction of a quantum system with an optical bath at room temperature.
Indeed, $\hbar\omega\gg k_BT$ for $\omega$ in the optical range and $T\approx300^\circ K$, hence the state of the bath at room temperature is indistinguishable from the vacuum.

These assumptions turn out to be necessary.
Indeed, dropping any of them it is possible to find explicit  counterexamples for which passive inputs are not optimal choices for output majorization.

The Chapter is organized as follows.
The main result is presented in Sec.~\ref{Lsecopt} where we first define in a rigorous way the class of lossy maps we are interested in and then proceed with a formal proof the optimality for passive states.
Section~\ref{Lseccount} is instead devoted to counterexamples.
In particular in Sec.~\ref{Lseccatt} we show that for the two-mode bosonic Gaussian quantum-limited attenuator, whose associated Hamiltonian is degenerate, no majorization relations can be ascribed to the passive states.
In Sec.\ref{Lsec2step} instead a counterexample is provided for a two-qubit lossy map with two different choices of the Hamiltonian.
In the first case the Hamiltonian is nondegenerate, but the process involves quantum jumps of more than one energy step.
In the second case only quantum jumps of one energy step are allowed, but the Hamiltonian becomes degenerate.
In Sec.~\ref{Lsecfin} we analyze the case of a map where the bath temperature is not zero.
We show that the optimal input states are a pure coherent superposition of the Hamiltonian eigenstates, hence non passive. Conclusions and comments are presented in Sec.~\ref{Lseccon} while technical derivations are presented in the appendices.

\section{Passive states}\label{Lrearrangement}
We consider a $d$-dimensional quantum system with nondegenerate Hamiltonian
\begin{equation}\label{LHpass}
\hat{H}=\sum_{i=1}^{d} E_i\;|i\rangle\langle i|\;,\qquad \langle i|j\rangle=\delta_{ij}\;,\qquad E_1<\ldots<E_{d}\;.
\end{equation}
A self-adjoint operator is \emph{passive} \cite{pusz1978passive,lenard1978thermodynamical} if it is diagonal in the eigenbasis of the Hamiltonian and its eigenvalues decrease as the energy increases.
\begin{defn}[Passive rearrangement]
Let $\hat{X}$ be a self-adjoint operator with eigenvalues $x_1\geq\ldots\geq x_d$.
As we did with Definition \ref{defrearr} for quantum Gaussian systems, we define its passive rearrangement as
\begin{equation}
\hat{X}^\downarrow:=\sum_{i=1}^d x_i\;|i\rangle\langle i|\;,
\end{equation}
where $\left\{|i\rangle\right\}_{i=1,\ldots,n}$ is the eigenbasis of the Hamiltonian \eqref{LHpass}.
Of course, $\hat{X}=\hat{X}^\downarrow$ for any passive operator.
\end{defn}
\begin{rem}
The passive rearrangement of any rank-$n$ projector $\hat{\Pi}_n$ is the projector onto the first $n$ energy eigenstates:
\begin{equation}\label{LPin*}
\hat{\Pi}_n^\downarrow=\sum_{i=1}^n|i\rangle\langle i|\;.
\end{equation}
\end{rem}
\begin{rem}
It is easy to show that passive quantum states minimize the average energy among all the states with a given spectrum, i.e.
\begin{equation}
\mathrm{Tr}\left[\hat{H}\;\hat{U}\;\hat{\rho}\;\hat{U}^\dag\right]\geq\mathrm{Tr}\left[\hat{H}\;\hat{\rho}^\downarrow\right]\qquad\forall\;\hat{U}\;\text{unitary}\;.
\end{equation}
\end{rem}

\section{Optimality of passive states for lossy channels} \label{Lsecopt}
The most general master equation that induces a completely positive Markovian dynamics is \cite{breuer2007theory,schaller2014open}
\begin{equation}
\frac{d}{dt}\hat{\rho}(t)=\mathcal{L}\left(\hat{\rho}(t)\right)\;,
\end{equation}
where the generator $\mathcal{L}$ has the Lindblad form
\begin{equation}\label{LLdef}
\mathcal{L}\left(\hat{\rho}\right)=-i\left[\hat{H}_{LS},\;\hat{\rho}\right]+\sum_{\alpha=1}^{\alpha_0}\left(\hat{L}_\alpha\;\hat{\rho}\;\hat{L}_\alpha^\dag-\frac{1}{2}\left\{\hat{L}_\alpha^\dag\hat{L}_\alpha,\;\hat{\rho}\right\}\right)\;,
\end{equation}
where $\alpha_0\in\mathbb{N}$.
This dynamics arises from a weak interaction with a large Markovian bath in the rotating-wave approximation \cite{breuer2007theory,schaller2014open}.
In this case, $\hat{H}_{LS}$ commutes with the Hamiltonian $\hat{H}$, i.e. $\hat{H}_{LS}$ only shifts the energies of $\hat{H}$:
\begin{equation}\label{LHLS}
\hat{H}_{LS}=\sum_{i=1}^d \delta E_i\;|i\rangle\langle i|\;.
\end{equation}
As anticipated in the introduction, we suppose that the bath starts in its ground state and that the interaction Hamiltonian $\hat{V}_{SB}$ couples only neighbouring energy levels of the system:
\begin{equation}
\hat{V}_{SB}=\sum_{i=1}^d|i\rangle_S\langle i|\otimes \hat{V}^B_i+\sum_{i=1}^{d-1}\left(|i\rangle_S\langle i+1|\otimes \hat{W}^B_i+\text{h.c.}\right)\;.
\end{equation}
Here the $\hat{V}_i^B$ are generic self-adjoint operators, while the $\hat{W}_i^B$ are completely generic operators.
In the rotating-wave approximation only the transitions that conserve the energy associated to the noninteracting Hamiltonian are allowed.
If the bath is in its ground state, it cannot transfer energy to the system, and only the transitions that decrease its energy are possible.
Then, each Lindblad operator $\hat{L}_\alpha$ can induce either dephasing in the energy eigenbasis:
\begin{equation}\label{Ldephase}
\hat{L}_\alpha=\sum_{i=1}^d a_i^\alpha\;|i\rangle\langle i|\;,\qquad a_i^\alpha\in\mathbb{C}\;,\qquad\alpha=1,\ldots,\,\alpha_0\;,
\end{equation}
or decay toward the ground state with quantum jumps of one energy level:
\begin{equation}\label{Ljump}
\hat{L}_\alpha=\sum_{i=1}^{d-1} b_i^\alpha\;|i\rangle\langle i+1|\;,\qquad b_i^\alpha\in\mathbb{C}\;,\qquad\alpha=1,\ldots,\,\alpha_0\;.
\end{equation}

It is easy to show that, if $\hat{\rho}$ is diagonal in the energy eigenbasis, also $\mathcal{L}\left(\hat{\rho}\right)$ is diagonal in the same basis, hence $e^{t\mathcal{L}}\left(\hat{\rho}\right)$ remains diagonal for any $t$.

As anticipated in the introduction, we also suppose that the quantum channel $e^{t\mathcal{L}}\left(\hat{\rho}\right)$ sends the maximally mixed state into a passive state.
As a consequence, the generator $\mathcal{L}$ maps the identity into a passive operator (see Section \ref{Lpasst}).

To see explicitly how this last condition translates on the coefficients $b_i^\alpha$, we compute
\begin{equation}
\mathcal{L}\left(\hat{\mathbb{I}}\right)=\sum_{i=1}^d\left(\sum_\alpha\left(\left|b_i^\alpha\right|^2-\left|b_{i-1}^\alpha\right|^2\right)\right)|i\rangle\langle i|\;,
\end{equation}
where for simplicity we have set $b_0^\alpha=b_d^\alpha=0$, and the operator is passive iff the function
\begin{equation}\label{Lr_i}
r_i:=\sum_\alpha\left|b_i^\alpha\right|^2\;,\qquad i=0,\,\ldots,\,d
\end{equation}
is concave in $i$.

The main result of this Chapter is that passive states optimize the output of the quantum channel generated by any dissipator of the form \eqref{LLdef} satisfying \eqref{LHLS} and with Lindblad operators of the form \eqref{Ldephase} or \eqref{Ljump} such that the function \eqref{Lr_i} is concave.
We will prove that the output $e^{t\mathcal{L}}\left(\hat{\rho}\right)$ generated by any input state $\hat{\rho}$ majorizes the output $e^{t\mathcal{L}}\left(\hat{\rho}^\downarrow\right)$ generated by the passive state $\hat{\rho}^\downarrow$ with the same spectrum of $\hat{\rho}$, i.e. for any $t\geq0$
\begin{equation}\label{Loptimaldef}
e^{t\mathcal{L}}\left(\hat{\rho}\right)\prec e^{t\mathcal{L}}\left(\hat{\rho}^\downarrow\right)\;.
\end{equation}
Moreover, for any $t\geq0$ the state $e^{t\mathcal{L}}\left(\hat{\rho}^\downarrow\right)$ is still passive, i.e. the quantum channel $e^{t\mathcal{L}}$ preserves the set of passive states.
The proof closely follows \cite{de2015passive} and Chapter \ref{majorization}, and it is contained in the next section.

\subsection{Proof of the main result}\label{Lmainproof}
Let us define
\begin{equation}
\hat{\rho}(t)=e^{t\mathcal{L}}\left(\hat{\rho}\right)\;.
\end{equation}
The quantum states with nondegenerate spectrum are dense in the set of all quantum states.
Besides, the spectrum is a continuous function of the operator, and any linear map is continuous.
Then, without loss of generality we can suppose that $\hat{\rho}$ has nondegenerate spectrum.
Let $p_1(t)\geq\ldots\geq p_{d}(t)$ be the eigenvalues of $\hat{\rho}(t)$, and let
\begin{equation}
s_n(t)=\sum_{i=1}^n p_i(t)\;,\qquad n=1,\ldots,\,d\;.
\end{equation}
Let instead
\begin{equation}
p_i^\downarrow(t)=\langle i|e^{t\mathcal{L}}\left(\hat{\rho}^\downarrow\right)|i\rangle\;,\qquad i=1,\,\ldots,\,d
\end{equation}
be the eigenvalues of $e^{t\mathcal{L}}\left(\hat{\rho}^\downarrow\right)$, and
\begin{equation}
s_n^\downarrow(t)=\sum_{i=1}^n p_i^\downarrow(t)\;,\qquad n=1,\,\ldots,\,d\;.
\end{equation}
We notice that $p(0)=p^\downarrow(0)$ and then $s(0)=s^\downarrow(0)$, where
\begin{equation}
p(t)=\left(p_1(t),\ldots,p_d(t)\right)\;,
\end{equation}
and similarly for $s(t)$.
The proof comes from:
\begin{lem}\label{Ldeg}
The spectrum of $\hat{\rho}(t)$ can be degenerate at most in isolated points.
\begin{proof}
See Section \ref{Lproofdeg}.
\end{proof}
\end{lem}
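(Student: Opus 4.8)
Lemma \ref{Ldeg} asserts that along the dissipative evolution $\hat\rho(t)=e^{t\mathcal L}(\hat\rho)$, the spectrum can be degenerate only at isolated values of $t$. The plan is to mirror exactly the argument already used in the proof of Lemma \ref{deg} for the quantum-limited attenuator, since the structural ingredients are identical: a Lindbladian generator producing an analytic one-parameter family of operators on a finite-dimensional space, starting from a state with nondegenerate spectrum.

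First I would observe that the matrix elements $\langle i|\hat\rho(t)|j\rangle$ of $e^{t\mathcal L}(\hat\rho)$ are analytic functions of $t$. This follows because $\mathcal L$ is a fixed linear operator on the finite-dimensional space of $d\times d$ matrices (recall the system is $d$-dimensional by \eqref{LHpass}), so $e^{t\mathcal L}$ is an entire matrix exponential and each entry of $\hat\rho(t)$ is an entire function of $t$. Next I would encode degeneracy of the spectrum through the discriminant-type function
\begin{equation}
\phi(t)=\prod_{i\neq j}\left(p_i(t)-p_j(t)\right)\;,
\end{equation}
which vanishes precisely when two eigenvalues of $\hat\rho(t)$ coincide. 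The key step is to show $\phi(t)$ is itself analytic in $t$, even though the individual eigenvalues $p_i(t)$ need not be globally analytic (they can have branch points at crossings). Here I would invoke the Fundamental Theorem of Symmetric Polynomials: $\phi(t)$ is a symmetric polynomial in the eigenvalues, hence expressible as a polynomial in the elementary symmetric polynomials, which are (up to sign) the coefficients of the characteristic polynomial of $\hat\rho(t)$. Those coefficients are in turn polynomials in the matrix elements of $\hat\rho(t)$, each of which is analytic in $t$. Therefore $\phi(t)$ is a polynomial in analytic functions, hence analytic.

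Finally, since by the reduction at the start of Section \ref{Lmainproof} we may assume $\hat\rho=\hat\rho(0)$ has nondegenerate spectrum, we have $\phi(0)\neq0$, so the analytic function $\phi$ is not identically zero. An analytic function on $\mathbb{R}$ (or on an interval) that is not identically zero has only isolated zeros, so $\phi$ vanishes only at isolated points of $t$, which is exactly the claim. I do not expect any genuine obstacle here: the entire argument is a verbatim transcription of the proof of Lemma \ref{deg}, with the attenuator Lindbladian replaced by the general generator \eqref{LLdef}, and the only thing to check is that analyticity of the matrix elements survives — which it does trivially because $\mathcal L$ is a bounded (finite-dimensional) operator. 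The reference to \cite{cox2015ideals} for the symmetric-polynomial fact can be carried over unchanged.
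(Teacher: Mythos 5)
Your proposal is correct and follows essentially the same route as the paper's own proof: analyticity of the matrix elements of $e^{t\mathcal{L}}(\hat{\rho})$, the discriminant-type function $\phi(t)=\prod_{i\neq j}(p_i(t)-p_j(t))$, the Fundamental Theorem of Symmetric Polynomials to establish analyticity of $\phi$, and the isolated-zeros property of a nonzero analytic function given the nondegeneracy of $\hat{\rho}(0)$. Your added justification that $e^{t\mathcal{L}}$ is an entire matrix exponential on a finite-dimensional space is a harmless elaboration of the analyticity claim the paper simply asserts.
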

\begin{lem}\label{Llemma1}
$s(t)$ is continuous in $t$, and for any $t\geq0$ such that $\hat{\rho}(t)$ has nondegenerate spectrum it satisfies
\begin{equation}\label{Lsdot}
\frac{d}{dt}s_n(t)\leq\lambda_n(s_{n+1}(t)-s_n(t))\;,\qquad n=1,\,\ldots,\,d\;,
\end{equation}
where
\begin{equation}
\lambda_n=\mathrm{Tr}\left[\hat{\Pi}_n^\downarrow\;\mathcal{L}\left(\hat{\mathbb{I}}\right)\right]\geq0\;.
\end{equation}
\begin{proof}
See Section \ref{Lprooflemma1}.
\end{proof}
\end{lem}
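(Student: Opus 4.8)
The plan is to mirror exactly the structure of the proof of Lemma \ref{lemma1} from Chapter \ref{majorization}, adapting the computation to the more general Lindbladian \eqref{LLdef}. First I would invoke Weyl's Perturbation Theorem: since the matrix elements of $e^{t\mathcal{L}}\left(\hat{\rho}\right)$ are analytic (hence continuous and differentiable) functions of $t$, the eigenvalues $p_n(t)$ in decreasing order are continuous, and therefore the partial sums $s_n(t)$ are continuous in $t$. This handles the first assertion of the Lemma.

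For the differential inequality \eqref{Lsdot}, I would work in a neighbourhood of a point $t_0$ where $\hat{\rho}(t_0)$ has nondegenerate spectrum; by Lemma \ref{Ldeg} such points fill $[0,\infty)$ up to an isolated set. There I can smoothly diagonalize $\hat{\rho}(t)=\sum_{n=1}^d p_n(t)\,|\psi_n(t)\rangle\langle\psi_n(t)|$ with differentiable eigenvalues, so that
\begin{equation}
\frac{d}{dt}s_n(t)=\mathrm{Tr}\left[\hat{\Pi}_n(t)\;\mathcal{L}\left(\hat{\rho}(t)\right)\right]\;,\qquad \hat{\Pi}_n(t)=\sum_{i=1}^n|\psi_i(t)\rangle\langle\psi_i(t)|\;.
\end{equation}
Writing $\hat{\rho}(t)=\sum_k d_k(t)\,\hat{\Pi}_k(t)$ with $d_k(t)=p_k(t)-p_{k+1}(t)\geq0$, the time derivative becomes $\sum_k d_k(t)\,\mathrm{Tr}[\hat{\Pi}_n(t)\,\mathcal{L}(\hat{\Pi}_k(t))]$. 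Since $\sum_k d_k(t)=p_1(t)\leq 1$ and the $d_k$ are nonnegative, the crux is an operatorwise projector bound analogous to \eqref{PL}: I would show
\begin{equation}
\mathrm{Tr}\left[\hat{\Pi}_n(t)\;\mathcal{L}\left(\hat{\Pi}_k(t)\right)\right]\leq \mathrm{Tr}\left[\hat{\Pi}_n^\downarrow\;\mathcal{L}\left(\hat{\Pi}_k^\downarrow\right)\right]
\end{equation}
for arbitrary rank-$n$ and rank-$k$ projectors, where $\hat{\Pi}^\downarrow$ denotes the passive rearrangement onto the lowest energy levels.

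The main obstacle is precisely this projector inequality, and it is where the three structural hypotheses enter. I would split $\mathcal{L}$ into its three pieces. The Hamiltonian commutator $-i[\hat{H}_{LS},\cdot]$ contributes nothing under the trace with a projector that commutes with it, and the dephasing terms \eqref{Ldephase}, being diagonal in the energy basis, are handled by a direct estimate. The jump terms \eqref{Ljump} are the heart of the matter: using $\hat{a}$-like lowering structure (each $|i\rangle\langle i+1|$ lowers the energy by one step, exactly as in \eqref{lindblad}), I would bound the first term by replacing $\hat{\Pi}_k(t)\leq\hat{\mathbb{I}}$ or $\hat{\Pi}_n(t)\leq\hat{\mathbb{I}}$ according to whether $n\geq k$ or $k\geq n+1$, while the rearranged projectors saturate the bound because the support of the lowered (resp. raised) projector is nested inside $\hat{\Pi}_n^\downarrow$ (resp. $\hat{\Pi}_k^\downarrow$). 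Summing against the $d_k$ and computing $\mathrm{Tr}[\hat{\Pi}_n^\downarrow\,\mathcal{L}(\hat{\rho}^\downarrow)]$ explicitly will produce the coefficient $\lambda_n=\mathrm{Tr}[\hat{\Pi}_n^\downarrow\,\mathcal{L}(\hat{\mathbb{I}})]$; here the concavity of $r_i$ in \eqref{Lr_i}, equivalently the passivity of $\mathcal{L}(\hat{\mathbb{I}})$, is exactly what guarantees $\lambda_n\geq0$ and that the telescoping of the jump contributions yields the clean form $\lambda_n(s_{n+1}(t)-s_n(t))$. I expect the bookkeeping of the single-step jumps — tracking which matrix elements survive the projector multiplications — to be the only delicate calculation, but it is fully parallel to the two-case analysis ($n\geq k$ versus $k\geq n+1$) already carried out for the attenuator in Section \ref{mainproof}.
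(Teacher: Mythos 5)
Your proposal is correct and follows essentially the same route as the paper's proof in Section \ref{Lprooflemma1}: continuity via analyticity and Weyl's perturbation theorem, the derivative formula $\frac{d}{dt}s_n(t)=\sum_k d_k(t)\,\mathrm{Tr}\left[\hat{\Pi}_n(t)\,\mathcal{L}\left(\hat{\Pi}_k(t)\right)\right]$ with $d_k(t)=p_k(t)-p_{k+1}(t)\geq0$, the two-case bound ($n\geq k$ giving $\leq0$, $k>n$ giving $\leq\lambda_n$ via Ky Fan's maximum principle and the passivity of $\mathcal{L}\left(\hat{\mathbb{I}}\right)$), the support-nesting argument for the rearranged projectors, and the telescoping $\sum_{k>n}d_k(t)=p_{n+1}(t)=s_{n+1}(t)-s_n(t)$. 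One small imprecision: the $\hat{H}_{LS}$ commutator term drops not because $\hat{\Pi}_n(t)$ commutes with $\hat{H}_{LS}$ (it does not, in general, since the eigenvectors of $\hat{\rho}(t)$ need not be energy eigenvectors), but because $\hat{\Pi}_n(t)$ and $\hat{\Pi}_k(t)$ commute with each other, so $\mathrm{Tr}\left[\hat{\Pi}_n(t)\left[\hat{H}_{LS},\hat{\Pi}_k(t)\right]\right]=\mathrm{Tr}\left[\hat{H}_{LS}\left[\hat{\Pi}_k(t),\hat{\Pi}_n(t)\right]\right]=0$ by cyclicity of the trace.
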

\begin{lem}\label{Llemma2}
If $s(t)$ is continuous in $t$ and satisfies \eqref{Lsdot}, then $s_n(t)\leq s_n^\downarrow(t)$ for any $t\geq0$ and $n=1,\,\ldots,\,d$.
\begin{proof}
See Section \ref{Lprooflemma2}.
\end{proof}
\end{lem}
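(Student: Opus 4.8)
The plan is to follow the same comparison-principle strategy used for the quantum-limited attenuator in the proof of Lemma~\ref{lemma2}, now with the rates $\lambda_n$ in place of the factors $(n+1)$. The first step is to pin down the \emph{exact} dynamics of the reference quantities $s_n^\downarrow(t)$. Since the dephasing Lindblad operators annihilate diagonal states and each jump operator \eqref{Ljump} sends $|i+1\rangle\langle i+1|$ to $|i\rangle\langle i|$ with weight $r_i=\sum_\alpha|b_i^\alpha|^2$ (see \eqref{Lr_i}), a direct computation on the diagonal state $e^{t\mathcal{L}}(\hat{\rho}^\downarrow)=\sum_i p_i^\downarrow(t)\,|i\rangle\langle i|$ gives $\frac{d}{dt}p_i^\downarrow(t)=r_i\,p_{i+1}^\downarrow(t)-r_{i-1}\,p_i^\downarrow(t)$, with the convention $r_0=r_d=0$. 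Summing over $i=1,\ldots,n$ telescopes to $\frac{d}{dt}s_n^\downarrow(t)=r_n\,p_{n+1}^\downarrow(t)=\lambda_n\bigl(s_{n+1}^\downarrow(t)-s_n^\downarrow(t)\bigr)$, where I use that $\lambda_n=\mathrm{Tr}[\hat{\Pi}_n^\downarrow\,\mathcal{L}(\hat{\mathbb{I}})]=r_n$ via \eqref{LPin*}. Thus $s_n^\downarrow$ solves with equality the very relation that $s_n$ satisfies only as the inequality \eqref{Lsdot}, and being the solution of a linear system it is smooth in $t$.

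Second, I would run a reverse induction on $n$, from $n=d$ down to $n=1$. The base case is trace preservation: $s_d(t)=\mathrm{Tr}[\hat{\rho}(t)]=1=s_d^\downarrow(t)$, so $s_d=s_d^\downarrow$. For the inductive step assume $s_{n+1}(t)\le s_{n+1}^\downarrow(t)$ for all $t\ge0$ and set $f_n(t):=s_n^\downarrow(t)-s_n(t)$, so that $f_n(0)=0$ because $p(0)=p^\downarrow(0)$. Using $\lambda_n\ge0$ together with the inductive hypothesis in \eqref{Lsdot} gives $\frac{d}{dt}s_n(t)\le\lambda_n\bigl(s_{n+1}^\downarrow(t)-s_n(t)\bigr)$, and subtracting this from the equality for $s_n^\downarrow$ yields the differential inequality $\frac{d}{dt}f_n(t)\ge-\lambda_n f_n(t)$. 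Rewriting it as $\frac{d}{dt}\bigl(e^{\lambda_n t}f_n(t)\bigr)\ge0$ and using $f_n(0)=0$ forces $f_n(t)\ge0$, i.e.\ $s_n(t)\le s_n^\downarrow(t)$, closing the induction.

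The one point that needs genuine care — and which I expect to be the main obstacle — is that, by Lemma~\ref{Llemma1}, the inequality \eqref{Lsdot} (and hence the differentiability of $s_n$) is only guaranteed at times $t$ where $\hat{\rho}(t)$ has nondegenerate spectrum. So the differential inequality $\frac{d}{dt}\bigl(e^{\lambda_n t}f_n(t)\bigr)\ge0$ holds only off the degenerate set, which by Lemma~\ref{Ldeg} consists of isolated points. To conclude monotonicity I would invoke that $e^{\lambda_n t}f_n(t)$ is continuous on $[0,\infty)$ (since $s_n$ is continuous by Lemma~\ref{Llemma1} and $s_n^\downarrow$ is smooth) and continuously differentiable with nonnegative derivative on each open interval between consecutive degenerate points; a continuous function that is nondecreasing on each such interval is nondecreasing on the whole half-line, so $e^{\lambda_n t}f_n(t)\ge f_n(0)=0$ and therefore $f_n(t)\ge0$. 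Everything else is routine and parallels \cite{de2015passive} and Chapter~\ref{majorization}; the only structural novelty over the attenuator case is the replacement of the explicit coefficients $(n+1)$ by the rates $\lambda_n=r_n\ge0$, whose nonnegativity — guaranteed by the concavity of \eqref{Lr_i} imposed on the generator \eqref{LLdef} — is exactly what makes the Gr\"onwall step go through.
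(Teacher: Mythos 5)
Your proof is correct and follows essentially the same route as the paper's: reverse induction from $n=d$ via trace preservation, the exact evolution equation $\frac{d}{dt}s_n^\downarrow(t)=\lambda_n\left(s_{n+1}^\downarrow(t)-s_n^\downarrow(t)\right)$ (which the paper obtains inside the proof of Lemma \ref{Llemma1}, Eqs. \eqref{Ln>k*} and \eqref{Ln<k*}, rather than by your direct telescoping computation on diagonal states), and the Gr\"onwall step on $f_n(t)=s_n^\downarrow(t)-s_n(t)$ with $e^{\lambda_n t}f_n(t)$ nondecreasing. Your explicit handling of the isolated degeneracy points from Lemma \ref{Ldeg} is a welcome extra bit of rigor that the paper leaves implicit; the only inaccuracy is attributing $\lambda_n\geq0$ to the concavity of \eqref{Lr_i} --- with your identification $\lambda_n=r_n=\sum_\alpha\left|b_n^\alpha\right|^2$ the nonnegativity is immediate as a sum of squared moduli (the concavity is instead what makes $\mathcal{L}\left(\hat{\mathbb{I}}\right)$ passive, which is needed in the proof of Lemma \ref{Llemma1}, not here).
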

Lemma \ref{Llemma2} implies that for any $t\geq0$ the quantum channel $e^{t\mathcal{L}}$ preserves the set of passive states.
Indeed, let us choose the initial state $\hat{\rho}$ already passive.
Then, $s_n(t)$ is the sum of the $n$ largest eigenvalues of $e^{t\mathcal{L}}\left(\hat{\rho}\right)$.
Recalling that $e^{t\mathcal{L}}\left(\hat{\rho}\right)$ is diagonal in the Hamiltonian eigenbasis, $s_n^\downarrow(t)$ is the sum of the eigenvalues corresponding to the first $n$ eigenstates of the Hamiltonian $|1\rangle,\;\ldots,\;|n\rangle$, so that $s_n^\downarrow(t)\leq s_n(t)$.
However, Lemma \ref{Llemma2} implies $s_n(t)=s_n^\downarrow(t)$ for $n=1,\,\ldots,\,d$, then $p_n(t)=p_n^\downarrow(t)$ and $e^{t\mathcal{L}}\left(\hat{\rho}\right)$ preserves the set of passive states for any $t$.

Then, for the definition of majorization and Lemma \ref{Llemma2} again,
\begin{equation}
e^{t\mathcal{L}}\left(\hat{\rho}\right)\prec e^{t\mathcal{L}}\left(\hat{\rho}^\downarrow\right)
\end{equation}
for any $\hat{\rho}$, and the passive states are the optimal inputs for the channel.

\section{Counterexamples} \label{Lseccount}
In this Section we show that by dropping the hypothesis introduced at the beginning of Section \ref{Lsecopt} one can find counterexamples of maps for which Eq. \eqref{Loptimaldef} does not hold.

\subsection{Gaussian attenuator with degenerate Hamiltonian}
\label{Lseccatt}
The hypothesis of nondegenerate Hamiltonian is necessary for the optimality of passive states.
Indeed, in this Section we provide an explicit counterexample with degenerate Hamiltonian: the two-mode bosonic Gaussian quantum-limited attenuator (see Section \ref{secattampl}).

Let us consider the Hamiltonian of an harmonic oscillator
\begin{equation}
\hat{H}=\sum_{i=1}^\infty i\;|i\rangle\langle i|\;,\qquad\langle i|j\rangle=\delta_{ij}\;,
\end{equation}
and the Lindbladian
\begin{equation}\label{LLatt}
\mathcal{L}\left(\hat{\rho}\right)=\hat{a}\;\hat{\rho}\;\hat{a}^\dag-\frac{1}{2}\left\{\hat{a}^\dag\hat{a},\;\hat{\rho}\right\}\;,
\end{equation}
where $\hat{a}$ is the ladder operator
\begin{equation}
\hat{a}=\sum_{i=1}^\infty\sqrt{i}\;|i-1\rangle\langle i|\;.
\end{equation}
The quantum-limited attenuator is the channel $e^{t\mathcal{L}}$ generated by the Lindbladian \eqref{LLatt}.
We have proved in Chapter \ref{majorization} that this quantum channel preserves the set of passive states, and they are its optimal inputs in the sense of Eq. \eqref{Loptimaldef}.
Here we will show that this last property does no more hold for the two-mode attenuator
\begin{equation}
\mathcal{E}_t:=e^{t\mathcal{L}}\otimes e^{t\mathcal{L}}\;.
\end{equation}
In this case, the Hamiltonian becomes degenerate:
\begin{equation}
\hat{H}_2=\hat{H}\otimes\hat{\mathbb{I}}+\hat{\mathbb{I}}\otimes\hat{H}=\sum_{k=1}^\infty k \sum_{i+j=k} |i,j\rangle\langle i,j|\;.
\end{equation}
However, the two Lindblad operators $\hat{a}\otimes\hat{\mathbb{I}}$ and $\hat{\mathbb{I}}\otimes\hat{a}$ can still induce only jumps between a given energy level and the immediately lower one, and there are no ambiguities in the definition of the passive rearrangement of quantum states with the same degeneracies of the Hamiltonian.
Let us consider for example
\begin{equation}
\hat{\rho}=\frac{1}{6}\sum_{i+j\leq2} |i,j\rangle\langle i,j|\;,\qquad\mathrm{Tr}\left[\hat{H}_2\;\hat{\rho}\right]=\frac{4}{3}\;.
\end{equation}
It is easy to show that it minimizes the average energy among the states with the same spectrum, i.e. it is passive.
Moreover, there are no other states with the same spectrum and the same average energy, i.e. its passive rearrangement is unique.
Let us consider instead
\begin{equation}
\hat{\sigma}=\frac{1}{6}\sum_{i=0}^5|0,i\rangle\langle 0,i|\;,\qquad\mathrm{Tr}\left[\hat{H}_2\;\hat{\sigma}\right]=\frac{5}{2}\;,
\end{equation}
that has the same spectrum of $\hat{\rho}$, but it has a higher average energy and it is not passive.
The three largest eigenvalues of $\mathcal{E}_t\left(\hat{\rho}\right)$ are associated with the eigenvectors $|0,0\rangle$, $|0,1\rangle$ and $|1,0\rangle$, and their sum is
\begin{equation}
s_3(t)=1-\frac{e^{-2t}}{2}\;.
\end{equation}
On the other side, the three largest eigenvalues of $\mathcal{E}_t\left(\hat{\sigma}\right)$ are associated with the eigenvectors $|0,0\rangle$, $|0,1\rangle$ and $|0,2\rangle$, and their sum is
\begin{equation}
\tilde{s}_3(t)=1-e^{-3t}\frac{5-6e^{-t}+2e^{-2t}}{2}\;.
\end{equation}
It is then easy to see that for
\begin{equation}
e^{-t}<1-\frac{1}{\sqrt{2}}\;,
\end{equation}
i.e.
\begin{equation}
t>\ln\left(2+\sqrt{2}\right):=t_0\;,
\end{equation}
we have
\begin{equation}
s_3(t)<\tilde{s}_3(t)\;,
\end{equation}
i.e. the passive state $\hat{\rho}$ is not the optimal input.
Let $p_1(t)$ and $\tilde{p}_1(t)$ be the largest eigenvalues of $\mathcal{E}_t\left(\hat{\rho}\right)$ and $\mathcal{E}_t\left(\hat{\sigma}\right)$, respectively.
They are both associated to the eigenvector $|0,0\rangle$, and
\begin{align}
&p_1(t) = \frac{6-8e^{-t}+3e^{-2t}}{6}\\
&\tilde{p}_1(t) = \frac{\left(2-e^{-t}\right)\left(3-3e^{-t}+e^{-2t}\right)\left(1-e^{-t}+e^{-2t}\right)}{6}\;.
\end{align}
For any $t>0$
\begin{equation}
p_1(t)>\tilde{p}_1(t)\;,
\end{equation}
so that $\hat{\sigma}$ is not the optimal input, and for $t>t_0$ no majorization relation holds between $\mathcal{E}_t\left(\hat{\rho}\right)$ and $\mathcal{E}_t\left(\hat{\sigma}\right)$.

This counterexample cannot be ascribed to the infinite dimension of the Hilbert space, since it is easy to see that both the supports of $\mathcal{E}_t\left(\hat{\rho}\right)$ and $\mathcal{E}_t\left(\hat{\sigma}\right)$ do not depend on $t$ and have dimension $6$.

\subsection{Two-qubit lossy channel} \label{Lsec2step}
We consider a quantum lossy channel acting on the quantum system of two qubits with two possible choices for the Hamiltonian, and we show that passive states are not the optimal inputs in the sense of \eqref{Loptimaldef}.
In one case (Section \ref{L+jumps}) the Hamiltonian is nondegenerate, but the channel involves quantum jumps of more than one energy step.
In the other case (Section \ref{LdegH}), only quantum jumps of one energy step are allowed, but the Hamiltonian becomes degenerate.

Let us consider the Hilbert space of two distinguishable spins with Hamiltonian
\begin{equation}\label{LHspin}
\hat{H}=E_1\;|1\rangle\langle1|\otimes\hat{\mathbb{I}}+E_2\;\hat{\mathbb{I}}\otimes|1\rangle\langle 1|\;.
\end{equation}
We notice that $\hat{H}$ is not symmetric under the exchange of the two spins, i.e. the spins are different, though the same Hilbert space $\mathbb{C}^2$ is associated to both of them.
Let us suppose that
\begin{equation}
0< E_2\leq E_1\;,
\end{equation}
so that the eigenvectors of $\hat{H}$ are, in order of increasing energy,
\begin{eqnarray}
\hat{H}|0,0\rangle &=& 0\nonumber\\
\hat{H}|0,1\rangle &=& E_2|0,1\rangle\nonumber\\
\hat{H}|1,0\rangle &=& E_1|1,0\rangle\nonumber\\
\hat{H}|1,1\rangle &=& (E_1+E_2)|1,1\rangle\;,
\end{eqnarray}
with the only possible degeneracy between $|0,1\rangle$ and $|1,0\rangle$ if $E_1=E_2$.

Let $\mathcal{L}$ be the generator of the form \eqref{LLdef} with the two Lindblad operators
\begin{eqnarray}
\hat{L}_1&=& |0,0\rangle\langle1,0|\nonumber\\
\hat{L}_2 &=& |0,0\rangle\langle0,1|+\sqrt{2}\;|0,1\rangle\langle 1,1|\;,
\end{eqnarray}
and let
\begin{equation}
\mathcal{E}_t=e^{t\mathcal{L}}\;,\qquad t\geq0\;,
\end{equation}
be the associated quantum channel.

\subsubsection{Jumps of more than one energy step}\label{L+jumps}
If $E_2<E_1$ the Hamiltonian \eqref{LHspin} is nondegenerate, but the Lindblad operator $\hat{L}_2$ can induce a transition from $|1,1\rangle$ to $|0,1\rangle$, that are not consecutive eigenstates.

For simplicity, we parameterize a state diagonal in the Hamiltonian eigenbasis with
\begin{equation}\label{Lrhopar}
\hat{\rho}=\sum_{i,j=0}^1 p_{ij}\;|i,j\rangle\langle i,j|\;.
\end{equation}
First, let
\begin{equation}\label{Lrho0}
\hat{\rho}^{(0)}(t)=\mathcal{E}_t\left(\frac{\hat{\mathbb{I}}}{4}\right)
\end{equation}
be the output of the channel applied to the maximally mixed state.
Then, we can compute
\begin{eqnarray}\label{Lpij}
p^{(0)}_{00}(t) &=& 1-e^{-t}+\frac{e^{-2t}}{4}\nonumber\\
p^{(0)}_{01}(t) &=& e^{-t}\;\frac{3-2e^{-t}}{4}\nonumber\\
p^{(0)}_{10}(t) &=& \frac{e^{-t}}{4}\nonumber\\
p^{(0)}_{11}(t) &=& \frac{e^{-2t}}{4}\;.
\end{eqnarray}
It is easy to check that, for any $t>0$,
\begin{equation}\label{Lpij>}
p^{(0)}_{00}(t)>p^{(0)}_{01}(t)>p^{(0)}_{10}(t)>p^{(0)}_{11}(t)\;,
\end{equation}
so that $\hat{\rho}^{(0)}(t)$ is passive, and the channel $\mathcal{E}_t$ satisfies the hypothesis of Lemma \ref{Lpassl}.
Let us instead compare
\begin{align}\label{Lrho1}
&\hat{\rho}^{(1)}(t) = \mathcal{E}_t\left(\frac{|0,0\rangle\langle0,0|+|0,1\rangle\langle0,1|+|1,0\rangle\langle1,0|}{3}\right)\\
&\hat{\rho}^{(2)}(t) = \mathcal{E}_t\left(\frac{|0,0\rangle\langle0,0|+|0,1\rangle\langle0,1|+|1,1\rangle\langle1,1|}{3}\right)\;.\label{Lrho2}
\end{align}
It is easy to see that $\hat{\rho}^{(1)}(0)$ is passive, while $\hat{\rho}^{(2)}(0)$ is not, and they have the same spectrum.
Moreover, there are no other states with the same spectrum and the same average energy of $\hat{\rho}^{(1)}(0)$, i.e. its passive rearrangement is unique.
We can now compute
\begin{align}\label{Lp12t}
&p^{(1)}_{00}(t) = 1-\frac{2}{3}e^{-t}\qquad && p^{(2)}_{00}(t) = 1-e^{-t}+\frac{e^{-2t}}{3}\nonumber\\
&p^{(1)}_{01}(t) = \frac{e^{-t}}{3}\qquad && p^{(2)}_{01}(t) = e^{-t}\left(1-\frac{2}{3}e^{-t}\right)\nonumber\\
&p^{(1)}_{10}(t) = \frac{e^{-t}}{3}\qquad && p^{(2)}_{10}(t) =0\nonumber\\
&p^{(1)}_{11}(t) = 0\qquad && p^{(2)}_{11}(t) =\frac{e^{-2t}}{3}\;.
\end{align}
It is easy to see that for any $t>0$
\begin{align}\label{Lp12>}
&p^{(1)}_{00}(t)>p^{(1)}_{01}(t)=p^{(1)}_{10}(t)\nonumber\\
&p^{(2)}_{00}(t)>p^{(2)}_{01}(t)>p^{(2)}_{11}(t)\;,
\end{align}
so that $\hat{\rho}^{(1)}(t)$ remains always passive.
However, on one hand
\begin{equation}
p^{(1)}_{00}(t)>p^{(2)}_{00}(t)\;,
\end{equation}
but on the other hand
\begin{equation}
p^{(1)}_{00}(t)+p^{(1)}_{01}(t)<p^{(2)}_{00}(t)+p^{(2)}_{01}(t)\;,
\end{equation}
so that no majorization relation can exist between $\hat{\rho}^{(1)}(t)$ and $\hat{\rho}^{(2)}(t)$.

\subsubsection{Degenerate Hamiltonian}\label{LdegH}
If $E_1=E_2$, the eigenstates $|0,1\rangle$ and $|1,0\rangle$ of the Hamiltonian \eqref{LHspin} become degenerate, but both $\hat{L}_1$ and $\hat{L}_2$ induce only transitions between consecutive energy levels.

We use the parametrization \eqref{Lrhopar} as before.
Let $\hat{\rho}^{(0)}(t)$ be the output of the channel applied to the maximally mixed state as in \eqref{Lrho0}.
Since the generator $\mathcal{L}$ is the same of Section \ref{L+jumps}, the probabilities $p^{(0)}_{ij}(t)$, $i,j=0,1$, are still given by \eqref{Lpij}.
Eq. \eqref{Lpij>} still holds for any $t>0$, so that $\hat{\rho}^{(0)}(t)$ is passive, and the channel $\mathcal{E}_t$ satisfies the hypothesis of Lemma \ref{Lpassl}.

Let us instead compare $\hat{\rho}^{(1)}(t)$ and $\hat{\rho}^{(2)}(t)$ defined as in \eqref{Lrho1} and \eqref{Lrho2}, respectively.
The state $\hat{\rho}^{(1)}(0)$ is passive, while $\hat{\rho}^{(2)}(0)$ is not, and they have the same spectrum.
Moreover, there are no other states with the same spectrum and the same average energy of $\hat{\rho}^{(1)}(0)$, i.e. its passive rearrangement is unique.
The probabilities $p^{(1)}_{ij}(t)$ and $p^{(2)}_{ij}(t)$, $i,j=0,1$, are still given by \eqref{Lp12t}.
Eq. \eqref{Lp12>} still holds for any $t>0$, and $\hat{\rho}^{(1)}(t)$ remains always passive.
However, on one hand
\begin{equation}
p^{(1)}_{00}(t)>p^{(2)}_{00}(t)\;,
\end{equation}
but on the other hand
\begin{equation}
p^{(1)}_{00}(t)+p^{(1)}_{01}(t)<p^{(2)}_{00}(t)+p^{(2)}_{01}(t)\;,
\end{equation}
so that no majorization relation can exist between $\hat{\rho}^{(1)}(t)$ and $\hat{\rho}^{(2)}(t)$.

\subsection{Optimal states for a finite-temperature two-level system are nonclassical}\label{Lsecfin}
In this Section we show that at finite temperature, already for a two-level system the optimal states are no more passive, and include coherent superpositions of the energy eigenstates.

An intuitive explanation is that a dissipator with only energy-raising Lindblad operators keeps fixed the maximum-energy eigenstate, that is hence optimal for the generated channel.
Then, it is natural to expect that the optimal pure state in the presence of both energy-lowering and energy-raising Lindblad operators will interpolate between the ground and the maximum energy state, and will hence be a coherent superposition of different eigenstates of the Hamiltonian.

The simplest example is a two-level system with Hamiltonian
\begin{equation}
\hat{H}=\frac{1}{2}E_0\;\hat{\sigma}_z=\frac{E_0}{2}\;|1\rangle\langle 1|-\frac{E_0}{2}\;|0\rangle\langle0|\;,\qquad E_0>0\;,
\end{equation}
undergoing the quantum optical master equation \cite{breuer2007theory}, describing the weak coupling with a thermal bath of one mode of bosonic excitations in the rotating-wave approximation.
This is the simplest extension of the evolutions considered in Section \ref{Lsec2step} to an interaction with a finite-temperature bath.

Its generator is
\begin{equation}\label{LqubitL}
\mathcal{L}\left(\hat{\rho}\right) = \gamma_0(N+1)\left(\hat{\sigma}_-\;\hat{\rho}\;\hat{\sigma}_+-\frac{1}{2}\left\{\hat{\sigma}_+\hat{\sigma}_-,\;\hat{\rho}\right\}\right)+ \gamma_0N\left(\hat{\sigma}_+\;\hat{\rho}\;\hat{\sigma}_--\frac{1}{2}\left\{\hat{\sigma}_-\hat{\sigma}_+,\;\hat{\rho}\right\}\right)\;,
\end{equation}
where
\begin{equation}
\hat{\sigma}_\pm=\frac{\hat{\sigma}_x\pm i\hat{\sigma}_y}{2}
\end{equation}
are the ladder operators, $\gamma_0>0$ is the coupling strength and $N>0$ is the average number of photons or phonons in the bosonic mode of the bath coupled to the system. We also notice that
for $N=0$ the process becomes a lossy map fulfilling the condition discussed at the beginning of Section \ref{Lsecopt}.

We will now show that, for the quantum channel associated to the master equation \eqref{LqubitL}, the output generated by a certain coherent superposition of the two energy eigenstates majorizes the output generated by any other state.

It is convenient to use the Bloch representation
\begin{equation}\label{Lbloch}
\hat{\rho}=\frac{\hat{\mathbb{I}}+x\;\hat{\sigma}_x+y\;\hat{\sigma}_y+z\;\hat{\sigma}_z}{2}\;,\qquad x^2+y^2+z^2\leq1\;.
\end{equation}
The master equation \eqref{LqubitL} induces the differential equations
\begin{equation}\label{Lvdot}
\frac{dx}{dt} = -\frac{\gamma}{2}\;x\;,\qquad \frac{dy}{dt} = -\frac{\gamma}{2}\;y\;,\qquad \frac{dz}{dt} = -\gamma\left(z-z_\infty\right)\;,
\end{equation}
where
\begin{equation}
\gamma=\gamma_0(2N+1)\qquad\text{and}\qquad z_\infty=-\frac{1}{2N+1}\;.
\end{equation}
The solution of \eqref{Lvdot} is
\begin{eqnarray}
x(t) &=& e^{-\frac{\gamma}{2}t}\;x_0\;,\nonumber\\
y(t) &=& e^{-\frac{\gamma}{2}t}\;y_0\;,\nonumber\\
z(t) &=& z_\infty+e^{-\gamma t}\left(z_0-z_\infty\right)\;,
\end{eqnarray}
and its asymptotic state is the canonical state with inverse temperature $\beta$
\begin{equation}
\hat{\rho}_\infty=\frac{e^\frac{\beta E_0}{2}\;|0\rangle\langle0|+e^{-\frac{\beta E_0}{2}}\;|1\rangle\langle 1|}{2\cosh\frac{\beta E_0}{2}}\;,
\end{equation}
satisfying
\begin{equation}
z_\infty=-\tanh\frac{\beta\;E_0}{2}\;.
\end{equation}
Since the density matrix of a two-level system has only two eigenvalues, the purity is a sufficient criterion for majorization, i.e. for any two quantum states $\hat{\rho}$ and $\hat{\sigma}$,
\begin{equation}
\hat{\rho}\prec\hat{\sigma}\qquad\text{iff}\qquad\mathrm{Tr}\;\hat{\rho}^2\leq\mathrm{Tr}\;\hat{\sigma}^2\;.
\end{equation}
We recall that in the Bloch representation \eqref{Lbloch}
\begin{equation}
\mathrm{Tr}\;\hat{\rho}^2=\frac{1+x^2+y^2+z^2}{2}\;.
\end{equation}
We have then
\begin{equation}\label{Lpurity}
\mathrm{Tr}\;{\hat{\rho}(t)}^2 = \frac{1+e^{-\gamma t}\left(x_0^2+y_0^2+z_0^2\right)}{2}+ \frac{1-e^{-\gamma t}}{2}\left(z_\infty^2-e^{-\gamma t}\left(z_0-z_\infty\right)^2\right)\;.
\end{equation}
The right-hand side of \eqref{Lpurity} is maximized by
\begin{equation}
x_0^2+y_0^2 = 1-z_\infty^2\qquad\text{and}\qquad z_0 = z_\infty\;,
\end{equation}
i.e. when the initial state is a pure coherent superposition of the energy eigenstates $|0\rangle$ and $|1\rangle$ with the same average energy of the asymptotic state:
\begin{equation}
|\psi\rangle=e^{i\varphi_0}\sqrt{\frac{1-z_\infty}{2}}\;|0\rangle+e^{i\varphi_1}\sqrt{\frac{1+z_\infty}{2}}\;|1\rangle\;,
\end{equation}
where $\varphi_0$ and $\varphi_1$ are arbitrary real phases.

\section{Auxiliary lemmata}\label{lossylem}
\subsection{Passivity of the evolved maximally mixed state}\label{Lpasst}
\begin{lem}\label{Lpassl}
Let $\mathcal{L}$ be a Lindblad generator such that for any $t\geq0$ the operator $e^{t\mathcal{L}}\left(\hat{\mathbb{I}}\right)$ is passive.
Then, also $\mathcal{L}\left(\hat{\mathbb{I}}\right)$ is passive.
\begin{proof}
Recalling the Hamiltonian eigenbasis \eqref{LHpass}, for any $t\geq0$ it must hold
\begin{equation}
e^{t\mathcal{L}}\left(\hat{\mathbb{I}}\right)=\sum_{i=1}^d c_i(t)\;|i\rangle\langle i|
\end{equation}
with
\begin{equation}
c_1(t)\geq\ldots\geq c_d(t)\;,\qquad c_1(0)=\ldots c_d(0)=1\;,
\end{equation}
and each $c_i(t)$ is an analytic function of $t$.
It follows that
\begin{equation}
c_1'(0)\geq\ldots\geq c_d'(0)\;.
\end{equation}
However, we have also
\begin{equation}
\mathcal{L}\left(\hat{\mathbb{I}}\right)=\left.\frac{d}{dt}e^{t\mathcal{L}}\left(\hat{\mathbb{I}}\right)\right|_{t=0}=\sum_{i=1}^d c_i'(0)\;|i\rangle\langle i|\;,
\end{equation}
hence the thesis.
\end{proof}
\end{lem}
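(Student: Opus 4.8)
The plan is to exploit the fact that $\mathcal{L}(\hat{\mathbb{I}})$ is precisely the derivative of the curve $t\mapsto e^{t\mathcal{L}}(\hat{\mathbb{I}})$ at the origin, and that the ordering built into passivity — imposed along the whole curve and combined with the initial datum $e^{0\cdot\mathcal{L}}(\hat{\mathbb{I}})=\hat{\mathbb{I}}$ — forces the same ordering on the first-order coefficients. First I would fix the nondegenerate Hamiltonian eigenbasis $\{|i\rangle\}_{i=1,\ldots,d}$ of \eqref{LHpass} with $E_1<\ldots<E_d$ and write, for each $t\geq0$,
\begin{equation}
e^{t\mathcal{L}}(\hat{\mathbb{I}})=\sum_{i=1}^d c_i(t)\;|i\rangle\langle i|\;,
\end{equation}
using the hypothesis that this operator is passive, hence diagonal in this basis. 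The coefficients $c_i(t)=\langle i|e^{t\mathcal{L}}(\hat{\mathbb{I}})|i\rangle$ are matrix elements of the exponential of the superoperator $\mathcal{L}$ acting on a fixed operator, and are therefore analytic — in particular differentiable — functions of $t$.

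Second, I would record the two pieces of information supplied by the hypothesis. On one hand, passivity for every $t$ gives the ordering $c_1(t)\geq c_2(t)\geq\ldots\geq c_d(t)$ for all $t\geq0$. On the other hand, the initial condition $e^{0\cdot\mathcal{L}}(\hat{\mathbb{I}})=\hat{\mathbb{I}}$ gives $c_1(0)=\ldots=c_d(0)=1$. The goal then reduces to showing that $\mathcal{L}(\hat{\mathbb{I}})=\sum_i c_i'(0)\,|i\rangle\langle i|$ is itself passive; since it is already diagonal, this amounts to the single infinitesimal statement $c_1'(0)\geq\ldots\geq c_d'(0)$.

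The heart of the argument — and the step I expect to carry the actual content — is turning the \emph{global} inequalities $c_i(t)\geq c_{i+1}(t)$ into the \emph{infinitesimal} ones at $t=0$. For each $i$ I would consider the gap $g_i(t):=c_i(t)-c_{i+1}(t)$, which is differentiable, satisfies $g_i(t)\geq0$ for all $t\geq0$, and vanishes at the endpoint, $g_i(0)=0$. Thus $t=0$ is a boundary minimum of $g_i$ on $[0,\infty)$, so the one-sided difference quotient $g_i(t)/t$ is nonnegative for $t>0$; letting $t\to0^+$ yields $g_i'(0)=c_i'(0)-c_{i+1}'(0)\geq0$. This is exactly the desired ordering of the derivatives, so $\mathcal{L}(\hat{\mathbb{I}})$ is passive. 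The only point requiring care is the differentiability claim that legitimizes passing to derivatives of the diagonal entries; this is immediate here because passivity keeps the entire curve diagonal, so the eigenvalues coincide with the analytic diagonal matrix elements and no subtlety of eigenvalue crossing can arise.
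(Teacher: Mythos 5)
Your proof is correct and follows essentially the same route as the paper's: diagonalize $e^{t\mathcal{L}}(\hat{\mathbb{I}})$ in the energy eigenbasis using passivity, note the common initial value $c_i(0)=1$, and pass the ordering of the $c_i(t)$ to their derivatives at $t=0$. The only difference is that you make explicit, via the one-sided difference quotient of the gaps $g_i(t)=c_i(t)-c_{i+1}(t)$, the step the paper states without elaboration ("It follows that $c_1'(0)\geq\ldots\geq c_d'(0)$"), which is a welcome clarification rather than a deviation.
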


\subsection{Proof of Lemma \ref{Ldeg}}\label{Lproofdeg}
The matrix elements of the operator $e^{t\mathcal{L}}\left(\hat{\rho}\right)$ are analytic functions of $t$.
The spectrum of $\hat{\rho}(t)$ is degenerate iff the function
\begin{equation}
\phi(t)=\prod_{i\neq j}\left(p_i(t)-p_j(t)\right)
\end{equation}
vanishes.
This function is a symmetric polynomial in the eigenvalues of $\hat{\rho}(t)=e^{t\mathcal{L}}\left(\hat{\rho}\right)$.
Then, for the Fundamental Theorem of Symmetric Polynomials (see e.g Theorem 3 in Chapter 7 of \cite{cox2015ideals}), $\phi(t)$ can be written as a polynomial in the elementary symmetric polynomials in the eigenvalues of $\hat{\rho}(t)$.
However, these polynomials coincide with the coefficients of the characteristic polynomial of $\hat{\rho}(t)$, that are in turn polynomials in its matrix elements.
It follows that $\phi(t)$ can be written as a polynomial in the matrix elements of the operator $\hat{\rho}(t)$.
Since each of these matrix element is an analytic function of $t$, also $\phi(t)$ is analytic.
Since by hypothesis the spectrum of $\hat{\rho}(0)$ is nondegenerate, $\phi$ cannot be identically zero, and its zeroes are isolated points.

\subsection{Proof of Lemma \ref{Llemma1}}\label{Lprooflemma1}
The matrix elements of the operator $e^{t\mathcal{L}}\left(\hat{\rho}\right)$ are analytic (and hence continuous and differentiable) functions of $t$.
Then for Weyl's Perturbation Theorem $p(t)$ is continuous in $t$, and also $s(t)$ is continuous (see e.g. Corollary III.2.6 and the discussion at the beginning of Chapter VI  of \cite{bhatia2013matrix}).
Let $\hat{\rho}(t_0)$ have nondegenerate spectrum.
Then, $\hat{\rho}(t)$ has nondegenerate spectrum for any $t$ in a suitable neighbourhood of $t_0$.
In this neighbourhood, we can diagonalize $\hat{\rho}(t)$ with
\begin{equation}
\hat{\rho}(t)=\sum_{i=1}^d p_i(t) |\psi_i(t)\rangle\langle\psi_i(t)|\;,
\end{equation}
where the eigenvalues in decreasing order $p_i(t)$ are differentiable functions of $t$ (see Theorem 6.3.12 of \cite{horn2012matrix}).
We then have
\begin{equation}
\frac{d}{dt}p_i(t)=\langle\psi_i(t)|\mathcal{L}\left(\hat{\rho}(t)\right)|\psi_i(t)\rangle\;,\qquad i=1,\ldots,d\;,
\end{equation}
and
\begin{equation}
\frac{d}{dt}s_n(t)=\mathrm{Tr}\left[\hat{\Pi}_n(t)\;\mathcal{L}\left(\hat{\rho}(t)\right)\right]\;,
\end{equation}
where
\begin{equation}
\hat{\Pi}_n(t)=\sum_{i=1}^n|\psi_i(t)\rangle\langle\psi_i(t)|\;.
\end{equation}
We can write
\begin{equation}
\hat{\rho}(t)=\sum_{n=1}^d d_n(t)\;\hat{\Pi}_n(t)\;,
\end{equation}
where
\begin{equation}
d_n(t)=p_n(t)-p_{n+1}(t)\geq0\;,
\end{equation}
and for simplicity we have set $p_{d+1}(t)=0$, so that
\begin{equation}\label{Lsndot}
\frac{d}{dt}s_n(t)=\sum_{k=1}^d d_k(t)\;\mathrm{Tr}\left[\hat{\Pi}_n(t)\;\mathcal{L}\left(\hat{\Pi}_k(t)\right)\right]\;.
\end{equation}
We have now
\begin{equation}\label{LPLext}
\mathrm{Tr}\left[\hat{\Pi}_n(t)\;\mathcal{L}\left(\hat{\Pi}_k(t)\right)\right]= \sum_\alpha\mathrm{Tr}\left[\hat{\Pi}_n(t)\;\hat{L}_\alpha\;\hat{\Pi}_k(t)\;\hat{L}_\alpha^\dag-\hat{\Pi}_{k\land n}(t)\;\hat{L}_\alpha^\dag\hat{L}_\alpha\right]\;,
\end{equation}
where $k\land n=\min(k,n)$ and we have used that
\begin{equation}
\hat{\Pi}_n(t)\;\hat{\Pi}_k(t)=\hat{\Pi}_k(t)\;\hat{\Pi}_n(t)=\hat{\Pi}_{k\land n}(t)\;.
\end{equation}
\begin{itemize}
  \item Let us suppose $n\geq k$.
  Using that $\hat{\Pi}_n(t)\leq\hat{\mathbb{I}}$ in the first term of \eqref{LPLext}, we get
  \begin{equation}\label{Ln>k}
  \mathrm{Tr}\left[\hat{\Pi}_n(t)\;\mathcal{L}\left(\hat{\Pi}_k(t)\right)\right]\leq0\;.
  \end{equation}
  On the other hand, recalling the structure of the Lindblad operators \eqref{Ldephase} and \eqref{Ljump}, for any $\alpha$ the support of $\hat{L}_\alpha\,\hat{\Pi}_k^\downarrow\,\hat{L}_\alpha^\dag$ is contained into the support of $\hat{\Pi}_{k}^\downarrow$, and hence into the one of $\hat{\Pi}_n^\downarrow$, and we have also
  \begin{equation}\label{Ln>k*}
  \mathrm{Tr}\left[\hat{\Pi}^\downarrow_n\;\mathcal{L}\left(\hat{\Pi}_k^\downarrow\right)\right]=0\;.
  \end{equation}
  \item Let us now suppose $k>n$.
  Using that $\hat{\Pi}_k(t)\leq\hat{\mathbb{I}}$ in the first term of \eqref{LPLext}, we get
  \begin{equation}\label{Lk>n}
  \mathrm{Tr}\left[\hat{\Pi}_n(t)\;\mathcal{L}\left(\hat{\Pi}_k(t)\right)\right] \leq \mathrm{Tr}\left[\hat{\Pi}_n(t)\;\mathcal{L}\left(\hat{\mathbb{I}}\right)\right] \leq\mathrm{Tr}\left[\hat{\Pi}_n^\downarrow\;\mathcal{L}\left(\hat{\mathbb{I}}\right)\right]=\lambda_n\;,
  \end{equation}
  where in the last step we have used Ky Fan's maximum principle (Lemma \ref{sumeig}) and the passivity of $\mathcal{L}\left(\hat{\mathbb{I}}\right)$.
  On the other hand, from \eqref{Ldephase} and \eqref{Ljump} the support of $\hat{L}_\alpha^\dag\,\hat{\Pi}_n^\downarrow\,\hat{L}_\alpha$ is contained into the support of $\hat{\Pi}_{n+1}^\downarrow$, and hence into the one of $\hat{\Pi}_k^\downarrow$, and we have also
  \begin{equation}\label{Ln<k*}
  \mathrm{Tr}\left[\hat{\Pi}_n^\downarrow\;\mathcal{L}\left(\hat{\Pi}_k^\downarrow\right)\right]=\lambda_n\;.
  \end{equation}
\end{itemize}
Plugging \eqref{Ln>k} and \eqref{Lk>n} into \eqref{Lsndot}, we get
\begin{equation}
\frac{d}{dt}s_n(t)\leq\lambda_n\;p_{n+1}(t)=\lambda_n\left(s_{n+1}(t)-s_n(t)\right)\;.
\end{equation}
From \eqref{Ln>k*} and \eqref{Ln<k*} we get instead
\begin{equation}
\frac{d}{dt}s_n^\downarrow(t)=\lambda_n\;p_{n+1}^\downarrow(t)=\lambda_n\left(s^\downarrow_{n+1}(t)-s_n^\downarrow(t)\right)\;.
\end{equation}
See Lemma \ref{Llambdan} for the positivity of the coefficients $\lambda_n$.

\subsection{Proof of Lemma \ref{Llemma2}}\label{Lprooflemma2}
Since the quantum channel $e^{t\mathcal{L}}$ is trace-preserving, we have
\begin{equation}
s_d(t)=\mathrm{Tr}\;\hat{\rho}(t)=1=s_d^\downarrow(t)\;.
\end{equation}
We will use induction on $n$ in the reverse order: let us suppose to have proved
\begin{equation}
s_{n+1}(t)\leq s_{n+1}^\downarrow(t)\;.
\end{equation}
Since $\lambda_n\geq0$ for Lemma \ref{Llambdan}, we have from \eqref{Lsdot}
\begin{equation}
\frac{d}{dt}s_n(t)\leq\lambda_n\left(s_{n+1}^\downarrow(t)-s_n(t)\right)\;,
\end{equation}
while
\begin{equation}
\frac{d}{dt}s_n^\downarrow(t)=\lambda_n\left(s_{n+1}^\downarrow(t)-s_n^\downarrow(t)\right)\;.
\end{equation}
Defining
\begin{equation}
f_n(t)=s_n^\downarrow(t)-s_n(t)\;,
\end{equation}
we have $f_n(0)=0$, and
\begin{equation}
\frac{d}{dt}f_n(t)\geq-\lambda_n\;f_n(t)\;.
\end{equation}
This can be rewritten as
\begin{equation}
e^{-\lambda_nt}\;\frac{d}{dt}\left(e^{\lambda_nt}\;f_n(t)\right)\geq0\;,
\end{equation}
and implies
\begin{equation}
f_n(t)\geq0\;.
\end{equation}

\subsection{Proof of Lemma \ref{Llambdan}}\label{Lprooflambdan}
\begin{lem}\label{Llambdan}
$\lambda_n\geq0$ for $n=1,\,\ldots,\,d$.
\begin{proof}
For Ky Fan's maximum principle (Lemma \ref{sumeig}), for any unitary $\hat{U}$
\begin{equation}\label{LlambdaU}
\lambda_n = \mathrm{Tr}\left[\hat{\Pi}_n^\downarrow\;\mathcal{L}\left(\hat{\mathbb{I}}\right)\right]\geq \mathrm{Tr}\left[\hat{U}\;\hat{\Pi}_n^\downarrow\;\hat{U}^\dag\;\mathcal{L}\left(\hat{\mathbb{I}}\right)\right]\;.
\end{equation}
The thesis easily follows taking the average over the Haar measure $\mu$ of the right-hand side of \eqref{LlambdaU}, since
\begin{equation}
\int\hat{U}^\dag\;\mathcal{L}\left(\hat{\mathbb{I}}\right)\;\hat{U}\;d\mu\left(\hat{U}\right)= \frac{\hat{\mathbb{I}}}{d}\;\mathcal{L}\left(\hat{\mathbb{I}}\right)=0\;.
\end{equation}
\end{proof}
\end{lem}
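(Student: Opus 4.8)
The plan is to exploit two facts already available: that the generator $\mathcal{L}$ annihilates the trace, i.e. $\mathrm{Tr}\left[\mathcal{L}\left(\hat{X}\right)\right]=0$ for every operator $\hat{X}$ (a standard property of any Lindbladian generating a trace-preserving semigroup, visible directly from the form \eqref{LLdef} since both the commutator term and each dissipator term are traceless), and that $\mathcal{L}\left(\hat{\mathbb{I}}\right)$ is passive, which is exactly the content of Lemma \ref{Lpassl}. Passivity of $\mathcal{L}\left(\hat{\mathbb{I}}\right)$ means it is diagonal in the Hamiltonian eigenbasis \eqref{LHpass} with eigenvalues arranged in decreasing order along $|1\rangle,\ldots,|d\rangle$, so the projector $\hat{\Pi}_n^\downarrow$ of \eqref{LPin*} projects precisely onto its $n$ largest eigenvalues. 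The quantity $\lambda_n=\mathrm{Tr}\left[\hat{\Pi}_n^\downarrow\;\mathcal{L}\left(\hat{\mathbb{I}}\right)\right]$ is therefore the sum of the $n$ largest eigenvalues of $\mathcal{L}\left(\hat{\mathbb{I}}\right)$.

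The central step is the variational characterization provided by Ky Fan's Maximum Principle (Lemma \ref{sumeig}), which identifies this sum of the $n$ largest eigenvalues as the maximum of $\mathrm{Tr}\left[\hat{P}\;\mathcal{L}\left(\hat{\mathbb{I}}\right)\right]$ taken over all rank-$n$ projectors $\hat{P}$. Applying this to the rotated projectors $\hat{P}=\hat{U}\;\hat{\Pi}_n^\downarrow\;\hat{U}^\dag$, which have rank $n$ for every unitary $\hat{U}$, I would obtain the family of inequalities
\begin{equation}
\lambda_n=\mathrm{Tr}\left[\hat{\Pi}_n^\downarrow\;\mathcal{L}\left(\hat{\mathbb{I}}\right)\right]\geq\mathrm{Tr}\left[\hat{U}\;\hat{\Pi}_n^\downarrow\;\hat{U}^\dag\;\mathcal{L}\left(\hat{\mathbb{I}}\right)\right]=\mathrm{Tr}\left[\hat{\Pi}_n^\downarrow\;\hat{U}^\dag\;\mathcal{L}\left(\hat{\mathbb{I}}\right)\;\hat{U}\right]\;,
\end{equation}
using cyclicity of the trace in the last equality. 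Since this holds for every $\hat{U}$, it persists after averaging the right-hand side against the Haar measure $\mu$ on the unitary group. The Haar average twirls any operator to a multiple of the identity, $\int\hat{U}^\dag\;\mathcal{L}\left(\hat{\mathbb{I}}\right)\;\hat{U}\;d\mu\left(\hat{U}\right)=\frac{1}{d}\mathrm{Tr}\left[\mathcal{L}\left(\hat{\mathbb{I}}\right)\right]\hat{\mathbb{I}}$, which vanishes identically because $\mathcal{L}$ annihilates the trace. Hence $\lambda_n\geq0$, as claimed.

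The only genuine subtlety I expect is bookkeeping rather than deep: Lemma \ref{sumeig} is stated for positive operators, whereas $\mathcal{L}\left(\hat{\mathbb{I}}\right)$ need not be positive (its eigenvalues sum to zero, so some are negative). I would either note that Ky Fan's principle and its proof by induction extend verbatim to arbitrary self-adjoint operators, or equivalently shift by $\mathcal{L}\left(\hat{\mathbb{I}}\right)+c\,\hat{\mathbb{I}}$ with $c$ large enough to make the operator positive, apply Lemma \ref{sumeig}, and observe that the constant shift contributes $n\,c$ to both sides and so cancels in the inequality. Everything else — passivity of $\mathcal{L}\left(\hat{\mathbb{I}}\right)$, tracelessness of the generator, and the elementary twirling identity for the Haar measure — is either cited or immediate, so no heavy computation is required.
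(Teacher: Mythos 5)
Your proof is correct and follows essentially the same route as the paper's: Ky Fan's maximum principle applied to the rotated projectors $\hat{U}\,\hat{\Pi}_n^\downarrow\,\hat{U}^\dag$, followed by Haar averaging and the tracelessness of the Lindbladian, exactly as in the paper. Your closing remark about extending Lemma \ref{sumeig} beyond positive operators (e.g. by shifting with $c\,\hat{\mathbb{I}}$) actually patches a small gap the paper leaves implicit, since $\mathcal{L}\left(\hat{\mathbb{I}}\right)$ is traceless and hence not positive.
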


\section{Conclusion} \label{Lseccon}
In this Chapter we have extended the proof of the optimality of passive states of Chapter \ref{majorization} to a large class of lossy channels, showing that they preserve the set of passive states, that are the optimal inputs in the sense that the output generated by a passive state majorizes the output generated by any other state with the same spectrum.
Then, thanks to the equivalent definition of majorization in terms of random unitary operations \eqref{majru}, the output generated by a passive state minimizes any concave functional among the outputs generated by any unitary equivalent state.
Since the class of concave functionals includes the von Neumann and all the R\'enyi entropies, the solution to any entropic optimization problem has to be found among passive states.
This result can then lead to entropic inequalities on the output of a lossy channel, and can be crucial in the determination of its information capacity.
Moreover, in the context of quantum thermodynamics this result can be useful to determine which quantum states can be obtained from an initial state with a given spectrum in a resource theory with lossy channels among the allowed operations.

The optimality of passive states crucially depends on the assumptions of nondegenerate Hamiltonian, quantum jumps of only one energy step and zero temperature.
Indeed, the two-mode bosonic Gaussian quantum-limited attenuator provides a counterexample with degenerate Hamiltonian.
Moreover, two-qubit systems can provide counterexamples both with degenerate Hamiltonian or with quantum jumps of more than one energy step.
Finally, at finite temperature this optimality property fails already for a two-level system, where the best input is a coherent superposition of the two energy eigenstates.
This shows that even the quantum channels that naturally arise from a weak interaction with a thermal bath can have a very complex entropic behaviour, and that coherence can play a crucial role in the optimal encoding of information.

\chapter{Memory effects}\label{memory}
In this Chapter we determine the capacity for transmitting classical information over a model of Gaussian channel with memory effects.

The Chapter is based on
\begin{enumerate}
\item[\cite{de2014classical}] G.~De~Palma, A.~Mari, and V.~Giovannetti, ``Classical capacity of Gaussian thermal memory channels,'' \emph{Physical Review A}, vol.~90, no.~4, p. 042312, 2014.\\ {\small\url{http://journals.aps.org/pra/abstract/10.1103/PhysRevA.90.042312}}
\end{enumerate}

\section{Introduction}
Given a physical device acting as a quantum communication channel \cite{caves1994quantum,holevo2013quantum}, an important problem in quantum information theory is to
determine the optimal rate of classical information that can be sent through the channel assuming that one is allowed to use arbitrary
quantum encoding and decoding strategies possibly involving multiple uses of the transmission line ({\it channel uses}). The maximum achievable rate is the {\it classical capacity} associated to the quantum channel \cite{holevo2013quantum,holevo2001evaluating,schumacher1997sending}.
If no memory effects are tampering the communication line (i.e. if the noise affecting the communication acts identically and independently on subsequent channel uses), the classical capacity of the setup can be expressed as the limiting formula \eqref{Ctens}.

Most real communication media are based on electromagnetic signals and are well described within the framework of quantum Gaussian channels \cite{braunstein2005quantum,cerf2007quantum,weedbrook2012gaussian} (see Chapter \ref{GQI}).
The most relevant class is constituted by gauge-covariant channels like attenuators and amplifiers. Such channels reduce or increase the amplitude of the signal
and, at the same time, they add a certain amount of Gaussian noise which depends on the vacuum or thermal fluctuations of the environment.
Recently the proof of the minimum output entropy conjecture \cite{giovannetti2015solution,mari2014quantum} has allowed the determination of the exact classical capacities of these channels \cite{giovannetti2014ultimate} and the respective strong converse theorems \cite{bardhan2015strong}, under the crucial assumption of
their memoryless behavior (see Section \ref{gccapacity}).
As we have seen in Chapter \ref{GQI}, one of the key points of the proof is the additivity of the Holevo information of a memoryless gauge-covariant gaussian channel,
\begin{figure}[t]
\includegraphics[width=\textwidth]{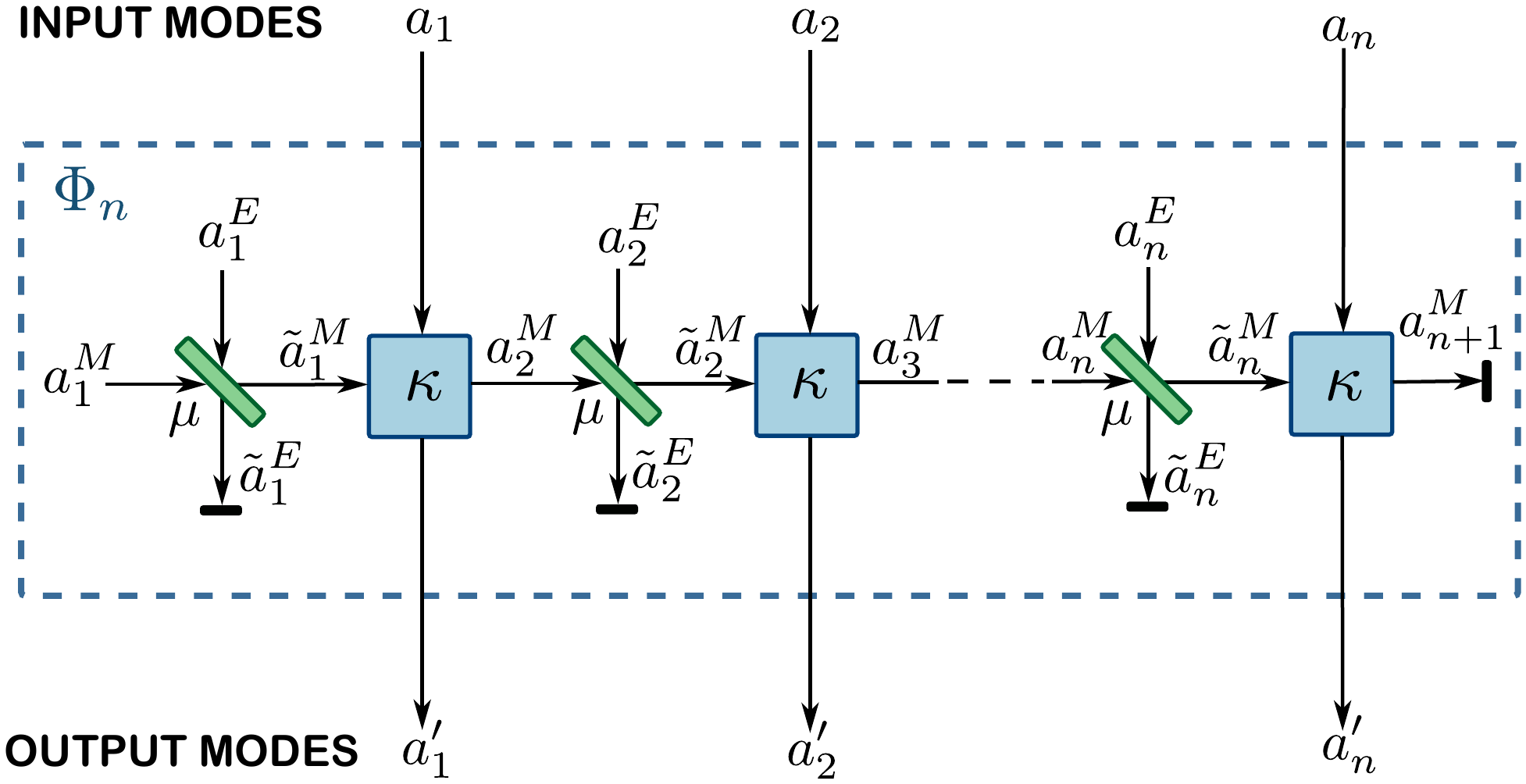}
\caption{Schematic description of a Gaussian memory channel $\Phi_n$ which is iterated $n$ times. The application of a the memory channel to $n$ successive input modes $a_1,\  \dots,a_n $ is described by $n$ gauge-covariant channels $\mathcal E_\kappa$ (thermal attenuators or amplifiers) where each of them is coupled to a Gaussian thermal environment and to a memory mode. The initial memory mode $a^{M}_1$ travels horizontally and correlates the output signals with the previous input signals. A  beamsplitter of transmissivity $\mu$ is used to tune the memory effect of the channel. For $\mu=1$ the memory mode is perfectly preserved while for $\mu=0$ the channel becomes memoryless. A reasonable choice for the initial state of the memory mode is a Gaussian thermal state in equilibrium with the environment, {\it i.e.}\ we make the identification $a^{M}_1=a^{E}_0$. The final state of the memory is assumed to be inaccessible and is traced out. }\label{model}
\end{figure}
which trivializes the limit in \eqref{Ctens}.
Realistic communication lines however, if used at high rates (larger than the relaxation time of the environment), may exhibit
memory effects in which the output states are influenced by the previous input signals \cite{caruso2014quantum,gallager2014information, banaszek2004experimental,demkowicz2007effects,paladino2002decoherence,hu2007controllable}.  In other words, the noise introduced by the channel instead of being independent and identically distributed can  be correlated
with the previous input states preventing one from expressing the input-output mapping  of $n$ successive channel uses as a simple tensor product $\Phi^{\otimes n}$
 and hence from using
 Eq. \eqref{Ctens}. As a matter of fact since the capacity is defined asymptotically in the limit of many repeated channel uses,  memory effects will affect the optimal information rate and the optimal coding strategies.
A characterization of {\it quantum memory channels} can be found in Ref.'s\ \cite{kretschmann2005quantum,datta2007coding,d2007quantum,giovannetti2005dynamical}, while generalizations to infinite dimensional bosonic systems are considered in  \cite{giovannetti2005bosonic,cerf2005quantum,lupo2010capacities,lupo2010memory,pilyavets2012methods,schafer2009capacity,schafer2011gaussian,schafer2012gaussian}.

Here we elaborate on the model of (zero temperature) attenuators and amplifiers with memory effects that was introduced in Ref.'s \ \cite{lupo2010capacities, lupo2010memory} where, in the case of a quantum limited attenuator, the capacity was explicitly determined. We generalize this model to thermal attenuators and thermal amplifiers and we derive the corresponding classical capacities, extending the previous results obtained in the memoryless scenario \cite{giovannetti2014ultimate}. We have also considered the case of the additive noise channel, viewed as a particular limit of an attenuator with large transmissivity and large thermal noise. This limit is essentially equivalent to the model
considered in \cite{lupo2009forgetfulness,schafer2009capacity}, and we have shown that the only effect of the memory is a redistribution of the added noise.
An interesting feature which emerges from our analysis is the presence of a critical environmental temperature which strongly affects
 the distribution of the input
energy  among the various modes of the model. In particular  for temperatures larger than the critical one, only the modes which have a sufficiently high effective  transmissivity are allowed to
contribute to the signaling process, the remaining one being forced to carry no energy nor information.

Given a quantum channel the associated unitary dilation is not unique and one can imagine different models for memory effects.
Nonetheless our paradigm is expected to cover many real devices like optical fibers  \cite{gisin2002quantum,tanzilli2002ppln},  microwave systems \cite{lang2013correlations}, $THz$ lasers \cite{kohler2002terahertz}, free space communication \cite{chan2006free,fedrizzi2009high}, {\it etc}..
All physical implementations are known to exhibit time delay and memory effects whenever used at sufficiently high repetition rates. Moreover, especially in microwave and electrical channels, thermal noise is not
negligible and will affect the classical capacity. In general, our analysis applies to any physical realization of quantum channels in which memory effects and thermal noise are simultaneously present.

We begin in Section \ref{sec:gaus} by recalling some basic facts about the memory channel model of Ref.'s \cite{cerf2007quantum,weedbrook2012gaussian}. In particular we describe its
normal mode decomposition which allows one to express the associated mapping as a tensor product of  not necessarily identical single mode transformations.
In Section \ref{sec:capac} we compute the classical capacity of the setup and discuss some special cases, while in Section \ref{sec:lagr} we analyze how the distribution of the input energy among the various modes is affected by the  presence of a thermal environment.
Conclusions and perspectives are provided in Section \ref{sec:conc}.

\section{Gaussian memory channels}\label{sec:gaus}

In this Section we review the model of {\it Gaussian memory channels} introduced in Ref.'s \cite{lupo2010capacities,lupo2010memory}. We closely  follow their analysis showing that these memory channels can be reduced to a collection of memoryless channels by some appropriate
encoding and decoding unitary operations.

\subsection{ Quantum attenuators and amplifiers}
The  building blocks of our analysis are single mode quantum attenuators and amplifiers \cite{cerf2007quantum,weedbrook2012gaussian}, that we have presented in Section \ref{secattampl}. Let us consider a continuous variable bosonic system \cite{braunstein2005quantum} described by the creation and annihilation
operators $a$ and $a^\dag$ and another mode described by $a^E$ and $a^{E \dag}$ associated to the environment. We focus on two important Gaussian unitaries,
\begin{subequations}
\begin{eqnarray}
U_\kappa&=&e^{\arctan\sqrt{\frac{1-\kappa}{\kappa}}\left(a\, a^{E\dag} - a^\dag a^E\right)}\;,\qquad 0\leq\kappa\leq1\;,\\
U_\kappa &=&e^{\mathrm{arctanh}\sqrt{\frac{\kappa-1}{\kappa}}\left(a^\dag a^{E \dag} - a\, a^E \right)}\;,\qquad\kappa\geq1\;,
\end{eqnarray}
\end{subequations}
corresponding to the beamsplitter and the two-mode squeezing operations, respectively.
Their action on the annihilation operator is
\begin{subequations}
\begin{eqnarray}
U_\kappa ^\dag a U_\kappa &=&\sqrt{\kappa}\; a - \sqrt{1- \kappa}\; a^E\;,\qquad0\leq\kappa\leq1\;, \label{U1} \\
U_\kappa^\dag a U_\kappa&=&\sqrt{\kappa}\; a + \sqrt{\kappa-1}\; a^{E \dag}\;,\qquad\kappa\geq1\;. \label{U2}
\end{eqnarray}
\end{subequations}
If the environment is in a Gaussian thermal state
\begin{equation}
\rho_E= e^{- \beta \hbar \omega a^{E \dag} a^E}\left/ \mathrm{Tr}\left[ e^{- \beta \omega \hbar a^{E \dag} a^E}\right]\right.
\end{equation}
with mean photon number
\begin{equation}
N=\mathrm{Tr}\left[a^{E \dag} a^E \rho\right]= \left(e^{\beta \hbar \omega}-1\right)^{-1}\;,
\end{equation}
applying the unitaries \eqref{U1} and \eqref{U2} and tracing out the environment, we get

\begin{equation}
\mathcal E_\kappa(\rho)= \mathrm{Tr}_E \left[ U_\kappa (\rho  \otimes \rho_E) U_\kappa^\dag\right].
\end{equation}
This generates two different gauge-covariant channels depending on whether $\kappa$ is less or larger than 1. For $\kappa \in [0,1]$  the channel corresponds to a thermal attenuator, while for
$\kappa >1$ the channel is a thermal amplifier. In both cases the classical capacity has been recently determined in \cite{giovannetti2014ultimate}. Under the input energy constraint  $\mathrm{Tr}\left[a^\dag a \rho\right] \leqslant E $, the capacities of the   attenuator and of the amplifier are obtainable via a Gaussian encoding and are given by \cite{giovannetti2014ultimate} (in nats for channel use):
\begin{subequations}
\begin{align}
C_{\kappa\in [0,1]}&=g[\kappa E + (1-\kappa) N] - g[(1-\kappa) N], \label{capbs}\\
C_{\kappa>1}&=g[\kappa E + (\kappa-1) (N+1)]  - g[(\kappa-1) (N+1)]\label{capamp}\;,
\end{align}
\end{subequations}
where $g(x)=(x+1)\ln(x+1)-x\ln(x)$.

\subsection{Gaussian memory channels}

In order to include memory effects we follow the model introduced in \cite{lupo2010capacities, lupo2010memory} and schematically shown in Fig.\ \ref{model}.
In addition to the degrees of freedom of the system and of the thermal environment we introduce a ``memory'' described by the bosonic operators $a^M$ and $a^{M \dag}$.
The channel acts in the following way: as a first step the memory is mixed with the environment via a beamsplitter of transmissivity $\mu$,
\begin{eqnarray}
\tilde a^M&=&\sqrt{\mu}\; a^M + \sqrt{1- \mu}\; a^E.
\end{eqnarray}
The outcome state is used as an effective environment for the quantum attenuator or alternatively the quantum amplifier. More precisely, the second step consists in applying the unitary \eqref{U1} or \eqref{U2} to the product state of the system and of the effective environment,
 \begin{subequations}
 \begin{eqnarray}
 a' &=&\sqrt{\kappa}\; a - \sqrt{1- \kappa}\; \tilde a^M, \quad  \kappa  \in [0,1], \\
 a' &=&\sqrt{\kappa}\; a + \sqrt{\kappa-1}\; \tilde a^{M \dag}, \quad  \kappa>1.
\end{eqnarray}
\end{subequations}
The second port of the attenuator or amplifier is given by the corresponding complementary channel,
 \begin{subequations}
 \begin{eqnarray}
 a^{M'} &=&\sqrt{\kappa} \;\tilde a^M + \sqrt{1- \kappa}\; a, \quad  \kappa  \in [0,1], \\
 a^{M'} &=&\sqrt{\kappa} \;\tilde a^M + \sqrt{\kappa-1}\; \tilde a^\dag, \quad  \kappa>1.
\end{eqnarray}
\end{subequations}
The complementary mode described by the annihilation operator $a^{M'}$ contains a fraction of the amplitudes of the input state, and represents the updated state of the memory, {\it i.e.} in the next use of the channel,
the mode $a^{M'}$ will play the role of the previous memory operator $a^M$. Once the initial states of the memory and
of the environment are specified, the action of the channel after $j$ uses is completely determined and can be computed
recursively. The explicit formula for the $j$th output mode can be found in \cite{lupo2010memory} and is not repeated here.
What is important is just the structure of the equations
\begin{subequations}
\begin{align}
a_j'= \sum_{h=1}^{j-1} A_{jh} \, a_h - \sum_{h=0}^{j} E_{jh} \, a^E_h \, , \quad & \kappa  \in [0,1] \, , \label{outputmodes}\\
a_j' = \sum_{h=1}^{j-1}  A_{jh} \, a_h +\sum_{h=0}^{j}  E_{jh} \, a_h^{E \dag} \, ,
\quad & \kappa > 1 \, ,
\end{align}
\end{subequations}
where $A$, $E$, are real matrices and the initial state of the memory has been identified with an additional mode of the environment $a^M=a^E_0$.
 Moreover  the following identities hold
\begin{subequations}
\begin{align}
\sum_{k=1}^n \left(A_{ik}A_{jk}+E_{ik}E_{jk}\right)&=\delta_{ij},\quad \kappa  \in [0,1],\label{relea}\\
\sum_{k=1}^n \left(A_{ik}A_{jk}-E_{ik}E_{jk}\right)&=\delta_{ij}, \quad \kappa>1.
\end{align}
\end{subequations}
This implies that there exist some orthogonal matrices $O,O',O''$ realizing the following singular value decompositions
\cite{lupo2010memory}:
\begin{subequations}
\begin{eqnarray}
A_{jh} &=& \sum_{j'=1}^n O_{jj'} \, \sqrt{\eta^{(n)}_{j'}} \, O'_{j'h}\, , \\
E_{jh} &=& \sum_{j'=1}^n O_{jj'} \, \sqrt{\left|\eta^{(n)}_{j'}-1\right|} \,
O''_{j'h} \,,
\end{eqnarray}
\end{subequations}
where $\eta^{(n)}_j$ are positive real numbers and the matrix $O$ is the same in both decompositions.
In terms of the following set of collective modes:
\begin{subequations}
\begin{eqnarray}
\mathrm{a'}_j &:=& \sum_{j'=1}^n O_{j'j} \, a'_{j'} \, ,  \\
\mathrm{a}_j &:=& \sum_{j'} O'_{jj'} \, a_{j'} \, , \label{a_collective}\\
\mathrm{a}^E_j &:=& \sum_{j'} O''_{jj'} \, a^E_{j'} \, ,
\end{eqnarray}
\end{subequations}
the memory channel is diagonalized into $n$ independent channels,
\begin{subequations}
\begin{align}
\mathrm{a}'_j  & = \sqrt{\eta_j^{(n)}} \,  \mathrm{a}_j -
\sqrt{1-\eta_j^{(n)}} \, \mathrm{a}^E_j \, , \quad \kappa  \in [0,1] \, , \\
\mathrm{a}'_j  & = \sqrt{\eta_j^{(n)}} \,  \mathrm{a}_j +
\sqrt{\eta_j^{(n)}-1} \, \mathrm{a}_j^{E \,\dag} \, , \quad \kappa > 1
\, .
\end{align}
\end{subequations}
In particular, if we focus on the physically relevant case in which all the modes of the environment (and the initial memory mode) are in the same thermal state with a given mean photon number $N$, the modes $\left\{ \mathrm{a}^E_j \right\}$ remain in factorized thermal states and one can conclude that the memory channel applied $n$ times is unitarily equivalent to $n$ independent memoryless attenuators or amplifiers,
\begin{equation}
\Phi_n= \mathcal E_{\eta_1^{(n)}}^{N} \otimes  \mathcal E_{\eta_2^{(n)}}^{N} \dots \otimes   \mathcal E_{\eta_n^{(n)}}^{N} .  \label{factor}
\end{equation}

An important feature of the canonical transformation \eqref{a_collective} is that annihilation operators $a_j$
are not mixed with creation operators $a_{j'}^\dag$. This means that the operation is passive, {\it i.e.}\ it does not change the total energy of the input modes and so the capacity with constrained input energy is the same for the diagonalized channel and the original one.

\subsection{Limit of infinite iterations}
In order to compute the capacity we need to take the limit of infinite iterations of the memory channel.
In virtue of the previous factorization into independent channels, the capacity will depend only
on the asymptotic distribution of the gain parameters $\eta_j^{(n)}$ appearing in \eqref{factor}, in the limit of $n \rightarrow \infty$.
The set of gain parameters $\eta_j^{(n)}$ can be computed as the eigenvalues of the matrix
\begin{equation}
M^{(n)} := A A^\dag \, .
\end{equation}
The entries of the matrix $M$ can be computed from the explicit values of $A$ \cite{lupo2010memory}, obtaining
\begin{equation}\label{theseq}
M^{(n)}_{jj'} = \delta_{jj'} + \left(\kappa_{jj'}-1\right)
\sqrt{\mu\kappa}^{|j-j'|} \, ,
\end{equation}
where
\begin{equation}
\kappa_{jj'} := \kappa + \mu(\kappa-1)^2
\sum_{h=0}^{\min{\{j,j'\}}-2} (\mu\kappa)^h \, .
\end{equation}
The asymptotic behavior of the eigenvalues is different according to whether the combination $\mu\kappa$ is greater or lower than one.
Below threshold, {\it i.e.}\ for $\mu\kappa<1$ the sequence of matrices $M^{(n)}$ is {\it
asymptotically equivalent} \cite{gray2006toeplitz} to the (infinite)
Toeplitz matrix $M^{(\infty)}$, given by
\begin{equation}
M_{jj'}^{(\infty)} := M_{j-j'}^{(\infty)} = \delta_{jj'} - \frac{(1-\mu)(1-\kappa)}{1-\kappa\mu}\sqrt{\mu\kappa}^{|j-j'|} \, .
\end{equation}

We can now exploit the full power of the Toeplitz matrices theory (see Ref. \cite{gray2006toeplitz} for more details): the Szeg\"o theorem \cite{gray2006toeplitz} states that,
for any smooth function $F$, we have
\begin{equation}\label{szego}
\lim_{n\to\infty} \frac{1}{n} \sum_{j=1}^n F\left[\eta^{(n)}_j\right] = \int_0^{2\pi}
\frac{dz}{2\pi} F[\eta(z)] \, ,
\end{equation}
where the function $\eta(z)$ is the Fourier transform of the elements of the matrix $M^{(\infty)}$, \emph{i.e.}
\begin{equation}
\eta(z) = \sum_{j=-\infty}^\infty M^{(\infty)}_{j} e^{iz j/2} = \frac{\kappa+\mu-2\sqrt{\kappa\mu}\cos\frac{z}{2}}{1+\kappa\mu-2\sqrt{\kappa\mu}\cos\frac{z}{2}} \label{monospectrum}\, ,
\end{equation}
with $z\in[ 0, 2\pi]$ (see Fig.s \ref{effectivetrans1}, \ref{effectivetrans2}).
\begin{figure}[t]
\includegraphics[width=\textwidth]{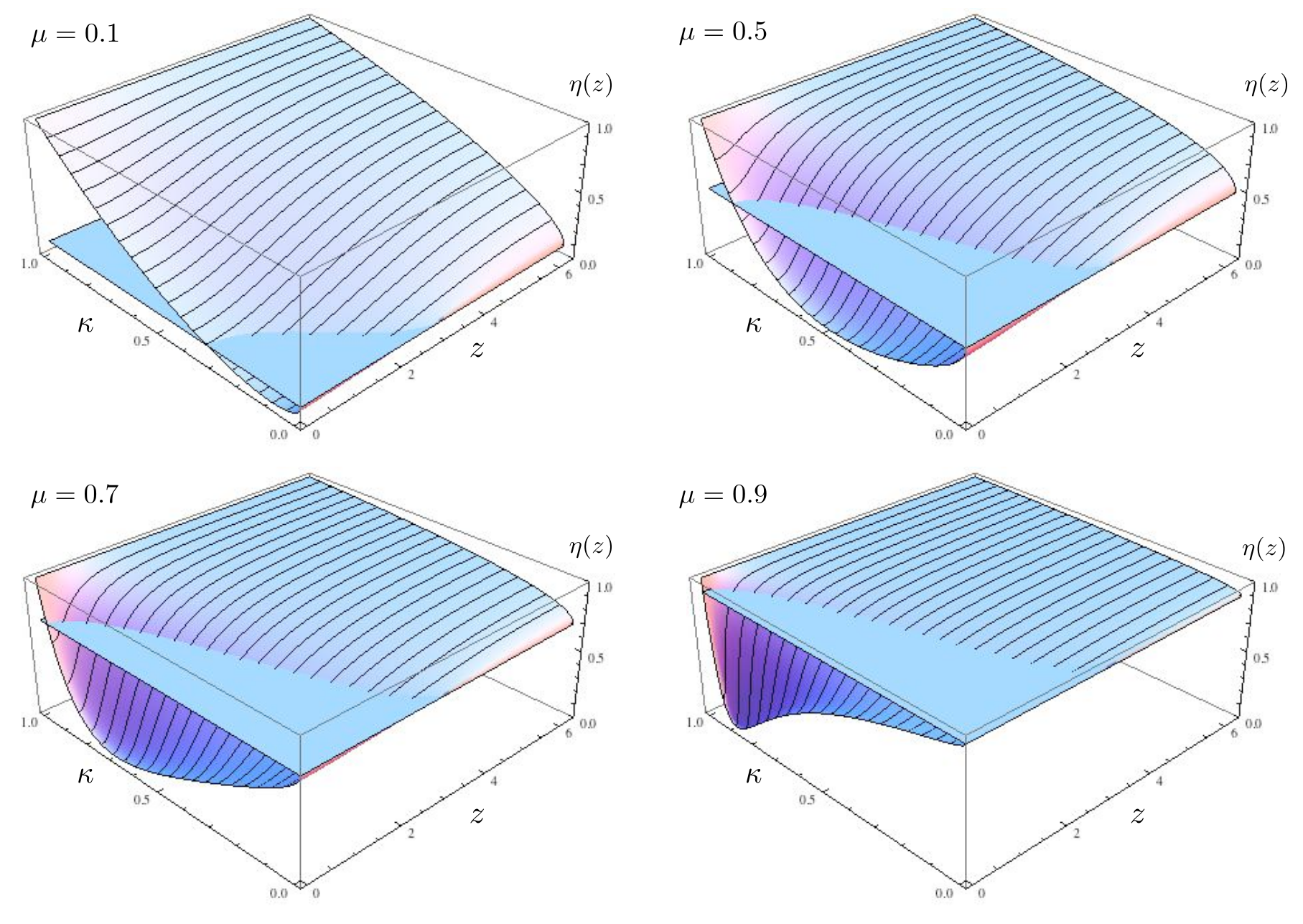}
\caption{Asymptotic spectrum $\eta(z)$ of Eq. \eqref{monospectrum}  for the case where ${\cal E}_{\kappa}$ of Fig. \ref{model}
represents an attenuator
channel (i.e.  $\kappa\in[0,1]$). In this case the system is operated  below the threshold limit $\mu \kappa \leq1$ (no divergency in the spectrum occurs) and the
 values of $\eta(z)$ are always bounded below $1$ (i.e. the channels ${\cal E}_{\eta_j^{(n)}}^{N}$ entering the decomposition \eqref{factor} are attenuators).
In each plot the plane represents the value of $\mu$. We notice that for $\kappa = 0$ one has $\eta(z)= \mu$, while for $\kappa =1$, $\eta(z)=1$ independently from  $\eta$.
}\label{effectivetrans1}
\end{figure}
\begin{figure}[t]
\includegraphics[width=\textwidth]{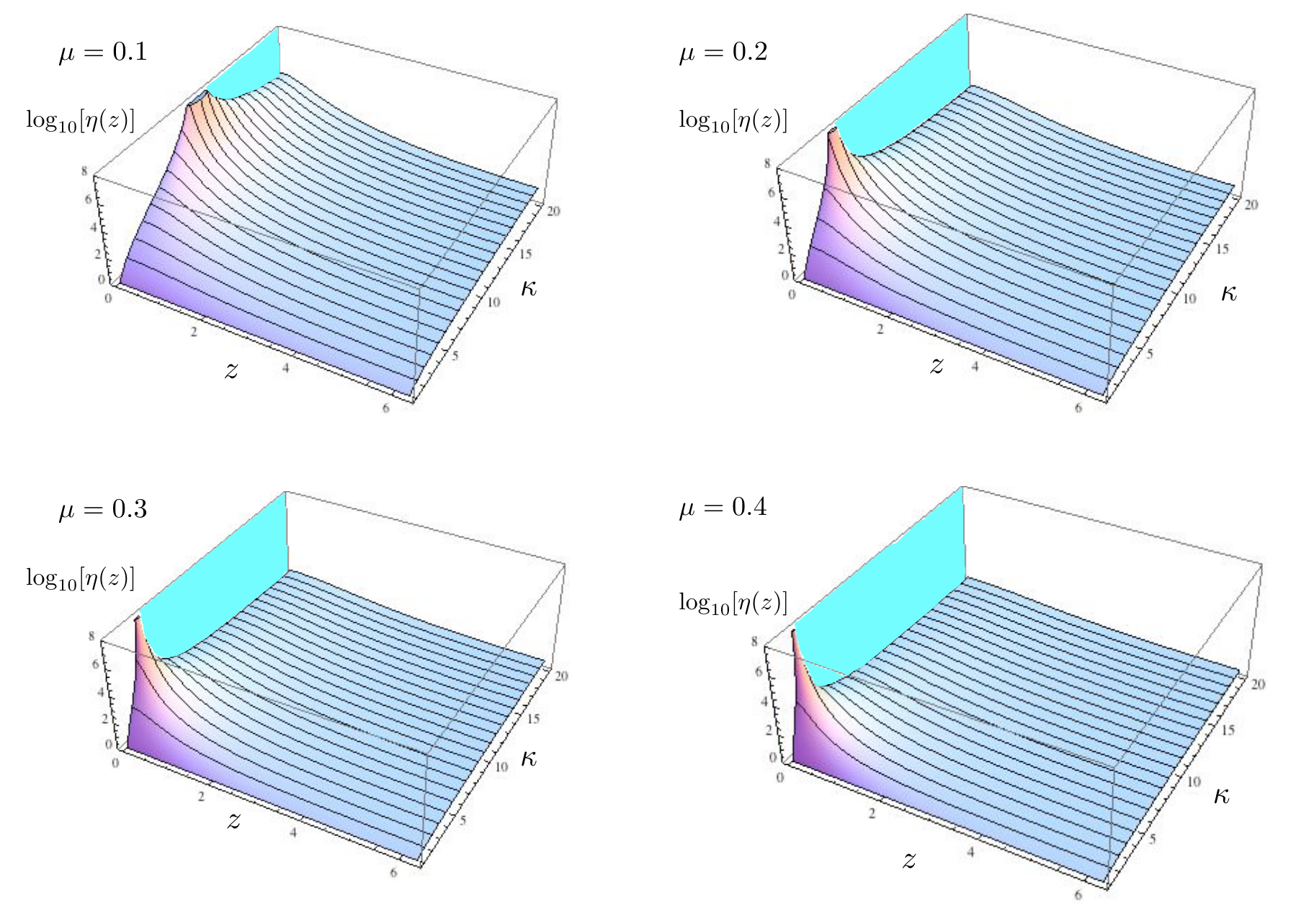}
\caption{Logarithm of the asymptotic spectrum $\eta(z)$ of Eq. \eqref{monospectrum}  for the case where ${\cal E}_{\kappa}$ of Fig. \ref{model}
represents an amplifier
channel (i.e.  $\kappa\geq 1$). In this case the
 values of $\eta(z)$ are always larger than $1$ meaning that the channels ${\cal E}_{\eta_j^{(n)}}^{N}$ entering the decomposition \eqref{factor} describe
 amplifiers. Above threshold (i.e. $\mu \kappa \geq 1$) the system acquires also a divergent, singular eigenvalue represented in the picture by the cyan vertical region.
}\label{effectivetrans2}
\end{figure}

Above threshold, \emph{i.e.} for $\mu\kappa>1$, the sequence of matrices
does not converge. Nonetheless, the divergence can be ascribed to a single diverging eigenvalue, and it is possible to rewrite Eq. \eqref{theseq} as the sum of two terms:
\begin{equation}\label{commuting}
M^{(n)} = c^{(n)} P^{(n)} + \Delta{M}^{(n)} \, ,
\end{equation}
where the $P^{(n)}$ are rank one projectors, $c^{(n)}$ is a
diverging sequence of positive real numbers, and $\Delta{M}^{(n)}$
is a sequence of matrices which asymptotically converges towards the
(infinite) Toeplitz matrix $\Delta{M}^{(\infty)}$, given by
\begin{equation}\label{Mrescaled}
\Delta{M}^{(\infty)}_{jj'} = \delta_{jj'}+\frac{(1-\mu)(\kappa-1)}{\mu\kappa-1}
\frac{1}{\sqrt{ \mu\kappa }^{|j-j'|}} \, .
\end{equation}
(See the Appendix of \cite{lupo2010memory} for the expressions of $P^{(n)}$ and $\Delta{M}^{(n)}$). It is possible to prove that for $n\rightarrow
\infty$, the matrices $P^{(n)}$  and  $\Delta{M}^{(n)}$  commute, and we can conclude that, as anticipated,
 the spectrum of the matrix \eqref{theseq} is
asymptotically composed of only one diverging eigenvalue [corresponding to the
diverging sequence $c^{(n)}$] and of the asymptotic spectrum of the infinite Toeplitz matrix \eqref{Mrescaled}.
As for the below threshold case, the latter is given by the Fourier transform of the matrix elements, where the Fourier transform $\eta(z)$ is given by Eq.\ \eqref{monospectrum} analytically continued to the region $\mu\kappa>1$.

Finally it remains to consider the case $\mu\kappa=1$. At this threshold,  the matrix $M^{(n)}$ can be
expressed as
\begin{equation}\label{theseqTHR}
M^{(n)}_{jj'} = \delta_{jj'} + (1-\mu) + \frac{(1-\mu)^2}{\mu} \min\{j,j'\} \;.
\end{equation}
In this case it appears not feasible to extract the asymptotic spectrum. From a practical point of view however this is not a real problem since any real physical channel will always fall into one of the two classes characterized by $\mu \kappa >1 $ or $\mu \kappa < 1$, respectively.

It is important to stress that for any $\mu\in[0,1]$, in the thermal attenuator case ($\kappa\in[0,1]$) all the channels in the asymptotic diagonal decomposition \eqref{factor} are also thermal attenuators, \emph{i.e.} $\eta(z)\in[0,1]$ for any $z\in[0,2\pi]$. The same happens in the amplifier case, \emph{i.e.} if $\kappa>1$ also $\eta(z)>1$ for any $z\in[0,2\pi]$.

\section{Capacities}\label{sec:capac}
In this Section we will compute the capacity of the memory channel model of Section \ref{sec:gaus}, with the environment in a thermal multimode state with fixed temperature and associated mean photon number per mode $N$.

Let $\Phi_P$ be the mapping describing the input-output relations of the first $P$-channel uses of the model depicted in Fig. \ref{model}.
Since any input influences all the following outputs, its classical capacity cannot be directly computed as in Eq. \eqref{Ctens}.
Still, thanks to the fact that
$\Phi_P$ can be expressed as a tensor product of $P\gg 1$ independent maps
of effective transmissivities $\eta^{(P)}_j$ (see Eq. \eqref{factor}), a close formula for $C$ can be derived.
The fundamental observation here is that, even though in general the  $\eta^{(P)}_j$ will differ from each other,
 for large enough $P$ one can organize them into subgroups each containing a number of elements of order $P$, and characterized by an almost identical
 value of the transmissivity distributed according to the continuous function $\eta(z)$ of Eq. \eqref{monospectrum}.
Let us consider next the channel $\Phi_{2P}$. Its effective transmissivities are different, but they are taken from almost the same distribution, therefore we can write
\begin{equation}
\Phi_{2P}\simeq \Phi_P\otimes\Phi_P\;.
\end{equation}
Iterating, we get
\begin{equation}
\Phi_{\ell P}\simeq\Phi_P^{\otimes \ell}\;,
\end{equation}
and we have managed to express $\Phi_n$ for $n\to\infty$ as the limit of infinite uses of a fixed memoryless channel.

Let's formalize this procedure: we fix $P\gg1$, and take $n=\ell P$. We label the eigenvalues $\eta^{(n)}_j$ in increasing order ($\eta^{(n)}_j\leqslant\eta^{(n)}_{j'}$ if $j<j'$), and divide them into $P$ groups, the $p$th one being made by $\left\{\left.\eta^{(n)}_j\right|(p-1)\ell < j \leqslant
p\ell \right\}$.
Let $\underline{\eta}^{(P)}_p$ and $\overline{\eta}^{(P)}_p$ be respectively the infimum and the supremum of the $p$th group over all $\ell$:
\begin{subequations}
\begin{align}
\underline{\eta}^{(P)}_p = & \inf_\ell \inf_{(p-1)\ell < j \leqslant
p\ell} \eta^{(\ell P)}_j \, , \\
\overline{\eta}^{(P)}_p = & \sup_\ell \sup_{(p-1)\ell < j \leqslant
p\ell} \eta^{(\ell P)}_j \, .
\end{align}
\end{subequations}
Now, the two collections of transmissivities $\underline{\eta}^{(P)}_p$ and $\overline{\eta}^{(P)}_p$ identify two memoryless $P$-mode gaussian channels.
Let $\phi(\eta,N)$ be the Gaussian attenuator / amplifier with transmissivity $\eta\geqslant0$, mixing the input with a thermal state with mean photon number $N$. Remembering that $\phi(\eta,N)\phi(\eta',N)=\phi(\eta\eta',N)$ and that the capacity decreases under composition of channels, if we replace each transmissivity with the supremum or the infimum of its group, the capacity will increase or decrease, respectively.
Each group has exactly $\ell$ eigenvalues, so the $n$ uses of the single mode memory channel can be compared to $\ell$ uses of these two $P$-mode channels, and letting $\ell\to\infty$ we can bound the capacity with
\begin{equation}\label{IMPOC}
\underline{C}^{(P)} \leqslant C \leqslant \overline{C}^{(P)} \, ,
\end{equation}
where $\underline{C}^{(P)}$ and $\overline{C}^{(P)}$ are precisely the capacities of these $P$-mode channels with transmissivities $\left\{ \underline{\eta}^{(P)}_p \right\}$, $\left\{ \overline{\eta}^{(P)}_p \right\}$.
As customary, to keep them finite we impose a constraint on the input mean energy:
\begin{equation}\label{constraint}
\frac{1}{n} \sum_{j=1}^n \mathrm{Tr}\left[ \rho^{(n)} a^\dag_j a_j \right]
\leqslant E \, ,
\end{equation}
where $n$ is the number of uses of the channel and $\rho^{(n)}$ is the joint input density matrix.
As already stressed, this constraint looks identically if expressed in terms of the collective modes \eqref{a_collective}, since they are related to the original ones by an orthogonal matrix.
\subsection{Thermal attenuator}
Let us first consider the case of the attenuating thermal memory channel, \emph{i.e.} $\kappa\leq1$.
It has recently been proven \cite{giovannetti2015solution} that the $\chi$ capacity of successive uses of Gaussian gauge-covariant channels is additive also if they are different:
\begin{equation}
\chi(\Phi_1\otimes\ldots\otimes\Phi_n)=\chi(\Phi_1)+\ldots+\chi(\Phi_n)\;.
\end{equation}
Then the capacity of our two $P$-mode channels can be simply obtained by summing \eqref{capbs} over all modes, yielding the bounds
\begin{subequations}
\label{boundC}
\begin{eqnarray}
\underline{C}^{(P)} & =& \frac{1}{P} \sum_{p=1}^P\left( g\left[\underline{\eta}^{(P)}_p \underline{N}_p+\left(1-\underline{\eta}^{(P)}_p\right)N_T\right]-g\left[\left(1-\underline{\eta}^{(P)}_p\right)N_T\right]\right) \; ,\\
\overline{C}^{(P)}  & =& \frac{1}{P} \sum_{p=1}^P\left(g\left[\overline{\eta}^{(P)}_p \overline{N}_p+\left(1-\overline{\eta}^{(P)}_p\right)N_T\right]-g\left[\left(1-\overline{\eta}^{(P)}_p\right)N_T\right]\right) \; ,
\end{eqnarray}
\end{subequations}
where
\begin{equation}
g(x)=(x+1)\ln(x+1)-x\ln x\;,
\end{equation}
and the parameters $\underline{N}_p$, $\overline{N}_p$ describe the optimal distribution of the mean photon number of the modes and must satisfy the constraints
\begin{subequations}
\begin{align}
&\underline{N}_p,\,\overline{N}_p\geqslant\label{positive}
0\\
&\frac{1}{P}\sum_{p=1}^P \underline{N}_p=\frac{1}{P}\sum_{p=1}^P \overline{N}_p=E\;.
\end{align}
\end{subequations}
If the positivity constraint \eqref{positive} were not there, these optimal values could be computed with the Lagrange multiplier method, yielding
\begin{equation}\label{Lagrange}
\underline{N}_p = \frac{1}{\underline{\eta}^{(P)}_p}\left(\frac{1}{e^{\underline{\lambda}/\underline{\eta}^{(P)}_p} - 1}-\left(1-\underline{\eta}^{(P)}_p\right)N\right)\;,
\end{equation}
and the analog for $\overline{N}_p$.
Taking the limit $P\to\infty$ and applying \eqref{szego}, the two bounds converge to the same quantity and we get
\begin{equation}\label{classical}
C = \int_0^{2\pi} \frac{dz}{2\pi} \left(g\left[\eta(z) N(z)+\left(1-\eta(z)\right)N\right]-g\left[\left(1-\eta(z)\right)N\right]\right) \quad \kappa\in[0,1] \, .
\end{equation}
In the zero temperature case $N=0$ the expression \eqref{Lagrange} is positive definite. As $N$ grows, \eqref{Lagrange} is no more guaranteed to be positive, and we have to impose this constraint by hand. Then, above a certain critical temperature the optimal energy distribution $N(z)$ will vanish for $0\leqslant z\leqslant z_0$. Physically, this means that it is convenient to concentrate all the energy on a fraction $\frac{2\pi-z_0}{2\pi}$ of all the beamsplitters. We will show in Section \ref{sec:lagr} that to determine the optimal energy distribution we can still use the Lagrange multipliers, with the only caveat that $N(z)$ is given now by the positive part of what we would have got without the energy constraint:
\begin{equation}
N(z)=\frac{1}{\eta(z)}\left(\frac{1}{e^\frac{\lambda}{\eta(z)}-1}-(1-\eta(z))N\right)^+\;,\label{nzbs}
\end{equation}
where $f^+(z)=[ f(z) + |f(z)| ]/2$ is the positive part of $f$. The energy constraint reads as expected
\begin{equation}
\int_0^{2\pi} \frac{dz}{2\pi} N(z) = E \, .\label{enconstr}
\end{equation}
We notice that the function $\eta$ is symmetric in $\mu$ and $\kappa$, \emph{i.e.}
\begin{equation}
\eta(\mu,\kappa,z)=\eta(\mu'=\kappa,\;\kappa'=\mu,\;z)\;.
\end{equation}
Since $\mu$ and $\kappa$ appear in the computation of the capacity only through $\eta$, the channel with parameters $(\mu',\kappa')$ has the same capacity of the original one, \emph{i.e.} we can exchange the memory with the transmissivity. Then, varying the memory with fixed transmissivity has the same effect on the capacity as varying the transmissivity for fixed memory. In Fig. \ref{capacityfig} we report the capacity of the channel as a function of the temperature.

\subsection{Thermal amplifier}
The minimum output entropy conjecture lets us compute the capacity also in the amplifier case $\kappa>1$. Now, all the transmissivities are greater than 1, so the capacity decreases as they increase and the two bounds \eqref{boundC} are inverted:
\begin{subequations}
\begin{align}
\overline{C}^{(P)}  =  \frac{1}{P} \sum_{p=1}^P\left( g\left[\underline{\eta}^{(P)}_p \underline{N}_p+\left(\underline{\eta}^{(P)}_p-1\right)\left(N+1\right)\right]-g\left[\left(\underline{\eta}^{(P)}_p-1\right)\left(N+1\right)\right]\right) \; ,\\
\underline{C}^{(P)}   =  \frac{1}{P} \sum_{p=1}^P
\left(g\left[\overline{\eta}^{(P)}_p \overline{N}_p+\left(\overline{\eta}^{(P)}_p-1\right)\left(N+1\right)\right]-g\left[\left(\overline{\eta}^{(P)}_p-1\right)\left(N+1\right)\right]\right) \; .
\end{align}
\end{subequations}
As in the thermal attenuator case, we take the limit $P\to\infty$. Above the threshold ($\mu\kappa>1$) one of the eigenvalues is diverging but, being only one, it does not contribute in the limit, so the capacity is still fully determined by the infinite Toeplitz matrix $\Delta M^{(\infty)}$ yielding
\begin{equation}
C = \int_0^{2\pi} \frac{dz}{2\pi} \left(g\left[\eta(z) N(z)+\left(\eta(z)-1\right)\left(N+1\right)\right]-g\left[\left(\eta(z)-1\right)\left(N+1\right)\right]\right) \label{classicalamp}\; ,
\end{equation}
where as before $N(z)$ is determined by the Lagrange multiplier method, with the caveat of taking the positive part of the resulting function
\begin{equation}
N(z)=\frac{1}{\eta(z)}\left(\frac{1}{e^\frac{\lambda}{\eta(z)}-1}-(\eta(z)-1)(N+1)\right)^+\;,\label{nzamp}
\end{equation}
and with the same constraint on the mean energy
\begin{equation}
\int_0^{2\pi} \frac{dz}{2\pi} N(z) = E \, .
\end{equation}
We notice that in \eqref{nzamp} the positive part is at least in principle necessary also in the case of zero temperature.

Also the amplifier enjoys a sort of duality between $\kappa$ and $\mu$: the function $\eta(\mu,\kappa,z)$ satisfies
\begin{equation}
\eta(\mu,\kappa,z)=\eta\left(\mu'=\frac{1}{\kappa},\;\kappa'=\frac{1}{\mu},\;z\right)\;.
\end{equation}
Noticing that $\kappa'\mu'=\frac{1}{\kappa\mu}$, this relation associates to any channel identified by $(\mu,\kappa)$ above threshold ($\mu\kappa>1$) the new one identified by $(\mu',\kappa')$, which is below threshold.
Then, to investigate the capacity regions as function of the parameters, it is sufficient to consider only the channels below threshold.

In Fig. \ref{capacityfig} we report the capacity of the thermal memory channel as a function of the thermal photon number $N$. As for the thermal attenuator, the capacity is degraded by the temperature and enhanced by the memory.
\begin{figure}[t]
  \includegraphics[width=0.9\textwidth]{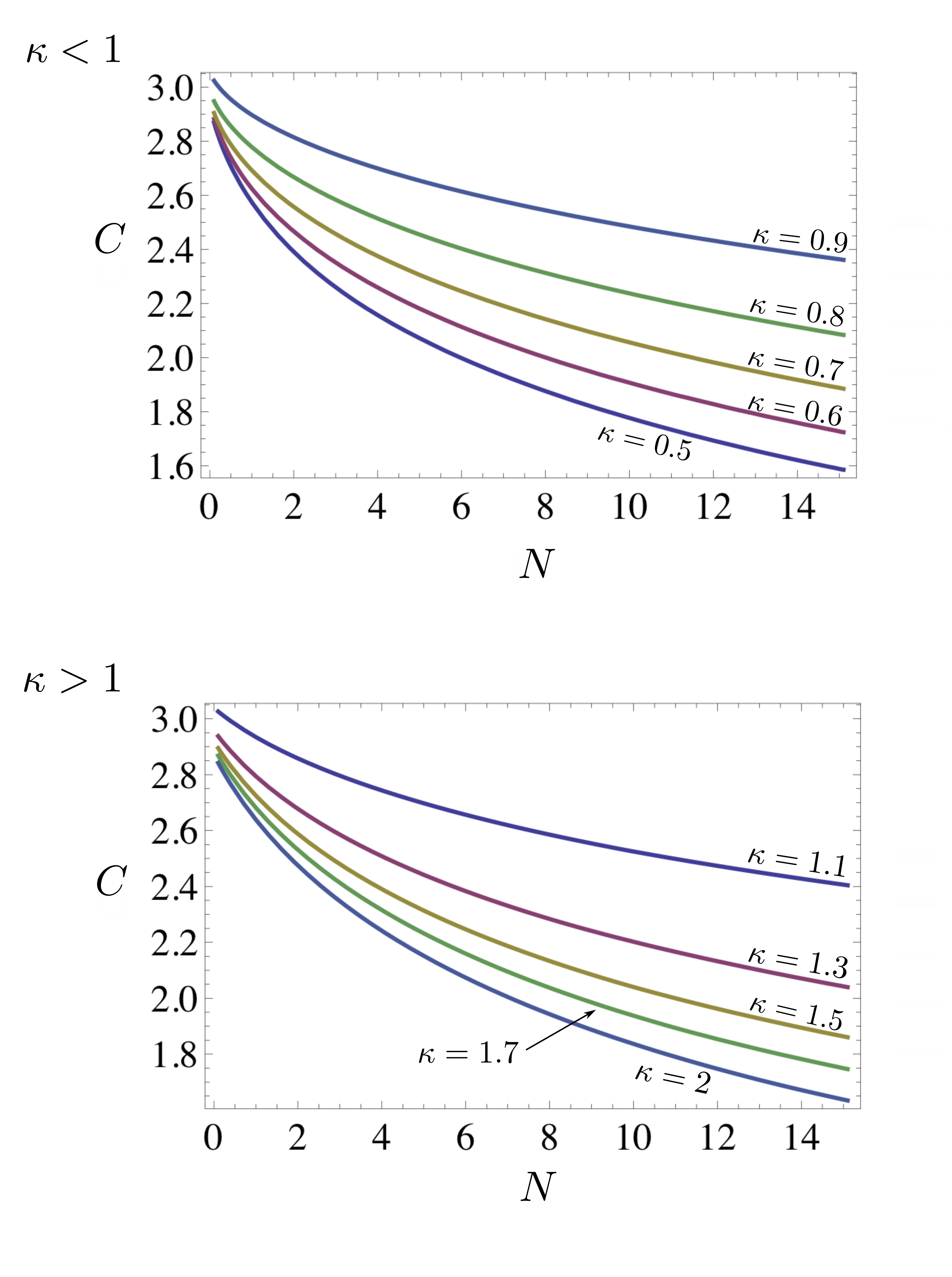}
  \caption{Capacity (in nats / channel use) as a function of the thermal photon number $N$ for $\mu=0.8$ and mean input energy $E=8$ for various values of the transmissivity $\kappa$.  In particular the upper panel refers to the case where the map ${\cal E}_{\kappa}$ of Fig. \ref{model}  is  an attenuator (i.e. $\kappa \in [0,1]$), while the lower panel to the case where  ${\cal E}_{\kappa}$ is an amplifier ($\kappa \geq 1$).    As expected, the capacity is degraded by the temperature and enhanced if the transmissivity is close to unity.}
\label{capacityfig}
\end{figure}

\subsection{Optimal encoding and decoding}
We have seen how the optimal encoding is a coherent-state one with Gaussian weights in the normal mode decomposition $\{\mathrm{a}_j\}$ introduced in Eq. \eqref{a_collective}  in which the channel is diagonal.
They  are related to the input modes $\{ a_j \}$ by a passive orthogonal transformation, and since such transformations send coherent states to coherent states, the latter are also not entangled. However, since the optimal coding requires a non-uniform energy distribution among the $\{\mathrm{a}_j\}$, the  modes $\{ a_j \}$ will be classically correlated. Then this optimal coding can be achieved by independent uses of the channel, but the probabilities of choosing a particular coherent state will be correlated among the various inputs.

Since also in the case of multiple uses of a fixed memoryless channel the optimal decoding requires measures entangled among the various outputs \cite{holevo2013quantum}, in our case the preprocessing with an orthogonal passive transformation to convert the physical basis into the diagonal one does not add further complications to the procedure.

Above threshold ($\mu\kappa>1$), the diverging eigenvalue signals the presence of an input mode that gets amplified by a factor which increases indefinitely with the number of channel uses. Then, even if such mode is left in the vacuum, the corresponding output mode will have a very high energy, and could in principle lead the beamsplitter used in the decoding procedure to a nonlinear regime. The experimentally achievable capacity could then be lower than the theoretical bound, depending on the stability of the decoding device when dealing with high energy inputs.

\subsection{Trivial cases}
There are some particular values of the parameters for which the capacity can be computed analytically.
\begin{itemize}
\item $\kappa=1$ or $\mu=1$

This case corresponds to the identity channel ($\kappa=1$) or to the perfect memory channel ($\mu=1$). In both cases, $\eta(z)=1$ and the capacity is the one of the identity channel with mean energy $E$:
\begin{equation}
C=g(E)\;.
\end{equation}
An intuitive explanation of the result for the perfect memory channel can be given: since $\mu=1$, the first $n$ output modes $\left\{a'_i\right\}$ are a linear combination only of the first $n$ input modes $\left\{a_i\right\}$ and the first memory mode $a_1^M$, and the environment modes $\left\{a_i^E\right\}$ do not play any role. Now we can imagine that in the large $n$ limit the mode $a_1^M$ is no more relevant, and the channel behaves almost as if the output modes were an invertible linear combination of the input ones. This combination can be inverted in the decoding, recovering (almost) the identity channel.

\item $\kappa\to\infty$

This is the case of infinite amplification. Here $\eta(z)=\frac{1}{\mu}$, and the capacity is the one of the amplifier with amplification factor $\frac{1}{\mu}$
\begin{equation}
C=g\left(\frac{E}{\mu}+\frac{1-\mu}{\mu}\left(N+1\right)\right)-g\left(\frac{1-\mu}{\mu}\left(N+1\right)\right)\;.
\end{equation}

\item $\kappa=0$

This is the case of infinite attenuation, in which all the signal is provided by the memory. Here the $n$-th input mode $a_n$ does not influence at all the $n$-th output $a'_n$, but it directly mixes with the $n+1$-th environmental mode $a_{n+1}^E$ through the beamsplitter with transmissivity $\mu$ to give the $n+1$-th output $a'_{n+1}$. Then the only memory effect is a translation of the inputs, and the channel behaves as a thermal attenuator with transmissivity $\mu$. Indeed, as shown in Fig. \ref{effectivetrans1}, here $\eta(z)=\mu$, and the capacity matches the attenuator one \cite{giovannetti2014ultimate}:
\begin{equation}
C=g\left(\mu E+(1-\mu)N\right)-g\left((1-\mu)N\right)\;.
\end{equation}

\item $\mu=0$

This is the memoryless case, and the capacity is the one of the thermal attenuator / amplifier with transmissivity $\kappa$:
\begin{subequations}
\begin{align}
C&=g\left(\kappa E+(1-\kappa)N\right)-g\left((1-\kappa)N\right)\;,\\
C&=g\left(\kappa E+(\kappa-1)\left(N+1\right)\right)-g\left((\kappa-1)\left(N+1\right)\right)\;.
\end{align}
\end{subequations}
\end{itemize}

\subsection{Additive noise channel}
The one--mode additive noise channel adds to the covariance matrix $\sigma$ of the input state a multiple of the identity:
\begin{equation}
\sigma\mapsto\sigma+N_C\mathbb{I}\;.
\end{equation}
A beamsplitter of transmissivity $\eta$, mixing the input with a thermal state with mean photon number $N$, performs instead a convex combination of the corresponding covariance matrices:
\begin{equation}\label{bscm}
\sigma\mapsto\eta\sigma+(1-\eta)\left(N+\frac{1}{2}\right)\mathbb{I}\;.
\end{equation}
The additive noise channel can now be recovered in the limit $\eta\to1^-$ with the second addend of \eqref{bscm} kept fixed, i.e. with
\begin{equation}
(1-\eta)\left(N+\frac{1}{2}\right)=N_C\;,\qquad\eta\to1^-\;,\qquad N\to\infty\;.
\end{equation}
It is then natural to consider what happens to our model for the memory channel in the limit $N\to\infty$, $\kappa\to1^-$ with fixed $(1-\kappa)\left(N+\frac{1}{2}\right)=N_C$. We start from the expression \eqref{outputmodes} which expresses the output modes in terms of the input and the (thermal) environment. From the expressions for the matrices $A$ and $E$ in \cite{lupo2010memory} it is easy to show that, since they do not depend on $N$, their limit for $\kappa\to1$ are $A\to\mathbb{I}$ and $E\to0$, respectively. Physically, this happens because for $\kappa=1$ the channel is the identity and the output is equal to the input. We will now compute the expectation values of all the operators quadratic in the output modes, \emph{i.e.} the output covariance matrix.
We remember that, since the input and the environment are in a completely factorized state,
\begin{subequations}
\begin{eqnarray}
&\left\langle a_i a_j^E\right\rangle=\left\langle a_i^\dag a_j^E\right\rangle=\left\langle a^E_i a_j^E\right\rangle=0\;,\\
&\left\langle {a^E_i}^\dag a_j^E\right\rangle=N\delta_{ij}\;.
\end{eqnarray}
\end{subequations}
We have then
\begin{subequations}
\begin{eqnarray}
\left\langle a_i' a_j'\right\rangle&=&\left\langle a_i a_j\right\rangle\;,\\
\left\langle {a_i'}^\dag a_j'\right\rangle&=&\left\langle {a_i}^\dag a_j\right\rangle+\lim_{N\to\infty}N\sum_k E_{ik}E_{jk}\;,
\end{eqnarray}
\end{subequations}
where the limit is nontrivial since the matrix $E$ depends on $\kappa$, which changes with $N$. Recalling \eqref{relea}
\begin{equation}
AA^T+EE^T=\mathbb{I}\;,
\end{equation}
and from the expression for $AA^T=AA^\dag$ in \cite{lupo2010memory} it is easy to prove that
\begin{equation}
\lim_{N\to\infty}N\sum_k E_{ik}E_{jk}=N_C\mu^\frac{|i-j|}{2}\;,
\end{equation}
so
\begin{equation}
\left\langle {a_i'}^\dag a_j'\right\rangle=\left\langle {a_i}^\dag a_j\right\rangle+N_C\mu^\frac{|i-j|}{2}\label{additivenoise}\;.
\end{equation}
If we look only at a single output mode $a_i'$, throwing away all the others, \eqref{additivenoise} becomes
\begin{equation}
\left\langle {a_i'}^\dag a_i'\right\rangle=\left\langle {a_i}^\dag a_i\right\rangle+N_C\;,
\end{equation}
\emph{i.e.} the reduced channel exactly adds classical noise $N_C$.
However, for nonzero memory ($\mu>0$), $N_C\mu^\frac{|i-j|}{2}$ is nonzero also for $i\neq j$: the added noise is correlated among the various outputs, and the resulting channel is not simply the product of $n$ independent additive noise ones.
We expect this correlation to enhance the capacity: looking at the limit of our formula \eqref{classical}, we will see that it is effectively so. Let's look at this limit in the normal modes variables. Remembering that the environment associated to the operators $\mathrm{a}_j^E$ is still in a factorized thermal state with temperature $N$, we have
\begin{subequations}
\begin{eqnarray}
\left\langle \mathrm{a}_i' \mathrm{a}_j'\right\rangle&=&\left\langle \mathrm{a}_i \mathrm{a}_j\right\rangle\;,\\
\left\langle {\mathrm{a}_i'}^\dag \mathrm{a}_j'\right\rangle&=&\left\langle {\mathrm{a}_i}^\dag \mathrm{a}_j\right\rangle+\delta_{ij}\lim_{N\to\infty}N\left(1-\eta^{(n)}_i\right)\;,
\end{eqnarray}
\end{subequations}
and since
\begin{equation}
\lim_{N\to\infty}(1-\eta(z))N=\frac{N_C(1-\mu)}{1+\mu-2\sqrt{\mu}\cos\frac{z}{2}}\;,\label{distraddnoise}
\end{equation}
in the limit of infinite channel uses we get a factorized additive noise channel, but with the added noise depending on the mode and distributed according to \eqref{distraddnoise}. This model for an additive noise channel with memory coincides with the one considered in \cite{lupo2009forgetfulness,schafer2009capacity}, derived starting from correlated translations with Gaussian weights.

First, we notice that $\eta(z)$ does not depend on $N$, and $\lim_{\kappa\to1}\eta(z)=1$.
Let us compute the limit of the expression for $N(z)$ \eqref{nzbs}:
\begin{equation}
N(z)=\left(\frac{1}{e^\lambda-1}-\lim_{N\to\infty}(1-\eta(z))N\right)^+\;.
\end{equation}
From the expression for $\eta(z)$ \eqref{monospectrum} we can compute the limit

so that
\begin{equation}
N(z)=\left(\frac{1}{e^\lambda-1}-\frac{N_C(1-\mu)}{1+\mu-2\sqrt{\mu}\cos\frac{z}{2}}\right)^+\label{naddit}\;.
\end{equation}
For simplicity, we consider only the case in which the positive part in \eqref{naddit} is not needed.
The mean energy constraint \eqref{enconstr} becomes
\begin{equation}
\frac{1}{e^\lambda-1}=N_C+E\;,
\end{equation}
where we have used that
\begin{equation}
\int_0^{2\pi}\frac{1-\mu}{1+\mu-2\sqrt{\mu}\cos\frac{z}{2}}\frac{dz}{2\pi}=1\;,
\end{equation}
and we have for the positivity constraint on $N(z)$
\begin{equation}
E\geq\frac{2N_C\sqrt{\mu}}{1-\sqrt{\mu}}\;.
\end{equation}
Finally, we can compute the capacity taking the limit of \eqref{classical}:
\begin{equation}
C=g(E+N_C)-\int_0^{2\pi}g\left(\frac{N_C(1-\mu)}{1+\mu-2\sqrt{\mu}\cos\frac{z}{2}}\right)\;.\label{capaddn}
\end{equation}
Since $g(x)$ is concave, the LHS of \eqref{capaddn} decreases if we take the integral inside $g$, so
\begin{equation}
C\geq g(E+N_C)-g(N_C)\;.  \label{boundC2}
\end{equation}
The right-hand-side  of \eqref{boundC2} is exactly the capacity of the single mode additive noise channel, \emph{i.e.} the correlation of the added noise enhances the capacity as expected.

\section{Optimal energy distribution}\label{sec:lagr}
In this Section we will prove that the Lagrange multipliers method with the caveat of taking the positive part in \eqref{nzbs} and \eqref{nzamp} works also with the positivity constraint \eqref{positive}, and we will analyze the resulting optimal energy distribution $N(z)$.

\subsection{The proof}
The function $\eta(z)$ is increasing for the thermal attenuator ($\kappa<1$) and decreasing for the amplifier ($\kappa>1$), \emph{i.e.} the channel with transmissivity $\eta(z)$ always improves as $z$ increases.
For simplicity here we consider only the thermal attenuator case, the amplifier one being completely analogous.

Let $\widetilde{N}(z,w)$ be the Lagrange multipliers solution in the interval $w\leqslant z\leqslant2\pi$ which maximizes the capacity
\begin{equation}
C = \int_w^{2\pi} \frac{dz}{2\pi} \left(g\left[\eta(z) \widetilde{N}(z,w)+\left(\eta(z)-1\right)\left(N+1\right)\right]-g\left[\left(\eta(z)-1\right)\left(N+1\right)\right]\right)\label{capw}
\end{equation}
with the mean energy constraint
\begin{equation}
\int_{w}^{2\pi}\frac{dz}{2\pi}\widetilde{N}(z,w)dz=E\;,\label{constrn}
\end{equation}
where the integrals are restricted to $w\leqslant z\leqslant2\pi$ and we do not care about the positivity of $N(z,w)$.
Such solution is given by
\begin{equation}
\widetilde{N}(z,w)=\frac{1}{\eta(z)}\left(\frac{1}{e^\frac{\lambda}{\eta(z)}-1}-(1-\eta(z))N\right),
\end{equation}
where the multiplier $\lambda$ is determined by the constraint \eqref{constrn} (strictly speaking, with $\widetilde{N}(z,w)$ we mean the function analytically continued to the whole interval $0\leqslant z\leqslant2\pi$).

Let $N(z)$ be the optimal positive distribution of the photons. Since it is better to use more energy in the better channels, $N(z)$ must be increasing: if not, we could move a bit of energy from a bad channel to a better one with less energy, and this would increase the capacity.
Let $N(z)$ be zero for $0\leqslant z<z_0$, and strictly positive for $z_0<z\leqslant2\pi$.
In particular $N(z)$ is the optimal solution among all the functions equal to zero for $0\leqslant z<z_0$ and strictly positive for $z_0<z\leqslant2\pi$.
We consider all the infinitesimal variations $N(z)+\delta N(z)$ satisfying the mean energy constraint and such that $\delta N(z)$ is nonzero only in the interval $z_0<z\leqslant2\pi$. Since $N(z)$ is strictly positive there, $N(z)+\delta N(z)$ is still positive for infinitesimal $\delta N$, so it is a legal positive photon distribution. For its optimality $N(z)$ must be a stationary point of the capacity for all such variations, but this means exactly that $N(z)$ is the solution of the Lagrange multipliers method $\widetilde{N}(z,z_0)$:
\begin{equation}
N(z)=\widetilde{N}(z,z_0)\theta(z-z_0)\;,
\end{equation}
where $\theta(z)$ is the step function.

We now claim that $\widetilde{N}(z_0,z_0)$ must be zero. Let us suppose $\widetilde{N}(z_0,z_0)>0$.
Since $\widetilde{N}(z,w)$ is continuous in $w$, we can choose a $w_0<z_0$ such that $\widetilde{N}(z,w_0)$ is strictly positive in the whole interval $w_0<z\leqslant2\pi$. Then, $\widetilde{N}(z,w_0)\theta(z-w_0)$ is an admissible solution. Since also $N(z)$ has been considered in the maximization problem \eqref{capw} defining $\widetilde{N}(z,w_0)$, the latter must achieve a greater capacity than the former, impossible.

For the same argument used with $N(z)$, $\widetilde{N}(z,z_0)$ must be increasing within each interval where it is positive, and since it is continuous in $z$ it must be negative for $0\leqslant z<z_0$ and positive for $z_0<z\leqslant2\pi$. Then we can finally write as promised $N(z)$ as
\begin{equation}
N(z)=\frac{1}{\eta(z)}\left(\frac{1}{e^\frac{\lambda}{\eta(z)}-1}-(1-\eta(z))N\right)^+\;,\qquad z\in[0,2\pi],\label{nz}
\end{equation}
where $f^+(z)$ is the positive part of $f$.

\subsection{Analysis of the optimal distribution}
The typical behavior of $N(z)$ in the attenuator case is shown in Fig. \ref{n(z)}. It is increasing, as it has to be. We can identify a critical temperature $N_{crit}$, that for our choice of the parameters ($\kappa=0.9$, $\mu=0.8$, $E=8$) is nearly $N_{crit}\sim0.8$. Below this critical value, $N(z)$ approaches a constant positive value for $z\to0$, \emph{i.e.} the optimal configuration exploits all the beamsplitters. Above the critical value, $N(z)$ is zero on a finite interval $[0,z_0]$, \emph{i.e.} the optimal configuration does not use at all a finite fraction $\frac{z_0}{2\pi}$ of the beamsplitters, being more convenient to concentrate all the energy on the other ones.

The behavior of $N(z)$ in the amplifier case is shown in Fig. \ref{n(z)amp}. It is completely analogous to the thermal attenuator, but for our choice of the parameters ($\kappa=1.1$, $\mu=0.8$, $E=8$) the critical temperature is much greater, $N_{crit}\sim 9.8$.

An analysis of the fraction $\frac{z_0}{2\pi}$ (remember that $z_0$ ranges from $0$ to $2\pi$) of the unused beamsplitters is presented in Fig. \ref{fig:Z0}. For fixed $\kappa$ and $\mu$, for zero temperature ($N=0$) all the beamsplitters are exploited and $z_0=0$; then $z_0$ remains zero up to the critical temperature $N_{crit}$, and grows for $N>N_{crit}$.
We can notice that for typical parameters, the critical value $N_{crit}$ for the beamsplitter is much lower than for the amplifier.

We will now show that in the infinite temperature limit ($N\to\infty$), $z_0$ tends to $2\pi$, and the optimal configuration concentrates all the energy on an infinitesimal fraction of the beamsplitters. We first notice that for $N\to\infty$ the multiplier $\lambda$ in \eqref{nz} must tend to zero, and we can approximate $e^{\lambda/\eta}-1\sim\lambda/\eta$, getting
\begin{equation}
N(z)=\left(\frac{1}{\lambda}-\left(\frac{1}{\eta(z)}-1\right)N\right)\theta(z-z_0)+\mathcal{O}(1)\;,
\end{equation}
where $z_0$ is the point where $N(z)$ vanishes, given by
\begin{equation}
\frac{1}{\lambda}=\left(\frac{1}{\eta(z_0)}-1\right)N\;.
\end{equation}
The energy constraint \eqref{enconstr} can be now written as
\begin{equation}
E=N\int_{z_0}^{2\pi}\left(\frac{1}{\eta(z_0)}-\frac{1}{\eta(z)}\right)\frac{dz}{2\pi}+\mathcal{O}(1)\;,
\end{equation}
and since $\eta(z)$ is strictly increasing, the only way to keep $E$ finite for $N\to\infty$ is to let $z_0\to2\pi$, \emph{i.e.} in the high temperature limit all the energy is concentrated on an infinitesimal fraction of the beamsplitters.

The minimum energy $E_{crit}$ for which all the beamsplitters are exploited is shown in Fig. \ref{fig:Encrit} for various values of the temperature $N$.
We know that for $\kappa=0,1$ and $\kappa\to\infty$ no beamsplitter is left unused, and indeed $E_{crit}=0$ at these points.
As expected, $E_{crit}$ always grows with the temperature.
In the attenuator case, we notice a divergence of $E_{crit}$ for $\kappa=\mu$ ($\mu=0.8$ in the plot). Actually, if $\kappa=\mu$ we have $\eta(0)=0$ (while in any other case $\eta(z)$ is always positive), and some normal modes have infinitesimal transmissivity. It is then natural that for any nonzero temperature it is not convenient to send energy into these low-capacity modes. More formally, the argument of the positive part in \eqref{nz} in the case $\kappa=\mu$ in $z=0$ is $-N<0$, so for any $N>0$ the positive part must be taken into account.

\section{Conclusion}\label{sec:conc}

We have studied a model of Gaussian thermal memory channels extending a previous proposal by Lupo {\it et al.} \cite{lupo2010capacities,lupo2010memory} in order to incorporate the disturbance of thermal noise. The memory effects imply that successive uses of a channel cannot be
considered independently but they are potentially correlated \cite{gallager2014information,kretschmann2005quantum}. In our model this correlation is generated by an internal memory mode which is assumed to be unaccessible by the users of the channel.

Exploiting the factorization into independent normal modes \cite{lupo2010memory} and the recent determination of the capacity of memoryless gauge-covariant Gaussian channels (see \cite{giovannetti2014ultimate} and Section \ref{gccapacity}), we explicitly determine the classical capacity of our memory channel model. We find that, as in the memoryless case, coherent states are sufficient for an optimal coding. However, the associated probability distribution is factorized only in the normal mode decomposition that diagonalizes the channel, so in order to fully exploit its intrinsic memory, the  input signals $\{ a_j\}$ (and consequently their  outputs counterparts) must be correlated. Then the optimal transmission rate of information can still be achieved by independent uses of the channel, but the probability distribution of the physical inputs will not be factorized.

Our results can find applications in  bosonic communication channels with memory effects and affected by a non-negligible amount of thermal noise. In particular low frequency communication devices, {\it e.g.}\  GHz communication systems \cite{lang2013correlations}, THz lasers \cite{kohler2002terahertz}, {\it etc.}, are intrinsically subject to black-body thermal noise and thus they fall in the theoretical framework presented here.

\begin{figure}[ht]
  \includegraphics[width=\textwidth]{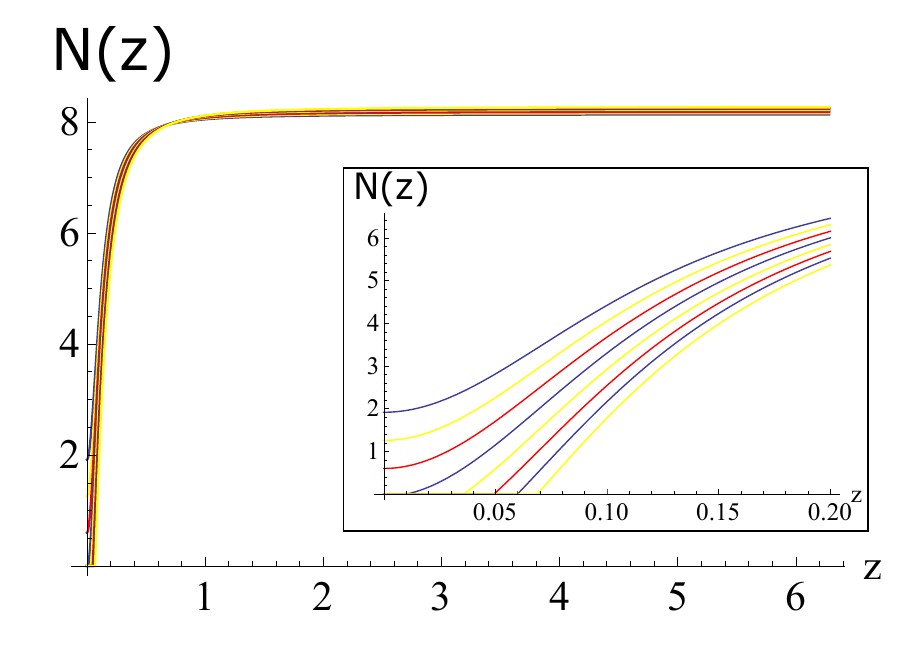}
  \caption{Behavior of the energy density $N(z)$ for $\kappa=0.9$, $\mu=0.8$, $E=8$ and $N$ ranging in steps of 0.1 from top to bottom from 0.5 to 1.2, near to the critical temperature $N_{crit}\sim0.8$. As expected, $N(z)$ is always increasing. If we exclude the region near $z=0$, the functions are almost identical and approach nearly the same constant value for $z\gtrsim1$. Inset:
  Zoom on the region $z\to0$. We can see that above the critical temperature $N(z)$ is zero on a finite interval, while below it $N(z)$ approaches a positive value which strongly depends on the temperature.}\label{n(z)}
\end{figure}
\begin{figure}[ht]
  \includegraphics[width=\textwidth]{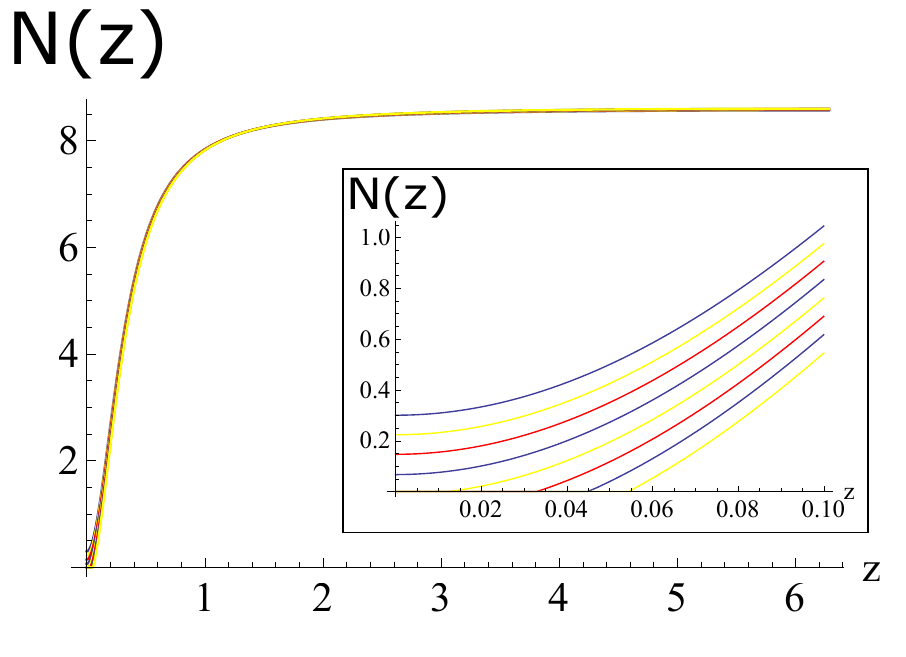}
  \caption{Behavior of the energy density $N(z)$ for $\kappa=1.1$, $\mu=0.8$, $E=8$ and $N$ ranging in steps of 0.1 from top to bottom from 9.4 to 10.1, near to the critical temperature $N_{crit}\sim9.8$. As expected, $N(z)$ is always increasing. If we exclude the region near $z=0$, the functions are almost identical and approach nearly the same constant value for $z\gtrsim1$.
  Inset: Zoom on the region $z\to0$. We can see that above the critical temperature $N(z)$ is zero on a finite interval, while below it $N(z)$ approaches a positive value which strongly depends on the temperature.}\label{n(z)amp}
\end{figure}
\begin{figure}[ht]
  \includegraphics[width=0.8\textwidth]{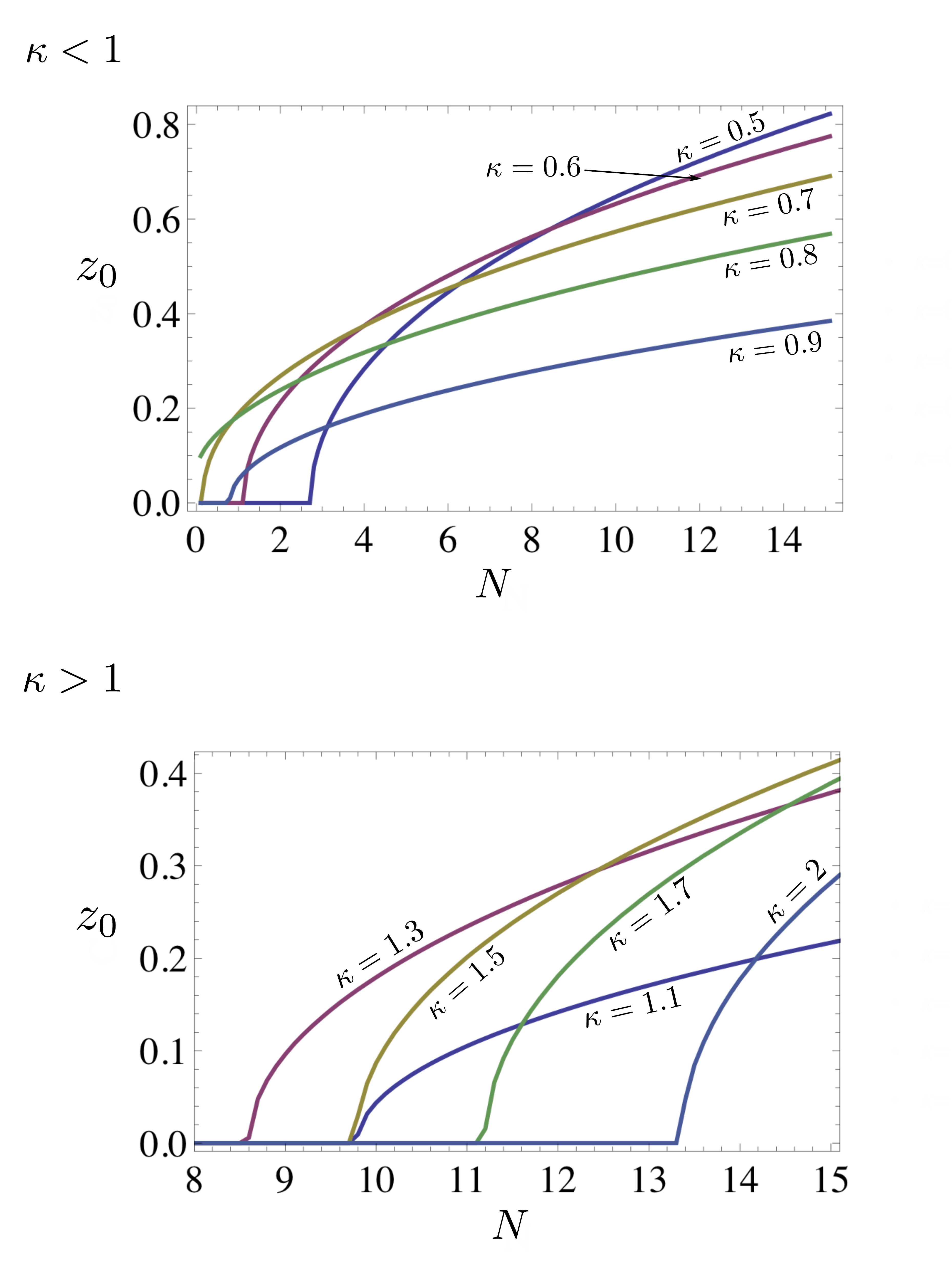}
  \caption{Behavior of the fraction $\frac{z_0}{2\pi}$ ($z_0$ ranges from $0$ to $2\pi$) of unused beamsplitters as a function of the temperature $N$ for $E=8$, $\mu=0.8$ and various values of $\kappa$. At zero temperature ($N=0$) all the beamsplitters are exploited and $z_0=0$; then $z_0$ remains zero up to the critical temperature $N_{crit}$, and grows for $N>N_{crit}$. We notice that for typical values of the parameters $N_{crit}$ is much greater for $\kappa>1$ than for $0<\kappa<1$. In the infinite temperature limit $N\to\infty$ only an infinitesimal fraction of the beamsplitters is used and $z_0$ tends to $2\pi$, even if this is not evident from the plots due to the limited range of $N$.}\label{fig:Z0}
\end{figure}
\begin{figure}[ht]
  \includegraphics[width=0.9\textwidth]{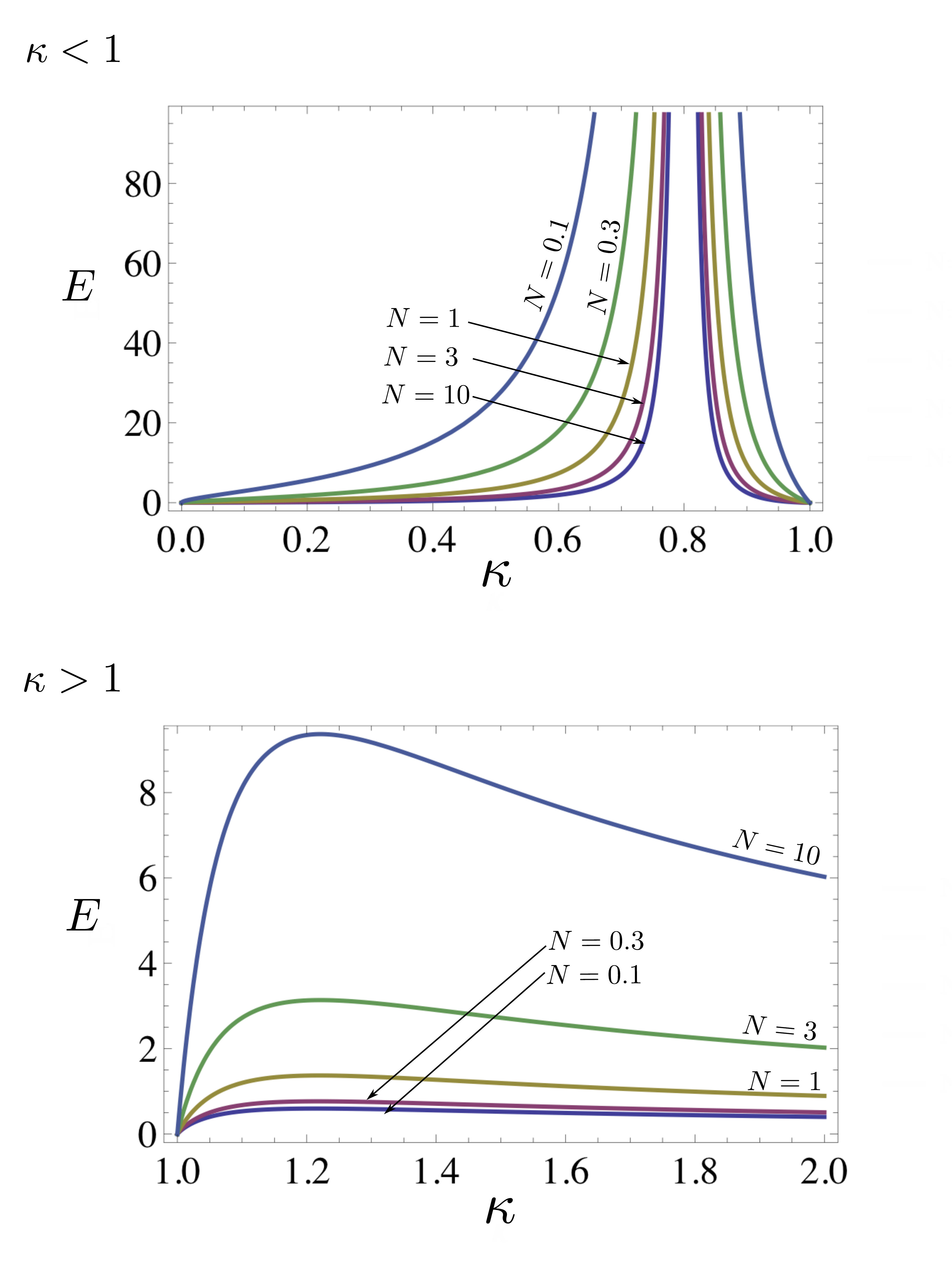}
  \caption{Behavior of the minimal energy for which all the beamsplitters are exploited as a function of $\kappa$ for $\mu=0.8$ and various values of the temperature $N$. As expected, $E$ grows with the temperature, and $E=0$ for $\kappa=0,1$ and $\kappa\to\infty$. In the attenuator case we notice the divergence of $E$ for $\kappa=\mu$ ($=0.8$), due to the fact that $\eta(0)=0$ and for any positive temperature the optimal $N(z)$ must vanish on a finite interval.}\label{fig:Encrit}
\end{figure}

\chapter{Normal form decomposition for Gaussian-to-Gaussian superoperators}\label{normal}
In this Chapter, we explore the set of the linear trace preserving not necessarily positive maps sending the set of quantum Gaussian states into itself.
These maps can be exploited as a test to check whether a given quantum state belongs to the convex hull of Gaussian states, exactly as positive but not completely positive maps are tests for entanglement.
For one mode, we prove that these maps are all built from the phase-space dilatation, that is hence the only relevant test of this kind.

The Chapter is based on
\begin{enumerate}
\item[\cite{de2015normal}] G.~De~Palma, A.~Mari, V.~Giovannetti, and A.~S. Holevo, ``Normal form decomposition for Gaussian-to-Gaussian superoperators,'' \emph{Journal of Mathematical Physics}, vol.~56, no.~5, p. 052202, 2015.\\ {\small\url{http://scitation.aip.org/content/aip/journal/jmp/56/5/10.1063/1.4921265}}
\end{enumerate}

\section{Introduction}

As we have seen in Chapter \ref{GQI}, Gaussian Bosonic States (GBSs) play a fundamental role in the study of
continuous-variable (CV) quantum information processing \cite
{holevo2011probabilistic,ferraro2005gaussian,braunstein2005quantum,weedbrook2012gaussian} with applications in quantum cryptography,
quantum computation and quantum communication where they are known to
provide optimal ensembles for a large class of quantum communication lines
(specifically the gauge-covariant Gaussian Bosonic maps) \cite{holevo2001evaluating,giovannetti2006quantum,giovannetti2004classical,wolf2007quantum,giovannetti2015solution,giovannetti2014ultimate}. GBSs are characterized by the
property of having Gaussian Wigner quasi-distribution (see Section \ref{chia} of Appendix \ref{appG}) and describe Gibbs
states of Hamiltonians which are quadratic in the field operators of the
system. Further, in quantum optics they include coherent, thermal and
squeezed states of light and can be easily created via linear amplification
and loss.

Directly related to the definition of GBSs is the notion of Gaussian
transformations \cite{holevo2011probabilistic,braunstein2005quantum,weedbrook2012gaussian}, i.e. superoperators mapping the
set $\mathfrak{G}$ of GBSs into itself. In the last two decades, a great
deal of attention has been devoted to characterizing these objects. In
particular the community focused on Gaussian Bosonic Channels (GBCs) \cite
{holevo2001evaluating}, i.e. Gaussian transformations which are completely positive (CP)
and provide hence the proper mathematical representation of data-processing
and quantum communication procedures which are physically implementable \cite
{caves1994quantum}. On the contrary, less attention has been devoted to the study of
Gaussian superoperators which are not CP or even non-positive. A typical
example of such mappings is provided by the phase-space dilatation, which,
given the Wigner quasi-distribution $W_{\hat{\rho}}(\mathbf{r})$ (see \eqref{wignerdef}) of a
state $\hat{\rho}$ of $n$ Bosonic modes, yields the function $
W^{(\lambda)}_{\hat{\rho}}( \mathbf{r}) \equiv W_{\hat{\rho}}(\mathbf{r}
/\lambda)/\lambda^{2n}$ as an output, with the real parameter $\lambda$
satisfying the condition $|\lambda|>1$.
On one hand, when acting on $\mathfrak{G}$ the mapping
\begin{eqnarray}
W_{\hat{\rho}}(\mathbf{r})\mapsto W^{(\lambda)}_{\hat{\rho}}( \mathbf{r}) \;,
\label{MAP}
\end{eqnarray}
always outputs proper (Gaussian) states. Specifically, given $\hat{\rho}\in
\mathfrak{G}$ one can identify another Gaussian density operator $\hat{\rho}
^{\prime}$ which admits the function $W^{(\lambda)}_{\hat{\rho}}( \mathbf{r}
) $ as Wigner distribution, i.e. $W_{\hat{\rho}^{\prime}}(\mathbf{r})=
W^{(\lambda)}_{\hat{\rho}}( \mathbf{r})$. On the other hand, there exist
inputs $\hat{\rho}$ for which $W^{(\lambda)}_{\hat{\rho}}( \mathbf{r})$ is
no longer interpretable as the Wigner quasi-distribution of \textit{any}
quantum state: in this case in fact $W^{(\lambda)}_{\hat{\rho}}( \mathbf{r})$
is the Wigner quasi-distribution $W_{\hat{\theta}}(\mathbf{r})$
of an operator $\hat{\theta}$ which is not positive \cite{brocker1995mixed} (for example, any pure non-Gaussian state has this property for any $\lambda\neq\pm1$ \cite{dias2009narcowich}).
Accordingly
phase-space dilatations \eqref{MAP} should be considered as ``unphysical''
transformations, i.e. mappings which do not admit implementations in the
laboratory. Still dilatations and similar exotic Gaussian-to-Gaussian
mappings turn out to be useful mathematical tools that can be employed to
characterize the set of states of CV systems in a way which is not
dissimilar to what happens for positive (but not completely positive)
transformations in the analysis of entanglement \cite{horodecki2009quantum}. In particular
Br\"ocker and Werner \cite{brocker1995mixed} used \eqref{MAP} to study the convex hull
$\mathfrak{C}$ of Gaussian states (i.e. the set of density operators $\hat{
\rho}$ which can be expressed as a convex combination of elements of $
\mathfrak{G}$). The rationale of this analysis is that the set $\mathfrak{F}$
of density operators which are mapped into proper output states by this
transformation includes $\mathfrak{C}$ as a proper subset, see Fig. \ref
{FigSys}. Accordingly if a certain input $\hat{\rho}$ yields a $
W^{(\lambda)}_{\hat{\rho}}( \mathbf{r})$ which is not the Wigner
distribution of a state, we can conclude that $\hat{\rho}$ is not an element
of $\mathfrak{C}$.
Finding mathematical and experimental criteria which help
in identifying the boundaries of $\mathfrak{C}$ is indeed a timely and
important issue which is ultimately related with the characterization of
non-classical behavior in CV systems, see e.g. Ref.'s \cite{filip2011detecting,genoni2013detecting,genoni2013detecting,hughes2014quantum,mari2011directly,kiesel2011nonclassicality,richter2002nonclassicality,jevzek2011experimental,genoni2007measure,genoni2008quantifying,genoni2010quantifying}, and also \cite{vershynina2014complete,vershynina2014complete} for the fermionic case.

In this context a classification of non-positive Gaussian-to-Gaussian
operations is mandatory.
This analysis has been initiated in \cite{giedke2002characterization}, where Gaussian-to-Gaussian maps are characterized through their
Choi-Jamio\l kowski state, under the hypothesis that this state has a Gaussian characteristic function.
One goal is proving this hypothesis: we prove
that the action of such
transformations on the covariance matrix and on the first moment must be
linear, and we write explicitly the transformation properties of the
characteristic function (Theorem \ref{gaussthmq}).
In the classical case, any probability measure can be written as a convex
superposition of Dirac deltas, so the convex hull of the Gaussian measures
coincides with the whole set of measures. A simple consequence of this property is that a linear transformation
sending Gaussian measures into Gaussian (and then positive) measures is
always positive. Nothing of this holds in the more interesting quantum case,
so we focus on it, and use Theorem \ref{gaussthmq} to get a decomposition
which, for single-mode operations, shows that any linear quantum
Gaussian-to-Gaussian transformation can always be
decomposed as a proper combination of a dilatation \eqref{MAP} followed by a
CP Gaussian mapping plus possibly a transposition. We also show that our
decomposition theorem applies to the multimode case, as long as we restrict
the analysis to Gaussian transformations which are homogeneous at the level
of covariance matrix. For completeness we finally discuss the case of
contractions: these are mappings of the form \eqref{MAP} with $|\lambda|<1$.
They are not proper Gaussian transformations because they map some Gaussian
states into non-positive operators. Still some of the results which apply to
the dilatations can be extended to this set.

\begin{figure}[t]
\centering
\includegraphics[width=0.4\textwidth]{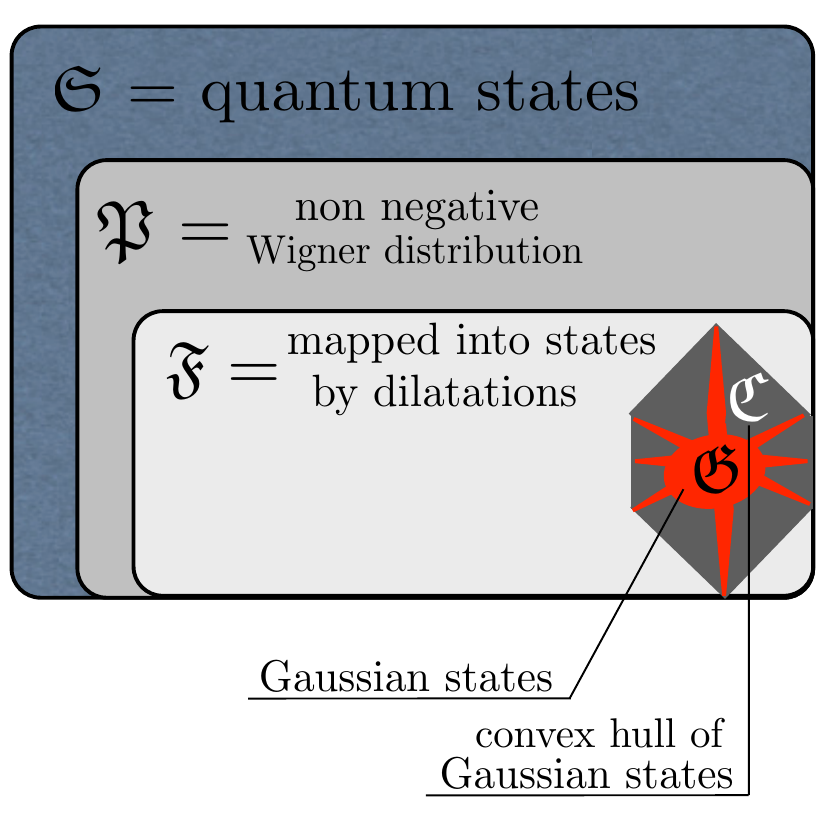}
\caption{Pictorial representation of the structure of the set
of states $\mathfrak{S}$ of a CV system. $\mathfrak{P}$ is the subset of
density operators $\hat{\protect\rho}$ which have non-negative Wigner
distribution (\protect\ref{wignerdef}). $\mathfrak{F}$ is set of states which instead are mapped into proper density operators by an arbitrary
dilatation (\protect\ref{MAP}).
$\mathfrak{G}$ is the set of Gaussian states
and $\mathfrak{C}$ its convex hull.
$\mathfrak{S}$, $\mathfrak{P}$, $
\mathfrak{F}$, and $\mathfrak{C}$ are closed under convex convolution, $
\mathfrak{G}$ is not. For a detailed study of the relations among these sets
see Ref. \protect\cite{brocker1995mixed}. }
\label{FigSys}
\end{figure}

The Chapter is organized as follows. In Section \ref{hullsec} we define the convex hull of Gaussian states. In Section \ref{S:PROBLEM}
we state the problem and prove Theorem \ref{gaussthmq} characterizing the action of Gaussian-to-Gaussian
superoperators on the characteristic functions of quantum states and its variations, including the probabilistic analog. In subsection \ref{contr} we consider the case of contractions. In Section \ref{S:ONE} we present the main result of the
chapter, i.e. the decomposition theorem for single-mode
Gaussian-to-Gaussian transformations. The multimode case is then analyzed
in Section \ref{S:MULTI}.
In Section \ref{app} we prove the unboundedness of phase-space dilatations with respect to the trace norm.
The Chapter ends hence with Section \ref{S:CON} where we
present a brief summary and discuss some possible future developments.

\section{The convex hull of Gaussian states}\label{hullsec}
States with positive Wigner function \eqref{wignerdef} form a convex subset $
\mathfrak{P}$ in the space of the density operators $\mathfrak{S}$ of the
system. The set $\mathfrak{G}$ of Gaussian states is a proper subset of $
\mathfrak{P}$.

Starting from the vacuum, devices as simple as
beamsplitters combined with one-mode squeezers permit (at least in
principle) to realize all the elements of $\mathfrak{G}$. Then, choosing
randomly according to a certain probability distribution which Gaussian
state to produce, it is in principle possible to realize all the states in
the convex hull $\mathfrak{C}$ of the Gaussian ones, i.e. all the states $
\hat{\rho}$ that can be written as
\begin{equation}
\hat{\rho}=\int\hat{\rho}_G(\sigma,\mathbf{x})\;d\mu(\mathbf{x},\sigma)\;,
\label{gaussenv}
\end{equation}
where $\hat{\rho}_G(\sigma,\mathbf{x})$ is the Gaussian state with first moment $\mathbf{x}$ and covariance matrix $\sigma$ (see Section \ref{Gsta}), and $\mu$ is the associated probability measure of the process.

It is easy to verify that $\mathfrak{C}$ is strictly larger than $\mathfrak{G
}$, i.e. there exist states \eqref{gaussenv} which are not Gaussian. On the
other hand, one can observe that \eqref{gaussenv} implies
\begin{equation}
W_{\hat{\rho}}(\mathbf{r})=\int \frac{1}{\sqrt{\det\left(\pi\,\sigma\right)}}
\;e^{-(\mathbf{r}-\mathbf{x})^T\sigma^{-1}(\mathbf{r}-\mathbf{x})}\;d\mu(
\mathbf{x},\sigma)>0\;,
\end{equation}
so also $\mathfrak{C}$ is included into $\mathfrak{P}$, see Fig. \ref{FigSys}
. There are however elements of $\mathfrak{P}$ which are not contained in $
\mathfrak{C}$: for example, any finite mixture of Fock states
\begin{equation}
\hat{\rho}=\sum_{n=0}^Np_n|n\rangle\langle n|\qquad N<\infty\qquad
p_n\geq0\qquad\sum_{n=0}^Np_n=1
\end{equation}
is not even contained in the weak closure of $\mathfrak{C}$, even if some of
them have positive Wigner function \cite{brocker1995mixed}.

\section{Characterization of Gaussian-to-Gaussian maps}

\label{S:PROBLEM}

Determining whether a given state $\hat{\rho}$ belongs to the convex hull $
\mathfrak{C}$ of the Gaussian set is a difficult problem \cite
{filip2011detecting,genoni2013detecting,hughes2014quantum}. Then, there comes the need to find criteria to
certify that $\hat{\rho}$ \emph{cannot} be written in the form
\eqref{gaussenv}. A possible idea is to consider a non-positive
superoperator $\Phi$ sending any Gaussian state into a state \cite{brocker1995mixed}.
By linearity $\Phi$ will also send any state of $\mathfrak{C}$ into a state,
therefore if $\Phi(\hat{\rho})$ is not a state, $\hat{\rho}$ cannot be an
element of $\mathfrak{C}$: in other words, the transformation $\Phi$ acts as
a mathematical \textit{probe} for $\mathfrak{C}$. In what follows we will
focus on those probes which are also Gaussian transformations, i.e. which
not only send $\mathfrak{G}$ into states, but which ensure that the output
states $\Phi(\hat{\rho})$ are again elements of $\mathfrak{G}$. Then the
following characterization theorem holds

\begin{thm}
\label{gaussthmq} Let $\Phi $ be a linear bounded map of
the space $\mathfrak{H}$ of Hilbert-Schmidt operators (see \eqref{HSN} in Appendix \ref{appG} for the definition), sending the set of Gaussian states $\mathfrak{G}$
into itself. Then its action in terms of the characteristic function (see \eqref{chidef}), the first moments and
the covariance matrix (see \eqref{covdef}) is of the form
\begin{eqnarray}
\Phi &:&\chi (\mathbf{k})\rightarrow \chi \left( \mathbf{k}K\right)
\;e^{-\frac{1}{4}\mathbf{k}\alpha \mathbf{k}^T+i\mathbf{k}\mathbf{y}_0
}\;,  \label{channel} \\
\Phi &:&\mathbf{x} \rightarrow K\mathbf{x}+\mathbf{y}_0 \label{Phix}\\
\Phi &:&\sigma \rightarrow K\sigma K^{T}+\alpha \;,  \label{Phicm}
\end{eqnarray}
where $\mathbf{y}_0$ is an $\mathbb{R}^{n}$ vector, and $K$ and $\alpha $ are $
2n\times 2n$ real matrices such that $\alpha $ is symmetric, and for any $
\sigma \geq \pm i\Delta $
\begin{equation}
K\sigma K^{T}+\alpha \geq \pm i\Delta \;,  \label{positivity}
\end{equation}
where the inequalities are meant to hold for both plus and minus signs in the right-hand-sides.
\end{thm}

The condition \eqref{positivity} imposes that $\Phi (\hat{\rho})$ is a
Gaussian state for any Gaussian $\hat{\rho}$.
It is weaker than the condition which guarantees complete positivity \eqref{CP}, which also ensures the mapping of Gaussian states into Gaussian states. An example of not completely positive mapping fulfilling \eqref{positivity} is provided by the dilatations defined in Eq. \eqref{MAP}.
Such mappings in fact,
while explicitly not CP \cite{brocker1995mixed}, correspond to the transformations \eqref{channel} where we set $\mathbf{y}_0=\mathbf{0}$ and take
\begin{equation}
K=\lambda \mathbb{I}_{2n}\;,\qquad \alpha =0\;,  \label{DILATATION}
\end{equation}
with $|\lambda |>1$. At the level of the covariance matrices \eqref{Phicm},
this implies $\sigma ^{\prime }=\lambda ^{2}\sigma $ which clearly still
preserve the Heisenberg inequality \eqref{heis} (indeed ${\lambda }
^{2}\sigma \geq \sigma \geq \pm i\Delta $), ensuring hence the condition \eqref{positivity}. Dilatations are not bounded with respect to the trace norm (see Theorem \ref{unbounded} of Section \ref{app}). This explains why Theorem \ref{gaussthmq} is formulated on the space of Hilbert-Schmidt operators. Indeed, via the Parceval formula (see \eqref{hilbert} in Appendix \ref{appG}) we can prove that dilatations are bounded in this space:
\begin{equation}
\left\Vert \Phi (\hat{\rho})\right\Vert ^{2} = \int \,\left\vert \chi _{\hat{
\rho}}(\lambda \mathbf{k})\right\vert^{2}\;\frac{d\mathbf{k}}{(2\pi )^{n}}
= \int \,\left\vert \chi _{\hat{\rho}}(\mathbf{k})\right\vert^{2}\;\frac{d
\mathbf{k}}{(2\pi \lambda ^{2})^{n}}=\frac{1}{\lambda ^{2n}}\left\Vert \hat{
\rho}\right\Vert ^{2}\;.
\end{equation}
For $\lambda=\frac{1}{\mu}$ with $|\mu|>1$ the transformation \eqref{DILATATION} yields a contraction of the output Wigner quasi-distribution.
In the Hilbert space $\mathfrak{H}$, the contraction by $\lambda$ is $\lambda^{2n}$ times the adjoint of the dilatation
by $\mu=\frac{1}{\lambda}$, as follows from the Parceval formula \eqref{hilbert}. As different from the dilatations, these mappings no longer ensure that all
Gaussian states will be transformed into proper density operators. For
instance, the vacuum state is mapped into a non-positive operator (this shows in particular that the contractions and hence
the adjoint dilatations are non-positive maps).

Another example of transformation not fulfilling the CP requirement \eqref{CPTP} but respecting \eqref{positivity} is the (complete) transposition
\begin{equation}
K=T_{2n}\qquad \alpha =0\;,  \label{TRANS}
\end{equation}
that is well-known not to be CP. Unfortunately, being positive it cannot be
used to certify that a given state is not contained in the convex hull $
\mathfrak{C}$ of the Gaussian ones. Is there anything else? We will prove
that for one mode, any channel satisfying \eqref{positivity} can be written
as a dilatation composed with a completely positive channel, possibly
composed with the transposition \eqref{TRANS}, see Fig. \ref{FigSys1}. We
will also show that in the multimode case this simple decomposition does
not hold in general; however, it still holds if we restrict to the channels
that do not add noise, i.e. with $\alpha =0$.

\begin{proof} Let the Gaussian state $\hat{\rho}_G(\sigma,\mathbf{x})$ be
sent into the Gaussian state $\hat{\rho}_G(\tau,\mathbf{y})$ with
covariance matrix $\tau (\mathbf{x},\,\sigma )$ and first moment $\mathbf{y}(
\mathbf{x},\,\sigma )$, with the characteristic function
\begin{equation}
\chi _{\Phi(\hat{\rho}_G(\sigma,\mathbf{x}))}(\mathbf{k})\equiv
\chi _{y,\tau }(\mathbf{k})=e^{-\frac{1}{4}\mathbf{k}\,\tau \,\mathbf{k}^T
+i\mathbf{k}\,\mathbf{y}}\;.  \label{Phichi}
\end{equation}
We first remark that the functions $\tau (\mathbf{x},\,\sigma )$ and $
\mathbf{y}(\mathbf{x},\,\sigma )$ are continuous. The map $\Phi $ is bounded
and hence continuous in the Hilbert-Schmidt norm. The required continuity
follows from

\begin{lem} The bijection $(\mathbf{x},\,\sigma )\rightarrow \hat{\rho}_G(\sigma,\mathbf{x})$ is bicontinuous in the Hilbert-Schmidt norm.\end{lem}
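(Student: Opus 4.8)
The plan is to pass everything through the characteristic function and the Parseval identity \eqref{hilbert}, which turns Hilbert--Schmidt distances between Gaussian states into weighted $L^2$ distances between explicit Gaussian functions. Recalling from \eqref{Phichi} that $\chi_{\hat{\rho}_G(\sigma,\mathbf{x})}(\mathbf{k})=e^{-\frac14\mathbf{k}\sigma\mathbf{k}^T+i\mathbf{k}\mathbf{x}}$, the Parseval formula gives
\begin{equation}
\left\|\hat{\rho}_G(\sigma_1,\mathbf{x}_1)-\hat{\rho}_G(\sigma_2,\mathbf{x}_2)\right\|_2^2=\int\left|\chi_1(\mathbf{k})-\chi_2(\mathbf{k})\right|^2\frac{d\mathbf{k}}{(2\pi)^n},\qquad\chi_j(\mathbf{k})=e^{-\frac14\mathbf{k}\sigma_j\mathbf{k}^T+i\mathbf{k}\mathbf{x}_j}.
\end{equation}
Hence bicontinuity of $(\mathbf{x},\sigma)\mapsto\hat{\rho}_G(\sigma,\mathbf{x})$ in the Hilbert--Schmidt norm is \emph{equivalent} to bicontinuity of $(\mathbf{x},\sigma)\mapsto\chi_{\hat{\rho}_G(\sigma,\mathbf{x})}$ as a map into $L^2(\mathbb{R}^{2n})$ (equivalently, of the corresponding Gaussian Wigner densities of \eqref{wignerdef}, which give the same reduction). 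Here I would use that every admissible $\sigma\geq\pm i\Delta$ is strictly positive definite, so each $\chi$ decays Gaussianly and lies in $L^2$.

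First I would prove continuity of the map itself. Assume $(\mathbf{x}_n,\sigma_n)\to(\mathbf{x},\sigma)$. Then $\chi_n(\mathbf{k})\to\chi(\mathbf{k})$ pointwise by continuity of the exponential. Since $\sigma$ is positive definite and $\sigma_n\to\sigma$, for $n$ large one has $\sigma_n\succeq\tfrac12\sigma$ in the Loewner order, whence $|\chi_n(\mathbf{k})|\leq e^{-\frac18\mathbf{k}\sigma\mathbf{k}^T}$ and therefore $|\chi_n(\mathbf{k})-\chi(\mathbf{k})|^2\leq 4\,e^{-\frac14\mathbf{k}\sigma\mathbf{k}^T}$, an integrable dominating function independent of $n$. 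Dominated convergence then yields $\int|\chi_n-\chi|^2\,d\mathbf{k}\to0$, i.e. Hilbert--Schmidt convergence of the states.

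The harder direction, and the main obstacle, is continuity of the inverse: from $\chi_n\to\chi$ in $L^2$ I must recover $(\mathbf{x}_n,\sigma_n)\to(\mathbf{x},\sigma)$, the subtlety being that I must forbid the parameters from escaping to infinity or degenerating. For the covariance matrices I would note that Cauchy--Schwarz gives $|\chi_n|^2\to|\chi|^2$ in $L^1$, that is $e^{-\frac12\mathbf{k}\sigma_n\mathbf{k}^T}\to e^{-\frac12\mathbf{k}\sigma\mathbf{k}^T}$ in $L^1$; passing to a subsequence with a.e.\ pointwise convergence and taking logarithms, $\mathbf{k}\sigma_n\mathbf{k}^T\to\mathbf{k}\sigma\mathbf{k}^T$ for a.e.\ $\mathbf{k}$, and since a quadratic form is fixed by its values on finitely many generic directions this forces $\sigma_n\to\sigma$ along the subsequence, hence (every subsequence having a further subsequence with the same limit) $\sigma_n\to\sigma$. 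With $\sigma_n\to\sigma$ in hand, I would write $\chi_n(\mathbf{k})=e^{i\mathbf{k}\mathbf{x}_n}g_n(\mathbf{k})$ with $g_n\to g=e^{-\frac14\mathbf{k}\sigma\mathbf{k}^T}$ in $L^2$, and argue for $\mathbf{x}_n$ by contradiction: a subsequence with $\mathbf{x}_{n_j}\to\mathbf{x}_\ast$ finite but $\neq\mathbf{x}$ would give, by uniqueness of the $L^2$ limit, $e^{i\mathbf{k}\mathbf{x}_\ast}=e^{i\mathbf{k}\mathbf{x}}$ a.e., which is impossible; and $\|\mathbf{x}_{n_j}\|\to\infty$ would force $\chi_{n_j}\rightharpoonup0$ weakly in $L^2$ by the Riemann--Lebesgue lemma (splitting off $g_{n_j}-g$, which is $o(1)$ in $L^2$), contradicting strong, hence weak, convergence to $\chi\neq0$. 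This rules out both failure modes and gives $\mathbf{x}_n\to\mathbf{x}$, completing the bicontinuity.
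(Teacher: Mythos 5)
Your proof is correct, and it starts from exactly the same reduction as the paper: the Parceval formula \eqref{hilbert}, which turns the Hilbert--Schmidt distance between two Gaussian states into the $L^2$ distance between their explicit Gaussian characteristic functions. Where you diverge is in what happens after that reduction. The paper's proof consists of actually evaluating the Gaussian integral in closed form,
\begin{equation*}
\left\|\hat{\rho}_G(\sigma,\mathbf{x})-\hat{\rho}_G(\sigma',\mathbf{x}')\right\|_2^2
=\frac{1}{\sqrt{\det\sigma}}+\frac{1}{\sqrt{\det\sigma'}}
-\frac{2\,e^{-(\mathbf{x}-\mathbf{x}')^T(\sigma+\sigma')^{-1}(\mathbf{x}-\mathbf{x}')}}{\sqrt{\det\frac{\sigma+\sigma'}{2}}}\;,
\end{equation*}
and reading both directions of the bicontinuity off this single expression: continuity of the map is immediate, and continuity of the inverse follows because the right-hand side stays bounded away from zero when $(\mathbf{x}',\sigma')$ stays away from $(\mathbf{x},\sigma)$, degenerates, or escapes to infinity. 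You deliberately avoid the computation: dominated convergence (using $\sigma_n\succeq\tfrac12\sigma$ eventually, which is legitimate since $\sigma\geq\pm i\Delta$ forces $\sigma\succ0$) handles the forward direction, and the inverse direction is a subsequence-and-contradiction argument combining $L^1$ convergence of $|\chi_n|^2$, almost-everywhere convergence along subsequences, the fact that a quadratic form is pinned down by finitely many generic evaluations, and the Riemann--Lebesgue lemma to exclude $\|\mathbf{x}_n\|\to\infty$. Each step of yours checks out. The trade-off: the paper's route is shorter and completely explicit, with everything visible in one formula, while yours replaces Gaussian algebra with soft-analysis machinery; the benefit of your version is that the inverse-continuity argument (ruling out escaping or degenerating parameters) is spelled out in full, where the paper leaves it implicit in the phrase ``direct computation,'' and your argument would survive in settings where no closed form for the integral is available.
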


The proof of the lemma follows from the Parceval formula (see \eqref{hilbert} in Appendix \ref{appG}) by direct computation of the
Gaussian integral
\begin{equation*}
\int \,\left\vert \chi _{\hat{\rho}_G(\sigma,\mathbf{x})}(\mathbf{k})-\chi
_{\hat{\rho}_G(\sigma',\mathbf{x}')}(\mathbf{k}
)\right\vert^{2}\;\frac{d\mathbf{k}}{(2\pi )^{n}}\;.
\end{equation*}
Next, we have the identity
\begin{equation}\label{iden}
\int \hat{\rho}_G(\sigma',\mathbf{x}')\,\mu _{\mathbf{x}
,\,\sigma }(d\mathbf{x}^{\prime })=\hat{\rho}_G(\sigma'+\sigma,\mathbf{x})\;,
\end{equation}
where $\mu _{\mathbf{x},\,\sigma }$ is Gaussian probability measure with the
first moments $\mathbf{x}$ and covariance matrix $\sigma ,$ which is
verified by comparing the quantum characteristic functions of both sides.

Applying to both sides of this identity the continuous map $\Phi $ we obtain
\begin{equation*}
\int \hat{\rho}_G\left(\mathbf{y}(\mathbf{x}^{\prime },\,\sigma ^{\prime }),\,\tau
(\mathbf{x}^{\prime },\,\sigma ^{\prime })\right)\,\mu _{\mathbf{x},\,\sigma
}(d\mathbf{x}^{\prime })=\hat{\rho}_G\left(\mathbf{y}(\mathbf{x},\,\sigma ^{\prime }+\sigma
),\,\tau (\mathbf{x},\,\sigma ^{\prime }+\sigma )\right)\;.
\end{equation*}
By taking the quantum characteristic functions of both sides, we obtain
\begin{eqnarray}
\int \chi _{\mathbf{y}(\mathbf{x}^{\prime },\,\sigma ^{\prime }),\,\tau (
\mathbf{x}^{\prime },\,\sigma ^{\prime })}(\mathbf{k})\;\mu _{\mathbf{x}
,\,\sigma }(d\mathbf{x}^{\prime })=&&\nonumber\\
=\chi _{\mathbf{y}(\mathbf{x},\,\sigma ^{\prime
}+\sigma ),\,\tau (\mathbf{x},\,\sigma ^{\prime }+\sigma )}(\mathbf{k}
)\;,&&\quad \mathbf{k}\in\mathbb{R}^{n}\;.  \label{cauchy}
\end{eqnarray}
We now notice that $\mu _{\mathbf{x},\,\sigma }$ is the
fundamental solution of the diffusion equation:
\begin{equation}
du=\frac{1}{4}\partial _{i}d\sigma ^{ij}\partial _{j}u\;,  \label{dphi}
\end{equation}
where $d$ is the differential with respect to $\sigma $, i.e.
\begin{equation}
d=\sum_{i,\,j=1}^{m}d\sigma ^{ij}\frac{\partial }{\partial \sigma ^{ij}}\;
\end{equation}
and
\begin{equation}
\partial _{i}=\frac{\partial }{\partial x^{i}}\;,
\end{equation}
with the sum over the repeated indices. Relation \eqref{cauchy} means
that for any fixed $\mathbf{k},$ the function
\begin{equation}
u(\mathbf{x},\,\sigma )=\chi
_{\mathbf{y}(\mathbf{x},\,\sigma ^{\prime }+\sigma ),\,\tau (\mathbf{x}
,\,\sigma ^{\prime }+\sigma )}(\mathbf{k})
\end{equation}
is the solution of the Cauchy
problem for the equation \eqref{cauchy} with the initial condition
\begin{equation}
u(\mathbf{x},\,0)=\chi _{\mathbf{y}(\mathbf{x},\,\sigma ^{\prime }),\,\tau (
\mathbf{x},\,\sigma ^{\prime })}(\mathbf{k})\;.
\end{equation}
Since the last function is
bounded and continuous, the solution of the Cauchy problem is infinitely
differentiable in $(\mathbf{x},\,\sigma)$ for $\sigma >0.$ Substituting
\begin{equation*}
u(\mathbf{x},\,\sigma )=\exp \left[ -\frac{1}{4}\mathbf{k}\,\tau (
\mathbf{x},\,\sigma ^{\prime }+\sigma )\,\mathbf{k}^T+i\mathbf{k}\,\mathbf{
y}(\mathbf{x},\,\sigma ^{\prime }+\sigma )\right]
\end{equation*}
into \eqref{dphi} and differentiating the exponent, we obtain the identity
\begin{eqnarray}
-\frac{1}{4}\mathbf{k}\;d\tau \;\mathbf{k}^T+i\mathbf{k}\,d\mathbf{y} &=& \frac{1}{4}\left(\frac{1}{4}\mathbf{k}\,\partial_i\tau\,\mathbf{k}^T-i
\mathbf{k}\,\partial_i
\mathbf{y}\right)d\sigma^{ij}\left(\frac{1}{4}\mathbf{k}\,\partial_j\tau\,
\mathbf{k}^T-i\mathbf{k}\,\partial_j\mathbf{y}\right)+\nonumber\\
&& -{\frac{1}{16}}\mathbf{k}\left(\partial_id\sigma^{ij}\partial_j\tau\right)
\mathbf{k}^T+
\frac{i}{4}\mathbf{k}\,\partial_id\sigma^{ij}\partial_j\mathbf{y}\;.
\end{eqnarray}
We can now compare the two expressions. Since the left hand side
contains only terms at most quadratic in $\mathbf{k}$, we get
\begin{equation}
\partial _{i}\tau =0\;,
\end{equation}
i.e. $\tau $ does not depend on $\mathbf{x}$. Then, the right hand side
simplifies into
\begin{equation}
-\frac{1}{4}\mathbf{k}\left( \partial _{i}\mathbf{y}\,d\sigma
^{ij}\partial _{j}\mathbf{y}^{T}\right) \mathbf{k}^T+\frac{i}{4}\mathbf{k}\,\partial _{i}d\sigma ^{ij}\partial _{j}\mathbf{y}\;.  \label{logphi3}
\end{equation}
Comparing again with the left hand side, we get
\begin{eqnarray}
d\tau (\sigma ) &=&\partial _{i}\mathbf{y}\,d\sigma ^{ij}\partial _{j}
\mathbf{y}^{T}  \label{deltatau} \\
d\mathbf{y}(\mathbf{x},\,\sigma ) &=&\frac{1}{4}\partial _{i}d\sigma
^{ij}\partial _{j}\mathbf{y}\;.  \label{deltay}
\end{eqnarray}
Since $d\tau (\sigma )$ does not depend on $\mathbf{x}$, also $\partial _{i}
\mathbf{y}$ cannot, i.e. $\mathbf{y}$ is a linear function of $\mathbf{x}$:
\begin{equation}
\mathbf{y}(\mathbf{x},\,\sigma )=K(\sigma )\,\mathbf{x}+\mathbf{y}
_{0}(\sigma )\;,
\end{equation}
where $K(\sigma )$ and $\mathbf{y}_{0}(\sigma )$ are still arbitrary
functions. But now \eqref{deltay} becomes
\begin{equation}
d\mathbf{y}(\mathbf{x},\,\sigma )=0\;,
\end{equation}
i.e. $\mathbf{y}$ does not depend on $\sigma $, i.e.
\begin{equation}
\mathbf{y}=K\mathbf{x}+\mathbf{y}_{0}\;,  \label{y}
\end{equation}
with $K$ and $\mathbf{y}_{0}$ constant. Finally, \eqref{deltatau} becomes
\begin{equation}
d\tau (\sigma )=K\,d\sigma \,K^{T}\;,
\end{equation}
that can be integrated into
\begin{equation}
\tau (\sigma )=K\,\sigma \,K^{T}+\alpha \;.  \label{tau}
\end{equation}

Thus we get that the transformation rules for the first and second
moments are given by Eqs. \eqref{Phix} and \eqref{Phicm}. The positivity condition
for quantum Gaussian states implies \eqref{positivity}. The map defined by
\eqref{channel} correctly reproduces \eqref{Phix} and \eqref{Phicm}, so it
coincides with $\Phi $ on the Gaussian states. Since it is linear and
continuous, and the linear span of of Gaussian states is dense in $\mathfrak{
H},$ it coincides with $\Phi $ on the whole  $\mathfrak{H}$.
\end{proof}

\begin{rem} A similar argument can be used to prove that any linear positive map $\Phi$ of the Banach space $\mathfrak{T}$
of trace-class operators, leaving the set of Gaussian states globally invariant, has the form \eqref{channel}.
By Lemma 2.2.1 of \cite{davies1976quantum} any such map is bounded, and the proof of Theorem \ref{gaussthmq}
can be repeated, with $\mathfrak{H}$ replaced by $\mathfrak{T}$. In addition, since the trace of operator is continuous
on $\mathfrak{T}$, the formula \eqref{channel}  implies preservation of trace. However, the positivity condition is difficult to express
in terms of the map parameters $\mathbf{y}_{0}, K, \alpha$.

On the other hand, if $\Phi$ is completely positive then the necessary and sufficient condition is (see \eqref{CP})
\begin{equation}
\alpha \geq \pm i(\Delta -\Delta _{K})\;,  \label{CPTP}
\end{equation}
where
\begin{equation}
\Delta _{K}\equiv K\Delta K^{T}\;.  \label{DELTAK}
\end{equation}
Thus $\Phi$ is a quantum Gaussian channel \cite{holevo2001evaluating},
and the condition Eq. \eqref{positivity} is replaced by the more stringent constraint \eqref{CPTP}.

For automorphisms of the $C^*$-algebra of the Canonical Commutation Relations a similar characterization,
based on a different proof using partial ordering of Gaussian states, was first given in \cite{demoen1977completely,fannes1976quasi}.
\end{rem}

\begin{rem}
There is a counterpart of Theorem \ref{gaussthmq} in probability theory:
\end{rem}
\begin{thm}
Let $\Phi $ be an endomorphism (linear bounded transformation) of the Banach space $
\mathcal{M}(\mathbb{R}^{n})$ of finite signed Borel measures on $\mathbb{R}
^{n}$ (equipped with the total variation norm) having the Feller property (the dual  $\Phi^* $ leaves invariant the space of bounded continuous functions on $\mathbb{R}^{n}$).
Then, if $\Phi$ sends the set of Gaussian probability measures into itself,  $\Phi $ is a Markov operator whose action in terms of characteristic functions is of the form \eqref{channel}, with the condition \eqref{positivity} replaced by $\alpha\geq 0$.
\begin{proof}
The proof is parallel to the proof of Theorem \ref{gaussthmq}, with replacement of \eqref{iden} by the corresponding identity for Gaussian probability measures.
As a result, we obtain that the action of $\Phi$ in terms of characteristic functions is given by \eqref{channel} for any measure $\mu$
which is a linear combination of Gaussian probability measures. For arbitrary measure $\mu\in \mathcal{M}(\mathbb{R}^{n})$ the characteristic function of $\Phi(\mu)$ is
\begin{equation*}
\chi_{\Phi(\mu)}(\mathbf{k})=\int \mathbf{e}^{i\,\mathbf{k}\,\mathbf{x}}\;\Phi(\mu)(d\mathbf{x}) = \int \Phi^*\left(\mathbf{e}^{i\,\mathbf{k}\,\mathbf{x}}\right)\;\mu(d\mathbf{x})\;,
\end{equation*}
where $\Phi^*\left(\mathbf{e}^{i\,\mathbf{k}\,\mathbf{x}}\right)$ is continuous bounded function by the Feller property. Since the linear span of Gaussian probability measures is dense in $\mathcal{M}(\mathbb{R}^{n})$ in the weak topology defined by continuous bounded functions (it suffices to take Dirac's deltas, i.e, probability measures degenerated at the points of  $\mathbb{R}^{n}$) , the formula \eqref{channel} extends to characteristic function of arbitrary finite signed Borel measure on $\mathbb{R}^{n}$. The action of $\Phi$ on the moments is given by \eqref{Phix} and \eqref{Phicm}.
The positivity of the output covariance matrix when the input is a Dirac delta implies $\alpha\geq0$.
\end{proof}
\end{thm}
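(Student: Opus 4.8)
The plan is to mirror the proof of Theorem~\ref{gaussthmq} line by line, replacing the quantum objects by their classical counterparts, and to exercise care at exactly the two points where the Banach space $\mathcal{M}(\mathbb{R}^n)$ behaves differently from the Hilbert--Schmidt space $\mathfrak{H}$. I parametrize the Gaussian probability measures $g_{\sigma,\mathbf{x}}$ by their mean $\mathbf{x}$ and covariance $\sigma$, so that their characteristic functions are $e^{-\frac14\mathbf{k}\sigma\mathbf{k}^T+i\mathbf{k}\mathbf{x}}$, and I write $\mathbf{y}(\mathbf{x},\sigma)$ and $\tau(\mathbf{x},\sigma)$ for the mean and covariance of the Gaussian measure that $\Phi$ produces from $g_{\sigma,\mathbf{x}}$. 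The first preliminary step is to show these depend continuously on $(\mathbf{x},\sigma)$: on the open set $\sigma>0$ the map $(\mathbf{x},\sigma)\mapsto g_{\sigma,\mathbf{x}}$ is jointly continuous in total variation, since the Gaussian densities converge in $L^1$ by Scheff\'e's lemma, and as $\Phi$ is bounded and hence continuous, the output parameters $\mathbf{y},\tau$ inherit this continuity.

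Next I would invoke the classical analogue of the convolution identity \eqref{iden}, namely that integrating $g_{\sigma',\mathbf{x}'}$ against the Gaussian measure $\mu_{\mathbf{x},\sigma}$ reproduces $g_{\sigma'+\sigma,\mathbf{x}}$. Applying the continuous linear operator $\Phi$ to both sides and passing to characteristic functions yields verbatim the same Cauchy problem for the diffusion equation \eqref{dphi} that powers the quantum argument; the regularity of that equation gives smoothness of the solution for $\sigma>0$. Substituting the Gaussian ansatz and comparing powers of $\mathbf{k}$ forces $\partial_i\tau=0$, then the linearity of $\mathbf{y}$ in $\mathbf{x}$, then $d\mathbf{y}=0$, and finally $\tau(\sigma)=K\sigma K^T+\alpha$, so that the transformation laws \eqref{Phix} and \eqref{Phicm} hold and \eqref{channel} is established for every $\mu$ that is a finite linear combination of Gaussian probability measures.

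The genuinely new point, and the step I expect to be the main obstacle, is the extension from this family to all of $\mathcal{M}(\mathbb{R}^n)$. In the quantum proof the Gaussian states are norm-dense in $\mathfrak{H}$, so boundedness of $\Phi$ closes the argument at once; here the total-variation norm is useless, since Gaussians of different covariance are mutually singular and remain at distance $2$. The remedy is to pass to the weak topology generated by the bounded continuous functions $C_b(\mathbb{R}^n)$: the linear span of Gaussian probability measures is weakly dense in $\mathcal{M}(\mathbb{R}^n)$, as it already contains all Dirac deltas $\delta_{\mathbf{x}}$ (weak limits of Gaussians with $\sigma\to0$), whose span is weakly dense. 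For arbitrary $\mu$ one writes $\chi_{\Phi(\mu)}(\mathbf{k})=\int \Phi^*\!\left(e^{i\mathbf{k}\mathbf{x}}\right)\mu(d\mathbf{x})$, and the Feller hypothesis is exactly what guarantees $\Phi^*(e^{i\mathbf{k}\cdot})\in C_b(\mathbb{R}^n)$; both sides of \eqref{channel} are then weakly continuous in $\mu$, so the formula propagates from the weakly dense span to every finite signed Borel measure.

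Finally I would pin down the positivity constraint and the Markov property. Since $\alpha\geq\pm i\Delta$ has no classical meaning, I instead feed the degenerate Gaussian $\delta_{\mathbf{x}}$ (the case $\sigma=0$) into \eqref{Phicm}: its image has covariance $K\,0\,K^T+\alpha=\alpha$, which must itself be a genuine covariance matrix, whence $\alpha\geq0$. The Markov property is then automatic, because the weak closure of the convex hull of Gaussian probability measures is the whole set of probability measures (every probability measure being a weak limit of convex combinations of Dirac deltas, which are weak limits of Gaussians), so weak continuity and linearity force $\Phi$ to send each probability measure to a probability measure, while \eqref{channel} evaluated at $\mathbf{k}=\mathbf{0}$ shows total mass is preserved.
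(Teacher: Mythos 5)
Your proposal is correct and follows essentially the same route as the paper's proof: mirror the quantum argument (convolution identity, diffusion equation) to get \eqref{channel} on the span of Gaussian measures, then extend to all of $\mathcal{M}(\mathbb{R}^n)$ via the Feller property and weak density of that span (through Dirac deltas), and finally obtain $\alpha\geq0$ from the image of a Dirac delta. The only additions are details the paper leaves implicit (Scheff\'e's lemma for total-variation continuity, and the explicit weak-closure argument for the Markov property), which are consistent with the paper's reasoning.
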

\subsection{Contractions}

\label{contr}

A contraction
by $\lambda=\frac{1}{\mu}$ behaves properly on the restricted subset $\mathfrak{G}^{(>)}_{\mu^2}$ of $\mathfrak{G}$ formed by the Gaussian states whose
covariance matrix admits symplectic eigenvalues larger than ${\mu}^2$.
Indeed all elements of $\mathfrak{G}^{(>)}_{\mu^2}$ will be mapped into
proper Gaussian output states by the contraction (and by linearity also the
convex hull of $\mathfrak{G}^{(>)}_{\mu^2}$ will be mapped into proper
output density operators). We will prove that any transformation with this property can be written as a contraction of $1/\mu$, followed by a transformation of the kind of Theorem \ref{gaussthmq}.
Let us first notice that:

\begin{lem}
A set $(K,\alpha)$ satisfies \eqref{positivity} for any $\sigma$ with
symplectic eigenvalues greater than $\mu^2$ iff $(\mu K,\;\alpha)$ satisfies
\eqref{positivity} for any $\sigma\geq\pm i\Delta$.
\begin{proof}
$\sigma$ has all the symplectic eigenvalues greater than $\mu^2$ iff $\sigma\geq\pm i\mu^2\Delta$, i.e. iff $\sigma'=\sigma/\mu^2$ is a state. Then \eqref{positivity} is satisfied for any $\sigma\geq\pm i\mu^2\Delta$ iff
\begin{equation}
\mu^2 K\sigma'K^T+\alpha\geq\pm i\Delta\qquad\forall\;\sigma'\geq\pm i\Delta\;,
\end{equation}
i.e. iff $(\mu K,\;\alpha)$ satisfies \eqref{positivity} for any $\sigma\geq\pm i\Delta$.
\end{proof}
\end{lem}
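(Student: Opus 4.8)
The plan is to prove both implications simultaneously by exhibiting an exact rescaling that identifies the two families of covariance matrices appearing in the statement, and then checking that the positivity constraint \eqref{positivity} is carried into itself under this rescaling. Concretely, I would parametrize the matrices $\sigma$ with symplectic eigenvalues at least $\mu^2$ by their rescaled versions $\sigma':=\sigma/\mu^2$, and reduce the whole claim to a one-line substitution in \eqref{positivity}.

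First I would establish the set-level equivalence. Recalling that the symplectic eigenvalues of $\sigma$ are the absolute values of the eigenvalues of $\sigma\Delta^{-1}$, the bound ``all symplectic eigenvalues of $\sigma$ are at least $c$'' is equivalent to the matrix inequality $\sigma\geq\pm i\,c\,\Delta$ (this is precisely the positivity condition for an admissible quantum covariance matrix in the case $c=1$, rescaled by $c$). Applying this with $c=\mu^2$, the hypothesis that $\sigma$ has all symplectic eigenvalues at least $\mu^2$ reads $\sigma\geq\pm i\,\mu^2\Delta$, which upon dividing by the positive scalar $\mu^2$ is exactly $\sigma'\geq\pm i\Delta$, i.e. $\sigma'$ is an admissible covariance matrix. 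Hence $\sigma\mapsto\sigma'=\sigma/\mu^2$ is a bijection between the set of $\sigma$ with symplectic eigenvalues $\geq\mu^2$ and the set of admissible $\sigma'\geq\pm i\Delta$.

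Next I would perform the substitution. Since $\sigma=\mu^2\sigma'$, one has
\begin{equation}
K\,\sigma\,K^T+\alpha=\mu^2\,K\,\sigma'\,K^T+\alpha=(\mu K)\,\sigma'\,(\mu K)^T+\alpha\;,
\end{equation}
so that for each pair $(\sigma,\sigma')$ related by the bijection the inequality $K\sigma K^T+\alpha\geq\pm i\Delta$ holds if and only if $(\mu K)\sigma'(\mu K)^T+\alpha\geq\pm i\Delta$ holds. Quantifying over the two sets, which are in bijection, I conclude that $(K,\alpha)$ satisfies \eqref{positivity} for every $\sigma$ with symplectic eigenvalues $\geq\mu^2$ precisely when $(\mu K,\alpha)$ satisfies \eqref{positivity} for every admissible $\sigma'\geq\pm i\Delta$, which is the assertion.

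The only genuinely non-routine ingredient, and hence the point to get right, is the equivalence between the spectral bound on the symplectic eigenvalues and the operator inequality $\sigma\geq\pm i\,c\,\Delta$; since this is exactly the fact already invoked in the excerpt to characterize admissible covariance matrices and to check that dilatations preserve \eqref{positivity}, I would cite it rather than reprove it. A minor caveat I would flag is the distinction between the strict ($>\mu^2$) and non-strict ($\geq\mu^2$) symplectic-eigenvalue bounds appearing in the statement: the rescaling argument works verbatim for the closed condition, and the open and closed versions impose the same constraint on $(K,\alpha)$ by continuity of the map $\sigma\mapsto K\sigma K^T+\alpha$ together with the closedness of the cone $\{X:X\geq\pm i\Delta\}$.
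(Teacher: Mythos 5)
Your proof is correct and follows essentially the same route as the paper's: the equivalence between the symplectic-eigenvalue bound and the matrix inequality $\sigma\geq\pm i\mu^2\Delta$, followed by the rescaling $\sigma'=\sigma/\mu^2$ and the observation that $K\sigma K^T=(\mu K)\sigma'(\mu K)^T$. Your additional remark on the strict versus non-strict eigenvalue bound is a reasonable point of care but does not change the argument.
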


Then we can state the following result:

\begin{cor}
Any transformation associated with $(K,\alpha )$ satisfying
\eqref{positivity} for any state in $\mathfrak{G}_{\mu ^{2}}^{(>)}$ (i.e.
for any $\sigma \geq \pm i\mu ^{2}\Delta $) can be written as a contraction
of $1/\mu $, followed by a transformation satisfying \eqref{positivity} for
any state in $\mathfrak{G}$ (i.e. for any $\sigma \geq \pm i\Delta $).
\end{cor}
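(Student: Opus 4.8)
The plan is to exhibit the factorization explicitly at the level of the parameters $(K,\alpha,\mathbf{y}_0)$ and then promote it to an identity of maps on $\mathfrak{H}$ by the density argument already used in the proof of Theorem \ref{gaussthmq}. Write $C_{1/\mu}$ for the contraction of $1/\mu$, i.e. the transformation \eqref{DILATATION} with $K=\frac{1}{\mu}\mathbb{I}_{2n}$ and $\alpha=0$, which is bounded on $\mathfrak{H}$ since it equals $\lambda^{2n}$ times the adjoint of a dilatation. Given the map $\Phi$ associated with $(K,\alpha,\mathbf{y}_0)$ and satisfying \eqref{positivity} on $\mathfrak{G}^{(>)}_{\mu^2}$, I would define a second transformation $\Psi$ through the parameters $(\mu K,\;\alpha,\;\mathbf{y}_0)$ and claim that $\Phi=\Psi\circ C_{1/\mu}$.

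Two things must then be checked. First, that $\Psi$ is a legitimate Gaussian-to-Gaussian map on the whole of $\mathfrak{G}$: this is precisely the content of the preceding Lemma, which states that $(K,\alpha)$ obeys \eqref{positivity} for every $\sigma\geq\pm i\mu^2\Delta$ if and only if $(\mu K,\alpha)$ obeys \eqref{positivity} for every $\sigma\geq\pm i\Delta$. Hence the hypothesis on $\Phi$ is equivalent to $\Psi$ satisfying \eqref{positivity} on all of $\mathfrak{G}$, so by Theorem \ref{gaussthmq} the map $\Psi$ has the form \eqref{channel} and sends $\mathfrak{G}$ into itself. Second, that the composition reproduces $\Phi$. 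Here I would compute directly on characteristic functions using \eqref{channel}: applying $C_{1/\mu}$ sends $\chi(\mathbf{k})\mapsto\chi(\mathbf{k}/\mu)$, and $\Psi$ then replaces $\mathbf{k}$ by $\mathbf{k}\,\mu K$ in the argument and multiplies by the Gaussian factor, producing $\chi(\mathbf{k}\,\mu K/\mu)\,e^{-\frac14\mathbf{k}\alpha\mathbf{k}^T+i\mathbf{k}\mathbf{y}_0}=\chi(\mathbf{k}K)\,e^{-\frac14\mathbf{k}\alpha\mathbf{k}^T+i\mathbf{k}\mathbf{y}_0}$, which is exactly the action \eqref{channel} of $\Phi$. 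Equivalently one verifies the moment rules \eqref{Phix} and \eqref{Phicm}: the contraction rescales the covariance by $1/\mu^2$ and the first moment by $1/\mu$, and the factor $\mu$ in $K_\Psi=\mu K$ compensates exactly, giving $\sigma\mapsto K\sigma K^T+\alpha$ and $\mathbf{x}\mapsto K\mathbf{x}+\mathbf{y}_0$.

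The one point requiring care, rather than a genuine obstacle, is that $C_{1/\mu}$ need not map an arbitrary Gaussian state to a positive operator, so the identity $\Phi=\Psi\circ C_{1/\mu}$ cannot be checked on states alone; the intermediate object $C_{1/\mu}(\hat{\rho})$ is in general only a Hilbert-Schmidt operator. I would therefore argue on $\mathfrak{H}$ throughout: both $\Phi$ and $\Psi\circ C_{1/\mu}$ are bounded linear maps (the contraction is bounded by the Parceval estimate \eqref{hilbert}, and $\Psi$ is bounded by Theorem \ref{gaussthmq}), and they agree on every Gaussian state by the characteristic-function computation above. Since the linear span of Gaussian states is dense in $\mathfrak{H}$, continuity forces $\Phi=\Psi\circ C_{1/\mu}$ on all of $\mathfrak{H}$, which completes the argument. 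The substance of the corollary is thus carried almost entirely by the preceding Lemma, and the remaining work is only the bookkeeping of the composition together with this density step.
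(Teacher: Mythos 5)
Your proposal is correct and follows essentially the same route as the paper, which leaves the corollary's proof implicit precisely because it is the immediate consequence of the preceding Lemma: factor $K=(\mu K)\cdot\frac{1}{\mu}\mathbb{I}_{2n}$ with the noise $\alpha$ carried entirely by the second map, and invoke the Lemma to certify that $(\mu K,\alpha)$ satisfies \eqref{positivity} on all of $\mathfrak{G}$. Your additional characteristic-function bookkeeping and the density/continuity step on $\mathfrak{H}$ are sound (indeed the composition identity holds algebraically for every $\chi$, so the density argument is harmless, if slightly redundant).
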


\begin{figure}[t]
\includegraphics[width=\textwidth]{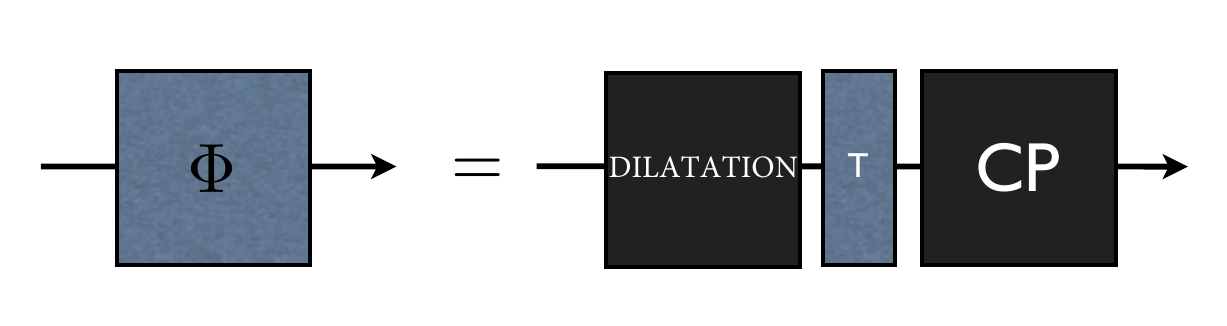}
\caption{Pictorial representation of the decomposition of a
generic (not necessarily positive) Gaussian single-mode transformation $\Phi$
in terms of a dilatation, CP mapping and (possibly) a transposition. The
same decomposition applies also to the case of $n$-mode transformations when
no extra noise is added to the system, see Section \protect\ref{S:MULTI}. }
\label{FigSys1}
\end{figure}

\section{One mode}

\label{S:ONE} Here we will give a complete classification of all the
one-mode maps \eqref{channel} satisfying \eqref{positivity}.

We will need the following

\begin{lem}
A set $(K,\;\alpha)$ satisfies \eqref{positivity} iff
\begin{equation}  \label{det1}
\sqrt{\det\alpha}\geq1-|\det K|\;.
\end{equation}
\end{lem}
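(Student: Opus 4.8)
The plan is to reduce the matrix inequality \eqref{positivity} to a single scalar condition on determinants, exploiting that in one mode ($n=1$) the matrices $K,\alpha,\Delta$ are $2\times2$. First I would record the one-mode characterization of a covariance matrix: for symmetric $\sigma=\left(\begin{smallmatrix}a&b\\ b&c\end{smallmatrix}\right)$ a direct computation gives $\det(\sigma\pm i\Delta)=\det\sigma-1$ and $\operatorname{tr}(\sigma\pm i\Delta)=\operatorname{tr}\sigma$, so the Hermitian matrices $\sigma\pm i\Delta$ are both positive semidefinite exactly when $\sigma>0$ and $\det\sigma\geq1$. Applying the same criterion to the output $K\sigma K^{T}+\alpha$, condition \eqref{positivity} becomes: for every symmetric $\sigma>0$ with $\det\sigma\geq1$, the matrix $K\sigma K^{T}+\alpha$ is positive and $\det\!\left(K\sigma K^{T}+\alpha\right)\geq1$.

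Next I would show that \eqref{positivity} forces $\alpha\geq0$, so that $\sqrt{\det\alpha}$ in \eqref{det1} is well defined. If $\alpha$ had a negative eigenvalue along a unit vector $v$, I would set $w=K^{T}v$ and use a squeezing family $\sigma_{r}$ that is small along $w$ and large in the orthogonal direction with $\det\sigma_{r}=1$; then $w^{T}\sigma_{r}w\to0$, so $v^{T}\!\left(K\sigma_{r}K^{T}+\alpha\right)v=w^{T}\sigma_{r}w+v^{T}\alpha v$ eventually becomes negative, contradicting the positivity just derived (and if $w=0$ this expression is already negative). Hence $\alpha\geq0$ and $\det\alpha\geq0$.

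The heart of the argument is the identity $\inf_{\sigma}\det\!\left(K\sigma K^{T}+\alpha\right)=\bigl(|\det K|+\sqrt{\det\alpha}\bigr)^{2}$, the infimum taken over valid $\sigma$. The lower bound is the Minkowski determinant inequality \eqref{mink} with $m=2$: since $K\sigma K^{T}\geq0$ and $\alpha\geq0$, one has $\sqrt{\det\!\left(K\sigma K^{T}+\alpha\right)}\geq\sqrt{\det\!\left(K\sigma K^{T}\right)}+\sqrt{\det\alpha}=|\det K|\sqrt{\det\sigma}+\sqrt{\det\alpha}\geq|\det K|+\sqrt{\det\alpha}$, using $\det\sigma\geq1$. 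For the matching upper bound I would exhibit $\sigma$ saturating \eqref{mink}, whose equality case demands $K\sigma K^{T}\propto\alpha$: when $K$ is invertible and $\alpha>0$, the choice $\sigma=\bigl(|\det K|/\sqrt{\det\alpha}\bigr)\,K^{-1}\alpha K^{-T}$ satisfies $\det\sigma=1$, $\sigma>0$ and $K\sigma K^{T}=\bigl(|\det K|/\sqrt{\det\alpha}\bigr)\alpha$, attaining the value exactly. Both implications then follow from ``all outputs have $\det\geq1$'' being equivalent to ``the infimum is $\geq1$'': sufficiency, because $\sqrt{\det\alpha}\geq1-|\det K|$ makes every valid output positive (as $\alpha\geq0$) with determinant $\geq1$, so \eqref{positivity} holds; necessity, because \eqref{positivity} gives $\bigl(|\det K|+\sqrt{\det\alpha}\bigr)^{2}\geq1$, which is \eqref{det1}.

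The main obstacle I expect is the degenerate regime $\det K=0$ or $\det\alpha=0$, where the infimum is only approached through a squeezing limit rather than attained and the Minkowski equality case is unavailable. I would treat these by an alignment argument: when $\alpha>0$ but $\det K=0$, the rank-$\leq1$ perturbation identity $\det(\alpha+A)=\det\alpha\,\bigl(1+\operatorname{tr}(\alpha^{-1}A)\bigr)$ shows $\det\!\left(K\sigma K^{T}+\alpha\right)\to\det\alpha$ as $K\sigma K^{T}\to0$; and when $\det\alpha=0$, squeezing $\sigma$ so that $K\sigma K^{T}$ is pushed into the kernel direction of $\alpha$ drives the determinant down to $(\det K)^{2}$, forcing $|\det K|\geq1$ and hence making \eqref{det1} trivially true (its right-hand side being nonpositive). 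These limiting constructions, together with the infimum identity, close all cases.
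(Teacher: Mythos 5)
Your proof is correct and takes essentially the same route as the paper's: the same reduction of \eqref{positivity} to the scalar condition $\det\left(K\sigma K^{T}+\alpha\right)\geq1$ over $\sigma\geq0$, $\det\sigma\geq1$, Minkowski's determinant inequality for the direction \eqref{det1}$\Rightarrow$\eqref{positivity}, the same extremal choice $K\sigma K^{T}\propto\alpha$ (with $\det\sigma=1$) for the converse when $\det K\neq0$, and the same squeezing limits in the degenerate cases. Your repackaging as an infimum identity and your explicit derivation that \eqref{positivity} forces $\alpha\geq0$ (which the paper only proves later, in the multimode section) are welcome but not substantive departures.
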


\begin{proof}
For one mode, $\sigma\geq0$ satisfies $\sigma\geq\pm i\Delta$ iff $\det\sigma\geq1$, and condition \eqref{positivity} can be rewritten as
\begin{equation}
\det \left( K\sigma K^{T}+\alpha \right) \geq 1,\qquad \forall \;\sigma \geq
0,\;\det \sigma \geq 1\;.  \label{posdet}
\end{equation}
To prove \eqref{posdet} $\Longrightarrow $ \eqref{det1} let us consider first the
case $\det K\neq 0.$ Choosing $\sigma $ such that
\begin{equation}
K\sigma K^{T}=\frac{|\det
K|}{\sqrt{\det \alpha }}\alpha\;,
\end{equation}
we have $\sigma \geq 0$ and $\det \sigma \geq 1$.
Inserting this into \eqref{posdet}, we obtain
\begin{equation*}
\left( 1+\frac{|\det K|}{\sqrt{\det \alpha }}\right) ^{2}\det \alpha \geq 1
\end{equation*}
or, taking square root,
\begin{equation*}
\left( 1+\frac{|\det K|}{\sqrt{\det \alpha }}\right) \sqrt{\det \alpha }\geq
1\;.
\end{equation*}
hence \eqref{det1} follows.

If $\det K=0,$ then there is a unit vector $\mathbf{e}$ such that $K\mathbf{e}=0$. Choose
\begin{equation}
\sigma =\epsilon ^{-1}\;\mathbf{e}\mathbf{e}^{T}+\epsilon\;\mathbf{e}_{1}\mathbf{e}_{1}^{T}\;,
\end{equation}
where $\epsilon >0$,
and $\mathbf{e}_{1}$ is a unit vector orthogonal to $\mathbf{e}$. Then $\sigma \geq 0$, $\det
\sigma =1$, and $K\sigma K^{T}=\epsilon A$, where
\begin{equation}
A=K\mathbf{e}_{1}\mathbf{e}_{1}^{T}K^{T}\geq 0\;.
\end{equation}
Inserting this into \eqref{posdet}, we obtain
\begin{equation}
\det \left( \epsilon A+\alpha \right) \geq 1,\quad \forall \;\epsilon \geq 0,
\end{equation}
hence \eqref{det1} follows.

To prove \eqref{det1} $\Longrightarrow $  \eqref{posdet}, we use Minkowski's
determinant inequality
\begin{equation}
\sqrt{\det (A+B)}\geq \sqrt{\det A}+\sqrt{\det B}\qquad \forall \;A,\,B\geq
0\;.  \label{mink2}
\end{equation}
We have for all $\sigma \geq 0,\det \sigma \geq 1,$
\begin{equation}
\sqrt{\det \left( K\sigma K^{T}+\alpha \right) }\geq\left\vert \det K\right\vert \sqrt{\det \sigma }+\sqrt{\det \alpha }
\geq\left\vert \det K\right\vert +\sqrt{\det \alpha }\;\geq 1\;,
\label{ineq}
\end{equation}
where in the last step we have used \eqref{det1}.
\end{proof}
To compare transformations satisfying \eqref{det1} with CP ones, we need also

\begin{lem}
A set $(K,\;\alpha)$ characterizes a completely positive transformation
(i.e. satisfies \eqref{CPTP}) iff
\begin{equation}
\sqrt{\det\alpha}\geq|1-\det K|\;.  \label{det2}
\end{equation}
\end{lem}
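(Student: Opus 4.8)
I want to prove that a one-mode Gaussian-to-Gaussian map characterized by the pair $(K,\alpha)$ is completely positive (i.e. satisfies \eqref{CPTP}) if and only if $\sqrt{\det\alpha}\geq|1-\det K|$. Let me think about what the CP condition \eqref{CPTP} actually says in one mode, and compare it to the determinant inequality I am after.

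The plan is to specialize the complete-positivity condition \eqref{CPTP} to a single mode and collapse it to a scalar inequality on determinants. First I would exploit the fact that for any real $2\times2$ matrix $K$ one has the identity $K\Delta K^{T}=(\det K)\,\Delta$, which is verified by a one-line computation from the explicit $2\times2$ form of $\Delta$. Consequently the matrix $\Delta_K$ of \eqref{DELTAK} equals $(\det K)\,\Delta$, and the combination entering \eqref{CPTP} simplifies to
\begin{equation*}
\Delta-\Delta_K=(1-\det K)\,\Delta\;.
\end{equation*}
Setting $c:=1-\det K$, the complete-positivity requirement \eqref{CPTP} becomes the pair of Hermitian matrix inequalities $\alpha\pm i\,c\,\Delta\geq0$.

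Next I would reduce each of these $2\times2$ Hermitian inequalities to scalar conditions using the elementary criterion that a $2\times2$ Hermitian matrix is positive semidefinite precisely when its trace and its determinant are both nonnegative. Because $\Delta$ is traceless one has $\operatorname{tr}(\alpha\pm i\,c\,\Delta)=\operatorname{tr}\alpha$, while a direct expansion gives
\begin{equation*}
\det(\alpha\pm i\,c\,\Delta)=\det\alpha-c^{2}\;.
\end{equation*}
Hence both Hermitian inequalities hold if and only if $\operatorname{tr}\alpha\geq0$ and $\det\alpha\geq c^{2}$, the determinant condition being identical for the two sign choices.

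It then remains to identify this with \eqref{det2}. In one direction, averaging the two inequalities $\alpha+i\,c\,\Delta\geq0$ and $\alpha-i\,c\,\Delta\geq0$ yields $\alpha\geq0$, so complete positivity automatically forces $\alpha$ to be positive semidefinite; in particular $\det\alpha\geq c^{2}\geq0$, the square root in \eqref{det2} is meaningful, and $\det\alpha\geq c^{2}$ is exactly $\sqrt{\det\alpha}\geq|c|=|1-\det K|$. Conversely, under the standing assumption $\alpha\geq0$ (which is the natural one for a noise covariance and is already implicit in the use of $\sqrt{\det\alpha}$ and in the companion condition \eqref{det1}), the trace condition $\operatorname{tr}\alpha\geq0$ holds automatically, so \eqref{det2} reads $\det\alpha\geq c^{2}$, which restores both Hermitian inequalities and hence \eqref{CPTP}.

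I expect the main obstacle to be conceptual rather than computational: the whole reduction rests on the special two-dimensional identity $K\Delta K^{T}=(\det K)\,\Delta$, which has no counterpart in higher dimension and is precisely what makes the clean single-mode classification break down in the multimode analysis of Section \ref{S:MULTI}. The one genuinely delicate point in the equivalence is the trace condition $\operatorname{tr}\alpha\geq0$, since it is \emph{not} implied by $\sqrt{\det\alpha}\geq|1-\det K|$ on its own; I would therefore make explicit that it is furnished either by the positivity of $\alpha$ forced by complete positivity or by the convention $\alpha\geq0$ already in force for the maps \eqref{channel}.
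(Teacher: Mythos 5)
Your proof is correct and follows essentially the same route as the paper's: both rest on the two\-dimensional identity $K\Delta K^{T}=(\det K)\,\Delta$ of Eq.~\eqref{det1mode} to reduce \eqref{CPTP} to $\alpha\geq\pm i(1-\det K)\Delta$, and then convert this matrix inequality to the scalar condition $\det\alpha\geq(1-\det K)^{2}$ (the paper by invoking Eq.~\eqref{state1} together with a scaling argument, you by the trace/determinant criterion for $2\times2$ Hermitian matrices, which amounts to the same computation). Your explicit handling of the trace condition and of the positivity of $\alpha$ in the converse direction is in fact more careful than the paper's, which leaves that point implicit.
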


\begin{proof}
For one mode, using \eqref{det1mode},
\begin{equation}
\Delta_K=K\Delta K^T=\det K\;\Delta\;,
\end{equation}
and \eqref{CPTP} becomes
\begin{equation}\label{CPTPdet}
\alpha\geq\pm i(1-\det K)\Delta\;.
\end{equation}
Using Eq. \eqref{state1} of Appendix \ref{appG},  for linearity \eqref{CPTPdet} becomes exactly
\begin{equation}
\det\alpha\geq(1-\det K)^2\;.
\end{equation}
\end{proof}
We recall here that a complete classification of single mode CP
maps has been provided in Ref.'s \cite{holevo2007one,caruso2006one}.

We are now ready to prove the main result of this Section.

\begin{thm}
\label{thm:1mode} Any map $\Phi $ satisfying \eqref{positivity} can be
written as a dilatation possibly composed with the transposition, followed
by a completely positive map. In more detail, given a pair $(K,\;\alpha )$
satisfying \eqref{positivity},

\begin{description}
\item[a1] If
\begin{equation}
0\leq\det K\leq1\;,
\end{equation}
$\Phi$ is completely positive.

\item[a2] If
\begin{equation}
\det K>1\;,
\end{equation}
$\Phi$ can be written as a phase-space dilatation of parameter $\lambda=\sqrt{\det K}>1$, composed with
the symplectic transformation given by
\begin{equation}
S=\frac{K}{\sqrt{\det K}}\;,
\end{equation}
composed with the addition of Gaussian noise given by $\alpha$.

\item[b1] If
\begin{equation}
-1\leq\det K<0\;,
\end{equation}
$\Phi$ can be written as a transposition composed with a completely positive
map.

\item[b2] If
\begin{equation}
\det K<-1\;,
\end{equation}
$\Phi$ can be written as a dilatation of $\sqrt{|\det K|}$ composed with the
transposition, followed by the symplectic transformation given by
\begin{equation}
S=\frac{K}{\sqrt{|\det K|}}\;,
\end{equation}
composed with the addition of Gaussian noise given by $\alpha$.
\end{description}
\end{thm}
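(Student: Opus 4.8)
The statement to prove is a complete case-by-case classification of one-mode Gaussian-to-Gaussian maps $\Phi$ characterized by a pair $(K,\alpha)$ satisfying the positivity condition \eqref{positivity}. By the preceding Lemmata, the two relevant determinant inequalities are $\sqrt{\det\alpha}\geq 1-|\det K|$ (positivity, i.e. Gaussian-to-Gaussian) and $\sqrt{\det\alpha}\geq|1-\det K|$ (complete positivity). The whole argument rests on comparing these two quantities as $\det K$ ranges over $\mathbb{R}$.

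Let me reconstruct how the proof should go.

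First I would establish the structure of $2\times 2$ matrices $K$ with a given determinant. For a $2\times 2$ real matrix, there is a polar-type decomposition: if $\det K>0$, then $K=\sqrt{\det K}\,S$ with $S$ having determinant $1$, and every determinant-one $2\times2$ matrix is symplectic (since for one mode $\mathrm{Sp}(2,\mathbb{R})=\mathrm{SL}(2,\mathbb{R})$, because $S\Delta S^T=\det S\,\Delta$). If $\det K<0$, I would factor out the transposition $T_2$ (which has $\det T_2=-1$), writing $K=\sqrt{|\det K|}\,S\,T_2$ with $S$ symplectic of determinant one. This is the algebraic backbone that converts the four cases into recognizable composite operations.

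Next I would dispose of each case by comparing the two determinant thresholds. For case \textbf{a1} ($0\leq\det K\leq1$), note $|1-\det K|=1-\det K=1-|\det K|$, so positivity and complete positivity coincide; hence $\Phi$ is automatically CP. For case \textbf{a2} ($\det K>1$), I would split $K=\lambda S$ with $\lambda=\sqrt{\det K}>1$ and $S$ symplectic, and check that the dilatation by $\lambda$ (which maps $\sigma\mapsto\lambda^2\sigma$) followed by the CP symplectic map $S$ and then the noise addition $\alpha$ reproduces $\sigma\mapsto K\sigma K^T+\alpha$; the content is showing that after factoring out the dilatation the residual map with noise $\alpha$ satisfies the CP condition, which is exactly where the gap between $\sqrt{\det\alpha}\geq 1-\det K$ and $\sqrt{\det\alpha}\geq\det K-1$ gets absorbed by the dilatation (the dilatation by $\lambda>1$ is precisely what bridges the two inequalities). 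Cases \textbf{b1} and \textbf{b2} are the mirror images obtained by first pulling out the transposition $T_2$, reducing $\det K<0$ to a nonnegative determinant problem on $K T_2$, and then invoking the already-settled cases \textbf{a1}/\textbf{a2}.

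The main obstacle I anticipate is \textbf{case a2}: verifying rigorously that the leftover transformation, after stripping the dilatation $\lambda=\sqrt{\det K}$ and the symplectic $S$, is genuinely CP and not merely positive. Concretely, one must check that the noise $\alpha$ which only satisfies $\sqrt{\det\alpha}\geq 1-\det K$ (and can even be $0$ when $\det K\geq 1$) still meets the CP threshold $\sqrt{\det\alpha}\geq|1-\det K'|$ \emph{for the reduced map}, where the reduced parameter is $K'=S$ with $\det K'=1$, making the CP threshold $\sqrt{\det\alpha}\geq 0$, trivially true. Thus the delicate point is bookkeeping: confirming that the dilatation carries exactly the right amount of ``positivity slack'' so that the symplectic-plus-noise remainder lands in the CP region, and that composition of a dilatation with a CP map indeed yields a map with the claimed $(K,\alpha)$. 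I would verify this by composing the covariance-matrix actions \eqref{Phicm} in sequence and matching, treating the transposition cases by the same computation conjugated by $T_2$.
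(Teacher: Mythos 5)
Your proposal is correct and follows essentially the same route as the paper's proof: case a1 by noting the positivity and CP determinant thresholds coincide, case a2 by factoring $K=\sqrt{\det K}\,S$ with $S$ symplectic (using $\mathrm{Sp}(2,\mathbb{R})=\mathrm{SL}(2,\mathbb{R})$) so the residual symplectic-plus-noise map is CP, and cases b1/b2 by pulling out the transposition and observing that condition \eqref{det1} is invariant under $\det K\mapsto-\det K$, reducing to the settled positive-determinant cases. The only difference is cosmetic: you spell out the CP bookkeeping for the residual map in a2 slightly more explicitly than the paper, which simply asserts the decomposition.
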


\begin{proof}
\begin{description}
\item[a] Let us start from the case
\begin{equation}
\det K\geq0\;.
\end{equation}
\begin{description}
\item[a1] If
\begin{equation}
0\leq\det K\leq1\;,
\end{equation}
\eqref{det1} and \eqref{det2} coincide, so $\Phi$ is completely positive.
\item[a2] If
\begin{equation}
\det K>1\;,
\end{equation}
we can write $K$ as
\begin{equation}
K=S\;\sqrt{\det K}\mathbb{I}_2\;,
\end{equation}
where
\begin{equation}
S=\frac{K}{\sqrt{\det K}}
\end{equation}
is symplectic since $\det S=1$.
Then $\Phi$ can be written as a dilatation of $\sqrt{\det K}>1$, followed by the symplectic transformation given by $S$, composed with the addition of the Gaussian noise given by $\alpha$.
\end{description}
\item[b] If
\begin{equation}
\det K<0\;,
\end{equation}
we can write $K$ as
\begin{equation}
K=K'T\;,
\end{equation}
where $T$ is the one-mode transposition
\begin{equation}
T=\left(
    \begin{array}{cc}
      1 &  \\
       & -1 \\
    \end{array}
  \right)\;,
\end{equation}
and
\begin{equation}
\det K'=-\det K>0\;.
\end{equation}
From \eqref{det1} we can see that also $K'$ satisfies
\begin{equation}
\sqrt{\det\alpha}\geq1-\left|\det K'\right|\;,
\end{equation}
and we can exploit the classification with positive determinant, ending with the same decomposition with the addition of the transposition after (or before, since they commute) the eventual dilatation.
\end{description}
\end{proof}

\section{Multimode case}

\label{S:MULTI} In the multimode case, a classification as simple as the
one of Theorem \ref{thm:1mode} does not exist. However, we will prove that
if $\Phi$ does not add any noise, i.e $\alpha=0$, the only solution to
\eqref{positivity} is a dilatation possibly composed with a (total)
transposition, followed by a symplectic transformation. We will also provide
examples that do not fall in any classification like \ref{thm:1mode}, i.e.
that are not composition of a dilatation, possibly followed by a (total)
transposition, and a completely positive map.

We will need the following lemma:

\begin{lem}
\label{lem:inf}
\begin{equation}
\inf_{\sigma\geq\pm i\Delta}\mathbf{w}^\dag\sigma\mathbf{w}=\left|\mathbf{w}
^\dag\Delta\mathbf{w}\right|\qquad\forall\;\;\mathbf{w}\in\mathbb{C}^{2n}\;.
\label{inf}
\end{equation}
\begin{proof}
~\paragraph{Lower bound}
The lower bound for the LHS is straightforward: for any $\sigma\geq\pm i\Delta$ and $\mathbf{w}\in\mathbb{C}^{2n}$ we have
\begin{equation}
\mathbf{w}^\dag\sigma\mathbf{w}\geq\pm i\mathbf{w}^\dag\Delta\mathbf{w}\;,
\end{equation}
and then
\begin{equation}
\inf_{\sigma\geq\pm i\Delta}\mathbf{w}^\dag\sigma\mathbf{w}\geq\left|\mathbf{w}^\dag\Delta\mathbf{w}\right|\;.
\end{equation}

\paragraph{Upper bound}
To prove the converse, let
\begin{equation*}
\mathbf{w}=\mathbf{w}_{1}+i\mathbf{w}_{2}\;,\qquad \mathbf{w}_{i}\in \mathbb{
R}^{2n}\;,
\end{equation*}
where without lost of generality we assume $\mathbf{w}_1\neq\mathbf{0}$.
Then
\begin{equation*}
\mathbf{w}^{\dag }\sigma \mathbf{w}=\mathbf{w}_{1}^{T}\sigma \mathbf{w}_{1}+\mathbf{w}_{2}^{T}\sigma
\mathbf{w}_{2},\quad \left\vert \mathbf{w}^{\dag }\Delta \mathbf{w}\right\vert
=2\left\vert \mathbf{w}_{1}^{T}\Delta \mathbf{w}_{2}\right\vert .
\end{equation*}
Let us first assume $\mathbf{w}_{1}^{T}\Delta \mathbf{w}_{2}\equiv \epsilon \neq0.$ Then we can
introduce the symplectic basis $\{\mathbf{e}_{j},\;\mathbf{h}_{j}\}_{j=1,\dots ,n}$, where
\begin{equation*}
\mathbf{e}_{1}=\frac{\mathbf{w}_{1}}{\sqrt{|\epsilon|}}\;,\qquad\mathbf{h}_{1}=\frac{\mathrm{sign}(\epsilon)\;\mathbf{w}_{2}}{\sqrt{|\epsilon|}}\;.
\end{equation*}
Expressed in this basis the question
\eqref{inf} reduces to the first mode, and the infimum is attained by the
matrix of the form
\begin{equation*}
\sigma =\left(
\begin{array}{cc}
1 & 0 \\
0 & 1
\end{array}
\right) \oplus \sigma _{n-1}\;,
\end{equation*}
 where $\sigma _{n-1}$ is any quantum correlation matrix in
the rest $n-1$ modes.

Let us consider next the case where $\mathbf{w}_{1}^{T}\Delta \mathbf{w}_{2}=0$ and $\mathbf{w}_{2}$ is not proportional to $\mathbf{w}_{1}$. In this context
we  introduce the symplectic basis $\{\mathbf{e}_{j},\;\mathbf{h}_{j}\}_{j=1,\dots ,n}$,
where
\begin{equation*}
\mathbf{e}_{1}=\mathbf{w}_{1}\;,\qquad \mathbf{e}_{2}=\mathbf{w}_{2}\;.
\end{equation*}
Accordingly the identity \eqref{inf} reduces to the
first two modes, and the infimum is attained by the matrices of the form
\begin{equation*}
\sigma (\epsilon )=\left(
\begin{array}{cc}
\epsilon  & 0 \\
0 & \epsilon ^{-1}
\end{array}
\right) \oplus \left(
\begin{array}{cc}
\epsilon  & 0 \\
0 & \epsilon ^{-1}
\end{array}
\right) \oplus \sigma _{n-2}\;,
\end{equation*}
where $\sigma _{n-2}$ is any quantum correlation matrix in the rest $n-2$
modes, and $\epsilon \rightarrow 0$.

{Finally, if $\mathbf{w}_{2}=c\;\mathbf{w}_1,\;c\in\mathbb{R}$, we  introduce the symplectic basis $\{\mathbf{e}_{j},\;\mathbf{h}_{j}\}_{j=1,\dots ,n}$, where $\mathbf{e}_{1}=\mathbf{w}_{1}$. The question \eqref{inf} reduces to the
first mode, and the infimum is attained by the matrices of the form
\begin{equation*}
\sigma (\epsilon )=\left(
\begin{array}{cc}
\epsilon  & 0 \\
0 & \epsilon ^{-1}
\end{array}
\right) \oplus \sigma _{n-1}\;,
\end{equation*}
where $\sigma _{n-1}$ is any quantum correlation matrix in the rest $n-1$
modes, and $\epsilon \rightarrow 0$.}
\end{proof}
\end{lem}

A simple consequence of lemma \ref{lem:inf} is

\begin{lem}
Any $\alpha$ satisfying \eqref{positivity} for some $K$ is positive
semidefinite.
\end{lem}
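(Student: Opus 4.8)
The plan is to read \eqref{positivity} as a family of scalar inequalities and then exploit the freedom in choosing $\sigma$ to strip away the $K\sigma K^{T}$ term entirely. First I would rewrite the matrix inequality $K\sigma K^{T}+\alpha\geq\pm i\Delta$ in quadratic-form language: since $K\sigma K^{T}+\alpha$ is real symmetric and $i\Delta$ is Hermitian, the two signed inequalities together are equivalent to
\begin{equation}
\mathbf{w}^\dag\left(K\sigma K^{T}+\alpha\right)\mathbf{w}\geq\left|\mathbf{w}^\dag\Delta\,\mathbf{w}\right|\qquad\forall\;\mathbf{w}\in\mathbb{C}^{2n}\;,
\end{equation}
and this must hold for every $\sigma\geq\pm i\Delta$.

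The key observation is to test only against real vectors $\mathbf{w}\in\mathbb{R}^{2n}$. For such $\mathbf{w}$ the right-hand side vanishes, because $\Delta$ is real and antisymmetric, so $\mathbf{w}^{T}\Delta\,\mathbf{w}=0$. Writing $\mathbf{v}=K^{T}\mathbf{w}$, which is again real, the inequality reduces to $\mathbf{v}^{T}\sigma\,\mathbf{v}+\mathbf{w}^{T}\alpha\,\mathbf{w}\geq0$ for all admissible $\sigma$. Taking the infimum over $\sigma\geq\pm i\Delta$ and invoking Lemma \ref{lem:inf} I would get $\inf_{\sigma\geq\pm i\Delta}\mathbf{v}^{T}\sigma\,\mathbf{v}=\left|\mathbf{v}^{T}\Delta\,\mathbf{v}\right|=0$, once more because $\mathbf{v}$ is real. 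Hence $\mathbf{w}^{T}\alpha\,\mathbf{w}\geq0$ for every real $\mathbf{w}$.

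To conclude I would use that $\alpha$ is real and symmetric, so positivity of its quadratic form on all of $\mathbb{R}^{2n}$ is equivalent to $\alpha\geq0$ as a Hermitian operator on $\mathbb{C}^{2n}$: the complex quadratic form $\mathbf{w}^\dag\alpha\,\mathbf{w}$ splits as $(\mathrm{Re}\,\mathbf{w})^{T}\alpha\,(\mathrm{Re}\,\mathbf{w})+(\mathrm{Im}\,\mathbf{w})^{T}\alpha\,(\mathrm{Im}\,\mathbf{w})$ with no cross term, since $\alpha$ is symmetric and real. This yields $\alpha\geq0$.

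The only delicate point is the passage from ``the inequality holds for every $\sigma$'' to ``the infimum over $\sigma$ may be used,'' i.e.\ that one can genuinely drive $\mathbf{v}^{T}\sigma\,\mathbf{v}$ down to zero while remaining inside the physical set $\sigma\geq\pm i\Delta$. This is exactly the content of Lemma \ref{lem:inf}, so no separate argument is needed; everything else is a one-line substitution. I expect no real obstacle beyond correctly carrying out the reduction to real test vectors, which is what trivialises both the $\Delta$ term and the infimum.
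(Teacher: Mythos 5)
Your proof is correct and takes essentially the same route as the paper's: restrict to real test vectors $\mathbf{w}$, so that the antisymmetry of $\Delta$ makes the right-hand side vanish, and then apply Lemma \ref{lem:inf} to the real vector $K^{T}\mathbf{w}$ to drive the $\sigma$ term to zero by infimum, leaving $\mathbf{w}^{T}\alpha\,\mathbf{w}\geq0$. The only addition is your explicit closing remark that positivity of the real quadratic form of the real symmetric $\alpha$ implies $\alpha\geq0$ as a Hermitian operator on $\mathbb{C}^{2n}$, a step the paper leaves implicit.
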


\begin{proof}
The constraint \eqref{positivity} implies
\begin{equation}
\left(K^T\mathbf{k}\right)^T\sigma\left(K^T\mathbf{k}\right)+\mathbf{k}^T\alpha\mathbf{k}\geq0\label{kpos}
\end{equation}
for any $\sigma\geq\pm i\Delta$ and $\mathbf{k}\in\mathbb{R}^{2n}$.
Taking the inf over $\sigma\geq\pm i\Delta$, and exploiting lemma \ref{lem:inf} with $\mathbf{w}=K^T\mathbf{k}$, we get
\begin{equation}\label{kak}
\mathbf{k}^T\alpha\mathbf{k}\geq0\;,
\end{equation}
i.e. $\alpha$ is positive semidefinite.
In deriving \eqref{kak} we have used that, since $\Delta$ is antisymmetric, $\mathbf{k}\Delta\mathbf{k}^T=0$ for any real $\mathbf{k}$.
\end{proof}

The Lemma \ref{lem:inf} allows us to rephrase the problem: indeed, the constraint
\eqref{positivity} can be written as
\begin{equation}
(K^T\mathbf{w})^\dag\sigma(K^T\mathbf{w})+\mathbf{w}^\dag\alpha\mathbf{w}
\geq\left|\mathbf{w}^\dag\Delta\mathbf{w}\right|\;,
\end{equation}
$\forall\;\sigma\geq\pm i\Delta$, $\forall \mathbf{w}\in\mathbb{C}^{2n}$.
Taking the inf over $\sigma$ in the LHS we hence get
\begin{equation}  \label{posw}
\left|\mathbf{w}^\dag\Delta_K\mathbf{w}\right|+\mathbf{w}^\dag\alpha\mathbf{w
}\geq\left|\mathbf{w}^\dag\Delta\mathbf{w}\right|\;,\quad\forall\;\mathbf{w}
\in\mathbb{C}^{2n}\;,
\end{equation}
with $\Delta_K$ as in Eq. \eqref{DELTAK}. We notice that, as for the complete
positivity constraint \eqref{CPTP}, since $K$ enters in \eqref{posw} only
through $\left|\mathbf{w}^\dag\Delta_K\mathbf{w}\right|$, whether given $K$
and $\alpha$ satisfy \eqref{positivity} depends not on the entire $K$ but
only on $\Delta_K$.

The easiest way to give a general classification of the channels satisfying
\eqref{posw} (and then \eqref{positivity}) would seem choosing a basis in
which $\Delta$ is in the canonical form of Eq. \eqref{Deltac} of Appendix \ref{appG}, and then try to put the
antisymmetric matrix $\Delta_K$ in some canonical form using symplectic
transformations preserving $\Delta$. However, the complete classification of
antisymmetric matrices under symplectic transformations is very involved
\cite{lancaster2005canonical}, and in the multimode case the problem simplifies only if we
consider maps $\Phi$ that do not add noise, since in this case the
constraint \eqref{posw} rules out almost all the equivalence classes. In the
general case, we will provide examples showing the other possibilities.

\subsection{No noise}

The main result of this Section is the classification of the maps $\Phi$
that do not add noise ($\alpha=0$) and satisfy \eqref{positivity}:

\begin{thm}
\label{thm:nonoise} A map $\Phi$ with $\alpha=0$ satisfying
\eqref{positivity} can always be decomposed as a dilatation \eqref{DILATATION}, possibly composed with the transposition, followed by a
symplectic $S$ transformation: i.e.
\begin{equation}
K=S\;\kappa\mathbb{I}_{2n}\qquad\text{or}\qquad K=S\;T\;\kappa\mathbb{I}
_{2n}\;,
\end{equation}
with $\kappa\geq1$.
\end{thm}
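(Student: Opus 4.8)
The plan is to start from the reformulation of the positivity constraint already obtained in the excerpt. By Lemma \ref{lem:inf}, condition \eqref{positivity} is equivalent to \eqref{posw}, and setting $\alpha=0$ reduces the problem to the single inequality
\begin{equation}
\left|\mathbf{w}^\dag\Delta_K\mathbf{w}\right|\geq\left|\mathbf{w}^\dag\Delta\mathbf{w}\right|\qquad\forall\;\mathbf{w}\in\mathbb{C}^{2n}\;,
\end{equation}
with $\Delta_K=K\Delta K^T$ as in \eqref{DELTAK}. It is convenient to recast this in terms of the Hermitian matrices $A:=i\Delta$ and $B:=i\Delta_K$: indeed $\Delta,\Delta_K$ are real and antisymmetric, so $A,B$ are Hermitian and satisfy $\overline{A}=-A$, $\overline{B}=-B$, which forces their spectra to be symmetric about the origin. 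In particular $A$ has eigenvalues $+1$ and $-1$, each with multiplicity $n$, and the inequality reads $|\mathbf{w}^\dag B\mathbf{w}|\geq|\mathbf{w}^\dag A\mathbf{w}|$. The whole theorem then reduces to showing that this domination forces $B=\lambda A$ with $\lambda$ real and $|\lambda|\geq1$, i.e. $\Delta_K=\lambda\Delta$.

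The first step is a sign argument. The form $\phi(\mathbf{w})=\mathbf{w}^\dag A\mathbf{w}$ is indefinite and nondegenerate, so its complement of its null cone $\{\phi\neq0\}$ has exactly two connected components, $P_+=\{\phi>0\}$ and $P_-=\{\phi<0\}$ (in an eigenbasis of $A$ these are $\{|\mathbf{u}|>|\mathbf{v}|\}$ and $\{|\mathbf{u}|<|\mathbf{v}|\}$, each retracting onto a connected sphere). On each component the domination gives $\psi(\mathbf{w}):=\mathbf{w}^\dag B\mathbf{w}\neq0$, hence $\psi$ has constant sign there by continuity and connectedness. The two signs cannot coincide: if they did, $\psi$ would have constant sign on all of $\{\phi\neq0\}$, and by continuity on the closure, making $B$ semidefinite; but the spectral symmetry of $B$ forbids a nonzero semidefinite $B$, while $B=0$ cannot dominate $A\neq0$. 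Thus $\psi>0$ on one component and $\psi<0$ on the other, and since every nonzero point of the null cone lies in the closure of both $P_+$ and $P_-$, continuity forces $\psi=0$ on the entire null cone of $A$.

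The core step is then the algebraic claim that a Hermitian form vanishing on the null cone of a nondegenerate indefinite Hermitian form $A$ must be a real scalar multiple of $A$; this is where the main difficulty lies. I would prove it by testing on vectors adapted to an eigenbasis $A=\mathrm{diag}(I_n,-I_n)$: for a $+1$-eigenvector $\mathbf{e}_i$ and a $-1$-eigenvector $\mathbf{e}_j$, the vectors $\mathbf{e}_i+c\,\mathbf{e}_j$ with $|c|=1$ are isotropic, and imposing $\psi=0$ for all such $c$ yields $B_{ij}=0$ and $B_{ii}=-B_{jj}$; combining three-index isotropic vectors $a\,\mathbf{e}_i+a'\mathbf{e}_{i'}+c\,\mathbf{e}_j$ then kills the off-diagonal entries within each block. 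This gives $B=\lambda A$ with a single real $\lambda$, and feeding $\psi=\lambda\phi$ back into the domination yields $|\lambda|\geq1$ (with $\lambda$ positive exactly in the case $\psi>0$ on $P_+$).

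Finally I would translate back. Writing $\lambda=\pm\kappa^2$ with $\kappa=\sqrt{|\lambda|}\geq1$, the relation $\Delta_K=\lambda\Delta$ becomes $K\Delta K^T=\pm\kappa^2\Delta$. If the sign is $+$, then $S:=K/\kappa$ obeys $S\Delta S^T=\Delta$, so $S$ is symplectic and $K=S\,\kappa\mathbb{I}_{2n}$, the dilatation \eqref{DILATATION} of parameter $\kappa$ followed by $S$. If the sign is $-$, I use $T_{2n}\Delta T_{2n}^T=-\Delta$: setting $S:=K\,T_{2n}/\kappa$ gives $S\Delta S^T=\tfrac{1}{\kappa^2}K(-\Delta)K^T=\Delta$, so $S$ is symplectic and $K=S\,T_{2n}\,\kappa\mathbb{I}_{2n}$, the dilatation followed by the transposition and $S$. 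This is exactly the asserted decomposition, and the only nonroutine ingredient is the proportionality lemma of the third paragraph.
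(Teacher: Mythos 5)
Your proposal is correct, but it reaches the key proportionality statement $\Delta_K=\lambda\Delta$ by a genuinely different route than the paper. The paper stays with the real antisymmetric matrices: writing $\mathbf{w}=\mathbf{w}_1+i\mathbf{w}_2$, the constraint \eqref{posw} with $\alpha=0$ becomes the entry-wise domination $\left|\mathbf{w}_1^T\Delta_K\mathbf{w}_2\right|\geq\left|\mathbf{w}_1^T\Delta\mathbf{w}_2\right|$ \emph{in every basis}; one then puts $\Delta_K$ in its canonical block form under an orthogonal change of basis, observes that $\Delta$ must be block-diagonal with blocks of modulus at most one there (and that full rank of $\Delta$ forces full rank of $\Delta_K$), and finally equalizes the block eigenvalues by a rotation trick: mixing two blocks with a rotation leaves $\Delta_K$ in canonical form but generates off-diagonal entries in $\Delta$ proportional to $\lambda_i-\lambda_j$, which the domination forbids. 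Your route instead passes to the Hermitian forms $A=i\Delta$, $B=i\Delta_K$, uses a topological sign argument (connectedness of $\{\phi>0\}$ and $\{\phi<0\}$, impossibility of a semidefinite $B$ by spectral symmetry) to conclude that $B$ must vanish on the null cone of $A$, and then invokes the classical algebraic fact that a Hermitian form vanishing on the null cone of a nondegenerate indefinite Hermitian form is a real multiple of it, which you prove correctly by testing on isotropic vectors $\mathbf{e}_i+c\,\mathbf{e}_j$ and $a\,\mathbf{e}_i+a'\mathbf{e}_{i'}+c\,\mathbf{e}_j$. The two proofs coincide only in the final, routine translation step (splitting $\lambda=\pm\kappa^2$ and absorbing the sign into the transposition $T_{2n}$). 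What your approach buys is conceptual clarity and independence from canonical forms of antisymmetric matrices under orthogonal conjugation: the heart of the matter is isolated in a self-contained, classical lemma about Hermitian forms sharing a null cone. What the paper's approach buys is that it works entirely with real matrices and elementary basis manipulations, making explicit the strong basis-independent entry-wise domination, which is also the form of the constraint used in the paper's subsequent counterexamples for $\alpha\neq0$.
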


\begin{proof}
With $\alpha=0$ and
\begin{equation}
\mathbf{w}=\mathbf{w}_1+i\mathbf{w}_2\;,\qquad\mathbf{w}_i\in\mathbb{R}^{2n}\;,
\end{equation}
\eqref{posw} becomes
\begin{equation}
\left|\mathbf{w}_1^T\Delta_K\mathbf{w}_2\right|\geq\left|\mathbf{w}_1^T\Delta\mathbf{w}_2\right|\;,\label{Deltamaj}
\end{equation}
i.e. all the matrix elements of $\Delta_K$ are in modulus bigger than the corresponding ones of $\Delta$ in \emph{any} basis. In particular, if some matrix element $\Delta_K^{ij}$ vanishes, also $\Delta^{ij}$ must vanish.
Let us choose a basis in which $\Delta_K$ has the canonical form
\begin{equation}
\Delta_K=\bigoplus_{i=1}^{\frac{r}{2}}\left(
                                         \begin{array}{cc}
                                            & 1 \\
                                           -1 &  \\
                                         \end{array}
                                       \right)\oplus0_{2n-r}\;,
\end{equation}
where
\begin{equation}
r\equiv\mathrm{rank}\,\Delta_K\;.
\end{equation}
For \eqref{Deltamaj}, in this basis $\Delta$ must be of the form
\begin{equation}
\Delta=\bigoplus_{i=1}^{\frac{r}{2}}\left(
                                         \begin{array}{cc}
                                            &  \lambda_i\\
                                           -\lambda_i &  \\
                                         \end{array}
                                       \right)\oplus0_{2n-r}\;,\qquad|\lambda_i|\leq1\;.
\end{equation}
Since $\Delta$ has full rank, there cannot be zeroes in its decomposition, so $r$ must be $2n$.

We will prove that all the eigenvalues $\lambda_i$ must be equal. Let us take two eigenvalues $\lambda$ and $\mu$, and consider the restriction of $\Delta$ and $\Delta_K$ to the subspace associated to them:
\begin{equation}
\Delta_K=\left(
           \begin{array}{cc|cc}
              & 1 &  &  \\
             -1 &  &  & \\
             \hline
              &  &  & 1 \\
              &  & -1 &  \\
           \end{array}
         \right)\qquad\Delta=\left(
           \begin{array}{cc|cc}
              & \lambda &  &  \\
             -\lambda &  &  & \\
             \hline
              &  &  & \mu \\
              &  & -\mu &  \\
           \end{array}
         \right)\;.
\end{equation}
If we change basis with the rotation matrix
\begin{eqnarray}
&R=\left(
    \begin{array}{cc}
      \cos\theta\;\mathbb{I}_2 & -\sin\theta\;\mathbb{I}_2 \\
      \sin\theta\;\mathbb{I}_2 & \cos\theta\;\mathbb{I}_2 \\
    \end{array}
  \right)\;,&\nonumber \\ &\Delta\mapsto R\Delta R^T\;,\qquad  \Delta_K\mapsto R\Delta_K R^T\;,&
\end{eqnarray}
$\Delta_K$ remains of the same form, while $\Delta$ acquires off-diagonal elements proportional to $\lambda-\mu$. Since for \eqref{Deltamaj} the off-diagonal elements of $\Delta$ must vanish also in the new basis, the only possibility is $\lambda=\mu$. Then all the $\lambda_i$ must be equal, and $\Delta_K$ must then be proportional to $\Delta$:
\begin{equation}
\Delta_K=\frac{1}{\lambda}\Delta\;,\qquad0<|\lambda|\leq1\;,\label{deltaprop}
\end{equation}
where we have put all the $\lambda_i$ equal to $\lambda\neq0$ (since $\Delta$ is nonsingular they cannot vanish).
Relation \eqref{deltaprop} means
\begin{equation}
K\Delta K^T=\frac{1}{\lambda}\Delta\;,
\end{equation}
i.e.
\begin{equation}
\left(\sqrt{|\lambda|}\;K\right)\;\Delta\;\left(\sqrt{|\lambda|}\;K\right)^T=\mathrm{sign}(\lambda)\Delta\;.
\end{equation}
If $0<\lambda\leq1$, we can write $K$ as a dilatation of
\begin{equation}
\kappa=\frac{1}{\sqrt{\lambda}}\;,
\end{equation}
composed with a symplectic transformation given by
\begin{equation}
S=\sqrt{\lambda}\;K\;,
\end{equation}
i.e.
\begin{equation}
K=S\;\kappa\mathbb{I}_{2n}\;,\qquad S\Delta S^T=\Delta\;.
\end{equation}
If $-1\leq\lambda<0$, since the total transposition $T$ changes the sign of $\Delta$:
\begin{equation}
T\Delta T^T=-\Delta\;,
\end{equation}
we can write $K$ as a dilatation of
\begin{equation}
\kappa=\frac{1}{\sqrt{|\lambda|}}\;,
\end{equation}
composed with $T$ followed by a symplectic transformation:
\begin{equation}
K=S\;T\;\kappa\mathbb{I}_{2n}\;,\qquad S\Delta S^T=\Delta\;.
\end{equation}
\end{proof}

\subsection{Examples with nontrivial decomposition}

If $\alpha\neq0$, a decomposition as simple as the one of theorem \ref
{thm:nonoise} does no more exist: here we will provide some examples in
which the canonical form of $\Delta_K$ is less trivial, and that do not fall
in any classification like the precedent one. Essentially, they are all
based on this observation:

\begin{prop}
If $\alpha$ is the covariance matrix of a quantum state, i.e. $\alpha\geq\pm
i\Delta$, the constraint \eqref{positivity} is satisfied by any $K$.
\end{prop}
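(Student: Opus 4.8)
The plan is to show that adding the positive-semidefinite ``noise'' matrix $\alpha$ can only help the inequality \eqref{positivity}, so that once $\alpha$ is itself a legitimate quantum covariance matrix the constraint becomes vacuous on $K$. The starting observation is that every admissible $\sigma$ in \eqref{positivity} satisfies $\sigma\geq 0$: from $\sigma\geq i\Delta$ and $\sigma\geq-i\Delta$ (both read as Hermitian inequalities tested against $\mathbf{w}\in\mathbb{C}^{2n}$, recalling that $i\Delta$ is Hermitian since $\Delta$ is real antisymmetric) one gets $2\sigma\geq 0$ upon summing, hence $\sigma\geq0$.

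From here I would carry out two short steps. First, for any real $2n\times2n$ matrix $K$ and any $\sigma\geq0$ the congruence $K\sigma K^T$ is positive semidefinite, because $\mathbf{w}^\dag K\sigma K^T\mathbf{w}=(K^T\mathbf{w})^\dag\,\sigma\,(K^T\mathbf{w})\geq0$ for every $\mathbf{w}\in\mathbb{C}^{2n}$. Second, since by hypothesis $\alpha\geq\pm i\Delta$, adding the positive-semidefinite term preserves the inequality:
\begin{equation}
K\sigma K^T+\alpha\geq\alpha\geq\pm i\Delta\;,
\end{equation}
which is exactly \eqref{positivity}. As this holds for arbitrary admissible $\sigma$ and arbitrary $K$, the claim follows.

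As an alternative presentation that stays entirely within the machinery already developed in this section, I could instead invoke the reformulation \eqref{posw}, namely $\left|\mathbf{w}^\dag\Delta_K\mathbf{w}\right|+\mathbf{w}^\dag\alpha\mathbf{w}\geq\left|\mathbf{w}^\dag\Delta\mathbf{w}\right|$. The hypothesis $\alpha\geq\pm i\Delta$ is equivalent, via Lemma \ref{lem:inf}, to $\mathbf{w}^\dag\alpha\mathbf{w}\geq\left|\mathbf{w}^\dag\Delta\mathbf{w}\right|$ for all $\mathbf{w}$; since $\left|\mathbf{w}^\dag\Delta_K\mathbf{w}\right|\geq0$, the left-hand side of \eqref{posw} is bounded below by $\mathbf{w}^\dag\alpha\mathbf{w}$, and the required inequality is immediate for every $K$.

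I do not expect a genuine obstacle: the entire content is that noise addition never violates the Heisenberg bound. The only point demanding care is the consistent interpretation of the symbols ``$\geq\pm i\Delta$'' as Hermitian-form inequalities over $\mathbb{C}^{2n}$ — in particular that the passage from $\sigma\geq\pm i\Delta$ to $\sigma\geq0$ and the positivity of $K\sigma K^T$ are both read in this complex sense — so that the final chain of inequalities is legitimate simultaneously for the plus and minus signs.
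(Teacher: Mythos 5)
Your proof is correct, and it supplies exactly the argument the paper leaves implicit: the proposition is stated there as an immediate observation without proof, the intended justification being precisely your chain $\sigma\geq\pm i\Delta\Rightarrow\sigma\geq0\Rightarrow K\sigma K^T\geq0\Rightarrow K\sigma K^T+\alpha\geq\alpha\geq\pm i\Delta$. Both of your presentations (the direct one and the one via \eqref{posw} and Lemma \ref{lem:inf}) are valid, so there is nothing to fix.
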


Since for one mode the decomposition of theorem \ref{thm:1mode} holds, we
will provide examples with two-mode systems.

We will always consider bases in which
\begin{equation}
\Delta=\left(
\begin{array}{cc|cc}
& 1 &  &  \\
-1 &  &  &  \\ \hline
&  &  & 1 \\
&  & -1 &
\end{array}
\right)\;.
\end{equation}

\subsubsection{Partial transpose}

The first example is the partial transpose of the second subsystem, composed
with a dilatation of $\sqrt{\nu}$ and the addition of the covariance matrix
of the vacuum as noise:
\begin{equation}
K=\sqrt{\nu}\left(
\begin{array}{cc}
\mathbb{I}_2 &  \\
& T_2
\end{array}
\right)\;,\qquad\nu>0\;,\qquad\alpha=\mathbb{I}_4\;.
\end{equation}
In this case we have
\begin{equation}  \label{deltaKt}
\Delta_K=\left(
\begin{array}{cc|cc}
& \nu &  &  \\
-\nu &  &  &  \\ \hline
&  &  & -\nu \\
&  & \nu &
\end{array}
\right)\;,
\end{equation}
and $i(\Delta-\Delta_K)$ has eigenvalues $\pm(1+\nu)$, $\pm(1-\nu)$, so that one of them is $\left|1+|\nu|\right|>1$, and the complete positivity
requirement \eqref{CPTP}
\begin{equation}
\mathbb{I}_4\geq\pm i(\Delta-\Delta_K)
\end{equation}
cannot be fulfilled by any $\nu\neq0$.

We will prove that this map cannot be written as a dilatation, possibly
composed with the transposition, followed by a completely positive map.
Indeed, let us suppose that we can write $K$ as $K^{\prime}\,\lambda\,\mathbb{I}_4$ or $K^{\prime}\,T_4\,\lambda\,\mathbb{I}_4$ for some $\lambda\geq1$.
Then
\begin{equation}
\Delta_{K^{\prime}}=\pm\frac{1}{\lambda^2}\Delta_K
\end{equation}
is always of the form \eqref{deltaKt} with
\begin{equation}
\nu^{\prime}=\pm\frac{\nu}{\lambda^2}\;,
\end{equation}
and also the transformation with $K^{\prime}$ cannot be completely positive.

\subsubsection{\emph{Q} exchange}

As second example, we take for the added noise $\alpha$ still the covariance
matrix of the vacuum, and for the matrix $K$ the partial transposition of
the first mode composed with the exchange of $Q^1$ and $Q^2$ followed by a
dilatation of $\sqrt{\nu}$:
\begin{equation}
\alpha=\mathbb{I}_4\geq\pm i\Delta\;,\quad K=\sqrt{\nu}\left(
\begin{array}{cc|cc}
&  & 1 &  \\
& -1 &  &  \\ \hline
1 &  &  &  \\
&  &  & 1
\end{array}
\right)\;,\quad\nu>0\;.
\end{equation}
With this choice,
\begin{equation}  \label{deltakex}
\Delta_K=\left(
\begin{array}{cc|cc}
&  &  & \nu \\
&  & \nu &  \\ \hline
& -\nu &  &  \\
-\nu &  &  &
\end{array}
\right)\;.
\end{equation}
The transformation is completely positive iff
\begin{equation}  \label{CPTPex}
\mathbb{I}_4\geq\pm i(\Delta-\Delta_K)\;,
\end{equation}
and since the eigenvalues of $i(\Delta-\Delta_K)$ are $\pm\sqrt{1+\nu^2}$, the condition \eqref{CPTPex} is never fulfilled for any $\nu\neq0$.

As before, we will prove that this map cannot be written as a dilatation,
possibly composed with the transposition, followed by a completely positive
map. Indeed, let us suppose that we can write $K$ as $K^{\prime}\,\lambda\,\mathbb{I}_4$ or $K^{\prime}\,T_4\,\lambda\,\mathbb{I}_4$ for some $\lambda\geq1$.
Then
\begin{equation}
\Delta_{K^{\prime}}=\pm\frac{1}{\lambda^2}\Delta_K
\end{equation}
is always of the form \eqref{deltakex} with
\begin{equation}
\nu^{\prime}=\pm\frac{\nu}{\lambda^2}\;,
\end{equation}
and also the transformation with $K^{\prime}$ cannot be completely positive.

\section{Unboundedness of dilatations}\label{app}
\begin{thm}\label{unbounded}
For any $\lambda\neq\pm1$ the phase-space dilatation by $\lambda$ is not bounded in the Banach space $\mathfrak{T}$ of trace-class operators.
\begin{proof}
Fix $\lambda\neq\pm1$, and let $\Theta$ be the phase-space dilatation by $\lambda$.
Let us suppose that $\Theta$ is bounded, i.e.
\begin{equation}\label{bounded}
\left\|\Theta\left(\hat{X}\right)\right\|_1\leq\left\|\Theta\right\|\;\left\|\hat{X}\right\|_1\qquad\forall\;\hat{X}\in\mathfrak{T}\;.
\end{equation}
Let also
\begin{equation}
p_n^{(m)}:=\langle n|\Theta\left(|m\rangle\langle m|\right)|n\rangle\;.
\end{equation}
Eq. \eqref{bounded} implies
\begin{equation}
\sum_{n=0}^\infty\left|p_n^{(m)}\right|\leq\|\Theta\|\qquad\forall\;m\in\mathbb{N}\;.
\end{equation}
The moment generating function of $p^{(m)}$ is \cite{brocker1995mixed}
\begin{equation}\label{mgf}
g_m(q):=\sum_{n=0}^\infty p_n^{(m)}\;e^{-i\,n\,q}=\frac{1-\tau}{1-\tau\,e^{-i\,q}}\left(\frac{1-\tau\,e^{i\,q}}{e^{i\,q}-\tau}\right)^m\;,
\end{equation}
where $q\in\mathbb{R}$ and
\begin{equation}\label{tau2}
\tau:=\frac{\lambda^2-1}{\lambda^2+1}\;.
\end{equation}
Let us define
\begin{equation}\label{am}
a_m:=\frac{1-\tau}{\sqrt[3]{m\,\tau(1+\tau)}}\;.
\end{equation}
Let $\phi\in C_c^\infty(\mathbb{R})$ be an infinitely differentiable test function with compact support.
We must then have
\begin{equation}\label{cond}
\sum_{n=0}^\infty\phi\left(a_m\left(n-\lambda^2m\right)\right)\;p_n^{(m)}\leq\|\phi\|_\infty\;\|\Theta\|\;.
\end{equation}
Expressed in terms of the Fourier transform of $\phi$
\begin{equation}
\widetilde{\phi}(k)=\int_{-\infty}^\infty\phi(x)\;e^{i\,k\,x}\;dx\;,
\end{equation}
\eqref{cond} becomes
\begin{equation}
\sum_{n=0}^\infty\left(\int_{-\infty}^\infty\widetilde{\phi}(k)\;e^{i\,\lambda^2\,m\,a_m\,k}\;e^{-i\,k\,a_m\,n}\;\frac{dk}{2\pi}\right)p_n^{(m)}\leq \|\phi\|_\infty\;\|\Theta\|\;.
\end{equation}
Since the sum of the integrands is dominated by the integrable function
\begin{equation*}
\frac{\|\Theta\|}{2\pi}\left|\widetilde{\phi}(k)\right|\;,
\end{equation*}
we can bring the sum inside the integral, getting
\begin{equation}
\int_{-\infty}^\infty\widetilde{\phi}(k)\;g_m\left(a_mk\right)\;e^{i\,\lambda^2\,m\,a_m\,k}\;\frac{dk}{2\pi}\leq\|\phi\|_\infty\;\|\Theta\|\;.
\end{equation}
Since for any $k$
\begin{equation}\label{limitm}
\lim_{m\to\infty}\left(g_m\left(a_m\,k\right)\;e^{i\,\lambda^2\,m\,a_m\,k}\right)=e^{\frac{i\,k^3}{3}}
\end{equation}
(see subsection \ref{note}), by the dominated convergence theorem
\begin{equation}
\lim_{m\to\infty}\int_{-\infty}^\infty\widetilde{\phi}(k)\;g_m\left(a_mk\right)\;e^{i\,\lambda^2\,m\,a_m\,k}\;\frac{dk}{2\pi} =\int_{-\infty}^\infty\widetilde{\phi}(k)\;e^\frac{i\,k^3}{3}\;\frac{dk}{2\pi}=\int_{-\infty}^\infty\phi(x)\;\mathrm{Ai}(x)\;dx\;,
\end{equation}
where $\mathrm{Ai}(x)$ is the Airy function.
Now we get
\begin{equation}\label{limit}
\int_{-\infty}^\infty \mathrm{Ai}(x)\;\phi(x)\;dx\leq \|\Theta\|\;\|\phi\|_\infty\qquad\forall\;\phi\in C_c^\infty(\mathbb{R})\;.
\end{equation}
Since the Airy function is continuous and the set of its zeroes has no accumulation points (except $-\infty$), there exists a sequence of test functions $\phi_r\in C_c^\infty(\mathbb{R})$, $r\in\mathbb{N}$ with $\|\phi_r\|_\infty=1$ approximating $\mathrm{sign}\left(\mathrm{Ai}(x)\right)$, i.e. such that
\begin{equation}
\lim_{r\to\infty}\int_{-\infty}^\infty \mathrm{Ai}(x)\;\phi_r(x)\;dx=\int_{-\infty}^\infty\left|\mathrm{Ai}(x)\right|dx=\infty\;,
\end{equation}
implying $\|\Theta\|=\infty$.
\end{proof}
\end{thm}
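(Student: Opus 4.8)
The plan is to argue by contradiction: I assume that $\Theta$, the phase-space dilatation by $\lambda$ with $\lambda\neq\pm1$, is bounded on $\mathfrak{T}$, so that there is a finite constant $\|\Theta\|$ with $\|\Theta(\hat X)\|_1\le\|\Theta\|\,\|\hat X\|_1$ for all $\hat X\in\mathfrak{T}$, and I will exhibit a sequence of unit-trace-norm inputs whose images have trace norm diverging to infinity. The natural candidates are the Fock projectors $|m\rangle\langle m|$. Since $|m\rangle\langle m|$ has a rotationally invariant characteristic function and the dilatation $\chi(\mathbf{k})\mapsto\chi(\lambda\mathbf{k})$ preserves rotational invariance, $\Theta(|m\rangle\langle m|)$ commutes with the number operator and is therefore diagonal in the Fock basis. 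Writing $p^{(m)}_n:=\langle n|\Theta(|m\rangle\langle m|)|n\rangle$, the image is $\Theta(|m\rangle\langle m|)=\sum_n p^{(m)}_n|n\rangle\langle n|$, and hence $\|\Theta(|m\rangle\langle m|)\|_1=\sum_{n=0}^\infty|p^{(m)}_n|$. Boundedness would then force the uniform bound $\sum_n|p^{(m)}_n|\le\|\Theta\|$ for every $m$, and the goal becomes to contradict this.

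To analyze the $p^{(m)}_n$ I would use the explicit formula of Br\"ocker and Werner \cite{brocker1995mixed} for their generating function, which expresses $g_m(q)=\sum_n p^{(m)}_n e^{-inq}$ as an elementary closed form in the variable $\tau=(\lambda^2-1)/(\lambda^2+1)\neq0$. The pseudo-distribution $p^{(m)}$ is normalized (trace preservation gives $\sum_n p^{(m)}_n=1$) but, for $\lambda\neq\pm1$, develops negative entries, so that its total variation $\sum_n|p^{(m)}_n|$ is genuinely larger than $1$. The key structural observation is that, near the ``edge'' index $n\approx\lambda^2 m$, the coefficients do not obey a Gaussian (central-limit) profile but an \emph{Airy} profile: after rescaling by a width $a_m\sim m^{-1/3}$, one expects $p^{(m)}_n\approx a_m\,\mathrm{Ai}\!\left(a_m(n-\lambda^2 m)\right)$. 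Concretely, I would probe the coefficients with an arbitrary test function $\phi\in C^\infty_c(\mathbb{R})$, rewrite $\sum_n\phi(a_m(n-\lambda^2 m))\,p^{(m)}_n$ through the Fourier transform $\widetilde\phi$, bring the sum inside the integral (dominated by $|\widetilde\phi|$ thanks to the uniform bound), and evaluate the pointwise limit of $g_m(a_m k)\,e^{i\lambda^2 m a_m k}$. This limit should be the cubic exponential $e^{ik^3/3}$, whose inverse Fourier transform is precisely the Airy function, yielding $\lim_m\sum_n\phi(a_m(n-\lambda^2m))\,p^{(m)}_n=\int \mathrm{Ai}(x)\,\phi(x)\,dx$.

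Having reached this identity, the contradiction follows from the non-integrability of the Airy function. Since $\mathrm{Ai}$ is continuous with a discrete zero set (accumulating only at $-\infty$), I can choose $\phi_r\in C^\infty_c(\mathbb{R})$ with $\|\phi_r\|_\infty=1$ approximating $\mathrm{sign}(\mathrm{Ai})$, so that $\int\mathrm{Ai}\,\phi_r\to\int|\mathrm{Ai}|=\infty$. On the other hand the uniform bound forces $\int\mathrm{Ai}\,\phi\le\|\Theta\|\,\|\phi\|_\infty$ for every admissible $\phi$, whence $\|\Theta\|=\infty$, contradicting boundedness. The same argument applies verbatim in the contraction regime $|\lambda|<1$ (equivalently $\tau<0$), since only $\tau\neq0$ is used.

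The main obstacle I anticipate is the rigorous justification of the edge scaling limit, i.e. proving that $g_m(a_m k)\,e^{i\lambda^2 m a_m k}\to e^{ik^3/3}$ pointwise with enough control to invoke dominated convergence. This is an Airy-type local limit theorem rather than a routine central limit computation: the correct centering $\lambda^2 m$ and the correct width $a_m=(1-\tau)/\sqrt[3]{m\,\tau(1+\tau)}$ must be pinned down by a careful saddle-point/Taylor analysis of $\ln g_m$, arranged so that the quadratic term cancels and the surviving cubic term produces the Airy scaling. Once this asymptotic is in hand, the remaining pieces — the duality bound on the diagonal coefficients, the Fourier rewriting, and the $\mathrm{sign}(\mathrm{Ai})$ approximation — are comparatively mechanical.
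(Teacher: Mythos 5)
Your proposal is correct and follows essentially the same route as the paper: the uniform bound $\sum_n|p_n^{(m)}|\le\|\Theta\|$ on the Fock-diagonal coefficients, the Br\"ocker--Werner generating function with the $m^{-1/3}$ edge scaling $a_m$, the test-function/Fourier argument with dominated convergence leading to $\int\mathrm{Ai}(x)\,\phi(x)\,dx\le\|\Theta\|\,\|\phi\|_\infty$, and the contradiction from the non-integrability of the Airy function. The one step you flag as the main obstacle, the pointwise limit $g_m(a_mk)\,e^{i\lambda^2ma_mk}\to e^{ik^3/3}$, is exactly what the paper verifies by a direct Taylor expansion of the closed-form generating function, so your plan is complete modulo that computation.
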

\subsection{Computation of the limit in (\ref{limitm})}\label{note}
Here we compute explicitly the limit in \eqref{limitm}.
It is better to rephrase it in terms of $q:=a_m\,k$, $q\to0$ (remember that $a_m\sim1/\sqrt[3]{m}$).
Putting together \eqref{limitm}, \eqref{mgf}, \eqref{tau2} and \eqref{am}, we have to compute
\begin{equation} \label{ffd}
\lim_{q\to0}\left(\frac{1-\tau}{1-\tau\,e^{-i\,q}}\left(\frac{1-\tau\,e^{i\,q}}{e^{i\,q}-\tau}\;e^{i\,\frac{1+\tau}{1-\tau}\,q}\right)^\frac{k^3(1-\tau)^3}{q^3\,\tau(1+\tau)}\right)\overset{?}{=}e^\frac{i\,k^3}{3}\;.
\end{equation}
The first term on the left-hand-side tends to one. The second term on the left-hand-side instead can be treated via Taylor expansion, i.e.
\begin{equation}
\frac{1-\tau\,e^{i\,q}}{e^{i\,q}-\tau}\;e^{i\,\frac{1+\tau}{1-\tau}\,q}=1+\frac{i\,q^3\,\tau(1+\tau)}{3(1-\tau)^3}+\mathcal{O}\left(q^5\right)
\end{equation}
for $q\to0$.
This gives
\begin{equation}
\lim_{q\to0}\left(\frac{1-\tau\,e^{i\,q}}{e^{i\,q}-\tau}\;e^{i\,\frac{1+\tau}{1-\tau}\,q}\right)^\frac{k^3(1-\tau)^3}{q^3\,\tau(1+\tau)} = \lim_{q\to0}\left(1+\frac{i\,q^3\,\tau(1+\tau)}{3(1-\tau)^3}+\mathcal{O}\left(q^5\right)\right)^\frac{k^3(1-\tau)^3}{q^3\,\tau(1+\tau)}= e^\frac{i\,k^3}{3}\;,
\end{equation}
which proves the identity of \eqref{ffd}.

\section{Conclusion}

\label{S:CON}

In this Chapter we have explored both at the classical and quantum level the
set of linear transformations sending the set
of Gaussian states into itself without imposing any further requirement,
such as positivity. We have proved that the action on the covariance matrix
and on the first moment must be linear, and we have found the form of the
action on the characteristic function. Focusing on the quantum case, for one
mode we have obtained a complete classification, stating that the only not
CP transformations in the set are actually the total transposition and the
dilatations (and their compositions with CP maps). The same result holds
also in the multimode scenario, but it needs the further hypothesis of
homogeneous action on the covariance matrix, since we have shown the
existence of non-homogeneous transformations belonging to the set but not
falling into our classification.

Despite the set $\mathfrak{F}$ of quantum states that are sent into positive operators by any dilatation is known to strictly contain the convex hull of Gaussian states $\mathfrak{C}$ even in the one-mode case\cite{brocker1995mixed}, the dilatations are then confirmed to be (at least in the single mode or in
the homogeneous action cases) the only transformation in the class
\eqref{channel} that can act as a probe for $\mathfrak{C}$.

\chapter{Necessity of eigenstate thermalization}
\label{chETH}
In this Chapter we prove that if a small quantum system in contact with a large heat bath thermalizes for any initial uncorrelated state with a sharp energy distribution, the system-bath Hamiltonian must satisfy the so-called Eigenstate Thermalization Hypothesis.
This result definitively settles the question of determining whether a quantum system has a thermal behavior, reducing it to checking whether its Hamiltonian satisfies the ETH.

The Chapter is based on
\begin{enumerate}
\item[\cite{de2015necessity}] G.~De~Palma, A.~Serafini, V.~Giovannetti, and M.~Cramer, ``Necessity of Eigenstate Thermalization,'' \emph{Physical Review Letters}, vol. 115, no.~22, p. 220401, 2015.\\ {\small\url{http://journals.aps.org/prl/abstract/10.1103/PhysRevLett.115.220401}}
\end{enumerate}

\section{Introduction}
An ideal heat bath induces thermalization in the sense that, when a physical system is coupled to it, its state will evolve toward a well-defined infinite-time limit which depends only on macroscopic parameters of the bath -- such as its temperature or energy -- and not on any details of the initial state of the system, the bath, or the system-bath interaction.
It is a well-established empirical fact that both classical and quantum systems with a very large number of degrees of freedom exhibit these ideal-bath properties when weakly coupled to much smaller systems, with their temperature being a smooth function of their energy alone.
Yet, rigorous derivations relaying such a ``generic'' behavior to fundamental dynamical laws
seem to require rather sophisticated, and arguably very specific and technical, hypotheses.
Then, understanding the mechanisms lying behind the thermalization of a quantum system has become a hot-debated topic in physics.
The apparent incongruence between the ubiquity of thermalization and the specificity of the hypotheses that seem to imply it has spurred substantial research \cite{deutsch1991quantum,srednicki1994chaos,tasaki1998quantum,calabrese2006time,cazalilla2006effect,rigol2007relaxation,reimann2007typicality,cramer2008exact,rigol2008thermalization,reimann2008foundation,linden2009quantum,rigol2009breakdown,rigol2012alternatives,reimann2010canonical,cho2010emergence,gogolin2011absence,riera2012thermalization,mueller2013thermalization,gogolin2015equilibration,polkovnikov2011colloquium,cazalilla2011one,bloch2008many,eisert2015quantum,deffner2015ten,jarzynski2015diverse,ponte2015many,steinigeweg2014pushing,genway2013dynamics,caux2013time,cassidy2011generalized},
analyzing the dynamical conditions under which a large quantum system behaves as an ideal heat bath and induces thermalization.
Prominent among them is the Eigenstate Thermalization Hypothesis (ETH), which
may be formulated by stating that the partial traces of the eigenstates
of the global Hamiltonian of the bath and the coupled system (including the interaction terms)
are smooth functions of the energy.

It is well known that the ETH is sufficient for thermalization if the initial state has a sufficiently sharp distribution
in energy \cite{deutsch1991quantum,gogolin2015equilibration},
and a lot of effort has been dedicated in checking whether specific quantum systems satisfy the ETH, with both analytical and numerical computations \cite{rigol2008thermalization,ponte2015many,steinigeweg2014pushing,genway2013dynamics,caux2013time,cassidy2011generalized,rigol2009breakdown}.

The converse question, however, of whether the ETH is also necessary for thermalization, i.e. whether there exist quantum systems not fulfilling the ETH but nonetheless exhibiting thermal behavior, is not settled yet, and alternatives to the ETH have been proposed \cite{rigol2012alternatives}.
An answer to this question
has been hinted at, although not proven, in the literature on the subject (see, e.g., the very recent
survey \cite{gogolin2015equilibration}, to which the reader is also referred for a comprehensive overview of the context).
Our goal is to clarify this subtle and somewhat elusive point by providing, for the first time to our knowledge,
a proof that the very definition of ideal bath actually implies the ETH.
Our result then definitively settles the question of determining whether a quantum system has a thermal behavior, reducing it to checking whether its Hamiltonian satisfies the ETH: if the ETH is satisfied, the system always thermalizes, while if it is not satisfied, there certainly exists some reasonable physical initial state not leading to thermalization.

The Chapter is structured as follows.
In Section \ref{sectherm} we state preliminary, rigorous definitions of
thermalization and of an ideal bath.
In Section \ref{seceth} we present our definition of ETH, and we then reconsider its role as a sufficient condition for thermalization on the basis of our definitions.
In Section \ref{secethmain} we proceed to present of our main finding, that the ETH is also necessary for thermalization.
Complete proofs of the needed lemmata may be found in Section \ref{appeth}.
Finally, we conclude in Section \ref{secceth}.

\section{Thermalization and ideal baths}\label{sectherm}
Let us consider a system $S$ coupled to a heat bath $B$, with Hilbert spaces $\mathcal{H}_S$ and $\mathcal{H}_B$ of dimension $d_S$ and $d_B$, respectively.
For convenience, we describe the total Hamiltonian as $\hat{H}=\hat{H}_C+ \hat{H}_B$, composed of a free term $\hat{H}_B$ associated with the bath's inner dynamics, and a term $\hat{H}_C$ that includes
both the free component associated with $S$ and the system-bath coupling component. We only require the norm $\|\hat{H}_C\|$ to be bounded independently of the dimension $d_B$ of the bath\footnote{We denote by $\|\cdot\|$ and $\|\cdot\|_1$ the operator norm (so the largest singular value) and trace norm (so the sum of the singular values) of $\cdot\;$, respectively.}.
Let then the global system start in some state $\hat{\rho}$. At time $t$ it will evolve into the density matrix
\begin{equation}
\hat{\rho}(t)=e^{-i\hat{H}t}\,\hat{\rho}\,e^{i\hat{H}t}\;,
\end{equation}
whose time-averaged counterpart is
the diagonal part of $\hat{\rho}$ in the energy eigenbasis,
\begin{equation}
\Phi\left(\hat{\rho}\right)=\sum_n p_n\,|n\rangle\langle n|\;,
\end{equation}
assuming the spectrum of $\hat{H}$ to be nondegenerate for simplicity. Here, $\Phi$ denotes the time-averaging map and
$p_n=\langle n|\hat{\rho}|n\rangle$ is the probability that the global system has energy $E_n$ \cite{linden2009quantum}.
The time-averaged reduced state of the system $S$ is then obtained by taking the partial trace of
$\Phi\left(\hat{\rho}\right)$ over the bath degrees of freedom,
\begin{equation}\label{Phi}
\Phi_S\left(\hat{\rho}\right)\equiv\mathrm{Tr}_B\Phi\left(\hat{\rho}\right)=\sum_n p_n\;\hat{\tau}_n\;,
\end{equation}
where $\hat{\tau}_n\equiv\mathrm{Tr}_B|n\rangle\langle n|$ is the partial trace of the eigenstate $|n\rangle$.
In this context, thermalization is said to occur when the density matrices $\Phi_S\left(\hat{\rho}\right)$ exhibit a functional dependence only on those properties of the initial states $\hat{\rho}$
which are directly associated with the bath, as the initial properties of $S$ are washed away by the time-average and partial trace operations.

A key point in the study of such processes
is the choice of the set which identifies the initial states $\hat{\rho}$ of the joint system under which thermalization is assumed to occur:
too broad a set being typically too restrictive to describe realistic configurations, too narrow a set leading instead to trivial results.
In many cases of physical interest, one would know the value of only some macroscopic observables of the bath, such as the energy, so a common hypothesis is to impose thermalization when the bath is in the mixed state that maximizes the von Neumann entropy among all the states with given expectation values of the known observables \cite{reimann2010canonical}.
A weakness of this approach is that it does not account for situations where the bath is prepared in a pure state.
Another approach based on typicality has then been proposed.
In Ref. \cite{linden2009quantum}, the initial state of the bath is a pure state chosen randomly according to the Haar measure on the subspace of the bath Hilbert space compatible with the values of the known macroscopic observables.
The reduced system equilibrium state is then proven to be close, with very high probability,
to the equilibrium state resulting from choosing as initial state of the bath the normalized projector over the considered subspace.
A more refined choice would be to modify the notion of typicality by adopting
probability measures that reflect the complexity of the state preparation.
Indeed, the quantum pure states that are more easily built and comparatively stable are the ground states of local Hamiltonians,
so that one may restrict to the uniform measure on the states satisfying the area law \cite{eisert2010colloquium,garnerone2010typicality,garnerone2010statistical},
or introduce a measure arising from applying a local random quantum circuit to a completely factorized initial state \cite{hamma2012quantum,hamma2012ensembles}.
However, these probability measures are much more complicated than the uniform one on the whole Hilbert space,
and the computations may not be feasible.

Besides, asking whether there exist initial states of the bath not leading to thermalization of the system is a legitimate question,
to which these approaches based on typicality do not have an answer.
Here we want to address precisely this question.
Our definition of thermalization is therefore as follows:

\begin{defn}[Thermalization for initial product states]\label{defth}
We say that a subspace $\mathcal{H}_B^{\mathrm{eq}}$ of the bath Hilbert space induces thermalization of the system to a state $\hat{\omega}$ with precision $\epsilon$ if for any initial \emph{product} global state supported on $\mathcal{H}_S\otimes\mathcal{H}_B^{\mathrm{eq}}$ the equilibrium reduced state of the system is close to $\hat{\omega}$. That is,
$\mathcal{H}_B^{\mathrm{eq}}$ is such that\addtocounter{footnote}{-1}\addtocounter{Hfootnote}{-1}\footnotemark
\begin{equation}\label{thomega}
\left\|\Phi_S\left(\hat{\rho}\right)-\hat{\omega}\right\|_1\leq\epsilon
\end{equation}
for all $\hat{\rho}=\hat{\rho}_S\otimes\hat{\rho}_B$ with $\mathrm{Supp}\,\hat{\rho}_B\subset\mathcal{H}_B^{\mathrm{eq}}$.
\end{defn}
To discuss the connection between ETH and thermalization we shall further restrict the analysis to subspaces $\mathcal{H}_B^{\mathrm{eq}}$ corresponding to microcanonical energy shells  $\mathcal{H}_B(E,\Delta_B)$ of the bath free Hamiltonian, i.e., to subspaces
 spanned by those eigenvectors of $\hat{H}_B$ with eigenvalues in the interval $[E-\Delta_B,E+\Delta_B]$.
In this context the  associated equilibrium reduced state $\hat{\omega}$ entering Eq. \eqref{thomega} is  assumed to depend upon $\mathcal{H}_B(E,\Delta_B)$
only via a smooth function $\beta(E)$ of $E$, which effectively defines the inverse temperature $1/T(E)=k\beta(E)$ of the bath, $k$ being the Boltzmann constant. We notice that $\hat{\omega}(\beta(E))$ and $\beta(E)$ are otherwise arbitrary\footnote{In thermodynamics, the inverse temperature $\beta(E)$ is related to the density of energy levels
of $\hat{H}_B$ around $E$, $\Omega(E)$, by
$\beta(E)= \partial_{E}\ln\Omega$, while the density matrices  $\hat{\omega}(\beta)$ are identified with
the Gibbs states associated with the system Hamiltonian $\hat{H}_S$, i.e., $\hat{\omega}(\beta)=e^{-\beta\hat{H}_S}/\mathrm{Tr}(e^{-\beta\hat{H}_S})$.
However, both these assumptions are not necessary to prove our results, and we shall not make them here.}.
Of course, a necessary condition for this to happen  is to have  the width $\Delta_B$ much smaller than the scale over which the mapping $E\mapsto \hat{\omega}(\beta(E))$ varies appreciably.
More precisely, with $C \equiv dE/dT>0$ the bath's heat capacity, we must have that $\hat{\omega}(\beta)$ does not appreciably change for
variations of $\beta$ on the order
\begin{equation}
\delta\beta\approx\Delta_B|d\beta/dE|= k\beta^2\Delta_B/C\;.
\end{equation}
Considering that the largest energy scale that can be associated with the system alone is the operator norm $\|\hat{H}_C\|$, we can conclude that
 thermalization with precision $\epsilon$ is reasonable if $\|\hat{H}_C\|\delta\beta\leq\epsilon$, i.e. if
\begin{equation}\label{condepsilon}
k\,{\beta(E)}^2\,\Delta_B\,\|\hat{H}_C\|\leq \epsilon\, C(\beta(E))\;.
\end{equation}
We are then led to define an ideal heat bath as follows.
\begin{defn}[Ideal heat bath]\label{bath}
We say that a bath is ideal in the energy range $\mathcal{E}_B$\footnote{The restriction to a specific energy range $\mathcal{E}_B$  in the definition of ideal bath  originates from the need to exclude
possible pathological behaviors associated with the use of finite dimensional bath models to describe realistic physical configurations.
Similar considerations apply to the restriction to the energy range $\mathcal{E}$ of the spectrum of $\hat{H}$ in Definition \ref{ETH}.} with energy-dependent inverse temperature $\beta(E)$ if, for any
$\Delta_B$ and $\epsilon$ satisfying Eq. \eqref{condepsilon}
and for any $E\in\mathcal{E}_B$, the microcanonical shell $\mathcal{H}_B(E,\Delta_B)$ induces thermalization to the state $\hat{\omega}(\beta(E))$ with precision $\epsilon$ in the sense of Definition \ref{defth}.
\end{defn}

\section{ETH implies thermalization}\label{seceth}
The ETH roughly states that, given two eigenvalues $E_n$ and $E_m$ of the global Hamiltonian $\hat{H}$ which are close, the  associated reduced density matrices $\hat{\tau}_n$ and $\hat{\tau}_m$ defined in Eq. \eqref{Phi}
must also be close,  i.e., that $\hat{\tau}_n$ is a ``sufficiently continuous'' function of the energy of the joint system.
More precisely,
our working definition is the following.
\begin{defn}[ETH]\label{ETH}
We say that a Hamiltonian $\hat{H}=\sum_nE_n|n\rangle\langle n|$ fulfils the ETH in the region of the spectrum $\mathcal{E}$\addtocounter{footnote}{-1}\addtocounter{Hfootnote}{-1}\footnotemark on a scale $\Delta$ with precision $\epsilon_{ETH}$  if
all $E_n,\,E_m\in\mathcal{E}$ with $\left|E_m-E_n\right|\le 2\Delta$ fulfil  $\left\|\hat{\tau}_m-\hat{\tau}_n\right\|_1\le \epsilon_{ETH}$.
\end{defn}
It is worth observing that
the usual formulation of the ETH \cite{deutsch1991quantum,gogolin2015equilibration} does not split the global system into system and bath. Instead, it identifies a class of relevant macroscopic observables $\mathcal{A}$, and states that for any $\hat{A}\in\mathcal{A}$ the diagonal matrix elements in the energy eigenbasis $\langle n|\hat{A}|n\rangle$ depend ``sufficiently continuously'' on the energy. Upon choosing as $\mathcal{A}$ the set of self-adjoint operators acting on the system alone, our definition is equivalent. Indeed, for any $\hat{A}=\hat{A}_S\otimes\hat{\mathbb{I}}_B$ we have $\langle n|\hat{A}|n\rangle=\mathrm{Tr}_S(\hat{A}_S\,\hat{\tau}_n)$, which are sufficiently continuous functions of the energy for any $\hat{A}_S$ if and only if $\hat{\tau}_n$ is.

It is well established that if the ETH holds for any initial global state with a sharp enough energy distribution, then the time average of the reduced state of the system is a smooth function of its average global energy alone \cite{gogolin2015equilibration}; i.e., different initial global states lead to nearly the same equilibrium reduced state for the system if their average energies are close and their energy distribution is sufficiently sharp.
Moreover, this equilibrium state is close to the one associated with a microcanonical global state.
To make our treatment self-contained, and better emphasize the importance of the ETH in the study of thermalization,
let us state here precisely our version of this implication in terms of the definitions introduced above (see Section \ref{proofP1} for a proof).
\begin{prop}[ETH implies microcanonical thermalization]\label{ETH->th}
Let $\hat{H}$ fulfil the ETH in $\mathcal{E}$ on a scale $\Delta$ with precision $\epsilon_{ETH}$.
Let $\hat{P}$ be the projector onto the energy shell $\mathcal{H}(E,\Delta)$ of the total Hamiltonian, so onto the subspace spanned by those eigenvectors of $\hat{H}$ that have eigenvalues in the interval $[E-\Delta,E+\Delta]$, which is assumed to be contained in $\mathcal{E}$.
Then, for any initial state $\hat{\rho}$ peaked around the energy $E$ in the sense $\mathrm{Tr}[\hat{\rho}(\hat{\mathbb{I}}-\hat{P})]\leq\epsilon_{ETH}$,
the time-averaged reduced state $\Phi_S\left(\hat{\rho}\right)$ of Eq. \eqref{Phi} is close to the reduced microcanonical state associated to $\mathcal{H}(E,\Delta)$,
\begin{equation}\label{omegaE}
\left\|\Phi_S(\hat{\rho})-\mathrm{Tr}_B(\hat{P})/\mathrm{Tr}(\hat{P})\right\|_1\le 3\epsilon_{ETH}\;.
\end{equation}
\end{prop}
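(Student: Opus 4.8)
The plan is to exploit the linearity of the reduced time-average map $\Phi_S$ together with the ETH applied \emph{inside} the energy shell, reducing everything to a short chain of triangle inequalities for the trace norm. First I would fix notation: let $\mathcal{I}=\{n:E_n\in[E-\Delta,E+\Delta]\}$ be the set of eigenindices lying in the shell, so that $\hat{P}=\sum_{n\in\mathcal{I}}|n\rangle\langle n|$, $d:=\mathrm{Tr}(\hat{P})=|\mathcal{I}|$, and the target reduced microcanonical state is the uniform average $\hat{\mu}:=\mathrm{Tr}_B(\hat{P})/\mathrm{Tr}(\hat{P})=\frac{1}{d}\sum_{n\in\mathcal{I}}\hat{\tau}_n$. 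Writing $p_n=\langle n|\hat{\rho}|n\rangle$, the peakedness hypothesis reads $\sum_{n\notin\mathcal{I}}p_n=\mathrm{Tr}[\hat{\rho}(\hat{\mathbb{I}}-\hat{P})]\le\epsilon_{ETH}$, and I set $q:=\sum_{n\in\mathcal{I}}p_n\ge 1-\epsilon_{ETH}$.

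Next I would split $\Phi_S(\hat{\rho})-\hat{\mu}$ into an outside-shell term and a within-shell term,
\begin{equation}
\Phi_S(\hat{\rho})-\hat{\mu}=\sum_{n\notin\mathcal{I}}p_n\,\hat{\tau}_n+\Big(\sum_{n\in\mathcal{I}}p_n\,\hat{\tau}_n-\hat{\mu}\Big)\;.
\end{equation}
Since each $\hat{\tau}_n$ is a state with $\|\hat{\tau}_n\|_1=1$, the outside-shell term is controlled directly: by the triangle inequality its trace norm is at most $\sum_{n\notin\mathcal{I}}p_n\le\epsilon_{ETH}$.

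The heart of the argument is the within-shell term, and here the ETH enters. For any $n,m\in\mathcal{I}$ one has $|E_n-E_m|\le 2\Delta$ and, since $\mathcal{H}(E,\Delta)\subset\mathcal{E}$, Definition \ref{ETH} gives $\|\hat{\tau}_n-\hat{\tau}_m\|_1\le\epsilon_{ETH}$. Averaging over $m\in\mathcal{I}$ then yields $\|\hat{\tau}_n-\hat{\mu}\|_1\le\epsilon_{ETH}$ for every $n\in\mathcal{I}$, i.e. all within-shell reduced eigenstates are uniformly close to the microcanonical average. Using this I would write $\sum_{n\in\mathcal{I}}p_n\hat{\tau}_n-\hat{\mu}=(q-1)\hat{\mu}+\sum_{n\in\mathcal{I}}p_n(\hat{\tau}_n-\hat{\mu})$; the first piece has trace norm $(1-q)\le\epsilon_{ETH}$ and the second is bounded by $\sum_{n\in\mathcal{I}}p_n\|\hat{\tau}_n-\hat{\mu}\|_1\le q\,\epsilon_{ETH}\le\epsilon_{ETH}$. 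Collecting the three contributions through the triangle inequality gives $\|\Phi_S(\hat{\rho})-\hat{\mu}\|_1\le 3\epsilon_{ETH}$, as claimed.

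Since the whole argument is a sequence of triangle inequalities, there is no serious analytic obstacle; the only point requiring care is the normalization mismatch between the probabilities $p_n$ and the uniform weights $1/d$, which I handle by isolating the $(q-1)\hat{\mu}$ term rather than attempting to renormalize the within-shell probabilities. I would also check the degenerate edge case $d=1$ (a single eigenvalue in the shell), where $\hat{\mu}=\hat{\tau}_n$ and the within-shell contribution collapses trivially, so that the bound still holds.
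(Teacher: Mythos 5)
Your proof is correct and follows essentially the same route as the paper's: expand $\Phi_S(\hat{\rho})$ in the reduced eigenstates $\hat{\tau}_n$, use the ETH to control the within-shell variation and the peakedness hypothesis to control the out-of-shell weight, then combine via the triangle inequality. The only difference is bookkeeping — the paper writes the difference as the double sum $\frac{1}{|\mathcal{C}|}\sum_n\sum_{m\in\mathcal{C}}p_n\left(\hat{\tau}_n-\hat{\tau}_m\right)$ and bounds the out-of-shell pairs by $\left\|\hat{\tau}_n-\hat{\tau}_m\right\|_1\le 2$ (contributing $2\epsilon_{ETH}$), whereas you isolate the normalization defect $(q-1)\hat{\mu}$ as a separate term worth $\epsilon_{ETH}$ — and both arrive at the same $3\epsilon_{ETH}$ bound.
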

Let us stress that this proposition does \emph{not} assume $\hat{\rho}$ to be a product or separable state; i.e., the ETH implies thermalization even if the system and bath are initially entangled.
The link with Definitions \ref{defth} and \ref{bath} is then provided by Lemma \ref{rhoQeth} of Section \ref{appeth}: If $\hat{\rho}$ is a state supported on $\mathcal{H}_S\otimes\mathcal{H}_B(E,\Delta_B)$ then
\begin{equation}
\mathrm{Tr}[\hat{\rho}(\hat{\mathbb{I}}-\hat{P})]\leq\epsilon_{ETH}\;,
\end{equation}
and Eq. \eqref{omegaE} follows from the ETH on a scale $\Delta=(\|\hat{H}_C\|+\Delta_B)/\sqrt{\epsilon_{ETH}}$. Further, for conditions under which the microcanonical state may be replaced by the canonical state, see, e.g., Ref.'s \cite{riera2012thermalization,mueller2013thermalization,popescu2006entanglement,goldstein2006canonical,brandao2015equivalence} and references therein.

\section{Thermalization implies ETH}\label{secethmain}
Proposition \ref{ETH->th} seems to imply that the ETH is too strong a hypothesis and that weaker assumptions might be sufficient to
justify thermalization. It turns out that this is not true. Indeed,
we shall prove  that the ETH must hold for any ideal heat bath satisfying Definition \ref{bath}.
First off, we show that if a subspace of the bath $\mathcal{H}_B^{\mathrm{eq}}$ induces thermalization to a state $\hat{\omega}$ for any initial product state as per Definition \ref{defth}, the property extends to the entangled initial states up to an overhead which is linear in the system dimension.
Our argument relies on the observation that the entanglement of the eigenstates $|n\rangle$ is limited by the system dimension $d_S$, and cannot grow arbitrarily even when the bath dimension is large.
Note that this result is similar in spirit to the main finding of Ref. \cite{gogolin2011absence}, where thermalization is disproved in certain nonintegrable systems by establishing an upper bound on the average system-bath entanglement over random initial bath states.
\begin{lem}\label{therm}
Let $\mathcal{H}_B^{\mathrm{eq}}$ be a subspace of the bath Hilbert space that induces thermalization to a state $\hat{\omega}$ with precision $\epsilon$ in the sense of Definition \ref{defth}.
Then $\mathcal{H}_B^{\mathrm{eq}}$ induces thermalization also on the entangled initial states with precision $4d_S\epsilon$, i.e.
\begin{equation}\label{thm1}
\left\|\Phi_S\left(\hat{\rho}\right)-\hat{\omega}\right\|_1\leq4d_S\epsilon
\end{equation}
for all $\hat{\rho}$ with support contained in $\mathcal{H}_S\otimes\mathcal{H}_B^{\mathrm{eq}}$.
\end{lem}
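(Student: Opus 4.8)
The plan is to exploit the linearity of the time-averaging channel $\Phi_S$ of \eqref{Phi} together with the crucial feature that Definition \ref{defth} imposes \emph{no} restriction on the system factor of the admissible product states. First I would record an elementary reduction: if a Hermitian $\hat{\rho}$ admits a decomposition $\hat{\rho}=\sum_a\lambda_a\,\hat{\rho}_S^{(a)}\otimes\hat{\rho}_B^{(a)}$ into genuine product density matrices, with each bath factor $\hat{\rho}_B^{(a)}$ supported on $\mathcal{H}_B^{\mathrm{eq}}$ and with \emph{real} coefficients obeying $\sum_a\lambda_a=1$, then, using $\sum_a\lambda_a\bigl(\Phi_S(\hat\rho_S^{(a)}\otimes\hat\rho_B^{(a)})-\hat\omega\bigr)=\Phi_S(\hat\rho)-\hat\omega$, linearity and the triangle inequality yield
\begin{equation}
\left\|\Phi_S(\hat\rho)-\hat\omega\right\|_1\leq\sum_a|\lambda_a|\,\left\|\Phi_S\left(\hat\rho_S^{(a)}\otimes\hat\rho_B^{(a)}\right)-\hat\omega\right\|_1\leq\left(\sum_a|\lambda_a|\right)\epsilon\;,
\end{equation}
where the last inequality is the hypothesis of Definition \ref{defth} applied termwise. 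The whole problem is thereby reduced to constructing, for any $\hat\rho$ supported on $\mathcal{H}_S\otimes\mathcal{H}_B^{\mathrm{eq}}$, one such product decomposition with controlled total weight $\sum_a|\lambda_a|\leq 4d_S$.

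Next I would build the decomposition explicitly in a fixed orthonormal basis $\{|i\rangle_S\}_{i=1}^{d_S}$ of $\mathcal{H}_S$, writing $\hat\rho=\sum_{i,j}|i\rangle_S\langle j|\otimes\hat\rho_{ij}^B$; sandwiching $\hat\rho$ between copies of the projector onto $\mathcal{H}_B^{\mathrm{eq}}$ shows that each block $\hat\rho_{ij}^B$ is again supported on $\mathcal{H}_B^{\mathrm{eq}}$. The diagonal part $\sum_i|i\rangle_S\langle i|\otimes\hat\rho_{ii}^B$ is already a convex combination of product states, with weights $q_i=\mathrm{Tr}\,\hat\rho_{ii}^B\geq0$ summing to one, contributing exactly $1$ to the total weight. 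For each pair $i<j$ I would treat the Hermitian piece $|i\rangle_S\langle j|\otimes\hat\rho_{ij}^B+\mathrm{h.c.}$ by polarization: splitting $\hat\rho_{ij}^B=\hat A_{ij}+i\hat B_{ij}$ into the Hermitian bath operators $\hat A_{ij}=(\hat\rho_{ij}^B+\hat\rho_{ji}^B)/2$ and $\hat B_{ij}=(\hat\rho_{ij}^B-\hat\rho_{ji}^B)/(2i)$, and using $|i\rangle\langle j|+|j\rangle\langle i|=|+\rangle\langle+|-|-\rangle\langle-|$ with $|\pm\rangle=(|i\rangle\pm|j\rangle)/\sqrt2$ (and the analogous identity with relative-phase states $(|i\rangle\pm i|j\rangle)/\sqrt2$ for the $\hat B_{ij}$ term), everything is expressed through product terms whose bath factors are $\hat A_{ij}$ and $\hat B_{ij}$. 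Decomposing these two Hermitian operators into their positive and negative parts, all supported on $\mathcal{H}_B^{\mathrm{eq}}$, finally produces genuine product density matrices (pure system state $\otimes$ normalized bath state) with real coefficients, and one checks that the trace is preserved so that $\sum_a\lambda_a=1$.

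The remaining step is to bound the total weight. Each of $\hat A_{ij},\hat B_{ij}$ generates product terms whose absolute coefficients sum to $2\|\hat A_{ij}\|_1$ and $2\|\hat B_{ij}\|_1$, and since $\|\hat A_{ij}\|_1,\|\hat B_{ij}\|_1\leq\|\hat\rho_{ij}^B\|_1$, the pair $(i,j)$ contributes at most $4\|\hat\rho_{ij}^B\|_1$. Here I would invoke the positivity of $\hat\rho$, which forces $\hat\rho_{ij}^B=(\hat\rho_{ii}^B)^{1/2}K(\hat\rho_{jj}^B)^{1/2}$ with $\|K\|\leq1$, hence $\|\hat\rho_{ij}^B\|_1\leq\sqrt{q_iq_j}$ by the Schatten Cauchy--Schwarz inequality. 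Summing and applying $\bigl(\sum_i\sqrt{q_i}\bigr)^2\leq d_S\sum_iq_i=d_S$ gives an off-diagonal total $\leq 4\sum_{i<j}\sqrt{q_iq_j}\leq 2(d_S-1)$, so that $\sum_a|\lambda_a|\leq 1+2(d_S-1)=2d_S-1\leq 4d_S$, which combined with the first display proves \eqref{thm1}. The main obstacle is precisely this off-diagonal sector: the blocks $\hat\rho_{ij}^B$ differ from pair to pair and are neither Hermitian nor positive, so they cannot be fed directly into the product-state hypothesis; converting them into bona fide product states via polarization and positive/negative parts is routine, but it inflates the $\ell^1$ weight, and the key quantitative point is that the positivity constraint $\|\hat\rho_{ij}^B\|_1\leq\sqrt{q_iq_j}$ keeps this inflation linear in $d_S$ rather than exponential.
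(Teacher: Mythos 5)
Your proposal is correct, and it takes a genuinely different route from the paper's own proof. The paper argues on the dual side: it invokes the variational characterization of the trace norm to fix an optimal measurement operator $\hat{M}$, reduces to pure $\hat{\rho}$ by convexity, Schmidt-decomposes $|\psi\rangle=\sum_k\psi_k|s_k\rangle|b_k\rangle$, and then bounds the matrix elements of the resulting dual observable $\hat{X}_{\hat{M}}$ through its expectation values on \emph{product} states $|\psi\rangle|\phi\rangle$ with $|\phi\rangle\in\mathcal{H}_B^{\mathrm{eq}}$, which the hypothesis controls by $\epsilon$; the factor $d_S$ comes from Cauchy--Schwarz applied to the Schmidt coefficients, $\bigl(\sum_k\psi_k\bigr)^2\le d_S$. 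You instead work on the primal side: you decompose $\hat{\rho}$ itself as a real-affine combination $\sum_a\lambda_a\,\hat{\rho}_S^{(a)}\otimes\hat{\rho}_B^{(a)}$ of admissible product states, with the $\ell^1$ weight $\sum_a|\lambda_a|$ controlled by the positivity constraint $\|\hat{\rho}_{ij}^B\|_1\le\sqrt{q_iq_j}$ (via the contraction factorization of positive block operators and the Schatten Cauchy--Schwarz inequality) together with $\bigl(\sum_i\sqrt{q_i}\bigr)^2\le d_S$, and then apply the hypothesis termwise. Both proofs are sound; each buys something. Your argument needs no purity reduction (it treats mixed states directly) and actually yields the sharper constant $2d_S-1\le 4d_S$; moreover the underlying decomposition statement — that any state supported on $\mathcal{H}_S\otimes\mathcal{H}_B^{\mathrm{eq}}$ lies in the real span of product states with $\ell^1$ coefficient norm at most $2d_S-1$ — is of independent interest. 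The paper's duality argument is shorter, avoids the block-by-block bookkeeping of positive/negative parts, and localizes all the work in a single operator-norm estimate. One point you should make explicit in a final write-up is the degenerate case $q_i=0$ (where positivity forces $\hat{\rho}_{ij}^B=0$ for all $j$, so those terms are simply dropped), though your bound $\|\hat{\rho}_{ij}^B\|_1\le\sqrt{q_iq_j}$ handles it automatically.
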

By virtue of this Lemma,
the equilibration to some fixed state $\hat{\omega}$ of all initial product states in $\mathcal{H}_S\otimes\mathcal{H}_B^{\mathrm{eq}}$ extends to {\it all} initial states in this subspace.
Then, if an eigenstate $|n\rangle$ of the Hamiltonian is almost contained in the same subspace,
the resulting time-averaged reduced state of the system $\Phi_S(|n\rangle\langle n|)$ is also close to $\hat{\omega}$.
However, if we initialize the global system in an eigenstate of the Hamiltonian, it obviously remains there forever,
\begin{equation}\label{Phin}
\Phi_S(|n\rangle\langle n|)=\hat{\tau}_n\;.
\end{equation}
Combining this with the fact that the trace norm is contracting under completely positive trace-preserving maps \cite{perez2006contractivity}, we have under the assumptions of Lemma \ref{therm} that (see Section \ref{main proof} for details)
\begin{equation}
\label{old_lemma_2}
\left\|\hat{\tau}_n-\hat{\omega}\right\|_1
\le 4d_S\epsilon+2\sqrt{\langle n|\hat{Q}|n\rangle},
\end{equation}
where $\hat{Q}$ is the projector onto the subspace orthogonal to $\mathcal{H}_S\otimes\mathcal{H}_B^{\mathrm{eq}}$. It remains to bound $\langle n|\hat{Q}|n\rangle$ for given $\mathcal{H}_B^{\text{eq}}=\mathcal{H}_B(E,\Delta_B)$, which we do in Section \ref{main proof}, to arrive at the statement that whenever $\mathcal{H}_B(E,\Delta_B)$ induces thermalization to $\hat{\omega}$ with precision $\epsilon$ then for all $n$ with $|E_n-E|\le \Delta_B/2$ we have
\begin{equation}
\left\|\hat{\tau}_n-\hat{\omega}\right\|_1\le \frac{8\|\hat{H}_C\|^2}{\Delta_B^2}+4d_S\epsilon,
\end{equation}
which implies our main result (see Section \ref{main proof} for details).
\begin{thm}[Thermalization implies ETH]\label{->ETH} Let the bath be ideal in the energy range $\mathcal{E}_B$ as in Definition \ref{bath}. Let
\begin{equation}
\epsilon_{ETH}=12\sup_{E\in\mathcal{E}_B}
\left(\frac{2\|\hat{H}_C\|^2d_S\,k\,{\beta(E)}^2}{C(\beta(E))}\right)^{2/3}.
\end{equation}
Then $\hat{H}$ fulfils the ETH in the region $\mathcal{E}_B$ on a scale
\begin{equation}
\Delta=2\sqrt{3}\frac{\|\hat{H}_C\|}{\sqrt{\epsilon_{ETH}}}
\end{equation} with precision $\epsilon_{ETH}$.
\end{thm}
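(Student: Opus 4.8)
The plan is to deduce the ETH by showing that any two eigenstates of $\hat{H}$ with nearby energies have partial traces close to one common reduced microcanonical state, and then to tune the free parameters $(\Delta_B,\epsilon)$ of the ideal-bath hypothesis so that the resulting bound reproduces exactly the stated $\epsilon_{ETH}$ and $\Delta$. First I would record the per-eigenstate estimate already assembled above: combining Lemma \ref{therm} (which upgrades thermalization of product inputs to arbitrary inputs at the cost of a factor $d_S$), the identity \eqref{Phin} $\Phi_S(|n\rangle\langle n|)=\hat{\tau}_n$, the contractivity of the trace norm under the time-average and partial-trace maps, and the variance bound \eqref{old_lemma_2} on the weight $\langle n|\hat{Q}|n\rangle$ that $|n\rangle$ places outside $\mathcal{H}_S\otimes\mathcal{H}_B(E,\Delta_B)$, one obtains that whenever the shell $\mathcal{H}_B(E,\Delta_B)$ induces thermalization to $\hat{\omega}(\beta(E))$ with precision $\epsilon$, every eigenstate with $|E_n-E|\le\Delta_B/2$ satisfies $\|\hat{\tau}_n-\hat{\omega}(\beta(E))\|_1\le 8\|\hat{H}_C\|^2/\Delta_B^2+4d_S\epsilon$. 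The crucial input is that $\langle n|\hat{Q}|n\rangle$ is controlled by $\|\hat{H}_C\|/\Delta_B$ alone: since $(\hat{H}_B-E_n)|n\rangle=-\hat{H}_C|n\rangle$, the $\hat{H}_B$-distribution of $|n\rangle$ has spread at most $\|\hat{H}_C\|$ about $E_n$, so its tail beyond a shell whose near edge lies at distance $\ge\Delta_B/2$ from $E_n$ is suppressed by a Chebyshev estimate.

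Next I would sandwich two eigenstates in a single shell. Given $E_n,E_m\in\mathcal{E}_B$ with $|E_n-E_m|\le 2\Delta$, set $E=(E_n+E_m)/2$ and $\Delta_B=2\Delta$, so that $|E_n-E|=|E_m-E|\le\Delta=\Delta_B/2$ and both eigenstates fall in the admissible regime above. Because the bath is ideal (Definition \ref{bath}), this shell induces thermalization to $\hat{\omega}(\beta(E))$ with any precision allowed by \eqref{condepsilon}; choosing $\epsilon$ to saturate that constraint, $\epsilon=k\beta(E)^2\Delta_B\|\hat{H}_C\|/C(\beta(E))=2k\beta(E)^2\Delta\|\hat{H}_C\|/C(\beta(E))$, the triangle inequality gives
\begin{equation}
\left\|\hat{\tau}_m-\hat{\tau}_n\right\|_1\le 2\left(\frac{8\|\hat{H}_C\|^2}{\Delta_B^2}+4d_S\epsilon\right)=\frac{4\|\hat{H}_C\|^2}{\Delta^2}+\frac{16\,d_S\,k\,\beta(E)^2\,\|\hat{H}_C\|\,\Delta}{C(\beta(E))}=:f_E(\Delta).
\end{equation}

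Finally I would carry out the constrained optimization. Writing $f_E(\Delta)=a\Delta^{-2}+b_E\Delta$ with $a=4\|\hat{H}_C\|^2$ and $b_E=16\,d_S\,k\,\beta(E)^2\|\hat{H}_C\|/C(\beta(E))$, the single-$E$ minimum over $\Delta$ sits at $\Delta^3=2a/b_E$ and equals $3\cdot 2^{-2/3}a^{1/3}b_E^{2/3}$, which is precisely $\epsilon_{ETH}(E)=12\,(2\|\hat{H}_C\|^2d_S\,k\,\beta(E)^2/C(\beta(E)))^{2/3}$. Setting $\epsilon_{ETH}=\sup_E\epsilon_{ETH}(E)$ fixes the single scale $\Delta=(2a/\sup_E b_E)^{1/3}=2\sqrt{3}\,\|\hat{H}_C\|/\sqrt{\epsilon_{ETH}}$; at this $\Delta$ the first term $a/\Delta^2$ equals $\epsilon_{ETH}/3$ for every $E$, while the second obeys $b_E\Delta\le(\sup_E b_E)\,\Delta=2\epsilon_{ETH}/3$, so $f_E(\Delta)\le\epsilon_{ETH}$ uniformly in $E\in\mathcal{E}_B$. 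Hence $\|\hat{\tau}_m-\hat{\tau}_n\|_1\le\epsilon_{ETH}$ for every pair with $|E_n-E_m|\le 2\Delta$, i.e.\ the ETH on scale $\Delta$ with precision $\epsilon_{ETH}$.

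The substantive obstacles lie in the two lemmata feeding the per-eigenstate bound, which I treat as established: the $d_S$-controlled passage from product to entangled inputs (Lemma \ref{therm}), hinging on the fact that eigenstates of $\hat{H}$ carry at most $O(\log d_S)$ entanglement regardless of $d_B$, and the Chebyshev-type control of $\langle n|\hat{Q}|n\rangle$ through the $d_B$-independent bound on $\|\hat{H}_C\|$. Once these are in hand, the remaining work is the elementary but constant-sensitive optimization above; its only delicate point, which I would check explicitly, is that one fixed scale $\Delta$ — the optimizer for the worst-case $E$ — works simultaneously for all $E\in\mathcal{E}_B$, which holds because the $\Delta^{-2}$ term is then $E$-independent and the linear term is monotone in $b_E\le\sup_E b_E$.
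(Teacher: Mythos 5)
Your proposal is correct and follows essentially the same route as the paper: the same per-eigenstate bound $\|\hat{\tau}_n-\hat{\omega}\|_1\le 8\|\hat{H}_C\|^2/\Delta_B^2+4d_S\epsilon$ (from Lemma \ref{therm}, Eq. \eqref{Phin}, contractivity, and the Chebyshev control of $\langle n|\hat{Q}|n\rangle$), the same saturation of the ideal-bath constraint \eqref{condepsilon}, the same optimization over the shell width, and the triangle inequality, with all constants matching the stated $\epsilon_{ETH}$ and $\Delta$. The only (welcome) refinement is that you make explicit the point the paper leaves implicit — that the single worst-case-optimal scale $\Delta$ works uniformly for all $E\in\mathcal{E}_B$, since the $\Delta^{-2}$ term is $E$-independent and the linear term is monotone in $b_E\le\sup_E b_E$.
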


Typically,
for any fixed inverse temperature $\beta$, the bath's heat capacity $C(\beta)$ increases with the size of the bath.
On the contrary, $\hat{H}_C$ has been chosen such that it remains bounded.
Then, for fixed $\beta$ and $d_S$, the error $\epsilon$ becomes arbitrarily small (and thus the width $\Delta$ arbitrarily large) as $d_B\rightarrow\infty$.

\section{Proofs}\label{appeth}
Here we provide explicit proofs of the various lemmata and theorems.

\subsection{Proof of Proposition \ref{ETH->th}}\label{proofP1}
Defining
\begin{equation}\label{defC}
\mathcal{C}\equiv\left\{n:|E_n-E|\le\Delta\right\}\subset\mathcal{E}\;,
\end{equation}
the partial trace of the microcanonical shell can be written as
\begin{equation}
\frac{\mathrm{Tr}_B\hat{P}}{\mathrm{Tr}\,\hat{P}}=\frac{1}{|\mathcal{C}|}\sum_{n\in\mathcal{C}}\hat{\tau}_n\;.
\end{equation}
We have then
\begin{eqnarray}
\left\|\Phi_S\left(\hat{\rho}\right)-\frac{\mathrm{Tr}_B\hat{P}}{\mathrm{Tr}\,\hat{P}}\right\|_1&=& \frac{1}{|\mathcal{C}|}\left\|\sum_n\sum_{m\in\mathcal{C}}p_n\left(\hat{\tau}_n-\hat{\tau}_m\right)\right\|_1\leq \frac{1}{|\mathcal{C}|}\sum_n\sum_{m\in\mathcal{C}}p_n\left\|\hat{\tau}_n-\hat{\tau}_m\right\|_1\leq\nonumber\\ &\leq&\frac{1}{|\mathcal{C}|}\sum_{n\in\mathcal{C}}\sum_{m\in\mathcal{C}}p_n\left\|\hat{\tau}_n-\hat{\tau}_m\right\|_1+2\mathrm{Tr}(\hat{\rho}\,(\hat{\mathbb{I}}-\hat{P}))\;,
\end{eqnarray}
where $\mathrm{Tr}(\hat{\rho}\,(\hat{\mathbb{I}}-\hat{P}))\le \epsilon_{ETH}$ and
from \eqref{defC}, for any $m,n\in\mathcal{C}$ we have $|E_n-E_m|\le 2\Delta$ and then $\left\|\hat{\tau}_n-\hat{\tau}_m\right\|\leq\epsilon_{ETH}$.

\subsection{Lemma \ref{rhoQeth}}

One arrives at the statement after Proposition \ref{ETH->th} in the main text by applying the following lemma to $\hat{A}_1=\hat{H}$, $\hat{A}_2=\hat{\mathbb{I}}_S\otimes\hat{H}_B$, $\lambda=E$, $\Delta_1=\Delta=\frac{\|\hat{H}_C\|+\Delta_B}{\sqrt{\epsilon_{ETH}}}$ and $\Delta_2=\Delta_B$.
\begin{lem}\label{rhoQeth}
Let us consider two self-adjoint operators $\hat{A}_1$ and $\hat{A}_2$.
Let $\mathcal{H}_i(\lambda,\Delta_i)$ be the subspace identified by $\lambda-\Delta_i\leq\hat{A}_i\leq\lambda+\Delta_i$, for $i=1,2$.
Let $\hat{\rho}$ be a quantum state with support contained in $\mathcal{H}_2(\lambda,\Delta_2)$, and $\hat{Q}$ the projector onto the subspace orthogonal to $\mathcal{H}_1(\lambda,\Delta_1)$.
Then
\begin{equation}
\label{dd}
\mathrm{Tr}\bigl(\hat{\rho}\,\hat{Q}\bigr)\leq \left(\frac{\|\hat{A}_1-\hat{A}_2\|+\Delta_2}{\Delta_1}\right)^2\;.
\end{equation}
\begin{proof}
Let us consider first a pure state $\hat{\rho}=|\psi\rangle\langle\psi|$ and start from the identity
\begin{equation}
(\hat{A}_1-\lambda)|\psi\rangle=(\hat{A}_1-\hat{A}_2)|\psi\rangle+(\hat{A}_2-\lambda)|\psi\rangle\;.
\end{equation}
On one hand, the square norm of the left-hand-side is
\begin{equation}\label{LHS}
\|(\hat{A}_1-\lambda)|\psi\rangle\|^2=\langle\psi|(\hat{A}_1-\lambda)^2|\psi\rangle\geq\Delta_1^2\langle\psi|\hat{Q}|\psi\rangle\;.
\end{equation}
On the other hand, the norm of the right-hand-side satisfies
\begin{eqnarray}\label{RHS}
\|(\hat{A}_1-\hat{A}_2)|\psi\rangle+(\hat{A}_2-\lambda)|\psi\rangle\| &\leq& \|\hat{A}_1-\hat{A}_2\|+\|(\hat{A}_2-\lambda)|\psi\rangle\|\leq\nonumber\\
&\leq& \|\hat{A}_1-\hat{A}_2\|+\sqrt{\langle\psi|(\hat{A}_2-\lambda)^2|\psi\rangle}\leq\nonumber\\
&\leq& \|\hat{A}_1-\hat{A}_2\|+\Delta_2\;.
\end{eqnarray}
Putting together \eqref{LHS} and \eqref{RHS}, we have Eq. \eqref{dd} for pure states. For mixed states $\hat{\rho}=\sum_k p_k|\psi_k\rangle\langle\psi_k|$ with  $|\psi_k\rangle\in\mathcal{H}_2(\lambda,\Delta_2)$ we have $\mathrm{Tr}(\hat{\rho}\,\hat{Q})=\sum_kp_k\langle\psi_k|\hat{Q}|\psi_k\rangle$ and the assertion follows by applying the pure-state result to each term individually.
\end{proof}
\end{lem}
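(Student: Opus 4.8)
The plan is to prove this as an operator-level Chebyshev (second-moment) estimate, reducing first to the case of a pure state and then extending by convexity. Since $\hat{\rho}$ is supported on $\mathcal{H}_2(\lambda,\Delta_2)$, I would write $\hat{\rho}=\sum_k p_k\,|\psi_k\rangle\langle\psi_k|$ with each $|\psi_k\rangle\in\mathcal{H}_2(\lambda,\Delta_2)$, so that $\mathrm{Tr}(\hat{\rho}\,\hat{Q})=\sum_k p_k\,\langle\psi_k|\hat{Q}|\psi_k\rangle$; it then suffices to bound $\langle\psi|\hat{Q}|\psi\rangle$ for a single normalized $|\psi\rangle\in\mathcal{H}_2(\lambda,\Delta_2)$ and sum against the weights $p_k$.

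For such a $|\psi\rangle$, the central object is the vector $(\hat{A}_1-\lambda)|\psi\rangle$, whose squared norm I would bound from two sides. For the \emph{lower} bound I would use that $\hat{Q}$ is the spectral projector of $\hat{A}_1$ onto the subspace where $|\hat{A}_1-\lambda|>\Delta_1$, so that the operator inequality $(\hat{A}_1-\lambda)^2\geq\Delta_1^2\,\hat{Q}$ holds: both operators are diagonal in the eigenbasis of $\hat{A}_1$, and the inequality is verified eigenvalue by eigenvalue. This gives $\|(\hat{A}_1-\lambda)|\psi\rangle\|^2\geq\Delta_1^2\,\langle\psi|\hat{Q}|\psi\rangle$.

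For the \emph{upper} bound I would split $(\hat{A}_1-\lambda)|\psi\rangle=(\hat{A}_1-\hat{A}_2)|\psi\rangle+(\hat{A}_2-\lambda)|\psi\rangle$ and apply the triangle inequality. The first term is controlled by the operator norm, $\|(\hat{A}_1-\hat{A}_2)|\psi\rangle\|\leq\|\hat{A}_1-\hat{A}_2\|$, while the second is controlled by the support hypothesis: since $|\psi\rangle\in\mathcal{H}_2(\lambda,\Delta_2)$ one has $(\hat{A}_2-\lambda)^2\leq\Delta_2^2$ on this subspace, hence $\|(\hat{A}_2-\lambda)|\psi\rangle\|=\sqrt{\langle\psi|(\hat{A}_2-\lambda)^2|\psi\rangle}\leq\Delta_2$. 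Combining the two bounds gives $\Delta_1^2\,\langle\psi|\hat{Q}|\psi\rangle\leq(\|\hat{A}_1-\hat{A}_2\|+\Delta_2)^2$, which is exactly the claimed estimate for pure states, and summing over the decomposition of $\hat{\rho}$ settles the mixed case.

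The argument is essentially elementary once the right identity is chosen, so the only delicate point is making sure the spectral inequalities are applied on the correct subspaces: the lower bound needs $\hat{Q}$ to be precisely the projector onto $\{|\hat{A}_1-\lambda|>\Delta_1\}$, and the second-term bound needs every pure state in the decomposition of $\hat{\rho}$ to lie inside the spectral window of $\hat{A}_2$. Both follow directly from the hypotheses, so I do not expect a genuine obstacle; the main care is in the bookkeeping of the two observables, the estimate being a triangle-inequality interpolation between the mismatch $\|\hat{A}_1-\hat{A}_2\|$ of $\hat{A}_1$ and $\hat{A}_2$ and the intrinsic width $\Delta_2$ of the $\hat{A}_2$-window.
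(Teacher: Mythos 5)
Your proposal is correct and follows essentially the same route as the paper's own proof: the same splitting identity $(\hat{A}_1-\lambda)|\psi\rangle=(\hat{A}_1-\hat{A}_2)|\psi\rangle+(\hat{A}_2-\lambda)|\psi\rangle$, the same spectral lower bound $(\hat{A}_1-\lambda)^2\geq\Delta_1^2\,\hat{Q}$, the same triangle-inequality upper bound, and the same reduction of the mixed case to pure states by convex decomposition. The only (harmless) difference is presentational: you state the pure-state reduction up front and spell out the eigenvalue-by-eigenvalue justification of the spectral inequality, which the paper leaves implicit.
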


\subsection{Proof of Lemma \ref{therm}}\label{proofL1}
Let $\hat{\rho}$ a state in the subspace $\mathcal{H}_S\otimes\mathcal{H}_B^{eq}$ and denote by $\hat{P}$ the projector on said subspace.
By the variational characterization of the trace norm we have
\begin{equation}
\frac{1}{2}\|\Phi_S\left(\hat{\rho}\right)-\hat{\omega}\|_1=\bigl|\text{Tr}\bigl[\hat{M}(\Phi_S\left(\hat{\rho}\right)-\hat{\omega})\bigr]\bigr|
=\bigl|\text{Tr}[\hat{\rho}\hat{X}_{\hat{M}}]\bigr|,\;\;\;\hat{X}_{\hat{M}}=\sum_n\text{Tr}[(\hat{\tau}_n-\hat{\omega})\hat{M}]\hat{P}|n\rangle\langle n|\hat{P}\;,
\end{equation}
for some $\hat{M}$ with $0\le \hat{M}\le\hat{\mathbb{I}}$. Assuming w.l.o.g. that $\hat{\varrho}=|\psi\rangle\langle\psi|$ (the mixed case follows by convexity), we may Schmidt-decompose $|\psi\rangle=\sum_{k=1}^{d_S}\psi_k|s_k\rangle| b_k\rangle$. Hence
\begin{equation}
\frac{1}{2}\|\Phi_S\left(|\psi\rangle\langle\psi|\right)-\hat{\omega}\|_1=\sum_{k,l}\psi_k\psi_l\langle s_k|\langle b_k|\hat{X}_{\hat{M}}|s_l\rangle|b_l\rangle
=:\sum_{k,l}\psi_k\psi_l\langle s_k|\hat{S}_{k,l}|s_l\rangle\;,
\end{equation}
where the operator $\hat{S}_{k,l}=\langle b_k|\hat{X}_{\hat{M}}|b_l\rangle$ acts on $\mathcal{H}_S$ and we have
by the triangle and Cauchy--Schwarz inequality
\begin{equation}
2|\langle s_k|\hat{S}_{k,l}|s_l\rangle|\le |\langle s_k|\bigl(\hat{S}_{k,l}+\hat{S}_{k,l}^\dagger\bigr)|s_l\rangle|+|\langle s_k|i\bigl(\hat{S}_{k,l}-\hat{S}^\dagger_{k,l}\bigr)|s_l\rangle|\le \|\hat{S}_{k,l}+\hat{S}_{k,l}^\dagger\|+\|i\bigl(\hat{S}_{k,l}-\hat{S}^\dagger_{k,l}\bigr)\|,
\end{equation}
where $\hat{S}_{k,l}+\hat{S}_{k,l}^\dagger$ and $i(\hat{S}_{k,l}-\hat{S}^\dagger_{k,l})$ are hermitian such that
\begin{eqnarray}
\|\hat{S}_{k,l}+\hat{S}_{k,l}^\dagger\|&=& \max_{\substack{|\psi\rangle\in\mathcal{H}_S\\ \langle\psi|\psi\rangle=1}}|\langle \psi|\bigl(\hat{S}_{k,l}+\hat{S}_{k,l}^\dagger\bigr)|\psi\rangle|
\le\nonumber\\
&\le& \max_{\substack{|\psi\rangle\in\mathcal{H}_S\\ \langle\psi|\psi\rangle=1}}|\langle \psi|\langle b_k|\hat{X}_{\hat{M}}|b_l\rangle|\psi\rangle|
+\max_{\substack{|\psi\rangle\in\mathcal{H}_S\\ \langle\psi|\psi\rangle=1}}|\langle \psi|\langle b_l|\hat{X}_{\hat{M}}|b_k\rangle|\psi\rangle|\le\nonumber\\
&\le& 2\max_{\substack{|\psi\rangle\in\mathcal{H}_S\\ \langle\psi|\psi\rangle=1}}
\max_{\substack{|\phi\rangle\in\mathcal{H}_B^{\text{eq}}\\ \langle\phi|\phi\rangle=1}}
|\langle \psi|\langle \phi|\hat{X}_{\hat{M}}|\phi\rangle|\psi\rangle|,
\end{eqnarray}
where we used the Cauchy-Schwarz inequality and the hermiticity of $\langle \psi|\hat{X}_{\hat{M}}|\psi\rangle$ to obtain the last line. The same upper bound holds for $|\langle s_k|i\bigl(\hat{S}_{k,l}-\hat{S}^\dagger_{k,l}\bigr)|s_l\rangle|$. Further,
\begin{equation}
|\langle \psi|\langle \phi|\hat{X}_{\hat{M}}|\phi\rangle|\psi\rangle|=\bigl|\text{Tr}\bigl[\hat{M}(\Phi_S\left(|\psi\rangle\langle\psi|\otimes|\phi\rangle\langle\phi|\right)-\hat{\omega})\bigr]\bigr|\le
\bigl\|\Phi_S\left(|\psi\rangle\langle\psi|\otimes|\phi\rangle\langle\phi|\right)-\hat{\omega}\bigr\|_1\le \epsilon
\end{equation}
such that
\begin{equation}
\|\Phi_S\left(|\psi\rangle\langle\psi|\right)-\hat{\omega}\|_1\le 4\epsilon\sum_{k,l}\psi_k\psi_l\le 4\epsilon d_S.
\end{equation}

\subsection{Proof of Theorem \ref{->ETH}}
\label{main proof}
We first give the details of how to arrive at Eq. \eqref{old_lemma_2}. Denote the projector onto $\mathcal{H}_S\otimes\mathcal{H}_B^{\mathrm{eq}}$ by $\hat{P}$. Inserting a zero and using the triangle inequality yields
\begin{equation}
\left\|\hat{\tau}_n-\hat{\omega}\right\|_1
\le \left\|\Phi_S\left(|n\rangle\langle n|-\frac{\hat{P}|n\rangle\langle n|\hat{P}}{\langle n|\hat{P}|n\rangle}\right)\right\|_1+\left\|\Phi_S\left(\frac{\hat{P}|n\rangle\langle n|\hat{P}}{\langle n|\hat{P}|n\rangle}\right)-\hat{\omega}\right\|_1\;.
\end{equation}
Making use of the contractivity of the trace norm for the first term and the assumptions of Lemma \ref{therm} for the second term, we have
\begin{equation}
\left\|\hat{\tau}_n-\hat{\omega}\right\|_1
\le \left\||n\rangle\langle n|-\frac{\hat{P}|n\rangle\langle n|\hat{P}}{\langle n|\hat{P}|n\rangle}\right\|_1+4d_S\epsilon=2\sqrt{\langle n|\hat{Q}|n\rangle}+4d_S\epsilon\;,\label{aa}
\end{equation}
where in the second step we have derived the trace norm with an explicit computation of the eigenvalues.

Now let $\hat{H}_B=\sum_{k}e_k|k\rangle\langle k|$ and $\hat{Q}=\sum_{k\notin\mathcal{H}_B^{\text{eq}}}|k\rangle\langle k|$. Then
\begin{equation}
\min_{k\notin \mathcal{H}_B^{\text{eq}}}(e_k-E_n)^2\hat{Q}\le \sum_{k\notin\mathcal{H}_B^{\text{eq}}}(e_k-E_n)^2|k\rangle\langle k|
=\hat{Q}\sum_{k}(e_k-E_n)^2|k\rangle\langle k|
=\hat{Q}\bigl(\hat{H}_B-E_n\bigr)^2.
\end{equation}
Hence, by the Cauchy--Schwarz inequality
\begin{eqnarray}\label{bb}
\min_{k\notin \mathcal{H}_B^{\text{eq}}}(e_k-E_n)^2\langle n|\hat{Q}|n\rangle &\le& \sqrt{\langle n|\hat{Q}|n\rangle\langle n|\bigl(\hat{H}_B-E_n\bigr)^4|n\rangle}=\nonumber\\
&=&\sqrt{\langle n|\hat{Q}|n\rangle\langle n|\bigl(\hat{H}_B-\hat{H}\bigr)^4|n\rangle}\le\nonumber\\
&\le& \sqrt{\langle n|\hat{Q}|n\rangle}\|\hat{H}_B-\hat{H}\|^2 =\sqrt{\langle n|\hat{Q}|n\rangle}\|\hat{H}_C\|^2.
\end{eqnarray}
With $\mathcal{H}_B^{\text{eq}}=\mathcal{H}_B(E,\Delta_B)=\text{span}\{|k\rangle : |e_k-E|\le \Delta_B\}$, we have
\begin{equation}
\min_{k\notin \mathcal{H}_B^{\text{eq}}}(e_k-E_n)^2=\min_{k:|e_k-E|>\Delta_B}(e_k-E_n)^2\;,
\end{equation}
and, combining Eqs. \eqref{aa},\eqref{bb}, we have that if  $\mathcal{H}_B^{\text{eq}}=\mathcal{H}_B(E,\Delta_B)$ induces thermalization to a state $\hat{\omega}$ with precision $\epsilon$ then for all $n$ with $|E_n-E|\le \Delta_B/2$
\begin{equation}
\left\|\hat{\tau}_n-\hat{\omega}\right\|_1\le \frac{2\|\hat{H}_C\|^2}{\min_{|e-E|>\Delta_B}(e-E_n)^2}+4d_S\epsilon\le \frac{8\|\hat{H}_C\|^2}{\Delta_B^2}+4d_S\epsilon.
\end{equation}

If the bath is ideal in the energy range $\mathcal{E}_B$ with inverse temperature $\beta(E)$ then for any $\epsilon$, $\Delta_B$ with
\begin{equation}
\label{cc}
k\,{\beta(E)}^2\,\Delta_B\,\|\hat{H}_C\|\le \epsilon\, C(\beta(E))
\end{equation}
and any $E\in\mathcal{E}_B$ we have that $\mathcal{H}_B(E,\Delta_B)$ induces thermalization to the state $\hat{\omega}(\beta(E))$ with precision $\epsilon$. Hence, setting $\epsilon$ such that we have equality in Eq. \eqref{cc}
 and letting $E\in\mathcal{E}_B$ and $n$ such that $|E_n-E|\le \Delta_B/2$, we have
\begin{equation}
\left\|\hat{\tau}_n-\hat{\omega}\right\|_1\le 4\|\hat{H}_C\|\left(\frac{2\|\hat{H}_C\|}{\Delta_B^2}+d_S\frac{k\,{\beta(E)}^2\,\Delta_B\,}{C(\beta(E))}\right),
\end{equation}
which is minimized by
\begin{equation}
\Delta_B^3=\frac{4\|\hat{H}_C\|C(\beta(E))}{d_Sk\,{\beta(E)}^2}\;.
\end{equation}
Hence, if the bath is ideal in the energy range $\mathcal{E}_B$ then for any $E_n,E_m\in\mathcal{E}_B$ with (we set $E=(E_n+E_m)/2$)
\begin{equation}
|E_n/2-E_m/2|=|E_{n/m}-E|\le  \Delta_B/2:=\Delta=2\|\hat{H}_C\|\sqrt{\frac{3}{\epsilon_{ETH}}}\;,
\end{equation}
 we have
\begin{equation}
\left\|\hat{\tau}_{n/m}-\hat{\omega}\right\|_1\le 24\|\hat{H}_C\|^{2}/\Delta_B^{2}=6\left(\frac{2\|\hat{H}_C\|^2d_Sk\,{\beta(E)}^2}{C(\beta(E))}\right)^{2/3}=:\epsilon_{ETH}/2\;,
\end{equation}
and finally
\begin{equation}
\|\hat{\tau}_m-\hat{\tau}_n\|_1\le \left\|\hat{\tau}_m-\hat{\omega}\right\|_1+\left\|\hat{\omega}-\hat{\tau}_n\right\|_1\le\epsilon_{ETH}\;.
\end{equation}

\section{Conclusion}\label{secceth}
The Eigenstate Thermalization Hypothesis has been central to much of the ongoing discussion
concerning the relaxation of open quantum systems to fixed equilibrium states.
Its role as a sufficient condition for thermalization, which we reviewed in Proposition \ref{ETH->th}, is well established
and has been repeatedly remarked on in several past contributions.
By proving that, conversely, an ideal heat bath must necessarily interact with the system with
a Hamiltonian fulfilling the ETH we have, in a precise and rigorous sense, revealed the full role
such a condition has to play. This result rests on a definition of an ideal bath which is rigorous and yet
broad enough to encompass all practically relevant instances,
and hence sheds considerable light on the very general mechanisms that let open quantum systems thermalize.

\chapter{A universal limit for testing quantum superpositions}
\label{chsuperpos}
In this Chapter we prove that any measurement able to distinguish a coherent superposition of two wavepackets of a massive or charged particle from the corresponding incoherent statistical mixture must require a minimum time.
For a charged particle this bound can be ascribed to the electromagnetic radiation that is unavoidably emitted during the measurement.
Then, for a massive particle this bound provides an indirect evidence for the existence of quantum gravitational radiation.

The Chapter is based on
\begin{enumerate}
\item[\cite{mari2015experiments}] A.~Mari, G.~De~Palma, and V.~Giovannetti, ``Experiments testing macroscopic quantum superpositions must be slow,'' \emph{Scientific Reports}, vol.~6, p. 22777, 2016.\\ {\small\url{http://www.nature.com/articles/srep22777}}
\end{enumerate}
\section{Introduction}

The existence of coherent superpositions is a fundamental postulate of quantum mechanics but, apparently, implies very counterintuitive consequences when extended to macroscopic systems. This problem, already pointed out since the beginning of quantum theory through the famous Schr\"odinger cat paradox \cite{wheeler2014quantum}, has been the subject of a large scientific debate which is still open and very active.

Nowadays there is no doubt about the existence of quantum superpositions.
Indeed this effect has been demonstrated in a number of experiments involving microscopic systems (photons \cite{taylor1909interference,dopfer1998two}, electrons \cite{donati1973experiment, tonomura1989demonstration}, neutrons \cite{zeilinger1988single}, atoms \cite{anderson1998macroscopic, monroe1996schrodinger}, molecules \cite{arndt1999wave, eibenberger2013matter}, {\it etc.}).
However, at least in principle, the standard theory of quantum mechanics is valid at any scale and does not put any limit on the size of the system:  if you can delocalize a molecule then nothing should forbid you to delocalize a cat, apart from technical difficulties. Such difficulties are usually associated with the impossibility of isolating the system from its environment, because it is well known that any weak interaction changing the state of the environment is sufficient to destroy the initial coherence of the system.

We are interested in the ideal situation in which we have a macroscopic mass or a macroscopic charge perfectly isolated from the environment and prepared in a quantum superposition of two spatially separated states. Without using  any speculative theory of quantum gravity or sophisticated tools of quantum field theory, we propose a simple thought experiment based on  particles interacting via semiclassical forces. Surprisingly a simple consistency argument with relativistic causality is enough to obtain a fundamental result which, being related to gravitational and electric fields, indirectly tells us something about quantum gravity and quantum field theory.

The result is the following: assuming that a macroscopic mass $m$ is prepared in a superposition of two states separated by a distance $d$, then any experiment discriminating the coherent superposition from a classical incoherent mixture requires a minimum time $T \propto m \, d$, proportional to the mass and the separation distance. Analogously for a quantum superposition of a macroscopic charge $q$, such minimum time is proportional to the associated electric dipole  $T \propto q\, d$. In a nutshell, experiments testing macroscopic superpositions are possible in principle, but they need to be slow.
For common experiments  involving systems below the Planck mass and the Planck charge this limitation is irrelevant, however such time can become very important  at macroscopic scales. As an extreme example, if the center of mass of the Earth were in a quantum superposition with a separation distance of one micrometer, according to our result  one would need a time equal to the age of the universe in order to distinguish this state from
a classical statistical mixture. Clearly this limitation suggests that at sufficiently macroscopic scales quantum mechanics can be safely replaced by classical statistical mechanics without noticing the difference.

 The fact that large gravitational or electromagnetic fields can be a limitation for the observation of quantum superpositions is not a new idea. In the past decades, several models of spontaneous localization \cite{penrose1996gravity, ghirardi1986unified, diosi1989models,karolyhazy1966gravitation, bassi2013models} have been proposed which, going beyond the standard theory of quantum mechanics,  postulate the existence of a gravity induced collapse at macroscopic scales. Remaining within the domain of standard quantum mechanics, the loss of coherence in interference experiments due to the emission of electromagnetic radiation has been already studied in the literature \cite{breuer2001destruction,baym2009two}. Similarly, the interaction of a massive particle with gravitational waves \cite{blencowe2013effective,suzuki2015environmental,jaekel2006quantum} and the dephasing effect of time dilation on internal degrees of freedom \cite{pikovski2013universal} have been considered as possible origins of quantum decoherence.

  For what concerns our thought experiment, a similar setup can be found in the literature where the interference pattern of an electron passing through a double slit is destroyed by a distant measurement of its electric field. This thought experiment can be traced back to Bohr as quoted in \cite{baym2009two},  was discussed by Hardy interviewed in \cite{schlosshauer2011elegance} and appears as an exercise in the book by Aharanov and Rohrlich \cite{aharonov2008quantum}. Recently different experiments involving interacting test particles have been proposed in order to discriminate the quantum nature of the gravitational field from a potentially classical description \cite{kafri2014classical,bahrami2015gravity}, while some limitations that relativistic causality imposes to the possible measurements in quantum field theory have been investigated in \cite{benincasa2014quantum}.

Our original contribution is that, imposing the consistency with relativistic causality, our thought experiment  allows the derivation of a fundamental minimum time which is valid for {\it any} possible experiment involving macroscopic superpositions. In this sense our bounds represent universal limitations having a role analogous to the Heisenberg uncertainty principle in quantum mechanics. For this reason, while our results could be observable in advanced and specific experimental setups \cite{romero2011large, marshall2003towards, arndt2014testing, pikovski2012probing, bawaj2014probing,schnabel2015einstein,scala2013matter,wan2015tolerance}, their main contribution  is probably a better understanding of the theory of quantum mechanics at macroscopic scales.
For charged particles we propose two different measurements for testing the coherence.
The first requires to accelerate the charge, and our bound on the discrimination time is due to the entanglement with the emitted photons.
In the second, the bound can be instead ascribed to the presence of the vacuum fluctuations of the electromagnetic field.
On the other hand we also find an equivalent bound associated to quantum superposition of large masses.
What is the origin of this limitation?
The analogy suggests that the validity of our bound could be interpreted as an indirect evidence for the existence of  quantum fluctuations of the gravitational field, and of quantum gravitational radiation.

The Chapter is structured as follows.
In Section \ref{experiment} we propose our thought experiment which suggests a minimum discrimination time for any macroscopic quantum superposition. In Section \ref{time} we derive a quantitative bound.
In Sections \ref{ent} and \ref{QED} we check the consistency of our results with an explicit analysis of a charged particle interacting with the electromagnetic field.
Here we propose two different measurements, and show that in both cases they are able to check the coherence only if their duration satisfies our fundamental limit.
Finally, we conclude in Section \ref{csup}

\section{Thought  experiment}\label{experiment}

Let us consider the thought experiment represented in Fig. \ref{gedanken}, and described by the following protocol. The protocol can be equivalently applied to quantum superpositions of large masses or large charges.

\begin{figure}
\includegraphics[width=\textwidth]{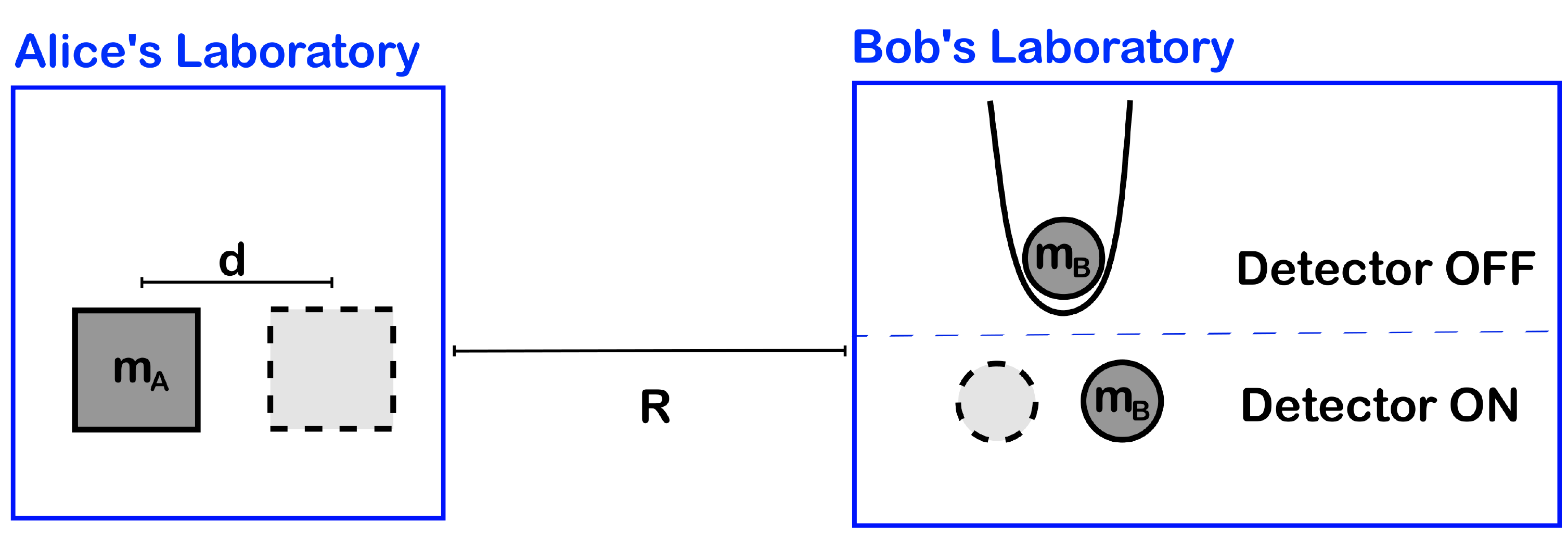}
\caption{Picture of the thought experiment. Alice prepares a macroscopic mass in a quantum spatial superposition. Bob has at disposal a test mass prepared in the ground state of a narrow harmonic trap. Bob can send one bit of information to Alice by choosing between two alternatives: doing nothing (detector {\it off}) or removing the trap (detector {\it on}). Once a time $T_{\mathrm{B}}$ necessary to generate entanglement (if the detector is {\it on}) has passed,
Alice performs a measurement in a time $T_{\mathrm{A}}$ in order to discriminate the coherent superposition from a classical incoherent mixture. In this way, by knowing whether the detector is {\it on} or {\it off}, she gets the information sent by Bob in a time $T_{\mathrm{A}}+T_{\mathrm{B}}$. A completely  equivalent protocol can be obtained by replacing massive particles with charged particles.   } \label{gedanken}
\end{figure}

{\paragraph{Protocol of the thought experiment}
\begin{enumerate}
\item Alice has at disposal, in her laboratory, a massive/charged particle in a macroscopic superposition of  a ``left'' and  a ``right'' state:
\begin{equation}\label{LR}
 | \psi \rangle =\frac{ | {\rm L} \rangle + | {\rm R}\rangle}{\sqrt{2}}.
\end{equation}
The wave functions of the two states are $\langle x | {\rm L}  \rangle = \phi(x)$ and $\langle x | {\rm R}  \rangle = \phi(x-d)$, where  $d>0$ is the relative separation of the superposition.
\item Bob is in a laboratory at a distance $R$ from Alice and containing a massive / charged test particle prepared in the ground state of a very narrow harmonic trap.  Bob freely chooses between two options: doing nothing ( detector = {\it off}), or removing the trap (detector = {\it on}). In the first case the state of test particle remains unchanged while, in the second case,  the dynamics is sensitive to the local Newton / Coulomb field generated by Alice's particle and the global state will eventually become entangled. If the detector is {\it off}, the initial quantum superposition is preserved, while if the detector is {\it on} the generation of entanglement eventually destroys the coherence of the reduced state of Alice.
\item
Alice performs an arbitrary experiment in her laboratory with the task of discriminating the coherent superposition from a statistical incoherent mixture of the two states $|{\rm L}  \rangle$ and $|{\rm R}  \rangle$.
For example, she could make an interference experiment, a measurement of the velocity, or she could measure the gravitational / electromagnetic field (or some spatial average of it) in any point within her laboratory.
The specific details of the experiment are irrelevant. Depending on the result of the experiment, Alice deduces the choice of  Bob ({\it i.e.} if the detector was {\it on} or {\it off}).
\end{enumerate}}

Clearly,  the previous thought experiment constitutes a communication protocol in which Bob can send information to Alice.
Moreover, for a large enough mass $m$ or for a large enough charge $q$, the test particle of Bob can become entangled with Alice's particle in an arbitrarily short time. But then, apparently, Bob can send a message to Alice faster than light  violating the fundamental principle of relativistic causality. How can we solve this  paradox?
Let us make a list of possible solutions:
\begin{enumerate}
\item[a)] It is impossible to prepare a macroscopic superposition state or to preserve its coherence because of some unknown intrinsic effect lying outside the theory of quantum mechanics.

\item[b)]
Once the superposition is created, the particle is entangled with its own static gravitational / electric field, and
Alice's local state is always mixed. Then she cannot distinguish a coherent superposition from the corresponding incoherent statistical mixture with an experiment inside her laboratory, since the probability distribution of the outcomes of any measurement she can perform does not depend on Bob's choice.
We notice that, if this were the solution, the protocol not only would not allow for superluminal communication, but it would not allow for communication at all.

\item[c)]
Alice needs a minimum time to locally discriminate whether the superposition is coherent or not.
More quantitatively we have that, if Bob is able to generate entanglement in a time $T_{\mathrm{B}}$ and if $T_{\mathrm{A}}$ is the time necessary to Alice for performing her discrimination measurement, then relativistic causality requires
\begin{equation}
T_{\mathrm{A}}+T_{\mathrm{B}} \ge \frac{R}{c}\;. \label{causality}
\end{equation}
Therefore, whenever entanglement can be generated in a time $T_{\mathrm{B}} \le R/c$, we get a non-trivial lower bound on $T_{\mathrm{A}}$. Here we are neglecting the time necessary to Bob for switching from {\it off} to {\it on} the detector, {\it i.e.} for removing the trap confining the particle.  In Section \ref{secDST} we justify the validity of this approximation.
\end{enumerate}

Anomalous decoherence effects \cite{penrose1996gravity, ghirardi1986unified, diosi1989models,karolyhazy1966gravitation, bassi2013models} (as {\it e.g.}\ the Penrose spontaneous localization model) are important open problems in the foundations of quantum mechanics and cannot be excluded  a priori. Up to now however their existence was never experimentally demonstrated and therefore, instead of closing our discussion by directly invoking point a), we try to remain within the framework of quantum mechanics and check if points b) or c) are plausible solutions.

The reader who is familiar with the field of open quantum systems may find the option b) very natural. In standard non-relativistic quantum mechanics, the formation of entanglement between a system and its environment is widely accepted as the origin of any observed form of decoherence. Indeed this approach has also been used to explain the decoherence of moving charged particles, mainly focusing to the double-slit interference experiment~\cite{breuer2001destruction}.
It has been recognized by previous works that in a double-slit  experiment there is a limit to the charge of the particle above which photons are emitted due to the acceleration associated to the interference paths~\cite{breuer2001destruction,baym2009two}. For large charges then, the particle entangles with the emitted photons and this effect can destroy the  interference pattern.
The reader can then notice that also in our case the particle needs to be accelerated when it is put in the superposition~\eqref{LR}, and if it is charged it will radiate and can become entangled with the emitted photons.
Similarly, an accelerated mass generates gravitational radiation and can become entangled with the emitted gravitons.
However, in Section \ref{rad}
we prove that, if the accelerations are slow enough, the resulting quantum state of the electromagnetic field has almost overlap one with the vacuum, and therefore the particle does not get entangled with the emitted photons because no photons at all are emitted.
The same argument can be repeated for the gravitational radiation in the linear approximation.

The reader may now think that the particle in the superposition~\eqref{LR} is entangled at least with its static Coulomb electric field.
However, as we show in details in Section \ref{dof}, the static Coulomb electric field is not a propagating degree of freedom (it has zero frequency) and vanishes in absence of electric charges.
In the Coulomb gauge, the Hilbert space associated to the static field is the same Hilbert space of the particle, whose reduced state remains pure.
Indeed, the quantum operator associated to the electric field contains explicitly the operator associated to the position of the particle, and then the expectation value of the electric field can be non-vanishing and depends on the state of the particle even if all the propagating modes of radiation are in their vacuum state and there is no entanglement.

In other gauges entanglement can be present. However, contrarily to what usually happens, the presence of entanglement by itself does not prevent Alice to distinguish a coherent superposition from a statistical mixture.
Indeed, as we will show later, Alice can exploit an internal degree of freedom of the particle to remove this entanglement with a local operation, and then perform an experiment only on the internal degree of freedom to test the coherence. The operation consists in bringing the right wavepacket of the superposition $|R\rangle$ to the left position $|L\rangle$, while leaving the left wavepacket $|L\rangle$ untouched. Independently on the gauge, the particle will then have a definite position and also its own static Coulomb field will be definite. This proves that the protocol allows for communication, and the solution b) is wrong in the sense that if Alice could perform in a sufficiently short time the above local operation, superluminal communication would still be possible.

The reasonable solution to the paradox appears then to be the final option c).
Basically, even if the state of the particle is pure and coherent, Alice cannot instantaneously test this fact with a local experiment in her laboratory.
We notice that the hypothesis c) is weaker than hypothesis a), and the two can logically coexist. Clearly if a) is valid Alice cannot make any useful experiment because decoherence has already happened.
Therefore we conclude that the weaker and most general solution to the paradox is the fundamental limitation exposed in point c).
In the last part of this Chapter, we propose two different measurements and show that they are both consistent with this limitation.

\section{Minimum discrimination time}\label{time}

In Section \ref{experiment} we argued that relativistic causality requires a fundamental limitation: Alice's discrimination experiment must be slow. But how slow it has to be? By construction  any thought experiment of the class described before gives a lower bound on the discrimination time $T_{\mathrm{A}}$ whenever $T_{\mathrm{B}} \le R/c$. In what follows we are going to optimize over this class of experiments. We anticipate that this approach leads to the following two bounds which constitute the main results of this Chapter.

{\paragraph{(i) Minimum discrimination time for quantum superpositions of large masses}

Given a particle of mass $m$ prepared in a macroscopic quantum superposition of two states separated by a distance $d$, it is impossible to locally discriminate the coherent superposition from an incoherent mixture in a time (up to a multiplicative numerical constant) less than
\begin{equation}
T \simeq \frac{m}{m_{\mathrm{P}}}\,\frac{d}{c}\;,   \label{Tmass}
\end{equation}
where $m_{\rm P} $ is the Planck mass
\begin{equation}
m_{\rm P} = \sqrt{\frac{\hbar c}{G}}  \simeq  2.18 \times 10^{-8} {\rm \; kg}\;.
\end{equation}

\paragraph{(ii) Minimum discrimination time for quantum superpositions of large charges}

 Given a particle of charge $q$ prepared in a macroscopic quantum superposition of two states separated by a distance $d$, it is impossible to locally discriminate the coherent superposition from an incoherent mixture in a time (up to a multiplicative numerical constant) less than
\begin{equation}
T \simeq \frac{q}{q_{\mathrm{P}}}\,\frac{d}{c}\;,  \label{Tcharge}
\end{equation}
where $q_{\rm P} $ is the Planck charge
\begin{equation}
q_{\rm P} = \sqrt{4 \pi\epsilon_0 \hbar c}  \simeq  11.7\;  e \simeq 1.88 \times 10^{-18} {\rm \; C}\;.
\end{equation}}

Before giving a derivation of the previous results,  we stress that both the bounds \eqref{Tmass} and \eqref{Tcharge} are relevant only for $q \ge q_{\rm P} $ and $m \ge m_{\rm P} $. Indeed for systems below the Planck mass / charge, even if the bounds are formally correct, their meaning is trivial since any measurement of the state must at least interact with both parts of the superposition and this process requires at least a time $d/c$.

\subsection{Dynamics of Bob's test mass}

Let us  first focus on the superpositions of massive particles and give a proof of the bound \eqref{Tmass} (the proof of \eqref{Tcharge} is analogous and will be given later).
It is easy to check that, for a sufficiently narrow trap (detector = {\it off}) the test mass of Bob is insensitive to the gravitational force of Alice's particle and remains stable in its ground state (see Section \ref{strength}
for details).
On the contrary, if the trap is removed,  the test mass will experience a different force depending on the position of Alice's particle. The two corresponding Hamiltonians are:
\begin{equation}
\hat{H}_{\rm L}= \frac{\hat{P}^2}{2m_{\mathrm{B}}} - F_{\rm L} \hat{X}\;,   \qquad \hat{H}_{\rm R}=\frac{\hat{P}^2}{2m_{\mathrm{B}}} -F_{\rm R} \hat{X}\;,
\end{equation}
where $m_{\mathrm{B}}$ is the mass of Bob's particle, and $F_{\rm L}$ and $F_{\rm R}$ are the different gravitational forces associated to the ``left" and ``right" positions of  Alice's particle. Their difference
\begin{equation}
\Delta F=  F_{\rm L}-F_{\rm R} \simeq \frac{G\, m_{\mathrm{A}}\, m_{\mathrm{B}}\,  d}{ R^3}\;, \label{deltaFmass}
\end{equation}
where $m_{\mathrm{A}}$ is the mass of Alice's particle, determines the dipole force sensitivity that Bob should be able to detect in order to induce the decoherence of the reduced state possessed by Alice.

Given the initial state of the test mass $|\phi\rangle$, it is easy to check that entanglement can be generated in a time $t$ whenever the different time evolutions associated to $\hat{H}_{\rm L}$ and $\hat{H}_{\rm R}$ drive the test mass into almost orthogonal states, {\it i.e.}
\begin{equation}
\left| \langle \phi | e^{\frac{i}{\hbar} \hat{H}_{\rm R} t }  e^{-\frac{i}{\hbar} \hat{H}_{\rm L} t}    | \phi \rangle \right| \ll 1 . \label{echo}
\end{equation}
Such time depends on the initial state $|\phi \rangle$ and on the Loschmidt echo operator
\begin{equation}\label{Losch}
 \hat{L}(t)=e^{\frac{i}{\hbar} \hat{H}_{\mathrm{R}} t}  e^{-\frac{i}{\hbar} \hat{H}_{\mathrm{L}} t},
\end{equation}
which after two iterations of the Baker-Campbell-Hausdorff formula can be written as
\begin{equation}
 \hat{L}(t)=\exp\left[ i\,\frac{\Delta F}{\hbar} \left(\hat{X}\,t + \frac{\hat{P}}{2m_{\mathrm{B}}}\,t^2   +  \frac{F_1+F_2}{12 m_{\mathrm{B}}}\,  t^3 \right)\right]\;.
\end{equation}
Neglecting the complex phase factor  $e ^{\frac{i}{\hbar}\, \Delta F \, \frac{F_1+F_2}{12 m_{\mathrm{B}}}\,  t^3} $,  $\hat{L}(t)$ is essentially a
phase--space displacement operator of the form  $e ^{\frac{i}{\hbar} (\delta_x  \hat{P} - \delta_p \hat{X})}$,
where
\begin{eqnarray}
\delta_x&=& \frac{\Delta F\, t^2 }{ 2 m_{\mathrm{B}}} ,  \label{deltax} \\
\delta_p&=&-\Delta F \,t\label{deltap}
\end{eqnarray}
are the shifts in position and momentum, respectively.

Since the initial state $|\phi\rangle $ of the test mass is the ground state of a very narrow harmonic trap, it will correspond to a localized Gaussian wavepacket which is very noisy in momentum and therefore we may focus only on the position shift  \eqref{deltax} and compare it with the position uncertainty $\Delta X$ of the initial state (see Section \ref{strength}
 for a detailed proof). We can argue that entanglement is generated only after a time $t=T_{\mathrm{B}}$ such that
\begin{equation}
\frac{\delta x}{\Delta X}= \frac{\Delta F\, T_{\mathrm{B}}^2 }{ 2 m_{\mathrm{B}} \Delta X} \simeq 1. \label{ratio}
\end{equation}

Apparently Bob can generate entanglement arbitrarily quickly by reducing the position uncertainty $\Delta X$. However there is a fundamental limit to the localization precision which is set by the Planck length.
It is widely accepted that no reasonable experiment can overcome this limit~\cite{salecker1958quantum,mead1964possible,hossenfelder2013minimal}:
\begin{equation}
\Delta X \ge l_{\rm P}= \sqrt{\frac{\hbar G}{c^3}}. \label{minDeltax}
\end{equation}
From Eq.\ \eqref{ratio}, substituting Eq. \eqref{deltaFmass} and using the minimum $\Delta X$ allowed by the constraint \eqref{minDeltax},  we get
\begin{equation}
\frac{\delta x}{\Delta X}=\frac{1}{2}\,\frac{m_{\mathrm{A}}}{m_{\mathrm{P}}}\,\frac{d\   c^{2}   T_{\mathrm{B}}^2}{R^3 }   \simeq 1. \label{ratio2}
\end{equation}
As we have explained in Section \ref{experiment}, relativistic causality implies the inequality \eqref{causality} involving Alice's measurement time $T_{\mathrm{A}}$ and the entanglement time $T_{\mathrm{B}}$. Such inequality provides a lower bound on $T_{\mathrm{A}}$ only if $T_{\mathrm{B}} < R/c$ while it gives no relevant information for $T_{\mathrm{B}} \ge R/c$. Therefore we parametrize $R$ in terms of $T_{\mathrm{B}}$ and a dimensionless parameter $\eta$:
\begin{equation}
 \eta= \frac{c\,T_{\mathrm{B}}}{R}\;, \qquad 0 \le \eta \le 1.
\end{equation}
Using this parametrization, from Eq.\ \eqref{ratio2}, we get

\begin{equation}
T_{\mathrm{B}} \simeq \frac{1}{2}\,\eta^3\,\frac{m_{\mathrm{A}}}{m_{\mathrm{P}}}\,  \frac{d}{ c }  .
\end{equation}
From the causality inequality \eqref{causality} we have
\begin{equation}
T_{\mathrm{A}} + T_{\mathrm{B}} \ge \frac{R}{c} \; \Longrightarrow \; T_{\mathrm{A}} \ge \frac{T_{\mathrm{B}}}{\eta} - T_{\mathrm{B}}= \frac{1}{2}\,\frac{m_{\mathrm{A}}}{m_{\mathrm{P}}}\,\frac{d}{ c }\,(\eta^2- \eta^3).
\end{equation}
Optimizing over $\eta$ we get

\begin{equation}
T_{\mathrm{A}} \ge  \frac{2}{27}\,\frac{m_{\mathrm{A}}}{m_{\mathrm{P}}}\,\frac{d}{c}\;.
\end{equation}
This is, up to a multiplicative numerical constant, the bound given in Eq. \eqref{Tmass}.

\subsection{Dynamics of Bob's test charge}

The calculation in the case in which we have a test charge instead of a test mass is almost identical. The only difference is that Eq. \eqref{deltaFmass} is replaced by the Coulomb counterpart
\begin{equation}
\Delta F=  F_{\rm L}-F_{\rm R} \simeq \frac{q_{\mathrm{A}}\,q_{\mathrm{B}}\,  d}{ 4 \pi \epsilon_0 R^3}\;, \label{deltaFcharge}
\end{equation}
where $q_{\mathrm{A}}$ and $q_{\mathrm{B}}$ are the charges of Alice's and Bob's particles, respectively, while the localization limit \eqref{minDeltax} is replaced by Bob's particle charge radius~\cite{weinberg1995quantum}
\begin{equation}
\Delta X \ge \frac{q_{\mathrm{B}}}{q_{\mathrm{P}}} \frac{\hbar}{m_{\mathrm{B}} c}. \label{minDeltaxCharge}
\end{equation}
More details on the minimum localization of a macroscopic charge are given in Section \ref{loc}.
From Eq.s \eqref{deltaFcharge} and \eqref{minDeltaxCharge}, repeating exactly the previous argument one finds
\begin{equation}\label{TA}
T_{\mathrm{A}} \ge  \frac{2}{27}\,\frac{q_{\mathrm{A}}}{q_{\mathrm{P}}} \frac{d}{c}\;,
\end{equation}
which is, up to a multiplicative numerical constant, the bound given in Eq. \eqref{Tmass}.

\section{Minimum time from entanglement with radiation}\label{ent}

In Section \ref{time} we have proved that relativistic causality requires that any measurement Alice can perform to test the coherence of her superposition must require a minimum time, depending on the mass or charge of her particle.
Here we focus on the electromagnetic case, and propose two different measurements to check the coherence of the superposition.

The first is a simplified version of the experiment proposed in Ref.~\cite{scala2013matter,wan2015tolerance}.
Let Alice's particle have spin $\frac{1}{2}$, and let us suppose that her superposition is entangled with the spin, i.e.~\eqref{LR} is replaced by
\begin{equation}
|\psi\rangle=\frac{|L\rangle|\uparrow\rangle+|R\rangle|\downarrow\rangle}{\sqrt{2}}\;.
\end{equation}
Let now Alice apply a spin-dependent force, that vanishes if the spin is up, while brings the particle from $|R\rangle$ to $|L\rangle$ if the spin is down.
We notice that, at the end of this operation, the charge has a well-defined position, and also its own static Coulomb electric field is well-defined. In this way, the
original macroscopic superposition has been reduced to a microscopic spin superposition,  on which an instantaneous discrimination experiment can be performed.

If Bob does not perform the measurement, the final state of Alice's particle is
\begin{equation}
\rho_A=|L\rangle\langle L|\otimes|+\rangle\langle +|\;,
\end{equation}
where
\begin{equation}
|+\rangle=\frac{|\uparrow\rangle+|\downarrow\rangle}{\sqrt{2}}\;.
\end{equation}
On the contrary, if Bob induces a collapse of the wavefunction, the final state is
\begin{equation}
\rho_A'=|L\rangle\langle L|\otimes\frac{|\uparrow\rangle\langle\uparrow|+|\downarrow\rangle\langle\downarrow|}{2}\;,
\end{equation}
and Alice can test the coherence measuring the spin.

However, this protocol requires the particle to be accelerated if it has spin down and needs to be moved from $|R\rangle$ to $|L\rangle$.
Then it will radiate, and it can entangle with the emitted photons.
A semiclassical computation of the emitted radiation can be found in Section \ref{rad}.
There we show that such radiation is indistinguishable from the vacuum state of the field only if the motion lasts for at least the time required by our previous bound~\eqref{Tcharge}.

\section{Minimum time from quantum vacuum fluctuations}\label{QED}

In Section \ref{ent} we have provided an example of experiment able to test the coherence.
The protocol requires to accelerate the charge, and if its duration is too short, the charge radiates and entangles with the emitted photons.
The reader could now think that the bound on the time could be beaten with an experiment that does not involve accelerations.
An example of such experiment could seem to be a measurement of the canonical momentum of Alice's particle.
In this Section, we first show that this measurement is indeed able to test the coherence of the superposition
and then we estimate the minimum time necessary to perform it.

The canonical momentum of a charged particle coupled to the electromagnetic field is not gauge invariant, and therefore cannot be directly measured.
Alice can instead measure directly the velocity of her particle, that is gauge invariant.
However, its relation with the canonical momentum now contains the vector potential.
Even if there is no external electromagnetic field, the latter is a quantum-mechanical entity, and is subject to quantum vacuum fluctuations.
Then, the fluctuations of the vector potential enter in the relation between velocity and momentum.
If Alice is not able to measure the field outside her laboratory, she can measure only the velocity of her particle (see Section \ref{dof}
for a detailed discussion), and can reconstruct its canonical momentum only if the fluctuations are small.
We show that in an instantaneous measurement these fluctuations are actually infinite.
However, if Alice measures the average of the velocity over a time $T$, they decrease as $1/T^2$, and can be neglected if $T$ is large enough.
This minimum time is found consistent with the general bound given in Eq.\ \eqref{Tcharge}.

In order to simplify our formulas, in this Section and in the related Sections in the Appendix \ref{appsup} we put as in~\cite{weinberg1995quantum}
\begin{equation}
\hbar=c=\epsilon_0=\mu_0=1\;,\qquad q_{\mathrm{P}}^2=4\pi\;.
\end{equation}
These constants will be put back into the final result.

\subsection{The canonical momentum as a test for coherence}

Let us first show that Alice can test the coherence with a measurement of the canonical momentum of her particle.

Let the particle be in the coherent superposition~\eqref{LR} of two identical wavepackets centered in different points, with wavefunction
\begin{equation}\label{psiS}
\psi(\mathbf{x})=\frac{\phi(\mathbf{x})+e^{i\varphi}\;\phi(\mathbf{x}-\mathbf{d})}{\sqrt{2}}\;,
\end{equation}
where $\varphi$ is an arbitrary phase.

The probability distribution of the canonical momentum $\hat{P}$ is the modulus square of the Fourier transform of the wavefunction:
\begin{equation}\label{osc}
\frac{\left|\psi(\mathbf{k})\right|^2}{(2\pi)^3}=2\cos^2\left(\frac{\mathbf{k}\cdot\mathbf{d}-\varphi}{2}\right)\;\;\frac{\left|\phi(\mathbf{k})\right|^2}{(2\pi)^3}\;,
%p\left(d^3k\right)=\left|\psi(\mathbf{k})\right|^2\;\frac{d^3k}{(2\pi)^3}=2\cos^2\left(\frac{\mathbf{k}\cdot\mathbf{d}-\varphi}{2}\right)\;\left|\phi(\mathbf{k})\right|^2\;\frac{d^3k}{(2\pi)^3}\;,
\end{equation}
and she can test the coherence of the superposition from the interference pattern in momentum space generated by the cosine.
Indeed, an incoherent statistical mixture would be associated to the probability distribution $\left.\left|\phi(\mathbf{k})\right|^2\right/(2\pi)^3\,$, where the cosine squared is replaced by $1/2$, its average over the phase $\varphi$.

We notice from~\eqref{osc} that, in order to be actually able to test the coherence, Alice must measure the canonical momentum with a precision of at least
\begin{equation}\label{prec}
\Delta P\lesssim\frac{\pi}{d}\;,
\end{equation}
where $d=|\mathbf{d}|$.
This precision increases with the separation of the wavepackets, e.g. for $d=1\,\mathrm{m}$, it is $\Delta P\lesssim 10^{-34}\,\mathrm{kg}\cdot \mathrm{m}/\mathrm{s}$.

\subsection{Quantum vacuum fluctuations and minimum time}

Let now Alice's particle carry an electric charge $q$.
We want to take into account the quantum vacuum fluctuations of the electromagnetic field, so quantum electrodynamics is required.
The global Hilbert space is then the tensor product of the Hilbert space of the particle $\mathcal{H}_{\mathrm{A}}$ with the Hilbert space of the field $\mathcal{H}_F$.
The reader can find in Section \ref{emf}
the details of the quantization.

The position and canonical momentum operators of Alice's particle $\hat{\mathbf{X}}$ and $\hat{\mathbf{P}}$ still act in the usual way on the particle Hilbert space alone, so that the argument of the previous Subsection remains unchanged.
The full interacting Hamiltonian of the particle and the electromagnetic field is
\begin{equation}\label{H}
\hat{H}=\frac{1}{2m}\left(\hat{\mathbf{P}}-q\;\hat{\mathbf{A}}\left(\hat{\mathbf{X}}\right)\right)^2+\hat{H}_F\;,
\end{equation}
where
\begin{equation}\label{AX}
\hat{A}^i\left(\hat{\mathbf{X}}\right)=\int\frac{\hat{a}^i(\mathbf{k})\;e^{i\mathbf{k}\cdot\hat{\mathbf{X}}}+\hat{a}^{i\dag}(\mathbf{k})\;e^{-i\mathbf{k}\cdot\hat{\mathbf{X}}}}{\sqrt{2|\mathbf{k}|}}\;\frac{d^3k}{(2\pi)^3}
\end{equation}
is the vector-potential operator $\hat{\mathbf{A}}(\mathbf{x})$
(see Eq. \eqref{Afrx})
with the coordinate $\mathbf{x}$ replaced with the position operator $\hat{\mathbf{X}}$, and $\hat{H}_F$ is the free Hamiltonian of the electromagnetic field defined in \eqref{HF}.

Due to the minimal-coupling substitution, the operator associated to the velocity of the particle is
\begin{equation}\label{Vi}
\hat{\mathbf{V}}\equiv i\left[\hat{H},\;\hat{\mathbf{X}}\right]=\frac{1}{m}\left(\hat{\mathbf{P}}-q\;\hat{\mathbf{A}}\left(\hat{\mathbf{X}}\right)\right)\;,
\end{equation}
that contains the operator vector-potential, and acts also on the Hilbert space of the field.
The canonical momentum can be reconstructed from the velocity with
\begin{equation}\label{PV}
\hat{\mathbf{P}}=m\;\hat{\mathbf{V}}+q\;\hat{\mathbf{A}}\left(\hat{\mathbf{X}}\right)
\end{equation}
if the second term in the right-hand side can be neglected.
With the help of the commutation relations
(see \eqref{CCRs}),
a direct computation of the variance of $\hat{\mathbf{A}}\left(\hat{\mathbf{X}}\right)$ on the vacuum state of the field gives
\begin{equation}\label{A2}
\langle0|{\hat{\mathbf{A}}\left(\hat{\mathbf{X}}\right)}^2|0\rangle=\left(\int\frac{1}{|\mathbf{k}|}\,\frac{d^3k}{(2\pi)^3}\right)\hat{\mathbb{I}}_{\mathrm{A}}\;,
\end{equation}
that has a quadratic divergence for $\mathbf{k}\to\infty$ due to the quantum vacuum fluctuations.
This divergence can be cured averaging the vector potential over time with a smooth function $\varphi(t)$.
We must then move to the Heisenberg picture, where operators explicitly depend on time, and we define it to coincide with the Schr\"odinger picture at $t=0$, the time at which Alice measures the velocity.
Since the divergence in~\eqref{A2} does not depend neither on the mass nor on the charge of Alice's particle and is proportional to the identity operator on the particle Hilbert space $\hat{\mathbb{I}}_{\mathrm{A}}$, it has nothing to do with the interaction of the particle with the field.
Then the leading contribution to the result can be computed evolving the field with the free Hamiltonian $\hat{H}_F$ only, i.e. with
\begin{equation}\label{AXt}
\hat{A}^i\left(\hat{\mathbf{X}},t\right)=\int\frac{\hat{a}^i(\mathbf{k})\;e^{i\left(\mathbf{k}\cdot\hat{\mathbf{X}}-|\mathbf{k}|t\right)}+\hat{a}^{i\dag}(\mathbf{k})\;e^{i\left(|\mathbf{k}|t-\mathbf{k}\cdot\hat{\mathbf{X}}\right)}}{\sqrt{2|\mathbf{k}|}}\;\frac{d^3k}{(2\pi)^3}\;.
\end{equation}
Defining the time-averaged vector potential as
\begin{equation}
\hat{\mathbf{A}}_{av}=\int\hat{\mathbf{A}}\left(\hat{\mathbf{X}},t\right)\;\varphi(t)\;dt\;,
\end{equation}
its variance over the vacuum state of the field is now
\begin{equation}
\langle0|{\hat{\mathbf{A}}_{av}}^2|0\rangle=\left(\frac{1}{2\pi^2}\int_0^\infty \left|\tilde{\varphi}(\omega)\right|^2\omega\;d\omega\right)\hat{\mathbb{I}}_{\mathrm{A}}\;,
\end{equation}
where
\begin{equation}
\tilde{\varphi}(\omega)=\int\varphi(t)\;e^{i\omega t}\;dt
\end{equation}
is the Fourier transform of $\varphi(t)$.
Taking as $\varphi(t)$ a normalized Gaussian function of width $T$ centered at $t=0$:
\begin{equation}
\varphi(t)=\frac{e^{-\frac{t^2}{2T^2}}}{\sqrt{2\pi}\;T}\;,
\end{equation}
we get as promised a finite result proportional to $1/T^2$:
\begin{equation}\label{DA2}
\langle0|{\hat{\mathbf{A}}_{av}}^2|0\rangle=\frac{\hat{\mathbb{I}}_{\mathrm{A}}}{4\pi^2 T^2}\;.
\end{equation}
Then, if Alice estimates one component of the canonical momentum (say the one along the $x$ axis) with the time average of the velocity taken with the function $\varphi(t)$, she commits an error of the order of
\begin{equation}\label{DeltaPf}
\Delta P\simeq \frac{q}{2\pi\sqrt{3}\;T}\;.
\end{equation}
Comparing~\eqref{DeltaPf} with the required precision to test the coherence~\eqref{prec}, the minimum time required is
\begin{equation}
T\gtrsim\frac{1}{\sqrt{3\pi^3}}\,\frac{q}{q_{\mathrm{P}}}\,\frac{d}{c}\simeq 0.10\,\frac{q}{q_{\mathrm{P}}}\,\frac{d}{c}\;,
\end{equation}
in agreement with the bound~\eqref{TA} imposed by relativistic causality alone.

\section{Conclusion}\label{csup}

In this Chapter we have studied the limitations that the gravitational and electric fields produced by a macroscopic particle impose  on quantum superposition experiments.  We have found that, in order to avoid a contradiction between quantum mechanics and relativistic causality, a minimum time is necessary in order to discriminate a coherent superposition from an incoherent statistical mixture. This discrimination time is proportional to the separation distance of the superposition and to the mass (or charge) of the particle.

In the same way as the Heisenberg uncertainty principle inspired the development of a complete theory of quantum mechanics,  our fundamental and quantitative bounds on the discrimination time can be useful for the development of current and future theories of quantum gravity.
Moreover, despite an experimental observation of our results clashes with the difficulty of preparing superpositions of masses above the Planck scale,  the current technological progress on highly massive quantum optomechanical and electromechanical systems provides a promising context \cite{romero2011large, marshall2003towards,arndt2014testing, pikovski2012probing, bawaj2014probing,schnabel2015einstein,scala2013matter,wan2015tolerance} for testing our predictions.

\chapter{Conclusions}\label{concl}
The main theme of this Thesis has been the transposition to Gaussian quantum information of the classical principle ``Gaussian channels have Gaussian optimizers''.
We have focused on the constrained minimum output entropy conjecture (Proposition \ref{CMOE}), stating that Gaussian thermal input states minimize the output von Neumann entropy of any gauge-covariant bosonic Gaussian channel.
This conjecture is necessary to determine the capacity region of the degraded quantum Gaussian broadcast channel \cite{yard2011quantum,savov2015classical,guha2007classicalproc,guha2007classical}, where a sender wants to communicate classical information to two receivers, and the triple trade-off region of the Gaussian quantum attenuator \cite{wilde2012quantum,wilde2012public,wilde2012information}.

In Chapter \ref{epi} we have proved the quantum Entropy Power Inequality, that provides an extremely tight lower bound to this minimum output entropy, resulting in almost optimal bound for the capacity region of the Gaussian broadcast channel.

In Chapter \ref{majorization} we have tackled the exact solution with a generalization of the Gaussian majorization conjecture \cite{giovannetti2015majorization,mari2014quantum} exploiting the notion of passivity.
A passive state is diagonal in the Hamiltonian eigenbasis and its eigenvalues decrease as the energy increases.
We have proved that for any one-mode gauge-covariant quantum Gaussian channel, the output generated by a passive state majorizes (i.e. it is less noisy than) the output generated by any other state with the same spectrum, and in particular it has a lower entropy.
Then, the solution to the constrained minimum output entropy problem has certainly to be found among passive states.
We have exploited this result in Chapter \ref{chepni}.
Here we have proved that Gaussian thermal input states minimize the output entropy of the one-mode Gaussian quantum attenuator for fixed input entropy, i.e. conjecture \ref{CMOE} for this channel.
The proof is based on the isoperimetric inequality \eqref{epnilogs}, whose multimode generalization implies conjecture \ref{CMOE} for the multimode attenuator.

The same ideas can be useful in any other entropic optimization problem involving quantum Gaussian channels.
The quantum capacity \cite{wilde2013quantum,holevo2013quantum} of a channel is the maximum number of qubits that can be faithfully sent per channel use.
The private capacity \cite{wilde2013quantum,holevo2013quantum} is the maximum number of bits per channel use that can be sent and certified not to have been read by any eavesdropper.
So far, both the quantum and the private capacity of quantum Gaussian channels are known only in the degradable case \cite{holevo2013quantum}.
Our ideas can be useful to determine them in the general case.

In Chapter \ref{chlossy} we have extended the majorization results of Chapter \ref{majorization} to a large class of lossy quantum channels, resulting from a weak interaction of a small quantum system with a large bath in its ground state.

In Chapter \ref{memory} we have considered a particular model of quantum Gaussian channel that implements memory effects, and we have explicitly determined its classical information capacity.
In Chapter \ref{normal} we have explored the set of linear trace-preserving not necessarily positive maps preserving the set of Gaussian states.
For one mode, we have proved that any non positive map of this kind is built from the so-called phase-space dilatation.
These maps can be used as tests for certifying that a given quantum state does not belong to the convex hull of Gaussian states, in the same way as positive but not completely positive maps are used as tests for entanglement.
Phase-space dilatations are then proven to be the only relevant test of this kind.

In Chapter \ref{chETH} we have proved that requiring thermalization of a quantum system in contact with a heat bath for any initial uncorrelated state with a well-defined temperature implies the Eigenstate Thermalization Hypothesis for the system-bath Hamiltonian.
Then, the ETH constitutes the unique criterion to decide whether a given system-bath dynamics always leads to thermalization.

Finally, in Chapter \ref{chsuperpos} we have proved that any measurement able to distinguish a coherent superposition of two wavepackets of a massive or charged particle from the corresponding incoherent statistical mixture must require a minimum time.
In the case of an electric charge, the bound can be ascribed to the entanglement with the quantum electromagnetic radiation that is unavoidably emitted during the measurement.
Then, in the case of a mass the bound provides an indirect evidence for the existence of quantum gravitational radiation and in general for the necessity of quantizing gravity.

\appendix

\chapter{Gaussian quantum systems}\label{appG}
In this Appendix we provide some technical details on Gaussian quantum information.
In particular, we introduce Gaussian quantum systems in Section \ref{GQSa}, and the method of characteristic functions in Section \ref{chia}.
Section \ref{secdisp} defines the displacement operators, while Section \ref{secmom} defines the first and second moment of a quantum state.
Section \ref{Gsta} introduces quantum Gaussian states, and Section \ref{sec:Husimi} introduces the method of the Husimi function.
Gaussian quantum channels are defined in Section \ref{QGCa}.
Finally, Section \ref{EPnIG} proves the Entropy Photon-number Inequality \eqref{EPnI} for Gaussian input states.

\section{Quadratures and Hilbert space}\label{GQSa}
We consider an $n$-mode bosonic quantum system with Hilbert space $\mathcal{H}$ and quadrature operators $\hat{Q}^i$ and $\hat{P}^i$, $i=1,\ldots,\,n$, satisfying the canonical commutation relations
\begin{equation}\label{CCRQP}
\left[\hat{Q}^i,\;\hat{P}^j\right]=i\,\delta^{ij}\;.
\end{equation}
As usual, we can define the ladder operators
\begin{equation}
\hat{a}^i=\frac{\hat{Q}^i+i\hat{P}^i}{\sqrt{2}}\;,
\end{equation}
satisfying the commutation relations
\begin{equation}
\left[\hat{a}^i,\;\hat{a}^{j\dag}\right]=\delta^{ij}\;.
\end{equation}
The number operator is defined as
\begin{equation}
\hat{N}=\sum_{i=1}^n\hat{a}^{i\dag}\hat{a}^i\;,
\end{equation}
and it counts the number of excitations.

We can put all the quadratures together in the column vector
\begin{equation}\label{Rdef}
\hat{\mathbf{R}}=\left(
                   \begin{array}{c}
                     \hat{R}^1 \\
                     \vdots \\
                     \hat{R}^{2n} \\
                   \end{array}
                 \right):=\left(
                   \begin{array}{c}
                     \hat{Q}^1 \\
                     \hat{P}^1\\
                     \vdots \\
                     \hat{Q}^{n} \\
                     \hat{P}^n
                   \end{array}
                 \right)\;.
\end{equation}
The commutation relations \eqref{CCRQP} become
\begin{equation}\label{CCR}
\left[\hat{R}^i,\;\hat{R}^j\right]=i\,\Delta^{ij}\;,
\end{equation}
where $\Delta$ is the symplectic form associated with the antisymmetric matrix
\begin{equation}\label{Deltac}
\Delta=\bigoplus_{k=1}^n\left(
                          \begin{array}{cc}
                            0 & 1 \\
                            -1 & 0 \\
                          \end{array}
                        \right)\;.
\end{equation}
An even-dimensional vector space equipped with an nondegenerate antisymmetric bilinear form is called symplectic space, and the bilinear form is called its symplectic form.
It is possible to show \cite{holevo2013quantum} that we can always choose a basis such that the matrix associated with the symplectic form has the form \eqref{Deltac}.

A matrix $S$ preserving the symplectic form $\Delta$, i.e. such that
\begin{equation}\label{symplectic}
S\;\Delta\;S^T=\Delta\;,
\end{equation}
is called symplectic matrix.

The symplectic condition \eqref{symplectic} simplifies in the case of one
mode. Indeed for any $2\times 2$ matrix $M$ we have
\begin{equation}
M\Delta M^{T}=\Delta \det M\;,  \label{det1mode}
\end{equation}
therefore a $2\times 2$ matrix $S$ is symplectic iff
\begin{equation}
\det S=1\;.
\end{equation}

\section{Characteristic and Wigner functions}\label{chia}
Let $\mathfrak{H}$ be the set of the Hilbert-Schmidt operators on $\mathcal{H}$, i.e. the operators with finite Hilbert-Schmidt norm:
\begin{equation}\label{HSN}
\left\|\hat{X}\right\|_2^2=\mathrm{Tr}\left(\hat{X}^\dag\,\hat{X}\right)<\infty\;.
\end{equation}
Given an operator $\hat{X}\in \mathfrak{H}$ we define its characteristic function as
\begin{equation}\label{chidef}
\chi_{\hat{X}}(\mathbf{k}):=\mathrm{Tr}\left(\hat{X}\,e^{i\,\mathbf{k}\,\hat{\mathbf{R}}}\right)\;,\qquad\mathbf{k}\in\mathbb{R}^{2n}\;,
\end{equation}
where
\begin{equation}
\mathbf{k}=\left(k_1,\;\ldots,\;k_{2n}\right)
\end{equation}
is a row vector.
It is possible to prove \cite{holevo2013quantum,weedbrook2012gaussian} that $\hat{X}$ can be reconstructed with
\begin{equation}
\hat{X}=\int\chi_{\hat{X}}(\mathbf{k})\;e^{-i\,\mathbf{k}\,\hat{\mathbf{R}}}\;\frac{d^{2n}k}{(2\pi)^n}\;,\label{rhochi}
\end{equation}
and that the characteristic function provides an isometry between $\mathfrak{H}$ and $L^2\left(\mathbb{R}^{2n}\right)$ (see e.g. Theorem 5.3.3 of \cite{holevo2011probabilistic}):
\begin{equation}
\mathrm{Tr}\left(\hat{X}^\dag\,\hat{Y}\right)=\int{\chi_{\hat{X}}(\mathbf{k})}^*\;\chi_{\hat{Y}}(\mathbf{k})\;\frac{d^{2n}k}{(2\pi)^n}\;.\label{hilbert}
\end{equation}
Eq. \eqref{hilbert} is called noncommutative Parceval's formula.

We define the Wigner function of $\hat{X}\in \mathfrak{H}$ as the Fourier transform of the characteristic function:
\begin{equation}\label{wignerdef}
W_{\hat{X}}(\mathbf{x}):=\int\chi_{\hat{X}}(\mathbf{k})\;e^{-i\,\mathbf{k}\,\mathbf{x}}\;\frac{d^{2n}k}{(2\pi)^{2n}}\;,
\end{equation}
where
\begin{equation}
\mathbf{x}=\left(
             \begin{array}{c}
               x^1 \\
               \vdots \\
               x^{2n} \\
             \end{array}
           \right)\in\mathbb{R}^{2n}
\end{equation}
is a column vector.

\section{Displacement operators}\label{secdisp}
We define the displacement operators with
\begin{equation}
\hat{D}(\mathbf{x}):=e^{i\;\mathbf{x}^T\;\Delta^{-1}\;\hat{\mathbf{R}}}\;,
\end{equation}
acting on the quadratures as
\begin{equation}\label{displ}
{\hat{D}(\mathbf{x})}^\dag\;\hat{\mathbf{R}}\;\hat{D}(\mathbf{x})=\hat{\mathbf{R}}+\mathbf{x}\;.
\end{equation}
Using \eqref{displ}, the displacements act on the characteristic function as
\begin{equation}\label{chiDD}
\chi_{\hat{D}(\mathbf{x})\;\hat{X}\;{\hat{D}(\mathbf{x})}^\dag}(\mathbf{k})=e^{i\;\mathbf{k}\;\mathbf{x}}\;\chi_{\hat{X}}(\mathbf{k})\;.
\end{equation}

\section{Moments and symplectic eigenvalues}\label{secmom}
Given a state $\hat{\rho}\in\mathfrak{S}(\mathcal{H})$ and an observable $\hat{A}$, we define its expectation value
\begin{equation}
\left\langle\hat{A}\right\rangle:=\mathrm{Tr}\left(\hat{\rho}\,\hat{A}\right)\;.
\end{equation}
We can now define the first moments of $\hat{\rho}$ as the expectation values of the quadratures
\begin{equation}
\mathbf{r}:=\left\langle\hat{\mathbf{R}}\right\rangle\;,
\end{equation}
and the covariance matrix as
\begin{equation}\label{covdef}
\sigma^{ij}:=\left\langle\left\{\hat{R}^i-r^i,\;\hat{R}^j-r^j\right\}\right\rangle\;,
\end{equation}
where $\left\{\cdot,\cdot\right\}$ stands for the anti-commutator.

It is possible to prove \cite{holevo2013quantum} that the eigenvalues of the matrix $\sigma\Delta^{-1}$ are pure imaginary, and since the matrix is real, they come in couples of complex conjugates.
Their absolute values $\nu_k,\;k=1,\ldots,n$ are called the symplectic eigenvalues of $\sigma$, and satisfy
\begin{equation}\label{detnu}
\det\sigma=\prod_{k=1}^n\nu_k^2\;.
\end{equation}
According to the Williamson theorem \cite{williamson1936algebraic}, for any strictly positive $\sigma$ there exists a symplectic matrix $S$ such that
\begin{equation}
S\;\sigma\;S^T=\bigoplus_{k=1}^n\nu_k\;\mathbb{I}_2\;,
\end{equation}
where the $\nu_k$ are its symplectic eigenvalues.

The positivity of $\hat{\rho}$ together with the commutation relations \eqref{CCR} imply for $\sigma$ the Robertson-Heisenberg uncertainty relation
\begin{equation}
\sigma\geq\pm i\Delta\;.\label{heis}
\end{equation}
Condition \eqref{heis} is equivalent to imposing all the symplectic eigenvalues of $\sigma$ to be greater or equal than one.

A symmetric positive definite $2\times 2$-matrix $\sigma
$ has a single symplectic eigenvalue, given by
\begin{equation}
\nu ^{2}=\det \sigma \;;
\end{equation}
and it is the covariance matrix of a quantum
state iff
\begin{equation}
\det \sigma \geq 1\;.  \label{state1}
\end{equation}

The first moment $\mathbf{r}$ and the covariance matrix $\sigma$ can both be computed from the characteristic function with
\begin{eqnarray}
\mathbf{r}&=&-i\left.\frac{\partial}{\partial\mathbf{k}}\ln\chi(\mathbf{k})\right|_{\mathbf{k}=\mathbf{0}}\\
\sigma&=&-\left.\frac{\partial}{\partial \mathbf{k}}\;\left(\frac{\partial}{\partial\mathbf{k}}\right)^T\ln\chi(\mathbf{k})\right|_{\mathbf{k}=\mathbf{0}}\;,
\end{eqnarray}
where $\frac{\partial}{\partial\mathbf{k}}$ is the column vector of the derivatives
\begin{equation}
\frac{\partial}{\partial\mathbf{k}}:=\left(
                                       \begin{array}{c}
                                         \frac{\partial}{\partial k_1} \\
                                         \vdots \\
                                         \frac{\partial}{\partial k_{2n}} \\
                                       \end{array}
                                     \right)\;.
\end{equation}

\section{Gaussian states}\label{Gsta}
Given a symmetric matrix $\sigma\geq\pm i\Delta$ and a column vector $\mathbf{r}\in\mathbb{R}^{2n}$, the Gaussian state $\hat{\rho}_G(\sigma,\,\mathbf{r})$ with covariance matrix $\sigma$ and first moment $\mathbf{r}$ is the state
\begin{equation}
\hat{\rho}_G=e^{-\left(\hat{\mathbf{R}}-\mathbf{r}\right)^T J\left(\hat{\mathbf{R}}-\mathbf{r}\right)}\left/\mathrm{Tr}\;e^{-\left(\hat{\mathbf{R}}-\mathbf{r}\right)^T J\left(\hat{\mathbf{R}}-\mathbf{r}\right)}\right.\;,
\end{equation}
where $J$ is the positive real matrix such that
\begin{equation}
\sigma=\Delta\cot\left(J\;\Delta\right)\;.
\end{equation}
It has characteristic function
\begin{equation}\label{chigauss}
\chi(\mathbf{k})=e^{-\frac{1}{4}\mathbf{k}\,\sigma\,\mathbf{k}^T+i\,\mathbf{k}\,\mathbf{r}}
\end{equation}
and Wigner function
\begin{equation}
W(\mathbf{x})=\frac{e^{-\left(\mathbf{x}-\mathbf{r}\right)^T\sigma^{-1}\left(\mathbf{x}-\mathbf{r}\right)}}{\sqrt{\det\left(\pi\,\sigma\right)}}\;.
\end{equation}
For $\sigma = \mathbb{I}_{2n}$ we obtain the family of coherent states $\hat{\rho}_G(\mathbf{r},\,\mathbb{I}_{2n}),\;\mathbf{r}\in \mathbb{R}^{2n}$.

For simplicity, we call $\hat{\rho}_G(\sigma)$ the centered state $\hat{\rho}_G(\sigma,\,\mathbf{0})$.

\subsection{Entropy of Gaussian states}\label{entrGa}
The von Neumann entropy $S\left[\hat{\rho}\right]=-\mathrm{Tr}\left[\hat{\rho}\ln\hat{\rho}\right]$ of the Gaussian state $\hat{\rho}_G(\sigma)$ with covariance matrix $\sigma$ is given by
\begin{equation}
S\left(\hat{\rho}_G(\sigma)\right)=\sum_{k=1}^n h\left(\nu_k\right)\;,\label{sgaus}
\end{equation}
where the $\nu_k$ are the symplectic eigenvalues of $\sigma$ and
\begin{equation}\label{defh}
h(\nu) = \frac{\nu+1}{2}\ln\frac{\nu+1}{2}-\frac{\nu-1}{2}\ln\frac{\nu-1}{2}\;.
\end{equation}
For thermal states with covariance matrix proportional to the identity, i.e. $\sigma=\nu \mathbb{I}_{2n}$, it can be useful to express the entropy in terms of the mean photon number per mode
\begin{equation}
N=\frac{1}{n}\mathrm{Tr}\left[\hat{N}\;\hat{\rho}\right]=\frac{\nu-1}{2}\;.
\end{equation}
For this purpose, it is sufficient to replace the function $h(\nu)$ in \eqref{sgaus} with
\begin{equation}\label{defg}
g(N):=h\left(2N+1\right)=(N+1)\ln(N+1)-N\ln N\;.
\end{equation}

In the proof of the quantum Entropy Power Inequality in Chapter \ref{epi} we have used the asymptotic scaling of the entropy for Gaussian states with large covariance in Eqs. \eqref{lowerboundS} and \eqref{upper}.
Here we prove this result.
For $\nu\to\infty$, the function $h$ is almost a logarithm:
\begin{equation}
h(\nu)=\ln\frac{\nu}{2}+1+\mathcal{O}\left(\frac{1}{\nu^2}\right)\;,
\end{equation}
so the entropy of $\hat{\rho}_G(t\sigma)$ for $t\to\infty$ is
\begin{equation}\label{entas}
S\left(\hat{\rho}_G(t\sigma)\right)=\ln\prod_{k=1}^n\frac{t\,\nu_k}{2}\;+n+\mathcal{O}\left(\frac{1}{t^2}\right)= \frac{1}{2}\ln\det\left(\frac{e\, t\,\sigma}{2}\right)\;+\mathcal{O}\left(\frac{1}{t^2}\right)\;,
\end{equation}
where we have used \eqref{detnu}.

\section{Husimi function}\label{sec:Husimi}
Choose a covariance matrix $\gamma\geq\pm i\Delta$.
The generalized Husimi function $Q_{\hat{\rho}}(\mathbf{x})$ of a state $\hat{\rho}$ \cite{giovannetti2015majorization} is its overlap with the Gaussian state $\hat{\rho}_G(\gamma,\,\mathbf{x})$:
\begin{equation}
Q_{\hat{\rho}}(\mathbf{x}):=\frac{1}{(2\pi)^n}\mathrm{Tr}\left(\hat{\rho}\;\hat{\rho}_G(\gamma,\,\mathbf{x})\right)\;.\label{genHusimi}
\end{equation}
We notice that for $\gamma=\mathbb{I}_{2n}$ we recover the usual Husimi function of \cite{barnett2002methods} based on coherent states.
By construction, $Q_{\hat{\rho}}(\mathbf{x})$ is continuous in $\mathbf{x}$ and positive:
\begin{equation}
Q_{\hat{\rho}}(\mathbf{x})\geq0\;.
\end{equation}
In terms of the characteristic function of $\hat{\rho}$, \eqref{genHusimi} reads
\begin{equation}
Q_{\hat{\rho}}(\mathbf{x})=\int e^{-\frac{1}{4}\mathbf{k}\,\gamma\,\mathbf{k}^T-i\,\mathbf{k}\,\mathbf{x}}\;\chi_{\hat{\rho}}(\mathbf{k})\;\frac{d^{2n}k}{(2\pi)^{2n}}\;,
\end{equation}
where we have used \eqref{chigauss} and \eqref{hilbert}.
The Fourier transform of $Q_{\hat{\rho}}(\mathbf{x})$
\begin{equation}
\widetilde{Q}_{\hat{\rho}}(\mathbf{k})=\int Q_{\hat{\rho}}(\mathbf{x})\;e^{i\,\mathbf{k}\,\mathbf{x}}\;d^{2n}x
\end{equation}
is then given by
\begin{equation}
\widetilde{Q}_{\hat{\rho}}(\mathbf{k})=e^{-\frac{1}{4}\mathbf{k}\,\gamma\,\mathbf{k}^T}\;\chi_{\hat{\rho}}(\mathbf{k})\label{Husimik}\;.
\end{equation}
Computing \eqref{Husimik} in $\mathbf{k}=\mathbf{0}$ and remembering that
\begin{equation}
\chi_{\hat{\rho}}(\mathbf{0})=\mathrm{Tr}\,\hat{\rho}=1
\end{equation}
for any normalized state $\hat{\rho}$, we can see that the generalized Husimi function $Q_{\hat{\rho}}(\mathbf{x})$ is a probability distribution:
\begin{equation}
\int Q_{\hat{\rho}}(\mathbf{x})\;d^{2n}x=1\;.
\end{equation}
Besides, it is possible to show \cite{ferraro2005gaussian} that putting $\gamma=0$ in \eqref{Husimik} we formally recover the Wigner function.
Now we can express $\hat{\rho}$ in terms of $Q_{\hat{\rho}}(\mathbf{x})$: putting together \eqref{rhochi} and \eqref{Husimik} we get
\begin{equation}
\hat{\rho}=\int \widetilde{Q}_{\hat{\rho}}(\mathbf{k})\;e^{\frac{1}{4}\mathbf{k}\,\gamma\,\mathbf{k}^T}\;e^{-i\,\mathbf{k}\,\hat{\mathbf{R}}}\;\frac{d^{2n}k}{(2\pi)^n}=
\int Q_{\hat{\rho}}(\mathbf{x})\left(\int e^{\frac{1}{4}\mathbf{k}\,\gamma\,\mathbf{k}^T+i\,\mathbf{k}\,\mathbf{x}}\;e^{-i\,\mathbf{k}\,\hat{\mathbf{R}}}\;\frac{d^{2n}k}{(2\pi)^n}\right)d^{2n}x\,.\label{rhoQ}
\end{equation}
Comparing with \eqref{rhochi} the integral in parenthesis, it looks like the Gaussian ``state'' with covariance matrix $-\gamma$ displaced by $\mathbf{x}$. Of course, this is not a well-defined state, and it makes sense only if integrated against smooth functions as $Q_\gamma(\mathbf{x})$. However, if we formally define
\begin{equation}
\hat{\rho}_G(-\gamma,\,\mathbf{x}):=\int e^{\frac{1}{4}\mathbf{k}\,\gamma\,\mathbf{k}^T+i\,\mathbf{k}\,\mathbf{x}}\;e^{-i\,\mathbf{k}\,\hat{\mathbf{R}}}\;\frac{d^{2n}k}{(2\pi)^n}\;,
\end{equation}
\eqref{rhoQ} becomes
\begin{equation}
\hat{\rho}=\int Q_{\hat{\rho}}(\mathbf{x})\;\hat{\rho}_G(-\gamma,\,\mathbf{x})\;d^{2n}x\;.\label{husimi}
\end{equation}

Then the Husimi function of any bounded operator $\hat{\rho}$ uniquely defines $\hat{\rho}$ .
It follows that the linear span of the set of coherent states, and hence of all Gaussian states, is dense in the Hilbert space of
Hilbert-Schmidt operators $\mathfrak{H}$. Similarly, these linear spans are dense in the Banach space of trace-class
operators $\mathfrak{T}$.

\section{Quantum Gaussian channels}\label{QGCa}
Let $X$ and $Y$ be two symplectic spaces, with symplectic forms $\Delta_X$ and $\Delta_Y$ and associated sets of trace-class operators $\mathfrak{T}_X$ and $\mathfrak{T}_Y$, respectively.
Given a matrix
\begin{equation}
M:X\longrightarrow Y\;,
\end{equation}
a covariance matrix $\alpha$ on $Y$ and a column vector $\mathbf{y}\in Y$, we define the quantum Gaussian channel
\begin{equation}
\Phi:\mathfrak{T}_X\longrightarrow\mathfrak{T}_Y
\end{equation}
of parameters $(M,\;\alpha,\;\mathbf{y})$ as the channel that acts on the characteristic function as
\begin{equation}\label{channelchi}
\chi_{\Phi\left(\hat{\rho}\right)}(\mathbf{k})=e^{-\frac{1}{4}\mathbf{k}\,\alpha\,\mathbf{k}^T+i\,\mathbf{k}\,\mathbf{y}}\;\chi_{\hat{\rho}}\left(\mathbf{k}\,M\right)\;,
\end{equation}
for any row vector $\mathbf{k}$ in $Y$.
The channel defined in \eqref{channelchi} is completely positive (see \cite{holevo2013quantum}) iff
\begin{equation}\label{CP}
\alpha\geq\pm i\left(\Delta_Y-M\;\Delta_X\;M^T\right)\;.
\end{equation}
$\Phi$ sends Gaussian states in $\mathfrak{S}_X$ into Gaussian states in $\mathfrak{S}_Y$:
\begin{equation}\label{Phigauss}
\Phi\left(\hat{\rho}_G(\sigma,\,\mathbf{r})\right)=\hat{\rho}_G\left(M\,\sigma\,M^T+\alpha,\;M\,\mathbf{r}+\mathbf{y}\right)\;,
\end{equation}
and it acts on the moments as
\begin{eqnarray}
\sigma&\mapsto&M\,\sigma\,M^T+\alpha\\
\mathbf{r}&\mapsto&M\,\mathbf{r}+\mathbf{y}\;.
\end{eqnarray}
From the action on the characteristic function \eqref{Phigauss} and Parceval's formula \eqref{hilbert}, it is easy to prove that any quantum Gaussian channel is continuous in the Hilbert-Schmidt norm \eqref{HSN}.

The Hilbert-Schmidt dual of a linear map $\Phi$ acting on trace-class operators and continuous in the trace norm is the map $\Phi^\dag$ acting on bounded operators and continuous in the operator norm defined by
\begin{equation}\label{dagdef}
\mathrm{Tr}\left[\hat{Y}\;\Phi\left(\hat{X}\right)\right]=\mathrm{Tr}\left[\Phi^\dag\left(\hat{Y}\right)\;\hat{X}\right]
\end{equation}
for any trace-class operator $\hat{X}$ and any bounded operator $\hat{Y}$.

If the matrix $M$ is invertible, with Parceval's formula \eqref{hilbert} it is easy to see that the Hilbert-Schmidt adjoint of the Gaussian quantum channel $\Phi$ with parameters $\left(M,\;\alpha,\;\mathbf{y}\right)$ defined in \eqref{Phigauss} acts on the characteristic function as
\begin{equation}\label{HSchi}
\chi_{\Phi^\dag\left(\hat{Y}\right)}(\mathbf{k})=\frac{1}{\det M}\;e^{-\frac{1}{4}\mathbf{k}\,M^{-1}\,\alpha M^{-T}\,\mathbf{k}^T-i\,\mathbf{k}\,M^{-1}\,\mathbf{y}}\;\chi_{\hat{Y}}\left(\mathbf{k}\,M^{-1}\right)\;.
\end{equation}

It is possible to prove \cite{holevo2013quantum} that any quantum Gaussian channel admits a Gaussian Stinespring dilation, i.e. there exist two symplectic spaces $Z$ and $W$ with
\begin{equation}
\dim X+\dim Z=\dim Y+\dim W\;,
\end{equation}
a Gaussian state $\hat{\rho}_Z$ on $\mathcal{H}_Z$ and a symplectic matrix
\begin{equation}
S:X\oplus Z\to Y\oplus W
\end{equation}
satisfying
\begin{equation}
S\left(\Delta_X\oplus\Delta_Z\right)S^T=\Delta_Y\oplus\Delta_W
\end{equation}
such that for any $\hat{X}\in\mathfrak{T}_X$
\begin{equation}
\Phi\left(\hat{X}\right)=\hat{D}(\mathbf{y})\;\mathrm{Tr}_W\left[\hat{U}_S\left(\hat{X}\otimes\hat{\rho}_Z\right)\hat{U}_S^\dag\right]\;{\hat{D}(\mathbf{y})}^\dag\;.
\end{equation}
Here $\hat{U}_S$ is the isometry
\begin{equation}
\hat{U}_S:\mathcal{H}_X\otimes\mathcal{H}_Z\to\mathcal{H}_Y\otimes\mathcal{H}_W
\end{equation}
that implements the symplectic matrix $S$, i.e.
\begin{equation}
\hat{U}_S^\dag\left(\hat{\mathbf{R}}_Y\oplus\hat{\mathbf{R}}_W\right)\hat{U}_S=S\left(\hat{\mathbf{R}}_X\oplus\hat{\mathbf{R}}_Z\right)\;,
\end{equation}
where $\hat{\mathbf{R}}_X$, $\hat{\mathbf{R}}_Y$, $\hat{\mathbf{R}}_Z$ and $\hat{\mathbf{R}}_W$ are the vectors of the quadrature operators of the four systems.

\subsection{Quantum-limited attenuator and amplifier}\label{secatta}
The Gaussian quantum-limited attenuator $\mathcal{E}_\lambda$, $0\leq\lambda\leq1$ defined in Section \ref{secattampl} can be recovered from \eqref{channelchi} identifying $Y$ with $X$ and setting $\mathbf{y}=0$, $M=\sqrt{\lambda}\;\mathbb{I}$ and $\alpha=(1-\lambda)\mathbb{I}$ \cite{holevo2013quantum}.
Its action on the characteristic function is then
\begin{equation}\label{attchi}
\chi_{\mathcal{E}_\lambda\left(\hat{\rho}\right)}(\mathbf{k})=e^{-\frac{1-\lambda}{4}\left|\mathbf{k}\right|^2}\;\chi_{\hat{\rho}}\left(\sqrt{\lambda}\;\mathbf{k}\right)\;.
\end{equation}
The Gaussian quantum-limited amplifier $\mathcal{A}_\kappa$, $\kappa\geq1$ can be instead recovered with $\mathbf{y}=0$, $M=\sqrt{\kappa}\;\mathbb{I}$ and $\alpha=(\kappa-1)\mathbb{I}$:
\begin{equation}\label{amplchi}
\chi_{\mathcal{A}_\kappa\left(\hat{\rho}\right)}(\mathbf{k})=e^{-\frac{\kappa-1}{4}\left|\mathbf{k}\right|^2}\;\chi_{\hat{\rho}}\left(\sqrt{\kappa}\;\mathbf{k}\right)\;.
\end{equation}
Combining \eqref{HSchi} with \eqref{attchi} and \eqref{amplchi}, it is easy to show that the Hilbert-Schmidt dual of the $n$-mode quantum-limited attenuator of parameter $0<\lambda\leq1$ is $1/\lambda^n$ times the  quantum-limited amplifier of parameter $\kappa=1/\lambda\geq1$, and its restriction to trace-class operators is continuous in the trace-norm (see also \cite{ivan2011operator}).

The quantum-limited attenuator and amplifier are the elementary building blocks for gauge-covariant Gaussian channels.
Indeed, it is possible to prove \cite{garcia2012majorization,giovannetti2015solution,mari2014quantum,holevo2015gaussian} that any gauge-covariant quantum Gaussian channel can be expressed as a quantum-limited amplifier composed with a quantum-limited attenuator.

\section{EPnI for Gaussian states}\label{EPnIG}
In this Section we prove that the Entropy Photon-number Inequality \eqref{EPnI} holds when the two inputs are Gaussian states.

Let us consider two $n$-mode Gaussian states with covariance matrices $\sigma_A$ and $\sigma_B$, respectively.
Let $0\leq\lambda\leq1$ be the parameter of the beamsplitter.
From Eqs. \eqref{cmphi} and \eqref{bsdef}, the covariance matrix of the output $C$ is
\begin{equation}
\sigma_C=\lambda\;\sigma_A+(1-\lambda)\;\sigma_B\;.
\end{equation}
The EPnI \eqref{EPnI} reads
\begin{equation}\label{EPnIGe}
h^{-1}\left(S_C/n\right)\geq \lambda\;h^{-1}\left(S_A/n\right)+\left(1-\lambda\right)h^{-1}\left(S_B/n\right)\;,
\end{equation}
where $S_A$, $S_B$ and $S_C$ are the entropies of the two inputs and the output, respectively, and the function $h$ has been defined in \eqref{defh} ($h^{-1}$ is an affine function of $g^{-1}$, hence the inequality \eqref{EPnIGe} can be indifferently written in terms of $g$ or of $h$).
Let $1\leq a_1\leq\ldots\leq a_n$, $1\leq b_1\leq\ldots\leq b_n$ and $1\leq c_1\leq\ldots\leq c_n$ be the symplectic eigenvalues of $\sigma_A$, $\sigma_B$ and $\sigma_C$, respectively.
Their entropies are given by
\begin{equation}
S_X=\sum_{i=1}^n h\left(x_i\right)\;,\qquad X=A,\,B,\,C\;.
\end{equation}
From Theorem 1 of \cite{hiroshima2006additivity},
\begin{equation}
c\prec^w\lambda\;a+(1-\lambda)\;b\;.
\end{equation}
The symbol $\prec^w$ stands for weak supermajorization \cite{marshall2010inequalities}, the partial order relation between vectors with real components that can be defined as
\begin{equation}
x \prec^w y\quad\text{iff}\quad\sum_{i=1}^n f(x_i)\geq\sum_{i=1}^n f(y_i)\quad\forall\;\text{increasing concave}\;f\;.
\end{equation}
Since $h$ is increasing and concave,
\begin{equation}\label{S_C}
S_C=\sum_{i=1}^n h\left(c_i\right)\geq\sum_{i=1}^nh\left(\lambda a_i+(1-\lambda)b_i\right)\;.
\end{equation}
Let us define
\begin{equation}
S^X_i=h(x_i)\;,\qquad X=A,\,B\;.
\end{equation}
Using \eqref{S_C}, the EPnI \eqref{EPnIGe} is implied by
\begin{equation}
\frac{1}{n}\sum_{i=1}^n h\left(\lambda\;h^{-1}\left(S_i^A\right)+(1-\lambda)\;h^{-1}\left(S_i^B\right)\right)\overset{?}{\geq} h\left(\lambda\;h^{-1}\left(\frac{S_A}{n}\right)+(1-\lambda)\;h^{-1}\left(\frac{S_B}{n}\right)\right)\;,
\end{equation}
that is equivalent to the convexity of the function
\begin{equation}
F\left(S_A,S_B\right)=h\left(\lambda\;h^{-1}\left(S_A\right)+(1-\lambda)\;h^{-1}\left(S_B\right)\right)\;,\qquad S_A,S_B\geq0\;.
\end{equation}
For $\lambda=1$, $F=S_A$, while for $\lambda=0$, $F=S_B$.
We can then restrict to $0<\lambda<1$.
The determinant of the Hessian of $F$ is
\begin{align}
&\det H_F\left(h(a),h(b)\right) =\nonumber\\
&=\lambda(1-\lambda)\;h'(c)\;h''(c)\;\frac{h''(a)\;h''(b)}{{h'(a)}^3\;{h'(b)}^3} \left(\frac{h'(c)}{h''(c)}-\lambda\;\frac{h'(a)}{h''(a)}-(1-\lambda)\;\frac{h'(b)}{h''(b)}\right)\;,
\end{align}
where
\begin{equation}
c=\lambda a+(1-\lambda)b\;.
\end{equation}
Since for Lemma \ref{lemg''} the function $h'/h''$ is strictly convex, for any $a\neq b$ we have
\begin{equation}
\frac{h'(c)}{h''(c)}<\lambda\frac{h'(a)}{h''(a)}+(1-\lambda)\frac{h'(b)}{h''(b)}\;,
\end{equation}
and hence $\det H_F>0$.
Then, in each of the two regions $S_A>S_B>0$ and $S_B>S_A>0$, $H_F$ is either strictly positive or strictly negative.
To determine which is the case, we compute the Laplacian of $F$ for $S_A=S_B=h(a)$:
\begin{equation}
\mathrm{Tr}\;H_F\left(h(a),h(a)\right)=-2\lambda(1-\lambda)\;\frac{h''(a)}{{h'(a)}^2}>0\;.
\end{equation}
Then, $H_F$ cannot be strictly negative in neither of the regions, and $F$ is convex.
\subsection{Useful lemmata}
\begin{lem}\label{lemg''}
For any $x>1$ the function
\begin{equation}
\phi(x)=\frac{h'(x)}{h''(x)}
\end{equation}
is strictly convex.
\begin{proof}
Putting
\begin{equation}
x=\frac{1}{\tanh\theta}\;,\qquad\theta>0\;,
\end{equation}
we have
\begin{equation}
\phi''\left(\frac{1}{\tanh\theta}\right)=2\left(\sinh\theta\;\cosh\theta-\theta\right)>0\;.
\end{equation}
\end{proof}
\end{lem}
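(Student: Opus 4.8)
The plan is to reduce the convexity of $\phi$ to an elementary scalar inequality by computing $\phi''$ in closed form. First I would differentiate $h$ directly. From $h(x)=\frac{x+1}{2}\ln\frac{x+1}{2}-\frac{x-1}{2}\ln\frac{x-1}{2}$ one finds $h'(x)=\frac{1}{2}\ln\frac{x+1}{x-1}$ and $h''(x)=-\frac{1}{x^2-1}$; in particular $h''<0$ on $(1,\infty)$, so the quotient $\phi=h'/h''$ is well defined there and equals
\begin{equation}
\phi(x)=-\frac{x^2-1}{2}\,\ln\frac{x+1}{x-1}\;.
\end{equation}

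Next I would differentiate this expression twice. Writing $L(x)=\ln\frac{x+1}{x-1}$ so that $L'(x)=-\frac{2}{x^2-1}$, the first derivative simplifies because the $L'$ contribution cancels against the algebraic prefactor: $\phi'(x)=-x\,L(x)+1$. Differentiating once more gives the compact form
\begin{equation}
\phi''(x)=\frac{2x}{x^2-1}-\ln\frac{x+1}{x-1}\;.
\end{equation}
The whole lemma therefore amounts to proving that this right-hand side is strictly positive for $x>1$.

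For the final inequality I see two routes. The direct one is to set $f(x)=\frac{2x}{x^2-1}-\ln\frac{x+1}{x-1}$, observe that $f(x)\to 0$ as $x\to\infty$, and compute $f'(x)=-\frac{4}{(x^2-1)^2}<0$; hence $f$ decreases strictly to $0$ and must remain positive on $(1,\infty)$. Alternatively, and this is the substitution that makes the structure transparent, I would set $x=\coth\theta=1/\tanh\theta$ with $\theta>0$, under which $\ln\frac{x+1}{x-1}=2\theta$, $x^2-1=1/\sinh^2\theta$ and $\frac{2x}{x^2-1}=2\sinh\theta\cosh\theta$, so that $\phi''(x)=2(\sinh\theta\cosh\theta-\theta)=\sinh(2\theta)-2\theta$, which is positive by the power series of $\sinh$.

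I do not anticipate a genuine obstacle: every differentiation is elementary and the only non-algebraic content is the scalar bound $\sinh(2\theta)>2\theta$ (equivalently the monotonicity argument for $f$). The one point needing mild care is the cancellation that yields the clean forms of $\phi'$ and $\phi''$: one must keep the prefactor $x^2-1$ and the logarithm $L$ organized so that the first derivative collapses to the two-term expression $-xL+1$ rather than expanding into a longer form, after which the second derivative follows immediately.
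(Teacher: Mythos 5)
Your proof is correct and, in its second route, is exactly the paper's argument: the substitution $x=\coth\theta$ turns $\phi''$ into $2(\sinh\theta\cosh\theta-\theta)>0$, which is precisely the identity the paper states without derivation — you have simply filled in the differentiation details the paper omits. The alternative monotonicity argument for $f(x)=\frac{2x}{x^2-1}-\ln\frac{x+1}{x-1}$ is a fine elementary variant but adds nothing beyond the same computation of $\phi''$.
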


\section{Quantum Fisher information}\label{appQFI}
Let us consider a family of quantum states $\left\{\hat{\rho}_\theta\right\}_{\theta\in\mathbb{R}}$.
In Section \ref{secQFI} we have defined its quantum Fisher information as
\begin{equation}
J=\left.\frac{d^2}{d\theta^2}S\left(\hat{\rho}\left\|\hat{\rho}_\theta\right.\right)\right|_{\theta=0}= -\mathrm{Tr}\left[\hat{\rho}\left.\frac{d^2}{d\theta^2}\ln\hat{\rho}_\theta\right|_{\theta=0}\right]\;,
\end{equation}
where $\hat{\rho}=\hat{\rho}_0$.
With the help of the identity
\begin{equation}
-\ln\hat{\rho}_\theta=\int_0^\infty\left(\frac{1}{t+\hat{\rho}_\theta}-\frac{1}{t+1}\right)dt\;,
\end{equation}
we get
\begin{equation}
J=2\int_0^\infty\mathrm{Tr}\left[\frac{\hat{\rho}}{\left(t+\hat{\rho}\right)^2}\;\dot{\hat{\rho}}\;\frac{1}{t+\hat{\rho}}\;\dot{\hat{\rho}}\right]dt\;,
\end{equation}
where $\dot{\hat{\rho}}=\left.\frac{d}{d\theta}\hat{\rho}_\theta\right|_{\theta=0}$.
Diagonalizing $\hat{\rho}$ as
\begin{equation}
\hat{\rho}=\sum_i p_i\;|\psi_i\rangle\langle \psi_i|\;,\quad\langle \psi_i|\psi_j\rangle=\delta_{ij}\;,\quad p_i\geq0\;,\quad\sum_i p_i=1\;,
\end{equation}
we can perform the integral and get
\begin{equation}\label{defJ}
J=\sum_i\frac{\left|\langle\psi_i|\dot{\hat{\rho}}|\psi_i\rangle\right|^2}{p_i}+\sum_{i\neq j}\frac{\ln p_i-\ln p_j}{p_i-p_j}\;\left|\langle\psi_i|\dot{\hat{\rho}}|\psi_j\rangle\right|^2\;.
\end{equation}
This is the Kubo-Mori \cite{bengtsson2007geometry} norm squared of $\dot{\hat{\rho}}$.

The quantum Cram\'er-Rao bound \cite{helstrom1967minimum,paris2009quantum,cramer2016mathematical} provides the ultimate bound to the variance of any estimator $\tilde{\theta}$ of $\theta$:
\begin{equation}
\mathrm{Var}\;\tilde{\theta}\geq\frac{1}{F}\;,
\end{equation}
where $F$ is the Bures norm squared \cite{bengtsson2007geometry} of $\dot{\hat{\rho}}$:
\begin{equation}\label{defF}
F=\sum_{ij}\frac{2\left|\langle\psi_i|\dot{\hat{\rho}}|\psi_j\rangle\right|^2}{p_i+p_j}\;,
\end{equation}
and it is also called quantum Fisher information.

$J$ and $F$ coincide when $\dot{\hat{\rho}}$ is diagonal in the eigenbasis of $\hat{\rho}$.
Since for any $p_i,p_j\geq0$
\begin{equation}
\frac{2}{p_i+p_j}\leq\frac{\ln p_i-\ln p_j}{p_i-p_j}\;,
\end{equation}
in general
\begin{equation}
F\leq J\;.
\end{equation}

\chapter{Quantum electrodynamics}\label{appsup}
In this Appendix we recall some properties of quantum electrodynamics and perform some auxiliary computation whose results are needed in Chapter \ref{chsuperpos}.

In particular, in Section \ref{strength} we show that for Bob a measurement of the position is always better than a measurement of the momentum, {\it i.e.} it allows to distinguish the force difference $\Delta F$ in a shorter time.
In Section \ref{loc}, we prove that the minimum width over which a charge $q$ greater than the Planck charge $q_{\mathrm{P}}$ can be localized is its charge radius $q \hbar/(q_{\mathrm{P}} mc)$
\cite{weinberg1995quantum}.
In Section \ref{emf}, we recall the basics of the quantization of the electromagnetic field (see also \cite{cohen1989photons}).
In Section \ref{coh}, we introduce the set of coherent states of the electromagnetic field (see also \cite{barnett2002methods}).
In Section \ref{rad}, we show that the quantum state of the electromagnetic field coupled to a classical charged particle following a non-relativistic motion remains undistinguishable from the vacuum.
In Section \ref{dof}, we explain in detail why when Alice's charged particle is in the quantum superposition \eqref{LR}, despite it generates a static electric field that depends on its position, it is still not entangled with the field, and the global state is a product with the field part in the vacuum.
In Section \ref{secloc}, we explain in detail why Alice can measure only the velocity of her particle, and not its canonical momentum, if she is constrained to remain in her laboratory, which has the size of the support of the wavefunction of the particle.
Finally, in Section \ref{secDST} we check the validity of the approximation of neglecting the time $T_D$ that Bob needs in order to remove the trapping potential.

\section{Strength of the trap}
\label{strength}
Let $\omega$ be the frequency of the harmonic trap. The spatial width of its ground state is given by
\begin{equation}\label{psi0}
\Delta X^2\simeq\frac{\hbar}{m_{\mathrm{B}}\,\omega}\;.
\end{equation}
This ground state is insensible to the force difference $\Delta F$ iff the displacement that it generates is less than $\Delta X$, i.e.
\begin{equation}\label{ins}
\frac{\Delta F}{m_{\mathrm{B}}\,\omega^2}\lesssim \Delta X\;.
\end{equation}
Eliminating $\omega$ with~\eqref{psi0}, the inequality~\eqref{ins} becomes hence
\begin{equation}\label{ins2}
\Delta X^3\lesssim\frac{\hbar^2}{m_{\mathrm{B}}\,\Delta F}\;,
\end{equation}
which is the condition we have to enforce to ensure that Bob's detector is ineffective when switched off. Let us then suppose that, after switching on the detector, Bob tries to distinguish the two states of Alice by a measurement of $P$: accordingly the momentum spread $\Delta P$ of his initial state must be lower than the displacement in momentum $|\delta_p|=\Delta F\,t$. Recalling that Heisenberg's uncertainty principle $\Delta X\,\Delta P\geq\hbar$ is saturated by a Gaussian pure state, the minimum time after which Bob can distinguish is
\begin{equation}
T_{\mathrm{B}}'=\frac{\hbar}{\Delta F\,\Delta X}\;.
\end{equation}
On the other hand, from the ratio considered in \eqref{ratio}, the minimum discrimination time with a measurement of $x$ is
\begin{equation}
T_{\mathrm{B}}=\sqrt{\frac{m_{\mathrm{B}}\,\Delta X}{\Delta F}}\;.
\end{equation}
The reader can check that~\eqref{ins2} implies $T_{\mathrm{B}}\leq T_{\mathrm{B}}'$, i.e. if the trap is strong enough to be insensible to the force difference, for Bob it is always better to measure the position of his particle rather than its momentum.

\section{Maximum localization of a charge}
\label{loc}
Let us suppose to use a harmonic trap of frequency $\omega$ to localize the charge. One could think that in principle, with a strong enough trap, the charge can be arbitrarily localized. However, from the Larmor formula~\cite{jackson1998classical} we know that a classical particle with charge $q$ following a harmonic motion of frequency $\omega$ and width $\Delta X$ loses into electromagnetic radiation a power
\begin{equation}\label{larmor}
\frac{dE}{dt}\simeq \frac{q^2\,\omega^4\,\Delta X^2}{\epsilon_0\,c^3}\;.
\end{equation}
In the quantum case, the charge radiates until it gets to the ground state of the trap, where it cannot radiate anymore since there are no other states with a lower energy to go. However, if the trap is very strong, its ground state is very localized, and therefore has a great uncertainty in velocity. Since any moving charge generates a magnetic field, this velocity uncertainty generates a large uncertainty in the magnetic field, resulting in a large entanglement between the state of the particle and the state of the field. Qualitatively, this happens when the energy classically radiated in a period becomes greater than $\hbar\omega$, the energy of the first excited state. Combining~\eqref{larmor} with~\eqref{psi0}, this happens exactly when the localization $\Delta X$ becomes smaller that the charge radius:
\begin{equation}\label{DX0}
\Delta X\lesssim \frac{q}{q_{\mathrm{P}}}\,\frac{\hbar}{mc}\;.
\end{equation}
Then, if we want the reduced state of the particle to remain pure, we can localize it only up to the limit in~\eqref{DX0}.

\section{Quantization of the electromagnetic field}\label{emf}
We denote with $\hat{O}$ an operator in the Schr\"odinger picture, and with $\hat{O}(t)$ its counterpart in the Heisenberg picture. The two pictures are defined to coincide for $t=0$, i.e. $\hat{O}(0)=\hat{O}$. We recall that in the Heisenberg picture the operators are evolved with the full interacting Hamiltonian.

An Hamiltonian formulation of electrodynamics requires the introduction of the scalar and vector potentials $V$ and $\mathbf{A}$. It is convenient to Fourier-transform with respect to $\mathbf{x}$. The potentials are related to the electric and magnetic fields by
\begin{eqnarray}\label{Et}
\hat{\mathbf{E}}(\mathbf{k},t)&=&-i\mathbf{k}\;\hat{V}(\mathbf{k},t)-\frac{\partial}{\partial t}\hat{\mathbf{A}}(\mathbf{k},t)\\
\hat{\mathbf{B}}(\mathbf{k},t)&=&i\mathbf{k}\times\hat{\mathbf{A}}(\mathbf{k},t)\;.\label{Bt}
\end{eqnarray}
We choose the Coulomb gauge, in which the divergence of the vector potential is set to zero at the operator level:
\begin{equation}\label{coulomb}
\mathbf{k}\cdot\hat{\mathbf{A}}(\mathbf{k},t)=0\;.
\end{equation}
It is now convenient to define the ladder operators
\begin{equation}\label{ai}
\hat{a}^i(\mathbf{k},t)\equiv\sqrt{\frac{|\mathbf{k}|}{2}}\;\hat{A}^i(\mathbf{k},t)+\frac{i}{\sqrt{2|\mathbf{k}|}}\;\frac{\partial}{\partial t}\hat{A}^i(\mathbf{k},t)\;,
\end{equation}
satisfying the constraint $k_i\,\hat{a}^i(\mathbf{k},t)=0$ as a consequence of~\eqref{coulomb}. The definition in~\eqref{ai} can be inverted:
\begin{equation}\label{Afrx}
\hat{A}^i(\mathbf{x},t)=\int\frac{\hat{a}^i(\mathbf{k},t)\;e^{i\mathbf{k}\cdot\mathbf{x}}+\hat{a}^{i\dag}(\mathbf{k},t)\;e^{-i\mathbf{k}\cdot\mathbf{x}}}{\sqrt{2|\mathbf{k}|}}\;\frac{d^3k}{(2\pi)^3}\;.
\end{equation}
The ladder operators satisfy the equal-time canonical commutation relations
\begin{eqnarray}\label{CCRs}
\left[\hat{a}^i(\mathbf{k},t),\;\hat{a}^{j\dag}(\mathbf{q},t)\right]&=&\Pi^{ij}(\mathbf{k})\;(2\pi)^3\delta^3(\mathbf{k}-\mathbf{q})\\
\left[\hat{a}^i(\mathbf{k},t),\;\hat{a}^{j}(\mathbf{q},t)\right]&=&\left[\hat{a}^{i\dag}(\mathbf{k},t),\;\hat{a}^{j\dag}(\mathbf{q},t)\right]=0\;, \quad
\end{eqnarray}
where $\Pi(\mathbf{k})$ is the projector onto the subspace orthogonal to $\mathbf{k}$:
\begin{equation}
\Pi^{ij}(\mathbf{k})=\delta^{ij}-\frac{k^i\,k^j}{\mathbf{k}^2}\;.
\end{equation}
The vacuum state of the field $|0\rangle$ is defined as the state annihilated by all the Schr\"odinger-picture annihilation operators:
\begin{equation}\label{vac}
\hat{a}^i(\mathbf{k})|0\rangle=0\qquad\forall\;\mathbf{k}\in\mathbb{R}^3\;,\quad i=1,\,2,\,3\;,\qquad|0\rangle\in\mathcal{H}_F\;.
\end{equation}
Besides, the $\hat{a}^i(\mathbf{k})$ together with their hermitian conjugates $\hat{a}^{i\dag}(\mathbf{k})$ generate the whole observable algebra of $\mathcal{H}_F$.

Maxwell's equations determine the time evolution of the ladder operators:
\begin{equation}
\frac{\partial}{\partial t}\hat{a}^i(\mathbf{k},t)+i|\mathbf{k}|\,\hat{a}^i(\mathbf{k},t)=\frac{i\,\Pi^i_{\phantom{i}j}(\mathbf{k})}{\sqrt{2|\mathbf{k}|}}\hat{J}^i(\mathbf{k},t)\;,\label{Ais}
\end{equation}
where $\hat{\mathbf{J}}$ is the operator associated to the current density of the quantum system interacting with the electromagnetic field. Eq.~\eqref{Ais} is easily integrated:
\begin{equation}\label{ait}
\hat{a}^i(\mathbf{k},t)=e^{-i|\mathbf{k}|t}\left(\hat{a}^i(\mathbf{k})+\frac{i\,\Pi^i_{\phantom{i}j}(\mathbf{k})}{\sqrt{2|\mathbf{k}|}}\;\int_0^t e^{i|\mathbf{k}|t'} \hat{J}^i(\mathbf{k},t')\;dt'\right)\;,
\end{equation}
where we have imposed the Heisenberg and Schr\"odinger pictures to coincide at $t=0$, i.e. $\hat{a}^i(\mathbf{k},0)=\hat{a}^i(\mathbf{k})$.

In the free case, i.e. when the current vanishes at the operator level ($\hat{\mathbf{J}}(\mathbf{k},t)=0$), the relation between the two pictures is given by the free Hamiltonian
\begin{equation}\label{HF}
\hat{H}_F\equiv\int|\mathbf{k}|\;\hat{a}_i^\dag(\mathbf{k})\;\hat{a}^i(\mathbf{k})\;\frac{d^3k}{(2\pi)^3}\;,
\end{equation}
i.e.
\begin{equation}\label{afree}
\hat{a}^i(\mathbf{k},t)=e^{-i|\mathbf{k}|t}\hat{a}^i(\mathbf{k})=e^{i\hat{H}_Ft}\;\hat{a}^i(\mathbf{k})\;e^{-i\hat{H}_Ft}\;.
\end{equation}

\section{Coherent states}
\label{coh}
For any function $f:\mathbb{R}^3\to\mathbb{C}^3$ subject to the constraint
\begin{equation}
k_i\,f^i(\mathbf{k})=0\qquad\forall\;\mathbf{k}\in\mathbb{R}^3\;,
\end{equation}
we define the unitary displacement operator
\begin{equation}
\hat{D}[f]\equiv\exp\left(\int\left(f_i(\mathbf{k})\,\hat{a}^{i\dag}(\mathbf{k})-f_i^*(\mathbf{k})\,\hat{a}^i(\mathbf{k})\right) \frac{d^3k}{(2\pi)^3}\right)\;,\qquad\hat{D}^\dag[f]=\hat{D}[-f]\;,
\end{equation}
acting on the ladder operators as
\begin{equation}\label{dispa}
\hat{D}^\dag[f]\;\hat{a}^i(\mathbf{k})\;\hat{D}[f]=\hat{a}^i(\mathbf{k})+f^i(\mathbf{k})\;.
\end{equation}
Their composition rule is
\begin{equation}\label{UU}
\hat{D}[f]\;\hat{D}[g]=\hat{D}[f+g]\;\exp{\left(\frac{1}{2}\int\left(f^i(\mathbf{k})\,g_i^*(\mathbf{k})-f_i^*(\mathbf{k})\,g^i(\mathbf{k})\right)\frac{d^3k}{(2\pi)^3}\right)}\;.
\end{equation}
We can now define the coherent states with a displacement operator acting on the vacuum state of the field:
\begin{equation}
|f\rangle\equiv\hat{D}[f]|0\rangle\in\mathcal{H}_F\;,
\end{equation}
that are eigenstates of the annihilation operators:
\begin{equation}
\hat{a}^i(\mathbf{k})|f\rangle=f^i(\mathbf{k})|f\rangle\;.
\end{equation}
Their overlap is
\begin{equation}\label{overlap}
\left|\langle f|g\rangle\right|^2=\exp\left(-\int\left|f(\mathbf{k})-g(\mathbf{k})\right|^2\frac{d^3k}{(2\pi)^3}\right)\;.
\end{equation}

\section{Radiation emitted by Alice's particle}
\label{rad}
We consider a classical charged particle coupled to the quantum electromagnetic field.
Let $\mathbf{J}$ be the classical current density associated to the trajectory of the particle. Looking at the time evolution equation for the ladder operators~\eqref{ait}, and recalling~\eqref{dispa} and~\eqref{afree}, it is easy to show that such evolution is provided by a displacement operator, i.e.
\begin{equation}
\hat{a}^i(\mathbf{k},t)=\hat{D}^\dag[f]\;e^{i\hat{H}_Ft}\;\hat{a}^i(\mathbf{k})\;e^{-i\hat{H}_Ft}\;\hat{D}[f]\;,
\end{equation}
where
\begin{equation}\label{falpha}
f^i(\mathbf{k})=\frac{i\,\Pi^i_{\phantom{i}j}(\mathbf{k})}{\sqrt{2|\mathbf{k}|}}\;\int_0^t e^{i|\mathbf{k}|t'} J^i(\mathbf{k},t')\;dt'\;.
\end{equation}
Since the ladder operators generate the whole observable algebra of $\mathcal{H}_F$, if the fields starts in the vacuum, its time-evolved state is the coherent state $e^{-i\hat{H}_Ft}|f\rangle$. Its overlap with the vacuum can be computed with~\eqref{overlap}:
\begin{equation}\label{cab}
\left|\langle 0|f\rangle\right|^2=\exp\left(-\int\left|f(\mathbf{k})\right|^2\frac{d^3k}{(2\pi)^3}\right)\;.
\end{equation}

We now consider a point particle carrying charge $q$ that starts in $\mathbf{x}=\mathbf{0}$ at $t=0$, and in a time $t_0$ is brought to the position $\mathbf{x}=\mathbf{d}$ with a trajectory described by $\mathbf{x}(t)$. The current density is then
\begin{equation}\label{Jp}
\mathbf{J}(\mathbf{k},t)=q\;\mathbf{v}(t)\;e^{-i\mathbf{k}\cdot\mathbf{x}(t)}\;,
\end{equation}
where $\mathbf{v}(t)\equiv\frac{d}{dt}\mathbf{x}(t)$ is the particle velocity. For wavelengths large with respect to the extension of the motion, i.e. for
\begin{equation}\label{longw}
|\mathbf{k}|\ll\frac{1}{d}\;,
\end{equation}
the phase factor in~\eqref{Jp} can be discarded, getting
\begin{equation}
\mathbf{J}(\mathbf{k},t)\simeq q\;\mathbf{v}(t)\;.
\end{equation}
We want to look at the state of the field after the particle has reached the new position $\mathbf{x}=\mathbf{d}$, i.e. for $t>t_0$. Since the velocity $\mathbf{v}(t)$ vanishes for $t\leq0$ and $t\geq t_0$, the displacement of~\eqref{falpha} becomes
\begin{equation}\label{fq}
f^i(\mathbf{k})=\frac{i\,q}{\sqrt{2|\mathbf{k}|}}\;\Pi^i_{\phantom{i}j}(\mathbf{k})\;v^j(\omega=|\mathbf{k}|)\;,
\end{equation}
where
\begin{equation}
\mathbf{v}(\omega)=\int_{-\infty}^\infty \mathbf{v}(t)\;e^{i\omega t}\;dt
\end{equation}
is the Fourier transform of the velocity. Putting~\eqref{fq} into~\eqref{cab}, the overlap becomes
\begin{equation}\label{ovw}
\left|\langle0|f\rangle\right|^2=\exp\left(-\frac{q^2}{6\pi^2}\int_0^\infty|\mathbf{v}(\omega)|^2\;\omega\;d\omega\right)\;.
\end{equation}
For simplicity we consider a one-dimensional motion, and we put the $x$ axis in the direction of $\mathbf{d}$. As an example, we take
\begin{equation}
x(t)=d\;\sin^2\left(\frac{\pi}{2}\,\frac{t}{t_0}\right)\qquad\text{for}\;0\leq t\leq t_0\;,
\end{equation}
satisfying the conditions
\begin{equation}
x(0)=0\;,\qquad x(t_0)=d\;,\qquad v(0)=v(t_0)=0\;.
\end{equation}
The Fourier transform of the velocity is
\begin{equation}\label{wif}
v(\omega)=e^\frac{i\omega t_0}{2}\;\frac{d\cos\frac{\omega t_0}{2}}{1-\frac{\omega^2t_0^2}{\pi^2}}\;,
\end{equation}
and the overlap
\begin{equation}\label{result}
\left|\langle0|f\rangle\right|^2 =\exp\left(-\pi\,\frac{\pi\mathrm{Si}(\pi)-2}{6}\;\frac{q^2}{q_{\mathrm{P}}^2}\;\frac{d^2}{c^2\,t_0^2}\right)\simeq \exp\left(-2\;\frac{q^2}{q_{\mathrm{P}}^2}\;\frac{d^2}{c^2\,t_0^2}\right)\;,
\end{equation}
where $\mathrm{Si}(x)$ is the sine integral function
\begin{equation}
\mathrm{Si}(x)\equiv\int_0^x\frac{\sin y}{y}dy\;.
\end{equation}
Looking at~\eqref{wif}, the dominant contribution to the integral in~\eqref{ovw} comes from the region $\omega\,t_0\lessapprox1$. The approximation in~\eqref{longw} is then valid iff $d\ll c\,t_0$, i.e. if the motion is not relativistic.

The final result~\eqref{result} tells us that for a fixed distance $d$, no photons are radiated if the motion lasts for at least
\begin{equation}\label{trad}
t_0\gtrsim \sqrt{2}\;\frac{q}{q_{\mathrm{P}}}\;\frac{d}{c}\;.
\end{equation}
Then, Alice can always create the coherent superposition used in the thought experiment without entangling her particle with the emitted photons provided she has enough time to do it. Besides, if Alice wants to perform the particular spin-dependent measurement described in the main text, she needs at least a time~\eqref{trad} to bring the state $|R\rangle$ back to $|L\rangle$ if she does not want to entangle with the emitted photons.

\section{Absence of entanglement with the static electric field}
\label{dof}
The first Maxwell's equation reads
\begin{equation}\label{A0}
\mathbf{k}^2\hat{V}(\mathbf{k})=\hat{\rho}(\mathbf{k})\;,
\end{equation}
and completely determines the electric potential operator $\hat{V}$ in terms of the charge density operator $\hat{\rho}$:
\begin{equation}\label{A0J2}
\hat{V}(\mathbf{k})=\frac{1}{\mathbf{k}^2}\,\hat{\rho}(\mathbf{k})\;.
\end{equation}
Putting together~\eqref{A0J2} and~\eqref{Et}, the electric field is given by
\begin{equation}\label{EA}
\hat{\mathbf{E}}(\mathbf{k})=-\frac{i\mathbf{k}}{\mathbf{k}^2}\;\hat{\rho}(\mathbf{k})-\frac{\partial}{\partial t}\hat{\mathbf{A}}(\mathbf{k})\;.
\end{equation}
Then, the longitudinal (i.e. proportional to $\mathbf{k}$) component of the electric field operator is determined by the charge-density operator, and acts on the Hilbert space of the particle alone. Therefore, even if the field is in its vacuum state~\eqref{vac}, the expectation value of the electric field is the static Coulomb electric field generated by the expectation value of the charge density, and hence depends on the particle wavefunction. This means that the state of the field alone does not contain all the information on the electric field, since its longitudinal component is encoded into the state of the particle.

Seen from a different perspective, the longitudinal component of the electric field is not a dynamical propagating degree of freedom, since it vanishes in absence of external charges and is completely determined by them, so there is no Hilbert space associated to it. The Hilbert space of the field contains only the degrees of freedom associated to the electromagnetic radiation, i.e. the magnetic field and the transverse (orthogonal to $\mathbf{k}$) component of the electric field. Then in a product state with the field part in the vacuum, only these components are in the vacuum mode, while there can be a static electric field depending on the state of the particle.

A final remark should be made about the choice of the gauge. Strictly speaking, the concept of particle-field entanglement is a gauge dependent concept and the previous discussion about the absence of entanglement between a static charge and the electromagnetic field is valid only in the Coulomb gauge. However, despite the initial state we consider would look entangled in a different gauge, obviously the estimation of the minimum discrimination time would not change.

\section{Locality}\label{secloc}
The wavefunction $\psi(\mathbf{x},t)$ of a particle carrying electric charge $q$ coupled to an electromagnetic field is invariant under the joint gauge transformation~\cite{cohen1989photons}
\begin{eqnarray}
\psi'(\mathbf{x},t) &=& e^{iq\Lambda(\mathbf{x},t)}\;\psi(\mathbf{x},t)\\
\mathbf{A}'(\mathbf{x},t) &=& \mathbf{A}(\mathbf{x},t)+\nabla\Lambda(\mathbf{x},t)\label{deltaA}\\
{V}'(\mathbf{x},t) &=& V(\mathbf{x},t)-\frac{\partial}{\partial t}\Lambda(\mathbf{x},t)\;.
\end{eqnarray}
The canonical momentum $\hat{\mathbf{P}}=-i\nabla$ is not gauge invariant, but transforms in the Heisenberg picture as
\begin{equation}\label{deltaP}
\hat{\mathbf{P}}'(t)=\hat{\mathbf{P}}(t)+q\nabla\Lambda\left(\hat{\mathbf{X}}(t),t\right)\;,
\end{equation}
and therefore Alice cannot measure it directly. The reader can easily check using~\eqref{deltaA} and~\eqref{deltaP} that the velocity operator given in the main text is gauge invariant, as it has to be. Alice can then measure directly the velocity, and reconstruct from it the canonical momentum. However, the relation between them
\begin{equation}\label{PV2}
\hat{\mathbf{P}}=m\;\hat{\mathbf{V}}+q\;\hat{\mathbf{A}}\left(\hat{\mathbf{X}}\right)
\end{equation}
contains the vector potential, that from~\eqref{deltaA} is not gauge invariant, and cannot be directly measured. In the Coulomb gauge, it is possible to invert~\eqref{Bt} and express the vector potential in terms of the magnetic field, that is gauge invariant and can be actually measured by Alice:
\begin{equation}\label{AB}
\hat{\mathbf{A}}(\mathbf{x})=\frac{1}{4\pi}\int\frac{\nabla\times\hat{\mathbf{B}}(\mathbf{y})}{\left|\mathbf{x}-\mathbf{y}\right|}\;d^3y\;.
\end{equation}
Putting together~\eqref{AB} and~\eqref{PV2}, we get
\begin{equation}\label{PVB}
\hat{\mathbf{P}}=m\;\hat{\mathbf{V}}+\frac{q}{4\pi}\int\frac{\nabla\times\hat{\mathbf{B}}(\mathbf{y})}{\left|\hat{\mathbf{X}}-\mathbf{y}\right|}\;d^3y\;.
\end{equation}
However, reconstructing the canonical momentum from the velocity with~\eqref{PVB} requires Alice to measure the magnetic field in the whole space. Even if she can allow for some error in the reconstruction, the region in which she has to measure the field increases with the charge $q$, and can extend well outside the support of the wavefunction.

\section{Detector switching time}\label{secDST}
The switching time could be due to many specific technical difficulties but, in principle, the only unavoidable limitation is imposed by relativistic causality. If the linear size of the trap is $L$, then Bob cannot remove the trap instantaneously since he needs at least a time $L/c$ to induce any change in the apparatus. Moreover the size of the trap cannot be arbitrarily small but should be at least larger than the position uncertainty of the test mass, {\it i.e.} $L \ge \Delta X$. Thus, a rough estimate of the minimum switching time is given up to numerical factors by
\begin{equation}
T_D \simeq \frac{\Delta X}{c}. \label{TD}
\end{equation}
If we want to take into account also this finite response time of the experimental apparatus, the causality equation
\eqref{causality} should be changed into
\begin{equation}
T_A + T_B +T_D \ge \frac{R}{c}, \label{causalityTD}
\end{equation}
where $T_A$ is Alice measurement time and $T_B$ is the entanglement generation time. We remind that the lower bounds for the measurement time that we obtained (for $T_D=0$) are:
\begin{equation}
 T_A \ge \frac{q \, d}{q_{\rm P} c }\;, \quad  q> q_{\rm P} \qquad  \left| \qquad T_A \ge \frac{m \, d}{m_{\rm P} c }\;, \quad m > m_{\rm P}\;,\right.
\end{equation}
as given by Eq.s \eqref{Tmass} and \eqref{Tcharge} and corresponding to a quantum superposition of a charge $q$ and mass $m$ respectively. For a finite switching time $T_D$, from
\eqref{TD} and \eqref{causalityTD} we get the weaker bounds
\begin{equation}
 T_A \ge \frac{q \, d}{q_{\rm P} c } - \frac{\Delta X}{c}\;, \quad  q> q_{\rm P}\;, \qquad  \left| \qquad T_A \ge \frac{m \, d}{m_{\rm P} c } - \frac{\Delta X}{c}\;, \quad m > m_{\rm P}\;.\right. \label{correction}
\end{equation}
However we remind that in the derivation of the optimal detection experiment we deduced that the trap should be as narrow as possible in order to minimize $\Delta X$ (ideally down to the charge radius or to the Planck length). In this regime we always have $d \gg
\Delta X$ and the correction terms appearing in Eq.s
\eqref{correction} are negligible.

\bibliographystyle{IEEEtran}
\bibliography{biblio}

% Generated by IEEEtran.bst, version: 1.12 (2007/01/11)
\begin{thebibliography}{100}
\providecommand{\url}[1]{#1}
\csname url@samestyle\endcsname
\providecommand{\newblock}{\relax}
\providecommand{\bibinfo}[2]{#2}
\providecommand{\BIBentrySTDinterwordspacing}{\spaceskip=0pt\relax}
\providecommand{\BIBentryALTinterwordstretchfactor}{4}
\providecommand{\BIBentryALTinterwordspacing}{\spaceskip=\fontdimen2\font plus
\BIBentryALTinterwordstretchfactor\fontdimen3\font minus
  \fontdimen4\font\relax}
\providecommand{\BIBforeignlanguage}[2]{{%
\expandafter\ifx\csname l@#1\endcsname\relax
\typeout{** WARNING: IEEEtran.bst: No hyphenation pattern has been}%
\typeout{** loaded for the language `#1'. Using the pattern for}%
\typeout{** the default language instead.}%
\else
\language=\csname l@#1\endcsname
\fi
#2}}
\providecommand{\BIBdecl}{\relax}
\BIBdecl

\bibitem{bennett1998quantum}
C.~H. Bennett and P.~W. Shor, ``Quantum information theory,'' \emph{IEEE
  transactions on information theory}, vol.~44, no.~6, pp. 2724--2742, 1998.

\bibitem{holevo2013quantum}
A.~S. Holevo, \emph{Quantum Systems, Channels, Information: A Mathematical
  Introduction}, ser. De Gruyter Studies in Mathematical Physics.\hskip 1em
  plus 0.5em minus 0.4em\relax De Gruyter, 2013.

\bibitem{wilde2013quantum}
M.~Wilde, \emph{Quantum Information Theory}.\hskip 1em plus 0.5em minus
  0.4em\relax Cambridge University Press, 2013.

\bibitem{nielsen2010quantum}
M.~Nielsen and I.~Chuang, \emph{Quantum Computation and Quantum Information:
  10th Anniversary Edition}.\hskip 1em plus 0.5em minus 0.4em\relax Cambridge
  University Press, 2010.

\bibitem{gordon1962quantum}
J.~P. Gordon, ``Quantum effects in communications systems,'' \emph{Proceedings
  of the IRE}, vol.~50, no.~9, pp. 1898--1908, 1962.

\bibitem{caves1994quantum}
C.~M. Caves and P.~D. Drummond, ``Quantum limits on bosonic communication
  rates,'' \emph{Reviews of Modern Physics}, vol.~66, no.~2, p. 481, 1994.

\bibitem{gisin2002quantum}
N.~Gisin, G.~Ribordy, W.~Tittel, and H.~Zbinden, ``Quantum cryptography,''
  \emph{Reviews of modern physics}, vol.~74, no.~1, p. 145, 2002.

\bibitem{braunstein2005quantum}
S.~L. Braunstein and P.~Van~Loock, ``Quantum information with continuous
  variables,'' \emph{Reviews of Modern Physics}, vol.~77, no.~2, p. 513, 2005.

\bibitem{weedbrook2012gaussian}
C.~Weedbrook, S.~Pirandola, R.~Garcia-Patron, N.~J. Cerf, T.~C. Ralph, J.~H.
  Shapiro, and S.~Lloyd, ``Gaussian quantum information,'' \emph{Reviews of
  Modern Physics}, vol.~84, no.~2, p. 621, 2012.

\bibitem{lieb1990gaussian}
E.~H. Lieb, ``Gaussian kernels have only gaussian maximizers,''
  \emph{Inventiones mathematicae}, vol. 102, no.~1, pp. 179--208, 1990.

\bibitem{holevo2015gaussian}
A.~S. Holevo, ``Gaussian optimizers and the additivity problem in quantum
  information theory,'' \emph{Uspekhi Matematicheskikh Nauk}, vol.~70, no.~2,
  pp. 141--180, 2015.

\bibitem{cover2006elements}
T.~Cover and J.~Thomas, \emph{Elements of Information Theory}, ser. A
  Wiley-Interscience publication.\hskip 1em plus 0.5em minus 0.4em\relax Wiley,
  2006.

\bibitem{chan2006free}
V.~W. Chan, ``Free-space optical communications,'' \emph{Lightwave Technology,
  Journal of}, vol.~24, no.~12, pp. 4750--4762, 2006.

\bibitem{giovannetti2014ultimate}
V.~Giovannetti, R.~Garc{\'\i}a-Patr{\'o}n, N.~Cerf, and A.~Holevo, ``Ultimate
  classical communication rates of quantum optical channels,'' \emph{Nature
  Photonics}, vol.~8, no.~10, pp. 796--800, 2014.

\bibitem{giovannetti2004minimum}
V.~Giovannetti, S.~Guha, S.~Lloyd, L.~Maccone, and J.~H. Shapiro, ``Minimum
  output entropy of bosonic channels: a conjecture,'' \emph{Physical Review A},
  vol.~70, no.~3, p. 032315, 2004.

\bibitem{giovannetti2015solution}
V.~Giovannetti, A.~Holevo, and R.~Garc{\'\i}a-Patr{\'o}n, ``A solution of
  gaussian optimizer conjecture for quantum channels,'' \emph{Communications in
  Mathematical Physics}, vol. 334, no.~3, pp. 1553--1571, 2015.

\bibitem{giovannetti2015majorization}
V.~Giovannetti, A.~S. Holevo, and A.~Mari, ``Majorization and additivity for
  multimode bosonic gaussian channels,'' \emph{Theoretical and Mathematical
  Physics}, vol. 182, no.~2, pp. 284--293, 2015.

\bibitem{mari2014quantum}
A.~Mari, V.~Giovannetti, and A.~S. Holevo, ``Quantum state majorization at the
  output of bosonic gaussian channels,'' \emph{Nature communications}, vol.~5,
  2014.

\bibitem{savov2015classical}
I.~Savov and M.~M. Wilde, ``Classical codes for quantum broadcast channels,''
  \emph{Information Theory, IEEE Transactions on}, vol.~61, no.~12, pp.
  7017--7028, 2015.

\bibitem{yard2011quantum}
J.~Yard, P.~Hayden, and I.~Devetak, ``Quantum broadcast channels,''
  \emph{Information Theory, IEEE Transactions on}, vol.~57, no.~10, pp.
  7147--7162, 2011.

\bibitem{guha2007classicalproc}
S.~Guha and J.~H. Shapiro, ``Classical information capacity of the bosonic
  broadcast channel,'' in \emph{Information Theory, 2007. ISIT 2007. IEEE
  International Symposium on}.\hskip 1em plus 0.5em minus 0.4em\relax IEEE,
  2007, pp. 1896--1900.

\bibitem{guha2007classical}
S.~Guha, J.~H. Shapiro, and B.~I. Erkmen, ``Classical capacity of bosonic
  broadcast communication and a minimum output entropy conjecture,''
  \emph{Physical Review A}, vol.~76, no.~3, p. 032303, 2007.

\bibitem{scarani2009security}
V.~Scarani, H.~Bechmann-Pasquinucci, N.~J. Cerf, M.~Du{\v{s}}ek,
  N.~L{\"u}tkenhaus, and M.~Peev, ``The security of practical quantum key
  distribution,'' \emph{Reviews of modern physics}, vol.~81, no.~3, p. 1301,
  2009.

\bibitem{wilde2012public}
M.~M. Wilde and M.-H. Hsieh, ``Public and private resource trade-offs for a
  quantum channel,'' \emph{Quantum Information Processing}, vol.~11, no.~6, pp.
  1465--1501, 2012.

\bibitem{wilde2012quantum}
M.~M. Wilde, P.~Hayden, and S.~Guha, ``Quantum trade-off coding for bosonic
  communication,'' \emph{Physical Review A}, vol.~86, no.~6, p. 062306, 2012.

\bibitem{wilde2012information}
------, ``Information trade-offs for optical quantum communication,''
  \emph{Physical Review Letters}, vol. 108, no.~14, p. 140501, 2012.

\bibitem{guha2008capacity}
S.~Guha, J.~H. Shapiro, and B.~Erkmen, ``Capacity of the bosonic wiretap
  channel and the entropy photon-number inequality,'' in \emph{Information
  Theory, 2008. ISIT 2008. IEEE International Symposium on}.\hskip 1em plus
  0.5em minus 0.4em\relax IEEE, 2008, pp. 91--95.

\bibitem{guha2008entropy}
S.~Guha, B.~Erkmen, and J.~H. Shapiro, ``The entropy photon-number inequality
  and its consequences,'' in \emph{Information Theory and Applications
  Workshop, 2008}.\hskip 1em plus 0.5em minus 0.4em\relax IEEE, 2008, pp.
  128--130.

\bibitem{dembo1991information}
A.~Dembo, T.~M. Cover, and J.~Thomas, ``Information theoretic inequalities,''
  \emph{Information Theory, IEEE Transactions on}, vol.~37, no.~6, pp.
  1501--1518, 1991.

\bibitem{gardner2002brunn}
R.~Gardner, ``The brunn-minkowski inequality,'' \emph{Bulletin of the American
  Mathematical Society}, vol.~39, no.~3, pp. 355--405, 2002.

\bibitem{shannon2001mathematical}
C.~E. Shannon, ``A mathematical theory of communication,'' \emph{ACM SIGMOBILE
  Mobile Computing and Communications Review}, vol.~5, no.~1, pp. 3--55, 2001.

\bibitem{stam1959some}
A.~Stam, ``Some inequalities satisfied by the quantities of information of
  fisher and shannon,'' \emph{Information and Control}, vol.~2, no.~2, pp.
  101--112, 1959.

\bibitem{verdu2006simple}
S.~Verd{\'u} and D.~Guo, ``A simple proof of the entropy-power inequality,''
  \emph{IEEE Transactions on Information Theory}, vol.~52, no.~5, pp.
  2165--2166, 2006.

\bibitem{rioul2011information}
O.~Rioul, ``Information theoretic proofs of entropy power inequalities,''
  \emph{Information Theory, IEEE Transactions on}, vol.~57, no.~1, pp. 33--55,
  2011.

\bibitem{konig2013limits}
R.~K{\"o}nig and G.~Smith, ``Limits on classical communication from quantum
  entropy power inequalities,'' \emph{Nature Photonics}, vol.~7, no.~2, pp.
  142--146, 2013.

\bibitem{konig2014entropy}
------, ``The entropy power inequality for quantum systems,'' \emph{IEEE
  Transactions on Information Theory}, vol.~60, no.~3, pp. 1536--1548, 2014.

\bibitem{de2014generalization}
G.~De~Palma, A.~Mari, and V.~Giovannetti, ``A generalization of the entropy
  power inequality to bosonic quantum systems,'' \emph{Nature Photonics},
  vol.~8, no.~12, pp. 958--964, 2014.

\bibitem{de2015multimode}
G.~De~Palma, A.~Mari, S.~Lloyd, and V.~Giovannetti, ``Multimode quantum entropy
  power inequality,'' \emph{Physical Review A}, vol.~91, no.~3, p. 032320,
  2015.

\bibitem{pusz1978passive}
W.~Pusz and S.~Woronowicz, ``Passive states and kms states for general quantum
  systems,'' \emph{Communications in Mathematical Physics}, vol.~58, no.~3, pp.
  273--290, 1978.

\bibitem{lenard1978thermodynamical}
A.~Lenard, ``Thermodynamical proof of the gibbs formula for elementary quantum
  systems,'' \emph{Journal of Statistical Physics}, vol.~19, no.~6, pp.
  575--586, 1978.

\bibitem{gorecki1980passive}
J.~Gorecki and W.~Pusz, ``Passive states for finite classical systems,''
  \emph{Letters in Mathematical Physics}, vol.~4, no.~6, pp. 433--443, 1980.

\bibitem{vinjanampathy2015quantum}
S.~Vinjanampathy and J.~Anders, ``Quantum thermodynamics,'' \emph{Contemporary
  Physics}, vol.~0, no.~0, pp. 1--35, 2016.

\bibitem{goold2015role}
J.~Goold, M.~Huber, A.~Riera, L.~del Rio, and P.~Skrzypczyk, ``The role of
  quantum information in thermodynamics --- a topical review,'' \emph{Journal
  of Physics A: Mathematical and Theoretical}, vol.~49, no.~14, p. 143001,
  2016.

\bibitem{binder2015quantum}
F.~Binder, S.~Vinjanampathy, K.~Modi, and J.~Goold, ``Quantum thermodynamics of
  general quantum processes,'' \emph{Physical Review E}, vol.~91, no.~3, p.
  032119, 2015.

\bibitem{de2015passive}
G.~De~Palma, D.~Trevisan, and V.~Giovannetti, ``Passive states optimize the
  output of bosonic gaussian quantum channels,'' \emph{IEEE Transactions on
  Information Theory}, vol.~62, no.~5, pp. 2895--2906, May 2016.

\bibitem{de2016gaussian}
------, ``Gaussian states minimize the output entropy of the one-mode quantum
  attenuator,'' \emph{IEEE Transactions on Information Theory}, vol.~63, no.~1,
  pp. 728--737, 2017.

\bibitem{de2016passive}
G.~De~Palma, A.~Mari, S.~Lloyd, and V.~Giovannetti, ``Passive states as optimal
  inputs for single-jump lossy quantum channels,'' \emph{Physical Review A},
  vol.~93, no.~6, p. 062328, 2016.

\bibitem{de2014classical}
G.~De~Palma, A.~Mari, and V.~Giovannetti, ``Classical capacity of gaussian
  thermal memory channels,'' \emph{Physical Review A}, vol.~90, no.~4, p.
  042312, 2014.

\bibitem{de2015normal}
G.~De~Palma, A.~Mari, V.~Giovannetti, and A.~S. Holevo, ``Normal form
  decomposition for gaussian-to-gaussian superoperators,'' \emph{Journal of
  Mathematical Physics}, vol.~56, no.~5, p. 052202, 2015.

\bibitem{deutsch1991quantum}
J.~Deutsch, ``Quantum statistical mechanics in a closed system,''
  \emph{Physical Review A}, vol.~43, no.~4, p. 2046, 1991.

\bibitem{gogolin2015equilibration}
C.~Gogolin and J.~Eisert, ``Equilibration, thermalisation, and the emergence of
  statistical mechanics in closed quantum systems,'' \emph{Reports on Progress
  in Physics}, vol.~79, no.~5, p. 056001, 2016.

\bibitem{de2015necessity}
G.~De~Palma, A.~Serafini, V.~Giovannetti, and M.~Cramer, ``Necessity of
  eigenstate thermalization,'' \emph{Physical Review Letters}, vol. 115,
  no.~22, p. 220401, 2015.

\bibitem{penrose1996gravity}
R.~Penrose, ``On gravity's role in quantum state reduction,'' \emph{General
  relativity and gravitation}, vol.~28, no.~5, pp. 581--600, 1996.

\bibitem{ghirardi1986unified}
G.~C. Ghirardi, A.~Rimini, and T.~Weber, ``Unified dynamics for microscopic and
  macroscopic systems,'' \emph{Physical Review D}, vol.~34, no.~2, p. 470,
  1986.

\bibitem{diosi1989models}
L.~Diosi, ``Models for universal reduction of macroscopic quantum
  fluctuations,'' \emph{Physical Review A}, vol.~40, no.~3, p. 1165, 1989.

\bibitem{karolyhazy1966gravitation}
F.~Karolyhazy, ``Gravitation and quantum mechanics of macroscopic objects,''
  \emph{Il Nuovo Cimento A}, vol.~42, no.~2, pp. 390--402, 1966.

\bibitem{bassi2013models}
A.~Bassi, K.~Lochan, S.~Satin, T.~P. Singh, and H.~Ulbricht, ``Models of
  wave-function collapse, underlying theories, and experimental tests,''
  \emph{Reviews of Modern Physics}, vol.~85, no.~2, p. 471, 2013.

\bibitem{mari2015experiments}
A.~Mari, G.~De~Palma, and V.~Giovannetti, ``Experiments testing macroscopic
  quantum superpositions must be slow,'' \emph{Scientific Reports}, vol.~6, p.
  22777, 2016.

\bibitem{ferraro2005gaussian}
A.~Ferraro, S.~Olivares, and M.~G. Paris, ``Gaussian states in continuous
  variable quantum information,'' \emph{arXiv preprint quant-ph/0503237}, 2005.

\bibitem{barnett2002methods}
S.~Barnett and P.~Radmore, \emph{Methods in Theoretical Quantum Optics}, ser.
  Oxford Series in Optical and Imaging Sciences.\hskip 1em plus 0.5em minus
  0.4em\relax Clarendon Press, 2002.

\bibitem{dutta1995real}
Arvind, B.~Dutta, N.~Mukunda, and R.~Simon, ``The real symplectic groups in
  quantum mechanics and optics,'' \emph{Pramana}, vol.~45, no.~6, pp. 471--497,
  1995.

\bibitem{demoen1977completely}
B.~Demoen, P.~Vanheuverzwijn, and A.~Verbeure, ``Completely positive maps on
  the ccr-algebra,'' \emph{Letters in Mathematical Physics}, vol.~2, no.~2, pp.
  161--166, 1977.

\bibitem{fannes1976quasi}
M.~Fannes, ``Quasi-free states and automorphisms of the ccr-algebra,''
  \emph{Communications in Mathematical Physics}, vol.~51, no.~1, pp. 55--66,
  1976.

\bibitem{giedke2002characterization}
G.~Giedke and J.~I. Cirac, ``Characterization of gaussian operations and
  distillation of gaussian states,'' \emph{Physical Review A}, vol.~66, no.~3,
  p. 032316, 2002.

\bibitem{holevo2016constrained}
A.~Holevo, ``On the constrained classical capacity of infinite-dimensional
  covariant quantum channels,'' \emph{Journal of Mathematical Physics},
  vol.~57, no.~1, p. 015203, 2016.

\bibitem{wolf2006extremality}
M.~M. Wolf, G.~Giedke, and J.~I. Cirac, ``Extremality of gaussian quantum
  states,'' \emph{Physical Review Letters}, vol.~96, no.~8, p. 080502, 2006.

\bibitem{holevo2001evaluating}
A.~S. Holevo and R.~F. Werner, ``Evaluating capacities of bosonic gaussian
  channels,'' \emph{Physical Review A}, vol.~63, no.~3, p. 032312, 2001.

\bibitem{marshall2010inequalities}
A.~Marshall, I.~Olkin, and B.~Arnold, \emph{Inequalities: Theory of
  Majorization and Its Applications}, ser. Springer Series in Statistics.\hskip
  1em plus 0.5em minus 0.4em\relax Springer New York, 2010.

\bibitem{wehrl1974chaotic}
A.~Wehrl, ``How chaotic is a state of a quantum system?'' \emph{Reports on
  Mathematical Physics}, vol.~6, no.~1, pp. 15--28, 1974.

\bibitem{blachman1965convolution}
N.~M. Blachman, ``The convolution inequality for entropy powers,''
  \emph{Information Theory, IEEE Transactions on}, vol.~11, no.~2, pp.
  267--271, 1965.

\bibitem{guo2006proof}
D.~Guo, S.~Shamai, and S.~Verd{\'u}, ``Proof of entropy power inequalities via
  mmse,'' in \emph{Information Theory, 2006 IEEE International Symposium
  on}.\hskip 1em plus 0.5em minus 0.4em\relax IEEE, 2006, pp. 1011--1015.

\bibitem{bergmans1974simple}
P.~P. Bergmans, ``A simple converse for broadcast channels with additive white
  gaussian noise (corresp.),'' \emph{Information Theory, IEEE Transactions on},
  vol.~20, no.~2, pp. 279--280, 1974.

\bibitem{leung1978gaussian}
S.~K. Leung-Yan-Cheong and M.~E. Hellman, ``The gaussian wire-tap channel,''
  \emph{Information Theory, IEEE Transactions on}, vol.~24, no.~4, pp.
  451--456, 1978.

\bibitem{barron1986entropy}
A.~R. Barron, ``Entropy and the central limit theorem,'' \emph{The Annals of
  probability}, pp. 336--342, 1986.

\bibitem{hardy1952inequalities}
G.~Hardy, J.~Littlewood, and G.~P{\'o}lya, \emph{Inequalities}, ser. Cambridge
  Mathematical Library.\hskip 1em plus 0.5em minus 0.4em\relax Cambridge
  University Press, 1952.

\bibitem{walls2012quantum}
D.~Walls and G.~Milburn, \emph{Quantum Optics}, ser. Springer Study
  Edition.\hskip 1em plus 0.5em minus 0.4em\relax Springer Berlin Heidelberg,
  2012.

\bibitem{vedral2002role}
V.~Vedral, ``The role of relative entropy in quantum information theory,''
  \emph{Reviews of Modern Physics}, vol.~74, no.~1, p. 197, 2002.

\bibitem{helstrom1967minimum}
C.~Helstrom, ``Minimum mean-squared error of estimates in quantum statistics,''
  \emph{Physics Letters A}, vol.~25, no.~2, pp. 101--102, 1967.

\bibitem{paris2009quantum}
M.~G. Paris, ``Quantum estimation for quantum technology,'' \emph{International
  Journal of Quantum Information}, vol.~7, no. supp01, pp. 125--137, 2009.

\bibitem{cramer2016mathematical}
H.~Cram\'er, \emph{Mathematical Methods of Statistics (PMS-9)}, ser. Princeton
  Mathematical Series.\hskip 1em plus 0.5em minus 0.4em\relax Princeton
  University Press, 2016.

\bibitem{kagan2008some}
A.~Kagan and T.~Yu, ``Some inequalities related to the stam inequality,''
  \emph{Applications of Mathematics}, vol.~53, no.~3, pp. 195--205, 2008.

\bibitem{giedke2003entanglement}
G.~Giedke, M.~M. Wolf, O.~Kr{\"u}ger, R.~Werner, and J.~I. Cirac,
  ``Entanglement of formation for symmetric gaussian states,'' \emph{Physical
  Review Letters}, vol.~91, no.~10, p. 107901, 2003.

\bibitem{pirandola2014optimality}
S.~Pirandola, G.~Spedalieri, S.~L. Braunstein, N.~J. Cerf, and S.~Lloyd,
  ``Optimality of gaussian discord,'' \emph{Physical Review Letters}, vol. 113,
  no.~14, p. 140405, 2014.

\bibitem{modi2012classical}
K.~Modi, A.~Brodutch, H.~Cable, T.~Paterek, and V.~Vedral, ``The
  classical-quantum boundary for correlations: discord and related measures,''
  \emph{Reviews of Modern Physics}, vol.~84, no.~4, p. 1655, 2012.

\bibitem{bardhan2015strong}
B.~R. Bardhan, R.~Garcia-Patron, M.~M. Wilde, and A.~Winter, ``Strong converse
  for the classical capacity of optical quantum communication channels,''
  \emph{Information Theory, IEEE Transactions on}, vol.~61, no.~4, pp.
  1842--1850, 2015.

\bibitem{holevo2007one}
A.~S. Holevo, ``One-mode quantum gaussian channels: Structure and quantum
  capacity,'' \emph{Problems of Information Transmission}, vol.~43, no.~1, pp.
  1--11, 2007.

\bibitem{renyi1956characterization}
A.~R{\'e}nyi, ``A characterization of poisson processes,'' \emph{Magyar Tud.
  Akad. Mat. Kutat{\'o} Int. K{\"o}zl}, vol.~1, pp. 519--527, 1956.

\bibitem{harremoes2007thinning}
P.~Harremo{\"e}s, O.~Johnson, and I.~Kontoyiannis, ``Thinning and the law of
  small numbers,'' in \emph{Information Theory, 2007. ISIT 2007. IEEE
  International Symposium on}.\hskip 1em plus 0.5em minus 0.4em\relax IEEE,
  2007, pp. 1491--1495.

\bibitem{yu2009monotonic}
Y.~Yu, ``Monotonic convergence in an information-theoretic law of small
  numbers,'' \emph{Information Theory, IEEE Transactions on}, vol.~55, no.~12,
  pp. 5412--5422, 2009.

\bibitem{harremoes2010thinning}
P.~Harremo{\"e}s, O.~Johnson, and I.~Kontoyiannis, ``Thinning, entropy, and the
  law of thin numbers,'' \emph{Information Theory, IEEE Transactions on},
  vol.~56, no.~9, pp. 4228--4244, 2010.

\bibitem{yu2009concavity}
Y.~Yu and O.~Johnson, ``Concavity of entropy under thinning,'' in
  \emph{Information Theory, 2009. ISIT 2009. IEEE International Symposium
  on}.\hskip 1em plus 0.5em minus 0.4em\relax IEEE, 2009, pp. 144--148.

\bibitem{johnson2010monotonicity}
O.~Johnson and Y.~Yu, ``Monotonicity, thinning, and discrete versions of the
  entropy power inequality,'' \emph{Information Theory, IEEE Transactions on},
  vol.~56, no.~11, pp. 5387--5395, 2010.

\bibitem{holevo2011probabilistic}
A.~Holevo, \emph{Probabilistic and Statistical Aspects of Quantum Theory}, ser.
  Publications of the Scuola Normale Superiore.\hskip 1em plus 0.5em minus
  0.4em\relax Scuola Normale Superiore, 2011.

\bibitem{ivan2011operator}
J.~S. Ivan, K.~K. Sabapathy, and R.~Simon, ``Operator-sum representation for
  bosonic gaussian channels,'' \emph{Physical Review A}, vol.~84, no.~4, p.
  042311, 2011.

\bibitem{bhatia2013matrix}
R.~Bhatia, \emph{Matrix Analysis}, ser. Graduate Texts in Mathematics.\hskip
  1em plus 0.5em minus 0.4em\relax Springer New York, 2013.

\bibitem{fan1951maximum}
K.~Fan, ``Maximum properties and inequalities for the eigenvalues of completely
  continuous operators,'' \emph{Proceedings of the National Academy of Sciences
  of the United States of America}, vol.~37, no.~11, p. 760, 1951.

\bibitem{koenig2009strong}
R.~K{\"o}nig and S.~Wehner, ``A strong converse for classical channel coding
  using entangled inputs,'' \emph{Physical Review Letters}, vol. 103, no.~7, p.
  070504, 2009.

\bibitem{cox2015ideals}
D.~Cox, J.~Little, and D.~O'Shea, \emph{Ideals, Varieties, and Algorithms: An
  Introduction to Computational Algebraic Geometry and Commutative Algebra},
  ser. Undergraduate Texts in Mathematics.\hskip 1em plus 0.5em minus
  0.4em\relax Springer International Publishing, 2015.

\bibitem{horn2012matrix}
R.~Horn and C.~Johnson, \emph{Matrix Analysis}, ser. Matrix Analysis.\hskip 1em
  plus 0.5em minus 0.4em\relax Cambridge University Press, 2012.

\bibitem{ames1997inequalities}
W.~Ames and B.~Pachpatte, \emph{Inequalities for Differential and Integral
  Equations}, ser. Mathematics in Science and Technology.\hskip 1em plus 0.5em
  minus 0.4em\relax Elsevier Science, 1997.

\bibitem{kuhn1951}
H.~W. Kuhn and A.~W. Tucker, ``Nonlinear programming,'' in \emph{Proceedings of
  the Second Berkeley Symposium on Mathematical Statistics and
  Probability}.\hskip 1em plus 0.5em minus 0.4em\relax Berkeley, Calif.:
  University of California Press, 1951, pp. 481--492.

\bibitem{janzing2006computational}
D.~Janzing, ``On the computational power of molecular heat engines,''
  \emph{Journal of statistical physics}, vol. 122, no.~3, pp. 531--556, 2006.

\bibitem{gour2015resource}
G.~Gour, M.~P. M{\"u}ller, V.~Narasimhachar, R.~W. Spekkens, and N.~Y. Halpern,
  ``The resource theory of informational nonequilibrium in thermodynamics,''
  \emph{Physics Reports}, 2015.

\bibitem{horodecki2013fundamental}
M.~Horodecki and J.~Oppenheim, ``Fundamental limitations for quantum and
  nanoscale thermodynamics,'' \emph{Nature communications}, vol.~4, 2013.

\bibitem{nielsen1999conditions}
M.~A. Nielsen, ``Conditions for a class of entanglement transformations,''
  \emph{Physical Review Letters}, vol.~83, no.~2, p. 436, 1999.

\bibitem{nielsen2001majorization}
M.~A. Nielsen and G.~Vidal, ``Majorization and the interconversion of bipartite
  states.'' \emph{Quantum Information \& Computation}, vol.~1, no.~1, pp.
  76--93, 2001.

\bibitem{jabbour2015majorization}
M.~G. Jabbour, R.~Garc{\'\i}a-Patr{\'o}n, and N.~J. Cerf, ``Majorization
  preservation of gaussian bosonic channels,'' \emph{arXiv preprint
  arXiv:1512.08225}, 2015.

\bibitem{audenaert2015entropy}
K.~Audenaert, N.~Datta, and M.~Ozols, ``Entropy power inequalities for
  qudits,'' \emph{arXiv preprint arXiv:1503.04213}, 2015.

\bibitem{schaller2014open}
G.~Schaller, \emph{Open Quantum Systems Far from Equilibrium}.\hskip 1em plus
  0.5em minus 0.4em\relax Springer My Copy UK, 2014.

\bibitem{breuer2007theory}
H.~Breuer and F.~Petruccione, \emph{The Theory of Open Quantum Systems}.\hskip
  1em plus 0.5em minus 0.4em\relax OUP Oxford, 2007.

\bibitem{schumacher1997sending}
B.~Schumacher and M.~D. Westmoreland, ``Sending classical information via noisy
  quantum channels,'' \emph{Physical Review A}, vol.~56, no.~1, p. 131, 1997.

\bibitem{cerf2007quantum}
N.~Cerf, G.~Leuchs, and E.~Polzik, \emph{Quantum Information with Continuous
  Variables of Atoms and Light}.\hskip 1em plus 0.5em minus 0.4em\relax
  Imperial College Press, 2007.

\bibitem{caruso2014quantum}
F.~Caruso, V.~Giovannetti, C.~Lupo, and S.~Mancini, ``Quantum channels and
  memory effects,'' \emph{Reviews of Modern Physics}, vol.~86, no.~4, p. 1203,
  2014.

\bibitem{gallager2014information}
R.~Gallager, \emph{Information Theory and Reliable Communication: Course held
  at the Department for Automation and Information July 1970}, ser. CISM
  International Centre for Mechanical Sciences.\hskip 1em plus 0.5em minus
  0.4em\relax Springer Vienna, 2014.

\bibitem{banaszek2004experimental}
K.~Banaszek, A.~Dragan, W.~Wasilewski, and C.~Radzewicz, ``Experimental
  demonstration of entanglement-enhanced classical communication over a quantum
  channel with correlated noise,'' \emph{Physical Review Letters}, vol.~92,
  no.~25, p. 257901, 2004.

\bibitem{demkowicz2007effects}
R.~Demkowicz-Dobrza{\'n}ski, P.~Kolenderski, and K.~Banaszek, ``Effects of
  imperfect noise correlations on decoherence-free subsystems: Su (2) diffusion
  model,'' \emph{Physical Review A}, vol.~76, no.~2, p. 022302, 2007.

\bibitem{paladino2002decoherence}
E.~Paladino, L.~Faoro, G.~Falci, and R.~Fazio, ``Decoherence and 1/f noise in
  josephson qubits,'' \emph{Physical Review Letters}, vol.~88, no.~22, p.
  228304, 2002.

\bibitem{hu2007controllable}
Y.~Hu, Y.-F. Xiao, Z.-W. Zhou, and G.-C. Guo, ``Controllable coupling of
  superconducting transmission-line resonators,'' \emph{Physical Review A},
  vol.~75, no.~1, p. 012314, 2007.

\bibitem{kretschmann2005quantum}
D.~Kretschmann and R.~F. Werner, ``Quantum channels with memory,''
  \emph{Physical Review A}, vol.~72, no.~6, p. 062323, 2005.

\bibitem{datta2007coding}
N.~Datta and T.~C. Dorlas, ``The coding theorem for a class of quantum channels
  with long-term memory,'' \emph{Journal of Physics A: Mathematical and
  Theoretical}, vol.~40, no.~28, p. 8147, 2007.

\bibitem{d2007quantum}
A.~D'Arrigo, G.~Benenti, and G.~Falci, ``Quantum capacity of dephasing channels
  with memory,'' \emph{New Journal of Physics}, vol.~9, no.~9, p. 310, 2007.

\bibitem{giovannetti2005dynamical}
V.~Giovannetti, ``A dynamical model for quantum memory channels,''
  \emph{Journal of Physics A: Mathematical and General}, vol.~38, no.~50, p.
  10989, 2005.

\bibitem{giovannetti2005bosonic}
V.~Giovannetti and S.~Mancini, ``Bosonic memory channels,'' \emph{Physical
  Review A}, vol.~71, no.~6, p. 062304, 2005.

\bibitem{cerf2005quantum}
N.~J. Cerf, J.~Clavareau, C.~Macchiavello, and J.~Roland, ``Quantum
  entanglement enhances the capacity of bosonic channels with memory,''
  \emph{Physical Review A}, vol.~72, no.~4, p. 042330, 2005.

\bibitem{lupo2010capacities}
C.~Lupo, V.~Giovannetti, and S.~Mancini, ``Capacities of lossy bosonic memory
  channels,'' \emph{Physical Review Letters}, vol. 104, no.~3, p. 030501, 2010.

\bibitem{lupo2010memory}
------, ``Memory effects in attenuation and amplification quantum processes,''
  \emph{Physical Review A}, vol.~82, no.~3, p. 032312, 2010.

\bibitem{pilyavets2012methods}
O.~V. Pilyavets, C.~Lupo, and S.~Mancini, ``Methods for estimating capacities
  and rates of gaussian quantum channels,'' \emph{Information Theory, IEEE
  Transactions on}, vol.~58, no.~9, pp. 6126--6164, 2012.

\bibitem{schafer2009capacity}
J.~Sch{\"a}fer, D.~Daems, E.~Karpov, and N.~J. Cerf, ``Capacity of a bosonic
  memory channel with gauss-markov noise,'' \emph{Physical Review A}, vol.~80,
  no.~6, p. 062313, 2009.

\bibitem{schafer2011gaussian}
J.~Sch{\"a}fer, E.~Karpov, and N.~J. Cerf, ``Gaussian capacity of the quantum
  bosonic memory channel with additive correlated gaussian noise,''
  \emph{Physical Review A}, vol.~84, no.~3, p. 032318, 2011.

\bibitem{schafer2012gaussian}
------, ``Gaussian matrix-product states for coding in bosonic communication
  channels,'' \emph{Physical Review A}, vol.~85, no.~1, p. 012322, 2012.

\bibitem{lupo2009forgetfulness}
C.~Lupo, L.~Memarzadeh, and S.~Mancini, ``Forgetfulness of continuous markovian
  quantum channels,'' \emph{Physical Review A}, vol.~80, no.~4, p. 042328,
  2009.

\bibitem{tanzilli2002ppln}
S.~Tanzilli, W.~Tittel, H.~De~Riedmatten, H.~Zbinden, P.~Baldi, M.~DeMicheli,
  D.~B. Ostrowsky, and N.~Gisin, ``Ppln waveguide for quantum communication,''
  \emph{The European Physical Journal D-Atomic, Molecular, Optical and Plasma
  Physics}, vol.~18, no.~2, pp. 155--160, 2002.

\bibitem{lang2013correlations}
C.~Lang, C.~Eichler, L.~Steffen, J.~Fink, M.~Woolley, A.~Blais, and
  A.~Wallraff, ``Correlations, indistinguishability and entanglement in
  hong-ou-mandel experiments at microwave frequencies,'' \emph{Nature Physics},
  vol.~9, no.~6, pp. 345--348, 2013.

\bibitem{kohler2002terahertz}
R.~K{\"o}hler, A.~Tredicucci, F.~Beltram, H.~E. Beere, E.~H. Linfield, A.~G.
  Davies, D.~A. Ritchie, R.~C. Iotti, and F.~Rossi, ``Terahertz
  semiconductor-heterostructure laser,'' \emph{Nature}, vol. 417, no. 6885, pp.
  156--159, 2002.

\bibitem{fedrizzi2009high}
A.~Fedrizzi, R.~Ursin, T.~Herbst, M.~Nespoli, R.~Prevedel, T.~Scheidl,
  F.~Tiefenbacher, T.~Jennewein, and A.~Zeilinger, ``High-fidelity transmission
  of entanglement over a high-loss free-space channel,'' \emph{Nature Physics},
  vol.~5, no.~6, pp. 389--392, 2009.

\bibitem{gray2006toeplitz}
R.~Gray, \emph{Toeplitz and Circulant Matrices: A Review}, ser. Foundations and
  Trends in Technology.\hskip 1em plus 0.5em minus 0.4em\relax Now Publishers,
  2006.

\bibitem{giovannetti2006quantum}
V.~Giovannetti, S.~Lloyd, and L.~Maccone, ``Quantum metrology,'' \emph{Physical
  Review Letters}, vol.~96, no.~1, p. 010401, 2006.

\bibitem{giovannetti2004classical}
V.~Giovannetti, S.~Guha, S.~Lloyd, L.~Maccone, J.~H. Shapiro, and H.~P. Yuen,
  ``Classical capacity of the lossy bosonic channel: The exact solution,''
  \emph{Physical Review Letters}, vol.~92, no.~2, p. 027902, 2004.

\bibitem{wolf2007quantum}
M.~M. Wolf, D.~P{\'e}rez-Garc{\'\i}a, and G.~Giedke, ``Quantum capacities of
  bosonic channels,'' \emph{Physical Review Letters}, vol.~98, no.~13, p.
  130501, 2007.

\bibitem{brocker1995mixed}
T.~Br{\"o}cker and R.~Werner, ``Mixed states with positive wigner functions,''
  \emph{Journal of mathematical physics}, vol.~36, no.~1, pp. 62--75, 1995.

\bibitem{dias2009narcowich}
N.~C. Dias and J.~N. Prata, ``The narcowich-wigner spectrum of a pure state,''
  \emph{Reports on Mathematical Physics}, vol.~63, no.~1, pp. 43--54, 2009.

\bibitem{horodecki2009quantum}
R.~Horodecki, P.~Horodecki, M.~Horodecki, and K.~Horodecki, ``Quantum
  entanglement,'' \emph{Reviews of Modern Physics}, vol.~81, no.~2, p. 865,
  2009.

\bibitem{filip2011detecting}
R.~Filip and L.~Mi{\v{s}}ta~Jr, ``Detecting quantum states with a positive
  wigner function beyond mixtures of gaussian states,'' \emph{Physical Review
  Letters}, vol. 106, no.~20, p. 200401, 2011.

\bibitem{genoni2013detecting}
M.~G. Genoni, M.~L. Palma, T.~Tufarelli, S.~Olivares, M.~Kim, and M.~G. Paris,
  ``Detecting quantum non-gaussianity via the wigner function,'' \emph{Physical
  Review A}, vol.~87, no.~6, p. 062104, 2013.

\bibitem{hughes2014quantum}
C.~Hughes, M.~G. Genoni, T.~Tufarelli, M.~G. Paris, and M.~Kim, ``Quantum
  non-gaussianity witnesses in phase space,'' \emph{Physical Review A},
  vol.~90, no.~1, p. 013810, 2014.

\bibitem{mari2011directly}
A.~Mari, K.~Kieling, B.~M. Nielsen, E.~S. Polzik, and J.~Eisert, ``Directly
  estimating nonclassicality,'' \emph{Physical Review Letters}, vol. 106,
  no.~1, p. 010403, 2011.

\bibitem{kiesel2011nonclassicality}
T.~Kiesel, W.~Vogel, M.~Bellini, and A.~Zavatta, ``Nonclassicality
  quasiprobability of single-photon-added thermal states,'' \emph{Physical
  Review A}, vol.~83, no.~3, p. 032116, 2011.

\bibitem{richter2002nonclassicality}
T.~Richter and W.~Vogel, ``Nonclassicality of quantum states: A hierarchy of
  observable conditions,'' \emph{Physical Review Letters}, vol.~89, no.~28, p.
  283601, 2002.

\bibitem{jevzek2011experimental}
M.~Je{\v{z}}ek, I.~Straka, M.~Mi{\v{c}}uda, M.~Du{\v{s}}ek,
  J.~Fiur{\'a}{\v{s}}ek, and R.~Filip, ``Experimental test of the quantum
  non-gaussian character of a heralded single-photon state,'' \emph{Physical
  Review Letters}, vol. 107, no.~21, p. 213602, 2011.

\bibitem{genoni2007measure}
M.~G. Genoni, M.~G. Paris, and K.~Banaszek, ``Measure of the non-gaussian
  character of a quantum state,'' \emph{Physical Review A}, vol.~76, no.~4, p.
  042327, 2007.

\bibitem{genoni2008quantifying}
------, ``Quantifying the non-gaussian character of a quantum state by quantum
  relative entropy,'' \emph{Physical Review A}, vol.~78, no.~6, p. 060303,
  2008.

\bibitem{genoni2010quantifying}
M.~G. Genoni and M.~G. Paris, ``Quantifying non-gaussianity for quantum
  information,'' \emph{Physical Review A}, vol.~82, no.~5, p. 052341, 2010.

\bibitem{vershynina2014complete}
A.~Vershynina, ``Complete criterion for convex-gaussian-state detection,''
  \emph{Physical Review A}, vol.~90, no.~6, p. 062329, 2014.

\bibitem{davies1976quantum}
E.~Davies, \emph{Quantum theory of open systems}.\hskip 1em plus 0.5em minus
  0.4em\relax Academic Press, 1976.

\bibitem{caruso2006one}
F.~Caruso, V.~Giovannetti, and A.~S. Holevo, ``One-mode bosonic gaussian
  channels: a full weak-degradability classification,'' \emph{New Journal of
  Physics}, vol.~8, no.~12, p. 310, 2006.

\bibitem{lancaster2005canonical}
P.~Lancaster and L.~Rodman, ``Canonical forms for symmetric/skew-symmetric real
  matrix pairs under strict equivalence and congruence,'' \emph{Linear Algebra
  and its Applications}, vol. 406, pp. 1--76, 2005.

\bibitem{srednicki1994chaos}
M.~Srednicki, ``Chaos and quantum thermalization,'' \emph{Physical Review E},
  vol.~50, no.~2, p. 888, 1994.

\bibitem{tasaki1998quantum}
H.~Tasaki, ``From quantum dynamics to the canonical distribution: general
  picture and a rigorous example,'' \emph{Physical Review Letters}, vol.~80,
  no.~7, p. 1373, 1998.

\bibitem{calabrese2006time}
P.~Calabrese and J.~Cardy, ``Time dependence of correlation functions following
  a quantum quench,'' \emph{Physical Review Letters}, vol.~96, no.~13, p.
  136801, 2006.

\bibitem{cazalilla2006effect}
M.~A. Cazalilla, ``Effect of suddenly turning on interactions in the luttinger
  model,'' \emph{Physical Review Letters}, vol.~97, no.~15, p. 156403, 2006.

\bibitem{rigol2007relaxation}
M.~Rigol, V.~Dunjko, V.~Yurovsky, and M.~Olshanii, ``Relaxation in a completely
  integrable many-body quantum system: an ab initio study of the dynamics of
  the highly excited states of 1d lattice hard-core bosons,'' \emph{Physical
  Review Letters}, vol.~98, no.~5, p. 050405, 2007.

\bibitem{reimann2007typicality}
P.~Reimann, ``Typicality for generalized microcanonical ensembles,''
  \emph{Physical Review Letters}, vol.~99, no.~16, p. 160404, 2007.

\bibitem{cramer2008exact}
M.~Cramer, C.~M. Dawson, J.~Eisert, and T.~J. Osborne, ``Exact relaxation in a
  class of nonequilibrium quantum lattice systems,'' \emph{Physical Review
  Letters}, vol. 100, no.~3, p. 030602, 2008.

\bibitem{rigol2008thermalization}
M.~Rigol, V.~Dunjko, and M.~Olshanii, ``Thermalization and its mechanism for
  generic isolated quantum systems,'' \emph{Nature}, vol. 452, no. 7189, pp.
  854--858, 2008.

\bibitem{reimann2008foundation}
P.~Reimann, ``Foundation of statistical mechanics under experimentally
  realistic conditions,'' \emph{Physical Review Letters}, vol. 101, no.~19, p.
  190403, 2008.

\bibitem{linden2009quantum}
N.~Linden, S.~Popescu, A.~J. Short, and A.~Winter, ``Quantum mechanical
  evolution towards thermal equilibrium,'' \emph{Physical Review E}, vol.~79,
  no.~6, p. 061103, 2009.

\bibitem{rigol2009breakdown}
M.~Rigol, ``Breakdown of thermalization in finite one-dimensional systems,''
  \emph{Physical Review Letters}, vol. 103, no.~10, p. 100403, 2009.

\bibitem{rigol2012alternatives}
M.~Rigol and M.~Srednicki, ``Alternatives to eigenstate thermalization,''
  \emph{Physical Review Letters}, vol. 108, no.~11, p. 110601, 2012.

\bibitem{reimann2010canonical}
P.~Reimann, ``Canonical thermalization,'' \emph{New Journal of Physics},
  vol.~12, no.~5, p. 055027, 2010.

\bibitem{cho2010emergence}
J.~Cho and M.~Kim, ``Emergence of canonical ensembles from pure quantum
  states,'' \emph{Physical Review Letters}, vol. 104, no.~17, p. 170402, 2010.

\bibitem{gogolin2011absence}
C.~Gogolin, M.~P. M{\"u}ller, and J.~Eisert, ``Absence of thermalization in
  nonintegrable systems,'' \emph{Physical Review Letters}, vol. 106, no.~4, p.
  040401, 2011.

\bibitem{riera2012thermalization}
A.~Riera, C.~Gogolin, and J.~Eisert, ``Thermalization in nature and on a
  quantum computer,'' \emph{Physical Review Letters}, vol. 108, no.~8, p.
  080402, 2012.

\bibitem{mueller2013thermalization}
M.~P. M{\"u}ller, E.~Adlam, L.~Masanes, and N.~Wiebe, ``Thermalization and
  canonical typicality in translation-invariant quantum lattice systems,''
  \emph{Communications in Mathematical Physics}, vol. 340, no.~2, pp. 499--561,
  2015.

\bibitem{polkovnikov2011colloquium}
A.~Polkovnikov, K.~Sengupta, A.~Silva, and M.~Vengalattore, ``Colloquium:
  Nonequilibrium dynamics of closed interacting quantum systems,''
  \emph{Reviews of Modern Physics}, vol.~83, no.~3, p. 863, 2011.

\bibitem{cazalilla2011one}
M.~Cazalilla, R.~Citro, T.~Giamarchi, E.~Orignac, and M.~Rigol, ``One
  dimensional bosons: From condensed matter systems to ultracold gases,''
  \emph{Reviews of Modern Physics}, vol.~83, no.~4, p. 1405, 2011.

\bibitem{bloch2008many}
I.~Bloch, J.~Dalibard, and W.~Zwerger, ``Many-body physics with ultracold
  gases,'' \emph{Reviews of Modern Physics}, vol.~80, no.~3, p. 885, 2008.

\bibitem{eisert2015quantum}
J.~Eisert, M.~Friesdorf, and C.~Gogolin, ``Quantum many-body systems out of
  equilibrium,'' \emph{Nature Physics}, vol.~11, no.~2, pp. 124--130, 2015.

\bibitem{deffner2015ten}
S.~Deffner, ``Ten years of nature physics: From spooky foundations,''
  \emph{Nature Physics}, vol.~11, no.~5, pp. 383--384, 2015.

\bibitem{jarzynski2015diverse}
C.~Jarzynski, ``Diverse phenomena, common themes,'' \emph{Nature Physics},
  vol.~11, pp. 105--107, 2015.

\bibitem{ponte2015many}
P.~Ponte, Z.~Papi{\'c}, F.~Huveneers, and D.~A. Abanin, ``Many-body
  localization in periodically driven systems,'' \emph{Physical Review
  Letters}, vol. 114, no.~14, p. 140401, 2015.

\bibitem{steinigeweg2014pushing}
R.~Steinigeweg, A.~Khodja, H.~Niemeyer, C.~Gogolin, and J.~Gemmer, ``Pushing
  the limits of the eigenstate thermalization hypothesis towards mesoscopic
  quantum systems,'' \emph{Physical Review Letters}, vol. 112, no.~13, p.
  130403, 2014.

\bibitem{genway2013dynamics}
S.~Genway, A.~Ho, and D.~Lee, ``Dynamics of thermalization and decoherence of a
  nanoscale system,'' \emph{Physical Review Letters}, vol. 111, no.~13, p.
  130408, 2013.

\bibitem{caux2013time}
J.-S. Caux and F.~H. Essler, ``Time evolution of local observables after
  quenching to an integrable model,'' \emph{Physical Review Letters}, vol. 110,
  no.~25, p. 257203, 2013.

\bibitem{cassidy2011generalized}
A.~C. Cassidy, C.~W. Clark, and M.~Rigol, ``Generalized thermalization in an
  integrable lattice system,'' \emph{Physical Review Letters}, vol. 106,
  no.~14, p. 140405, 2011.

\bibitem{eisert2010colloquium}
J.~Eisert, M.~Cramer, and M.~B. Plenio, ``Colloquium: Area laws for the
  entanglement entropy,'' \emph{Reviews of Modern Physics}, vol.~82, no.~1, p.
  277, 2010.

\bibitem{garnerone2010typicality}
S.~Garnerone, T.~R. de~Oliveira, and P.~Zanardi, ``Typicality in random matrix
  product states,'' \emph{Physical Review A}, vol.~81, no.~3, p. 032336, 2010.

\bibitem{garnerone2010statistical}
S.~Garnerone, T.~R. de~Oliveira, S.~Haas, and P.~Zanardi, ``Statistical
  properties of random matrix product states,'' \emph{Physical Review A},
  vol.~82, no.~5, p. 052312, 2010.

\bibitem{hamma2012quantum}
A.~Hamma, S.~Santra, and P.~Zanardi, ``Quantum entanglement in random physical
  states,'' \emph{Physical Review Letters}, vol. 109, no.~4, p. 040502, 2012.

\bibitem{hamma2012ensembles}
------, ``Ensembles of physical states and random quantum circuits on graphs,''
  \emph{Physical Review A}, vol.~86, no.~5, p. 052324, 2012.

\bibitem{popescu2006entanglement}
S.~Popescu, A.~J. Short, and A.~Winter, ``Entanglement and the foundations of
  statistical mechanics,'' \emph{Nature Physics}, vol.~2, no.~11, pp. 754--758,
  2006.

\bibitem{goldstein2006canonical}
S.~Goldstein, J.~L. Lebowitz, R.~Tumulka, and N.~Zangh{\'\i}, ``Canonical
  typicality,'' \emph{Physical Review Letters}, vol.~96, no.~5, p. 050403,
  2006.

\bibitem{brandao2015equivalence}
F.~G. Brandao and M.~Cramer, ``Equivalence of statistical mechanical ensembles
  for non-critical quantum systems,'' \emph{arXiv preprint arXiv:1502.03263},
  2015.

\bibitem{perez2006contractivity}
D.~Perez-Garcia, M.~M. Wolf, D.~Petz, and M.~B. Ruskai, ``Contractivity of
  positive and trace-preserving maps under lp norms,'' \emph{Journal of
  Mathematical Physics}, vol.~47, no.~8, p. 083506, 2006.

\bibitem{wheeler2014quantum}
J.~Wheeler and W.~Zurek, \emph{Quantum Theory and Measurement}, ser. Princeton
  Legacy Library.\hskip 1em plus 0.5em minus 0.4em\relax Princeton University
  Press, 2014.

\bibitem{taylor1909interference}
G.~I. Taylor, ``Interference fringes with feeble light,'' \emph{Proceedings of
  the Cambridge Philosophical Society}, vol.~15, no.~1, pp. 114--115, 1909.

\bibitem{dopfer1998two}
B.~Dopfer, ``Two experiments on the interference of two-photon states,'' Ph.D.
  dissertation, PhD Thesis, University of Innsbruck, 1998.

\bibitem{donati1973experiment}
O.~Donati, G.~Missiroli, and G.~Pozzi, ``An experiment on electron
  interference,'' \emph{American Journal of Physics}, vol.~41, no.~5, pp.
  639--644, 1973.

\bibitem{tonomura1989demonstration}
A.~Tonomura, J.~Endo, T.~Matsuda, T.~Kawasaki, and H.~Ezawa, ``Demonstration of
  single-electron buildup of an interference pattern,'' \emph{American Journal
  of Physics}, vol.~57, no.~2, pp. 117--120, 1989.

\bibitem{zeilinger1988single}
A.~Zeilinger, R.~G{\"a}hler, C.~Shull, W.~Treimer, and W.~Mampe, ``Single-and
  double-slit diffraction of neutrons,'' \emph{Reviews of modern physics},
  vol.~60, no.~4, p. 1067, 1988.

\bibitem{anderson1998macroscopic}
B.~Anderson and M.~A. Kasevich, ``Macroscopic quantum interference from atomic
  tunnel arrays,'' \emph{Science}, vol. 282, no. 5394, pp. 1686--1689, 1998.

\bibitem{monroe1996schrodinger}
C.~Monroe, D.~Meekhof, B.~King, and D.~Wineland, ``A “schr{\"o}dinger cat”
  superposition state of an atom,'' \emph{Science}, vol. 272, no. 5265, pp.
  1131--1136, 1996.

\bibitem{arndt1999wave}
M.~Arndt, O.~Nairz, J.~Vos-Andreae, C.~Keller, G.~Van~der Zouw, and
  A.~Zeilinger, ``Wave--particle duality of c60 molecules,'' \emph{nature},
  vol. 401, no. 6754, pp. 680--682, 1999.

\bibitem{eibenberger2013matter}
S.~Eibenberger, S.~Gerlich, M.~Arndt, M.~Mayor, and J.~T{\"u}xen,
  ``Matter--wave interference of particles selected from a molecular library
  with masses exceeding 10000 amu,'' \emph{Physical Chemistry Chemical
  Physics}, vol.~15, no.~35, pp. 14\,696--14\,700, 2013.

\bibitem{breuer2001destruction}
H.-P. Breuer and F.~Petruccione, ``Destruction of quantum coherence through
  emission of bremsstrahlung,'' \emph{Physical Review A}, vol.~63, no.~3, p.
  032102, 2001.

\bibitem{baym2009two}
G.~Baym and T.~Ozawa, ``Two-slit diffraction with highly charged particles:
  Niels bohr's consistency argument that the electromagnetic field must be
  quantized,'' \emph{Proceedings of the National Academy of Sciences}, vol.
  106, no.~9, pp. 3035--3040, 2009.

\bibitem{blencowe2013effective}
M.~Blencowe, ``Effective field theory approach to gravitationally induced
  decoherence,'' \emph{Physical Review Letters}, vol. 111, no.~2, p. 021302,
  2013.

\bibitem{suzuki2015environmental}
F.~Suzuki and F.~Queisser, ``Environmental gravitational decoherence and a
  tensor noise model,'' \emph{Journal of Physics: Conference Series}, vol. 626,
  no.~1, p. 012039, 2015.

\bibitem{jaekel2006quantum}
M.-T. Jaekel, B.~Lamine, A.~Lambrecht, S.~Reynaud, and P.~M. Neto, ``Quantum
  decoherence and gravitational waves,'' \emph{Beyond the Quantum}, pp.
  125--134, 2006.

\bibitem{pikovski2013universal}
I.~Pikovski, M.~Zych, F.~Costa, and {\v{C}}.~Brukner, ``Universal decoherence
  due to gravitational time dilation,'' \emph{Nature Physics}, vol.~11, no.~8,
  pp. 668--672, 2015.

\bibitem{schlosshauer2011elegance}
M.~Schlosshauer, \emph{Elegance and Enigma: The Quantum Interviews}, ser. The
  Frontiers Collection.\hskip 1em plus 0.5em minus 0.4em\relax Springer, 2011.

\bibitem{aharonov2008quantum}
Y.~Aharonov and D.~Rohrlich, \emph{Quantum Paradoxes: Quantum Theory for the
  Perplexed}, ser. Physics textbook.\hskip 1em plus 0.5em minus 0.4em\relax
  Wiley, 2008.

\bibitem{kafri2014classical}
D.~Kafri, J.~Taylor, and G.~Milburn, ``A classical channel model for
  gravitational decoherence,'' \emph{New Journal of Physics}, vol.~16, no.~6,
  p. 065020, 2014.

\bibitem{bahrami2015gravity}
M.~Bahrami, A.~Bassi, S.~McMillen, M.~Paternostro, and H.~Ulbricht, ``Is
  gravity quantum?'' \emph{arXiv preprint arXiv:1507.05733}, 2015.

\bibitem{benincasa2014quantum}
D.~M. Benincasa, L.~Borsten, M.~Buck, and F.~Dowker, ``Quantum information
  processing and relativistic quantum fields,'' \emph{Classical and Quantum
  Gravity}, vol.~31, no.~7, p. 075007, 2014.

\bibitem{romero2011large}
O.~Romero-Isart, A.~C. Pflanzer, F.~Blaser, R.~Kaltenbaek, N.~Kiesel,
  M.~Aspelmeyer, and J.~I. Cirac, ``Large quantum superpositions and
  interference of massive nanometer-sized objects,'' \emph{Physical Review
  Letters}, vol. 107, no.~2, p. 020405, 2011.

\bibitem{marshall2003towards}
W.~Marshall, C.~Simon, R.~Penrose, and D.~Bouwmeester, ``Towards quantum
  superpositions of a mirror,'' \emph{Physical Review Letters}, vol.~91,
  no.~13, p. 130401, 2003.

\bibitem{arndt2014testing}
M.~Arndt and K.~Hornberger, ``Testing the limits of quantum mechanical
  superpositions,'' \emph{Nature Physics}, vol.~10, no.~4, pp. 271--277, 2014.

\bibitem{pikovski2012probing}
I.~Pikovski, M.~R. Vanner, M.~Aspelmeyer, M.~Kim, and {\v{C}}.~Brukner,
  ``Probing planck-scale physics with quantum optics,'' \emph{Nature Physics},
  vol.~8, no.~5, pp. 393--397, 2012.

\bibitem{bawaj2014probing}
M.~Bawaj, C.~Biancofiore, M.~Bonaldi, F.~Bonfigli, A.~Borrielli,
  G.~Di~Giuseppe, L.~Marconi, F.~Marino, R.~Natali, A.~Pontin, G.~A. Prodi,
  E.~Serra, D.~Vitali, and F.~Marin, ``Probing deformed commutators with
  macroscopic harmonic oscillators,'' \emph{Nature communications}, vol.~6,
  2015.

\bibitem{schnabel2015einstein}
R.~Schnabel, ``Einstein-podolsky-rosen--entangled motion of two massive
  objects,'' \emph{Physical Review A}, vol.~92, no.~1, p. 012126, 2015.

\bibitem{scala2013matter}
M.~Scala, M.~Kim, G.~Morley, P.~Barker, and S.~Bose, ``Matter-wave
  interferometry of a levitated thermal nano-oscillator induced and probed by a
  spin,'' \emph{Physical Review Letters}, vol. 111, no.~18, p. 180403, 2013.

\bibitem{wan2015tolerance}
C.~Wan, M.~Scala, S.~Bose, A.~Frangeskou, A.~Rahman, G.~Morley, P.~Barker, and
  M.~Kim, ``Tolerance in the ramsey interference of a trapped nanodiamond,''
  \emph{arXiv preprint arXiv:1509.00724}, 2015.

\bibitem{salecker1958quantum}
H.~Salecker and E.~Wigner, ``Quantum limitations of the measurement of
  space-time distances,'' \emph{Physical Review}, vol. 109, no.~2, p. 571,
  1958.

\bibitem{mead1964possible}
C.~A. Mead, ``Possible connection between gravitation and fundamental length,''
  \emph{Physical Review}, vol. 135, no.~3B, p. B849, 1964.

\bibitem{hossenfelder2013minimal}
S.~Hossenfelder, ``Minimal length scale scenarios for quantum gravity,''
  \emph{Living Rev. Relativity}, vol.~16, no.~2, p.~90, 2013.

\bibitem{weinberg1995quantum}
S.~Weinberg, \emph{The Quantum Theory of Fields: Volume 1, Foundations}.\hskip
  1em plus 0.5em minus 0.4em\relax Cambridge University Press, 1995.

\bibitem{williamson1936algebraic}
J.~Williamson, ``On the algebraic problem concerning the normal forms of linear
  dynamical systems,'' \emph{American journal of mathematics}, pp. 141--163,
  1936.

\bibitem{garcia2012majorization}
R.~Garcia-Patron, C.~Navarrete-Benlloch, S.~Lloyd, J.~H. Shapiro, and N.~J.
  Cerf, ``Majorization theory approach to the gaussian channel minimum entropy
  conjecture,'' \emph{Physical Review Letters}, vol. 108, no.~11, p. 110505,
  2012.

\bibitem{hiroshima2006additivity}
T.~Hiroshima, ``Additivity and multiplicativity properties of some gaussian
  channels for gaussian inputs,'' \emph{Physical Review A}, vol.~73, no.~1, p.
  012330, 2006.

\bibitem{bengtsson2007geometry}
I.~Bengtsson and K.~Zyczkowski, \emph{Geometry of Quantum States: An
  Introduction to Quantum Entanglement}.\hskip 1em plus 0.5em minus 0.4em\relax
  Cambridge University Press, 2007.

\bibitem{cohen1989photons}
C.~Cohen-Tannoudji, J.~Dupont-Roc, and G.~Grynberg, \emph{Photons and Atoms:
  Introduction to Quantum Electrodynamics}, ser. A Wiley-Interscience
  publication.\hskip 1em plus 0.5em minus 0.4em\relax Wiley, 1989.

\bibitem{jackson1998classical}
J.~D. Jackson, \emph{Classical Electrodynamics}.\hskip 1em plus 0.5em minus
  0.4em\relax Wiley, 1998.

\end{thebibliography}
\end{document}